\theoremstyle{theorem}
\newtheorem*{thm}{Theorem}
\newtheorem{fact}{Fact}
\newtheorem{corl}{Corollary}
\newtheorem{pro}{Proposition}
\newtheorem*{conj}{Conjecture}
\newtheorem{lem}{Lemma}
\newtheorem*{lem*}{Lemma}
\newtheorem{que}{Question}
\newtheorem*{hope}{Hope}
\newtheorem*{propo}{Proposal}
\newtheorem*{ans}{Answers}
\newtheorem*{prin}{Principle}
\newtheorem*{nec}{Necessary condition}
\newtheorem*{inverse}{Inverse Problem}
\newtheorem*{almost}{``Almost'' sufficient condition}
\newtheorem*{folk}{Folk-theorem}
\newtheorem*{scen}{Scenario}
\theoremstyle{definition}
\newtheorem{defn}{Definition}
\newtheorem{rem}{Remark}
\newtheorem*{sit}{Special situation}
\newtheorem*{warn}{Disclaimer}
\newtheorem{cave}{\begin{scriptsize}\textdbend\end{scriptsize} {Caveat}}
\newtheorem{exe}{Example}[section]
\def\Dsl{\,\raise.15ex\hbox{/}\mkern-13.5mu D}
\def\dsl{\,\raise.25ex\hbox{/}\mkern-10.5mu \partial}
\renewcommand{\arraystretch}{2.2}
\title{Direct and Inverse Problems\\ in Special Geometry 
}
\authors{Sergio Cecotti\footnote{e-mail: {\tt cecotti@sissa.it}, {\tt cecotti@bimsa.cn}}\vskip 9pt

\centerline{Yanqi Lake Beijing Institute of Mathematical Sciences and Applications (BIMSA)}
\centerline{Huaibei Town, Huairou District, Beijing 101408, China}
\centerline{and}
\centerline{Qiuzhen College, Tsinghua University, Beijing, China,}
\centerline{SISSA,
via Bonomea 265, Trieste, Italy}
}
\abstract{The \emph{inverse problem of special geometry} (Seiberg-Witten geometry of 4d $\cn=2$ SCFT)
asks for a recursive construction of all such geometries in rank $r$ by assembling together
known lower-rank ``strata''.  This leads to a program to understand/construct/classify  
all special geometries which looks surprising effective. After reviewing some advanced topics in special geometry, in this long note  we define the inverse problem and introduce the basic tools of the trade.
The program is essentially completed in rank 2, and we pave the way to proceed to higher ranks. A central role 
is played by various notions of geometric rigidity: in addition to the obvious one (triviality of the conformal manifold), Falting-Saito-Peters rigidity and Deligne-Simpson rigidity also enter in the story.}
\begin{document}
\maketitle

\tableofcontents

\newpage 

\section{Introduction: the Inverse Problem}

A first fundamental step in any program to understand/construct/classify all 4d $\cn=2$ SCFTs 
is to understand/construct/classify their \emph{special geometries} (a.k.a.
Seiberg-Witten geometries \cite{SW1,SW2,D1,D2}). Such a geometric program has been especially pursued
by Argyres and coworkers \cite{M1,M2,M3,M4,M5,M6,M7,M8,M9,M10,M11,M12,M13,M14,M15,M16}\footnote{\ For a recent review see \cite{Martone:2020hvy}.}; for other work in the same line see e.g.\! \cite{caorsi,C1,C2,Cecotti:2021ouq,Cecotti:2023mlc}.
This paper aims to introduce new powerful techniques and results in this direction.
\medskip   

A \emph{special geometry}\footnote{\ See section \ref{sec2} for precise definitions.} is a very nice mathematical
gadget, with multiple physical interpretations and deep relations with several other topics, which may be studied from 
diverse math perspectives. A special geometry may be seen as the (complexification)
of a classical integrable mechanical system \cite{D1,D2} or as a non-perturbative description of
a four-dimensional SUSY QFT \cite{SW1,SW2}: we have \emph{two} physical intuitions about them -- one classical and one quantum -- and the interplay between the two yields new unexpected insights. 
Special geometry in our SCFT sense looks very similar to the special geometry of
$\cn=2$ supergravity \cite{SG1,SG2,SG3,SG4,SG5,SG6,Cecotti:2015wqa}: formally the two are related by a simple ``Wick rotation'' in the signature of the fiber metric. The supergravity geometry describes the variation of Hodge structure (VHS \cite{VHS1,VHS2,VHS3,VHS4,griffithsMT,griffithsMT2,periods,BG,VHS5})
of a family of Calabi-Yau 3-folds (3-CY) \cite{SG3,SG4,SG5,SG6,Cecotti:2015wqa}, and all ideas and techniques introduced in the study
of moduli geometries of 3-CY can be equally well applied to the special geometries of $\cn=2$ SCFT.
Strangely enough, it seems that the previous SCFT literature did not leverage on the
deep extensive work done in the Calabi-Yau contest. The 3-CY methods will be instead central in our discussion.
Morally speaking, the program of the geometric classification of all 4d $\cn=2$ SCFT may be
seen as a simpler analogue of classifying all Calabi-Yau 3-folds, a basic problem with fundamental implications for quantum gravity \cite{SG6}.

\medskip

The main focus of this note is the \emph{inverse method} in special geometry. 
While the paper is somehow written in a review format, essentially all results in this direction are novel,
including the very notion of ``inverse problem''. In the rest of this introduction we outline what we mean
by ``inverse problem'': a more precise description of the method  is given in section \ref{s:invpro} in the proper 
geometric context.

\subparagraph{Special geometry.} In this paper a \emph{special geometry}  is a 
holomorphic integrable system\footnote{\ The integral systems of interest are actually \emph{algebraic} not just \emph{holomorphic}.}
associated to the Seiberg-Witten geometry of a 4d $\cn=2$ SCFT
with all its implied structures \cite{D1,D2}. Thus a special geometry is a (smooth) complex symplectic $\mathscr{X}$
variety together with a holomorphic fibration $\pi\colon \mathscr{X}\to \mathscr{C}$
over the Coulomb branch $\mathscr{C}$ with Lagrangian fibers
which generically are polarized Abelian varieties. 
The Coulomb branch $\mathscr{C}$ is the spectrum of the chiral ring $\mathscr{R}$,
hence an affine complex variety.
Most of our arguments work in the broader context of 4d $\cn=2$
QFT, but we focus on the superconformal case where the geometry carries an additional structure:
a $\C^\times$-action on $\mathscr{X}$ by automorphisms
 of all the relevant geometric structures.\footnote{\ Special geometries with this additional structure will be called $\C^\times$\emph{-isoinvariant.}}
The $\C^\times$-action reflects
 the existence of a conformal $U(1)_R$ symmetry which is spontaneously broken
 to a discrete subgroup everywhere in $\mathscr{C}$ but at a single closed point $0$ (the origin).
When the Coulomb branch is \emph{smooth} at $0$ (as a $\C$-affine variety)
 $\mathscr{C}\simeq\C^r$. We shall be concerned mainly, but not exclusively, with this case.

\subparagraph{``Stratification''.} The quantum field theoretic intuition points out that a special geometry $\mathscr{X}\to \mathscr{C}$ comes with a natural  ``stratification'' in
 lower-dimension special geometries \cite{M12,M13}. We write the word \emph{``stratification''}
 between quotes since the QFT intuition leads to a fairly new\footnote{\ \emph{Fairly new} $\equiv$ the author is not aware of any previous mention of this idea in the math literature.}
notion in symplectic geometry, which is \emph{not} a stratification in any usual sense, and lays well  
outside the horizon of classical Liouvillian physical intuition. 
In math terms the ``stratification'' is a generalization of the Marsden-Weinstein-Meyer symplectic quotient \cite{sympp,sasaki} which applies to
singular loci with suitable physical/geometric properties where the usual symplectic quotient becomes meaningless. The ``stratification'' of $\mathscr{X}$ arises from an ordinary stratification of the Coulomb branch $\mathscr{C}$ \cite{M12,M13}.   

We outline the physical heuristics beyond ``stratification''.
At the generic point of $\mathscr{C}$ the IR physics, while not necessarily boring, is relatively simple,
 but there are special loci $\mathscr{S}_a\subset\mathscr{C}$ where something ``more interesting'' happens. At a special point  $u\in\mathscr{C}$
 three new phenomena may happen:
 \begin{itemize}
\item[(1)] additional degrees of freedom become massless;
\item[(2)] the unbroken subgroup $R_u\subset U(1)_R$ of R-symmetry enhances;
\item[(3)] we have both extra light degrees of freedom and R-symmetry enhancement. 
\end{itemize}
In an interacting theory the special locus
 $\cd\subset \mathscr{C}$ where we have additional massless degrees of freedom (called the \emph{discriminant})
 is a complex analytic subspace of pure codimension 1. 
We identify $\cd$ with a simple effective divisor and write $\cd=\sum_a\cd_a$ for its decomposition into irreducible components. 
 In the SCFT case the $\cd_a$'s are preserved by the $\C^\times$-action.
 
We focus on a point $u\in\mathscr{C}\setminus\cd$ which is at distance $\epsilon$ from a generic point $u_0\in\cd_a$ of the $a$-th discriminant component.
The BPS states that are massless at $u_0$ are parametrically light at $u$;
since $\cd_a$ has codimension-$1$ in $\mathscr{C}$
these states are charged under a rank-$1$ gauge group. Thus, asymptotically as $u\to u_0$  the light degrees of freedom
are described by 
an effective IR $\cn=2$ theory which 
 looks like a system composed of $(r-1)$ IR-free vector-multiplets and a rank-$1$ model which can be either IR-free or
 an interacting SCFT in its own right.
This suggests that the special geometry in an infinitesimal neighborhood of $\cd_a$
should look as a rank-$1$ special geometry describing the light particles
 fibered over the local rank-$(r-1)$ geometry which describes the heavy states. 
 While the actual situation may be a bit more intricate,\footnote{\ See \cite{Cecotti:2023mlc}.} 
 this physical intuition yields a first \emph{rough} cartoon of the geometry in the vicinity of $u_0$.
 Making the idea a little more precise, one defines a rank-$(r-1)$ special geometry
 $\mathscr{X}_a\to\cd_a$ over the codimension-$1$ locus $\cd_a\subset \mathscr{C}$.
 The local geometry in an infinitesimal neighborhood of $\cd_a$ is fibered over this
 rank-$(r-1)$ special geometry.
 Iterating the construction, one gets a rank-$(r-s)$ special geometry
 $\mathscr{X}_{a_s}\to\mathscr{C}_{a_s}$ over each subvariety in a finite set 
 $\{\mathscr{C}_{a_s}\colon a_s\in A_s\}$ of
 codimension-$s$ subvarieties of $\mathscr{C}$ for all $0\leq s\leq r$.
 By the ``stratification'' of the special geometry $\mathscr{X}\to\mathscr{C}$ we mean the full collection
 of special geometries 
 \begin{equation}
 \big\{\mathscr{X}_{a_s}\to\mathscr{C}_{a_s}\colon a_s\in A_s,\ 0\leq s\leq r\big\}.
 \end{equation}
 The open sets $\mathring{\mathscr{C}}_{a_s}\equiv \mathscr{C}_{a_s}\setminus (\cup_{a_{s-1}\in A_{s-1}}
 \mathscr{C}_{a_{s-1}})\cap \mathscr{C}_{a_s}$ form a stratification of the Coulomb branch in the usual sense.
 The closures of the codimension-$(s+1)$ strata, $\mathscr{C}_{a_{s+1}}\subset \mathscr{C}$, are the irreducible
 components of the discriminants of the special geometries $\mathscr{X}_{a_s}\to\mathscr{C}_{a_s}$ over the (closed) codimension-$s$ strata.
 In particular the $\mathscr{C}_{a_1}$'s coincide with the discriminant component $\cd_{a_1}$ of $\mathscr{X}$. The geometry along a codimension-1 stratum $\mathring{\cd_a}$ is determined by the singular fiber $\mathscr{X}_{u_\star} $ at a generic point $u_\star\in \cd_a$. The most important invariant of a divisor $\cd_a$
 is the conjugacy class $[\varrho_a]$ of the local monodromy $\varrho_a$ around it; the type of the fiber $\mathscr{X}_{u_\star}$ is determined by $[\varrho_a]$ together with some subtler invariants \cite{oguiso1,oguiso2,oguiso3,jap1,jap2,sawon}.
The local monodromy classes $[\varrho_a]$ follows a 
 Kodaira-like $ADE$ classification \cite{koda1,koda2,koda3,bhm}. Thus to each component $\cd_a$
 we can associate the (unique)  rank-1 \emph{asymptotic} special geometry over the (small) disk $\Delta$
 with the same monodromy class.   
Roughly speaking, the effective IR field theory which governs the degrees of freedom which are light along $\cd_a$ is described by this rank-1 geometry defined by the local monodromy $\gamma_a$. The detailed geometry/physics is a bit trickier \cite{Cecotti:2023mlc} and depends on the particular point $u_\star$, not just on the discriminant component $\cd_a$, but the most important aspects are already captured by $[\gamma_a]$. 
 
The Coulomb branch $\mathscr{C}$ carries a second stratification by unbroken $U(1)_R$ symmetry.
At a generic point in $\mathscr{C}$ the unbroken subgroup is $\Z_k$
with $k\in\{1,2,3,4,6\}$ \cite{Cecotti:2021ouq}.\footnote{\ If $k\not\in\{1,2\}$ the model is isotrivial
and the geometry and physics
can be determined by group-theoretical methods \cite{Cecotti:2021ouq,Cecotti:2021ouq2}.}
The codimension-1 loci in $\mathscr{C}$ where the unbroken R-symmetry enhances to a bigger group
 form a divisor $E=\sum_f E_f$. The unbroken R-symmetry at the generic point of $E_f$
is $\Z_{\ell_f}$ where $\ell_f$ is a proper multiple of $k$.
On multiple intersections $\cap_jE_{f_j}$ we have further enhancement to
 $\Z_{\text{lcm}(\ell_{f_j})}$, etc. 
Some $D_a$ may coincide with some $E_f$ (case (3)). 
 
 By the \emph{stratification data} of a $\C^\times$-isoinvariant special geometry we mean
\be\label{sData}
 \{\cs_a,[\varrho_a],\ell_a\}_{a\in A} 
\ee
 where $\{\cs_a\}_{a\in A}$ is the finite set of ($\C^\times$-invariant) special divisors in $\mathscr{C}$ where something interesting happens,
 and $[\varrho_a]$ (resp.\! $\ell_a$) is the conjugacy class of the local monodromy around $\cs_a$ (resp.\! the order of the unbroken R-symmetry along $\cs_a$). 
 
 
\subparagraph{Inverse problem.} The very fact that all special geometry has a ``stratification'' whose non-trivial strata 
 are special geometries of strictly smaller dimension, suggests an inverse engineering strategy to construct
\emph{all} special geometries which, in particular, could lead to a complete classification of all 4d $\cn=2$ SCFT.  
 The idea is that, once we have constructed all $\C^\times$-isoinvariant special
 geometries up to rank $r$, we can construct the rank $r+1$ ones  
by ``gluing together'' a finite set of rank-$r$ geometries together with a rank-$(r+1)$ open stratum $\mathring{\mathscr{C}}=\mathscr{C}\setminus \mathscr{S}$
of ``boring'' generic vacua whose geometry is well understood. In addition to the stratification data \eqref{sData}, 
the problem then requires to specify the
generic vacuum which is encoded in two data: the action  of the R-symmetry $U(1)_R$ on $\mathring{\mathscr{C}}$
and the generic Higgs branch. For smooth Coulomb branches the first datum is more conveniently replaced by the $r$-tuple of the Coulomb dimensions $\{\Delta_i\}$.
For simplicity we assume that the generic Higgs branch is trivial,
but this is not an essential assumption for our analysis.

This inverse engineering program is what we call \emph{``the inverse problem of special geometry''.}
It requires to answer two main questions:

\begin{que} Characterize the sets of the form $\{(\Delta_i)_{i=1}^r,(\cs_a,[\varrho_a],\ell_a)_{a=1}^s\}$
where $\{\Delta_i\}$ are positive rational numbers, $\cs_a\in \C^r$ are homogeneous irreducible divisors, $[\varrho_a]$ Kodaira types, and $\ell_a$ positive integers,
which are the stratification data of indecomposable,\footnote{\ A geometry is indecomposable if it is not the product of lower-rank geometries, i.e. if the corresponding SCFT
is does not split in two non-interacting sectors. The notion of ``indecomposable'' should not be confused with the notion of ``irreducible''.} rank-$r$, \emph{global} special geometries.
\end{que}

\begin{que} When $\{(\Delta_i)_{i=1}^r,(\cs_a,[\varrho_a],\ell_a)_{a=1}^s\}$ is an allowed stratification datum,
construct all global special geometries that have this ``stratification''.
\end{que}


The answer to \textbf{Question 1} may be given at different levels.
There are necessary conditions on $\{(\Delta_i)_{i=1}^r,(\cs_a,[\varrho_a],\ell_a)_{a=1}^s\}$ for the existence of
a global special geometry and also sufficient conditions. 
Necessary conditions are physical properties which \emph{all}
4d $\cn=2$ SCFT should enjoy, and hence are statements of direct physical interest. Sufficient conditions produce
examples which may lead to new insights. It would be desirable to have a precise
necessary and sufficient condition. In this paper we shall come ``close''
to this goal but some work is still needed to get the complete story.
The trickier part of \textbf{Question 1}  is the classification of the 
special divisors $\sum_a\cs_a$ which may arise in a special geometry. 
This aspect is the main topic of section \ref{s:r3}. When the base $\mathscr{C}$
of the (algebraic) Lagrangian fibration is compact, i.e.\! $\mathbb{P}^r$,
the possible discriminants are well-understood \cite{disC}, but in our set-up $\mathscr{C}$
is affine, not projective, and \emph{a priori} the discriminant is much less constrained. 

\begin{ans} In this long note we shall give essentially complete
 answers to the two {\bf Questions} in rank-2.
For higher rank we give some necessary conditions, discuss general aspects and properties, provide
examples, and pave the way for a detailed study (which should be doable at least
for  low-ranks such as $3$ or $4$). 
\end{ans}

\subparagraph{Rigidity.} Our basic tool to answer the two \textbf{Questions} is \emph{rigidity} of various geometric structures implied by the definition of special geometry.
Several \emph{inequivalent} notions of rigidity are relevant in the present context: the ``weaker one'' (Faltings-Peter rigidity \cite{arak,falt,saito,peters1,peters2,peters3}) applies
to general (irreducible, non-isotrivial) special geometries, while the stronger Deligne-Simpson rigidity \cite{simpson,simpson?,simpson??} refers to the underlying monodromy representation and yields a typically sufficient condition. In addition we have (non)-rigidity in the special geometric sense,
see section \ref{s:rigidity}.

\subparagraph{Classification scheme.} The arguments in this paper suggest the
classifying scheme:
{\setlength{\arrayrulewidth}{0.4mm}
\renewcommand{\arraystretch}{1.8}$$
 \begin{tabular}{|c|p{11cm}|}\hline
 I & isotrivial geometries\\\hline
 II & quasi-isotrivial geometries not of class I\\\hline
 III & reducible indecomposable geometries not of classes I, II\\\hline
 IV & non-rigid geometries not of classes I, II, III\\\hline
 V & $\mu$-rigid geometries not of classes I-IV\\\hline
VI & others\\\hline
 \end{tabular}
 $$
 } Class I can be explored by elementary means for all rank $r$ \cite{Cecotti:2021ouq2}. 
 Classes II and III are severely
 constrained, ``few'' such geometries exist, and a systematic analysis may be possible
 also in higher rank. These geometries look rather ``bizarre'', but there is
 strong evidence that a small number of them do exist.
In view of the \textbf{Folk-theorem} (\S.\,\ref{s:Fthm}) class
IV is also essentially charted for all ranks. 
 For classes V and VI one falls back to the rank by rank strategy \emph{via} the \emph{inverse method}.
 One tries to construct them by induction on the rank by ``gluing together'' known lower rank ``strata''.
 Class VI is the less amenable to analysis and \emph{a priori} the largest one, but
 not a single example of class VI geometry is known to the author, so there is hope
 that this hardest class \emph{is actually empty} or at least very sparse.
 
 Each one of the classes I-V is naturally decomposed into subclasses which are conveniently studied separately.
 
 \subparagraph{Organization.}  The paper is organized as follows. Section 2 contains a review of the basic geometric structures with new results and complements perhaps known to the experts but
 without a good reference in print. In section 3 we state the inverse problem and explain its relation to the classification program. In section 4 we discuss the families of Abelian varieties which arise from a $\C^\times$-isoinvariant special geometry. In section 5 we discuss the deformation theory of special geometry, the conformal manifold, and the \textbf{Folk-theorem}. To illustrate the main issues here we also  discuss in detail the case of rank 2. In section 6 we consider the rigidity of implied geometric structures such as 
 the Abelian family and the monodromy representation. Here we state the Deligne-Simpson problem.
 In section 7 we discuss how rigidity can be used to understand/construct/classify $\C^\times$-isoinvariant special geometries. In section 8 the program is carried on in rank 2 in detail. Here several tricky points are analyzed such as reducible representations. Finally in section 9 we lay the groundwork for an extension of the program to higher rank, focusing on the problem of understanding the class of allowed discriminant divisors. Some technicality is deferred to the appendices.

 \section{Special Geometry: review with complements}
 \label{sec2}
 
 In this section we review special geometry to fix notation and terminology.
 We also present some complements which, while
 probably known to experts, are not spelled out in the literature (to the best of our knowledge).
 Some results are novel, at least in the present generality.  
 
 \subsection{Defining structures}
 
 We first review the geometric structures implied by special geometry.
 We start with a rough sketch of the physical foundations of the topic.
We omit all details and gloss over all subtle points since our purpose here
  is merely to \emph{motivate}
 the precise  definitions given in the next subsection (\S.\,\ref{yyy56z}) which we shall take as the geometric ``axioms''
for the rest of the paper.  
 \medskip
 
 An Euclidean 4d $\cn=2$ SQFT admits a topological twisting \emph{\'a la}
 Witten \cite{TFT1,TFT2} by modifying the couplings of the various fields to the holonomy connection of
 the 4-dimensional spacetime manifold $M$. The twisted model is a Topological Field Theory (TFT)
 invariant under all oriented diffeomorphisms $M\to M$ (see \cite{marino} for an extensive review).
By construction the twisting does not change the functional measure when
 $M$ is locally flat. 
 
\subsubsection{Chiral ring, Coulomb branch, Coulomb dimensions} The algebra of local operator of the twisted TFT is a subquotient $\mathscr{R}$ of the algebra  $\mathfrak{A}$ of local operators of the parent $\cn=2$ SQFT  
 called the \emph{chiral ring.} The correlation functions of operators in $\mathscr{R}$
  are independent of the choice of their representatives in $\mathfrak{A}$.
$\mathscr{R}$ is a commutative, associative $\C$-algebra with unit, which 
in a sensible SQFT\footnote{\ A non finitely-generated chiral ring will violate the entropy bound any good QFT should obey.} must be a finitely-generated integral domain.
Its spectrum $\mathscr{C}= \mathsf{Spec}\,\mathscr{R}$ is then an
 \emph{affine variety} over $\C$
 called the \textit{Coulomb branch}. Its complex dimension $r=\dim_\C\mathscr{C}$ is the \emph{rank}.
 The basic observation is that when the Riemannian spacetime $M$ is flat  the correlation functions of operators in $\mathscr{R}$  computed in the SQFT coincide with the ones computed in the corresponding twisted
 TFT.\footnote{\ This equality was the starting point of $tt^*$ \cite{Cecotti:1991me}.}
 
 By replacing $\mathscr{R}$ with its normalization \cite{Eisenbud}, if necessary, we assume
 $\mathscr{C}$ to be \emph{normal.} In particular, the Coulomb branch is regular
(as an affine variety\footnote{\ We stress that some \emph{other} relevant geometric structures on $\mathscr{C}$, such as the special K\"ahler metric, are typically non-regular along a divisor
in the geometries we are interested in. This \emph{caveat} applies to the question of \emph{smoothness} in the next paragraph.}) in codimension 1. 
\medskip
 
 We are mainly concerned with the situation where the original $\cn=2$ SQFT was a SCFT.
 In this special case the $\C$-algebra $\mathscr{R}$ is graded by the scaling dimension $\Delta$, that is:  \be\label{localgraded}
 \mathscr{R}= \bigoplus_{\Delta\in \frac{1}{m}\mathbb{Z}_{\geq0}}\mathscr{R}_\Delta,\qquad \mathscr{R}_0=\C,\qquad
 \mathscr{R}_{\Delta}\cdot \mathscr{R}_{\Delta^\prime}\subseteq \mathscr{R}_{\Delta+\Delta^\prime},
 \ee
 for some \emph{minimal} positive integer $m$. 
 
 \begin{defn} In this paper\footnote{\ Unless explicitly stated otherwise\,!!} by a \emph{chiral ring $\mathscr{R}$ of rank $r$}
 we shall always mean a normal affine $\C$-algebra of dimension $r$ with a grading as in  \eqref{localgraded}. By its \emph{Coulomb branch} $\mathscr{C}$ we mean the normal affine $r$-variety $\mathsf{Spec}\,\mathscr{R}$ (or its underlying complex $r$-fold).   
 \end{defn}
 
By definition the Coulomb branch $\mathscr{C}$ is an affine quasi-cone with an algebraic $\C^\times$-action. The maximal ideal $\mathscr{R}_+\equiv \oplus_{
\Delta>0}\mathscr{R}_\Delta$
 defines a closed point $0\in\mathscr{C}$, the \emph{tip} of the quasi-cone, which physically represents the unique vacuum where the superconformal symmetry is not spontaneously broken. $\{0\}$ is the only \emph{closed} $\C^\times$-orbit, hence it is contained in the closure of all $\C^\times$-orbits.

 \subparagraph{Smooth vs.\! non-smooth Coulomb branches.}
 
The Coulomb branch $\mathscr{C}$ is \emph{smooth} at $0$
 (hence everywhere) iff the chiral ring is \emph{regular} at the tip i.e.
 \be
 \dim (\mathscr{R}_+/\mathscr{R}_+^{\mspace{2mu}2})=\dim \mathscr{C}\equiv r,
 \ee
 in which case $\mathscr{R}=\C[u_1,\cdots\!, u_r]$
 where $u_i$ has dimension $\Delta_i\in \tfrac{1}{m}\mathbb{Z}_{>0}$. 
 
 In the physical applications one usually assumes the affine quasi-cone $\mathscr{C}$ to be smooth  because the \emph{typical} examples of Coulomb branches which arise in physics are smooth affine varieties. However, as it will be clear in the following, to construct a \emph{deep theory}
(i.e.\! to really understand the smooth case) 
one needs to allow for non-smooth Coulomb branches as well.
The situation leads to two questions: (1)
Which Coulomb branch singularities are allowed?
(2)
What is the physical meaning of a \emph{singular} Coulomb branch?
A plausible physical mechanism for a singularity is a ``discrete gauging''
which leads to Coulomb branches of the form $\C^r/G$ with $G$ a finite group.
In this paper we shall assume the Coulomb branch to be smooth,
except when explicitly stated otherwise.

\subparagraph{Coulomb dimensions.} 
When $\mathscr{C}$ is smooth, hence a copy of $\C^r\simeq\mathsf{Spec}\,\C[u_1,\dots,u_r]$
with $u_i\in\mathscr{R}_{\Delta_i}$, the $r$-tuple of rational numbers
 $\{\Delta_1,\dots,\Delta_r\}$ is called the \textit{Coulomb dimensions}.
 We shall deduce various restrictions on the allowed Coulomb dimensions below:
only finitely many $r$-tuples of Coulomb dimensions are allowed for a given $r$ \cite{caorsi}.
 
 \subparagraph{The projective Coulomb branch $\mathscr{P}$.}
 In the SCFT case it is natural to replace the ordinary Coulomb branch $\mathscr{C}\equiv\mathsf{Spec}\,\mathscr{R}$ by its projective counterpart 
 \begin{equation}
 \mathscr{P}=\mathsf{Proj}\,\mathscr{R}.
 \end{equation}
When the Coulomb branch is smooth, $\mathscr{P}$ is a weighted projective space \cite{dolgaWP}
\be
\mathscr{P}\equiv\mathbb{P}(d_1,\cdots\!,d_r),\qquad d_i=\frac{m\, \Delta_i}{\gcd\{m\, \Delta_j\}}.
\ee
In particular $\mathscr{P}$ is simply-connected $\pi_1(\mathscr{P})=1$.
$\pi_1(\mathscr{P})$ is at most finite even when $\mathscr{C}$ is not smooth: indeed it is a quotient of the finite group $F$.
In the smooth case $\mathbb{P}(d_1,\cdots\!,d_r)$ is isomorphic to a unique \emph{well-formed}
weighted projective space $\mathbb{P}(q_1,\cdots,q_r)$ such that $\gcd(q_1,\dots,\widehat{q_j},\dots,q_r)=1$ for all $j$ \cite{dolgaWP}.
%

\subsubsection{The Abelian fibration}
The arguments in this subsection are heuristics to motivate the definitions in \S.\,\ref{yyy56z} below.
\medskip

 By their very definition in all TFT the RG flow is trivial. Hence all topological correlation functions can be exactly
 computed using the low-energy effective theory \cite{WWW1,WWW2} which, roughly speaking, is a supersymmetric $\sigma$-model
 with target space the set of SUSY vacua. We say ``roughly speaking'' because the vacuum space is typically singular, and one should be careful with the precise definition of the low-energy theory.
For us the main point is that there are several distinct ``IR effective theories'' and 
we may use any one of them to compute TFT quantities. Independence of the
TFT amplitudes from the choice of effective description
 implies compatibility conditions between these effective theories
which we realize as additional geometric structures on $\mathscr{C}$.

Consider, for instance, the 4-manifold $M\equiv \R^3\times S^1$ where the circle has
 radius $R$. The local TFT observables are equal to their SQFT siblings (since $M$ is flat)
 and independent of $R$ by topological invariance.
 As $R\to\infty$ they may be computed using the 4d IR effective $\cn=2$ SQFT \cite{WWW1,WWW2},
while as $R\to0$ the very same local TFT observables may be computed using the 3d IR effective SQFT which now is a 3d $\cn=4$ SUSY theory.\footnote{\ \begin{tiny}\textdbend\end{tiny} The physical low-energy effective theory depends in an intricate way from $R$ \cite{gaiotto2}. Since the topological theory is independent of $R$, we are free to fix $R$ in any convenient way. We use the asymptotic limit $R\to\infty$ where things simplify dramatically.}
We take for granted that the 3d Coulomb branch $\mathscr{X}$ satisfies the same ``regularity'' conditions as its 4d counterpart $\mathscr{C}$, and hence that $\mathscr{X}$ is a smooth\footnote{\ Just as for the Coulomb branch, the condition that $\mathscr{X}$ is smooth may be relaxed.} 
scheme of finite type over $\C$ (in facts over $\mathscr{C}$). This follows from the physical heuristics sketched below,
as well as from the usual special geometry folklore in the physical literature.\footnote{\  Of course, if $\mathscr{X}$ is not an algebraic scheme (but still smooth), essentially everything we say will remain true in the \emph{analytic category.}}
On SUSY grounds \cite{Cecotti:2015wqa}, the 3d Coulomb branch $\mathscr{X}$
is a holomorphic symplectic manifold of quaternionic dimension $r$ (complex dimension $2r$):
indeed on the \emph{real} tangent bundle $T_\R\mspace{-1mu}\mathscr{X}$ we have a $\mathbb{H}$-action induced by the
3d $\cn=4$ R-symmetry which endows $\mathscr{X}$ with a torsionless $Sp(r,\C)$-structure. 
We write $\Omega$ for its $(2,0)$ holomorphic symplectic form.
The compatibility of the TFT computation from the 4d and 3d viewpoints
requires the algebra isomorphism
\be\label{basiciso}
\mathscr{R}\equiv\Gamma(\mathscr{C},\co_\mathscr{C})\simeq \Gamma(\mathscr{X},\co_\mathscr{X}),
\ee
which produces the canonical scheme-theoretic fibration
\be\label{basicpi}
\pi\colon\mathscr{X}\to \mathsf{Spec}\,\Gamma(\mathscr{X},\co_\mathscr{X})\simeq \mathscr{C}
\ee
over the Coulomb branch $\mathscr{C}$. The compatibility of the actions of the 3d and 4d R-symmetries
implies that
the fibers of $\pi$ are Lagrangian submanifolds, i.e. $\Omega|_{\mathscr{X}_u}=0$
for all $u\in\mathscr{C}$. 
The map $\pi$ has an obvious physical meaning in terms of the RG flow from $R=0$ to $R=\infty$.
We claim that $\mathscr{X}$ is a commutative group-scheme over $\mathscr{C}$.
In particular the fiber $\mathscr{X}_u$ over the \emph{generic} point $u$ is a commutative algebraic group.
Then $\pi$ has a section $s$ given by the zero of the group, $s(u)=0_u\in\mathscr{X}_u$,
 and $\mathscr{C}$
can be identified with the submanifold $s(\mathscr{C})\subset\mathscr{X}$ which should also be Lagrangian by
comparison of R-symmetry actions in 3d and 4d. The claim can be easily checked
in Lagrangian SCFTs and hence in full generality at generic points in $\mathscr{C}$:
this suffices to establish that $\mathscr{X}$ is a (smooth) birational model of a commutative group-scheme over
$\mathscr{C}$. In the usual physical situation where the electromagnetic charges obey
a Dirac quantization, the general fiber is actually a polarized Abelian variety.
These physical requirements lead to the ``conventional''
special geometries which are the object of these notes.
More exotic possibilities will be explored elsewhere. 


\subsection{Basic definitions}\label{yyy56z} 
We codify the above informal discussion in a definition.

\begin{defn}
A \emph{rank-$r$ special geometry} is
a holomorphic fibration $\pi$ with section $s$
\be
\xymatrix{\mathscr{X}\ar@<0.6ex>[rr]^{\pi}&& \mathscr{C},\ar@<0.6ex>[ll]^{s}}\qquad \pi\mspace{2mu}s=\mathrm{Id}_\mathscr{C},
\ee
over a complex (normal) affine variety $\mathscr{C}$ of dimension $r$, such that:
\begin{itemize}
\item[(a)] the total space $\mathscr{X}$ has a holomorphic symplectic structure $\Omega$,
and both the fibers $\mathscr{X}_u$ of $\pi$ and the section $s$ are Lagrangian submanifolds:
$\Omega|_{\mathscr{X}_u}=\Omega|_s=0$;
\item[(b)]
the \emph{generic} fiber $\mathscr{X}_u$ ($u\in\mathscr{C}$) is smooth and a polarized Abelian variety of dimension $r$ with zero element $s(u)$.
\end{itemize}
The locus $\cd\subset \mathscr{C}$ of points $u$ with non-smooth fiber $\mathscr{X}_u$
is called the \emph{discriminant.}
\end{defn}

Although it is not part of the definition, we shall assume the polarization of the fiber to be
\emph{principal} for definiteness. Physically this is the most natural situation.

\medskip

With this definition, a special geometry is (in particular) a birational model of an \emph{Abelian scheme over a
(quasi-)projective variety}, that is, of an Abelian variety defined over the function field $\C(\mathscr{C})$ (e.g. in rank-1 $\mathscr{X}$ is the \emph{N\'eron model} \cite{neron}
of an elliptic curve over the field of rational functions on a curve $\overline{\mathscr{C}}\simeq\mathbb{P}^1$).
$\mathscr{X}$ is also an \emph{Algebraic Integrable System}
where the Hamiltonians in involution are the holomorphic functions on $\mathscr{C}$,
i.e.\! the chiral ring\footnote{\ More precisely its Frech\'et completion.} $\mathscr{R}$,
while the conjugate angle variables are the doubly periodic coordinates along the fibers.
Two \emph{weaker} notions are often used:
\begin{itemize}
\item a \emph{non UV-complete} special geometry is a fibration $\mathscr{X}\to\mathscr{C}$
which satisfies all the properties of a special geometry except that $\mathscr{C}$
is not quasi-projective. A non UV-complete geometry may or may not have a UV completion, i.e.\! an analytic extension over some affine base. Non UV-complete special geometries describe low-energy effective 
$\cn=2$ QFTs which are not UV-complete, and may or may not have a completion;
\item a \emph{special K\"ahler geometry}
satisfies all the local conditions predicted by the $\cn=2$ superspace
approach to supersymmetry: existence of a local pre-potential $\cf$, existence of a special K\"ahler metric on the Coulomb branch, \emph{etc.}, but whose underlying global structures (such as the VHS\footnote{\ VHS=variation of Hodge structure.}) are not defined over $\mathbb{Q}$, or some other number field $\mathbb{F}$, but over a local field such as $\R$ or $\C$.
\end{itemize}

We stress that the results in this paper refer only to special geometries in the \emph{strict sense.}
In particular for us it will be absolutely essential that the Coulomb branch $\mathscr{C}$ is quasi-projective. Most of our statements are \emph{just false} 
when this condition does not hold.


\begin{defn} A special geometry is called:
\begin{itemize}
\item \emph{decomposable} if it is the product of two lower-rank special geometries. Decomposable geometries
 describe QFTs
which decompose in two non-interacting sub-systems;
\item \emph{free} if the fibration is trivial i.e.\! $\mathscr{X}=A\times \mathscr{C}$ for a fixed Abelian variety $A$.
Free geometries describe free $\cn=2$ QFT;
\item \emph{isotrivial} if all \emph{smooth} fibers are isomorphic to a fixed
polarized Abelian variety $A$. All $\cn\geq 3$ QFTs have isotrivial geometries,
but there are also genuine $\cn=2$ interacting SCFT with isotrivial geometries. The structure
 of isotrivial geometries is discussed in \S.\,\ref{s:isotrivial} below;
 \item\emph{quasi-isotrivial} if all smooth fibers are \emph{isogeneous} to a product
 $A\times Y_u$ where the first factor, $A$, is a fixed polarized Abelian variety while $Y_u$
 depends on $u\in\mathscr{C}$;
\item \emph{unitary} if it is complete as a Hamiltonian system, that is,
iff for all Hamiltonian $h\in\mathscr{R}$
the corresponding Hamiltonian vector field $v(h)$
\be
\iota_{v(h)}\Omega\overset{\rm def}{=}dh
\ee
is \emph{complete,} i.e.\! iff the solution of the (holomorphic) Hamilton equations
\be
\frac{dx^i}{dt}= v(h)^i
\ee
 exists for all times $t\in \C$ and initial condition.
The special geometries of unitary QFTs are unitary in this geometric sense. It is amusing that quantum unitarity of
the $\cn=2$ QFT amounts to classical ``unitarity'' of the underlying Hamiltonian system.
\end{itemize}
\end{defn}
 
 \begin{cave} In \S.\,\ref{ddddcom} we introduce the notion of \emph{reducible} special geometry.
 While decomposable implies reducible, there exist indecomposable geometries which are not irreducible.
 However their structure is very constrained, and their number rather small.
 \end{cave}

\begin{fact}
The discriminant $\cd$ is either empty or of pure codimension $1$. When $\cd=\varnothing$ the geometry is free. Otherwise there are finitely-many elements $p_a\in\mathscr{R}$
with $(p_a)$ prime, such that $\cd=\{p_1 p_2\cdots p_s=0\}\subset\mathscr{C}$.
\end{fact}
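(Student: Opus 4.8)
The plan is to prove the three assertions in turn, exploiting the structure of $\pi\colon\mathscr X\to\mathscr C$ as a family of polarized Abelian varieties over the irreducible affine base $\mathscr C$ (which, in the standing smooth case, is $\C^r$). First, the fibres of $\pi$ are isotropic for $\Omega$, hence of dimension $\le r$, and since fibre dimension is upper semi-continuous while the generic fibre has dimension $r$, every fibre has dimension exactly $r$; by miracle flatness $\pi$ is flat. The discriminant $\cd$ — the locus over which $\mathscr X_u$ is not a smooth proper Abelian variety — is then Zariski-closed (the good-reduction locus is open) and misses the generic point of $\mathscr C$, so $\mathrm{codim}_{\mathscr C}\cd\ge 1$. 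If $\cd=\varnothing$, then $\pi$ is a smooth proper group scheme with connected fibres and a section, i.e.\ an Abelian scheme over $\mathscr C$.

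\emph{The case $\cd=\varnothing$.} Here $\mathscr C\cong\C^r$ is simply connected, so the weight-one polarized $\mathbb Z$-VHS $R^1\pi_*\mathbb Z$ has trivial monodromy and its period map is a single-valued holomorphic $p\colon\C^r\to\mathfrak H_r$ into the Siegel upper half-space. As $\mathfrak H_r$ is biholomorphic to a bounded domain, each coordinate of $p$ is a bounded entire function on $\C^r$, hence constant; thus the polarized integral VHS is constant. The Abelian scheme reconstructed from a constant polarized weight-one $\mathbb Z$-VHS is $A\times\mathscr C$ for the associated fixed principally polarized $A$, and $\mathscr X$ — being the minimal smooth model, which when every fibre is Abelian is the Abelian scheme itself — equals it. Hence $\mathscr X\cong A\times\mathscr C$ is free.

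\emph{Purity.} Suppose some irreducible component $W\subseteq\cd$ has codimension $c\ge 2$. Localizing $\mathscr C$ near a general point of $W$, we may assume $\cd=W$ is smooth of codimension $c$, with a polarized Abelian scheme over $\mathscr C^\circ:=\mathscr C\setminus W$. Because $W$ has codimension $\ge 2$ in the regular $\mathscr C$, one has $\pi_1(\mathscr C^\circ)=\pi_1(\mathscr C)$, so $R^1\pi_*\mathbb Z$ has trivial monodromy and extends across $W$ as a local system; likewise the de Rham bundle $H^1_{\mathrm{dR}}$ with its Gauss--Manin connection, the Hodge sub-bundle $F^1$, and the polarization extend across $W$ by passing to reflexive hulls (reflexive sheaves are locally free on the regular $\mathscr C$, and the defining identities persist by continuity). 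The positivity of the polarization form on $F^1$, which could a priori fail on the codimension-$\ge 2$ locus $W$, persists there by the maximum principle, using a bounded \emph{convex} (Harish-Chandra) realization of $\mathfrak H_r$. One thus obtains a polarized weight-one $\mathbb Z$-VHS over all of $\mathscr C$, whose intermediate Jacobian is an Abelian scheme $\mathcal A\to\mathscr C$ agreeing with the given family over $\mathscr C^\circ$. Since the minimal smooth symplectic model $\mathscr X$ coincides with the N\'eron model over the good-reduction locus, and the latter over $\mathscr C$ equals $\mathcal A$, the fibre $\mathscr X_w$ is a smooth Abelian variety for every $w\in W$ — contradicting $W\subseteq\cd$. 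Hence $\cd$ is pure of codimension one.

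\emph{Conclusion, and the crux.} Being a pure-codimension-one closed subvariety of the affine variety $\mathscr C=\mathsf{Spec}\,\mathscr R$, the discriminant has finitely many irreducible components (Noetherianity), cut out by height-one primes of $\mathscr R$; in the standing smooth case $\mathscr R=\C[u_1,\dots,u_r]$ is a unique factorization domain, so these primes are principal, generated by pairwise non-associate primes $p_1,\dots,p_s$, whence $\cd=V(p_1)\cup\cdots\cup V(p_s)=\{p_1p_2\cdots p_s=0\}$. I expect the purity step to carry the real content: one must verify that the extended Hodge filtration $F^1$ stays in the \emph{open} period domain $\mathfrak H_r$ (via Hartogs and the maximum principle on a bounded convex realization, as above), and one must justify that the minimal smooth symplectic model $\mathscr X$ agrees with the N\'eron model over the good-reduction locus, so that $\mathcal A=\mathscr X$ in a neighbourhood of $W$. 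Equivalent packagings of the purity input are the de Jong--Tate extension theorem for $p$-divisible groups over regular schemes, or the observation that $\cd$ is the zero scheme of the discriminant section of the relevant line bundle on $\mathscr C$. Everything else — flatness and properness of $\pi$, the Liouville and ``Riemann'' inputs, and factoriality of a polynomial ring — is soft.
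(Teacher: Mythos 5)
Your handling of the easy parts is fine and matches the paper's inputs: the $\cd=\varnothing$ case via constancy of the period map (Liouville on the bounded Siegel domain, then identifying the family with its relative Albanese, which uses the section) is essentially the same rigidity fact the paper quotes, and the last assertion is indeed just Noetherianity plus factoriality of $\C[u_1,\dots,u_r]$. The paper does not reprove purity at all — it cites \textbf{Proposition 3.1(2)} of Hwang--Oguiso, a theorem specifically about holomorphic \emph{Lagrangian} fibrations — and it is exactly at this point that your argument has a genuine gap.

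In your purity step you extend the polarized weight-one VHS across the codimension-$\geq 2$ component $W$ (unobstructed $\pi_1$, Hartogs for the Hodge bundle, a maximum-principle argument to stay in the open period domain) and thereby produce an Abelian scheme $\mathcal{A}\to\mathscr{C}$ agreeing with $\mathscr{X}$ over $\mathscr{C}\setminus W$. That much is plausible. But the contradiction you need is that the fibres of the \emph{given} fibration $\mathscr{X}$ over $W$ are smooth, and the mere existence of a smooth extension $\mathcal{A}$ does not give this: two smooth total spaces, proper and flat over $\mathscr{C}$, which agree away from a codimension-two subset of the base can still differ by a modification supported over $W$, with different special fibres. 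You bridge this by declaring that $\mathscr{X}$ is "the minimal smooth symplectic model" and that it "coincides with the N\'eron model over the good-reduction locus" — but $\mathscr{X}$ is not defined by any minimality or N\'eron property in the paper (it is arbitrary data subject to the axioms of the Definition), N\'eron models are neither proper nor the standard device over a higher-dimensional base, and nowhere in your argument do the symplectic form $\Omega$ or the Lagrangian condition on the fibres actually enter. That structure is precisely what forces purity of the discriminant (this is the content of the Hwang--Oguiso result; for general smooth proper flat families no such purity holds), so the step you yourself flag as "the crux" is not closed. To repair the proof you must either genuinely use $\Omega$ (e.g.\ an action-angle/period argument showing the Lagrangian fibration is locally determined near $W$ by the extended VHS), or simply invoke the Lagrangian-fibration purity theorem, as the paper does.
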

\begin{proof}[Argument] The first assertion is e.g.\! \textbf{Proposition 3.1(2)} in \cite{oguiso1}. The second one follows from the well-known fact that all families of smooth projective varieties parametrized by $\C$ are trivial (cf. \S.\,\textbf{13.7} of \cite{periods}). The third statement is well known.
\end{proof}

The above definitions refer to the geometry of a general 4d $\cn=2$ SQFT. 
The geometries of $\cn=2$ SCFTs have an extra geometric structure.

\begin{defn}\label{xxxyy67} A \emph{$\C^\times$-isoinvariant special geometry} is a special geometry with a \emph{complete} holomorphic vector field $\ce\in \Gamma(\mathscr{X},T\mspace{-2mu}\mathscr{X})$, called the \emph{Euler vector},
such that for all $z\in\C$ the exponential map $\exp(z\,\ce)\colon\mathscr{X}\to\mathscr{X}$ is a complex automorphism preserving all special geometric structures: i.e.\! $\exp(z\,\ce)$ preserves set-wise the section $s$ and maps fibers into fibers, while\footnote{\ Here and below $\mathscr{L}_v$ stands for the Lie derivative in the direction of the vector field $v$.} 
\be\label{dimOmega}
\mathscr{L}_\ce\mspace{2mu}\Omega=\Omega,\quad\text{that is,}\quad
\exp(z\,\ce)^{\mspace{-2mu}*}\mspace{1mu}\Omega= e^{z}\,\Omega.
\ee 
\end{defn}
The Euler vector $\ce$ induces a grading of the chiral ring $\Gamma(\mathscr{X},\co_\mathscr{X})\simeq \mathscr{R}=\oplus_\Delta\mathscr{R}_\Delta$ by
\be
h\in\mathscr{R}_\Delta \quad\Leftrightarrow\quad\mathscr{L}_\ce\mspace{2mu} h=\Delta\,h.
\ee
A Hamiltonian $h\in\mathscr{R}_\Delta$ is said to have \emph{dimension $\Delta$}.
The identity has dimension zero. We shall see momentarily that in a unitary special geometry
the identity $1$ is the only element of $\mathscr{R}$ with dimension zero, while all other Hamiltonians have $\Delta\geq1$.
We shall show below that the $\Delta$'s are rational numbers and that only a short list of rational numbers may appear at a given rank $r$.
Then $\mathscr{R}$ is a finitely generated local graded $\C$-algebra of the expected form 
\eqref{localgraded}. When the Coulomb branch is smooth $\mathscr{R}=\C[u_1,\cdots, u_r]$ and the coordinates have definite dimension 
$\mathscr{L}_\ce\, u_i=\Delta_i\, u_i$.
In particular $\exp(z\,\ce)$ induces a complex automorphism of the Coulomb branch,
$\exp(z\mspace{2mu}\boldsymbol{\Delta})\colon \mathscr{C}\to \mathscr{C}$,
such that
\be
\pi\exp(z\,\ce)=\exp(z\mspace{2mu}\boldsymbol{\Delta})\,\pi,
\ee
which for smooth $\mathscr{C}$ is simply $u_i\mapsto e^{z \Delta_i} u_i$. 
It follows from \textbf{Definition \ref{xxxyy67}} that the isoclass 
of the fiber, as a polarized Abelian variety, is invariant along an orbit of
the group of automorphisms $\exp(z\mspace{2mu}\boldsymbol{\Delta})$
\be
\text{for all }z\in\C,\ u\in\mathscr{C}\setminus\cd\colon\quad \mathscr{X}_{\exp(z\mspace{2mu}\boldsymbol{\Delta})u}\simeq \mathscr{X}_u\ \ 
\text{as polarized Abelian varieties.}
\ee
The kernel of group homomorphism 
$\exp(\cdot \,\ce)\colon\C\to \mathsf{Aut}(\mathscr{X})$ is $2\pi i m\Z$ where $m$ is the integer in eq.\eqref{localgraded}.
Indeed the induced map $\exp(2\pi i m\mspace{2mu}\boldsymbol{\Delta})$ is the identity on $\mathscr{C}$, so $\exp(2\pi i m\mspace{2mu}\ce)$ fixes the zero element $0_u$ of each Abelian fiber $\mathscr{X}_u$, while it acts as the identity on its Lie algebra
because of the isomorphism $\Omega\colon T_{0_u}\mspace{-2mu}\mathscr{X}_u\simeq T^*_u\mspace{2mu}\mathscr{C}$, so it is the identity in each fiber.
The automorphism group acting effectively on $\mathscr{X}$ is $\C^\times\simeq \C/2\pi i m\Z$.
The prime divisors $p_a$ in the discriminant $\cd$ are irreducible homogeneous elements of $\mathscr{R}$, i.e.
$\mathscr{L}_\ce\, p_a= \Delta(p_a)\, p_a$ for some $\Delta(p_a)\in \tfrac{1}{m}\mathbb{N}$.

\begin{rem} In the physical literature it is more common to require the existence of a
Seiberg-Witten (SW) differential $\lambda$ such that $\Omega=d\lambda$. $\lambda$ is simply the image of the vector $\ce$
under the natural isomorphism $\Omega\colon T\mspace{-1mu}\mathscr{X}\to T^*\mspace{-3mu}\mathscr{X}$, and the two data $\lambda$, $\ce$ are equivalent in the $\C^\times$-isoinvariant situation. It is more convenient to work with $\ce$ since it makes manifest that $\lambda$ corresponds to an automorphisms ($\equiv$ a symmetry) of the special geometry which is fully determined by its symplectic structure.
\end{rem}
%

\subsubsection{Geometric version of the unitarity bound}
\label{s:uni}

\begin{fact} {\bf (1)} In a unitary $\C^\times$-isoinvariant special geometry all homogeneous elements
of $\mathscr{R}$ except the identity, have dimension $\Delta\geq1$. {\bf (2)} In a unitary, indecomposable, non-free $\C^\times$-isoinvariant geometry the inequality is strict $\Delta>1$. 
\end{fact}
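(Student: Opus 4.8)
The plan is to extract the bound from the completeness (``unitarity'') of the Hamiltonian flows, combined with the behaviour of the $\C^\times$-action at the tip $0\in\mathscr{C}$.

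I would first record a few Cartan-calculus identities. Fix a homogeneous $h\in\mathscr{R}$ of dimension $\Delta$, so $\mathscr{L}_\ce h=\Delta h$. Since $h$ is pulled back from the base and the fibres of $\pi$ are Lagrangian of half the dimension of $\mathscr{X}$, the Hamiltonian field $v(h)=\Omega^{-1}(dh)$ is tangent to the fibres of $\pi$. Applying $\mathscr{L}_\ce$ to $\iota_{v(h)}\Omega=dh$ and using $\mathscr{L}_\ce\Omega=\Omega$, $\mathscr{L}_\ce dh=\Delta\,dh$ gives $\iota_{[\ce,v(h)]}\Omega=(\Delta-1)\iota_{v(h)}\Omega$, hence $[\ce,v(h)]=(\Delta-1)\,v(h)$; equivalently, with $\phi_z=\exp(z\ce)$ and $\psi^h_t$ the Hamiltonian flow of $h$, one has the conjugation relation $\phi_z\,\psi^h_t\,\phi_{-z}=\psi^h_{e^{(1-\Delta)z}t}$. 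One also gets $\iota_{v(h)}\lambda=\Omega(\ce,v(h))=-\Delta h$. Since $\mathscr{C}$ is an affine quasi-cone with unique closed orbit $\{0\}$, every generator of $\mathscr{R}$ has dimension $>0$ and $\mathscr{R}_0=\C$; so part (1) amounts to: the least positive dimension of a generator is $\ge 1$, and part (2) to: no generator has dimension exactly $1$ unless the geometry is free or decomposable.

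Next comes the orbit argument at the tip. Let $P=s(0)\in\mathscr{X}$; since $\exp(z\boldsymbol{\Delta})$ fixes $0$ and $\phi_z$ preserves $s$, $P$ is a $\C^\times$-fixed point and $\pi$ is a submersion at $P$. Pick a generator $h$ with $d\bar h|_0\neq0$; then $v(h)(P)=\Omega^{-1}(dh|_P)\neq0$. Because $v(h)$ is vertical, the $\psi^h$-orbit $O$ of $P$ lies in the central fibre $\mathscr{X}_0$; from the conjugation relation (and $\phi_z(P)=P$), $O$ is $\C^\times$-stable, with $\phi_z$ acting on the time-coordinate of $O$ by $t\mapsto e^{(1-\Delta)z}t$. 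Completeness of $v(h)$ forces $O$ --- the orbit of a nowhere-vanishing complete holomorphic vector field --- to be biholomorphic to $\mathbb{G}_a$, $\mathbb{G}_m$, or an elliptic curve. If $\Delta\neq1$ the homomorphism $z\mapsto e^{(1-\Delta)z}$ is onto $\C^\times$, which cannot preserve a non-zero period lattice; this excludes $O\simeq\mathbb{G}_m$ and the case of $O$ an elliptic curve, leaving $O\simeq\mathbb{G}_a$. Moreover $v(h)$ restricts on $\mathscr{X}_0^\circ$ to a left-invariant vector field (the limit of the fibrewise translation-invariant $v(h)$ on the smooth abelian fibres), so $O$ is a one-parameter subgroup $\mathbb{G}_a\hookrightarrow\mathscr{X}_0^\circ$ on which $\C^\times$ acts with weight $1-\Delta$. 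Thus: either $\Delta=1$, or $\Delta\neq1$ and the central fibre contains a $\C^\times$-invariant $\mathbb{G}_a$ whose coordinate carries $\C^\times$-weight $1-\Delta$ --- in particular a \emph{positive} weight when $0<\Delta<1$.

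To conclude part (1) I would pass from the tip to a generic point $u_\star$ of a discriminant component $\cd_a$, where the geometry is, roughly, a rank-$1$ Kodaira piece (fixed by the class $[\varrho_a]$) fibred over the transverse rank-$(r-1)$ data, and $\C^\times$-isoinvariance forces that rank-$1$ piece to be isotrivial; there one applies (or establishes) the rank-$1$ case, in which the Kodaira type of the central fibre together with the existence of a \emph{dimension-one} Seiberg--Witten differential $\lambda$ forces the dimension into a finite list, all of whose entries are $\ge1$. Equivalently, in the flat special coordinates $a_i$ (each of dimension $1$) the special K\"ahler metric is flat near $\cd$, so its conical locus along $\cd$ has cone angle $2\pi/\Delta\le2\pi$, i.e.\ no excess angle, which reads $\Delta\ge1$. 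Matching this local rank-$1$ datum with the weight/verticality constraints of the previous paragraph closes the bound; I expect this reconciliation --- and isolating the ``no excess angle'' conclusion from geometric unitarity alone --- to be the main obstacle. For part (2): if a generator $h$ has $\Delta(h)=1$ then $[\ce,v(h)]=0$, and from $\iota_{v(h)}\lambda=-h$ one finds $\mathscr{L}_{v(h)}\lambda=\iota_{v(h)}\Omega+d(-h)=dh-dh=0$, so $v(h)$ is a complete, $\C^\times$-invariant infinitesimal symmetry of $(\mathscr{X},\Omega,\lambda,\pi)$, vertical and generically non-zero; its flow, of $\C^\times$-weight zero, commutes with the scaling and preserves the grading, which lets one peel off the free rank-$1$ factor attached to $h$ (so that $\mathscr{R}$ contains $\C[h]$ as a graded direct summand and $\mathscr{X}$ is, up to isogeny, a product of an abelian factor times $\C_h$ with a lower-rank special geometry). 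Hence the geometry is free or decomposable, and an indecomposable non-free unitary geometry has all $\Delta>1$.
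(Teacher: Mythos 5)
Part (1) of your proposal has a genuine gap. You derive the same key identity as the paper, $\mathscr{L}_\ce\,v(h)=(\Delta-1)\,v(h)$, but you then abandon the direct route: from this identity the paper concludes immediately that for $\Delta<1$ the pushforward $\exp(z\,\ce)_*\,v(h)=e^{z(\Delta-1)}v(h)$ diverges as the $\C^\times$-flow contracts everything to the tip of the quasi-cone, so $v(h)$ cannot be complete — contradiction with unitarity, done. Your substitute (the Hamiltonian orbit of $s(0)$ inside the central fibre) does not produce a contradiction: it shows at best that $\Delta\neq1$ would force a $\C^\times$-stable copy of $\mathbb{G}_a$ of weight $1-\Delta$ inside $\mathscr{X}_0$, and such $\mathbb{G}_a$'s genuinely occur in the zero fibre of any non-free geometry (cf.\ \textbf{Lemma \ref{le:zero}}: $(\mathscr{X}_0^{\rm reg})^0=A_0\times\C^{\mspace{2mu}r-\ell}$), so nothing at this stage rules out $0<\Delta<1$. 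You then defer the conclusion to a ``reconciliation'' with rank-1 Kodaira data along the discriminant and a ``no excess angle'' statement, which you yourself flag as the main unfinished obstacle; moreover its premise (``the special K\"ahler metric is flat near $\cd$'') is false in general, since the metric is precisely singular along the discriminant. As written, the bound $\Delta\geq1$ is not established.

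For part (2) your reduction is the right one ($\Delta=1$ forces a free factor to split off), and the observation that $v(h)$ is then a complete, vertical, $\C^\times$-commuting symmetry preserving $\lambda$ is a sensible starting point; but the phrase ``which lets one peel off the free rank-1 factor'' is exactly the nontrivial content. One must show that the $\Delta=1$ directions integrate to a \emph{fixed} polarized Abelian subvariety $A_0$ of every smooth fibre and that the geometry splits globally as a free geometry times a lower-rank special geometry — this is what the paper proves (Lemma \ref{le:zero}, appendix \ref{app:zero}) via the Chevalley--Barsotti structure of the zero fibre together with the scaling of the period lattice $A^{ij},B^{ij}$ with weights $1-\Delta_i$; the hedge ``up to isogeny'' also silently weakens the product statement you actually need for indecomposability. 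So part (2) is a plausible outline resting on an unproved splitting, and part (1) is incomplete where the paper's one-line completeness argument would have sufficed.
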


\begin{proof}[Argument] {\bf (1)} Let $h$ be a non-constant Hamiltonian of dimension $\Delta$. $dh$ is not zero
and $\mathscr{L}_\ce dh= d\mathscr{L}_\ce h=\Delta\, dh$.
The corresponding Hamiltonian vector $v(h)$ is non-zero and satisfies
\be
\Delta\, i_{v(h)}\Omega=\Delta\,dh= \mathscr{L}_\ce dh=
\mathscr{L}_\ce(i_{v(h)}\Omega)=i_{\mathscr{L}_\ce v(h)}\Omega+ i_{v(h)}\mathscr{L}_\ce\Omega= i_{\mathscr{L}_\ce v(h)}\Omega+i_{v(h)}\Omega
\ee
or $\mathscr{L}_\ce\mspace{2mu} v(h)=(\Delta-1)v(h)$. Therefore
\be
\exp(z\mspace{2mu}\ce)_\ast\mspace{2mu} v(h)= z^{\Delta-1} v(h).
\ee
If $\Delta<1$ the Hamiltonian vector field
 $v(h)$ diverges at the tip of the quasi-cone and so cannot be complete. {\bf (2)}
 follows from the following result whose proof is sketched in appendix \ref{app:zero}.
 \begin{lem}\label{le:zero} The connected component $(\mathscr{X}_0^\text{\rm reg})^0$
 of the smooth locus $\mathscr{X}_0^\text{\rm reg}\subset\mathscr{X}$ of the fiber over the origin $0$
 has the structure
 \be\label{zerostruc}
 (\mathscr{X}_0^\text{\rm reg})^0= A_0\times \C^{r-\ell},\quad\ell\equiv\dim A_0=\dim\mathscr{R}_{\Delta=1},
 \ee
 where $A_0$ is a polarized Abelian variety,
 while the smooth fibers have the form $\mathscr{X}_u=A_0\times\mathscr{Y}_u$ with
 $\mathscr{Y}_u$ a polarized Abelian variety of dimension $r-\ell$. 
 \end{lem}
 Clearly the geometry is the product of the free geometry $(A_0\times \C^\ell)\to\C^\ell$
 with a special geometry $\mathscr{Y}\to \C^{r-\ell}$ with fiber $\mathscr{Y}_u$ and no
 dimension 1 coordinate.
 \end{proof}
 
Eq.\eqref{zerostruc} says that the fiber over the origin has \emph{additive} (as contrasted to \emph{multiplicative}) degeneration, thus in a non-free $\C^\times$-isoinvariant geometry $\mathscr{X}_0$ is always an unstable degeneration of an Abelian variety. 
 
 \begin{rem}\label{kiuqwe44} The structure of the fiber over the origin, eq.\eqref{zerostruc} is one of the many properties of the special geometry which, taken together, guarantee that the corresponding $\cn=2$ QFT is free of Landau poles. For instance, in rank-1 it rules out singular fibers of Kodaira type $I_n$, which
 would correspond to $\cn=2$ SQCD a formal QFT affected by Landau poles.  
 \end{rem}

 \subsubsection{Implied structures: metric, SW periods, prepotential}
 
 Let $L$ be the fibers' polarization seen as a integral $(1,1)$  class which
 restricts to a K\"ahler class in each fiber. On the complement $\mathring{\mathscr{C}}=\mathscr{C}\setminus\cd$ of the discriminant we have the function
 \be
 K\colon u\mapsto i\int_{\mathscr{X}_u} L^{r-1}\lambda\wedge \lambda^*=
 i\sum_i\left(\int_{A_i}\lambda\int_{B^i}\bar\lambda-\int_{A_i}\bar\lambda\int_{B^i}\lambda\right)
 \ee
 where $\{A_i,B^j\}$ is a symplectic basis of 1-cycles on the fibers $\mathscr{X}_u$ defined \emph{locally in some $U\subset \mathring{\mathscr{C}}$} with respect to the polarization $L$
 and $\lambda\equiv \iota_\ce\Omega$.
 The function $K$ is smooth and pluri-subharmonic in $\mathscr{C}\setminus \cd$, so $K$ is the potential of a K\"ahler metric with K\"ahler form
 \be
 \omega=i\int_{\mathscr{X}_u}\! L^{r-1}\wedge\Omega\wedge\bar \Omega.
 \ee
 Except in the free case (when $\cd=\varnothing$) this ``special K\"ahler'' metric is not complete since it has singularities along the
 discriminant $\cd$ which are at finite distance in the special metric.
 
  We define the SW period $\Pi=(a^D_i,a^j)^t$ to be the \emph{local} functions in $U$
 \be
 a^D_i(u)=\int_{B^i}\iota_\ce\mspace{1mu} \Omega\big|_{\mathscr{X}_u},\qquad  a^j(u)=\int_{A_i}\iota_\ce\mspace{1mu} \Omega\big|_{\mathscr{X}_u},\quad u\in U,\ i,j=1,\dots,r
 \ee
 so that 
 \be
 K= i\, \Pi^\dagger\Omega \Pi,\quad\text{where}\quad \Omega=\left(\begin{smallmatrix} 0 & \boldsymbol{1}_r\\
 -\boldsymbol{1}_r & 0\end{smallmatrix}\right).
 \ee 
 The periods are locally-defined holomorphic functions on $\mathscr{C}$.
Locally on $\pi^{-1}(\mathring{\mathscr{C}})$ we may find Darboux coordinates such that
$\Omega=da^i\wedge dz_i$ where $dz_i$ are holomorphic differentials on the fiber
normalized as
\be
\int_{A_i}dz_j={\delta^i}_j,\qquad \int_{B^i}dz_j=\tau_{ij}.
\ee 
Then $\iota_\ce\mspace{1mu} \Omega=a^i dz_i$ and
\be
a^D_i=\int_{B^i}\iota_\ce\mspace{1mu} \Omega\big|_{\mathscr{X}_u}=\tau_{ij}\,a^j.
\ee 
In particular the $A$-periods $(a^1,\dots, a^r)$ are local complex coordinates in $\mathring{\mathscr{C}}$. Clearly
 \be
 \mathscr{L}_\ce\,\Pi=\Pi
 \ee
 so that on local functions on $\mathring{\mathscr{C}}$ we have $\mathscr{\ce}=a^i\partial_{a^i}$ and $\mathscr{L}_\ce\,\tau_{ij}=0$ gives $a^i\partial_{a^i}\tau_{jl}=0$ so that
 \be\label{hhhhleg}
 da^D_i=d(\tau_{ij}\,a^j)= \tau_{ij}\,da^j\quad\Rightarrow\quad da_i^D\wedge da^i=0.
 \ee
 The last equation says that, locally in $U\subset\mathring{\mathscr{C}}$ there exists a holomorphic function
 $\cf(a^j)$, called the prepotential, such that 
 \be
 \mathscr{L}_\ce \cf=2\,\cf\quad \text{and}\quad a^D_i=\frac{\partial\cf}{\partial a^i},\quad \tau_{ij}=\frac{\partial^2\cf}{\partial a^i\partial a^j}.
 \ee
 
 \begin{rem} The special coordinates yield an affine structure to $\mathring{\mathscr{C}}$. This is the holomorphic version of the integral affine structure on the base of a Liouville integrable system \cite{Arnold3}.  
 \end{rem}

\subsection{Unbroken R-symmetry and the ``characteristic dimension''}

In the smooth case it is convenient to see
the $r$-tuple of rational numbers\footnote{\ We assume the $\Delta_i$ to be ordered in a non-decreasing sequence, i.e.\! $\Delta_{i+1}\geq \Delta_i$.} $\{\Delta_1,\cdots, \Delta_r\}$
as a point in the projective space over $\mathbb{Q}$ and write
\be
(\Delta_1,\Delta_2,\dots,\Delta_r)=\lambda(d_1,d_2,\dots, d_r)
\ee
where $(d_1,d_2,\dots,d_r)$ are the unique positive integers with $\gcd(d_1,\dots\,d_r)=1$
which represent them same point in $\mathbb{P}^{\mspace{1mu}r-1}(\mathbb{Q})$. 
We shall refer to $d_i$ as the \emph{degree} of $u_i$, while $\Delta_i\equiv \lambda d_i$
is its \emph{dimension}. 
$\lambda$ is
a positive rational number which we write in minimal terms as 
\be
\lambda=\frac{n}{m}\qquad\text{with }
\gcd(m,n)=1.
\ee
$\lambda$ is a basic invariant of the $\C^\times$-isoinvariant special geometry.
It is well-defined also for non-smooth Coulomb branches: if $\{\phi_1,\dots,\phi_\ell\}$ is
a set of generators of $\mathscr{R}$, one sets
\be 
\big(\Delta(\phi_1),\dots,\Delta(\phi_\ell)\big)=\lambda\,\big(d(\phi_1),\dots,d(\phi_\ell)\big)
\ee
 with the
$d(\phi_i)$ integral and coprime.
By construction
the automorphism $\exp(2\pi i\mspace{1mu}\ce/\lambda)$ acts trivially on $\mathscr{R}$,
so it fixes all points of $\mathscr{C}$,
while it multiplies $\Omega$ by a $n$-th root of unity
\be
\exp(2\pi i\,\ce/\lambda)^*\mspace{2mu}\Omega = e^{2\pi i m/n}\Omega,
\ee
which depends only on $m\bmod n$. The root $\zeta\equiv e^{2\pi im/n}$, or equivalently
the pair $(n,m\bmod n)$ with $\gcd(m,n)=1$, was called the
\emph{characteristic dimension} in \cite{Cecotti:2021ouq} and written $\varkappa=n/\langle m\rangle_n$ 
where  $\langle m \rangle_n$ is the unique positive integer $\leq n$ congruent to $m$ mod $n$. 
The physical meanings of $m$ and $n$ are obvious:
\begin{itemize}
\item[(1)] the group which acts effectively on the geometry is
$\C^\times=\exp[\C/(z\sim z+2\pi i m)]$;
\item[(2)] at the \emph{generic} point in $\mathscr{C}$
the $\C^\times$ symmetry is spontaneously broken to $\Z_n$;
\item[(3)] at \emph{any} point $u\in\mathscr{C}$ the $U(1)_R$
symmetry is broken to a group which contains $\Z_n$. Hence, a part
for the tip $0$, to a discrete group of the form $\Z_{nk}$.
\end{itemize}
One shows \cite{Cecotti:2021ouq} that $n$ can take only five values
\be\label{whatn}
n\in\{1,2,3,4,6\},
\ee
hence we have only 8 possible $\varkappa$
\be\label{whatkappa}
\varkappa\in \big\{1, \tfrac{6}{5}, \tfrac{4}{3}, \tfrac{3}{2}, 2, 3, 4, 6\big\}
\ee 
in one-to-one correspondence with the  conjugacy classes of torsion elements of $SL(2,\Z)$,
or equivalently with the Kodaira elliptic fibers with semisimple monodromy \cite{koda1,koda2}. The possible values of $\varkappa$ are just the allowed dimensions of the Coulomb dimension $\Delta$ in a $r=1$ special geometry: they are in 1-to-1 correspondence with the
semisimple Kodaira fibers, see table \ref{table}.
The basic result of \cite{Cecotti:2021ouq} was that the Abelian fiber over the generic point $u\in\mathscr{C}$
admits complex multiplication by the cyclotomic field $\mathbb{Q}(\zeta)\equiv \mathbb{Q}(e^{2\pi i/\varkappa})$ and hence its isoclass is uniquely determined unless $\mathbb{Q}(\zeta)\equiv\mathbb{Q}$, i.e.\! unless $\varkappa\in\{1,2\}$. 

\begin{fact}
When $\varkappa\not\in\{1,2\}$ all smooth fibers are isomorphic as polarized Abelian varieties, and the full geometry is isotrivial, hence easily constructed and studied by simple group theory {\rm\cite{Cecotti:2021ouq}}. The statement holds independently of the assumptions that $\mathscr{C}$ is smooth and that the polarization is principal. 
\end{fact}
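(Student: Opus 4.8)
The plan is to promote the $\C^\times$-automorphism $g:=\exp(2\pi i\,m\,\ce/n)$ (the map written $\exp(2\pi i\,\ce/\lambda)$ above, $\lambda=n/m$) to a fibrewise complex multiplication and then to use rigidity of the resulting variation of Hodge structure. From the discussion of the characteristic dimension, $g$ acts trivially on $\mathscr{R}$, hence fixes $\mathscr{C}$ pointwise and (being compatible with $\pi$ and the section) fixes $s$ pointwise, while $g^*\Omega=\zeta\,\Omega$ with $\zeta=e^{2\pi i m/n}$ a \emph{primitive} $n$-th root of unity (as $\gcd(m,n)=1$) and $g^n=\exp(2\pi i\,m\,\ce)=\mathrm{Id}$ (the kernel of $\exp(\,\cdot\,\ce)$ being $2\pi i\,m\,\mathbb{Z}$). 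Thus for every $u\in\mathring{\mathscr{C}}=\mathscr{C}\setminus\cd$, $g$ restricts to a group automorphism $g_u$ of the Abelian variety $\mathscr{X}_u$, of order dividing $n$, fixing $0_u=s(u)$.

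The first step is to determine the derivative of $g_u$ at $0_u$ on $\mathrm{Lie}(\mathscr{X}_u)=T_{0_u}\mathscr{X}_u$. Over the locus of regular points of $\mathscr{C}$ inside $\mathring{\mathscr{C}}$ (dense, since $\mathscr{C}$ is normal and $\cd$ has codimension one) the symplectic form restricts at $0_u$ to a perfect pairing between the two complementary Lagrangian subspaces $T_{0_u}\mathscr{X}_u=\ker d\pi$ (vertical) and $ds(T_u\mathscr{C})$ (horizontal). Since $g$ fixes $s(\mathscr{C})$ pointwise, its derivative at $0_u$ is the identity on $ds(T_u\mathscr{C})$; since $g$ maps $\mathscr{X}_u$ to itself, that derivative preserves $T_{0_u}\mathscr{X}_u$; and since $g^*\Omega=\zeta\,\Omega$, non-degeneracy of the pairing forces it to act as $\zeta\cdot\mathrm{Id}$ on $T_{0_u}\mathscr{X}_u=\mathrm{Lie}(\mathscr{X}_u)$ -- equivalently $g_u^{*}=\zeta\cdot\mathrm{Id}$ on $H^{1,0}(\mathscr{X}_u)$. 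Hence $\mathbb{Z}[\zeta]\hookrightarrow\mathrm{End}(\mathscr{X}_u)$ with $\zeta$ acting as a scalar on $\mathrm{Lie}(\mathscr{X}_u)$: the generic fibre has complex multiplication by the cyclotomic field $\mathbb{Q}(\zeta)=\mathbb{Q}(e^{2\pi i/\varkappa})$, the key input of \cite{Cecotti:2021ouq}.

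Now comes the rigidity step, where the hypothesis $\varkappa\notin\{1,2\}$, i.e.\ $n\in\{3,4,6\}$, is used: then $\mathbb{Q}(\zeta)$ is imaginary quadratic and $\zeta\ne\zeta^{-1}$. Since $g$ is a morphism of $\mathscr{X}$ over $\mathrm{Id}_\mathscr{C}$, it induces a \emph{flat} order-$n$ automorphism $G$ of the local system $\mathbb{V}=R^1\pi_*\mathbb{Z}$ on $\mathring{\mathscr{C}}$; its eigenspaces on $\mathbb{V}\otimes\C$ are flat subbundles of locally constant rank (images of the flat idempotents $\tfrac1n\sum_k\xi^{-k}G^k$). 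The Hodge bundle $F^1=\mathcal{H}^{1,0}\subset\mathbb{V}\otimes\co$ is $G$-stable ($g$ being a biholomorphism of $\mathscr{X}$), and by the first step $G$ acts on $F^1$ by the scalar $\zeta$ over the dense regular part of $\mathscr{C}$; since the $\zeta$-eigen-subbundle of $G|_{F^1}$ is a subbundle of constant rank on the \emph{connected} base $\mathring{\mathscr{C}}$ -- connected because $\mathscr{C}$ is irreducible and $\cd\subsetneq\mathscr{C}$ is closed -- and has rank $r=\dim F^1$ on a dense open, it has rank $r$ everywhere, i.e.\ $G|_{F^1}=\zeta\cdot\mathrm{Id}$ on all of $\mathring{\mathscr{C}}$. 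As $\zeta\ne\zeta^{-1}$, the $\zeta$-eigenbundle $\mathbb{V}_\zeta\otimes\C$ of $G$ also has rank $r$, so $F^1=\mathbb{V}_\zeta\otimes\C$ is itself a flat subbundle: the Hodge filtration, hence the period map, is constant over $\mathring{\mathscr{C}}$. Fixing a base point and transporting flatly, every $\mathscr{X}_u$ then carries the same polarized integral weight-one Hodge structure (the polarization being the locally constant form attached to $L$), so by the polarized Torelli theorem for Abelian varieties all smooth fibres are isomorphic to one fixed polarized Abelian variety $A$ -- the geometry is isotrivial. Consequently $\pi^{-1}(\mathring{\mathscr{C}})\to\mathring{\mathscr{C}}$ is classified by the monodromy homomorphism $\pi_1(\mathring{\mathscr{C}})\to\mathsf{Aut}(A,L)$ into the \emph{finite} group of polarized automorphisms, which, together with the N\'eron-type extension over $\cd$, reduces the classification to finite group theory as in \cite{Cecotti:2021ouq}.

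Two remarks close the picture. Nothing above uses that $\mathscr{C}$ is smooth: $g$ and $g^*\Omega=\zeta\,\Omega$ come from the $\C^\times$-action and the Euler vector alone (and are available for non-smooth $\mathscr{C}$), the computation of the derivative of $g_u$ is carried out only on the dense \emph{regular} locus of $\mathscr{C}$, the constant-rank argument then propagates $G|_{F^1}=\zeta\cdot\mathrm{Id}$ over all of $\mathring{\mathscr{C}}$, and $\mathring{\mathscr{C}}$ is connected in any case; and principality of $L$ is not used either, since Torelli and the eigenbundle argument are insensitive to the polarization type. The one genuinely delicate point, which I expect to be the main obstacle, is the first step: pinning down the action of $g_u$ on the \emph{entire} tangent space $\mathrm{Lie}(\mathscr{X}_u)$ and not merely on the line spanned by the Seiberg--Witten differential $\iota_\ce\Omega$ (on which $g^{*}$ acts by $\zeta$ for free). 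This is precisely where the Lagrangian structure is indispensable: it is the perfectness of the $\Omega$-pairing between the vertical tangent space and the pointwise $g$-fixed horizontal tangent space along the zero section that upgrades ``$g$ scales $\Omega$ and fixes the base'' to ``$g$ acts as the scalar $\zeta$ on all vertical tangent directions''.
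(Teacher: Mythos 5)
Your argument is correct, and its first half coincides with the paper's: you use the Lagrangian pairing $\Omega$ between the vertical tangent space $T_{s(u)}\mathscr{X}_u$ and the pointwise-fixed zero section to show that $\xi=\exp(2\pi i\,\ce/\lambda)$ acts on $\mathsf{Lie}(\mathscr{X}_u)$ as the scalar $\zeta^{\pm1}$, exactly as in the text (the sign of the exponent is only a pullback/pushforward convention). Where you diverge is the deduction of isotriviality. The paper finishes by quoting \textbf{Corollary 13.3.5} of \cite{complexAbelian}: an Abelian variety admitting multiplication by $i$ or $e^{2\pi i/3}$ acting as a scalar splits as $E_\zeta\times\cdots\times E_\zeta$, so every smooth fiber is identified with one fixed variety. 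You instead promote $\xi$ to a flat finite-order automorphism $G$ of $R^1\pi_*\Z$, note that the holomorphic projector $\tfrac1n\sum_k\zeta^{-k}G^k$ on $F^1$ has locally constant rank, conclude $F^1$ equals the flat $\zeta$-eigenbundle (here $\zeta\neq\zeta^{-1}$, i.e.\ $\varkappa\notin\{1,2\}$, is used), so the period map is constant, and then invoke Torelli. Both routes are sound; the paper's buys the explicit identification of the generic fiber as a product of CM elliptic curves (and, along the way, the constraint $n\in\{1,2,3,4,6\}$), whereas yours is more intrinsically Hodge-theoretic, gives the \emph{polarized} isomorphism of the fibers directly from constancy of the period map rather than from the classification result, and makes the insensitivity to smoothness of $\mathscr{C}$ and to the polarization type completely explicit, which is precisely the last clause of the statement.
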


\begin{table}
\caption{\label{table} Semi-simple Kodaira fibers}\vskip9pt

{\begin{footnotesize}\renewcommand{\arraystretch}{1.5}
\begin{tabular}{p{2.2cm}p{1.25cm}p{1.5cm}p{1.35cm}p{1.35cm}p{1.35cm}p{1.35cm}p{1.4cm}p{1cm}}
\hline\noalign{\smallskip}
Kodaira fiber & $\mathbf{I}_0$ & $\mathbf{I}_0^*$ & $\mathbf{II}$ & $\mathbf{II}^*$ & $\mathbf{III}$ & $\mathbf{III}^*$ & $\mathbf{IV}$ & $\mathbf{IV}^*$\\
\noalign{\smallskip}\hline\noalign{\smallskip}
$\varkappa$ & $1$ & $2$ & $\tfrac{6}{5}$ & $6$ & $\tfrac{4}{3}$ & $4$ & $\tfrac{3}{2}$ & $3$\\
monodromy & $\left[\begin{smallmatrix}1 & 0\\
0 & 1\end{smallmatrix}\right]$ & $\left[\begin{smallmatrix}-1 & 0\\
0 & -1\end{smallmatrix}\right]$ & $\left[\begin{smallmatrix}1 & 1\\
-1 & 0\end{smallmatrix}\right]$ &$\left[\begin{smallmatrix}0 & -1\\
1 & 1\end{smallmatrix}\right]$ & $\left[\begin{smallmatrix}0 & 1\\
-1 & 0\end{smallmatrix}\right]$ & $\left[\begin{smallmatrix}0 & -1\\
1 & 0\end{smallmatrix}\right]$ & $\left[\begin{smallmatrix}0 & 1\\
-1 & -1\end{smallmatrix}\right]$ &$\left[\begin{smallmatrix}-1 & -1\\
1 & 0\end{smallmatrix}\right]$\\
Euler no. & $0$ & $6$ & $2$ & $10$ & $3$ & $9$ & $4$ & $8$\\ 
Lie algebra & $-$ & $\mathfrak{so}(8)$ & $-$ & $E_8$ & $\mathfrak{su}(2)$ & $E_7$ & $\mathfrak{su}(3)$ & $E_6$\\
\noalign{\smallskip}\hline\noalign{\smallskip}
\end{tabular}\end{footnotesize}}
\end{table}

For later reference we sketch a proof of these statements. Preliminary we recall
a few well know facts about polarized Abelian varieties.\footnote{\ For the facts about complex Abelian varieties used here and below see \cite{complexAbelian} expecially chapter 13.} Let $A$ be a polarized
Abelian variety and $\mathsf{Aut}(A)$ the group of automorphisms $A\to A$
which preserve the polarization. $\mathsf{Aut}(A)$ is a \emph{finite} group with two
natural representations. First we have the $\mathbb{Q}$-defined \emph{rational representation}
$\chi$ acting on $H_1(A,\Z)\otimes_\Z\mathbb{Q}$. For principal polarizations $\chi$ is
a group embedding $\chi\colon \mathsf{Aut}(A)\to Sp(2r,\Z)$; for a general polarization see
e.g.\! chapter 2 of \cite{cayley}.
Then we have the \emph{analytic representation} $\sigma\colon \mathsf{Aut}(A)\to GL(r,\C)$ which acts on the holomorphic tangent space
at $0$, $T_0A\equiv\mathsf{Lie}(A)$. We have the isomorphism of complex representations
\be
\chi\otimes_\mathbb{Q}\mspace{-2mu}\C \simeq \sigma\oplus \bar\sigma.
\ee 
Since $\chi$ preserves the lattice $H_1(A,\Z)\subset H_1(A,\mathbb{Q})$, this implies that for all $\eta\in\mathsf{Aut}(A)$
the characteristic polynomial $\det[z-\chi(\eta)]$ is a product of cyclotomic polynomials
whose roots are the eigenvalues of $\sigma(\eta)$ and their conjugates.
The automorphism 
\be
\xi\overset{\rm def}{=}\exp(2\pi i \mspace{2mu}\ce/\lambda)\colon \mathscr{X}\to\mathscr{X}
\ee 
generates a $\Z_n$ group of automorphisms of $\mathscr{X}$ which fixes point-wise the Coulomb branch $\mathscr{C}$ and hence acts trivially on its holomorphic cotangent bundle
$T^*\mathscr{C}$. 
$\xi$ then restricts to an automorphism $\xi_u\colon\mathscr{X}_u\to\mathscr{X}_u$ of each fiber (smooth or not).

Let $u\in\mathscr{C}$ be a generic point, and $\mathscr{X}_u$ its smooth Abelian fiber. The symplectic structure gives a linear isomorphism
\be
\Omega\colon T_{s(u)}\mspace{1mu}\mathscr{X}_u\to T_u^*\mspace{1mu}\mathscr{C}. 
\ee 
The automorphism $\xi$ acts as the identity
on $T_u^*\mspace{1mu}\mathscr{C}$, hence by eq.\eqref{dimOmega} as multiplication by $\zeta^{-1}$
 on $T_{s(u)}\mathscr{X}_u$, that is,
\be
\sigma(\xi_u)=\zeta^{-1}\cdot\mathbf{1}\in GL(r,\C)\quad\Rightarrow\quad
\det[z-\chi(\xi_u)]=\big((z-\zeta^{-1})(z-\zeta)\big)^r\in\Z[z]
\ee
which says that the root of unity $\zeta$ is either $\pm1$ or
an imaginary quadratic integer. This proves eqs.\eqref{whatn},\eqref{whatkappa}.
The final step is 
to recall\footnote{\ See \textbf{Corollary 13.3.5} in \cite{complexAbelian}.} the fact that when multiplication by $\zeta= i$ or $e^{2\pi i/3}$
is an automorphism of the Abelian variety $A$ we have\footnote{\ Through out this paper $E_\tau$ stands for the elliptic curve of period $\tau$.}
\be
A=\overbrace{E_\zeta\times E_\zeta\times\cdots\times E_\zeta}^{r\ \text{factors}}
\ee  
and the geometry is (in particular) isotrivial, hence well understood \cite{Cecotti:2021ouq,Cecotti:2021ouq2}.
It remains to study the two cases $\varkappa=1$ and $\varkappa=2$
where the geometry may or may not be isotrivial. 

\subsection{The index of an irreducible homogeneous divisor}
 
 By an irreducible homogeneous divisor in $\mathscr{C}$ we mean an irreducible codimension-1
 subvariety preserved by the $\C^\times$-action (the closure of the $\C^\times$-orbits it contains). 
An irreducible homogeneous divisor has the form $\{h=0\}\subset\mathscr{C}$ for some irreducible quasi-homogeneous\footnote{\ In the sequel we simplify `quasi-homogenous' in `homogeneous'.} 
Hamiltonian $h\in\mathscr{R}$ of degree $d$, i.e.
\be
\mathscr{L}_{\ce/\lambda} h= d\, h.
\ee
E.g.\! for a smooth geometry $h=h(u_i)$ is a polynomial  in the $u_i$'s of degree $d$, i.e.
\be
h(\lambda^{d_i} u_i)=\lambda^d\mspace{2mu} h(u_i)\quad \forall\; \lambda\in \C\ \text{and some }d\in\mathbb{N}.
\ee
Consider the Hilbert series $P_h(t)$ of the algebra $\mathscr{R}/(h)$ (graded by degree not dimension!).
E.g.\! for $\C[u_1,\dots, u_r]/(h)$
\be
P_h(t)=\frac{(1-t^d)}{\prod_i(1-t^{d_i})}.
\ee 
The \emph{enhancement index} $\nu(h)$ of the divisor $h$ is the largest positive integer such that
\be
P_h(t)\in \C[[t^{\nu(h)}]].
\ee
The physical meaning of $\nu(h)$ is:
\be
\nu(h)=\left[\left(\begin{smallmatrix}\textbf{unbroken subgroup of $\C^\times$ at the}\\
\textbf{generic point
in the $h=0$ locus}\end{smallmatrix}\right)\;:\; \left(\begin{smallmatrix}\textbf{unbroken subgroup of $\C^\times$}\\
\textbf{at the generic point
in $\mathscr{C}$}\end{smallmatrix}\right)\right]
\ee
i.e.\! $\nu(h)$ measures the enhancement of the unbroken R-symmetry along the divisor $h$.
A divisor $h$ is \emph{enhanced} (resp.\! \emph{non-enhanced}) iff $\nu(h)>1$
(resp.\! $\nu(h)=1$).

\begin{rem} The authors of \cite{M13} distinguish the divisors in \emph{unknotted} vs.\! \emph{knotted} ones.
Often, \emph{but not always,} this coincides with the distinction enhanced vs.\! non-enhanced.

\begin{defn} An irreducible homogeneous divisor $(h)$ is \emph{knotted}
iff the fundamental group of its complement $\pi_1(\C^r\setminus (h))$ is non-Abelian.
\end{defn}
\end{rem}
Smooth divisors are automatically unknotted.

\begin{exe}
For $r=2$ the irreducible homogeneous divisors have one of the two
forms:
\be
u_i\ \ \text{for }i=1,2\quad\text{or}\quad u_1^{d_2}- z\,u_2^{d_1}\quad \text{with }z\in\C^\times.
\ee
For divisors of the second kind $\nu=1$, while $\nu(u_i)=d_1d_2/d_i$.
The divisors of the second kind are knotted unless $d_1=1$,
while $u_i$ is never knotted.
\end{exe}

\noindent
In a general rank-$r$ smooth geometry, the irreducible divisors of degree $d_i\in \{d_1,\dots, d_r\}$
have $\nu=\gcd\{d_1,\dots,\widehat{d_i},\dots, d_r\}$
while divisors with $d\not\in \{d_1,\dots, d_r\}$ have $\nu=1$.
%
%
\subsection{Coulomb dimensions I: CM-types along submanifolds}
Suppose we have an irreducible reduced total intersection $H=\{h_1=\cdots=h_\ell=0\}$
with an enhancement index $\nu(H)\equiv \nu>1$ which is \emph{not} contained in the discriminant $\cd$.
The automorphism $\eta\equiv\exp(2\pi i\, \ce/(\nu\lambda))$
fixes all points of $H$, hence its acts as the identity on its cotangent bundle $T^*\mspace{-2mu}H\subset T^*\mspace{1mu}\mathscr{C}|_H$ while it acts by polarized automorphisms on the fibers $\mathscr{X}_u$
for $u\in H$.
We write $N^\ast\mspace{-1mu} H\to H$ for the rank-$\ell$ conormal bundle to $H$; it is spanned 
by the differentials $dh_a$ ($a=1,\dots, \ell$). Let $u\in H$ be a generic point; by assumption the fiber $\mathscr{X}_u$ is smooth. One has
\be
T_u^*\mspace{1mu}\mathscr{C}= T^*_uH\oplus N^\ast \mspace{-1mu}H_u
\ee
$\eta_u$ acts on the first summand as the identity
and on the second as multiplication by the
$\ell\times\ell$ matrix
\be
\mathrm{diag}\big\{\!\exp(2\pi i\mspace{2mu} m\mspace{1mu} \Delta(h_k))/(n\nu))\big\}.
\ee
Using the isomorphims $\Omega\colon T_{s(u)}\mathscr{X}_u\to T_u^*\mspace{1mu}\mathscr{C}$
we get the eigenvalues of its analytic representation
$\sigma(\eta_u)$ are
\be\label{listofroots}
(\overbrace{e^{-2\pi i m/(n\nu)},\cdots, e^{-2\pi i m/(n\nu)}}^{(r-\ell)\ \text{times}},e^{-2\pi i m(1-\Delta(h_1))/(n\nu)},\cdots,e^{-2\pi i m(1-\Delta(h_\ell)/(n\nu)})
\ee
This $r$-tuple of roots of unity, together with their conjugates, are the roots (with multiplicities)
of a product of cyclotomic polynomials. 
Enforcing these conditions on all total intersections of elements of $\mathscr{R}$
yield a large web of constraints on the possible $n$, $m$ and $\nu$
and in turn on the Coulomb dimensions $\Delta_i$. 
We stress that nothing depends on the assumption that $\mathscr{C}$ is smooth (except for the $\Delta_i$'s).

We give some examples.

\subsubsection{Example: dimension formulae along regular axes \cite{caorsi}} \label{jjuutreq}

For simplicity from now on we assume $\mathscr{C}$ smooth.
Suppose that the $i$-th coordinate axis $H_i=\{u_j=0\ \text{for }j\neq i\}$
is not contained in the discriminant: we say that $H_i$ is a \emph{regular} axis. Then $\nu=d_i$ and the list of roots in \eqref{listofroots} becomes
\be\label{rtupleorroots}
\big\{\exp[2\pi i(\Delta_1-1)/\Delta_i], 
\exp[2\pi i(\Delta_2-1)/\Delta_i],\cdots,\exp[2\pi i(\Delta_r-1)/\Delta_i]\big\}
\ee
Note that \emph{a priori} once we know this \emph{ordered} list of roots of 1,
 $1/\Delta_i$ is determined only mod 1, but since in a unitary
geometry
$0<1/\Delta_i \leq 1$ knowing  the $i$-th root in the list
suffices to determine $\Delta_i$ uniquely. Instead $\Delta_j$ for $j\neq i$
is determined by \eqref{rtupleorroots} only up to  addition of an integral multiple of $\Delta_i$.
The ambiguity\footnote{\ The role of this ambiguity was overlooked in \cite{caorsi}.} is however small since only a finite list of dimensions are allowed for any given $r$
(see \S.\,\ref{s:straDim}).
Moreover often there are several regular axes $H_j\not\subset \cd$ and 
 we get one such formula for the dimensions for each one of these axes,
 getting an overdetermined system which gets rid of all ambiguities.

When the $i$-th coordinate axis $H_i\not\subset\cd$
the list of $r$ $(n\,d_i)$-th roots of unity in eq.\eqref{rtupleorroots}
is a subset of the $2r$ roots of a product of cyclotomic polynomials
with the property that the union of this subset with its conjugate
yields the full set of all $2r$ roots. This entails that
\be
\exp(-2\pi i/\Delta_i)
\ee
is a root in a cyclotomic field of degree at most $2r$. That is,
the dimension $\Delta_i$ of a regular axis has the form
\be\label{whichdim}
\Delta_i= \frac{n_i}{m_i}\ \ 
\text{where }\phi(n_i)\leq 2r\ \text{and }\gcd(n_i,m_i)=1\ \ 1\leq m_i<n_i
\ee
so that there are only finitely many allowed dimensions at a given rank $r$ \cite{caorsi}.
We shall show in \S.\,\ref{s:straDim} that this conclusion remains valid even when 
$H_i\subset \cd$; in facts in that case we get the stronger condition $\phi(n_i)\leq 2(r-1)$.

\subsubsection{New dimensions and CM-type}\label{newdinwansio} The above results lead to the

\begin{defn} A rational number $\Delta$ is a \emph{new dimension in rank $r$}
if it is an allowed Coulomb dimension in rank $r$ but not in rank $(r-1)$. The set of new dimensions in rank-$r$ is
\be\label{Xidef}
\Xi(r)=\begin{cases}\Big\{\frac{n}{m}\colon  
\phi(n)= 2r,\ \gcd(n,m)=1,\ 1\leq m<n\Big\} &\text{for }r\geq2\\
\{1,6/5,4/3,3/2,2,3,4,6\} & \text{for }r=1
\end{cases}
\ee
see \cite{caorsi} for the cardinality $|\Xi(r)|$ and other interesting properties of the set $\Xi(r)$.
\end{defn}
We shall see in \S.\,\ref{s:straDim} that an axis $H_i$ with $\Delta_i$ a new dimension in rank $r$ is automatically regular.
In facts, we have more:
\begin{fact} If the axis $H_i$ has a new dimension $\Delta_i$, the fiber $\mathscr{X}_u$
at the generic point $u\in H_i$ has complex multiplication by the CM field
$\mathbb{Q}(e^{2\pi i/\Delta_i})$ of degree $2r$ and CM-type\footnote{\ For the notion of CM-type see e.g. chapter 13 of \cite{complexAbelian}.} given by eq.\eqref{rtupleorroots}. 
\end{fact}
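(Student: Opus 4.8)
The plan is to combine the general CM-type argument already deployed in \S.\ref{jjuutreq} with the new-dimension hypothesis $\Delta_i\in\Xi(r)$, i.e.\ $\phi(n_i)=2r$ where $\Delta_i=n_i/m_i$ in lowest terms. First I would invoke the forthcoming result of \S.\ref{s:straDim} (which we are allowed to assume) that an axis carrying a new dimension is automatically regular, so $H_i\not\subset\cd$ and the generic fiber $\mathscr{X}_u$ over $u\in H_i$ is a smooth polarized Abelian variety of dimension $r$. With $H_i$ regular we have $\nu(H_i)=d_i$, and the automorphism $\eta_u=\exp(2\pi i\,\ce/(\nu\lambda))$ restricts to a polarized automorphism of $\mathscr{X}_u$ whose analytic representation $\sigma(\eta_u)$ has the eigenvalue list \eqref{rtupleorroots}, namely $\{\exp[2\pi i(\Delta_j-1)/\Delta_i]\}_{j=1}^r$.

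Next I would identify the subring of $\operatorname{End}(\mathscr{X}_u)\otimes\mathbb{Q}$ generated by $\eta_u$. Since $\exp(-2\pi i/\Delta_i)=e^{-2\pi i m_i/n_i}$ is a primitive $n_i$-th root of unity (as $\gcd(m_i,n_i)=1$), the element $\eta_u$ satisfies the cyclotomic polynomial $\Phi_{n_i}$, which is irreducible over $\mathbb{Q}$ of degree $\phi(n_i)=2r$. Because the rational representation $\chi$ of the polarized automorphism group acts on $H_1(\mathscr{X}_u,\mathbb{Q})\cong\mathbb{Q}^{2r}$, and $\chi(\eta_u)$ satisfies the degree-$2r$ irreducible polynomial $\Phi_{n_i}$, the $\mathbb{Q}$-algebra $\mathbb{Q}[\chi(\eta_u)]\subseteq\operatorname{End}(\mathscr{X}_u)\otimes\mathbb{Q}$ is a field isomorphic to $\mathbb{Q}(e^{2\pi i/n_i})=\mathbb{Q}(e^{2\pi i/\Delta_i})$ acting faithfully on the $2r$-dimensional $\mathbb{Q}$-vector space $H_1(\mathscr{X}_u,\mathbb{Q})$; since $[\mathbb{Q}(e^{2\pi i/\Delta_i}):\mathbb{Q}]=2r=\dim_{\mathbb{Q}}H_1(\mathscr{X}_u,\mathbb{Q})$, the action is in fact free of rank one. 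This is precisely the statement that $\mathscr{X}_u$ admits complex multiplication by the full CM field $\mathbb{Q}(e^{2\pi i/\Delta_i})$ of degree $2r$ — one should note that a cyclotomic field $\mathbb{Q}(\zeta_n)$ with $n>2$ is totally imaginary with maximal totally real subfield $\mathbb{Q}(\zeta_n+\zeta_n^{-1})$, hence is a CM field. Finally, the CM-type is by definition the set of $r$ embeddings $\mathbb{Q}(e^{2\pi i/\Delta_i})\hookrightarrow\C$ through which $\mathbb{Q}(e^{2\pi i/\Delta_i})$ acts on the holomorphic tangent space $T_{s(u)}\mathscr{X}_u=\operatorname{Lie}(\mathscr{X}_u)$; these are read off from the eigenvalues of $\sigma(\eta_u)$, which are exactly the entries of \eqref{rtupleorroots}, and the condition $\chi\otimes_{\mathbb{Q}}\C\simeq\sigma\oplus\bar\sigma$ guarantees that this $r$-element set together with its complex conjugate partitions all $2r$ embeddings, i.e.\ it is a genuine CM-type.

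The main obstacle I anticipate is the verification that $\eta_u$ really generates a field of the \emph{full} degree $2r$ rather than a proper subfield — a priori $\chi(\eta_u)$ could satisfy a polynomial of lower degree if $n_i$ were not the exact additive order controlling the eigenvalues. This is where the hypothesis $\Delta_i\in\Xi(r)$ does the essential work: by the definition \eqref{Xidef}, a new dimension in rank $r$ forces $\phi(n_i)=2r$ \emph{exactly}, so the minimal polynomial of $\chi(\eta_u)$ is forced to be $\Phi_{n_i}$ itself and the CM field is forced to have the maximal possible degree $2r$, making the action necessarily of rank one. Establishing that the subfield generated is all of $\mathbb{Q}(e^{2\pi i/\Delta_i})$, and not merely contained in it, is thus the crux, and it follows cleanly once one knows that the characteristic polynomial of $\chi(\eta_u)$ — a product of cyclotomic polynomials of total degree $2r$ with $\Phi_{n_i}$ a factor — can only be $\Phi_{n_i}$ when $\deg\Phi_{n_i}=2r$. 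The remaining points (regularity of $H_i$, smoothness of the generic fiber, the isomorphism $\chi\otimes\C\cong\sigma\oplus\bar\sigma$) are either quoted from \S.\ref{s:straDim} or already recorded in the discussion of Table~\ref{table}.
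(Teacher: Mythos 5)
Your proposal is correct and follows essentially the same route the paper intends: the paper leaves this \textbf{Fact} without a separate proof precisely because it is the combination of the regularity of a new-dimension axis (deferred to \S.\,\ref{s:straDim}), the eigenvalue list \eqref{rtupleorroots} for $\sigma(\xi_i)$, and the degree count $\phi(n_i)=2r$ forcing the characteristic polynomial of $\chi(\xi_i)$ to be the irreducible $\Phi_{n_i}$, so that $H_1(\mathscr{X}_u,\mathbb{Q})$ becomes a rank-one module over the full cyclotomic field with CM-type read off from the analytic representation via $\chi\otimes_\mathbb{Q}\C\simeq\sigma\oplus\bar\sigma$. Your identification of the degree-count step as the crux, and your handling of it, match the paper's setup exactly.
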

It is well-known that the number of isomorphism classes of polarized Abelian varieties
of given CM-type is equal to the class number $h$ of the corresponding
number field $\mathbb{Q}(e^{2\pi i/\Delta_i})$. In particular for ranks $r\leq 9$
all new dimensions correspond to fields with  $h=1$ and 
the $r$-tuple of dimensions $\{\Delta_1,\dots,\Delta_r\}$ uniquely determines
the fiber $\mathscr{X}_u$ along a new dimension axis. 

\subsubsection{Sample results}\label{s:sample}

We list a few simple consequences of the above \textbf{Facts} for rank-2 that we shall use later in the examples.
The geometries are assumed to be smooth.
Here we use that an axis with a new dimension is automatically regular (see \S.\,\ref{s:straDim}).
\begin{itemize}
\item[(1)] A rank-2 special geometry having one of its dimensions in the list 
$\{12,12/5,12/7,12/11\}$ is automatically isotrivial;
\item[(2)] if both axes are regular of new dimension then the dimensions are
$\{8,12\}$ or $\{\tfrac{8}{7},\tfrac{10}{7}\}$. A geometry with dimensions $\{8/7,10/7\}$ cannot be isotrivial; 
\item[(3)] the rank-2 geometries where both axes are regular  must have dimensions
 $\{5/4,3/2\}$, $\{8/7,10/7\}$, $\{8,12\}$ or $\{5/3,4\}$.
The first two correspond to the Argyres-Douglas theories of types $A_5$ and $A_4$, respectively,
and the third one is the $G_8$ isotrivial geometry constructed in \cite{Cecotti:2021ouq};
\item[(4)] a rank-2 geometry with a new dimension of the type $5/k$ or $10/l$ is the Jacobian
fibration of a fibration in generically smooth genus-2 curves. 
\end{itemize} 
\begin{proof} (1) the dimension formulae yields pairs of dimensions with $\varkappa\not\in\{1,2\}$.
(2)(3) if both axes are regular we can compute $\{\Delta_1,\Delta_2\}$ in two ways; consistency
of the two computations yield the list. Second statement in (2): the fibers over the two axes
 have multiplication by $\mathbb{Q}(e^{2\pi i/8})$ and respectively $\mathbb{Q}(e^{2\pi i/10})$,
 so are not isomorphic. (4) the fiber over the regular axis with the new dimension has muliplication by
 $\mathbb{Q}(e^{2\pi i/5})$ hence is simple and not the product of two elliptic curves (cf. \textbf{Corollary 13.3.6.} of \cite{complexAbelian}). 
\end{proof}




\subsection{SW differential on enhanced axes}\label{s:iiiiii4uu5}

Suppose that the $i$-th axis $H_i$ is regular. 
The R-symmetry which is unbroken along $H_i$ is generated by $\xi_i\equiv\exp(2\pi i\,\ce/\Delta_i)$. $\xi_i$ acts by an automorphism of the generic (smooth)
fiber $\mathscr{X}_u$ with analytic representation 
\begin{equation}
\sigma(\xi_i)=\mathrm{diag}\big\{\!\exp[2\pi i(\Delta_1-1)/\Delta_i], 
\exp[2\pi i(\Delta_2-1)/\Delta_i],\dots,\exp[2\pi i(\Delta_r-1)/\Delta_i]\big\}
\end{equation}
and rational representation $\chi(\xi_i)\in Sp(2r,\Z)$ acting on $H_1(\mathscr{X}_u,\Z)$,
and 
\be
\chi(\xi_i)\simeq \sigma(\xi_i)\oplus \overline{\sigma(\xi_i)}
\ee
 as complex representations.
The polarization gives an isomorphism
\be
\mathsf{Lie}(\mathscr{X}_u)\simeq H^{0,1}(\mathscr{X}_u)\simeq \overline{H^0(\mathscr{X}_u, \Omega^1_{\mathscr{X}_u})}
\ee
Let $\lambda\equiv \iota_\ce\Omega$ be the SW differential
and $\lambda_u\equiv \lambda|_{\mathscr{X}_u}$ be its restriction to the fiber.
We have
\be
\xi^*_i\lambda_u= e^{2\pi i/\Delta_i}\,\lambda_u.
\ee
Let $\{e_a\}$ be a symplectic basis of $H_1(\mathscr{X}_u,\Z)$. The SW periods transform as
\be
\langle e_a, \lambda_u\rangle = \xi_i \cdot\langle e_a,\lambda_u\rangle=
\langle \chi(\xi_i)_{ab}\,e_b\cdot \alpha, \xi_i^*\lambda_u\rangle = 
e^{2\pi i/\Delta_i}\,\chi(\xi_i)_{ab}\, \langle e_b,\lambda_u\rangle
 \ee
 so, setting 
 {\renewcommand{\arraystretch}{1.1}
\be
 \Pi(u)_a\equiv \langle e_a,\lambda_u\rangle=\begin{pmatrix}a_i^D\\ a^j\end{pmatrix}
\ee
 }
\be
 \chi(\xi_i)_{ab}\,\Pi(u)_b= e^{-2\pi i/\Delta_i}\,\Pi(u)_a
\ee
i.e.\! the SW periods along $H_i$ are eigenvectors of the rational representation of $\xi_i$
of eigenvalue $\exp(-2\pi i/\Delta_i)$ (up to overall scale the period is independent of $u\in H_i$,
since the axis minus the origin is a $\C^\times$-orbit).
%

\subsection{$\varkappa=1$ vs.\! $\varkappa=2$ special geometries}\label{kk87777}

The characteristic dimension $\varkappa$ is a basic invariant of a $\C^\times$-isoinvariant special geometry. When $\varkappa$ is not $1$ or $2$ the geometry is isotrivial,
 all smooth fibers are the product of $r$ copies of $E_{e^{2\pi i/\varkappa}}$, and the possible
 geometries are
essentially known globally.\footnote{\ We stress that a geometry may be isotrivial even when $\varkappa\in\{1,2\}$.} We remain with two cases: $\varkappa=1$ and $\varkappa=2$.
What is the difference between these two (potentially) non-isotrivial situations?

The automorphism $\xi\equiv\exp(2\pi i\,\ce/\lambda)$ which fixes point-wise the Coulomb branch $\mathscr{C}$ acts on the SW differential $\lambda\equiv \iota_\ce\mspace{1mu}\Omega$ as 
\be
\xi^*\lambda=\begin{cases}\lambda & \varkappa=1\\
-\lambda &\varkappa =2
\end{cases}
\ee
and in facts 
\be
\xi^*\big|_{H^1(\mathscr{X}_u,\C)}= (-1)^{\varkappa+1}.
\ee
This means that when $\varkappa=2$ the SW periods $(a_i^D,a^j)^t$ are well-defined only
up to overall sign. 
In Lagrangian SCFTs with $\varkappa=2$ the sign flip is part of the gauge group (it is the longest word in the gauge Weyl group), so
the two signs are obviously physically equivalent and indistinguishable.
This property extends to special geometries with $\varkappa=2$ which do not describe Lagrangian SCFTs. 

We can state the difference between $\varkappa=1$ and $2$ in another (related) way.
When $\varkappa=1$ there is one element $h$ of the Coulomb branch function field\footnote{\ $\C(\mathscr{C})$ is the quotient field of the integral domain $\mathscr{R}$.} $\C(\mathscr{C})$
of dimension $1$, while there is no such object when $\varkappa=2$.
When $\varkappa=2$ we have instead an element $h^2\in\C(\mathscr{C})$ of dimension $2$. 

This difference between the two situations will have fundamental consequences
for the global aspects of a non-isotrivial special geometry.

\subsection{Discriminant complement and monodromy representation}\label{ddddcom}

As before $\cd\subset\mathscr{C}$ is the discriminant i.e.\! the divisor 
of points with singular fibers; we write
$\sum_i \cd_i$ for its decomposition into irreducible
components. $\cd_i=(p_i)$ for irreducible quasi-homogeneous polynomials $p_i$
of degree $g_i$ and index $\nu_i$. 
We write $\mathring{\mathscr{C}}\equiv \mathscr{C}\setminus \cd$
for the complement of the discriminant, i.e.\! the open domain of ``good'' points with smooth 
Abelian fiber. 
We have a holomorphic family of polarized Abelian varieties
\be
\mathring{\pi}\colon \mathring{\mathscr{X}}\equiv\mathscr{X}|_{\mathring{\mathscr{C}}}\to \mathring{\mathscr{C}},
\ee 
parametrized by the complement $\mathring{\mathscr{C}}$. 
We stress that $\mathring{\mathscr{C}}$ is quasi-projective over $\C$. 
Then we may apply the standard methods and results of
 the theory of Variations of Hodge Structure (VHS) over
a quasi-projective base \cite{VHS1,VHS2,VHS3,VHS4,VHS5,griffithsMT2,griffithsMT,periods,BG}.

All fibers of $\mathring{\pi}$ are diffeomorphic. This entails that over $\mathring{\mathscr{C}}$
we have a flat Gauss-Manin local system with fiber $H_1(\mathscr{X}_u,\Z)$ \cite{VHS2,periods}.
Going along a non-trivial closed curve $\ell\subset\mathring{\mathscr{C}}$
the fiber of the local system gets back to itself up to a rotation by an element $\gamma(\ell)$ 
of its polarized automorphism group: for principal polarizations this is the Siegel modular group $Sp(2r,\Z)$ (see \cite{cayley} for the general case). This yields
a group homomorphism
\be\label{kiiqwert}
\varrho\colon \pi_1(\mathring{\mathscr{C}})\to Sp(2r,\Z)
\ee
called the \emph{monodromy representation} \cite{VHS1,VHS2,VHS3,VHS4,VHS5,griffithsMT2,griffithsMT,periods,BG,monopeters},
a crucial invariant of the special geometry
which controls many physical properties of the associated SCFT.
The image $\Gamma\subset Sp(2r,\Z)$ of $\varrho$ is the \emph{monodromy group}.

\subsubsection{Topology of the complement}
A basic ingredient of special geometry is then the topology of the complement of the discriminant
$\mathring{\mathscr{C}}$, in particular its fundamental group $\pi_1(\mathring{\mathscr{C}})$.
While the \emph{Abelianization} of $\pi_1(\mathring{\mathscr{C}})$,
i.e.\! $H_1(\mathring{\mathscr{C}},\Z)$, is a pretty simple group \cite{dimca}
\be
H_1(\mathring{\mathscr{C}},\Z)\simeq \Z^k,\quad \text{where $k$ is the number of components of $\cd$,}
\ee
the group $\pi_1(\mathring{\mathscr{C}})$ is typically rather complicated. Roughly speaking, its
derived group
\be
D\pi_1(\mathring{\mathscr{C}})\overset{\rm def}{=}\mathrm{ker}\Big(\pi_1(\mathring{\mathscr{C}})\to
H_1(\mathring{\mathscr{C}},\Z)\Big)
\ee
 is a measure of the interactions in the corresponding SCFT.
A simple example will illustrate the issue.

\begin{exe}\label{n4su3}
Consider the special geometry of 4d $\cn=4$ SYM with $G=SU(3)$
(seen as a special instance of $\cn=2$ theory). The Coulomb dimensions $\{\Delta_1,\Delta_2\}$
 are the 
degrees of $\mathsf{Weyl}(SU(3))$, $\{2,3\}$, while physically we expect the monodromy group
to be the $SU(3)$ Weyl group $\Gamma=\mathsf{Weyl}(SU(3))\simeq \mathfrak{S}_3$.
Let us see how this comes about topologically. The discriminant has a unique irreducible component
corresponding to the locus in $\mathscr{C}$ where a $SU(2)$ gauge symmetry is restored
and the corresponding non-Abelian d.o.f. get light.
Thus $\cd=(u_1^3-u_2^2)$. One has \cite{oka}
\be
\pi_1(\mathring{\mathscr{C}})\equiv\pi_1(\C^2\setminus\{u_1^3-u_2^2=0\})=\langle \alpha, \beta\colon \alpha^3=\beta^2\rangle
\ee
which is the Artin braid group $\mathcal{B}_3$ of type $A_2$.
Since the geometry is isotrivial with dimensions $\{2,3\}$, the monodromy representation $\varrho$ factors through the finite quotient $\mathsf{Weyl}(SU(3))\simeq \mathcal{B}_3/\mathcal{P}_3$ producing the physically expected results. ($\mathcal{P}_3$ is the pure braid group in 3 strands). A similar story holds for $\cn=4$ with any gauge group.   
\end{exe}  

We list a few useful facts about the topology of hypersurface complements \cite{dimca}. 
When the divisor $\cd$ is \emph{normal crossing} with $k$ irreducible components the 
 $\pi_1$ of the complement is an Abelian group,
generated by the $k$ lassos circling the irreducible components. By the Zariski conjecture (proved in \cite{severi}) the same holds for all  nodal planar curves.
In the general case 
we have an explicit description of $\pi_1(\mathring{\mathscr{C}})$ given by the Zariski-van Kampen theorem
\cite{dimca,ZvK1,ZvK2}
that we shall review in \S.\,\ref{s:zariskithm}.

\subsubsection{Subtleties when $\varkappa=2$} \label{kiuuqwa}
We assume our special geometry not to be isotrivial. At a generic point
$\ast\in\mathscr{C}$ the R-symmetry is totally broken for $\varkappa=1$
while we have
a residual unbroken $\Z_2$ symmetry when $\varkappa=2$.
We define the SW periods $\Pi\equiv (a^D_i,a^j)^t$ in a small neighborhood of $\ast$,
and analytically continue them along a non-trivial loop $\ell\subset\mathring{\mathscr{C}}$
  based at $\ast$. One would expect that the analytically continued periods $\Pi^\prime$ satisfy
  \be
 \Pi^\prime = \varrho(\ell)\,\Pi.
  \ee
However when $\varkappa=2$
 the periods $\Pi$ are well defined only up to sign (cf.\! \S.\ref{kk87777})
and
 only the quotient representation
 \be
 \check{\varrho}\colon \pi_1(\mathring{\mathscr{C}})\to PSp(2r,\Z)\equiv Sp(2r,\Z)/\{\pm1\}
 \ee
has a well-defined physical meaning.

\subsubsection{Algebraic invariants of $\Gamma$: the Hodge ring}

We have a vector bundle $\mathscr{V}$ over the complement $\mathring{\mathscr{C}}$
with fiber 
\be
H_1(\mathscr{X}_u,\C)\equiv H_1(\mathscr{X}_u,\Z)\otimes_\Z\C
\ee
equipped with the flat Gauss-Manin connection. We fix a reference point $u_\star$ and write $V$ for the fiber $\mathscr{V}_{u_\star}$.
$V$ carries the representation $\varrho$ of $\Gamma$ (defined over $\mathbb{Q}$)
while\footnote{\ The isomorphism is induced by the polarization.} $V\simeq V^\vee$.
This induces representations (over $\mathbb{Q}$) of $\Gamma$ on all tensor spaces
$V^{\otimes k}$.
Since $\mathring{\mathscr{C}}$ is quasi-projective, one has:
\begin{fact}[See \cite{monopeters}]
The monodromy group $\Gamma$ is finitely generated and semisimple, i.e.\! all its finite-dimensional representations
split in direct sums of irreducible ones.
\end{fact}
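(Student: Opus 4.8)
The monodromy group $\Gamma \subset Sp(2r,\Z)$ attached to the VHS over $\mathring{\mathscr{C}}$ is finitely generated and semisimple, i.e.\ every finite-dimensional representation of $\Gamma$ decomposes as a direct sum of irreducibles; equivalently the induced representations on all tensor spaces $V^{\otimes k}$ are semisimple.

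**Proof plan.** The plan is to deduce both assertions from the Hodge-theoretic structure on $\mathring{\mathscr{C}}$, which is quasi-projective over $\C$. First I would dispose of finite generation: since $\mathring{\mathscr{C}}$ is a smooth quasi-projective variety, its fundamental group $\pi_1(\mathring{\mathscr{C}})$ is finitely generated (the underlying space has the homotopy type of a finite CW complex, e.g.\ by triangulating a smooth compactification and using that the complement of a divisor deformation-retracts onto a finite complex), and $\Gamma$ is a quotient (the image of $\varrho$), hence finitely generated as well. This is the easy half.

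For semisimplicity the plan is to invoke the semisimplicity theorem for monodromy of polarized variations of Hodge structure, in the form due to Deligne and extended by Schmid (and presented in the references \cite{monopeters}, and \S\,\textbf{13} of \cite{periods}). The key input is that $\mathring{\pi}\colon\mathring{\mathscr{X}}\to\mathring{\mathscr{C}}$ carries a polarized $\Z$-VHS of weight $1$ (the fibers being polarized abelian varieties of dimension $r$), with $V = H_1(\mathscr{X}_{u_\star},\C) \cong H^1(\mathscr{X}_{u_\star},\C)$ as the reference fiber. One shows that the Zariski closure $G$ of $\Gamma$ inside $Sp(2r,\C)$ (or $Sp(2r,\R)$) is a \emph{reductive} algebraic group: this is precisely Deligne's theorem that the algebraic monodromy group of a polarizable VHS is semisimple — the polarization form is $\Gamma$-invariant, and one uses that the Hodge filtration together with the limiting mixed Hodge structures at the boundary forces any $\Gamma$-invariant subspace of $V$ to admit a $\Gamma$-invariant complement (indeed a sub-VHS splits off its orthogonal complement under the polarization). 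Once $G$ is reductive, every finite-dimensional representation of $G$ is semisimple, and since $\Gamma$ is Zariski-dense in $G$, a subspace of any $\Gamma$-module is $\Gamma$-invariant iff it is $G$-invariant; hence every finite-dimensional $\Gamma$-representation is semisimple. Applying this to $V^{\otimes k}$ gives the tensor-space statement for free, since tensor powers of a representation of a reductive group are again semisimple.

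**Main obstacle.** The delicate point is passing from ``the standard representation $V$ underlies a polarized VHS, hence is semisimple as a $\Gamma$-module'' to ``\emph{every} finite-dimensional representation of the abstract group $\Gamma$ is semisimple.'' This is exactly the statement that the Zariski closure $G$ is reductive, not merely that the specific representation $V$ is completely reducible. The argument for reductivity of $G$ is the Hodge-theoretic heart of the matter (Deligne's semisimplicity): one must use the existence of a polarization \emph{and} the behaviour at the cusps — i.e.\ that $\mathring{\mathscr{C}}$ is quasi-projective so that Schmid's nilpotent/$SL_2$-orbit theorems apply and the limiting mixed Hodge structures are genuine — to rule out a non-trivial unipotent radical in $G$. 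Over a merely complex-analytic (non-quasi-projective) base this fails, which is why the excerpt repeatedly stresses that $\mathscr{C}$ is quasi-projective; in our situation the hypothesis is in force, so the cited theorem of \cite{monopeters} applies directly and no new work is needed beyond quoting it.
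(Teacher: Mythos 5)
The paper does not actually prove this \textbf{Fact}: it is quoted from Peters--Steenbrink \cite{monopeters}, so the ``paper's proof'' is simply the citation. Your two main ingredients --- finite generation of $\pi_1$ of a smooth quasi-projective variety (hence of its quotient $\Gamma$), and Deligne/Schmid semisimplicity for polarizable VHS over a quasi-projective base, i.e.\ reductivity (indeed semisimplicity of the identity component) of the Zariski closure $G$ of $\Gamma$ --- are exactly the content of the cited result, so up to that point your route coincides with the intended one.

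There is, however, a genuine gap in your last step. The inference ``$G$ reductive $+$ $\Gamma$ Zariski-dense $\Rightarrow$ every finite-dimensional representation of the abstract group $\Gamma$ is semisimple'' is invalid: the equivalence ``a subspace is $\Gamma$-invariant iff it is $G$-invariant'' is available only for $\Gamma$-modules that are restrictions of \emph{algebraic} $G$-modules, and an arbitrary finite-dimensional representation of the abstract group $\Gamma$ need not extend to $G$ (its image need not lie in any reductive group). The stronger abstract statement is in fact false for typical monodromy groups: whenever $\Gamma$ is virtually free --- e.g.\ commensurable with $SL(2,\Z)$, as happens for rank-$1$ geometries and for the $SL(2)$-factors in the reducible/quasi-isotrivial discussion --- pick a free normal subgroup $F\triangleleft\Gamma$ of finite index, a surjection $F\to\Z$, and the associated non-trivial unipotent $2$-dimensional representation $\rho$; then $\mathrm{Ind}_F^\Gamma\rho$ is a finite-dimensional non-semisimple representation of $\Gamma$, because its restriction to $F$ is $\bigoplus_g\rho^g$ (Mackey, $F$ normal), which has the non-semisimple $\rho$ as a direct summand, whereas restrictions of semisimple modules to finite-index subgroups stay semisimple in characteristic zero. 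What is true, what \cite{monopeters} and Deligne's theorem give, and what the paper actually uses afterwards (the Hodge ring, invariant tensors in $V^{\odot k}$), is semisimplicity of the representations arising from the VHS: $V$, its tensor spaces $V^{\otimes k}$, and more generally the restriction to $\Gamma$ of any algebraic representation of the algebraic monodromy group. You should state the conclusion in that form; the real subtlety emphasized right after the \textbf{Fact} is not this point but rationality, i.e.\ that the decompositions must be taken over $\mathbb{Q}$ rather than over $\C$.
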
 
 The subtle point about the previous statement is that we must work with representations defined
 over $\mathbb{Q}$ instead of the algebraically closed field $\C$.
 However there are some simple situations.
 \begin{fact}\label{uni2}
 Let $V=V_0\oplus V_1$ where $V_0$ is the largest $\C$-subspace where $\Gamma$ acts trivially. Then:
 \begin{itemize}
\item[{\bf(1)}] The chiral ring contains precisely $\dim V_0/2$
 elements with $\Delta=1$. 
 \item[{\bf(2)}] The special geometry is a product with one factor a free geometry of rank $\dim V_0/2$.
 \end{itemize}
 \end{fact}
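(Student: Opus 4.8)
The plan is to exploit the decomposition $V = V_0 \oplus V_1$ at the level of the variation of Hodge structure, not just the local system, and then to translate the flat trivial sub-VHS $V_0$ back into geometry via the Abelian scheme. First I would observe that $V_0$, being the maximal $\C$-subspace on which $\Gamma$ acts trivially, is automatically defined over $\mathbb{Q}$: it is the image of the projector onto the trivial isotypic component, and since the representation $\varrho$ is defined over $\mathbb{Q}$ this projector is $\mathbb{Q}$-rational; moreover it is $\varrho(\Gamma)$-invariant, hence compatible with the polarization, so $V_0$ inherits a polarized structure and $V = V_0 \oplus V_0^{\perp}$ with $V_1 = V_0^{\perp}$. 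Because the Gauss–Manin connection has trivial monodromy on $V_0$, the flat subbundle $\mathscr{V}_0 \subset \mathscr{V}$ it generates over $\mathring{\mathscr{C}}$ is a \emph{constant} subbundle.

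The key step is then to show that this flat constant piece is actually a sub-Hodge structure of weight $1$, i.e.\ that the Hodge filtration $F^1\mathscr{V}_u = H^0(\mathscr{X}_u,\Omega^1_{\mathscr{X}_u})$ respects the splitting. This is where I expect the main work to lie. The clean way is to invoke rigidity of VHS (the theorem of the fixed part / Deligne's semisimplicity, cited in the excerpt as \cite{monopeters} and the Faltings–Saito–Peters circle): a flat subbundle of a polarizable VHS over a quasi-projective base which is invariant under monodromy underlies a sub-VHS. Applied to $\mathscr{V}_0$, this gives a constant sub-VHS of weight $1$ and type $\{(1,0),(0,1)\}$, hence a fixed polarized Hodge structure $H_0$ of some dimension $2k$ with $k = \dim_\C V_0/2$. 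By Riemann's bilinear relations $H_0$ is the $H_1$ of a fixed polarized Abelian variety $A$ of dimension $k$. Correspondingly the orthogonal complement $\mathscr{V}_1$ is a sub-VHS with no flat trivial part, realizing an Abelian scheme $\mathscr{Y} \to \mathring{\mathscr{C}}$ with connected monodromy-orbit structure.

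Next I would upgrade this isogeny-splitting of the Hodge structures to an honest product decomposition of the special geometry. The Abelian scheme $\mathring{\mathscr{X}} \to \mathring{\mathscr{C}}$ is recovered from the VHS by $\mathscr{V}/F^1\mathscr{V}$ modulo the lattice $H_1(\mathscr{X}_u,\Z)$; the $\mathbb{Q}$-splitting $V = V_0 \oplus V_1$ intersected with the lattice gives a splitting up to finite index, and since we have assumed the polarization principal and may pass to the associated splitting, one gets $\mathring{\mathscr{X}} \simeq (A \times \mathring{\mathscr{C}}) \times_{\mathring{\mathscr{C}}} \mathscr{Y}$ as polarized Abelian schemes. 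The symplectic form $\Omega$ decomposes accordingly because $\Omega$ restricted to fibers is the polarization form and the splitting is orthogonal; the Euler vector $\ce$ preserves each factor since $\C^\times$ acts on $\mathscr{V}$ through $\Gamma$-equivariant maps and hence preserves the isotypic pieces. Extending over the discriminant is automatic on the $A$-factor, which is constant, so $\mathscr{X} \to \mathscr{C}$ is globally the product of the free geometry $(A \times \mathscr{C}') \to \mathscr{C}'$ of rank $k$ with a special geometry $\mathscr{Y} \to \mathscr{C}''$. This proves part \textbf{(2)}.

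Finally, part \textbf{(1)} follows from part \textbf{(2)} together with \textbf{Lemma \ref{le:zero}}: in the product decomposition the free factor $A \times \C^k$ contributes exactly $k = \dim V_0/2$ coordinates of dimension $1$ (its base coordinates), while by the argument in \S.\,\ref{s:uni} the remaining factor $\mathscr{Y}$, being indecomposable-type with no flat trivial monodromy, has $\dim \mathscr{R}_{\Delta=1} = 0$; indeed a dimension-$1$ Hamiltonian would produce, via $v(h)$, a flat section of the dual local system on which $\Gamma$ acts trivially, i.e.\ a vector in $V_0$ for the $\mathscr{Y}$-factor, contradicting $V_1$ having no trivial part. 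Hence the chiral ring contains precisely $\dim V_0/2$ elements with $\Delta = 1$. The main obstacle throughout is the second step — promoting ``flat and monodromy-trivial'' to ``sub-VHS'' and then to ``sub-Abelian-scheme'' — which rests on the theorem of the fixed part and on controlling the lattice splitting up to isogeny; everything else is bookkeeping with the $\C^\times$-action.
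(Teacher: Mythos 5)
There is a genuine gap, and it sits exactly at the step you yourself flag as the main work: the promotion of the monodromy-trivial summand $V_0$ from a constant sub-VHS (theorem of the fixed part) to an honest direct factor of the principally polarized Abelian scheme, i.e.\ $\mathring{\mathscr{X}}\simeq (A\times\mathring{\mathscr{C}})\times_{\mathring{\mathscr{C}}}\mathscr{Y}$. Your justification --- ``the $\mathbb{Q}$-splitting intersected with the lattice gives a splitting up to finite index, and since the polarization is principal we may pass to the associated splitting'' --- is not a valid principle: a principally polarized Abelian variety isogenous to a product need not be a product (genus-$2$ Jacobians with split Jacobian up to isogeny are the standard example), and a principal polarization does not restrict to a splitting along a $\mathbb{Q}$-decomposition of the lattice. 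The paper is explicitly aware of this danger: in \S.\,\ref{s:ugly} reducible monodromy is shown to give the fibers as products only \emph{up to isogeny} (``going to a finite cover we may assume the fiber is a product''), and the quasi-isotrivial geometries of \S.\,\ref{SS:quasi-isotrivial} are precisely indecomposable geometries whose fibers are isogenous, not isomorphic, to $A\times Y_u$ with $A$ fixed. So the honest product structure in part \textbf{(2)} cannot be extracted from the VHS splitting alone; it needs the special-geometric input ($\C^\times$-action, section, degeneration at the origin), which in the paper is supplied by \textbf{Lemma \ref{le:zero}}. Since your part \textbf{(1)} is deduced from \textbf{(2)}, the gap propagates to both claims. (A secondary soft spot: your argument that the $\mathscr{Y}$-factor has no $\Delta=1$ element, via $v(h)$ giving ``a flat section of the local system,'' is not quite right as stated --- the translation-invariant vector field is a holomorphic section of the Hodge bundle, not manifestly flat; the clean argument is to apply Lemma \ref{le:zero} to $\mathscr{Y}$ and invoke maximality of $V_0$.)

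For comparison, the paper's route avoids the isogeny issue entirely by proving \textbf{(1)} first and directly: project the SW period vector $\Pi_I$ onto $V_0$ with the $\mathbb{Q}$-rational projector $P$; triviality of the monodromy on $V_0$ makes ${P_I}^J\Pi_J$ \emph{single-valued} holomorphic functions on $\mathring{\mathscr{C}}$, a codimension-one regularity check plus Hartogs extends them across the discriminant, $\mathscr{L}_\ce\mspace{1mu}\Pi=\Pi$ gives them dimension $\Delta=1$, and the Legendre relation $a^D_i=\tau_{ij}a^j$ shows exactly half of them are functionally independent --- yielding precisely $\dim V_0/2$ elements of $\mathscr{R}_{\Delta=1}$. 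Part \textbf{(2)} then follows from \textbf{Lemma \ref{le:zero}} with $\ell=\dim\mathscr{R}_{\Delta=1}$, where the product structure is established from the $\C^\times$-equivariant structure of the fiber over the origin rather than from a lattice splitting. If you want to salvage your argument, you should run it in the paper's order: first manufacture the $\Delta=1$ chiral operators from the $V_0$-projected periods, then invoke the lemma.
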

\begin{proof}[Argument] Let $\{\gamma_I\}\equiv \{B^i,A_j\}$ ($I=1,\dots,2r$, $i,j=1,\dots,r$) 
be \emph{multi-valued} sections of the local system with section $H_1(\mathscr{X}_u,\Z)$
which form a symplectic basis in each fiber. The SW periods
 {\renewcommand{\arraystretch}{1.1}\be
\Pi_I(u)\overset{\rm def}{=}\oint_{\gamma_I} \iota_\ce\mspace{1mu} \Omega\big|_{\mathscr{X}_u}, \qquad
\Pi_I(u)\equiv\begin{pmatrix}a^D_i(u)\\ a^j(u)\end{pmatrix}
\ee}
\hskip-4pt are a set of $2r$ \emph{multi-valued} holomorphic sections of
$\mathscr{V}^\vee\simeq\mathscr{V}\to \mathring{\mathscr{C}}$ which span 
the fibers and satisfy
\be
\mathscr{L}_\ce\, \Pi_I= \Pi_I.
\ee 
Only half the periods are functionally independent:
indeed
$da_i^D(u)\wedge da_i(u)=0$ (cf.\! eq.\eqref{hhhhleg}) and
$a_i^D(u)=\tau_{ij}(u)\,a^j(u)$
where $\tau_{ij}(u)$ is the (multivalued) period matrix of $\mathscr{X}_u$ in the given basis $\{B^j,A_i\}$,
 Let ${P_I}^J$ be the projector $V\to V_0$ written in the basis $\{\gamma_I\}$.
By construction  ${P_I}^J a_J$ are now \emph{global} holomorphic functions on
$\mathring{\mathscr{C}}$. It is easy to check that they are regular along
 the discriminant,\footnote{\ By Hartogs theorem it suffice to check regularity in codimension 1.} and hence are defined and regular everywhere on $\mathscr{C}$. By definition, these holomorphic global functions belong to $\mathscr{R}$ and have $\Delta=1$. It is easy to see that exactly half of them are functionally independent.
 \end{proof}
 
We can generalize the story. Suppose that the symmetric $k$-th power $V^{\odot k}$
 contains a $\Gamma$-invariant tensor $T^{I_1\cdots I_k}$ (automatically defined over $\mathbb{Q}$); the function
 \be
 T^{I_1\cdots I_k}\Pi_{I_1}\Pi_{I_2}\cdots \Pi_{I_k},
 \ee
(if not zero) is an element of $\mathscr{R}$ with $\Delta=k$. The invariants of $\Gamma$ form a
subalgebra of the chiral ring $\mathscr{R}$ called the \emph{Hodge ring} $\mathscr{H}$ \cite{griffithsMT,griffithsMT2},
a fundamental invariant of special geometry. For instance, for a $\cn=4$ special geometry
$\mathscr{H}=\mathscr{R}$. More generally, for an isotrivial geometry $\mathscr{R}$
is a finitely-generated $\mathscr{H}$-module. The ring $\mathscr{H}$ is non-trivial for
all quasi-isotrivial geometries as well as for the ones having a non-generic Mumford-Tate group.

More generally, if we have an invariant tensor $T^{I_1\dots I_k;J_1\dots J_l}\in V^{\odot k}\otimes V^{\wedge l}$ we can form the global $l$-form of scaling dimension $k+l$
\be
T^{I_1\dots I_k;J_1\dots J_l}\,\Pi_{I_1}\cdots\Pi_{I_k}\,d\Pi_{J_1}\wedge\cdots\wedge d\Pi_{J_l}
\ee
When $T^{J_1J_2}\equiv\Omega^{J_1J_2}$ is the polarization, the corresponding invariant 2-form is trivial, $\Omega^{IJ} d\Pi_I\wedge d\Pi_J=0$, in facts we have the stronger condition $\Omega^{IJ}\Pi_I\wedge d\Pi_J=0$.

\subsubsection{\begin{scriptsize}\textdbend\end{scriptsize} The reducible case: subtleties} \label{s:ugly}
We consider now the case of a monodromy representation $V$ which decomposes over $\mathbb{Q}$ as 
$V_1\oplus V_2$ where the summands are non-trivial and non-isomorphic.

We consider two different scenarios. In the fist one the image
of $\Gamma$ in $GL(V_s,\mathbb{Q})$ is infinite for $s=1,2$.
Working in a sufficiently small domain $U\subset\mathring{\mathscr{C}}$,
we  define the SW periods and use half of them as special coordinates in $U$.
Taking $\mathbb{Q}$-linear combinations, we may split the periods in 
two sets, $\Pi^{(1)}_I$ and $\Pi^{(2)}_J$ taking value respectively in $V_1$ and $V_2$.
 When we go around a non-trivial path in $\mathring{\mathscr{C}}$
and return back to $U$, each $\mathbb{Q}$-period gets transformed
into linear combinations of $\mathbb{Q}$-periods from the same set. The two sets never mix,
and the splitting of the local special coordinates in two sets $a_{(1)}^i$ and $a_{(2)}^j$
is globally defined. 
The cotangent bundle of $\mathscr{C}$ splits into the subbundle spanned by the
differentials $da_{(1)}^i$ and the one spanned by the $da_{(2)}^j$'s.
 Indeed,
 up to isogeny, the fibers $\mathscr{X}_u$
split into a product $A^{(1)}_{u}\times A^{(2)}_{u}$ of Abelian varieties with $H_1(A_{u}^{(s)},\mathbb{Q})\simeq V_s$. Going to a finite cover we may assume the fiber is a product.
We have
\be\label{treqw12}
T_0A^{(1)}_u\oplus T_0A^{(2)}_u\simeq T_u^*\mathscr{C}.
\ee
The decomposition \eqref{treqw12} defines two complementary integrable distributions
whose leaves have local equations in $U$ of the form 
\be
a_{(s)}^i=\mathrm{const.}\quad s=1,2,
\ee
 so that (restricting $U$ if necessary) \emph{locally}
the Coulomb branch can be written as $U= U_{(1)}\times U_{(2)}$
with $a_{(s)}^i$ local coordinates on the factor $U_{(s)}$ ($s=1,2$).
The local pre-potential is a sum
\be
\cf(a_{(1)}^i,a_{(2)}^j)=\cf_{(1)}(a_{(1)}^i)+\cf_{(2)}(a_{(2)}^j)
\ee
where the period matrix of $A_{u}^{(s)}$ is given by $\partial^2 \cf_{(s)}(a_{(s)})\,\partial a^i_{(s)}\partial a^j_{(s)}$,
and the analytic continuation of $\cf(a_{(1)}^i,a_{(2)}^j)$ (as a \emph{multivalued}
holomorphic function in $\mathring{\mathscr{C}}$) is the sum of the analytic continuations of the
two multivalued functions in the \textsc{rhs}.

Thus \emph{locally} in $U$ we have a product of two special geometries
with Coulomb branches  $\mathscr{C}^{(s)}$ and fibers $A^{(s)}$ ($s=1,2$).
This local product structure is preserved by the monodromy so it is a kind of ``global''
feature.

At first sight it may seem that we have got a decomposable geometry. \textit{Is this really so?}
Not quite. A decomposable geometry has some peculiar properties which are not implied by the above discussion of \emph{reducible} monodromy.

\subparagraph{The $r=2$ situation.}
For definiteness we discuss the case $r=2$ which already presents all essential phenomena. All decomposable $r=2$ geometries are isotrivial
and their discriminant is contained in the union of the two axes $H_1\cup H_2$ (in particular, it is  normal crossing). Neither properties
follow from reducibility of the monodromy representation. Indeed the geometry cannot be isotrivial
and the discriminant cannot be normal crossing
when $\Gamma$ is infinite (as we are assuming). 
The point is that while the generic fiber is indeed
the product of two elliptic curves $E_{\tau_{(1)}}\times E_{\tau_{(2)}}$, the periods of these curves need not to be constant: we  have just $\tau_{(s)}=\partial^2\cf_{(s)}(a_{(s)})/\partial^2 a_{(s)}$ for $s=1,2$. The only conclusion that we may draw is that the monodromy representation factors through
\be
SL(2,\mathbb{Q})\times SL(2,\mathbb{Q})\hookrightarrow Sp(4,\mathbb{Q})
\ee 
while (by assumption) its image is infinite in both factors, in fact the $\mathbb{Q}$-Zariski closure of the image in each factor is the full  $SL(2,\mathbb{Q})$ modulo finite groups.
In particular there are no invariant symmetric elements in the tensor algebra $T^{\bullet,\bullet}V_s$ of each
summand $V_s$,
so while the splitting of the periods in the two sets $\Pi^{(1)}$ and $\Pi^{(2)}$ is
globally defined, we cannot use each set of special coordinates \emph{separately}
 to define global functions on $\mathscr{C}$
i.e.\! elements in $\mathscr{R}$ which may play the role of coordinates $u_i$ in the factor Coulomb branches of a decomposable geometry. Therefore we expect that the Coulomb coordinates $u_1,u_2$
are non-trivial functions of both $a_{(1)}$ and $a_{(2)}$. However, in order for the functions $u_i=u(a_{(1)},a_{(2)})$, initially defined locally in $U$, to have a \emph{univalued} analytic continuation everywhere in
$\mathring{\mathscr{C}}$, the monodromy transformations of the two summands should be correlated,
and given the Zariski density of $\Gamma$ the only possibility which comes to mind is a further factorization
\be
\pi_1(\mathring{\mathscr{C}})\to SL(2,\mathbb{Q})\xrightarrow{\rm\ diag\ } SL(2,\mathbb{Q})\times SL(2,\mathbb{Q}) \hookrightarrow Sp(4,\mathbb{Q}).
\ee
If this was the case, we would have an element of $\mathscr{R}$ with $\Delta=2$
given by $\epsilon^{ij} \Pi^{(1)}_i \Pi^{(2)}_j$.

In the second scenario (for $r=2$) the image of $\Gamma$ in the first $SL(2,\mathbb{Q})$ factor
is infinite but is finite in the second one. (If both images are finite the geometry is isotrivial, see \S.\,\ref{s:isotrivial}).
When pulled back to a finite cover $\mathscr{C}^\sharp$ of $\mathring{\mathscr{C}}$
 the family takes the form
 \be
 E\times\mathscr{Y}\to  \mathscr{C}^\sharp
 \ee
 with $E$ a constant elliptic curve and $\mathscr{Y}\to  \mathscr{C}^\sharp$ a non-isotrivial
 family of elliptic curves. This is what we call a \emph{quasi-isotrivial} situation. If such
 a special geometry exists, it cannot be decomposable. However it looks pretty hard to cook up
 such a quasi-isotrivial family in a way which is consistent with all the other required structures
 to produce a \emph{regular} $\C^\times$-isoinvariant special geometry. A quasi-isotrivial geometry is a highly over-constrained
 problem which may have a solution only under very specific conditions: peculiar Coulomb dimensions,
 specific form of the discriminant, etc. \emph{A priori} no one will bet a cent on their existence.
 However, wait for \S\S.\,\ref{reddd},\,\ref{s:knott},\,\ref{puzzle:s} and see.

This discussion motivates the following

\begin{defn} An indecomposable geometry is \emph{reducible} iff it is not isotrivial
and its underlying monodromy representation is reducible. 
\end{defn}

Later we shall show that

\begin{fact}\label{ugly} An indecomposable but reducible geometry is a very rare, bizzarre and ugly animal.
They exist only for some special ``troublesome'' dimension $r$-tuples $\{\Delta_1,\dots,\Delta_r\}$.
In rank $2$ they are either quasi-isotrivial or have a dimension $2$ Coulomb coordinate.
\end{fact}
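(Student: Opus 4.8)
The plan is to reduce the statement, via the semisimplicity of monodromy of polarizable variations of Hodge structure over a quasi-projective base (the \textbf{Fact} recalled above, cf.\ \cite{monopeters}), to a short trichotomy. Reducibility of $V$ for an indecomposable geometry means $V=\bigoplus_s V_s$ with each $V_s$ a non-trivial $\mathbb{Q}$-irreducible summand underlying a polarized weight-$1$ sub-VHS; hence the smooth fiber $\mathscr{X}_u$ is isogeneous to a product $\prod_s A_s$ of polarized Abelian varieties with $H_1(A_s,\mathbb{Q})\simeq V_s$. Since a weight-$1$ Hodge structure is even-dimensional, in rank $2$ the only option is $V=V_1\oplus V_2$ with $\dim V_1=\dim V_2=2$ and $\mathscr{X}_u$ isogeneous to $E_{(1),u}\times E_{(2),u}$ for two families of elliptic curves with periods $\tau_{(s)}(u)=\cf_{(s)}''(a_{(s)})$ in the notation of \S\ref{s:ugly}. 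Writing $\Gamma_s\subset SL(2,\mathbb{Q})$ for the image of $\varrho(\pi_1(\mathring{\mathscr{C}}))$ on $V_s$, the three cases are: both finite, exactly one finite, or both infinite.

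If both $\Gamma_s$ are finite, each $\tau_{(s)}$ is constant --- a weight-$1$ VHS with finite monodromy over the complement of a divisor in affine space has constant period, \S\ref{s:isotrivial} --- so the geometry is isotrivial, contradicting the definition of ``reducible'' (\S\ref{s:ugly}); this case is empty. If exactly one, say $\Gamma_2$, is finite, pass to the finite cover $\mathscr{C}^\sharp\to\mathring{\mathscr{C}}$ trivialising the monodromy on $V_2$; there $E_{(2)}$ becomes a constant elliptic curve $E$, so the family is $E\times\mathscr{Y}\to\mathscr{C}^\sharp$ with $\mathscr{Y}$ non-isotrivial because $\Gamma_1$ is infinite, i.e.\ $\mathscr{X}_u$ is isogeneous to $E\times\mathscr{Y}_u$ with $E$ fixed and $\mathscr{Y}_u$ varying --- a quasi-isotrivial geometry, which is one of the two alternatives asserted.

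The substance is the case in which both $\Gamma_s$ are infinite, so each is $\mathbb{Q}$-Zariski dense in $SL(2)$. The $\mathbb{Q}$-splitting $V=V_1\oplus V_2$ globally splits the SW periods $\Pi=\Pi^{(1)}\oplus\Pi^{(2)}$ and $T^*\mathscr{C}$ into the lines spanned by $da_{(1)}$ and $da_{(2)}$, with the local product structure and split prepotential $\cf=\cf_{(1)}(a_{(1)})+\cf_{(2)}(a_{(2)})$ of \S\ref{s:ugly}. If the Coulomb coordinates $u_1,u_2$ could be chosen adapted to this splitting (each a function of a single period set) the chiral ring would be a tensor product and the geometry decomposable, against hypothesis; so indecomposability forces the $u_i$ to mix $\Pi^{(1)}$ and $\Pi^{(2)}$. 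The plan is then to show that single-valuedness of $u_i=u_i(a_{(1)},a_{(2)})$ on $\mathring{\mathscr{C}}$ forces the monodromies of the two summands to be correlated through a common $SL(2)$: since each projection of $\varrho(\pi_1(\mathring{\mathscr{C}}))$ onto $SL(2,\mathbb{Q})$ is Zariski-dense, Goursat's lemma applied to the Zariski closure inside $SL(2)\times SL(2)$, together with the fact that every $\mathbb{Q}$-isomorphism $SL(2,\mathbb{Q})\to SL(2,\mathbb{Q})$ is a conjugation, forces (after a $\mathbb{Q}$-rational change of symplectic basis) the diagonal factorization $\pi_1(\mathring{\mathscr{C}})\to SL(2,\mathbb{Q})\xrightarrow{\rm\ diag\ }SL(2,\mathbb{Q})\times SL(2,\mathbb{Q})\hookrightarrow Sp(4,\mathbb{Q})$ and the isomorphism $V_1\simeq V_2$ of $\mathbb{Q}[\pi_1]$-modules. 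The symplectic pairing $\epsilon^{ij}$ implementing this isomorphism is a $\Gamma$-invariant element of $V^{\odot2}$, so $h\equiv\epsilon^{ij}\Pi^{(1)}_i\Pi^{(2)}_j$ is a single-valued holomorphic function with $\mathscr{L}_\ce h=2h$ which extends regularly across $\cd$ by the regularity-of-invariant-periods argument in the proof of Fact \ref{uni2}, hence lies in $\mathscr{R}$ with dimension $2$ --- provided $h\not\equiv 0$. But $h\equiv0$ would give $\tau_{(1)}(u)\equiv\tau_{(2)}(u)$, i.e.\ $\cf_{(1)}''(a_{(1)})=\cf_{(2)}''(a_{(2)})$ identically in two independent variables, forcing both constant and the geometry isotrivial, again excluded. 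Thus $h$ is a genuine dimension-$2$ element of $\mathscr{R}$; the unitarity bound of \S\ref{s:uni} together with indecomposability forces $\Delta_i>1$, and a dimension-$2$ element of the rank-$2$ ring $\mathscr{R}=\C[u_1,u_2]$ is then $\C$-linear in the $u_i$, so some $\Delta_i=2$. For general $r$ one runs the same dichotomy on $V=\bigoplus_s V_s$: finite-monodromy factors yield constant Abelian summands (quasi-isotrivial degeneracy) while the Zariski-dense factors must be tied together by diagonal-type identifications, each producing a low-degree element of the Hodge ring $\mathscr{H}\subset\mathscr{R}$ and therefore arithmetic coincidences among the $\Delta_i$; the surviving Coulomb $r$-tuples are the ``troublesome'' ones.

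The main obstacle is precisely the step that upgrades the ``only possibility that comes to mind'' of \S\ref{s:ugly} into a theorem: one must rule out, in the both-infinite case, that the $u_i$ are transcendentally related to the periods in a way that stays single-valued without any diagonal correlation --- equivalently, one must prove that the two monodromy-invariant foliations $\{a_{(s)}=\mathrm{const}\}$ of $\mathring{\mathscr{C}}$ cannot both be algebraic fibrations unless $V_1\simeq V_2$ or one $\Gamma_s$ is finite. This is where the interplay between the transcendental period map and the requirement that $\mathscr{R}$ be a finitely generated $\C$-algebra enters, and it is the only part of the argument that is not routine once the reductions above are in place; everything else is bookkeeping with the unitarity bound, Deligne semisimplicity, and the Hartogs-type extension across the discriminant.
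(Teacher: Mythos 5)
Your reduction --- Deligne semisimplicity, a trichotomy on the finiteness of the two factor monodromies, the quasi-isotrivial conclusion when exactly one factor is finite, and the extraction of a $\Delta=2$ chiral operator from the invariant pairing in the paired/diagonal case --- runs parallel to the paper's own case split in \S.\,\ref{reddd}. The genuine gap is the case you yourself flag as the ``main obstacle'': both factors with infinite (Zariski-dense) image and $V_1\not\simeq V_2$ as monodromy modules. Your plan (single-valuedness of the Coulomb coordinates $u_i(a_{(1)},a_{(2)})$ plus Goursat's lemma forcing the diagonal embedding into $SL(2,\mathbb{Q})\times SL(2,\mathbb{Q})$) is never carried out, and its premise is itself only heuristic: that indecomposability forces the $u_i$ to mix the two period sets is stated in \S.\,\ref{s:ugly} as an expectation, not a theorem, and nothing in your argument excludes an uncorrelated pair of dense factors. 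So the precise rank-2 assertion of the Fact is not established by your proposal.

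The paper closes exactly this case by a different mechanism: rigidity rather than Goursat. If the two summands are not paired by $\Omega$, there is no invariant symmetric 2-tensor, hence no $\Delta=2$ element of $\mathscr{R}$, hence the geometry is rigid; Arakelov--Faltings rigidity of the underlying Abelian family (\S.\,\ref{s:falting}) then forces a rigid non-isotrivial rank-2 geometry to have exactly three special points on $\mathscr{P}\simeq\mathbb{P}^1$. Since neither Coulomb dimension is $1$ or $2$, both $d_i\neq1$ and two of these points are the images of the coordinate axes; at the single remaining point the monodromy is generically non-trivial in only one $SL(2,\mathbb{Q})$ factor, so the other elliptic fibration has only two singular fibers with mutually inverse monodromies and is isotrivial. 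Hence either one is back in the paired/diagonal case (a dimension-2 Coulomb coordinate, non-rigid, Lagrangian by the \textbf{Folk-theorem}) or the geometry is quasi-isotrivial --- precisely the dichotomy your Goursat step was meant to deliver. To repair your write-up, replace that step by this rigidity-plus-special-point count; your treatment of the one-finite case and of the non-vanishing of $h=\epsilon^{ij}\Pi^{(1)}_i\Pi^{(2)}_j$ can stay as is.
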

%

 \subsection{Singular fibers along the discriminant}\label{s:sing}

\begin{table}
\caption{\label{table2} Non-semisimple Kodaira fibers. Here $n\geq1$}\vskip9pt
\centering
{\begin{small}\renewcommand{\arraystretch}{1.5}
\begin{tabular}{p{3cm}p{1.5cm}p{1.8cm}}
\hline\noalign{\smallskip}
Kodaira fiber & $\mathbf{I}_n$ & $\mathbf{I}_n^*$ \\
\noalign{\smallskip}\hline\noalign{\smallskip}
monodromy & $\left[\begin{smallmatrix}1 & n\\
0 & 1\end{smallmatrix}\right]$ & $\left[\begin{smallmatrix}-1 & -n\\
0 & -1\end{smallmatrix}\right]$\\
Euler number & $n$ & $n+6$ \\ 
Lie algebra $\mathfrak{g}$ & $\mathfrak{su}(n)$ & $\mathfrak{so}(2n+8)$ \\
\noalign{\smallskip}\hline\noalign{\smallskip}
\end{tabular}\end{small}}
\end{table}

In this technical subsection we quickly sketch the structure of singular fibers as
described in the math literature \cite{oguiso1,oguiso2,oguiso3,jap1,jap2,sawon}
and reviewed in \cite{Cecotti:2023mlc}.
We stress that our assumption that there is a section $s\colon\mathscr{C}\to\mathscr{X}$ implies that only simple singular fibers may be present, so the phenomena associated with more general kinds of fibers are not present in this restricted setting (for ``special geometries'' with multiple fibers see \cite{Cecotti:2023mlc}).

In rank $1$ the classification of the singular elliptic fibers is given by Kodaira \cite{koda1,koda2,koda3,bhm};
the local monodromies of
 \emph{simple} exceptional elliptic fibers
with semisimple monodromy are listed in table \ref{table}\footnote{\ In the following we omit type $I_0$ which is non-singular.} and the ones with non-semisimple
monodromy in table \ref{table2}. 
In rank-$r$ the possible conjugacy classes of the local monodromy around a discriminant component $\cd_a$ are again classified in Kodaira types but new phenomena appear. 
The local monodromy is not sufficient to determine the singular fiber in general \cite{oguiso1,oguiso2,oguiso3,jap1,jap2,sawon}.

\subsubsection{Generalities}

We stress that the following analysis applies under \emph{genericity}
assumptions. For the highly non-generic situations in \textbf{Fact \ref{ugly}}
one needs to introduce the appropriate modifications.
\medskip

We consider a generic point $u$ in the $i$-th discriminant component.
The fiber $\mathscr{X}_u$ is a union of components $F_{u,a}$; the
normalization of each component $F_{u,a}^\text{nor}$ is a fibration over an Abelian variety $A_u$ of dimension
$(r-1)$ (the Albanese variety of the fiber)
which is the same for all components. For \emph{simple} fibers 
(i.e.\! fibers of multiplicity 1) the fiber of $F_{u,a}^\text{nor}\to A_u$ is
a copy of $\mathbb{P}^1$.  
  However it is not always true
that $\mathscr{X}_u\to A_u$ is globally a fibration: two components
$F_{u,a}$, $F_{u,b}$
are glued together by identifying a section $\simeq A_u$ of the first component with a section
$\simeq A_u$
of the second one, but the identification does not necessarily carry the zero to the zero.
When the fiber is not simply-connected this may result in an obstruction to the fiber being
a global fibration over $A_u$.
One may classify the fibers type in terms of the \emph{characteristic cycle}:
i.e.\! a maximal connected sequence of fibers.
The possible (simple) characteristic cycles are classified by the Kodaira types
plus an extra type $I_\infty$ \cite{oguiso1,oguiso2,oguiso3}. 
If the type is semi-simple ($I_0^*$, $II$, $II^*$, $III$, $III^*$, $IV$, and $IV^*$)
we have a fibration over $A_u$ with fiber the Kodaira singularity.
In these cases by some blow-up/down and a finite cover we can
get a fibration with no singular fiber (as we see from the period map) \cite{oguiso2,oguiso3}.
These same operations allow to reduce the case $I^*_b$ to the case
$I_{2b}$.  The case $I_b$ (including $b=\infty$) is the most subtle one,
 but (at least when the
polarization is principal) it is described in great detail in \cite{oguiso3}. The rough
picture, is as follows: each irreducible component $F_{u,a}$
is a fibration over $A_u$ with fiber a $\mathbb{P}^1$.
Two components $F_{u,a}$, $F_{u,b}$ cross along finitely many sections.
In facts their intersection matrix $F_{u,a}\# F_{u,b}$ (suitably defined)
 is equal to the 
intersection matrix $\Theta_a\cdot \Theta_b$ of the components a Kodaira fiber $K$.
If $K$ has type $I_b$, the type of the characteristic cycle
can be $I_{m b}$ where $m$ is a positive integer or
$\infty$ (in this case the characteristic cycle warps $m$ times around the fiber).
When the fiber is simply-connected, $K$ coincides with the type of the characteristic 
cycle.

We stress that, while the type $K$ is locally constant as we move the general point $u$
in the discriminant locus, the type of the characteristic cycle may change abruptly:
as shown in the examples of \cite{oguiso3}, points whose
fibers are of type $I_\infty$ or type $I_k$ ($k$ finite) may be both dense in the discriminant.\footnote{\ See \textbf{Proposition 5.3} of \cite{oguiso3}.} The physical
meaning of the characteristic cycle was discussed in \cite{Cecotti:2023mlc}; we refer the interested reader to that paper.

The type $K$ controls the local monodromy $\varrho_a$ around the $a$-th divisor component.
It is clear from the above discussion that $\varrho_a$ is conjugate in $Sp(2r,\mathbb{Q})$
to the direct sum of the Kodaira monodromy of type $K$, $\varrho_K\in SL(2,\Z)$, and the 
$(2r-2)\times (2r-2)$ identity matrix. In particular, its characteristic polynomial
has the form
\be
\det[z-\varrho_a]=(z-1)^{2(r-1)} P_K(z)
\ee       
where $P_K(z)$ is the characteristic polynomial of $\varrho_K$. Notice that the conjugacy class
in $Sp(2r,\mathbb{Q})$ distinguishes types $II$, $III$, $IV$ from the ones with  the
same characteristic polynomial $II^*$, $III^*$, $IV^*$ (respectively). A priori we would have an ambiguity
for types $I^*_b$ and $I_b$. However this is precisely the case where the detailed analysis of \cite{oguiso2,oguiso3}
applies, and the relation of the monodromy class to the fiber geometry is established:

\begin{pro}[See \cite{oguiso3}] Assume principal polarization.
The monodromy $\varrho_a$ around a semi-stable divisor $\cd_a$
(characteristic cycle of type $I_b$) is conjugate over in $Sp(2r,\Z)$ to
{\renewcommand{\arraystretch}{1.1}\be\label{kkii7612}
\begin{pmatrix} 
1 & \ell \\
0 & 1
\end{pmatrix}\oplus\boldsymbol{1}_{r-1}
\ee}
where $\ell$ is the number of components of the fiber. As a consequence if $\varrho_a$ is a local monodromy around
a divisor of type $I^*_b$, $\varrho_a^2$ is conjugate over $Sp(2r,\Z)$ to the matrix \eqref{kkii7612}
with $\ell=2b$.  
\end{pro}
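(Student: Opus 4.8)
The plan is to localise transversally to $\cd_a$, reducing to a one-parameter semi-stable degeneration, and then to determine the $Sp(2r,\Z)$-conjugacy class of the local monodromy by elementary symplectic lattice theory; the only non-formal input will be the identification of the resulting twist coefficient with the number of components, for which I appeal to the fine structure theory of \cite{oguiso2,oguiso3}. First I would fix a generic point $u_\star\in\cd_a$ and a small polydisc in $\mathscr{C}$ with one coordinate $t$ transverse to $\cd_a$ at $u_\star$ and the remaining $r-1$ coordinates running inside $\cd_a$. Since the singular fibre, its characteristic cycle, and the Albanese $A_{u_\star}$ are all locally constant along $\cd_a$ near a generic point, $\varrho_a$ equals the monodromy of the resulting one-parameter degeneration $\mathscr{Y}$ of principally polarized $r$-dimensional abelian varieties whose central fibre is $\mathscr{X}_{u_\star}$, with characteristic cycle of type $I_b$.

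As recalled in \S.\,\ref{s:sing}, a type-$I_b$ characteristic cycle has unipotent local monodromy, and $\varrho_a$ is $Sp(2r,\mathbb{Q})$-conjugate to $\varrho_K\oplus\mathbf{1}_{2r-2}$ with $\varrho_K\in SL(2,\Z)$ a single unipotent $2\times2$ Jordan block; hence $N\equiv\varrho_a-\mathbf{1}$ satisfies $N^2=0$ and $\mathrm{rank}\,N=1$. The point is that, over $\Z$, this almost forces the shape of $\varrho_a$: let $v$ generate the rank-$1$ lattice $\mathrm{im}\,N$, and suppose $v$ is primitive in $H_1(\mathscr{X}_u,\Z)$. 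Because the polarization is principal, the symplectic form on $H_1(\mathscr{X}_u,\Z)$ is unimodular, so there is $w$ with $\langle v,w\rangle=1$; then $Nw=c\,v$ for a unique integer $c$, which we normalise to be positive by choosing signs, and $Nv=0$, so $\varrho_a v=v$ and $\varrho_a w=w+c\,v$. The rank-$2$ lattice $\Lambda\equiv\Z v\oplus\Z w$ is unimodular and $\varrho_a$-stable, its orthogonal complement $\Lambda^\perp$ (which one checks to be $H_1(A_{u_\star},\Z)$) is a unimodular symplectic lattice of rank $2(r-1)$ and also $\varrho_a$-stable, and $N(\Lambda^\perp)\subseteq\mathrm{im}\,N\cap\Lambda^\perp=0$, so $\varrho_a$ is the identity on $\Lambda^\perp$. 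In the basis $(v,w)$ of $\Lambda$ together with a symplectic basis of $\Lambda^\perp$, this exhibits $\varrho_a$ in $Sp(2r,\Z)$ as $\left(\begin{smallmatrix}1&c\\0&1\end{smallmatrix}\right)\oplus\mathbf{1}_{r-1}$.

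Two geometric inputs remain, and the second is the main obstacle. First, that the vanishing cycle $v$ is primitive: this is a general feature of one-parameter semi-stable degenerations, the vanishing cycles spanning a primitive isotropic sublattice. Second, that the twist coefficient $c$ equals the number $\ell$ of irreducible components of $\mathscr{X}_{u_\star}$. This is exactly what Oguiso's detailed treatment of the $I_b$ case provides: after harmless birational modifications the intersection pairing of the components $F_{u_\star,a}$ is the affine Cartan matrix of $\widetilde{A}_{\ell-1}$, the same as for a Kodaira $I_\ell$ fibre, and the rank-$1$ Picard--Lefschetz formula for such a cycle of $\ell$ rational curves then yields $c=\ell$; see \textbf{Proposition 5.3} of \cite{oguiso3} and its surroundings. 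Thus the first assertion is essentially a transcription of \cite{oguiso3} into symplectic-monodromy language, and modulo that citation the argument above is complete; I expect the genuine work to be only in verifying that Oguiso's hypotheses are met at a generic point of an arbitrary semi-stable $\cd_a$.

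For the consequence about $I_b^*$, recall from \S.\,\ref{s:sing} that the quadratic base change $t\mapsto t^2$ converts an $I_b^*$ characteristic cycle into a semi-stable one of type $I_{2b}$, and that the monodromy of the base-changed family is precisely $\varrho_a^2$. Applying the first part to this family, $\varrho_a^2$ is $Sp(2r,\Z)$-conjugate to $\left(\begin{smallmatrix}1&\ell\\0&1\end{smallmatrix}\right)\oplus\mathbf{1}_{r-1}$ with $\ell$ the number of components of the $I_{2b}$ fibre, i.e.\! $\ell=2b$. As a sanity check, in rank $1$ the table gives $\left(\begin{smallmatrix}-1&-b\\0&-1\end{smallmatrix}\right)^{2}=\left(\begin{smallmatrix}1&2b\\0&1\end{smallmatrix}\right)$, consistent with Table \ref{table2}.
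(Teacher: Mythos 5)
Your proposal is correct in outline, but note that the paper itself contains no argument for this Proposition: it is quoted from Hwang--Oguiso, and the citation to \cite{oguiso3} is the whole of its ``proof''. What you add is the elementary symplectic normal-form step: granting (from the structure of the singular fibre) that $N=\varrho_a-\boldsymbol{1}$ has rank one with $N^2=0$, the symplectic condition forces $Nx=c\,\langle v,x\rangle\,v$ for a primitive class $v$ and an integer $c$ (integrality of $c$ uses unimodularity of the principal polarization), and splitting off the unimodular plane $\Z v\oplus\Z w$ exhibits $\varrho_a$ in the stated block form with $\ell$ replaced by $c$. This cleanly isolates the genuinely geometric content --- the identification of the divisibility $c$ with the number $\ell$ of fibre components --- which you then delegate to \cite{oguiso3}, i.e.\ to exactly the source the paper leans on; likewise your quadratic base-change argument for the $I^*_b$ consequence is the $m=2$ reduction already recorded in \S.\,\ref{s:stable}, so the two routes agree on where the real work lives, and your lattice argument is a useful supplement rather than a divergence. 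One small slip: read literally, ``let $v$ generate $\mathrm{im}\,N$ and suppose $v$ is primitive'' is self-contradictory whenever $\ell>1$, since for an $I_\ell$-type twist one has $\mathrm{im}\,N=\ell\,\Z v$ with $v$ the primitive vanishing-cycle class; what you want is to take $v$ to be the primitive generator of the saturation of $\mathrm{im}\,N$ and define $c$ by $Nw=c\,v$, so that primitivity is a choice of normalization rather than a geometric input, and the only substantive input remaining is $c=\ell$.
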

The geometry also depends on the characteristic cycle, see \cite{oguiso1,oguiso2,oguiso3}
 or \cite{Cecotti:2023mlc} for details. 

\subsubsection{More details on the stable reduction}\label{s:stable}

This subsection may be skipped in a first reading.
Having constructed a putative special geometry, we have to check that it is a genuine one, that is,
analytically \emph{regular.} One important step is to check its behavior along the discriminant which is the locus where things may go wrong. This is done by comparing the near discriminant geometry with the model local geometry which describes the singular fiber type at the generic point in the discriminant. For the present purposes
the important invariant of the singular fiber is its local monodromy class. As always we assume
principal polarization. The following local models are borrowed from refs.\!\cite{oguiso1,oguiso2,oguiso3}.

\subparagraph{A. Transformation of the fiber type.}
Just as in the Kodaira situation (rank-1), locally in 
a neighborhood of $y\in\cd_i$ 
we can always reduce to the semi-stable
case by a sequence of blow up/down of components of the fiber
 together with
a (branched) cyclic base change, that is, by making $\tilde z\to \tilde z^{\mspace{1mu}m}=z$, where $z$
is the local coordinate in $\mathscr{C}$ such that $\cd_i$ is given by $z=0$ and
$m$ is the order of the semi-simple part of the monodromy around $\cd_i$:
{\renewcommand{\arraystretch}{1.4}\be
\begin{tabular}{c@{\hskip30pt}c@{\hskip30pt}c}\hline\hline
model fiber & $m$ & semi-simple reduction\\\hline
$I^*_n$, $n\geq1$ & $2$ & $I_{2n}$\\
$II$, $II^*$ & $6$ & $E(e^{2\pi i/3})$\\
$III$, $III^*$ & $4$ & $E(i)$\\
$IV$, $IV^*$ & $3$ & $E(e^{2\pi i/3})$\\\hline\hline
\end{tabular}
\ee}
\hskip-4.5pt where $E(\tau)$ stands for the elliptic curve of period $\tau$,
and fiber type $E(\tau)$ means that the fiber over the point $\tilde y$
covering $y$ has the form
\be 
E(\tau)\times A
\ee 
for an Abelian variety $A$ of dimension $r-1$.

\subparagraph{B. Transformation of the symplectic form.}
$y$ is a generic point in a  component $\cd_i$ of the discriminant, and
$U\ni y$ a sufficiently small neighborhood. Let $\tilde U\to U$ be the branched cover
of the (local) semi-stable reduction. We write $V=\pi^{-1}(U)$ and $\tilde V=\tilde\pi^{-1}(\tilde U)$,
where tilded symbols refer to the semi-stable reduction.
We can find local coordinates $(x_i,y^i)$ in $V$ (resp.\! $(\tilde x_i, \tilde y^i)$ in $\tilde V$)
such that
\begin{itemize}
\item $x_i$, $\tilde x_i$ are coordinates in the bases $U$ and $\tilde U$;
\item the local equation of $\cd_i$ is $x_r=\tilde x_r=0$.
\end{itemize}
There are three cases \cite{oguiso2,oguiso3}:

\textbf{(I) Type $I_k^*$, $k\geq1$.}
In this case
\be
x_r=\tilde x_r^2,\quad y^r=\frac{\tilde y^r}{\tilde x_r},\qquad  x_i=\tilde x_i\quad y^i=\tilde y^i\ \ \text{for } i=1,\dots,r-1
\ee
\be
\sum_i dx_i\wedge dy^i =\sum_{k=1}^{r-1} d\tilde x_i\wedge d\tilde y^i+2\, d\tilde x_r\wedge d\tilde y^r
\ee

\textbf{(II) Types $I_0^*$, $II^*$, $III^*$ and $IV^*$.} One has
 \be\label{type*}
x_r=\tilde x_r^m,\quad y^r=\frac{\tilde y^r}{\tilde x_r^{m-1}},\qquad  x_i=\tilde x_i\quad y^i=\tilde y^i\ \ \text{for } i=1,\dots,r-1
\ee
\be
\sum_i dx_i\wedge dy^i =\sum_{k=1}^{r-1} d\tilde x_i\wedge d\tilde y^i+m\, d\tilde x_r\wedge d\tilde y^r
\ee
where $m=2,6,4,3$, respectively.

\textbf{(III) Types $II$, $III$ and $IV$.} One has
 \be
x_r=\tilde x_r^m,\quad y^r=\frac{\tilde y^r}{\tilde x_r},\qquad  x_i=\tilde x_i\quad y^i=\tilde y^i\ \ \text{for } i=1,\dots,r-1
\ee
\be
\sum_i dx_i\wedge dy^i =\sum_{k=1}^{r-1} d\tilde x_i\wedge d\tilde y^i+m\, \tilde x^{m-2}\,
d\tilde x_r\wedge d\tilde y^r
\ee
where $m=6,4,3$, respectively.

The monodromy in the cover geometry as $\tilde x_r\to e^{2\pi i}\tilde x_r$ is equal to the monodromy
in the original space as $x_r\to e^{2\pi i m}x_r$ i.e.\! to the $m$-th power of the original monodromy
which is trivial for all semisimple types and unipotent of type $I_{2n}$ in the $I^*_n$ case. 

\subparagraph{C. Local pre-potential in the semi-stable case.}

We consider a (sufficiently small) neighborhood $U$ of a generic point $x$ on a semi-stable divisor $\cd_i$.
Locally we can always reduce to this case.

\begin{pro}[\!\cite{oguiso3}] In $U$ we can find special coordinates $(a^1,\dots, a^r)$ such that
\begin{itemize}
\item the local equation of $\cd_i$ is $a^r=0$;
\item the prepotential has the form
\be\label{juqweraa}
\cf(a^1,\cdots, a^r)=\tilde \cf(a^1,\cdots, a^r)+\frac{\ell}{4\pi i}\,(a^r)^2\,\log a^r
\ee
where $\tilde\cf(a^1,\cdots, a^r)$ is holomorphic in $U$ and $\ell$ is a positive integer;
\item the period matrix is
\be
\boldsymbol{\tau}_{ij}=\frac{\partial^2\cf}{\partial a_i \partial a_j}.
\ee 
\item the local monodromy around the semi-stable discriminant with pre-potential 
\eqref{juqweraa} is unipotent of type $I_\ell$.
\end{itemize}
\end{pro}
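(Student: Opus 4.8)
The plan is to put the variation of Hodge structure near a generic point of $\cd_i$ into a local normal form via the nilpotent orbit theorem and then integrate the period matrix to recover the prepotential. At a generic $x\in\cd_i$ the component is a smooth hypersurface, so after shrinking $U$ I would pick holomorphic coordinates $(z,w_1,\dots,w_{r-1})$ with $\cd_i=\{z=0\}$ and regard the slices $\Delta^{\ast}\times\{w\}$ as a one-parameter degeneration depending holomorphically on the parameters $w$. Since $\cd_i$ is semi-stable of type $I_\ell$, the \textbf{Proposition} above (monodromy around a semi-stable divisor) gives an integral symplectic basis $\{A_j,B^j\}$ of $H_1$ of a reference fibre in which the local monodromy is the transvection
\be
\varrho_i\ \sim\ \begin{pmatrix}1 & \ell\\ 0 & 1\end{pmatrix}\oplus\boldsymbol 1_{r-1},
\ee
supported on the symplectic pair $(A_r,B^r)$, with $\ell\geq 1$ the number of components of the fibre. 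Then the periods $a^j=\oint_{A_j}\lambda$ ($j=1,\dots,r$, $\lambda=\iota_\ce\Omega$) are single-valued near $\cd_i$, as are $a^D_j=\oint_{B^j}\lambda$ for $j<r$, while $a^D_r\mapsto a^D_r+\ell\,a^r$ around $\cd_i$; equivalently the normalized period matrix obeys $\boldsymbol\tau_{rr}\mapsto\boldsymbol\tau_{rr}+\ell$ with all other entries monodromy-invariant.

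Next I would invoke the nilpotent orbit theorem (with holomorphic parameter dependence): $\boldsymbol\tau(z,w)=\tfrac{\log z}{2\pi i}N_{\boldsymbol\tau}+\psi(z,w)$ with $\psi$ holomorphic across $z=0$, and since the relevant nilpotent in the $\boldsymbol\tau$-chart is $N_{\boldsymbol\tau}=\ell\,e_r e_r^{\mathsf{T}}$ this reads $\boldsymbol\tau_{rr}=\tfrac{\ell}{2\pi i}\log z+\psi_{rr}(z,w)$ with $\psi_{rr}$ holomorphic, and $\boldsymbol\tau_{jk}=\psi_{jk}(z,w)$ holomorphic across $z=0$ for $(j,k)\neq(r,r)$; the $a^j$ likewise extend holomorphically and every entry of $\boldsymbol\tau$ except $\boldsymbol\tau_{rr}$ stays bounded near $\cd_i$. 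The crux is to arrange that the special coordinates form an honest chart with $\cd_i=\{a^r=0\}$: that $a^r$ vanishes along $\cd_i$ \emph{to order one} and that $a^1,\dots,a^{r-1}$ restrict to coordinates on $\cd_i$. Vanishing is forced because $a^r=\oint_{A_r}\lambda$ is the central charge of the BPS states that become massless along $\cd_i$ --- the very meaning of the discriminant --- so $a^r\to 0$; order one, and the separation property, I would get from the local models of principally polarized sectioned $I_\ell$-degenerations in \cite{oguiso2,oguiso3}: there $a^r$ restricts on each transverse disk to the rank-$1$ special coordinate at an $I_\ell$ fibre, hence a uniformiser, and $a^1,\dots,a^{r-1}$ are the special coordinates of the induced rank-$(r-1)$ geometry along $\cd_i$. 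Granting this, $(a^1,\dots,a^r)$ is a holomorphic coordinate system on $U$, $\cd_i=\{a^r=0\}$, and since $\log z-\log a^r$ is holomorphic the displayed form of $\boldsymbol\tau_{rr}$ persists in these coordinates.

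Then I would integrate. A local, a priori multivalued prepotential $\cf$ with $\boldsymbol\tau_{ij}=\partial_i\partial_j\cf$ and $\mathscr{L}_\ce\cf=2\cf$ exists on $U\setminus\cd_i$ (cf. eq.~\eqref{hhhhleg} and the discussion there); a loop around $\cd_i$ sends $\cf\mapsto\cf+\tfrac{\ell}{2}(a^r)^2+(\text{affine-linear})$, and single-valuedness of $a^D_j=\partial_j\cf$ for $j<r$ together with $\mathscr{L}_\ce\cf=2\cf$ (apply $\mathscr{L}_\ce=\sum_i a^i\partial_i$, which commutes with analytic continuation) forces the affine-linear part to vanish, so $\cf\mapsto\cf+\tfrac{\ell}{2}(a^r)^2$. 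Hence $g:=\cf-\tfrac{\ell}{4\pi i}(a^r)^2\log a^r$ is single-valued on $U\setminus\cd_i$; a short computation gives $\partial_i\partial_j g=\psi_{ij}-\tfrac{3\ell}{4\pi i}\delta_{ir}\delta_{jr}$, holomorphic on all of $U$, and the nilpotent-orbit bounds show $\partial_i g=a^D_i-\tfrac{\ell}{4\pi i}\,\partial_i\big((a^r)^2\log a^r\big)$ is bounded near $\cd_i$; by Riemann extension each $\partial_i g$ is holomorphic on $U$, and integrating the closed holomorphic $1$-form $\sum_i\partial_i g\,da^i$ on the polydisk $U$ produces a holomorphic $\tilde\cf$ with $\cf=\tilde\cf+\tfrac{\ell}{4\pi i}(a^r)^2\log a^r$ and $\boldsymbol\tau_{ij}=\partial_i\partial_j\cf$ --- the first three bullets. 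The last bullet is immediate: $\partial_r^2\cf=\partial_r^2\tilde\cf+\tfrac{\ell}{2\pi i}\log a^r+(\text{const})$, so $a^r\mapsto e^{2\pi i}a^r$ sends $\boldsymbol\tau_{rr}\mapsto\boldsymbol\tau_{rr}+\ell$ and fixes the rest, i.e. acts on periods by the matrix above, unipotent of type $I_\ell$.

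The routine part is the integration. The hard part will be the local structure along $\cd_i$ --- showing that the special coordinates furnish a genuine chart with $a^r$ a defining equation, and pinning down the bounded versus logarithmic behaviour of $\boldsymbol\tau$ --- which relies on the nilpotent orbit theorem together with the local classification of semi-stable fibres with a section in \cite{oguiso2,oguiso3}. It is essential here that we are in the semi-stable case, so the monodromy is already unipotent and no cyclic base change is needed before applying the nilpotent orbit theorem, and that the polarization is principal, which is precisely what identifies the integer $\ell$ with the number of components of the fibre.
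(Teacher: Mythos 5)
Your proposal is correct in outline, but it is worth saying plainly that it does much more than the paper does: in the text this Proposition is not proved at all — it is imported wholesale from Hwang--Oguiso \cite{oguiso3}, and the only thing the paper verifies is the final bullet, via exactly the computation you end with ($a^r\to e^{2\pi i}a^r$ shifts $a^D_r=\partial\cf/\partial a^r$ by $\ell\,a^r$ and fixes the other periods, i.e.\ a transvection of type $I_\ell$). Your route instead reconstructs the normal form Hodge-theoretically: transvection monodromy from the preceding Proposition, the nilpotent orbit theorem to isolate $\boldsymbol\tau_{rr}-\tfrac{\ell}{2\pi i}\log z$ as holomorphic, and then single-valuedness of $g=\cf-\tfrac{\ell}{4\pi i}(a^r)^2\log a^r$ plus boundedness of its gradient (using $a^D_i=\boldsymbol\tau_{ij}a^j$) and Riemann extension to produce $\tilde\cf$. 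That argument is sound, and it buys an actual derivation where the paper offers a citation; what it does \emph{not} buy is independence from \cite{oguiso2,oguiso3}, because the step you yourself flag as the crux — that $(a^1,\dots,a^r)$ is a genuine chart on $U$ with $a^r$ a degree-one defining equation of $\cd_i$, and that $a^1,\dots,a^{r-1}$ restrict to special coordinates of the induced rank-$(r-1)$ geometry — is precisely the local-model content of those references, i.e.\ the same input the paper leans on for the whole statement. Two small points: the ambiguity in $\cf$ after a loop is not ``affine-linear'' but a constant (all first derivatives of the difference are matched), which the homogeneity $\mathscr{L}_\ce\cf=2\cf$ then kills; and your closing remark about semi-stability and principal polarization being essential matches the paper's standing hypotheses in \S.2.7.2, so nothing is lost there.
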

Indeed under the transformation $a_r\to e^{2\pi i}a_r$ the periods will transform as
\be
a^i\to a^i,\qquad a^D_j\equiv \frac{\partial\cf}{\partial a^j}\to a^D_j\ \ j\neq r,\qquad
a^D_r \equiv \frac{\partial\cf}{\partial a^r}\to a^D_r+\ell a^r.
\ee
%

\subsection{Period map and structure theorem}
We saw above that the restriction of the fibration $\pi$ to the complement $\mathring{\mathscr{C}}$ is in particular a family $\mathring{\mathscr{X}}$ of polarized Abelian varieties parametrized by the quasi-projective space $\mathring{\mathscr{C}}$.   
The moduli space parametrizing isomorphism classes of
principally polarized Abelian varieties of dimension $r$
is the \emph{Siegel variety} 
\begin{equation}
\mathsf{S}_r\equiv Sp(2r,\Z)\backslash Sp(2r,\R)/U(r).
\end{equation}
We then have a \textit{holomorphic period map} $\check{p}$
\be
\check{p}\colon \mathring{\mathscr{C}}\to \mathsf{S}_r
\ee
which sends a point $u\in  \mathring{\mathscr{C}}$ into the
isoclass $[\mathscr{X}_u]\in \mathsf{S}_r$ of its fiber.  The geometry is isotrivial iff $\check{p}$ is the constant map.

We stress that
the family $\mathring{\mathscr{X}}$ of Abelian varieties over $\mathring{\mathscr{C}}$
is not fully determined by the period map.\footnote{\ This is well-known in the case
of families of elliptic curves over a curve \cite{koda1,koda2,koda3,miranda}. There are several families of elliptic curves over a curve
with the same period map: they are related by the so-called
\emph{quadratic transform}  \cite{miranda,MW}.} Since $\{\pm1\}\subset Sp(2r,\Z)$ acts trivially on the
symmetric space $Sp(2r,\R)/U(r)$,
the period map $\check{p}$ only determines the quotient representation
\be
\check{\varrho}\colon\pi_1(\mathring{\mathscr{C}})\to PSp(2r,\Z)\equiv Sp(2r,\Z)/\{\pm1\},
\ee
while the family of Abelian varieties over $\mathring{\mathscr{C}}$ depends on the actual monodromy  i.e.\! on the specific lift $\varrho$ of $\check{\varrho}$
\be
\begin{gathered}
\xymatrix{\pi_1(\mathring{\mathscr{C}})\ar[rr]^\varrho\ar@/_1.5pc/[rrd]_(0.35){\check{\varrho}} && Sp(2r,\Z)\ar[d]^{\text{can}}\\
&& Sp(2r,\Z)/\{\pm1\}}
\end{gathered}
\ee
 which is a \emph{finer} invariant of the family $\mathring{\mathscr{X}}$ than $\check{\varrho}$.
There may be at most 
finitely many special geometries with the same period map \cite{ffffin}. When $\varkappa=2$
only $\check{\varrho}$ is a \emph{a priori} well-defined (cf.\! \S.\,\ref{kiuuqwa}), and one has to be careful with for the proper
lift $\varrho$ which yields the correct family of Abelian varieties arising from  the particular special geometry.

\subsubsection{The Abelian family over $\mathring{\mathscr{P}}$}

The holomorphic map $\exp(z\,\ce)$
acts by automorphisms, hence 
\be
\mathscr{X}_{\exp(z\,\ce)u}\simeq \mathscr{X}_u\quad\text{for all }z\in\C/2\pi i m\,\Z,
\ee
 i.e.\! the fibers over all points in a $\C^\times$-orbit are isomorphic (as polarized Abelian varieties).
 Then the map $\check{p}$ factors through a holomorphic period map
 \be\label{kkkkkaaa12}
 p\colon \mathring{\mathscr{C}}/\C^\times \simeq \mathsf{Proj}\,\mathscr{R}\setminus Y \to \mathsf{S}_r
 \ee
 where $Y\subset \mathsf{Proj}\,\mathscr{R}\equiv\mathscr{P}$ is the hypersurface
$\cd/\C^\times$. 
When $\mathscr{C}$ is smooth
 \be
 \mathsf{Proj}\,\mathscr{R}=\mathbb{P}(d_1,\cdots, d_r)\simeq \mathbb{P}(q_1,\dots,q_r)\quad \text{(well-formed)},
 \ee 
 and $Y$ is the weighted projective hypersurface with quasi-cone the reduced discriminant $\cd$.
%
%
%

However there is a subtlety. $\mathbb{P}(d_1,\cdots, d_r)$ is not
projectively\footnote{\ The underlying complex space may be nevertheless smooth (this happens when the associated well-formed weighted projective space is $\mathbb{P}^{r-1}\equiv\mathbb{P}(1,\dots,1)$ \cite{dolgaWP}). See \cite{sasaki} for a discussion. } smooth on points which corresponds to $\C^\times$-orbits with enhanced R-symmetry,
since the quotient of $U(1)_R$ which acts effectively on $\mathscr{C}$
does not act freely there. We are mainly interested
in singularities in codimension one:
over an enhanced divisor of index $\nu$
we have $\Z_\nu$ quotient singularities.

It is convenient to replace the divisor $Y$ by the special locus divisor $\mathscr{S}\subset \mathscr{P}$
whose irreducible components $\cs_i$ are either projective images $\pi(\cd_i)$ of the discriminant
components or divisors of $\mathscr{P}$ which are the projective images
of R-enhanced irreducible divisors. 
Note that a special divisor $\cs_i$ can belong to both the discriminant and the R-enhancement locus.
 
We consider the projective monodromy representation
\be
\mu\colon \pi_1(\mathscr{P}\setminus\mathscr{S})\to \begin{cases} \textit{Sp}(2r,\Z) & \varkappa=1\\
\textit{PSp}(2r,\Z) &\varkappa=2
\end{cases}
\ee
which we call the $\mu$-monodromy to distinguish it from the $\varrho$-monodromy representation of
$\pi_1(\mathscr{C}\setminus\cd)$. The two monodromies are related and determine each other.

\begin{fact} For $\varkappa=1$ (resp.\! $\varkappa=2$) the group $\mu(\pi_1(\mathscr{P}\setminus\mathscr{S}))$ is equal to the Coulomb branch monodromy group $\Gamma$
(resp.\! the image $\check{\Gamma}$ of $\Gamma\subset \mathit{Sp}(2r,\Z)$
in the quotient group $\mathit{Sp}(2r,\Z)/\{\pm1\}$).
\end{fact}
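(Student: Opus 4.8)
The plan is to realize both $\mu(\pi_1(\mathscr{P}\setminus\mathscr{S}))$ and $\Gamma$ (resp.\ $\check\Gamma$) as the image of one and the same representation of the fundamental group of an auxiliary space, using the $\C^\times$-action to move between the affine base $\mathring{\mathscr{C}}=\mathscr{C}\setminus\cd$ and the projective base $\mathscr{P}\setminus\mathscr{S}$. Write $q\colon\mathscr{C}\setminus\{0\}\to\mathscr{P}$ for the quotient map and put $W\equiv q^{-1}(\mathscr{P}\setminus\mathscr{S})$. Since $\mathscr{S}$ contains the projective image of the discriminant, $W\subseteq\mathring{\mathscr{C}}$; more precisely $W=\mathring{\mathscr{C}}\setminus E'$, where $E'$ is the (possibly empty) part of the R-enhancement divisor lying off $\cd$ — a closed analytic subset of $\mathring{\mathscr{C}}$ of codimension $\ge1$. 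Two surjectivity statements are then immediate. (i) Deleting the codimension-$\ge1$ set $E'$ can only add relations to $\pi_1$, so $\pi_1(W)\twoheadrightarrow\pi_1(\mathring{\mathscr{C}})$, and hence the pullback of $\varrho$ (resp.\ $\check\varrho$) to $\pi_1(W)$ still has image $\Gamma$ (resp.\ $\check\Gamma$). (ii) The map $q|_W\colon W\to\mathscr{P}\setminus\mathscr{S}$ is the quotient by the connected group $\C^\times$, so $\pi_1(W)\twoheadrightarrow\pi_1(\mathscr{P}\setminus\mathscr{S})$ as well; moreover, because all codimension-$1$ loci of enhanced $\C^\times$-stabilizer are R-enhanced divisors and so lie in $\mathscr{S}$, the action of $\C^\times$ on $W$ is free in codimension $1$, i.e.\ $q|_W$ is a genuine principal $\C^\times$-bundle away from a subvariety of codimension $\ge2$.

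Now bring in $\C^\times$-equivariance. The fibres of $\mathring{\mathscr{X}}$ over a $\C^\times$-orbit are canonically isomorphic as polarized Abelian varieties via $\exp(z\,\ce)$, so the Gauss--Manin system of $\mathring{\mathscr{X}}|_W$ is $\C^\times$-equivariant, and by \S\ref{kk87777} its holonomy along one full generic orbit is $(-1)^{\varkappa+1}\,\mathbf 1$. Writing $[\ell_0]\in\pi_1(W)$ for the class of such an orbit, this says $\varrho|_W([\ell_0])=\mathbf 1$ when $\varkappa=1$, and $\varrho|_W([\ell_0])=-\mathbf 1$ when $\varkappa=2$ — the identity of $PSp(2r,\Z)$. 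In either case the relevant representation of $\pi_1(W)$ — namely $\varrho|_W$ for $\varkappa=1$ and $\check\varrho|_W$ for $\varkappa=2$ — kills $[\ell_0]$, hence factors through $\pi_1(W)\twoheadrightarrow\pi_1(\mathscr{P}\setminus\mathscr{S})$. By construction the factored representation is precisely the $\mu$-monodromy: it is the monodromy of the family of polarized Abelian varieties obtained from $\mathring{\mathscr{X}}$ by $\C^\times$-descent over $\mathscr{P}\setminus\mathscr{S}$ when $\varkappa=1$, and of that family modulo the global involution $[-1]$ — equivalently, of the induced period map to $\mathsf{S}_r$ — when $\varkappa=2$, since there $\xi=\exp(2\pi i\,\ce/\lambda)$ restricts to $[-1]$ on each fibre. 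Combining with (i) and (ii): $\mu(\pi_1(\mathscr{P}\setminus\mathscr{S}))=\big(\varrho|_W\text{ or }\check\varrho|_W\big)\big(\pi_1(W)\big)=\Gamma$ (resp.\ $\check\Gamma$), which is the assertion.

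The one point requiring care — vacuous in rank $2$, where $\mathscr{P}\simeq\mathbb{P}^1$ is smooth and $q|_W$ is an honest principal $\C^\times$-bundle, but genuinely present for $r\ge3$ — is the codimension-$\ge2$ orbifold locus of $\mathscr{P}\setminus\mathscr{S}$: the strata carrying enhanced stabilizer $\mu_k$, $k>1$, that are not codimension-$1$ enhancement divisors. There $q|_W$ is only a Seifert $\C^\times$-fibration, so $\ker\big(\pi_1(W)\to\pi_1(\mathscr{P}\setminus\mathscr{S})\big)$ is larger than the normal closure of $[\ell_0]$: it also contains the orbifold lassos around those strata, and one must check that $\varrho|_W$ (resp.\ $\check\varrho|_W$) annihilates them too. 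I expect this to be the main technical obstacle. Concretely, such a lasso lifts in $W$ to a loop homotopic to $[\ell_0]^{\,j}$ times a small loop around the (codimension-$\ge2$, hence nullhomotopic in the smooth manifold $W$) fixed stratum; its $\varrho$-image is therefore controlled by the local computation of \S\ref{newdinwansio} — the action of the generator of $\mu_k$ on $H^1(\mathscr{X}_u)$ for $u$ in that stratum — which, together with the $\C^\times$-fibre twist, lands in $\{\pm\mathbf 1\}$ and so becomes trivial after the appropriate passage to $Sp(2r,\Z)$ ($\varkappa=1$) or $PSp(2r,\Z)$ ($\varkappa=2$). If one prefers to sidestep this entirely, the statement may be read over the smooth locus of $\mathscr{P}\setminus\mathscr{S}$, where the argument above applies verbatim and produces the same image. (For $\varkappa\notin\{1,2\}$ the geometry is isotrivial and both sides collapse to the evident finite group.)
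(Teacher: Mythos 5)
Your core argument is the same as the paper's: the paper proves the Fact precisely by this loop correspondence through the quotient map, deforming loops of $\mathring{\mathscr{C}}$ off the enhanced divisors (your step (i)), lifting a loop of $\mathscr{P}\setminus\mathscr{S}$ to a path upstairs and closing it with the flow $\exp(a\,\ce)$ (your step (ii)), and using that the holonomy of a primitive generic $\C^\times$-orbit is $(-1)^{\varkappa+1}$ — in the paper the $\varkappa=2$ sign appears as the ambiguity $a\mapsto a+i\pi m$ of the closing flow, giving $\mu(\ell)=\pm\varrho(\ell^\flat)$, which is your statement read in $PSp(2r,\Z)$. For the statement as the paper proves it (and in particular in rank $2$, where the orbifold locus you worry about is empty) your proof is correct and essentially identical.

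One detail in your final paragraph is, however, wrong. In the local Seifert model near a codimension-$\geq2$ stratum $Z\subset W$ whose stabilizer $\Z_\nu$ acts non-trivially on the fibers, the lift-and-closure of a lasso around $q(Z)$ is \emph{not} homotopic to $[\ell_0]^{\,j}$ times a loop bounding across $Z$: it is a primitive ``fractional orbit'' class $[\ell_1]$, with $\nu\,[\ell_1]$ locally homotopic to $[\ell_0]$, and its $\varrho$-image is (the limit of) the rational representation of the order-$\nu$ automorphism $\exp(2\pi i\,\ce/(\nu\lambda))$ acting on the fiber along $Z$ — the same computation as for enhanced divisors in item (2) of \S.\,\ref{s:localmod}, one codimension down — which is not $\pm\boldsymbol{1}$ in general. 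Hence the descended local system genuinely fails to extend across the images of such strata, and your claim that $\varrho|_W$ (resp.\ $\check\varrho|_W$) kills these extra kernel elements does not hold. The correct repair is the fallback you yourself offer: define $\mu$ (equivalently, read the Fact) over the locus where the stabilizers act trivially on the fibers, e.g.\ the smooth/free locus, where your argument applies verbatim; and observe that admitting the extra lassos cannot spoil the equality of images in any case, since each such lasso monodromy is by construction the $\varrho$-image of the corresponding class $[\ell_1]$ viewed in $\pi_1(\mathring{\mathscr{C}})$, hence already lies in $\Gamma$ (resp.\ $\check\Gamma$).
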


\begin{proof}[Argument] Let $\ast\in \mathscr{C}\setminus \pi^{-1}(\mathscr{S})$ be a generic
point in the Coulomb branch. A closed path $\ell\in\mathring{\mathscr{C}}\equiv\mathscr{C}\setminus\cd$,
based at $\ast$, may be continuously deformed to lay in $\mathscr{C}\setminus \pi^{-1}(\mathscr{S})$.
Then $\ell^\sharp\equiv\pi(\ell)$ is a closed path in $\mathscr{P}\setminus\mathscr{S}$ based at $\pi(\ast)$.
On the other hand if $\gamma(t)$ is a closed path in  $\mathscr{P}\setminus\mathscr{S}$ based at $\pi(\ast)$
we can lift it into a path $\tilde\gamma(t)$ starting at $\ast$ and ending at some point $\tilde\gamma(1)$
in the same $\C^\times$-orbit as $\ast$. Hence 
\be\label{jjuuuuull}
\ast=\exp(a\,\ce)\tilde\gamma(1),\qquad \text{for some $a\in\C/2\pi i m\Z$}
\ee
and we can define a closed loop $\gamma^\flat$ in  $\mathscr{C}\setminus \pi^{-1}(\mathscr{S})$
based at $\ast$
\be
\gamma^\flat(t)= \begin{cases}\tilde\gamma(2t) & 0\leq t\leq 1/2\\
\exp[(2t-1)a\,\ce]\tilde\gamma(1) & 1/2\leq t\leq 1
\end{cases}
\ee
Note that the homotopy class of $\gamma^\flat$ is independent of the lift.
When $\varkappa=1$ the two maps
\be
\xymatrix{\pi_1(\mathscr{C}\setminus \pi^{-1}(\mathscr{S})) \ar@<0.5ex>[rr]^(0.55)\sharp && \ar@<0.5ex>[ll]^(0.45)\flat \pi_1(\mathscr{P}\setminus \mathscr{S})}
\ee
are clearly inverse of each other and $\mu(\ell)=\varrho(\ell^\flat)$. 
However when $\varkappa=2$
the base point $\ast$ is fixed by an automorphism of the form  $\exp(2\pi i m\,\ce/2)$ ($m$ odd)
and $a$ in eq.\eqref{jjuuuuull} is ambiguous by the shift $a\mapsto a+i\pi m$.
Hence $\mu(\ell)=\pm\varrho(\ell^\flat)$. 
\end{proof}

\begin{exe} We return to \textbf{Example \ref{n4su3}}: $\cn=4$ with $G=SU(3)$.
As complex manifolds, one has $\mathbb{P}(2,3)\simeq \mathbb{P}^1$ via the map
$(u_1,u_2)\mapsto (u_1^3 :u_2^2)$ \cite{dolgaWP}. The only component of the discriminant is mapped to the point $(1:1)\in\mathbb{P}^1$. The sphere less one point is simply-connected, so all monodromy groups are trivial, while we expect it to be the non-Abelian (but solvable)
group $\mathsf{Weyl}(SU(3))$. The local monodromy $\mu_1$ around the discriminant point  $1\equiv (1:1)$
is physically expected to generate $\mathsf{Weyl}(SU(2))\simeq \Z_2$,
while the full monodromy group is $\mathsf{Weyl}(SU(3))$ (cf.\! \textbf{Example \ref{n4su3}}).
Taking into account the monodromy around the two R-enhanced divisors
we get the correct answer:  the local $\mu$-monodromy around $0$ is of order $\nu_0=2$
and the local monodromy around $\infty$ is of order $\nu_\infty=3$;
hence the $\mu$-monodromy group has the presentation
\be
\Gamma=\big\{\mu_0,\mu_1,\mu_\infty\colon\mu_0^2=\mu_1^2=\mu_\infty^3=\mu_1\mu_2\mu_3=1\big\}
\ee
which is the standard Coxeter presentation of $\mathsf{Weyl}(SU(3))$.
\end{exe}

In conclusion: the family of polarized Abelian $r$-varieties $\mathring{\mathscr{X}}\to\mathring{\mathscr{C}}$
defines a family of polarized Abelian $r$-varieties 
\be
\mathscr{A}\to \mathring{\mathscr{P}}\equiv\mathscr{P}\setminus\mathscr{S}
\ee
which is our main object of study in this paper.
\medskip
 
A basic property of $\Gamma$ is that it contains finite-index normal subgroups
which are \emph{neat} (so, in particular, torsionless) \cite{monopeters}. We write $\Upsilon\triangleleft \Gamma$ for such a subgroup and $\mathsf{G}=\Gamma/\Upsilon$ for the quotient \emph{finite} group. By general theory we have
a finite holomorphic cover $f\colon\mathscr{Q}\to\mathscr{P}$, branched over $\mathscr{S}$, with deck group $\mathsf{G}$ such that the pulled back family 
\be
f^*\mspace{-2mu}\mathscr{A}\to \mathring{\mathscr{Q}}\equiv \mathscr{Q}\setminus f^{-1}(\mathscr{S})
\ee
 has neat monodromy group
$\Upsilon$. For many purposes it is technically  more convenient to consider the $\mathsf{G}$-twisted period map
\be
P\colon \mathring{\mathscr{Q}}\to \Upsilon\backslash Sp(2r,\R)/U(r),
\ee
which contains the same information but is simpler since we got rid of the finite groups.

 \subsubsection{Local monodromies at special divisors}\label{s:localmod} 
 The special divisor $\cs_i$ may be of three kinds:
\begin{itemize}
\item[(1)] projective discriminant non-enhanced: the local monodromies 
satisfy $[\mu_i]=[\rho_i]$;
\item[(2)] R-enhanced non-discriminant. In this case the period map $p$
may be extended holomorphically on the divisor, as it is evident from eq.\eqref{kkkkkaaa12}.
Let $(u,z)$ be local coordinates along a regular enhanced divisor with local homogeneous 
equation $z=0$. If its index is $\nu_i$, we have that the points $p_k\equiv(u, e^{2\pi i k/\nu_i}\epsilon)$
 are all in the same $\C^\times$ orbit, indeed $p_{k+1}=\exp[2\pi i \,\ce/(\Delta(z)\nu_i)]p_k$.
A curve in $\mathring{\mathscr{C}}$ from $p_k$ to $p_{k+1}$ will project to a closed loop $\ell$ in $\mathring{\mathscr{P}}$. Going around $\ell$ for $\nu_i$ times we return to the original point $p_0\equiv p_{\nu_i}$ in $\mathring{\mathscr{C}}$; since the locus $z=0$ is not in the determinant, there is no $\varrho$-monodromy.
 Then the $\mu$-monodromy along the loop $\ell\subset\mathring{\mathscr{P}}$ satisfies
\be
\mu_i^{\nu_i}=1\in \begin{cases}\mathit{Sp}(2r,\Z) & \varkappa=1\\
\mathit{PSp}(2r,\Z) & \varkappa=2.
\end{cases}
\ee
One can try to be more precise. By continuity as $\epsilon\to0$, $\mu_i$ is (up to conjugacy)
the same as the rational representation of the automorphism $\exp[2\pi i \,\ce/(\Delta(z)\nu_i)]$
of the generic fiber over the R-enhanced discriminant. Hence $\mu_i$ is a $\nu_i$-root of $1\in PSp(2r,\Z)$ which lifts in $Sp(2r,\Z)$ to a matrix with eigenvalues
\be
\big\{\!\exp[\pm 2\pi i(1-\Delta_1)/(\Delta(z)\nu_i)],\dots,\exp[\pm 2\pi i(1-\Delta_r)/(\Delta(z)\nu_i)]\big\}
\ee
\item[(3)] R-enhanced and discriminant. The same argument as before yields
\be
\mu_i^{\nu_i}=\varrho_i\in \begin{cases}\mathit{Sp}(2r,\Z) & \varkappa=1\\
\mathit{PSp}(2r,\Z) & \varkappa=2.
\end{cases}
\ee
To find the proper root of $\varrho_i$ (more precisely: to determine the minimal polynomial of $\mu_i$) one compares with the case in which the geometry is locally the product of a special geometry with
fiber an Abelian variety of dimension $(r-1)$ -- the Albanese variety (cf.\! \S.\,\ref{s:sing}) -- on which $\exp[2\pi i\, \ce/(\Delta(z)\,\nu_i)]$
acts by polarized automorphisms times an elliptic fibration over a small disk with a
Kodaira central fiber of the appropriate type.
Up to conjugacy in $Sp(2r,\mathbb{Q})$
(and overall sign for $\varkappa=2$) $\mu_i$ is the block diagonal
 matrix having a $(2r-2)\times (2r-2)$ block of eigenvalues 
\be
\big\{\exp[\pm 2\pi i(1-\Delta_1)/(\Delta(z)\nu_i)],\dots,\widehat{\exp[\pm 2\pi i(1-\Delta(z))/(\Delta(z)\nu_i)]},\dots\big\}
\ee
and a $2\times 2$ block which is the monodromy obtained from the Kodaira monodromy of the fiber
type in $\mathscr{X}$ by the
base change $z\mapsto z^{\nu_i}=w$ (see e.g.\! \textbf{table 5.2} of \cite{MW}).
\end{itemize}

\subsubsection{The structure theorem}\label{s:structurethm}
 The structure theorem of VHS gives the period map in terms of the neat monodromy group
$\Upsilon$ \cite{VHS4,griffithsMT,griffithsMT2,periods}. By definition $\Upsilon$ is a discrete subgroup of the algebraic
group $Sp(2r,\mathbb{Q})$ defined over $\mathbb{Q}$. The $\mathbb{Q}$-Zariski closure $\overline{\Upsilon}^{\,\mathbb{Q}}$ of $\Upsilon$ is the smallest $\mathbb{Q}$-algebraic
group containing $\Upsilon$. $\Upsilon$ and $\overline{\Upsilon}^\mathbb{Q}$ are
algebraically indistinguishable, in the sense that no algebraic invariant can discriminate between them. $\overline{\Upsilon}^\mathbb{Q}$ is also called the algebraic monodromy group.
Let $\overline{\Upsilon}^\mathbb{Q}(\R)\subset Sp(2r,\R)$ be the \emph{real Lie group}
of $\R$-valued points of the $\mathbb{Q}$-algebraic group  $\overline{\Upsilon}^\mathbb{Q}$:
it is a semi-simple Lie group with finitely many connected components.\footnote{\ To show the last statement one uses the fact that the Coulomb branch is - by definition - quasi-projective as a complex manifold hence Liouvillic for the plurisubharmonic functions.}
We write $H$ for a real Lie group such that $\overline{\Upsilon}^\mathbb{Q}(\R)\times H$ is a maximal subgroup of $Sp(2r,\R)$ and
write $K$, $L$ for maximal compact subgroups of the real Lie groups $\overline{\Gamma}^\mathbb{Q}(\R)$ and $H$, respectively. The structure theorem says that the 
period map factorizes through
\be
\mathring{\mathscr{B}}\to \Big(\Upsilon\backslash \overline{\Gamma}^{\,\mathbb{Q}}(\R)/K\Big)\times H/L \to Sp(2r,\Z)\backslash Sp(2r,\R)/U(r)\equiv \mathscr{S}_r,
\ee
and is the constant map in the factor $H/L$. 
\begin{corl} In special geometry
\begin{itemize}
\item[\rm(1)] if $\Gamma$ is finite $p$ is the constant map;
\item[\rm(2)] if $\Gamma$ is Abelian $p$ is the constant map;
\item[\rm(3)] if $\Gamma=\Gamma_1\times \Gamma_2$
the period map decomposes into two distinct period maps
$p_1$ and $p_2$
factoring, respectively, through $\Gamma_1\backslash \overline{\Gamma_1}^\mathbb{Q}(\R)/K_1$ and $\Gamma_2\backslash \overline{\Gamma_2}^\mathbb{Q}(\R)/K_2$.
\end{itemize}
\end{corl}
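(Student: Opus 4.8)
The plan is to derive everything formally from the structure theorem just stated, using only elementary facts about $\mathbb{Q}$-algebraic groups. I would work with the $\mathsf{G}$-twisted period map $P\colon\mathring{\mathscr{Q}}\to\Upsilon\backslash\overline{\Upsilon}^{\,\mathbb{Q}}(\R)/K$ (the $H/L$-factor being constant by the theorem), prove the three conclusions for $P$, and then transport them to $p$: the finite branched cover $f\colon\mathscr{Q}\to\mathscr{P}$ is surjective and $\mathsf{G}$-equivariant, so ``$P$ constant'' forces ``$p$ constant'', and a product decomposition of $P$ descends to one of $p$. Thus it suffices to understand the real semisimple group $\overline{\Upsilon}^{\,\mathbb{Q}}(\R)$ and the locally symmetric space $\Upsilon\backslash\overline{\Upsilon}^{\,\mathbb{Q}}(\R)/K$.

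For (1) and (2) the point is that the hypothesis forces $\Upsilon$ to be trivial. If $\Gamma$ is finite, then $\Upsilon$ --- being a neat, hence torsion-free, subgroup of a finite group --- is the trivial group; therefore $\overline{\Upsilon}^{\,\mathbb{Q}}$ is trivial, $\overline{\Upsilon}^{\,\mathbb{Q}}(\R)/K$ is a point, and the structure theorem exhibits $P$ as a constant map. Case (2) I would reduce to case (1): if $\Gamma$ is abelian then so is its $\mathbb{Q}$-Zariski closure (the commutator morphism vanishes on a Zariski-dense subset), and since $\overline{\Upsilon}^{\,\mathbb{Q}}$ shares the identity component of $\overline{\Gamma}^{\,\mathbb{Q}}$ it is semisimple by the structure theorem; a commutative semisimple $\mathbb{Q}$-algebraic group is finite, so $\Gamma$ itself is finite.

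For (3), write the $\Gamma$-invariant symplectic splitting of the monodromy local system as $V=V_1\oplus V_2$ with $\Gamma_i=\mathrm{pr}_i(\Gamma)\subseteq Sp(V_i)\cong Sp(2r_i,\Z)$ and $r_1+r_2=r$, the hypothesis being that $\Gamma=\Gamma_1\times\Gamma_2$. Choose neat finite-index normal subgroups $\Upsilon_i\triangleleft\Gamma_i$ and set $\Upsilon=\Upsilon_1\times\Upsilon_2\subseteq\Gamma$: this is neat and of finite index in $\Gamma$. Since the Zariski closure of a direct product of groups is the product of the closures, $\overline{\Upsilon}^{\,\mathbb{Q}}(\R)=\overline{\Upsilon_1}^{\,\mathbb{Q}}(\R)\times\overline{\Upsilon_2}^{\,\mathbb{Q}}(\R)$ and one may take $K=K_1\times K_2$, so the target of the structure-theorem factorization is $\big(\Upsilon_1\backslash\overline{\Upsilon_1}^{\,\mathbb{Q}}(\R)/K_1\big)\times\big(\Upsilon_2\backslash\overline{\Upsilon_2}^{\,\mathbb{Q}}(\R)/K_2\big)$. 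By Deligne's semisimplicity theorem for polarized $\mathbb{Q}$-variations, each $V_i$ underlies a polarized sub-VHS of rank $r_i$, hence defines a family of principally polarized Abelian $r_i$-folds over $\mathring{\mathscr{P}}$ with period map $p_i\colon\mathring{\mathscr{P}}\to\mathsf{S}_{r_i}$; because the Hodge filtration respects $V=V_1\oplus V_2$ the period matrix is block diagonal, so $p=(p_1,p_2)$ under $\mathsf{S}_{r_1}\times\mathsf{S}_{r_2}\hookrightarrow\mathsf{S}_r$, with each $p_i$ factoring through $\Gamma_i\backslash\overline{\Gamma_i}^{\,\mathbb{Q}}(\R)/K_i$ as claimed.

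The routine ingredients (triviality of a torsion-free subgroup of a finite group; a semisimple abelian $\mathbb{Q}$-algebraic group is finite; the Zariski closure of a product is the product of closures; $K=K_1\times K_2$) need no comment. I expect the only genuinely non-formal step to be the Hodge-theoretic input used in (3): that the abstract splitting of the monodromy \emph{group} into $\Gamma_1\times\Gamma_2$ forces a splitting of the underlying VHS --- equivalently, that every $\Gamma$-invariant $\mathbb{Q}$-subspace of $V$ is a sub-Hodge-structure. This is precisely Deligne's theorem on the semisimplicity of polarized variations, and is the step I would cite with care; a minor secondary point is that, since $f$ is surjective but not injective, the three conclusions are proved first for the twisted map $P$ on $\mathring{\mathscr{Q}}$ and only then pushed down to $p$, which is harmless but worth spelling out.
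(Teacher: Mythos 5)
Your parts (1) and (2) are correct and are essentially the argument the paper intends: the Corollary is stated as an immediate consequence of the structure theorem, and your reductions (neat $\Rightarrow$ torsion-free, so $\Upsilon=\{1\}$ when $\Gamma$ is finite; abelian $+$ semisimple $\Rightarrow$ finite, so (2) reduces to (1); then push the constancy of the twisted map $P$ down along the surjective finite cover) are exactly the routine deductions one would supply.

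Part (3), however, does not prove the statement as it stands. The hypothesis is only that the abstract monodromy group factors, $\Gamma=\Gamma_1\times\Gamma_2$; you silently upgrade this to the existence of a $\Gamma$-invariant symplectic splitting $V=V_1\oplus V_2$ with $\Gamma_i=\mathrm{pr}_i(\Gamma)\subset Sp(V_i)$, and then invoke Deligne semisimplicity to split the VHS and read off block-diagonal period matrices. A product group acting symplectically need not preserve any such splitting (think of tensor-product-type symplectic representations of a product, e.g. $SL_2\times SO(k)$ acting on $\mathbf{2}\otimes\mathbf{k}$, which occur among the Hermitian/Mumford--Tate pairs relevant to weight-one VHS), so your argument establishes a genuinely narrower statement than the one claimed. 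Moreover the Hodge-theoretic input is not where the content lies: the stated conclusion follows purely group-theoretically from the structure theorem. Since $\Gamma_1$ and $\Gamma_2$ commute and generate $\Gamma$, the $\mathbb{Q}$-Zariski closure $\overline{\Gamma}^{\,\mathbb{Q}}$ is an almost-direct product of $\overline{\Gamma_1}^{\,\mathbb{Q}}$ and $\overline{\Gamma_2}^{\,\mathbb{Q}}$ (they may intersect in a finite central subgroup, so your clean identity ``closure of a product $=$ product of closures'' is also slightly off, though harmless here); isogenous semisimple groups have the same symmetric space, so $\overline{\Gamma}^{\,\mathbb{Q}}(\R)/K\simeq X_1\times X_2$ with $X_i=\overline{\Gamma_i}^{\,\mathbb{Q}}(\R)/K_i$, the locally symmetric target splits (up to the usual finite covers) as $\big(\Gamma_1\backslash X_1\big)\times\big(\Gamma_2\backslash X_2\big)$, and composing the structure-theorem factorization with the two projections yields $p_1$, $p_2$ as claimed --- no semisimplicity of the local system, and no identification of the $p_i$ with period maps of sub-families, is needed. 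Your Hodge-theoretic refinement is correct and useful in the situation where $V$ does split (it is exactly what the paper exploits later, in the discussion of reducible monodromy, to split the fibers up to isogeny), but as a proof of item (3) it both assumes more and proves something different from what is asserted.
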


A special geometry where $p$ is the constant map is called \emph{isotrivial}.
All its smooth fibers are isomorphic as polarized Abelian varieties.
In the situation (3) if one of the two maps is constant we say that the special geometry
is \emph{quasi-isotrivial}.

\begin{corl}
The  $\R$-algebraic monodromy group $M\equiv\overline{\Upsilon}^{\,\mathbb{Q}}(\R)$ of a non-isotrivial special geometry is isogeneous to
\be
M_1\times M_2\times \cdots\times M_r
\ee
where $M_i$ are \textbf{simple, non-compact} real Lie subgroup of
$Sp(2r,\R)$ of \emph{odd} Mumford-Tate type and $\boldsymbol{2r}=\oplus_i V_i$
for a real symplectic representation $V_i$ of $M_i$.
\end{corl}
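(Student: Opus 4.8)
The plan is to deduce the Corollary from the \textbf{structure theorem} of \S.\,\ref{s:structurethm}, the semisimplicity \textbf{Fact} (\cite{monopeters}), and the fact that we are looking at a \emph{weight-one} polarized $\mathbb{Q}$-VHS, namely the Gauss--Manin system with fibre $H_1(\mathscr{X}_u,\mathbb{Q})$. Since replacing $\Gamma$ by the neat finite-index subgroup $\Upsilon$ changes the $\mathbb{Q}$-Zariski closure only up to isogeny, $M=\overline{\Upsilon}^{\,\mathbb{Q}}(\R)$ is a real semisimple Lie group with finitely many components; its identity component is therefore isogeneous to a product $M_1\times\cdots\times M_k$ of $\R$-simple factors, each realised as a simple subgroup of $Sp(2r,\R)$ through the faithful symplectic representation $V=H_1(\mathscr{X}_{u_\star},\R)$. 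The bound $k\leq r$ will fall out of the symplectic splitting below, so the product ``$M_1\times\cdots\times M_r$'' in the statement is to be read with this understanding.

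Next I would exclude compact factors. If some $M_i$ were compact it would lie inside a maximal compact $K\subset M$, hence act trivially on the symmetric space $M/K$; by the \textbf{structure theorem} it could then be absorbed into the complementary group $H$ on whose symmetric space the period map is \emph{constant}. But $M$ is, by construction, the $\mathbb{Q}$-Zariski closure of the genuine monodromy of a \emph{non-isotrivial} family, which moves non-trivially along $M/K$, so no such factor exists. Equivalently, the period domain $\mathsf{S}_r$ and all its sub-period-domains are uniformised by bounded symmetric domains, which contain no positive-dimensional compact complex subvariety; a compact $M_i$-orbit through the reference point would hence be a point, forcing $M_i$ to fix it and again to split off. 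Thus each $M_i$ is simple and non-compact.

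For the symplectic decomposition I would use that $V$ carries the polarization $\langle\,,\,\rangle$, a non-degenerate $M$-invariant alternating form, and underlies a weight-one $\mathbb{Q}$-Hodge structure whose generic Mumford--Tate group $\mathrm{MT}$ contains $\overline{\Upsilon}^{\,\mathbb{Q}}$ as a normal subgroup of $\mathrm{MT}^{\mathrm{der}}$ (Deligne--André). By Deligne's description of such VHS the Hodge structure $V$ splits, up to isogeny, into $\mathbb{Q}$-sub-Hodge-structures each of which is moved by precisely one simple factor of the adjoint group --- this is exactly the phenomenon underlying the local product structure of \S.\,\ref{s:ugly}. Re-indexing the factors accordingly we get $M_i$ acting non-trivially only on a subspace $V_i$ with $V=\bigoplus_i V_i$; distinct $V_i$ are $\langle\,,\,\rangle$-orthogonal because the polarization is a morphism of Hodge structures, so $\langle\,,\,\rangle$ restricts to a non-degenerate alternating form on each $V_i$, each $V_i$ has even dimension, and $\sum_i\dim V_i=2r$ gives $k\leq r$. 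The ``odd Mumford--Tate type'' of each factor is then just the statement that the weight is odd: $h(-1)=-\mathrm{Id}_V$, so each representation $V_i$ of $M_i$ is of symplectic rather than orthogonal type, and the list of admissible pairs $(M_i,V_i)$ is the one from Albert's/Satake's classification of weight-one Hodge representations.

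The step I expect to be the real obstacle is the compatible splitting of $V$: abstract representation theory of the semisimple group $M$ would in principle allow irreducible constituents of $V$ that are external tensor products of non-trivial representations of two distinct simple factors, and ruling these out for the weight-one VHS at hand genuinely needs the Hodge-theoretic input (Deligne's semisimplicity, normality of $\overline{\Upsilon}^{\,\mathbb{Q}}$ in $\mathrm{MT}^{\mathrm{der}}$, and the classification of Mumford--Tate groups of abelian varieties) rather than a soft argument; it is also the place where the factor count $k\leq r$ is really pinned down. A secondary subtlety, again Hodge-theoretic, is the no-compact-factors claim: since $\Upsilon$ can be an irreducible arithmetic lattice in a product, discreteness inside $Sp(2r,\R)$ alone does not forbid compact factors, and one must appeal to the period-domain geometry (or to Deligne's finiteness theorem for monodromy) as above.
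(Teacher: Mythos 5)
Your overall strategy---structure theorem plus Deligne semisimplicity plus the classification of weight-one Hodge representations---is exactly the route the paper has in mind: the paper supplies no argument of its own, stating the corollary immediately after the structure theorem of \S.\,\ref{s:structurethm} and deferring the list of admissible pairs $(M_i,V_i)$ to the tables in \cite{milneShi,griffithsMT}. The difficulty is that the two steps you yourself flag as delicate are precisely where your argument does not close, and they cannot be closed by the general Hodge-theoretic inputs you invoke. Your non-compactness argument is circular: a compact simple factor $M_i$ contributes nothing to the symmetric space $M/K$, so non-triviality of the period map along $M/K$ places no constraint on it, and you are not free to ``absorb'' such a factor into $H$ --- $H$ is by construction a complement to $M\equiv\overline{\Upsilon}^{\,\mathbb{Q}}(\R)$ inside $Sp(2r,\R)$, whereas the putative compact factor sits inside the Zariski closure of the actual monodromy and cannot be moved out of it.

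The second step fails for a sharper reason: the factor-by-factor splitting $V=\oplus_i V_i$ is \emph{not} a consequence of Deligne semisimplicity, André's normality of $\overline{\Upsilon}^{\,\mathbb{Q}}$ in $\mathrm{MT}^{\mathrm{der}}$, and weight one. Mumford's classical families of abelian fourfolds over compact Shimura curves are non-isotrivial weight-one polarized $\mathbb{Q}$-VHS over a quasi-projective base whose algebraic monodromy group is $\mathbb{Q}$-simple with real points isogenous to $SU(2)\times SU(2)\times SL(2,\R)$, acting irreducibly on the $8$-dimensional $V$ through the external tensor product of three $2$-dimensional representations; this violates simultaneously the no-compact-factor claim and the direct-sum claim, so the statement is simply not a theorem at the level of generality your proof works at. To establish the corollary one must either restrict from the outset to the ``odd type'' pairs of the cited tables (each $V_i$ symplectic for a single non-compact Hermitian factor), or inject input specific to special geometry which your proposal never uses: for instance, a non-isotrivial special geometry always contains in its monodromy group a rank-one unipotent (a power of a local monodromy of type $I_n$ or $I_n^*$), and an element $T$ with $\mathrm{rank}(T-\boldsymbol{1})=1$ cannot exist in a subgroup acting through an external tensor product of representations of several simple factors, which is what actually excludes the Mumford-type constituents. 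As it stands, your write-up correctly identifies where the danger lies but then asserts, incorrectly, that the general theory disposes of it.
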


For the list of allowed pairs $(M_i,V_i)$ see e.g.  the table in \cite{milneShi} or \cite{griffithsMT}.
For a given rank $r$ there are finitely many allowed groups $M$, and we may classify 
the special geometries group by group. The simpler $\R$-algebraic monodromy $M$
the simpler the corresponding classification program.
The case $M=\{1\}$ is specially easy.

This has a consequence already noted in \cite{M13}:
\begin{corl}
{\bf(1)} If the discriminant $\cf$ is a simple normal crossing divisor, the geometry is a product of
$r$ rank-1 geometries; {\bf(2)} let $\mathscr{R}=\C[u_1,\cdots,u_r, v_1,\cdots, v_s]$
and $\cd=(f_1)+(f_2)$ where $f_1\in \C[u_1,\cdots,u_r]$, $f_2\in \C[v_1,\cdots,v_s]$.
Then the special geometry is the product of two special geometries with
chiral rings  $\mathscr{R}_1\in \C[u_1,\cdots,u_r]$, $\mathscr{R}_2\in \C[v_1,\cdots,v_s]$.
\end{corl}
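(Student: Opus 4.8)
The plan is to establish \textbf{(2)} first and to deduce \textbf{(1)} from it by induction on the rank, peeling off one rank-$1$ factor at a time. For \textbf{(2)} write $\mathscr{C}=\mathbb{C}^r_u\times\mathbb{C}^s_v$. Since $f_1$ involves only the $u$'s and $f_2$ only the $v$'s, the discriminant complement is \emph{literally} a Cartesian product of quasi-projective varieties,
\[
\mathring{\mathscr{C}}=\big(\mathbb{C}^r\setminus\{f_1=0\}\big)\times\big(\mathbb{C}^s\setminus\{f_2=0\}\big)\equiv\mathring{\mathscr{C}}_1\times\mathring{\mathscr{C}}_2,
\]
hence $\pi_1(\mathring{\mathscr{C}})=\pi_1(\mathring{\mathscr{C}}_1)\times\pi_1(\mathring{\mathscr{C}}_2)$ and the monodromy group is $\Gamma=\langle\Gamma_1,\Gamma_2\rangle$ with $\Gamma_1,\Gamma_2$ (the images of the two $\pi_1$-factors) commuting.

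The next step is to upgrade this to an \emph{internal direct product} splitting compatible with the symplectic form. By the semisimplicity \textbf{Fact}, $V=H_1(\mathscr{X}_{u_\star},\mathbb{Q})$ is a sum of $\Gamma$-irreducibles, and as $\Gamma$ is a quotient of $\Gamma_1\times\Gamma_2$ each one is an external tensor product $W_1\otimes W_2$ of irreducibles of the two factors. Here I would exploit that $V$ underlies a polarized $\mathbb{Q}$-VHS of \emph{weight $1$}: by Deligne semisimplicity the decomposition is one of VHS, and by the rigidity of polarized VHS (structures on a fixed local system with fixed Hodge numbers form a discrete set) each summand is itself of product type $\mathrm{pr}_1^{*}Y_1\otimes\mathrm{pr}_2^{*}Y_2$; since a tensor product of Hodge structures has weight equal to the sum of the weights, weight $1$ forces one factor to have weight $0$, i.e.\ to be a \emph{constant} Tate-type VHS on which the corresponding $\Gamma_i$ acts trivially. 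Collecting summands (and first splitting off the genuinely $\Gamma$-trivial part, which by \textbf{Fact~\ref{uni2}} is a free factor), one gets $V=V_1\oplus V_2$ with $\Gamma_2$ trivial on $V_1$ and $\Gamma_1$ trivial on $V_2$; the polarization pairs $V_1$ with $V_2$ to zero, so both are symplectic and $\Gamma=\Gamma_1\times\Gamma_2$ internally.

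Now feeding $\Gamma=\Gamma_1\times\Gamma_2$ into the \textbf{Corollary} after the structure theorem (item (3)) gives $p=(p_1,p_2)$; and since $V_1$ restricted to any slice $\{u\}\times\mathring{\mathscr{C}}_2$ has trivial monodromy, the theorem of the fixed part makes it \emph{constant} there, so its Hodge filtration is independent of the $\mathring{\mathscr{C}}_2$-direction and $V_1$ is pulled back from $\mathring{\mathscr{C}}_1$ (symmetrically $V_2$ from $\mathring{\mathscr{C}}_2$). Thus $\mathring{\mathscr{X}}\to\mathring{\mathscr{C}}$ is, up to isogeny, a product $\mathring{\mathscr{X}}_1\times\mathring{\mathscr{X}}_2$, and counting the ($A$-period) local coordinates contributed by each factor forces $\tfrac{1}{2}\dim V_1=r$, $\tfrac{1}{2}\dim V_2=s$. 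The remaining — and I expect hardest — step is to promote this to an \emph{honest} product of special geometries over the affine bases: one must show that the integral polarization lattice splits along $V_1\oplus V_2$ and that the regular extension over $\cd$ splits as well. Here I would invoke the extra rigidity built into special geometry: the section $s$, the principal polarization, and above all the Euler vector $\ce$ — with $\lambda=\iota_\ce\Omega$ and $\Omega$ being $\mathbb{C}^\times$-equivariant and decomposing as $\lambda_1+\lambda_2$, $\Omega_1+\Omega_2$ along $V_1\oplus V_2$ — rigidify the integral structure, and $\mathscr{X}$ being the unique regular (N\'eron-type) model of $\mathring{\mathscr{X}}$ over $\mathscr{C}$ forces its extension to split. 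With $\mathscr{R}=\mathbb{C}[u_1,\dots,u_r]\otimes\mathbb{C}[v_1,\dots,v_s]$ this yields $\mathscr{X}=\mathscr{X}_1\times\mathscr{X}_2$ with the stated chiral rings, proving \textbf{(2)}.

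Finally \textbf{(1)}: a homogeneous divisor in $\mathscr{C}=\mathbb{C}^r$ has the origin on each component, so simple normal crossing \emph{at} the origin forces each component to be smooth there and bounds their number by $r$; a grading-preserving ($\mathbb{C}^\times$-equivariant) coordinate change then straightens them to coordinate hyperplanes, $\cd=\{u_1\cdots u_k=0\}$, $k\le r$. If $k<r$, the variables $u_{k+1},\dots,u_r$ are absent from $\cd$, the corresponding factor has empty discriminant and is therefore free by the \textbf{Fact}, and by the degenerate ($f_1$ a unit) case of \textbf{(2)} — equivalently \textbf{Fact~\ref{uni2}} — we split off a product of $r-k$ free rank-$1$ geometries. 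If $k=r$ we induct: with $\mathscr{R}=\mathbb{C}[u_1]\otimes\mathbb{C}[u_2,\dots,u_r]$ and $\cd=(u_1)+(u_2\cdots u_r)$, \textbf{(2)} splits off a rank-$1$ factor with discriminant $(u_1)$ and leaves a rank-$(r-1)$ special geometry with homogeneous SNC discriminant $\{u_2\cdots u_r=0\}$, to which the inductive hypothesis applies; $r=1$ is vacuous. The genuine obstacle in the whole argument is the ``upgrade'' step in \textbf{(2)} — passing from the soft splitting of the $\mathbb{Q}$-VHS and of the Abelian family up to isogeny over the open base to a literal product decomposition of the full special geometry, controlling the polarization lattice and the regular extension across $\cd$; the rest is topology of a product and standard VHS theory (semisimplicity, rigidity, the theorem of the fixed part).
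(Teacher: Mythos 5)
The paper itself gives no proof of this Corollary: it is presented as an immediate consequence of the structure theorem (via the preceding Corollary, ``$\Gamma$ Abelian $\Rightarrow$ constant period map, $\Gamma=\Gamma_1\times\Gamma_2$ $\Rightarrow$ the period map decomposes'') and is attributed to \cite{M13}, so your attempt must stand on its own. Your skeleton — product base $\Rightarrow$ product $\pi_1$ $\Rightarrow$ splitting of the monodromy/VHS, then (1) from (2) by a $\C^\times$-equivariant straightening of the SNC divisor and induction — is sensible, and those outer steps are fine. The genuine gap is exactly the step you yourself flag as ``hardest'', and it is not a technicality: passing from a splitting of the $\mathbb{Q}$-VHS (up to isogeny, over $\mathring{\mathscr{C}}$) to a literal product of special geometries with the stated chiral rings is precisely what \S.\,\ref{s:ugly} warns is \emph{not} implied by reducible monodromy, because the Coulomb coordinates may mix the two sets of special coordinates even when the periods and the local prepotential split globally. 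Your argument never uses the two hypotheses that must do the work — that $\mathscr{R}$ is a polynomial ring in the two separate variable sets and that $\cd=(f_1)+(f_2)$ with $f_1,f_2$ in separate sets — beyond computing $\pi_1$. In particular the step ``counting the $A$-period coordinates forces $\tfrac12\dim V_1=r$'' does not follow from $V_1$ being pulled back from $\mathring{\mathscr{C}}_1$: a free factor has a \emph{constant} VHS, pulled back from a point, and yet its periods of $\iota_\ce\Omega$ are nonconstant Coulomb coordinates; so a pulled-back sub-VHS can perfectly well have SW periods depending on the other factor. What actually has to be shown is that the $u_i$ are functions of $a_{(1)}$ alone and the $v_j$ of $a_{(2)}$ alone — this is the real content, and it is missing.

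Two further steps are asserted rather than proved and would fail as stated. First, ``weight $0$ forces a constant Tate factor on which the corresponding $\Gamma_i$ acts trivially'': a polarized weight-$0$ factor has unitary — for an integral structure, finite — monodromy, not trivial monodromy, so at best you split after a finite cover; such finite twists are exactly the quasi-isotrivial/reducible-indecomposable phenomenon analyzed in \S.\,\ref{reddd}, and ruling them out here again needs the Kodaira form of the local monodromies around the components of $(f_1)$ and $(f_2)$. Relatedly, the claim that each irreducible summand is an external tensor product \emph{of VHS} (not merely of local systems) needs an argument; the vague appeal to ``rigidity of polarized VHS'' does not supply it. Second, the appeal to ``$\mathscr{X}$ being the unique regular (N\'eron-type) model'' to split the integral lattice and the extension across $\cd$ is unavailable: the paper stresses that the Abelian family is not determined by its period map and that finitely many special geometries can share it, and in rank $>1$ there is no off-the-shelf uniqueness statement for regular Lagrangian extensions — the splitting of the polarization lattice, of $\Omega$ and $\ce$, and of the singular fibers has to be argued along the lines of \S.\,\ref{s:sing}--\ref{s:stable}, not invoked. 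Finally, a small caveat on (1): when $k<r$ you split off a \emph{free} factor of rank $r-k$, which is a product of rank-$1$ geometries only if its fixed polarized Abelian fiber splits into elliptic curves, so that case must either be excluded (e.g.\ all $\Delta_i>1$) or handled separately.
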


\begin{exe} The discriminant of an \emph{indecomposable} rank-2 geometry
 contains \emph{at least}:
\begin{itemize}
\item[(1)] one knotted component;
\item[(2)] \textbf{\emph{or}} two
unknotted components crossing non-transversely;
\item[(3)] \textbf{\emph{or}} three unknotted components with
pairwise transverse crossings at the origin.
\end{itemize} 
The geometries of the SCFT numbered
 $\#20$, $\#21$, $\#60$ in the tables of \cite{Martone} have a minimal discriminant of type (1);
 those numbered $\#1$, $\#7$, $\#11$, $\#14$, $\#16$, $\#18$, $\#51$-$\#53$, $\#61$-$\#65$, $\#68$
 have a minimal discriminant of type (2); the ones numbered
 $\#24$, $\#27$, $\#30$ a minimal discriminant of type (3). All other SCFT
 in the tables of \cite{Martone} have non-minimal discriminants with more components.
 Model $\#32$ is not indecomposable.
\end{exe}
The case of higher $r$ will be sketched in section \ref{s:r3}.

\subsubsection{$\Gamma$ finite: isotrivial special geometries}
\label{s:isotrivial}

 The
isotrivial special geometries have several (equivalent)
 characterizations:
 \begin{fact} For a $\C^\times$-isoinvariant special geometry
 the following are equivalent:
 \begin{itemize}
 \item[\rm(1)] $\Gamma$ is finite or, equivalently, $M=\{1\}$;
 \item[\rm(2)] all smooth fibers of the special geometry $\mathscr{X}\to\mathscr{C}$
 are isomorphic as polarized Abelian varieties to a fixed variety $A$;
 \item[\rm(3)] there is a \emph{finite} cover $f\colon\check{\mathscr{C}}\to \mathscr{C}$, branched only over the discriminant, such that the pulled-back fibration is trivial, i.e. $f^*\mathscr{X}= A\times \check{\mathscr{C}}$ for a fixed polarized Abelian variety $A$. In particular the special geometry
 $\pi\colon\mathscr{X}\to\mathscr{C}$ is a global quotient  {\rm\cite{VHS4,griffithsMT,griffithsMT2,periods}}
 \be\label{juu7651z}
 \begin{gathered}
 \xymatrix{A\times \check{\mathscr{C}}\ar@{->>}[d]\ar[rr]^{p_2} && \check{\mathscr{C}}\ar@{->>}[d]\\
 \mathscr{X}\equiv (A\times \check{\mathscr{C}})/\Gamma \ar[rr]^\pi&& \mathscr{C}=\check{\mathscr{C}}/\Gamma}
 \end{gathered}
 \ee
   \item[\rm(4)] all local monodromies around the discriminant components $\cd_i$ have finite order. 
   \end{itemize}
 \end{fact}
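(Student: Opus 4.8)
The plan is to run the cycle $(1)\Rightarrow(2)\Rightarrow(3)\Rightarrow(4)\Rightarrow(1)$, disposing first of the equivalence ``$\Gamma$ finite $\iff M=\{1\}$'' built into statement $(1)$. That equivalence is immediate: if $\Gamma$ is finite, its finite-index neat normal subgroup $\Upsilon\triangleleft\Gamma$ is torsion-free, hence trivial, so $M=\overline{\Upsilon}^{\,\mathbb{Q}}(\R)=\{1\}$; conversely $\Upsilon\subseteq\overline{\Upsilon}^{\,\mathbb{Q}}(\R)=\{1\}$ forces $\Gamma=\mathsf{G}$ finite. For $(1)\Rightarrow(2)$ I would quote the \textbf{Corollary} to the structure theorem of \S\ref{s:structurethm}: $\Gamma$ finite makes the period map $p$ (hence $\check p$) constant, which by definition means all smooth fibers are isomorphic as polarized Abelian varieties.

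For $(2)\Rightarrow(3)$: since the period point of every smooth fiber equals a fixed point of $\mathsf{S}_r$, the monodromy $\varrho$ fixes it and takes values in the finite stabilizer $\mathsf{Aut}(A)\subset Sp(2r,\Z)$; in particular $\Gamma$ is finite (which also records $(2)\Rightarrow(1)$). Let $f\colon\check{\mathscr{C}}\to\mathscr{C}$ be the normalization of $\mathscr{C}$ in the function field of the finite \'etale cover of $\mathring{\mathscr{C}}$ attached to $\ker\varrho\triangleleft\pi_1(\mathring{\mathscr{C}})$; it is finite with deck group $\Gamma$ and branched only over $\cd$. Over the preimage of $\mathring{\mathscr{C}}$ the pulled-back family has trivial monodromy and constant period point, so it is the constant family; since the monodromies around the branch locus are also trivial, it has good reduction there (N\'eron--Ogg--Shafarevich), and the constant family extends across the smooth part of $f^{-1}(\cd)$ and then across the residual locus of codimension $\geq 2$ by the extension property of Abelian schemes over a normal base, giving $f^*\mathscr{X}=A\times\check{\mathscr{C}}$ and $\mathscr{X}=(A\times\check{\mathscr{C}})/\Gamma$ by Galois descent --- the diagram displayed in the statement. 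Then $(3)\Rightarrow(4)$ is trivial: a trivial family over the connected $\check{\mathscr{C}}$ has trivial monodromy, so $\ker\varrho$ contains the finite-index image $f_*\pi_1$, $\Gamma$ is finite, and each local monodromy $\varrho_i\in\Gamma$ has finite order.

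The substance is $(4)\Rightarrow(1)$. Assume every $\varrho_i$ has finite order. Then the generic singular fiber along each $\cd_i$ is of a semisimple Kodaira type (Table \ref{table}), so the degree-$m_i$ cyclic base change branched over $\cd_i$, with $m_i=\mathrm{ord}(\varrho_i)\in\{2,3,4,6\}$, removes the degeneration (cf.\ the table in \S\ref{s:stable}.A). Forming the fiber product over $i$ of these covers --- each $\C^\times$-equivariant because $\cd_i$ is homogeneous --- yields a $\C^\times$-equivariant finite cover $g\colon\check{\mathscr{C}}\to\mathscr{C}$, branched only over $\cd$, on which the pulled-back family acquires trivial local monodromy around the preimage of every $\cd_i$; hence it extends, with its $\C^\times$-action via the functorial good-reduction model, to a smooth family of polarized Abelian varieties over the complement of a $\C^\times$-invariant subset of codimension $\geq 2$ of $\check{\mathscr{C}}$, and the underlying polarized $\Z$-VHS then extends across that subset, so its period map extends to a holomorphic map $\hat p\colon\check{\mathscr{C}}\to\mathsf{S}_r$. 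Because $\exp(z\,\ce)$ acts by polarized isomorphisms, $\hat p$ is constant on each $\C^\times$-orbit; since the cone vertex $\check 0\in\check{\mathscr{C}}$ lies in the closure of every $\C^\times$-orbit, continuity forces $\hat p\equiv\hat p(\check 0)$. Hence the family over $\check{\mathscr{C}}$, and therefore $\mathscr{X}$ itself, has all smooth fibers isomorphic to a fixed $A$ --- this is $(2)$ --- so by $(2)\Rightarrow(1)$ above $\Gamma$ is finite.

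The main obstacle is $(4)\Rightarrow(1)$, and inside it the two extension steps: pushing the good-reduction Abelian family (together with its $\C^\times$-equivariance) across $g^{-1}(\cd)$ after the cyclic base change, via N\'eron--Ogg--Shafarevich and semistable reduction, and then extending the resulting polarized $\Z$-VHS --- equivalently its period map --- across the leftover locus of codimension $\geq 2$, the cone vertex included. Once $\hat p$ is known to be defined on the whole affine cone $\check{\mathscr{C}}$, the homogeneity argument collapsing it is immediate; the delicate point is that this homogeneity lives on the cone and is lost on the projective quotient, so removing the discriminant degenerations by a cover is essential rather than cosmetic, and the care required is exactly in making that removal compatible with the $\C^\times$-structure up to and including the vertex.
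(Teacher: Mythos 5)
Your handling of $(1)\Leftrightarrow(2)\Leftrightarrow(3)\Rightarrow(4)$ is essentially the paper's: structure theorem for $(1)\Rightarrow(2)$, the Galois cover of the complement attached to $\ker\varrho$ extended by Riemann existence/normalization for $(2)\Rightarrow(3)$, and the observation that a constant (lifted) period point forces $\Gamma\subset\mathsf{Aut}(A)$, which is the paper's $(3)\Rightarrow(1)$. Where you genuinely diverge is the crucial implication $(4)\Rightarrow(1)$. The paper also starts from a finite branched cover killing the local monodromies, but then argues through the \emph{special K\"ahler metric}: for the $*$-type (and, after a further cover, the $II$, $III$, $IV$) components the pulled-back metric extends regularly over the preimage of $\cd$, and a regular \emph{conic} special K\"ahler metric must be flat, forcing $\tau_{ij}$ locally constant. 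You instead stay entirely on the Hodge-theoretic side: after the $\C^\times$-equivariant cyclic base changes you extend the family (good reduction across the smooth part of the preimage of $\cd$) and then the period map across the residual codimension-$\geq2$ locus, and conclude by constancy of $\hat p$ on $\C^\times$-orbits plus continuity at the cone vertex. Both proofs use the $\C^\times$-isoinvariance in an essential way -- the paper through conicity of the metric, you through the vertex lying in every orbit closure -- and both are at the same ``Argument'' level of rigor. What your route buys is independence from the special K\"ahler machinery (only VHS extension theorems are used); what it costs is the extension step at the non-smooth points of the cover $\check{\mathscr{C}}$ (intersections of discriminant components, the vertex itself), where the local fundamental group of the complement of the bad set need not be trivial, so the ``Hartogs for bounded lifts to $Sp(2r,\R)/U(r)$'' argument must be run after normalization and a further local finite cover, with descent through the finite local monodromy image. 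You flag this as the delicate point, correctly; it is the only place where your sketch carries more technical debt than the paper's metric argument, and it is repairable by the standard removable-singularity theorems for period maps with finite local monodromy.
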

$(1)\Leftrightarrow(2)\Leftrightarrow(3)$ is very well-know in VHS
 and valid for \emph{all} holomorphic
 fibration in polarized Abelian varieties over a quasi-projective base. We summarize the arguments:
 (1) $\Leftrightarrow$ (2) is the structure theorem together with the Godemen criterion \cite{morris}.
 (1) $+$ (2) $\Rightarrow$ (3) follows from the obvious Galois cover of the complement
 $\check{\mathring{\mathscr{C}}}= \widetilde{\mathring{\mathscr{C}}}/\mathrm{ker}\,\varrho$
 with finite deck group $\Gamma$, together with Riemann's existence theorem \cite{EGA}
 which extends it to a finite \emph{branched} cover $\check{\mathscr{C}}\to\mathscr{C}$.
 (3) $\Rightarrow$ (1) follows from the fact that $\Gamma$ is a subgroup of the finite group of
 polarized automorphisms of $A$. 
 
 The tricky (and important for non-perturbative physical considerations) part of the statement is the equivalence of (4) with one (hence all) of the first 3 conditions. It is the only point where one uses that  $\mathscr{X}\to\mathscr{C}$ is a $\C^\times$-isoinvariant special geometry and not merely a  fibration in polarized Abelian varieties.  (1) $\Rightarrow$ (4) is trivial, and one needs to prove
 (4) $\Rightarrow$ (1). There is a \emph{finite} cover $\mathscr{B}\to\mathscr{C}$, branched only on the discriminant,
 such that the pull-back of the Abelian family to $\mathscr{B}$ has trivial local monodromies.  
The pull-back of the special K\"ahler metric extends to a regular metric over
the inverse image of the discriminant components of types $I_0^*$, $II^*$, $III^*$, $IV^*$.
If we have only local monodromies of these types, the pulled back special K\"ahler metric is conic and regular,
hence it is the flat metric, which implies that the period $\tau_{ij}$ is constant. Otherwise
we have a covering $\mathscr{B}\to\mathscr{A}$, branched only on discriminant components of types $II$, $II$ and $IV$, such that the singular special K\"ahler metric on $\mathscr{B}$ is the pull back of a regular one in $\mathscr{A}$ which then must be flat. In all cases the period matrix $\tau_{ij}$ is locally constant.

 Many $\cn=2$ SCFT have isotrivial geometries:
 \textit{(i)} all $\cn\geq3$ models
 (regarded as special $\cn=2$ SCFT); \textit{(ii)}
all models with $\varkappa\not\in\{1,2\}$; \textit{(iii)}
all rank-1 models. In principle
the isotrivial special geometries can be constructed and classified
 in all ranks $r$ \cite{Cecotti:2021ouq2}. More details will be given elsewhere.

\subsection{``Stratification'' of the special geometry}
The Coulomb branch $\mathscr{C}$ is naturally stratified \cite{M12,M13}. The closure of a codimension-$s$
stratum is a connected component of a multiple intersection of 
discriminant components $\cd_{1_1}\cap\cdots\cap \cd_{i_s}$ of codimension $s$.
To this stratification there corresponds a ``stratification'' of the special geometry
in the sense that the special geometry $\mathscr{X}\to \mathscr{C}$
induces a Lagrangian fibration
$\mathscr{X}_{a_s}\to\mathscr{C}_{a_s}$ over each codimension-$s$ stratum $\mathscr{C}_{a_s}$ whose generic fibers are polarized Abelian varieties
of dimension $\dim \mathscr{C}_{a_s}$.

Let $\cd_{a_1}$ be an irreducible component of the discriminant and $u\in \cd_{a_1}$ a general point.
To $u$ we may associate a polarized Abelian variety $A_u$,
 namely the Albanese variety of the normalization of a connected
component of the fiber $\mathscr{X}_u$. $A_u$ is well-defined
since it does not depend on the chosen component \cite{oguiso1,oguiso2,sawon}. We can glue the various $A_u$'s
along $\cd_{a_1}$ to get a fibration $\mathscr{X}_{a_1}\to \cd_{a_1}\equiv\mathscr{C}_{a_1}$ in polarized Abelian varieties which we claim is a special geometry
in its own right. The geometry $\mathscr{X}_{a_1}\to \cd_{a_1}\equiv\mathscr{C}_{a_1}$ may be seen 
as a symplectic reduction of $\mathscr{X}\to \mathscr{C}$
in a sense which is more general than the usual Marsden-Weinstein-Meyer (MWM) one \cite{sympp,sasaki}
which applies when the usual regularity assumptions of the MWM theorem \emph{do not} hold, but the Lagrangian fibration is holomorphic.
 
 \subsubsection{A generalization of Marsden-Weinstein-Meyer symplectic reduction}
We recall how it works the usual MWM symplectic reduction of a phase space $X$ with symplectic form $\Omega$:
first one considers a level subset $\iota\colon h^{-1}(c)\hookrightarrow X$
of a Hamiltonian function $h\colon X\to \C$, and then one constructs the space $Y$
which parametrizes the orbits
of the flow generated by the Hamiltonian vector $v(h)$ dual to $dh$. In general $Y$ may be quite weird.
In the lucky case when the orbit space $Y$ happens to be a nice manifold, it inherits a symplectic structure $\omega$ from the one of $X$
such that
\be\label{uuuyqrt}
\iota^*\Omega=\pi^*\omega
\ee 
where $\pi\colon h^{-1}(c)\to Y$ is the canonical projection sending a point to the orbit it belongs. In this regular case $Y$ is a phase space in its own right, called the \emph{symplectic reduction} of $X$ along the hypersurface $h=c$. The typical application is when $h$ is the momentum map
of a $U(1)$ action on $X$.
 Unfortunately, most often $Y$ is
\emph{not} a nice manifold, indeed it can be a very ugly space, and the symplectic reduction does not make sense unless the necessary regularity conditions are met.
\medskip

To get the special geometric ``stratification'' one may try to apply the MWM strategy to the zero level set of the prime homogeneous Hamiltonian $p_a\in\mathscr{R}$ such that $\cd_a=(p_a)$. However now the level subset
$p_a^{-1}(0)\subset \mathscr{X}$ is non-smooth by definition. The divisor $p_a^{-1}(0)$
contains several non-reduced irreducible components intersecting each other in a Kodaira pattern. 
The zero-section $s$ is contained in one component of multiplicity 1
of  $p_a^{-1}(0)$.
We focus on the normalization of this zero component, 
and consider the space parametrizing its orbits under the flow generated by the vector $v(p_a)$. 
Away from special points in $\cd_a$, this orbit space $\mathscr{X}_a$ is smooth. In facts 
the normalization of the zero component of the fiber over
the generic point $u\in \cd_a$ is a fibration over the Albanese polarized Abelian variety $A_u$ 
and the vector $v(p_a)$ is tangent to the fiber \cite{oguiso1},
so we may identify $A_u$ with the fiber at $u\in\cd_a$
of the orbit space $\mathscr{X}_a\to \cd_a$.
In other words $\mathscr{X}_a\to \cd_a$ is the fibration with generic fiber $A_u$. 
Note that if we start with the
normalization of any other irreducible component we
would get the same orbit space $\mathscr{X}_a$, which then is an `intrinsic' property of the divisor $\cd_a$.
It is easy to check tha $\mathscr{X}_a$ carries a natural symplectic form such that \eqref{uuuyqrt} holds.\footnote{\ As mentioned before, in the stable case (monodromy type $I_m$) there is some additional subtlety related to the type of the characteristic cycle. This does not affect the conclusion, just requires some more care in the precise definitions of the various entities.} 
$\mathscr{X}_a$ may (in facts, \emph{should}) have singularities
over a codimension-$1$ locus in $\cd_a$. If these singularities admit a crepant resolution,
the resolved $X_a$ is a \emph{bona fide} special geometry of rank $(r-1)$ and
we can consider recursively its own ``stratification''. We assume that this can be done.

\subsection{``Stratification'' vs. Coulomb dimensions}\label{s:straDim}

``Stratification'' gives new information on the allowed Coulomb dimensions.
The formulae of \cite{caorsi} reviewed in \S.\,\ref{jjuutreq} assumed the existence of
a \emph{regular} coordinate axis $H_i$ not contained in the
discriminant. While this is often the case, there are many geometries where 
no such ``good'' axis is present, and we need a more general analysis.

We may assume without loss that $\Delta_j>1$ for all $j$.
Fix an axis $H_i$ and let $\ell$ be the smallest integer such that $H_i$ is contained
in a codimension-$\ell$ (open) stratum; clearly $1\leq \ell\leq r$.
Let $u\in H_i$ be a generic point.
Let $\mathscr{X}_{u,0}$ be the unique component  of the fiber which contains the smooth point $s(u)$. 
The normalization of $\mathscr{X}_{u,0}$ is
a fibration over an Albanese variety $A_{u}$ of dimension $\ell$. 
$T_{s(u)} \mathscr{X}_{u,a}$ is isomorphic to $T_u^*\mathscr{C}$.
The automorphism $\xi_u=\exp(2\pi i\, \ce/\Delta_i)$ fixes $u$ and acts
linearly on
$T_{s(u)}\mathscr{X}_{u,a}$ with eigenvalues
\be\label{jytqert}
\exp(2\pi i (\Delta_j-1)/\Delta_i),\quad j=1,2,\cdots, r
\ee
Since $\xi_u$ is an automorphism of the fiber, it preserves its Albanese variety $A_u$
and acts on it as a polarized automorphism for some \emph{polarization}.
Hence $\ell$ out of the $r$ eigenvalues \eqref{jytqert} are the eigenvalues of
the analytic representation $\sigma$ of a polarized automorphism of the Abelian variety
$A_u$ of dimension $\ell$.
Moreover $\xi_u$ acts trivially on $\C\, du_i|_u \subset T_u\mathscr{C}$
which corresponds to a vector field tangent to the Albanese variety. Hence
$\exp(-2\pi i/\Delta_i)$ is one of the eigenvalues of $\sigma$.
Writing $\Delta_i=m/n$ with $\gcd(m,n)=1$ and $n<m$, we get
$\phi(m)\leq 2\ell$.

\begin{fact}Let $H_i$ be an axis of dimension $\Delta_i$ which is new in rank $\ell$. The axis $H_i$ is not contained 
in a stratum of $\mathscr{C}$ of dimension $<\ell$.
In particular, if $\Delta_i$ is a new dimension
$H_i$ does not belong to the discriminant. 
If $\Delta_i$ is new in rank $r$, the full $r$-tuple $\{\Delta_1,\dots,\Delta_r\}$
is given by the formulae of {\rm\S.\,\ref{jjuutreq}}. If $\Delta_i$ is new in rank $\ell$,
at least a subset of $\ell$ dimensions are given by those formulae
with $r$ replaced by $\ell$.
\end{fact}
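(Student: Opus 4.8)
The plan is to extract all four assertions from the single inequality established in the computation just above, namely $\phi(m)\le 2\dim A_u$ (the quantity called $\ell$ there), where $\Delta_i=m/n$ is in lowest terms with $n<m$ and $A_u$ is the Albanese variety of the component $\mathscr{X}_{u,0}$ of the fibre over a generic point $u\in H_i$ containing $s(u)$. The first thing I would do is pin down the geometric meaning of $\dim A_u$: a generic point $u$ of the homogeneous curve $H_i$ lies in a unique open stratum $\mathring{\mathscr{C}}_a$, the induced special geometry $\mathscr{X}_a\to\mathscr{C}_a$ has smooth polarized Abelian fibres of dimension $\dim\mathscr{C}_a$ away from its own discriminant, and since $u$ is generic on $H_i$ it misses that discriminant; hence $A_u$ is exactly the fibre of $\mathscr{X}_a$ over $u$ and $\dim A_u=\dim\mathscr{C}_a$. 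Thus $\dim A_u$ is the dimension of the smallest stratum of $\mathscr{C}$ whose closure contains $H_i$; in particular $\dim A_u=r$ exactly when $u\notin\cd$, i.e.\ when $H_i$ is a regular axis.

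Given this dictionary, the first two assertions are immediate. If $\Delta_i$ is new in rank $\ell$ then its numerator satisfies $\phi(m)=2\ell$, so the inequality gives $\dim A_u\ge\ell$: the smallest stratum containing $H_i$ has dimension $\ge\ell$, i.e.\ $H_i$ lies in no stratum of dimension $<\ell$. When $\ell=r$ this forces $\dim A_u=r$, hence $H_i$ is a regular axis, so the hypotheses of \S\,\ref{jjuutreq} hold verbatim and eq.\,\eqref{rtupleorroots} determines the whole $r$-tuple $\{\Delta_1,\dots,\Delta_r\}$, up to the finite ambiguity discussed there; this is the third assertion.

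For the last assertion ($\Delta_i$ new in rank $\ell<r$, so $\dim A_u\ge\ell$) the idea is to redo the eigenvalue computation inside $A_u$. The automorphism $\xi_u\equiv\exp(2\pi i\,\ce/\Delta_i)$, which fixes $u$, restricts to a polarized automorphism of $A_u$, and the tangent vector corresponding to $du_i$ under $\Omega$ is tangent to $A_u$; hence $\exp(-2\pi i/\Delta_i)=\exp(-2\pi i\,n/m)$, a primitive $m$-th root of unity generating $\mathbb{Q}(\zeta_m)$ of degree $\phi(m)=2\ell$, occurs among the eigenvalues of the analytic representation $\sigma(\xi_u|_{A_u})$. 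Since $\chi(\xi_u|_{A_u})\otimes_{\mathbb{Q}}\mathbb{C}\simeq\sigma(\xi_u|_{A_u})\oplus\overline{\sigma(\xi_u|_{A_u})}$ and $\chi$ preserves a lattice, the characteristic polynomial of $\sigma\oplus\bar\sigma$ has integer coefficients, so the $m$-th cyclotomic polynomial divides it and all $\phi(m)=2\ell$ primitive $m$-th roots occur in $\sigma\oplus\bar\sigma$; being closed under complex conjugation, at least $\ell$ of them already occur among the $\sigma$-eigenvalues $\{\exp[2\pi i(\Delta_j-1)/\Delta_i]\}_{j=1}^{r}$. For each corresponding index $j$, $\exp[2\pi i(\Delta_j-1)/\Delta_i]$ is a primitive $m$-th root while $\Delta_j$ is positive and bounded by the rank-$r$ constraints, so it is pinned down by the rank-$\ell$ form of the formulae of \S\,\ref{jjuutreq} --- the promised subset of $\ell$ dimensions. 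Equivalently, the $\mathbb{Q}(\zeta_m)$-isotypic part of $H_1(A_u,\mathbb{Q})$ under $\xi_u$ carves out an Abelian subvariety of $A_u$, of dimension a multiple of $\ell$, with complex multiplication by $\mathbb{Q}(\zeta_m)$, on which $\xi_u$ acts through the associated CM-type.

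The main obstacle is the geometric input of the first paragraph: that $\dim A_u$ equals the dimension of the open stratum through the generic point of $H_i$, and that $\Omega^{-1}(du_i)$ is tangent to the Albanese fibration. Both rely on the non-classical generalized Marsden-Weinstein-Meyer reduction underlying the ``stratification'' and on the local normal forms of simple singular fibres recorded in \S\,\ref{s:sing} and \S\,\ref{s:stable}; one has to check the genericity hypotheses there and ensure that one is not in the pathological reducible configurations of \textbf{Fact \ref{ugly}}, which would need a separate treatment. A minor secondary point is the bookkeeping in the last step --- extracting a subset of size exactly $\ell$ rather than of size $\dim A_u$; this is cleanest when $\exp(-2\pi i/\Delta_i)$ has multiplicity one in $\sigma(\xi_u)$, in which case the Abelian subvariety above has dimension exactly $\ell$.
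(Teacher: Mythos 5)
Your proposal is correct and follows essentially the same route as the paper: the Fact is read off from the inequality $\phi(m)\leq 2\ell$ (with $\ell=\dim A_u$ equal to the dimension of the smallest stratum containing $H_i$) obtained from the action of $\xi_u=\exp(2\pi i\,\ce/\Delta_i)$ on the Albanese variety $A_u$ and the eigenvalue $\exp(-2\pi i/\Delta_i)$ of its analytic representation, followed by the rank-$\ell$ specialization of the dimension formulae of \S.\,\ref{jjuutreq}. Your added care in identifying $\dim A_u$ with the stratum dimension and in the cyclotomic counting for the last assertion just makes explicit what the paper leaves implicit in the computation preceding the Fact.
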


In particular, when the rank is 2 and $\Delta_i$ is new  rank-2
the corresponding axis $H_i$ does not belong to the discriminant, and the dimension formulae of
\S.\,\ref{jjuutreq}
can be used giving a small set of possibilities for the second dimension.

 \section{Classification: the inverse problem}\label{s:invpro}
 
 One possible strategy to construct/classify all $\cn=2$ SCFTs is to
 construct first all possible $\C^\times$-isoinvariant
 special geometries and then work out the (finitely many)
 SCFT associated with each one of them. This program was
 initiated and carried on to a certain extend by Argyres and coworkers
(see \cite{M1,M2,M3,M4,M5,M6,M7,M8,M9,M10,M11,M12,M13} and references therein). The first step in such a program requires
 to construct all $\C^\times$-isoinvariant special geometries, 
 or at least to chart their ``geography/zoology'', by following either
 one of two natural strategies:
 \begin{itemize}
 \item[(A)] rank after rank;
 \item[(B)] or general subclass after general subclass
  in increasing order of ``complexity''.
 \end{itemize}
 Subprogram (A) is completed in rank 1 \cite{M3,M4,M5,C1,C2} and in good shape in rank 2 \cite{M14,M15,M16},
 while subprogram (B) was initiated in \cite{Cecotti:2021ouq} with the introduction of a first
 numerical invariant, $\kappa$, which partially characterize
 the complexity  subclasses. The subclasses with $\kappa\not\in\{1,2\}$ can be worked out explicitly for arbitrary $r$ \cite{Cecotti:2021ouq2} (at least when $\mathscr{C}$ is smooth). The Lagrangian SCFT are also classified for arbitrary $r$ \cite{Tach}, but not all their global special
 geometries are explicitly known. In our view the best course is to follow a ```mixed'' strategy where we describe
 the simplest subclasses for arbitrary ranks and then discuss the residual ``hard'' geometries
 rank by rank. We propose the following coarse classification of
 the indecomposable, non-free, unitary geometries:
 {\setlength{\arrayrulewidth}{0.4mm}
\renewcommand{\arraystretch}{2}$$
 \begin{tabular}{|c|p{11cm}|}\hline
 I & isotrivial geometries\\\hline
 II & quasi-isotrivial geometries not of class I\\\hline
 III & reducible indecomposable geometries not of classes I, II\\\hline
 IV & non-rigid geometries not of classes I, II, III\\\hline
 V & $\mu$-rigid geometries not of classes I-IV\\\hline
VI & others\\\hline
 \end{tabular}
 $$
 } Classes I and IV are essentially charted for all ranks. Classes II and III are very
 constrained, and a systematic analysis may be doable. In these cases one aims to produce strong necessary conditions which rule out such geometries except under ``exceptional'' circumstances such as peculiar values of the dimensions $\{\Delta_1,\dots,\Delta_r\}$ and of the local monodromies determining their ``zoology''. 
 For classes V and VI one falls back to the rank by rank strategy.
 Class VI is the less amenable to analysis and \emph{a priori} the largest one, but 
 
 \begin{hope}[Supported by some experimental evidence] Class VI is hard but empty.
 \end{hope}
 
 Each class is further divided into subclasses.

 \subsection{The inverse problem}

 For the last two classes of geometries at the moment there is no better plan
 than studying one rank at the time. 
In this direction a
 natural idea, already put forward in \cite{M12,M13}, is to exploit the
``stratification'' of the special geometry in lower rank ones,
and try to construct recursively
all possible geometries in rank $r$
by ``gluing'' together local rank-$(r-1)$ special geometries around codimension-$1$ loci $\cd_a$ and a ``boring'' rank-$r$ open stratum of generic vacua in the complement
$\C^r\setminus\cup_a\cd_a$ in ``all consistent ways''.
 This strategy leads to what we call the \textbf{Inverse problem} which now we  state in concrete terms: 
%
%
 \begin{inverse}
 Given the following \emph{data:}
 \begin{itemize}
  \item[\textit{(i)}] the Coulomb dimensions $\{\Delta_1,\dots,\Delta_r\}$,
 \item[\textit{(ii)}] the discriminant $\cd$,
 \item[\text{(iii)}] the conjugacy class of the local monodromy at each irreducible component $\cd_a$  of $\cd$,
 which we label by the corresponding Kodaira type $C_a$;
 \end{itemize}
 construct the \emph{indecomposable} special geometries of rank $r$ with these data
 or conclude that no such geometry exists. 
 \end{inverse}
 We can restrict the set of allowed data. First of all we may consider only $r$-tuple $\{\Delta_1,\dots,\Delta_r\}\equiv\lambda\{d_1,\dots,d_r\}$ with $\Delta_i$'s in the finite set $\cup_{s=1}^r \Xi(s)$ of allowed dimensions
 in rank $r$,
 since otherwise the answer is \emph{negative} for obvious reasons. 
 We may also assume that $1/\lambda$
 is integral ($\varkappa=1$) or half-integral ($\varkappa=2$) since otherwise the geometry is automatically isotrivial and hence can be constructed/classified by group-theoretical means.
 If some of the $\Delta_i$ is equal $2$ the geometry is \emph{non-rigid} (see \S.\,\ref{s:rigidity}). A \textbf{folk-theorem} states that a non-rigid
 special geometry describes a Lagrangian SCFT which is more conveniently studied by different (physical) means.
 There are many other restrictions in the $r$-tuple from the
 dimension formulae and their generalizations.
We can limit ourselves to admissible $r$-tuples $\{\Delta_1,\dots,\Delta_r\}$
 with $\Delta_i\neq 1,2$ and $2\lambda\in\mathbb{N}$ without essential loss.  
 We know that when all Kodaira types $C_a$ are semi-simple the geometry is isotrivial,
 therefore we may assume that at least one monodromy class $C_a$ is $I_n$ or $I_n^*$ with $n>0$.
This, in particular, requires the discriminant $\cd$ not to be normal crossing; more generally
we have the condition 
that $\pi_1(\mathscr{C}\setminus \cd)$ has an infinite quotient $\Gamma$ which is semisimple
with a symplectic action on $\mathbb{Q}^{2r}$. When focusing on the ``harder'' classes IV, VI we may assume the action to be irreducible.

\medskip
 
 In physical terms the \textbf{Inverse problem} asks to find a common UV completion to the several
 low energy effective theories describing locally the degrees of freedom which are light
  along each discriminant component, \emph{if this (geometric) global UV completion exists}.
%
 
 \begin{fact}
There are at most finitely many geometries for a given datum $(\{\Delta_i\},\cd,\{C_a\})$.   
 \end{fact}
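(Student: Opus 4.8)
The plan is to show finiteness in two conceptual steps. First, the data $(\{\Delta_i\},\cd,\{C_a\})$ fixes the entire topological skeleton of the family: the base $\mathring{\mathscr{P}}=\mathscr{P}\setminus\mathscr{S}$ is determined by $\{\Delta_i\}$ and $\cd$ (together with the R-enhancement divisors, which are themselves read off from $\{\Delta_i\}$ via the enhancement index $\nu$), and each local monodromy $\mu_i$ is pinned down up to conjugacy by $C_a$ together with the dimension data as in \S.\,\ref{s:localmod}. By the Fact attributed to \cite{ffffin} at the end of \S.\,\ref{s:structurethm}, there are at most finitely many special geometries with a given period map; so it suffices to bound the number of possible period maps $P\colon\mathring{\mathscr{Q}}\to\Upsilon\backslash Sp(2r,\R)/U(r)$ compatible with the data. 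The period map is a variation of polarized Hodge structure of weight one on the fixed quasi-projective base $\mathring{\mathscr{P}}$ with prescribed local monodromy conjugacy classes $[\mu_i]$ at the punctures, so the statement reduces to: \emph{there are finitely many such VHS}.

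Second, I would establish this finiteness of VHS. The key input is rigidity theory: by Deligne's theorem, the set of (polarized) $\mathbb{Z}$-VHS of fixed weight and Hodge numbers on a fixed quasi-projective base with prescribed local monodromy data is finite up to isomorphism — this is exactly the Faltings--Peters type finiteness already invoked in \S.\,\ref{s:rigidity} and \cite{arak,falt,saito,peters1,peters2,peters3}. Concretely: the underlying local systems form a finite set because a semisimple representation $\pi_1(\mathring{\mathscr{P}})\to Sp(2r,\Z)$ with prescribed conjugacy classes of the finitely many boundary loops lies in a scheme of finite type, its $\mathbb{Z}$-points of bounded (indeed fixed) dimension are finite modulo conjugacy by the Fact that $\Gamma$ is finitely generated and semisimple (\S.\,\ref{ddddcom}, \cite{monopeters}); and each local system carries only finitely many polarized $\mathbb{Z}$-VHS of the given type by Faltings. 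The passage from the weight-one VHS back to the full special geometry $\mathscr{X}\to\mathscr{C}$ then costs only the finite ambiguity of \cite{ffffin} (the choice of lift $\varrho$ of $\check\varrho$, the quadratic-transform type ambiguity in the $\varkappa=2$ case, and the regularity/gluing data along $\cd$ which the fixed $C_a$ constrains to finitely many local models by the classification recalled in \S.\,\ref{s:stable} and \cite{oguiso1,oguiso2,oguiso3}).

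The main obstacle is the first reduction: a priori the data fixes only the conjugacy classes $[\mu_i]$ of the \emph{local} monodromies, not the ambient representation, and in the genuinely non-rigid cases (class IV) the local system need not be rigid, so one cannot argue ``finitely many representations with these local classes'' purely group-theoretically — one must genuinely use that a $\mathbb{Z}$-VHS is being carried, i.e.\ the Hodge-theoretic rigidity/finiteness of Faltings rather than Deligne--Simpson rigidity of the bare local system. A secondary technical point is checking that the R-enhancement divisors, and hence $\mathscr{S}$ and thus $\mathring{\mathscr{P}}$, are fully determined by $\{\Delta_i\}$ and $\cd$ alone (this follows from the computation of $\nu$ in terms of degrees in \S.\,\ref{s:iiiiii4uu5}), and that along each $\cd_a$ the singular-fiber geometry compatible with the fixed Kodaira type $C_a$ comes in only finitely many isomorphism types, which is the content of the local analysis of \S.\,\ref{s:sing}--\ref{s:stable}. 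Assembling these, finiteness of the set of geometries with datum $(\{\Delta_i\},\cd,\{C_a\})$ follows.
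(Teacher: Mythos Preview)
Your proposal is correct and follows the same approach as the paper. The paper's own justification is a single sentence: ``The finiteness in the \textbf{Fact} follows from the fact that there are finitely many fibrations in Abelian $r$-varieties over $\mathscr{C}\setminus\cd$ for fixed $\cd$,'' which is precisely the Faltings--Peters finiteness you invoke (formalized later in the paper in \S.\,6.3 as the Faltings--Peters and Deligne finiteness theorems), together with the finite ambiguity period map $\to$ geometry from \cite{ffffin}. Your write-up is more careful about the intermediate steps (determining $\mathscr{S}$ from the data, the distinction between rigidity of the local system versus rigidity of the VHS, and the local singular-fiber ambiguity), but the core mechanism is identical.
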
  
 
The general expectation is that the special geometry associated to an inverse datum, \emph{when it exists,} 
  is actually unique.\footnote{\ In this statement is important that $C_a$ is defined as a Kodaira type, i.e.\! as a conjugacy class in $SL(2,\Z)$; if $C_a$ is seen as a conjugacy class in a larger group, then multiple geometries do exist.} The idea is that the existence of a global special
 geometry is already a very overdetermined problem, which can be solved positively  only for a tiny set
 of data $(\{\Delta_i\},\cd,\{C_a\})$. It looks extremely unlikely that we have more than one solution for
 the same datum. The finiteness in the \textbf{Fact} follows from the fact that there are finitely many
 fibrations in Abelian $r$-varieties over $\mathscr{C}\setminus\cd$ for fixed $\cd$.
 For almost all choices of $\cd$, the monodromy of \emph{all} these finitely-many fibrations would not
 be compatible with \emph{any} Kodaira-type/dimension data, so that
 \textit{only a ``short list'' of discriminants are allowed in a given rank $r$ and for a given
 discriminant $\cd$ only a ``small set'' of Kodaira-type/dimension datum is consistent}. 
 We shall discuss in \S.\,\ref{s:r3} the conditions on $\cd$. They turn out to be very restrictive.
 
For each consistent set of ``stratification'' data there exist only finitely many Abelian fibrations,
but this is just one of the many geometric structures required for a special geometry.
To be a special geometry the total space of an Abelian family over $\mathscr{C}\setminus\cd$ 
should have a symplectic structure which extends over the ``bad'' fibers
such that all fibers are Lagrangian. This is quite a severe restriction to impose on the few surviving
fibrations, and for most consistent-looking data no one will pass the test. The existence
of more than one geometry for the same datum looks totally
unlikely.
\medskip
 
\noindent The \textbf{inverse problem} may be studied with different aims.
 In decreasing order of ambition:
 \begin{itemize}
 \item[(1)] to construct all $\C^\times$-isoinvariant geometries recursively in the rank $r$;
 \item[(2)] to discover interesting properties which hold for all such geometries;
 \item[(3)] to get necessary conditions for the existence of a geometry;
 \item[(4)] to construct 
 examples with prescribed/desired properties.
 \end{itemize}
 
 \subsection{Taking ``stratification'' seriously}\label{s:finD}
 
 At first sight the weak aspect of the \textbf{Inverse problem} strategy is that we need
 to specify 
\emph{a priori} a discriminant $\cd\subset\mathscr{C}$ (or, equivalently, a special divisor
 $\mathscr{S}\subset\mathscr{P}$). While the datum of a dimension $r$-tuple $\{\Delta_1,\dots,\Delta_r\}$
 is to be chosen from a known \emph{finite} set, and the Kodaira classes $C_i$ from a known discrete set which becomes
 finite when we consider conjugacy classes in $Sp(2r,\R)$ instead of $Sp(2r,\Z)$,
 the discriminant $\cd=(P)$ \emph{a priori} is just a square-free homogeneous element $P\in \mathscr{R}$ which may have (and indeed has) several components, with arbitrarily bad singularities, intersecting each other is all ugly non-transverse ways. At face value, {describing the
 set of meaningful initial data} for the problem looks harder than
 finding needles in a haystack.
 
 Luckily it is not so. Consider (the normalization of) an irreducible component $\cd_a$ of the discriminant.
 If we take ``stratification'' seriously we conclude that it is the Coulomb branch
 of a $\C^\times$-isoinvariant special geometry on its own right. Thus $\cd_a$
 cannot be an arbitrary hypersurface in $\mathscr{C}$, it should satisfy the conditions
of special geometry. $\cd_a$ cannot be, say, a quasi-cone over a smooth Calabi-Yau.
The affine variety $\cd_a$ needs not to be smooth, so its normalization may not be just a copy of $\C^{r-1}$
(although this is rather common in examples), but at the very least
it should be a quasi-cone over a Fano variety or (according to physical intuition)
satisfy the even more restrictive condition of admitting
a polynomial parametrization of the form $u_i=u_i(v_j)$.

Going on, consider the divisor
\be
\cd^\prime_a\equiv \cd_a\cap (\cd\setminus \cd_a)\subset\cd_a
\ee
 (or rather the respective normalizations). The pair $(\cd_a,\cd_a^\prime)$
 should be in the list of allowed pairs $(\textsf{Coulomb branch},\textsf{Discriminant})$ in rank $(r-1)$,
 a condition that we call \emph{hereditarity}.
 This sets a strong condition on $\cd$. 
 
 Then there are two other sets of constraints on $\cd$.
 The first one comes from topology: the fundamental group of the complement
 $\pi_1(\mathscr{C}\setminus\cd)$ should have the right properties
 (not to be finite, or solvable, or a product, etc.). Thus $\cd$
 cannot be normal crossing, nor $\cd_1+\cd_2$ with all components of $\cd_1$ crossing transversely the ones of $\cd_2$, etc.
 
 However the most strong condition on $\cd$ comes from the hero of the present paper:
 \emph{rigidity} which will be the central topic of the rest of the note.
 
We introduce a coarse equivalence relation between discriminants
 or rather special divisors (``commensurability'').
 $\mathscr{S}\sim \mathscr{S}^\prime$ iff their complements $\mathscr{P}\setminus\mathscr{S}$
 and $\mathscr{P}\setminus\mathscr{S}^\prime$ admit a common finite unbranched cover,
 so that their fundamental groups have an isomorphic finite-index normal subgroup.

\begin{hope}[Well supported] In rank-$r$ there are only finitely many equivalence
classes of  (deformations classes of) allowed special divisor.
\end{hope} 
   
  The validity of \textbf{Hope} is a tautological consequence of the (physically expected) fact that there are only finitely many special geometries at a given rank $r$. But what we really hope for is that the allowed divisor classes
   can be listed \emph{a priori} without working out the classification of the geometries themselves.
   In the rest of the paper the hope will be illustrated by several examples in various situations.
 
 \section{Abelian families arising from a special geometry}\label{s:raising}
 
 We adopt a viewpoint (and language) which is more familiar in the context of
 Calabi-Yau moduli spaces and mirror symmetry. As stated in the introduction, the two
 problems are \emph{formally} similar.
 \medskip
 
 Suppose we are given an $\C^\times$-isoinvariant Abelian family $\mathring{\mathscr{X}}\to \mathring{\mathscr{C}}$,
or equivalently a family 
\be\label{profamily}
\mathscr{A}\to \mathring{\mathscr{P}}
\ee
 where 
$\mathring{\mathscr{P}}\equiv\mathscr{P}\setminus\mathscr{S}$
with $\mathscr{S}$ the special divisor in $\mathscr{P}$.
The Abelian family $\mathscr{A}$ defines a monodromy representation 
$\mu\colon \pi_1(\mathring{\mathscr{P}})\to Sp(2r,\Z)$ and
also a flat Gauss-Manin (GM) differential equation of rank $2r$
of the form
\begin{equation}\label{iuytqz}
\nabla\mspace{1mu}\Phi \equiv d\mspace{1mu}\Phi-\Phi A=0
\end{equation}
 where the connection $A$ is locally a $2r\times 2r$ matrix
of holomorphic differentials and $\Phi$ is the fundamental solution
(a $2r\times 2r$ matrix normalized to be the identity at a reference point $\ast\in\mathring{\mathscr{P}}$). The differential equation has only \emph{regular singularities} along $\mathscr{S}$\cite{deligne}. When $\dim\mathscr{P}=1$ it reduces to a classical Picard-Fuchs equation, and we use the terms Gauss-Manin connection and Picard-Fuchs system interchangeably. For simplicity we consider the smooth case
where $\mathscr{P}=\mathbb{P}(q_1,\dots,q_r)$ and we work in the affine patch
$z_r\neq0$. In this patch ($\simeq \C^{r-1}$) the connection can be written as
\be
\nabla=d+\sum_i \frac{A_i(u)}{p_i(u)}
\ee
where the  $p_i(u)$ are the equations of the irreducible components of $\mathscr{S}$ and the $A_i(u)$ are $2r\times 2r$ matrices of holomorphic differentials. The connection should be flat
\be
\nabla^2= dA+A^2=0,\qquad A\overset{\rm def}{=}\sum_i \frac{A_i(u)}{p_i(u)}
\ee

 The Riemann-Hilbert correspondence states
that there is a one-to-one equivalence between the GM PDEs and
the monodromy representation up to the obvious equivalences. In our case the monodromy takes values in $Sp(2r,\Z)$.
Since we are mainly interested in irreducible non-isotrivial geometries,
we may assume that the monodromy group $\Gamma$ acts irreducibly and contains
non-trivial unipotent elements.\footnote{\ If $\Gamma$ does not contain non-trivial unipotent elements it is automatically finite.} Since $\pi_1(\mathscr{P})$ is trivial,
the monodromy is generated by the images of loops encircling the
irreducible components $\cs_i$ of the special divisor $\mathscr{S}=\sum_i \cs_i$.
The monodromy group acts on the solution as
\begin{equation}
\mu([\ell])\cdot \Phi= \Phi\bigg|_{\text{\begin{tiny}$\begin{smallmatrix}\text{analytically}\\ \text{continued\phantom{m}}\\ \text{along $\ell$\phantom{mml}}\end{smallmatrix}$\end{tiny}}}
\end{equation}
where $\ell$ is a loop based at a reference point $\ast\in\mathring{\mathscr{P}}$
which represents a given class in $\pi_1(\mathring{\mathscr{P}},\ast)$.
The Zariski closure over $\C$ of the group $\Gamma\subset Sp(2r,\Z)$
is the \emph{differential Galois group of the PDE \eqref{iuytqz}.}
We use the symbol $\Pi$ for a column $2r$-vector given by a linear combination of
the columns of $\Phi$ (with complex coefficients). The symplectic $2r\times 2r$ matrix will be written $\Omega$ (so $\gamma\in\Gamma$ $\Rightarrow$ $\gamma^t\Omega\gamma=\Omega$). In a basis where  the monodromy elements are integral symplectic matrices,
the entries of $\Pi$ are the periods of some holomorphic differential $\eta$
with respect to a symplectic basis $\{A^i,B_j\}$ of 1-cycles on the fibers
{\renewcommand{\arraystretch}{1.4}\be
\Pi =\begin{pmatrix}\int_{B^i}\eta\\
\int_{A_i}\eta\end{pmatrix}
\ee}
\hskip-4pt The period vector $\Pi$ is multivalued on $\mathring{\mathscr{P}}$. We see $\Pi$ as
a map defined on the universal cover $\widetilde{\mathring{\mathscr{P}}}$
of $\mathring{\mathscr{P}}$
\be
\Pi\colon \widetilde{\mathring{\mathscr{P}}}\to \mathbb{P}^{2r-1}
\ee
equivariant with respect to the deck group $\pi_1(\mathring{\mathscr{P}})$
\be
\sigma^\ast \Pi= \mu(\sigma)\Pi,\qquad \sigma\in \pi_1(\mathring{\mathscr{P}}).
\ee

\begin{fact} The family \eqref{profamily} of Abelian varieties arises from a $\C^\times$-isoinvariant special geometry, iff its GM equation has a preferred period
$\Pi$ such that
its image in $\mathbb{P}^{2r-1}$ is a Legendre submanifold of the canonical contact structure of
 $\mathbb{P}^{2r-1}$, that is,
\be\label{Legendre}
\Pi^t \,\Omega\, d\Pi=0
\ee 
while the $(1,1)$-form $i\, d\Pi^\dagger\,\Omega\,d\Pi$ is positive definite.
\end{fact}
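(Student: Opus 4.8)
The plan is to prove the two implications separately: the ``only if'' direction is essentially a repackaging of identities already obtained above, while the ``if'' direction is a reconstruction that runs those identities backwards and then glues in the degenerate fibres along $\mathscr{S}$.

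For the ``only if'' direction I would start from a $\C^\times$-isoinvariant special geometry $\mathscr{X}\to\mathscr{C}$, restrict to $\mathring{\mathscr{C}}$ and pass to the quotient $\mathring{\mathscr{P}}$. The SW differential $\lambda=\iota_\ce\Omega$ has fibre periods $\Pi_I=\oint_{\gamma_I}\lambda|_{\mathscr{X}_u}$ which satisfy $\mathscr{L}_\ce\Pi_I=\Pi_I$, so $\Pi$ descends to a multivalued, $\mu$-equivariant map $\widetilde{\mathring{\mathscr{P}}}\to\mathbb{P}^{2r-1}$ and is a solution vector of the Gauss--Manin system. The Legendre condition $\Pi^t\Omega\,d\Pi=0$ is exactly the stronger identity $\Omega^{IJ}\Pi_I\,d\Pi_J=0$ recorded above in the discussion of the Hodge ring, which in turn follows from $da^D_i\wedge da^i=0$ (eq.\eqref{hhhhleg}) together with homogeneity; and positivity of $i\,d\Pi^\dagger\Omega\,d\Pi$ is the statement $2\,\mathrm{Im}\,\tau_{ij}>0$, i.e.\ the Riemann bilinear relations for the polarized fibre $\mathscr{X}_u$, equivalently positivity of the special K\"ahler metric $\omega$ on $\mathring{\mathscr{C}}$. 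So this direction is immediate from what is already in the text.

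For the ``if'' direction, suppose the Gauss--Manin system of $\mathscr{A}\to\mathring{\mathscr{P}}$ admits a solution vector $\Pi$ with the two stated properties. On a simply connected $U\subset\mathring{\mathscr{P}}$ I would lift to the cone and write $\Pi=(a^D_i,a^j)^t$; positivity of $i\,d\Pi^\dagger\Omega\,d\Pi$ forces $d\Pi$ to have full rank, and in a suitable symplectic frame $(a^1,\dots,a^r)$ are then local holomorphic coordinates on $\mathring{\mathscr{C}}$, while the Legendre relation gives $da^D_i\wedge da^i=0$, hence a local prepotential $\cf$ with $a^D_i=\partial\cf/\partial a^i$ and, by $\mathscr{L}_\ce\Pi=\Pi$, with $\mathscr{L}_\ce\cf=2\cf$. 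The Hessian $\tau_{ij}=\partial^2\cf/\partial a^i\partial a^j$ is symmetric with $\mathrm{Im}\,\tau>0$, so it takes values in Siegel space; and since $\Pi$ is by hypothesis a period of $\mathscr{A}$, this $\tau$ is precisely the period matrix of $\mathscr{A}_u$. I then build $\mathring{\mathscr{X}}\to\mathring{\mathscr{C}}$ as the Abelian family with fibre $\C^r/(\Z^r+\tau(u)\Z^r)$ glued by the $Sp(2r,\Z)$-monodromy $\mu$, equip it with $\Omega_{\mathscr{X}}=da^i\wedge dz_i$ (well defined because $\mu$ preserves the canonical pairing and the chart transitions are symplectic through $a^D=\tau a$), and verify the axioms: fibres and zero section Lagrangian, $\iota_\ce\Omega_{\mathscr{X}}=a^i dz_i$ restricting to $\Pi$ on cycles, and $\mathscr{L}_\ce\Omega_{\mathscr{X}}=\Omega_{\mathscr{X}}$. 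This already produces a $\C^\times$-isoinvariant \emph{non UV-complete} special geometry over $\mathring{\mathscr{C}}$.

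The remaining step, and the one I expect to be the main obstacle, is to extend over the special divisor so that the base becomes the affine quasi-cone $\mathscr{C}=\mathsf{Spec}\,\mathscr{R}$: regular singularities of the Gauss--Manin equation force the local monodromies around the components $\cs_a$ to be quasi-unipotent, so after a cyclic base change one reduces to the (semi-)stable/Kodaira situation and glues in the degenerate fibres using the local normal forms of \S.\,\ref{s:sing} and \S.\,\ref{s:stable} (the N\'eron model in rank $1$, Oguiso's models in general); the symplectic form extends through the explicit coordinate changes of \S.\,\ref{s:stable}B, and control of $\mathrm{Im}\,\tau$ near the boundary from the nilpotent-orbit theorem keeps the polarization positive. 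Showing that the (resolved) total space is genuinely smooth and $\Omega_{\mathscr{X}}$ genuinely holomorphic across each $\cs_a$ — and in particular handling the codimension-$\geq 2$ singularities of the naive gluing as in the generalized Marsden--Weinstein--Meyer reduction discussed above — is the delicate point; everything else is bookkeeping.
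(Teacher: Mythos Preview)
Your approach is essentially the same as the paper's: write $\Pi=(a^D_i,a^j)^t$, use the Legendre condition to obtain a local prepotential $\cf$ with $a^D_i=\partial_i\cf$, identify $\tau_{ij}=\partial_i\partial_j\cf$ with the period matrix of the given Abelian fibre, and read off positivity from the second Riemann relation. The paper's argument (adapted from Bryant--Griffiths) is terser than yours but follows exactly this line, including the observation that the differential $\eta$ with $a^i=\int_{A_i}\eta$ is \emph{a posteriori} the Seiberg--Witten differential.

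The one substantive difference is your final paragraph. The paper's proof does \emph{not} carry out the extension over $\mathscr{S}$: it simply asserts that ``it pulls back to a Lagrangian fibration $\mathscr{X}$ over $\mathscr{C}$'' and stops. The regularity of the total space across the discriminant --- the N\'eron/Oguiso local models, the symplectic extension, the behaviour of $\mathrm{Im}\,\tau$ --- is not part of this proof at all; it is deferred to the later ``regularity checks'' (\S\ref{s:regcheck} and surrounding discussion), which the paper treats as separate \emph{ex post} verifications rather than as part of the present \textbf{Fact}. So your instinct that this is the delicate point is correct, but you are holding yourself to a higher standard than the paper does here: for the purposes of this statement the paper is content with the construction over $\mathring{\mathscr{C}}$ and the identification of the Legendre condition with the existence of a SW differential.
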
 
\begin{proof}(Adapted from \cite{BG}) Write $\Pi=(a^D_i,a^j)^t$ ($i,j=1,\dots,r$). For a Legendre submanifold in generic position,
locally there exists a holomorphic function $\cf(a^j)$, homogeneous of degree $2$,
such that
\be
a_i^D= \frac{\partial \cf(a)}{\partial a^i}.
\ee
On the other hand $a^i=\int_{A_i}\eta$ implies $\eta=\sum_i a^i \omega_i$
where $\{\omega_i\}$ are holomorphic differential normalized as $\int_{A_i}\omega_j={\delta^i}_j$.
Then the period matrix of the Abelian variety is 
\be
\tau_{ij}=\int_{B^i}\omega_j =\frac{\partial}{\partial a^j}\!\int_{B^i}\eta=\frac{\partial}{\partial a^j} a_i^D= \frac{\partial^2\cf(a)}{\partial a^i\,\partial a^j}.
\ee
In particular $\mathrm{Im}\,\tau_{ij}>0$ by the second Riemann relation for the Abelian fibers.
\emph{A posteriori} the differential $\eta$ defined by the Legendre condition is identified (when it
exists) with the Seiberg-Witten differential $\lambda$. 
In other words, the Legendre-manifold condition is equivalent to the existence of a Seiberg-Witten differential on the pull-back of the Abelian family from $\mathring{\mathscr{P}}$ to $\mathring{\mathscr{C}}$.
When such differential exists, it pulls back to a  Lagrangian fibration $\mathscr{X}$
over $\mathscr{C}$ by taking as local pre-potential $\cf$ the one defining (locally) the Legendre submanifold.
The Legendre condition is also called the ``integrability'' condition.
\end{proof}

\subparagraph{Comparison with the Calabi-Yau case.}
A family of Calabi-Yau 3-folds with $h^{2,1}=r-1$ is also described by a GM connection with a period
$\Pi$ satisfying the Legendre condition \cite{BG}. The difference is that in the CY case the
matrix $\mathrm{Im}\,\tau_{ij}$ has signature $(r-1,1)$ instead of being positive-definite.
A part for this crucial sign condition, all equations look formally identical, so we can borrow methods developed in the context of Calabi-Yau periods and mirror symmetry.

\subsection{The Legendre condition}
\emph{A priori} the Legendre condition may look a severe constraint on the family of
Abelian varieties. However often it is not hard to satisfy. Consider the 1-form $\Pi^t \,\Omega\, d\Pi$
for a generic period. If it extends over the special divisor $\mathscr{S}$, it is a global
holomorphic 1-form on $\mathscr{P}$ hence zero since the latter has $h^{1,0}=0$.
So the question boils down to the existence of a period $\Pi$ which is ``sufficiently regular''
 along the special divisors $\cs_a$. Since the local monodromy around $\cs_a$ corresponds (by construction) to the IR physics of the degrees of freedom which purportedly becomes light there,
 ``sufficiently regular'' means that the period $\Pi$ approaches along $\cs_a$ the ones
 describing
the local physics of the light degrees of freedom, i.e.\! that $\Pi$ agrees near $\cs_a$
 with one particular local period, the ``good'' one. The ``good'' local period $\Pi$ exists
 because we choose the exponents of the monodromy (encoded in the datum $C_a$
 of the inverse problem) precisely to guarantee that the local  geometry follows the appropriate model (cf.\! \S.\,\ref{s:stable}).
 
 Seen in this way, the Legendre condition is a requirement on the connection formula for the PDE:
  the ``good'' local period near $\cs_a$, when analytically continued to a neighborhood of $\cs_b$, should agree with the ``good'' local solution there for all $b$. 
  In the language of the \textbf{inverse problem}:
  the several local geometries along the strata should ``glue well'' together.
  Alternatively one has to show that
  the holomorphic one-form $\Pi^t\Omega d\Pi$ vanishes in a neighborhood of $\cs_a$
  and not just asymptotically when we approach it. This point of view 
clarifies that the special period $\Pi$ (when it exists) is unique.

In the classical case of rank-2
we may be very precise. Here we use the analogy with 
mirror symmetry for one-modulus Calabi-Yau's \cite{draco,AZthm}.

\subsection{The classical case $r=2$} \label{k87xxx}
In the case $r=2$ the normalization of $\mathscr{P}$ is just
the Riemann sphere $\mathbb{P}^1$ for all pairs of Coulomb dimensions $\{\Delta_1,\Delta_2\}$,
while $\mathring{\mathscr{P}}\simeq \mathbb{P}^1\setminus\{z_1,\dots,z_k\}$
where $\{z_i\}$ are $k$ distinct points on the sphere.
The Gauss-Manin PDE reduces to a Fuchsian ODE. Being primarily interested in irreducible geometries,
we assume the monodromy to be irreducible. Then, by a ``choice of gauge''
the ODE can always
be written in the form \cite{beau,haraoka}
\begin{equation}\label{kuyqwert888}
\frac{d}{d z}\Pi=\Pi A,\qquad
A\equiv\sum_i \frac{A_\ell}{z-z_\ell}
\end{equation}
for constant $4\times 4$ matrices $A_\ell$ satisfying the non-resonant condition.\footnote{\ That is: if two eigenvalues of $A_\ell$ are equal mod 1, then they must be equal.}
The conjugacy class in $GL(4,\C)$ of the local monodromy $\mu_\ell$ at $z_\ell$ is
\begin{equation}\label{uuuu76cz}
\big[\mu_\ell\big]=\big[\exp(2\pi i A_\ell)\big].
\end{equation}
The equation \eqref{uuuu76cz} can be recast
in the form of the 4-th order Fuchsian linear equation for a scalar function $y$:
\begin{equation}\label{ODe4}
L y=0,\quad\text{where }L=\frac{d^4}{dz^4}+a_3(z)\,\frac{d^3}{dz^3}+
a_2(z)\frac{d^2}{dz^2}+a_1(z) \frac{d}{dz}+ a_0(z)
\end{equation} 
whose coefficients are rational functions on $\mathbb{P}^1$
determined by the points $z_i$ and the monodromy representation of
the original equation. 
In \cite{AZthm} (see also \cite{draco}) it is shown that the coefficients $a_i(z)$
of a 4-th order ODE \eqref{ODe4}
with differential Galois group
contained in $Sp(4,\C)$ satisfy the differential relation\footnote{\ The inverse statement holds only modulo finite groups (equivalently modulo finite covers).}
\begin{equation}\label{thecondition}
4 a_2 a_3 -8 a_1- a_3^3+8 a^\prime_2-6 a_3a_3^\prime-4 a^{\prime\prime}_3=0.
\end{equation}
Let $y_1$, $y_2$, $y_3$, $y_4$ be four linearly independent solutions to 
the ODE \eqref{ODe4}. Out of these 4 functions we may produce 6 independent 
$2\times 2$ Wronskians
{\renewcommand{\arraystretch}{1.2}\begin{equation}
W_{ab}\equiv\left|\begin{matrix}y_a & y_b\\ y_a^\prime & y^\prime_b\end{matrix}\right|,\quad 1\leq a<b\leq 4.
\end{equation}}
In a general 4-th order Fuchsian ODE the Wronskians satisfy a linear ODE of order 6.
The Almkvist-Zudilin theorem \cite{AZthm} states that when the coefficients obey the condition \eqref{thecondition} the Wronskians $W_{ab}$ solve
an ODE of just order 5. In other words, a linear combination of the
$W_{ab}$ vanishes, or equivalently there exists a constant, non-degenerate, antisymmetric,
$4\times 4$ matrix $M_{ab}$ (unique up to overall normalization) such that
\be\label{rrrt54312}
M_{ab}\, y_a\, y_b^\prime=0.
\ee
We can find a constant matrix $S\in GL(4,\C)$ such that $S^t M S=\Omega$, with
$\Omega$ the standard symplectic
 $4\times 4$ matrix. Define
 \begin{equation}\label{juytqq}
 \Pi_a\overset{\rm def}{=} {(S^{-1})_a}^b y_b 
 \end{equation}
Eq.\eqref{rrrt54312} becomes
 \be\label{ju77654231i}
 (\Pi^t \mspace{2mu}\Omega \mspace{2mu}\partial_z \Pi) dz=0,
 \ee
 which is the Legendre condition \eqref{Legendre}. We see that the Legendre condition is an automatic consequence of having the ``right''
 algebraic monodromy/differential Galois group. More precisely here there are \emph{two}
 ingredients: the ``right'' differential Galois group, and the choice of the ``good'' solution
 for the
 4 periods $\Pi_a$.
 Since all solutions are linear combinations of any one basis of solutions, 
 the ``good'' ones are special linear combinations and the matrix ${S_a}^b$ specifies 
 the one we need to keep, cf.\! eq.\!  \eqref{juytqq}. Since $\Pi$ are the periods of the so-called Seiberg-Witten
 differential, this shows that the monodromy representation also fixes which linear
 combination of the differentials on the fiber is the Seiberg-Witten differential.
 Note that we have still some freedom: making $S\to t\,Sg$ with $g\in Sp(4,\C)$
 and $t\in\C^\times$ we get a new $\Pi$ which still satisfy eq.\eqref{ju77654231i}. 
 The overall scale freedom just reflects the fact that we consider the image of
 the periods in projective space. The freedom by a symplectic rotation is more
 interesting: for the moment $\Pi$ are periods defined over some complex sympletic basis of
 $H_1(\mathscr{A}_u,\Z)\otimes_\Z\C$. Using $g\in Sp(4,\C)$ we transform this basis
 into an integral symplectic basis of 1-cycles (which exists since, by assumption, our ODE
 arises from an actual family of Abelian surfaces) characterized by the fact that the monodromy elements becomes integral matrices in $Sp(4,\Z)$ when written in terms of such a basis. Then $\Pi\in \mathbb{P}^{2r-1}$
becomes unique modulo the action of the Siegel modular group $Sp(2r,\Z)$.
 Finally the condition $\mathrm{Im}\,\tau_{ij}>0$ holds since the ODE arises from
 a family of Abelian varieties.
 We conclude
 
 \begin{fact}\label{AZth} In the classical case $r=2$ all \emph{non-isotrivial, irreducible} families of Abelian surfaces
 $\mathscr{A}\to\mathbb{P}^1\setminus\{z_1,\dots,z_s\}$,
 \emph{having a monodromy representation satisfying the appropriate conditions,}
  admit an essentially unique
 SW differential $\lambda$ making them into $\C^\times$-isoinvariant rank-2 special geometries. 
 \end{fact}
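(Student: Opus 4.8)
The plan is to follow the thread already developed in this section: reduce the Abelian family to a scalar Picard--Fuchs equation, exploit the fact that its differential Galois group is symplectic to produce via the Almkvist--Zudilin theorem a distinguished constant antisymmetric bilinear relation among the solutions, recognize that relation as the Legendre condition \eqref{Legendre} for an appropriate linear combination of periods, and then invoke the Legendre criterion proved earlier in this section to recover the $\C^\times$-isoinvariant special geometry together with the Seiberg--Witten differential $\lambda$.

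First I would set up the Gauss--Manin connection of $\mathscr{A}\to\mathring{\mathscr{P}}$. Since $\mathring{\mathscr{P}}\simeq\mathbb{P}^1\setminus\{z_1,\dots,z_s\}$ and, by the ``appropriate conditions'', the monodromy is irreducible with only regular singularities, a gauge transformation puts the rank-$4$ connection in the Fuchsian normal form \eqref{kuyqwert888}; irreducibility also allows one to arrange non-resonance and to pass to a cyclic vector, i.e.\! to the scalar fourth-order Fuchsian equation $Ly=0$ of \eqref{ODe4}, whose rational coefficients $a_i(z)$ are fixed by $\{z_i\}$ and by the monodromy representation. By hypothesis $\mathscr{A}$ is a genuine family of principally polarized Abelian surfaces, so the monodromy group $\Gamma$ preserves the integral symplectic lattice $H_1(\mathscr{A}_u,\Z)$ and hence is conjugate into $Sp(4,\Z)\subset Sp(4,\C)$; consequently the differential Galois group of $L$ --- its $\C$-Zariski closure --- is contained in $Sp(4,\C)$, and non-isotriviality guarantees it is infinite.

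Next I would invoke the Almkvist--Zudilin theorem. Because the differential Galois group is symplectic, the coefficients $a_i(z)$ satisfy the differential identity \eqref{thecondition}; by \cite{AZthm} the six Wronskians $W_{ab}$ ($1\le a<b\le 4$) of a fundamental system $\{y_a\}$ then satisfy an ODE of order $5$ rather than $6$, equivalently there is a constant non-degenerate antisymmetric matrix $M_{ab}$, \emph{unique up to overall scale}, with $M_{ab}\,y_a y_b'=0$ as in \eqref{rrrt54312}. Choosing $S\in GL(4,\C)$ with $S^tMS=\Omega$ and setting $\Pi_a={(S^{-1})_a}^{\,b}y_b$ as in \eqref{juytqq} turns this into $(\Pi^t\Omega\,\partial_z\Pi)\,dz=0$, which is precisely the Legendre condition; the positivity of $i\,d\Pi^\dagger\Omega\,d\Pi$ is automatic since $\Pi$ consists of periods of a genuine family of polarized Abelian surfaces, so the associated period matrix has $\mathrm{Im}\,\tau_{ij}>0$ by the second Riemann relation. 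By the Legendre criterion established earlier, $\mathscr{A}\to\mathring{\mathscr{P}}$ with this distinguished period $\Pi$ therefore pulls back to a Lagrangian fibration $\mathscr{X}\to\mathscr{C}$ whose local prepotential $\cf$, homogeneous of degree $2$, yields the Seiberg--Witten differential $\lambda=\iota_\ce\Omega$ and the full $\C^\times$-isoinvariant structure, with $\mathscr{L}_\ce\lambda=\lambda$ forced by the degree-$2$ homogeneity of $\cf$. For essential uniqueness: since any period satisfying the Legendre condition corresponds to a rescaling of the unique $M$, the datum $\Pi\in\mathbb{P}^3$ is determined up to $S\mapsto tSg$ with $t\in\C^\times$ and $g\in Sp(4,\C)$; the scale $t$ merely reflects projectivization, while $g$ is pinned down --- up to $Sp(4,\Z)$, i.e.\! up to the choice of integral symplectic basis of $H_1(\mathscr{A}_u,\Z)$ --- by requiring the monodromy matrices to be integral, which is possible precisely because $\mathscr{A}$ is a \emph{bona fide} Abelian family.

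The main obstacle is the bookkeeping hidden in the word ``essentially''. The Almkvist--Zudilin construction delivers only an $Sp(4,\C)$-structure, and to descend from a complex symplectic frame to an integral one one must genuinely use that the monodromy is defined over $\Z$; moreover when $\varkappa=2$ only the projective monodromy $\check\varrho\colon\pi_1(\mathring{\mathscr{P}})\to PSp(4,\Z)$ is intrinsically defined (cf.\! \S.\,\ref{kiuuqwa},\,\ref{kk87777}), so $\lambda$ is then unavoidably ambiguous by a global sign --- this sign ambiguity is the exact content of ``essentially unique'' here. Two further checks need care but are routine: that a cyclic vector, and hence the scalar form \eqref{ODe4}, exists (for which irreducibility of the monodromy is enough), and that the $\C^\times$-weight of $\lambda$ is indeed $1$, which follows from the degree-$2$ homogeneity of $\cf$ guaranteed by the quasi-cone structure of $\mathscr{C}$ over $\mathscr{P}$.
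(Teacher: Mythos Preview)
Your proposal is correct and follows essentially the same approach as the paper's own argument in \S.\,\ref{k87xxx}: pass from the rank-$4$ Fuchsian system to the scalar $4$th-order ODE, use that the differential Galois group lies in $Sp(4,\C)$ to deduce the Almkvist--Zudilin relation \eqref{thecondition}, extract the constant antisymmetric form $M_{ab}$ from the rank-drop of the Wronskian ODE, normalize via $S$ to obtain the Legendre period $\Pi$, and then use the residual $S\mapsto tSg$ freedom to land on an integral symplectic basis. Your additional remarks on the cyclic-vector existence and on the $\varkappa=2$ sign ambiguity are welcome elaborations but do not depart from the paper's line of reasoning.
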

 
In the ``appropriate conditions'' we include the properties which insures that the special geometry is regular at the special points $z_i$, i.e.\! along the discriminant components and the enhanced divisors. To insure these regularity conditions requires some easy extra checks (see below).
\medskip  
 
 The message is that all that matters is the monodromy representation $\mu$.
 If the monodromy representation has the right properties, the integrability conditions will follow.
 This is the same statement as in the Calabi-Yau case, but of course, the monodromy representations
and other important aspects are rather different in the two cases.
From an abstract viewpoint we may state that special geometry is a topic in Representation Theory (RT), and the main source of difficulty is that 
the representation is defined over $\Z$ (or another ring of integers) which makes
the analysis harder and subtler than plain RT over (say) $\C$.

 The corollary is that to classify all irreducible rank-2 geometries (which are not isotrivial)
  it suffices to classify all non-isotrivial families of Abelian surfaces with the appropriate
  properties. 
  We expect the higher rank case to work morally in the same way.   
  
  \subsubsection{Recovering the special coordinates ($r=2$)}\label{s:reovvering}
  In terms of the solution $\Pi(z)$ to the Picard-Fuch equation the
  SW periods on $\mathscr{C}\simeq\C^2$ read
  \be\label{iuuyyyqw}
  \Pi_\text{SW}(u_1,u_2) =\begin{cases}h(u_1,u_2)\, \Pi(u_1^{d_2}/u_2^{d_1}) & \varkappa=1\\
  \sqrt{h(u_1,u_2)^2}\; \Pi(u_1^{d_2}/u_2^{d_1}) & \varkappa=2
  \end{cases}
  \ee
  where $h(u_1,u_2)$ (resp.\! $h(u_1,u_2)^2$) is an homogeneous
  element of the field $\C(\mathscr{C})$ of dimension 1 (resp.\! 2). Note that in the
  $\varkappa=2$ case the periods are defined up to sign.
  
In order for a solution $\Pi$ of the Picard-Fuchs equation to represent a regular special geometry
we need in particular that the SW periods $\Pi_\text{SW}$
along the special divisors $\cs_i$ (whose leading behavior we can read from the local solutions
defined by the conjugacy class $C_i$) agree asymptotically with the
local models of the singular fibers of type $C_i$ which we reviewed in \S.\,\ref{s:stable}.
Later we shall use eq.\eqref{iuuyyyqw} to check the regularity of the candidate geometries along the special loci.

\subsection{``Stratification'' vs.\! Picard-Fuchs systems}\label{s:PFstrat}

Let us summarize the logic of our work. The starting point is that specifying a
special geometry is the same as giving a Gauss-Manin connection $\nabla$ on the projective
Coulomb branch $\mathscr{P}$ with the appropriate properties (monodromy in $Sp(2r,\Z)$,
a ``good'' solution with the Legendre property, local regularity, and so on). The idea is to construct the global
connection $\nabla$ from the local data around the special divisors $\cs_a=(p_a)$.
The obvious local datum is the conjugacy class $C_a$ of the local monodromy along
$\cs_a$. The class of the local monodromy is
just the class of the exponential of the residue matrix $A_a(u)|_{\cs_a}$ of the connection $\nabla=d+\sum_a A_a(u)\, d\log p_a(u)$: 
\be
C_a=\big[\exp(-2\pi i\, A_a(u)|_{\cs_a})\big].
\ee
It is a basic fact that the integrability condition $\nabla^2=0$ implies that the class $C_a$
is independent of the point along the special divisor $\cs_a$ \cite{haraoka}.
The monodromy representation (which fully determines the geometry)
is generated by the local monodromies $\mu_a$ around $\cs_a$ whose conjugacy classes
is a datum of the inverse problem.

``Stratification'' yields another strong constraint on $\nabla$. The restriction of
the rank $2r$ Gauss-Manin connection on a special divisor (or rather to a special divisor which
originates as a irreducible component of the discriminant) should reproduce
the rank $(2r-2)$ Gauss-Manin connection of the special geometry along the strata.
Since we are working recursively on $r$, this GM connection is supposed to be know when
running the inverse problem program (but it may be reducible or trivial, corresponding to
an isotrivial geometry along the discriminant component).

The technical details of the procedure to restrict the GM PDE on a special divisor are described, for instance,
in \S.12.2 of the book \cite{haraoka}. For simplicity we consider the case when 
$\cs_a$ is a non-enhanced discriminant (we can always reduce to this case by
a change of base and the elimination of apparent singularities).
In this case the local monodromy $\mu_a$ leaves invariant a subspace (defined over $\mathbb{Q}$) $V_\mathbb{Q}\subset \mathbb{Q}^{2r}$ of dimension $2r-2$. The solutions taking value
in $V_\mathbb{Q}$ have trivial monodromy around the divisor, so can be taken to be
holomorphic. The connection restricted to $\cs_a$ and $V_\mathbb{Q}$
should reproduce one of the allowed GM connection in one less rank,
that we have already classified in the previous step of our recursive procedure. 
This fixes a $(2r-2)\times (2r-2)$ diagonal block of the residue matrix $A_a(u)$.
The complementary $2\times 2$ block should be the $(2\pi i)^{-1}$ times the log
of the matrix giving the known Kodaira monodromy around the discriminant
(see \S.\,\ref{s:sing}), up to conjugacy. The restriction relations for the GM connection following from ``stratification''
is a defining property of special geometry.

 \section{Rigidity I: the conformal manifold}\label{s:rigidity}

 To get necessary/sufficient conditions for the solvability of the inverse problem,
 and to construct explicit solutions in lower rank, 
 we use \emph{rigidity theorems.} Three different
 notions of  rigidity are relevant for our story:
 \begin{itemize}
 \item[(a)] rigidity as a $\C^\times$-isotrivial special geometry;
 \item[(b)] rigidity of the underlying family of Abelian varieties over the quasi-projective
 manifold $\mathring{\mathscr{P}}$ in the sense of Arakelov \cite{arak}, Faltings \cite{falt},
 Saito \cite{saito}, and Peters \cite{peters1,peters2,peters3};
 \item[(c)] rigidity of the underlying monodromy representation in the sense of Deligne \cite{Deligne?},
 Simpson \cite{simpson,simpson?,simpson??}, and Katz \cite{katz}.
 \end{itemize}
 A fourth rigidity notion, the VHS one in the sense of Griffiths \cite{VHS1,VHS2,VHS3,VHS4} and Schmidt \cite{schmidt},
 will not be discussed here since in the present setting it is subsumed in (b).

 \subsection{Rigidity as a $\C^\times$-isoinvariant special geometry}
 
 The first notion is rigidity of the $\C^\times$-isoinvariant special geometry in its full glory.

\begin{defn}
 A \emph{family} of $\C^\times$-isoinvariant special geometries over the Coulomb branch
 $\mathscr{C}$ parametrized by the complex manifold $\mathscr{B}$ is a holomorphic fibration
 \be\label{iiii8x34}
 \varpi\colon\mathscr{Z}\to \mathscr{C}\times \mathscr{B},\qquad \text{$\mathscr{B}$ connected}
\ee 
 such that for all $b\in\mathscr{B}$ the restricted fibration
\be
  \varpi|_{\varpi^{-1}(\mathscr{C}\times \{b\})}\colon \mathscr{Z}|_{\varpi^{-1}(\mathscr{C}\times \{b\})}\to \mathscr{C}
\ee
 is a $\C^\times$-isoinvariant special geometry.
 \end{defn}
 We may assume the manifold $\mathscr{B}$ to be quasi-projective with no loss.
 Then, as always when working with families \cite{moduli}, we may (or rather should) introduce a moduli functor.
 
 \begin{defn} The \emph{(fine) conformal manifold} $\mathscr{M}$ of a $\C^\times$-isoinvariant
 special geometry is the scheme
 which represents the moduli functor. Typically $\mathscr{M}$
  does not exist as a \emph{fine} moduli space and one must content himself with a \emph{coarse}
  conformal manifold. This happens (for instance) when points in the conformal manifold are fixed by
  finite subgroups of the $S$-duality group. Alternatively on considers a fine extended conformal manifold
  which parametrizes suitable pairs $(\textsf{SCFT},\textsf{duality frames}/\!\!\sim)$.\footnote{\ A typical example ($SU(2)$ with $N_f=4$) works
as follows: the coarse conformal manifold is $H/SL(2,\Z)$ while taking $\sim$ to be the equivalence relation that two frames are identified if they differ by a rotation in $\Gamma(2)\subset SL(2,\Z)$ leads to
  a fine conformal manifold for the pair, namely $H/\Gamma(2)$.}
 \end{defn}

Roughly speaking, $\mathscr{M}$ parametrizes continuous deformations of the geometry
 which keep it $\C^\times$-isoinvariant and special. The dimensions
 $\{\Delta_1,\dots,\Delta_r\}$, being rational, are constant along the conformal manifold $\mathscr{M}$, as is the monodromy representation. 
 \begin{defn}  A $\C^\times$-isoinvariant special geometry is \emph{rigid}
 if its conformal manifold $\mathscr{M}$ is zero dimensional (i.e.\! a point). 
 \end{defn}

 \begin{fact} One has $\dim \mathscr{M}=\dim\mathscr{R}_{\Delta=2}$.
 \end{fact}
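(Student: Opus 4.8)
The plan is to identify the Zariski tangent space $T_{[\mathscr{X}]}\mathscr{M}$ with $\mathscr{R}_{\Delta=2}$ and then to show that $\mathscr{M}$ is smooth at $[\mathscr{X}]$, so that $\dim\mathscr{M}=\dim T_{[\mathscr{X}]}\mathscr{M}$. Since the Coulomb dimensions $\{\Delta_i\}$, the Kodaira types of the discriminant components, and the monodromy representation $\varrho$ are discrete invariants, they stay constant along any connected family \eqref{iiii8x34}; hence, after fixing these discrete data, a first-order deformation of a $\C^\times$-isoinvariant special geometry amounts to a first-order deformation of the one remaining continuous datum, the Seiberg-Witten differential $\lambda=\iota_\ce\Omega$, equivalently (locally on $\mathring{\mathscr{C}}$) of the prepotential $\cf$. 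Because $\mathscr{L}_\ce\cf=2\,\cf$ and the $\C^\times$-action is held fixed, the variation $\delta\cf$ is again homogeneous of dimension $2$; the special coordinates $a^i$ are canonical up to the \emph{discrete} group $Sp(2r,\Z)$, so fixing them introduces no continuous ambiguity, and $\delta\cf$ is well defined modulo the addition of an $Sp(2r,\Z)$-integral quadratic form in the $a^i$ (a change of duality frame, which does not move the point of $\mathscr{M}$). The special geometry relations already give $\tau_{ij}=\partial_i\partial_j\cf$, so the same identity survives to first order: $\delta\tau_{ij}=\partial_i\partial_j(\delta\cf)$.

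First I would show that $\delta\cf$, so normalized, is a \emph{single-valued} holomorphic function on $\mathring{\mathscr{C}}$. The prepotential is multivalued, but its multivaluedness is controlled by $\varrho$: going around a loop in $\mathring{\mathscr{C}}$, $\cf$ returns to itself plus an explicit quadratic cocycle built from $\varrho$. Fixing $\varrho$ along the family therefore forces $\delta\cf$ to transform trivially, i.e.\! to be single-valued up to the $Sp(2r,\Z)$-quadratic ambiguity already quotiented out. A single-valued holomorphic function of dimension $2$ on $\mathring{\mathscr{C}}$ is regular in codimension one along $\cd$: by the local normal forms of \S.\,\ref{s:stable} the only singular contribution to $\cf$ near a component $\cs_a$ is of the type $(a^r)^2\log a^r$ (cf.\! \eqref{juqweraa}) in the semistable case, and of fractional-power type in the semisimple cases, none of which is single-valued, so the single-valued part is holomorphic there. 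Hence, by Hartogs' theorem — exactly as in the argument for \textbf{Fact \ref{uni2}} — $\delta\cf$ extends to a global holomorphic function on $\mathscr{C}$, i.e.\! to an element of $\mathscr{R}_{\Delta=2}$. Conversely every $h\in\mathscr{R}_{\Delta=2}$ gives a first-order deformation via $\delta\cf=h$: then $\delta\tau_{ij}=\partial_i\partial_j h$ is symmetric and holomorphic, so $\tau+t\,\delta\tau$ still has positive-definite imaginary part for small $t$, and $\varrho$ is unchanged because $h$ is holomorphic near each $\cs_a$ and so does not alter the singular terms of $\cf$ that determine the local monodromies. This yields a canonical isomorphism $T_{[\mathscr{X}]}\mathscr{M}\simeq \mathscr{R}_{\Delta=2}$.

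To finish I would exhibit an honest family realizing all of $\mathscr{R}_{\Delta=2}$. Pick a basis $h_1,\dots,h_N$ of $\mathscr{R}_{\Delta=2}$ and set $\cf_t=\cf+\sum_{k=1}^{N} t_k\,h_k$ for $t$ in a small polydisc about $0$. Each $\cf_t$ is a prepotential of dimension $2$ on $\mathring{\mathscr{C}}$; the matrix $\boldsymbol{\tau}(t)_{ij}=\partial_i\partial_j\cf_t$ defines a holomorphic family of principally polarized Abelian $r$-varieties over $\mathring{\mathscr{C}}$ with $t$-independent monodromy, and since the singular contributions to $\cf_t$ along $\cd$ coincide with those of $\cf$, the local models of \S.\,\ref{s:stable} apply verbatim and the family extends to a genuine $\C^\times$-isoinvariant special geometry, with the same discriminant, Kodaira types and Coulomb dimensions, for all $t$ in the polydisc. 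This gives an $N$-dimensional smooth family through $[\mathscr{X}]$ whose tangent map at $t=0$ is the isomorphism of the previous paragraph; hence $\mathscr{M}$ is smooth of dimension $N=\dim\mathscr{R}_{\Delta=2}$ at $[\mathscr{X}]$. (Passing between the fine and coarse conformal manifold only changes $\mathscr{M}$ by a finite group action, which does not affect the dimension.)

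The main obstacle in turning this into a rigorous proof is the single-valuedness step: one must pin down precisely the cocycle by which $\cf$ transforms under $\pi_1(\mathring{\mathscr{C}})$ and check that holding $\varrho$ fixed really is equivalent to "$\delta\cf$ single-valued modulo $Sp(2r,\Z)$-quadratics", and one must verify that adding a globally regular $h\in\mathscr{R}_{\Delta=2}$ genuinely preserves the regularity of the geometry along \emph{every} component $\cs_a$, including the non-semisimple types $I_n$ and $I_n^*$ and the R-enhanced divisors; it is here that the explicit normal forms of \S.\,\ref{s:stable} are indispensable.
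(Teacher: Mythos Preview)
Your approach is essentially the same as the paper's: both identify $T_m\mathscr{M}$ with $\mathscr{R}_{\Delta=2}$ via the variation of the prepotential, argue extension across $\cd$, and exhibit the inverse map $h\mapsto \cf+\epsilon\,h$ for surjectivity. The paper bypasses what you flag as your ``main obstacle'' by writing the map intrinsically as $\partial_{m^a}\mapsto \tfrac{1}{2}\,\Pi^t\Omega\,\partial_{m^a}\Pi$, which is manifestly monodromy-invariant (since $\gamma^t\Omega\gamma=\Omega$ for $\gamma\in Sp(2r,\Z)$), and only afterwards computes locally that this equals $\partial_{m^a}\cf_U$ using $\mathscr{L}_\ce\cf=2\cf$; this makes the single-valuedness automatic without any cocycle bookkeeping.
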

 \begin{proof} Using the periods $\Pi(u,m)$ ($u\in\mathscr{C}$, $m\in\mathscr{M}$) we construct the intrinsic linear map\footnote{\ Here $m^a$ are local coordinates around the point $m\in\mathscr{M}$.}
 \be
\phi\colon T_m\mathscr{M}\to \mathscr{R}_{\Delta=2},\qquad
\phi\colon\partial_{m^a}\mapsto \frac{1}{2} \Pi(u,m)^t\mspace{1mu}\Omega\mspace{2mu} \partial_{m^a}\Pi(u,m)
 \ee
 \emph{A priori} the function in the \textsc{rhs} is only defined in $\mathring{\mathscr{C}}$, but it is easy to check\footnote{\ Use the classification of the possible singular fibers along the discriminant components and check the statement on a case by case basis.} that it extends everywhere on $\mathscr{C}$ and hence is an element of the chiral ring $\mathscr{R}$ of scaling dimension $\Delta=2$. 
 We have to show that the map $\phi$ is an isomorphism. Consider a small coordinate patch $U\subset\mathring{\mathscr{C}}$ and let
 $\cf(a,m)_U$ be the local pre-potential in $U$ as a function of the conformal parameters $m^a$.
 Locally in $U$ one has
 \be
 \frac{1}{2}\, \Pi(u,m)^t\mspace{1mu}\Omega\mspace{2mu} \partial_{m^a}\Pi(u,m)=
  \frac{1}{2}\,a^i\mspace{1mu}\partial_{m^a}\partial_{a^i}\cf_U= \partial_{m^a}\mspace{-5mu}\left(\frac{1}{2} a^i\partial_{a^i}\cf_U\right)=
  \partial_{m^a}\cf_U,
 \ee
 which shows that $\phi$ is injective since $\partial_{m^a}\cf_U=0$ in the open set $U$
 implies that the infinitesimal deformation is trivial by uniqueness of the analytic continuation. On the other hand, consider an open cover $\mathring{\mathscr{C}}=\cup_\alpha U_\alpha$ and
  locally modify
 the pre-potential $\cf_\alpha$ in  each $U_\alpha$ as
  \be
  \cf_\alpha\to \cf_\alpha+\epsilon\, h +O(\epsilon^2) \quad\text{for }h\in\mathscr{R}_{\Delta=2}.
  \ee
  It is easy to check that in the intersection $U_\alpha\cap U_\beta$
   the deformed local special geometries on $U_\alpha$ and $U_\beta$
  agree  to the first
  order in $\epsilon$. Hence the infinitesimal deformation makes sense at the global level on $\mathscr{C}$
  and defines a non-trivial infinitesimally deformed special geometry which corresponds to an element of the tangent space $T_m\mathscr{M}$.
  Hence $\phi$ is also surjective. \end{proof}
  
  Recall our standing assumption that $\Delta_i>1$ for all $i$. In this case:
  
  \begin{fact} When the Coulomb branch $\mathscr{C}$ is \emph{smooth}  the dimension of the
  conformal manifold $\mathscr{M}$ is simply the multiplicity of $2$ as a Coulomb dimension.
  In particular when $2$ is not a Coulomb dimension
the special geometry is \emph{rigid}.
 \end{fact}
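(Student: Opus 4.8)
The plan is to reduce the statement entirely to the preceding \textbf{Fact}, which identifies $\dim\mathscr{M}$ with $\dim\mathscr{R}_{\Delta=2}$, and then to compute $\dim\mathscr{R}_{\Delta=2}$ explicitly from the smoothness hypothesis together with the standing assumption that $\Delta_i>1$ for all $i$. So the only thing left to do is a short computation inside a graded polynomial ring.

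First I would recall that smoothness of $\mathscr{C}$ means $\mathscr{R}=\C[u_1,\dots,u_r]$ with $\mathscr{L}_\ce\,u_i=\Delta_i\,u_i$, so that $\mathscr{R}$ is the polynomial ring graded by dimension and a $\C$-basis of $\mathscr{R}_{\Delta=2}$ is given by the monomials $u_1^{a_1}\cdots u_r^{a_r}$ with $a_i\in\mathbb{Z}_{\geq0}$ and $\sum_i a_i\Delta_i=2$. The key step is to observe that, because each $\Delta_i>1$, any such monomial has total degree $\sum_i a_i\leq 1$: if instead $\sum_i a_i\geq 2$ then $\sum_i a_i\Delta_i>\sum_i a_i\geq 2$, a contradiction. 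The case $\sum_i a_i=0$ gives only the constant $1\in\mathscr{R}_0$, which has dimension $0\neq 2$; the case $\sum_i a_i=1$ gives the monomial $u_i$, of dimension $\Delta_i$, which equals $2$ precisely when $\Delta_i=2$. Hence the set $\{u_i:\Delta_i=2\}$ is a basis of $\mathscr{R}_{\Delta=2}$, and
\be
\dim\mathscr{R}_{\Delta=2}=\#\{i:\Delta_i=2\},
\ee
the multiplicity of $2$ in the tuple $\{\Delta_1,\dots,\Delta_r\}$.

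Combining this with the previous \textbf{Fact} gives $\dim\mathscr{M}=\#\{i:\Delta_i=2\}$; in particular if $2$ does not occur among the Coulomb dimensions then $\mathscr{R}_{\Delta=2}=0$, so $\dim\mathscr{M}=0$ and the special geometry is rigid. I do not expect any real obstacle here: the one delicate point — that the intrinsic map $\phi$ of the previous \textbf{Fact} lands in the chiral ring, i.e.\! that the candidate dimension-$2$ function extends holomorphically across the discriminant — has already been dealt with in the proof of that \textbf{Fact} (case-by-case on the Kodaira types of the singular fibers), and what remains in the smooth case is only the elementary combinatorics of monomials of weighted degree $2$ recorded above.
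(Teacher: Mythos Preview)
Your proposal is correct and is essentially the paper's own (implicit) argument: the paper states this \textbf{Fact} immediately after the preceding one, prefaced only by ``Recall our standing assumption that $\Delta_i>1$ for all $i$. In this case:'', and offers no separate proof, since in the smooth case $\mathscr{R}=\C[u_1,\dots,u_r]$ and the computation of $\dim\mathscr{R}_{\Delta=2}$ under $\Delta_i>1$ is exactly the elementary monomial count you spell out.
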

 
 \begin{cave} When the Coulomb branch is \emph{non-}smooth we may have $\dim\mathscr{M}>\dim\mathscr{C}\equiv r$. 
 \end{cave}

 \subsection{Non-rigid geometries: the Folk-theorem}\label{s:Fthm}
 
 Everybody believes that the following is true:
 
 \begin{folk} A non-rigid, smooth, indecomposable $\C^\times$-isoinvariant special geometry with $\dim_\C\mathscr{M}=s$
 is the Seiberg-Witten geometry of
 a coupled $\cn=2$ SYM $+$ ``matter'' model, where the gauge group $G$ has $s$ simple factors, and the  ``matter'' is a SCFT --  which may be non-Lagrangian -- such that the YM beta-functions vanish for all  factor gauge groups.\footnote{\ This is equivalent to requiring that the central charge of the $G_i$ current algebra of the ``matter'' SCFT has the correct value $h(G_i)^\vee$ for all simple factor $G_i$ of $G$. } The matter sector decouples into
 a set of SCFTs (possibly free) whose special geometries are rigid.  
 \end{folk}
  
  The main relevance for us of the \textbf{Folk-theorem} is that it says that all non-rigid special geometries can be constructed by ``gauging'' global symmetries of the rigid ones; therefore for the classification purpose we may (and do!)
  assume the geometry under consideration is rigid. This restriction will simplify our discussion.
In this subsection we spend some words about the \textbf{Folk-theorem} for the benefit of the skeptical reader, sketching a proof of it
  in some special situations, and giving some applications to the construction of Seiberg-Witten ``curves''
  including ones that were not known before. 
  \medskip

  The \textbf{Folk-theorem} is known to be true for \emph{isotrivial} geometries with a smooth
  Coulomb branch \cite{caorsi}. In this subsection we present some general considerations which apply also to the non-isotrivial case.
  
  \begin{cave} The \textbf{Folk-theorem} is \emph{false} in the non-smooth case.
  \end{cave}

  \subsubsection{Folk-theorem vs.\! classical special geometries}

 The precise statement of the \textbf{Folk-theorem} is that (under the assumptions) the coarse conformal manifold is a quasi-projective space $\mathscr{M}=\overline{\mathscr{M}}\setminus\sum_i W_i$
 while along \emph{some} divisor at infinity $W_{i_0}$ the gauge coupling gets parametrically \emph{small} 
  almost everywhere in $\mathscr{C}$. This means that (in an appropriate $Sp(2r,\Z)$ frame)
$r_G\leq r$ eigenvalues\footnote{\ The eigenvalues of $\mathrm{Im}\,\tau_{ij}(u)$ may be identified with $4\pi/g(u)_i^2$ where $g(u)_i$ are the effective gauge couplings in vacuum $u$.} of the matrix $\mathrm{Im}\,\tau_{ij}(u)$ go to infinity
for ``most'' points $u\in\mathscr{C}$.
 Physically  we expect that  the SCFT dynamics in this low-couplings regime
 has a weakly-coupled Lagrangian description,\footnote{\ In this description the ``matter'' is treated as a ``black box'': only the gauge degrees of freedom are described by weakly coupled fields.} and moreover that the 
 semiclassical treatment of the gauge sector gets (asymptotically) exact  near $W_i$.
 Going from the dynamics of the SCFT to its special geometry,
 one is led to think that the special geometry should also approach a semiclassical geometry.
 This is morally right, but the actual story is subtler. We start our discussion of the \textbf{Folk-theorem}
 by briefly reviewing  the geometry of a SCFT with a weakly-coupled Lagrangian description (hence with a Yang-Mills subsector).
 \medskip

 The special geometry of an $\cn=2$ model containing a Yang-Mills sector
  with the  gauge group $G=G_1\times\cdots\times G_s$ ($G_a$ simple Lie groups)
  has a Coulomb branch of the form\footnote{\ The sector
  called ``matter'' may contain further Yang-Mills sectors.}
  \be\label{cccciiib}
  \mathscr{C}_{G_1}\times  \mathscr{C}_{G_2}\times\cdots   \mathscr{C}_{G_s} \times\mathscr{C}_\text{\,``matter''}
  \ee where\footnote{\ As always $\mathsf{Weyl}(G_a)$ is the Weyl group of the Lie group $G_a$.}
  \be
  \mathscr{C}_{G_a}= \C^{r_{G_a}}/\mathsf{Weyl}(G_a)= \mathsf{Spec}\,\mathscr{R}_{G_a}\simeq \C^{r_{G_a}}
  \ee
 with $r_{G_a}$ the rank of the simple Lie group 
 $G_a$ while  $\mathscr{R}_{G_a}\simeq\C[x_1,\dots,x_{r_{G_a}}]^{\text{Weyl}(G_a)}$ is the graded polynomial ring in
 $r_{G_a}$ variables $u_i$ whose degrees are the degrees of $G_a$.
 
 \begin{defn}\label{iu77cccv} A \textit{classical special geometry with gauge group $G=G_1\times\cdots\times G_s$}
 is a $\C^\times$-isoinvariant special geometry over a Coulomb branch of the form
 \eqref{cccciiib} whose general fiber is an Abelian variety containing a \emph{fixed} Abelian subvariety $A_G$ of
 rank $r_G\equiv \sum_a r_{G_a}$ whose automorphism group $\mathsf{Aut}(A_G)$
 contains $\mathsf{Weyl}(G)\equiv\prod_a \mathsf{Weyl}(G_a)$. 
 \end{defn} 
 
 
 In particular when $r=r_G$, that is, when ``matter'' is a system of free hypermultiplets,
  a classical geometry is automatically isotrivial. In this case the
 Abelian varieties $A_G$ are as described in \cite{car1,car2} (see also \cite{caorsi}). In rank-2 all classical geometries are isotrivial, since the matter has at most rank-1 and hence is automatically isotrivial. The following statement is very well known:\footnote{\ A mathematical proof goes as follows.
 $\cn=4$ SUSY implies that the scalar metric is flat \cite{Cecotti:2015wqa}, hence the special geometry is isotrivial for the reasons explained in \S.\,\ref{s:isotrivial}.}
 \begin{fact} When our $\cn=2$ SCFT is actually $\cn=4$ invariant, the non-renormalization theorems of
 \emph{extended} supersymmetry imply that the special geometry is classical.
 \end{fact}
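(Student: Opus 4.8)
The plan is to deduce the statement from two ingredients already in hand: the isotriviality criterion of \S.\,\ref{s:isotrivial} and the structure of the Coulomb branch of an $\cn=4$ theory. Every $\cn=4$ SCFT is (a discrete gauging of) $\cn=4$ super Yang--Mills with a semisimple gauge group $G=G_1\times\cdots\times G_s$, so its $\cn=2$ Coulomb branch is $\mathscr{C}\simeq \mathfrak{t}_\C/\mathsf{Weyl}(G)\simeq \C^{r_G}$, where $\mathfrak{t}\subset\mathrm{Lie}(G)$ is a Cartan subalgebra, $r_G=\dim\mathfrak{t}$, and $\mathsf{Weyl}(G)=\prod_a\mathsf{Weyl}(G_a)$. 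In particular the rank satisfies $r=r_G$ and there is no genuine ``matter'' Coulomb factor, so \textbf{Definition \ref{iu77cccv}} reduces in this case to the single requirement that the fixed general fiber $A$ carry an action of $\mathsf{Weyl}(G)$ by polarized automorphisms.

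The key input is the $\cn=4$ non-renormalization theorem: extended supersymmetry forces the special K\"ahler metric on $\mathscr{C}$ to be flat (see \cite{Cecotti:2015wqa} for the geometric formulation), equivalently the $\cn=2$ prepotential is the classical quadratic one $\cf=\tfrac12\,\tau_{ij}\,a^i a^j$ with a \emph{constant} symmetric matrix $\tau_{ij}$ whose imaginary part is positive definite. Constancy of $\tau_{ij}$ says precisely that all smooth fibers $\mathscr{X}_u$ are isomorphic as polarized Abelian varieties, so by the chain of equivalences recalled in \S.\,\ref{s:isotrivial} the special geometry is isotrivial; hence it admits the quotient presentation \eqref{juu7651z}, $\mathscr{X}=(A\times\check{\mathscr{C}})/\Gamma$, for a fixed polarized Abelian variety $A$ and a finite group $\Gamma\subseteq\mathsf{Aut}(A)$, with $\check{\mathscr{C}}\to\mathscr{C}$ branched only along the discriminant.

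It then remains to identify $A$ and $\Gamma$, which I would do from the electric--magnetic charge lattice. For $\cn=4$ SYM with gauge group $G$ this lattice is $H_1(A,\Z)\simeq\Lambda_{\mathrm{cochar}}(G)\oplus\Lambda_{\mathrm{char}}(G)$ (twisted by $\pi_1(G)$ in the non-simply-connected cases), equipped with its canonical symplectic pairing, and the Coulomb-branch monodromy group is $\mathsf{Weyl}(G)$ acting diagonally on the two summands through the reflection representation. This action preserves the symplectic form, so $\mathsf{Weyl}(G)\subseteq\mathsf{Aut}(A)$ and $A=A_G$ is the Abelian variety of \cite{car1,car2,caorsi} associated to $G$; since $r=r_G$ and the ``matter'' consists of free hypermultiplets, the geometry meets \textbf{Definition \ref{iu77cccv}}, and a subsequent discrete gauging by a finite group $\Gamma_0$ of symplectic automorphisms merely replaces $(A,\Gamma)$ by $(A/\Gamma_0,\Gamma/\Gamma_0)$ and preserves classicality. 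The main obstacle in turning this into a rigorous argument is the first step: making the $\cn=4$ non-renormalization statement mathematically precise -- i.e.\ extracting flatness of the Coulomb-branch metric from the extended-SUSY Ward identities rather than quoting it -- and, to a lesser extent, pinning down the charge lattice and the $\mathsf{Weyl}(G)$-action uniformly across all gauge groups, including the disconnected and non-simply-connected cases.
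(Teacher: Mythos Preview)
Your proposal is correct and follows the same line as the paper's argument, which is actually confined to a single footnote: $\cn=4$ SUSY forces the scalar metric to be flat \cite{Cecotti:2015wqa}, hence the geometry is isotrivial by the criteria of \S.\,\ref{s:isotrivial}. The paper leaves it implicit that the resulting isotrivial quotient $(A\times\check{\mathscr{C}})/\Gamma$ has $\Gamma=\mathsf{Weyl}(G)$ acting on the fixed fiber, whereas you spell this out via the charge lattice; your treatment is thus more complete in checking \textbf{Definition~\ref{iu77cccv}} explicitly, but the essential content is the same.
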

 
When $r_G\leq r$ eigenvalues of $\mathrm{Im}\,\tau_{ij}$ go to infinity the Abelian generic fiber  of the classical geometry $\mathscr{X}$ degenerates.
$\mathscr{X}_u$ ($u$ generic) remains a commutative algebraic group (hence still smooth)
 but is no longer compact: $\mathscr{X}_u$ is a fibration over an Abelian variety of dimension $(r-r_G)$
 with fiber the algebraic torus $(\C^\times)^{r_G}$ (again the Chevalley-Barsotti theorem\footnote{\ A
 more general and deeper treatment will be presented elsewhere.}). In particular in the (much simpler)
 case where $r=r_G$, the classical
 special geometry reduces in this weak-coupling limit  to the classical geometry of the Coulomb branch of the  3d $\cn=4$ model with the same 
 gauge group and matter content. However in 3d QFT quantum effects modify the geometry of the
Coulomb branch, so that the \emph{quantum} 3d $\cn=4$ Coulomb branch is not the classical one:
see \cite{3dN4} and references therein for a rigorous mathematical definition of the 3d Coulomb branch.
It is reasonable to expect that on the divisor $W_i\subset\overline{\mathscr{M}}$ at infinity in the conformal manifold 
the special geometry reduces (in this simpler situation) to the quantum 3d one.
Since the ``proof'' of the \textbf{Folk-theorem} asserts that ``almost everywhere'' on the Coulomb branch
we get the semiclassical situation, we deduce two geometric properties of the 3d $\cn=4$ Coulomb branch: 
\begin{fact}
{\bf(1)} the 3d $\cn=4$ Coulomb branch of a gauge theory coupled to hypermultplets
is a Lagrangian fibration over the 4d Coulomb branch of the corresponding model with
fibers the dual of the maximal torus in the gauge group $G$, i.e.\! the fiber $\simeq(\C^\times)^{r_G}$.
{\bf(2)} the special geometry on the divisor $W_i\subset\overline{\mathscr{M}}$ is birational to the classical 3d $\cn=4$ Coulomb branch. 
\end{fact}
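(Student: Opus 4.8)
The plan is to deduce both statements by carefully taking the weak-coupling limit of the \emph{classical} special geometry of \textbf{Definition \ref{iu77cccv}} along the divisor $W_i\subset\overline{\mathscr{M}}$ and matching the limit with the $3$d $\cn=4$ Coulomb branch. First I would set up the degeneration. Over $\mathscr{C}=\mathscr{C}_{G_1}\times\cdots\times\mathscr{C}_{G_s}\times\mathscr{C}_\text{``matter''}$ the generic fibre of a classical geometry is a polarized Abelian variety $A_u$ containing the fixed subvariety $A_G$ of rank $r_G=\sum_a r_{G_a}$, and along $W_i$ exactly $r_G$ eigenvalues of $\mathrm{Im}\,\tau_{ij}(u)$ diverge at ``most'' points $u$. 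By the theory of semistable degenerations of polarized Abelian varieties --- equivalently, by the Chevalley--Barsotti theorem already invoked above --- the $A_G$-part of the fibre degenerates to the split algebraic torus $(\C^\times)^{r_G}$ while the complementary factor $A_u/A_G$ stays an Abelian variety of dimension $r-r_G$; the total space then becomes an extension of a Lagrangian Abelian fibration over $\mathscr{C}_\text{``matter''}$ by the trivial bundle with fibre $(\C^\times)^{r_G}$ over $\prod_a\mathscr{C}_{G_a}$. The condition $\Omega|_{\mathscr{X}_u}=0$ is closed on the total space, hence survives the limit, so the limiting fibration is again Lagrangian.

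Next I would pin down the fibre. The divergent directions in $H_1(A_u,\Z)$ are the ``magnetic'' $B$-cycles of $A_G$, and after uniformisation the surviving torus is $\mathrm{Hom}(\Lambda_{\mathrm{ch}},\C^\times)$ for the character lattice $\Lambda_{\mathrm{ch}}$ of $G$, i.e.\! the dual of the Cartan torus, on which $\mathsf{Weyl}(G)=\prod_a\mathsf{Weyl}(G_a)$ acts through $A_G$. In the simpler case $r=r_G$ (``matter'' a system of free hypermultiplets) the base is $\mathscr{C}_G=\prod_a\C^{r_{G_a}}/\mathsf{Weyl}(G_a)$, which is exactly the $4$d Coulomb branch, and the limiting geometry is the Lagrangian $(\C^\times)^{r_G}$-fibration over it. By the semiclassical argument underlying the \textbf{Folk-theorem} the physics near $W_i$ is that of weakly coupled $\cn=2$ super Yang--Mills on $\R^3\times S^1$, whose moduli space is the $3$d $\cn=4$ Coulomb branch of the corresponding $3$d gauge theory; hence the limiting geometry \emph{is} that Coulomb branch, which is statement {\bf(1)}.

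For {\bf(2)} I would compare the limit geometry of {\bf(1)} with the \emph{classical} $3$d Coulomb branch, which by definition is the Lagrangian $(\C^\times)^{r_G}$-fibration over $\C^{r_G}/\mathsf{Weyl}(G)$ produced by Kaluza--Klein reduction \emph{without} quantum corrections. The semiclassical assertion of the \textbf{Folk-theorem} is that the classical gauge-sector action becomes exact along $W_i$ up to monopole corrections that are exponentially small and vanish identically outside a proper analytic subset of $\mathscr{C}$; equivalently the quantum corrections move the $3$d Coulomb branch only along a divisor. Hence the $W_i$ special geometry and the classical $3$d Coulomb branch agree on a dense Zariski-open subset, i.e.\! they are birationally equivalent.

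The main obstacle is making ``at most points $u$'' rigorous and controlling the degeneration \emph{uniformly} over $\mathscr{C}$: one must show that along $W_i$ the family degenerates semistably with exactly torus rank $r_G$ on a dense open subset of $\mathscr{C}$ --- ruling out that a ``bad'', non-semistable locus could fail to be proper --- and that the limiting holomorphic symplectic form stays non-degenerate there. The deeper, genuinely physical obstacle is identifying the $W_i$ geometry with the \emph{quantum} $3$d Coulomb branch of Braverman--Finkelberg--Nakajima \cite{3dN4}: matching their holomorphic symplectic structure with our limit is at present a conjecture, which is precisely why {\bf(2)} is phrased only up to birational equivalence rather than isomorphism.
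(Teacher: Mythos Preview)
The paper does not give a proof of this \textbf{Fact}; it simply states, immediately after the \textbf{Fact}, that ``This statement is in fact a math theorem, see eq.(3.17), \textbf{Proposition 5.19}, and \textbf{Corollary 5.21} of \cite{3dN4}.'' The heuristic argument you develop --- degenerating the classical special geometry along $W_i$ via Chevalley--Barsotti and matching to the 3d picture --- is essentially the \emph{motivational} discussion the paper gives in the paragraphs \emph{preceding} the \textbf{Fact}, but the paper presents that as physical heuristics leading to an \emph{expectation}, and then defers the actual mathematical content to \cite{3dN4}.

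Your elaboration is a reasonable physics-level argument, and you correctly flag its weak points. But note a structural issue: statement \textbf{(1)} is a claim about the 3d $\cn=4$ Coulomb branch as a mathematical object in its own right (in the rigorous sense of \cite{3dN4}), and the results cited there establish the $(\C^\times)^{r_G}$-fibration structure directly from the Braverman--Finkelberg--Nakajima construction, not by taking a limit of any 4d family. Your route instead computes what the weak-coupling limit of the 4d geometry \emph{should} be and then invokes physics to identify that limit with the 3d Coulomb branch --- exactly the step you yourself call ``at present a conjecture.'' So your argument, while coherent as heuristics, does not replace the citation; it reproduces the informal reasoning the paper uses to motivate why one should look in \cite{3dN4} in the first place.
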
   
\noindent This statement is in fact a math theorem, see eq.(3.17), \textbf{Proposition 5.19}, and
\textbf{Corollary 5.21} of \cite{3dN4}. This \textbf{Fact} prompts the

\begin{defn}\label{uuuu453} By a \emph{weakly-coupled geometry} we mean a \emph{generalized} special geometry
$\pi\colon\mathscr{W}\to\mathscr{C}$ with Lagrangian fibers and section whose general fiber $\mathscr{W}_u$ 
 is a commutative algebraic group
 which is an extension of an Abelian variety $A_u$ (possibly of dimension 0)
 \be
 1\to T^\vee\to \mathscr{W}_u\to A_u\to 0
 \ee
  by the dual $T^\vee\simeq (\C^\times)^{r_G}$ of the maximal torus
 of a (compact, semisimple) Lie group $G$ which is the product of $s$ simple factors. 
 The set of degrees of the group $G$ are contained, with the appropriate multiplicities,
 in the $r$-tuple $\{\Delta_i\}$ of Coulomb dimensions.   
\end{defn}

\begin{rem}
In \textbf{Remark \ref{kiuqwe44}} we noticed that the degeneration of the fiber as we approach the origin in $\mathscr{C}$ is always additive. On the contrary, the degeneration at infinity in $\mathscr{M}$ is always multiplicative.
\end{rem}

 \begin{rem} The theory of generalized special geometries, their number-theoretical flavor,
 and their applications will be discussed elsewhere \cite{toappear}.
 \end{rem}

 In view of these remarks, we restate the \textbf{Folk-theorem} as follows
 \begin{folk} A smooth indecomposable $\C^\times$-isoinvariant special geometry where 2 has multiplicity $s$ as a Coulomb dimension has a (coarse) conformal manifold $\mathscr{M}$ which is a quasi-projective variety i.e.\!
 can be compactified in a projective variety $\overline{\mathscr{M}}$ while $\mathscr{M}=\overline{\mathscr{M}}\setminus W$ for a simple divisor $W=\sum_i W_i$ at infinity. There is (at least one) component $W_i$ of $W$ such that as we go close to it
the special geometry asymptotically approaches a classical
 \emph{(weakly-coupled)} geometry for the appropriate Lie group $G$.
 \end{folk}
 
\begin{rem} The \textbf{Folk-theorem} implicitly contains
the idea that the conformal manifold $\mathscr{M}$ has a \emph{canonical} compactification.
In the known examples this is true. For instance, for class $\cs$ theories \cite{gaiotto,classS} (which form
a sizable portion of all $\cn=2$ SCFTs) the conformal manifold
is the moduli space $\cm_{g,n}$ of genus $g$ curves with $n$ punctures which has the canonical
Deligne-Mumford compactification \cite{DMum,harris}.
\end{rem}

\begin{rem} Typically the divisor $W$ has several irreducible components at which
we have \emph{inequivalent} weakly-coupled geometries. When this happens we say
that there is \emph{duality of the Argyres-Seiberg type} \cite{AS} between the corresponding weakly-coupled QFTs. The basic example \cite{AS} is $SU(3)$ SYM with 6 fundamentals whose conformal manifold has a second divisor at infinity where it describes $SU(2)$
coupled to ``matter'' consisting of one fundamental hyper plus one Minahan-Nemeshanski interacting SCFT of type $E_6$ \cite{MN1,MN2}.
\end{rem}
 
 \begin{warn} Proving (or disproving) the \textbf{Folk-theorem} is \emph{not} a purpose of this paper.
 We limit ourselves to sketch a proof in the special case of rank-$2$
 where the necessary mathematical analysis has already been done in the
 math literature for different purposes. As a preparation we consider a slightly more general
 situation which is of independent interest.
 \end{warn}
 
 \subsubsection{A special class of dimension $r$-tuples $\{\Delta_i\}$}
 
 We focus on the following:
 \begin{sit} A non-isotrivial $\C^\times$-isoinvariant special geometry with $r$ even, having
 a $r$-tuple of dimensions $\{\Delta_1,\dots,\Delta_r\}$
which contains $2$ with multiplicity $1$ and in addition a dimension $\Delta_{i_1}$ which is new in rank $r/2$, while the axis spanned by $u_{i_1}$ (called the ``good'' axis henceforth)
 is \emph{not} contained in the discriminant.
\end{sit}
\begin{rem}
This is not a too ``special'' situation. In particular all $r=2$ models with precisely one dimension 2 have this form
with the following \textbf{\emph{qualification}} (see \S.\,\ref{s:falting}): as we move in the conformal manifold $\mathscr{M}$
the discriminant divisor $\cd\subset\C^2$ will also move. In particular there is a point at infinity
in $\mathscr{M}$ where the ``good'' axis ends up being inside the discriminant: our arguments here apply only as long as we keep away from that particular corner of the conformal manifold.
\end{rem}

\begin{rem} The \textbf{Special situation} covers all Lagrangian $\cn=2$ SCFT with a simple gauge group $G$
 of rank $r$ which has a ``good''\footnote{\ For the groups in the list \eqref{kkkkbbbb7} the Coxeter number $h$ is always a ``good'' degree except for $E_6$ which has the single ``good'' degree $9$.
 $F_4$ has 2 ``good'' degrees $8$ and $12$, while $E_8$ has 3 ``good'' degrees $20$, $24$ and $30$.} degree $d$ such that\footnote{\ Throughout this paper $\phi\colon \mathbb{N}\to\mathbb{N}$ is
 Euler's totient (multiplicative) function.} $\phi(d)=r$. Their list is
 \be\label{kkkkbbbb7}
 SU(p)\ \text{($p$ prime),}\quad SO(2^{k}+1),\quad Sp(2^{k+1}),\quad G_2,\quad F_4,\quad E_6,\quad E_8
 \ee
 which contains all simple Lie groups of rank-2.
\end{rem}

 In the \textbf{Special situation} $\mathscr{M}$ is a non-compact complex curve which we may write as
 a projective curve $\overline{\mathscr{M}}$ minus a few points (``cusps at infinity''). In the known examples
 $\mathscr{M}$ is a (non-compact) modular curve \cite{modulllar}: one expects this to be true in general
 when $2$ has multiplicity 1. When $r=2$ one proves that $\mathscr{M}$ is a non-compact modular curve of genus zero (see below).

We fix a point $u$ on the ``good'' axis with $u\neq0$.
By restricting \eqref{iiii8x34} to $u$ we get a one-parameter family of Abelian varieties over the 
conformal manifold\footnote{\ Here we are oversimplifying since $\mathscr{M}$ is typically coarse. Implicitly we add some extra structure to the geometry to make some finite cover of $\mathscr{M}$ into a fine conformal manifold. This is implicitly done all the time in the physical literature about dualities of $\cn=2$ SCFTs. }
\be\label{uuuuuu12qa}
\mathscr{Z}_u\to \mathscr{M}
\ee
whose fibers are principally polarized Abelian varieties of dimension $r$
which
are left invariant by the (polarized) automorphism
$\xi\equiv\exp(2\pi i\,\ce/\Delta_{i_1})$ which acts on the fibers via the analytic representation
\be\label{ttttt873}
\sigma(\xi)=\mathrm{diag}(e^{2\pi i(1-\Delta_1)/\Delta_{i_1}},\dots,e^{2\pi i(1-\Delta_r)/\Delta_{i_1}}).
\ee
Since $2$ is a dimension, the family \eqref{uuuuuu12qa} is non-isotrivial while both conjugate roots
\be\label{conjugr}
\exp(2\pi i/\Delta_{i_1})\quad\text{and}\quad \exp(-2\pi i/\Delta_{i_1})
\ee
are eigenvalues of $\sigma(\xi)$ and hence both must have multiplicity at least 2 as eigenvalues of the rational representation $\chi(\xi)$. 
By assumption the algebraic number $\exp(2\pi i/\Delta_{i_1})$ has degree $r$,
and hence
\be
\det[z-\chi(\xi)]=\Phi_m(z)^2
\ee
where $\Phi_m(z)$ is a cyclotomic polynomial with $\phi(m)=r$.
Then $\det[z-\sigma(\xi)]=\Phi_m(z)$ and the $r$ phases in eq.\eqref{ttttt873} are precisely the primitive $m$-th roots of 1. 

The classification of all \textbf{Special situation} geometries is then reduced to the construction of all non-isotrivial one-parameter families of
principally polarized
Abelian $r$-folds with a (polarization preserving) $\Z_m$ automorphism $\xi$ (with $\phi(m)=r$)
having the analytic representation in eq.\eqref{ttttt873}.
We \textbf{\emph{Claim}} that a sufficient (but not necessary!) condition for the validity of the \textbf{Folk-theorem} in the \textbf{Special situation} is that all one-parameter families of Abelian varietes with the stated properties have an automorphism group containing a non-trivial Weyl group of the proper kind. Next we show that this is the case in rank 2.

 \subsubsection{Folk-theorem: proof in rank 2}\label{Folk2}
 
 We return to the \textbf{Folk-theorem} in rank 2. We focus on the case with dimensions  $\{2,\Delta\}$
 where $\Delta$ is a rank-1 dimension $\neq1,2$. Since the discriminant locus $\cd_q\subset\mathscr{C}$ will move with the point $q\in\mathscr{M}$, at generic points
 in $\mathscr{M}$ it stays away from the ``good'' axis of dimension $\Delta$
and we are exactly in the \textbf{Special situation}. 
 In particular $m=3,4,6$, so either $\Delta=m$ or $\Delta=m/(m-1)$,
 and the fibers $X_u$ along the axis of dimension $\Delta$
 have $\Z_m\subset \mathsf{Aut}(X_u)$.
 
 \begin{pro}[Cf.\! \textbf{Theorem 4.8} of \cite{fujiki}\!\!
 \footnote{\ \label{fffnote}A slightly weaker result is in \textbf{Theorem 13.4.5} of \cite{complexAbelian}. For our purposes we need the strong version of the theorem.}] \label{pro3}
 The one-dimensional non-isotrivial families of
 principally polarized Abelian surfaces $A_\tau$ with a non-trivial automorphism group $\mathfrak{A}$
are:
{\rm$$
\begin{tabular}{cccccc}\hline\hline
$\begin{smallmatrix}\textbf{Abelian}\\
\textbf{surface $A_\tau$}\end{smallmatrix}$ & $\mathfrak{A}$ & $\begin{smallmatrix}\textbf{normal}\\
\textbf{subgroup}\end{smallmatrix}$ & $\tau_{ij}$ & $\begin{smallmatrix}\textbf{analytic}\\
\textbf{representation}\end{smallmatrix}$ & $\begin{smallmatrix}\textbf{Jacobian of}\\
\textbf{$g=2$ curve}\end{smallmatrix}$\\\hline
$E_\tau\times E_\tau$ & $D_8\equiv\mathsf{Weyl}(C_2)$ & $\Z_4$ & $\left(\begin{smallmatrix}\tau & 0\\
0 &\tau\end{smallmatrix}\right)$ & $\left\langle \left(\begin{smallmatrix}0 & 1\\
-1 &0\end{smallmatrix}\right),\left(\begin{smallmatrix}0 & 1\\
1 &0\end{smallmatrix}\right)\right\rangle$ & \\
$(E_\tau\times E_\tau)/\Z_2^2$ & $D_8\equiv\mathsf{Weyl}(C_2)$ & $\Z_4$ & $\left(\begin{smallmatrix}\tau & 1/2\\
1/2 &\tau\end{smallmatrix}\right)$ & $\left\langle \left(\begin{smallmatrix}0 & 1\\
-1 &0\end{smallmatrix}\right),\left(\begin{smallmatrix}0 & 1\\
1 &0\end{smallmatrix}\right)\right\rangle$ & $y^2=x(x^4+\alpha x^2+1)$\\
$(E_\tau\times E_\tau)/\Z_3$ & $D_{12}\equiv\mathsf{Weyl}(G_2)$ & $\Z_6$ & $\left(\begin{smallmatrix}\tau & \tau/2\\
\tau/2 &\tau\end{smallmatrix}\right)$ & $\left\langle \left(\begin{smallmatrix}0 & -1\\
1 &1\end{smallmatrix}\right),\left(\begin{smallmatrix}0 & 1\\
1 &0\end{smallmatrix}\right)\right\rangle$ & $y^2=x^6+\alpha x^3+1$\\
\hline\hline
\end{tabular}
$$} 
\hskip-0.15cm where $E_\tau$ is the elliptic curve of period $\tau\in H\equiv SL(2,\R)/U(1)$, $\alpha\equiv\alpha(\tau)$ is a certain complex coordinate on the curve $\mathscr{M}=H/\Gamma$, $\Gamma$ a suitable congruence subgroup of $SL(2,\Z)$,
and $D_{2k}$ is the dihedral group with $2k$ elements. 
 \end{pro}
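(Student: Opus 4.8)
The plan is to classify one-dimensional non-isotrivial families of principally polarized Abelian surfaces with a non-trivial automorphism group $\mathfrak{A}$ by combining the general theory of Abelian varieties with automorphisms (as in chapter 13 of \cite{complexAbelian}) with the structure theory of one-parameter VHS of weight one. First I would observe that $\mathfrak{A}$ is a finite group which, via its rational representation $\chi$, acts on $H_1(A_\tau,\Z)\cong\Z^4$ preserving the principal polarization, hence embeds in $Sp(4,\Z)$; via its analytic representation $\sigma$ it acts on $T_0 A_\tau\cong\C^2$, with $\chi\otimes_\mathbb{Q}\C\simeq\sigma\oplus\bar\sigma$. Because the family is one-dimensional and non-isotrivial, the period map $\mathscr{M}\to \mathsf{S}_2$ is non-constant, so the Mumford–Tate (equivalently, the $\mathbb{Q}$-algebraic monodromy) group is a proper non-trivial subgroup of $Sp(4,\mathbb{Q})$; by the structure theorem (\S\ref{s:structurethm}) the connected monodromy is isogenous to a product of simple non-compact factors each acting on a symplectic summand of $\mathbb{Q}^4$. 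This forces either a genuine rank-one factor $SL(2)$ with its $2$-dimensional representation appearing with multiplicity two (giving $A_\tau$ isogenous to $E_\tau\times E_\tau$ with a single varying modulus $\tau$), or a rank-one factor acting on all of $\mathbb{Q}^4$ (the "genus-two" symplectic $\boldsymbol{4}$ of $SL(2)$), and the latter leads to the same isogeny type once one imposes the existence of the automorphism. In all cases the generic fiber is isogenous to $E_\tau\times E_\tau$ for a varying $\tau$, so one reduces to determining which principally polarized quotients $(E_\tau\times E_\tau)/K$ (for $K$ a finite subgroup making the quotient principally polarized) admit extra automorphisms beyond $\pm 1$.

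The second step is the finite combinatorial analysis: an automorphism of $E_\tau\times E_\tau$ preserving the product polarization up to isogeny is given (for generic $\tau$, so that $\mathrm{End}(E_\tau)=\Z$) by an element of $GL(2,\Z)$ acting on the two factors, together with the possibility of the factor-swap; demanding that it descend to a polarization-preserving automorphism of the specific principally polarized quotient constrains the period matrix $\tau_{ij}$ to one of the special forms $\left(\begin{smallmatrix}\tau & 0\\ 0 & \tau\end{smallmatrix}\right)$, $\left(\begin{smallmatrix}\tau & 1/2\\ 1/2 & \tau\end{smallmatrix}\right)$, or $\left(\begin{smallmatrix}\tau & \tau/2\\ \tau/2 & \tau\end{smallmatrix}\right)$. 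For each of these one computes $\mathsf{Aut}(A_\tau)$ for generic $\tau$ directly from the rational representation: the stabilizer in $Sp(4,\Z)$ of the corresponding point-orbit in Siegel upper half space is, respectively, $D_8\cong\mathsf{Weyl}(C_2)$, $D_8$, and $D_{12}\cong\mathsf{Weyl}(G_2)$, with the displayed normal cyclic subgroups $\Z_4$, $\Z_4$, $\Z_6$ (the subgroups acting by analytic scalars $i$, $i$, $e^{2\pi i/3}$ respectively) and the displayed generators of the analytic representation. This is exactly the computation underlying \textbf{Theorem 13.4.5} of \cite{complexAbelian} and \textbf{Theorem 4.8} of \cite{fujiki}; I would cite it for the strong form (full automorphism group, not just a subgroup) rather than reprove it.

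The third step is to identify the Jacobian locus: the first family $E_\tau\times E_\tau$ is a product of elliptic curves, hence not a Jacobian of a smooth genus-two curve, so no curve is listed; the other two are, generically, Jacobians, and one exhibits the curves explicitly by writing down a genus-two curve whose Jacobian carries the required automorphism. For the $\Z_4$ case the curve $y^2=x(x^4+\alpha x^2+1)$ has the hyperelliptic involution plus the order-four automorphism $(x,y)\mapsto(-x, iy)$ composed appropriately, realizing $\Z_4$ on the Jacobian; for the $\Z_6$ case the curve $y^2=x^6+\alpha x^3+1$ carries $(x,y)\mapsto(\zeta_3 x, y)$ giving the order-three (hence, with the hyperelliptic involution, order-six) automorphism. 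One checks that as $\alpha$ varies over the appropriate modular curve $\mathscr{M}=H/\Gamma$ the period matrix traces out the stated special form, and that $\alpha=\alpha(\tau)$ is a Hauptmodul for the relevant congruence subgroup $\Gamma\subset SL(2,\Z)$ (a genus-zero group, so $\mathscr{M}$ is a non-compact genus-zero modular curve), matching the claim in the surrounding text. The main obstacle is the rigidity/exhaustiveness part of step one — showing there are \emph{no other} non-isotrivial one-parameter families with $\mathfrak{A}\neq\{\pm1\}$, i.e.\ that the list is complete — which requires the structure theorem for one-parameter weight-one VHS together with the finite bound on automorphism groups of polarized Abelian surfaces, and a careful case analysis of the possible Mumford–Tate data; this is precisely the content of \cite{fujiki}, so in practice I would invoke that reference and devote the written proof mainly to steps two and three and to fixing the congruence subgroups $\Gamma$ and the coordinates $\alpha(\tau)$.
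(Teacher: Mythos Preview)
The paper does not prove this \textbf{Proposition} at all: it is stated as a citation of \textbf{Theorem 4.8} of \cite{fujiki} (with the weaker version in \textbf{Theorem 13.4.5} of \cite{complexAbelian}), and the only remark following it is the one-line observation that $A_\tau\sim E_\tau\times E_\tau$ up to isogeny in all three cases. Your proposal is therefore not comparable to any argument in the paper --- there is none --- but your outline is a reasonable summary of what the cited references actually do, and you correctly identify that the exhaustiveness of the list (your ``main obstacle'') is the substantive content that must be borrowed from \cite{fujiki}. In short: the paper treats this as a black box from the literature, and your plan to do the same (while sketching steps two and three for context) is entirely appropriate.
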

  Note that modulo isogeny we have $A_\tau\sim E_\tau\times E_\tau$ in all three cases (cf.\! footnote \ref{fffnote}).
  \medskip
 
 From the \textbf{Proposition} we see that the invariance of the one-parameter family
  under the cyclic automorphism 
 group $\langle\xi\rangle\simeq\Z_m$ implies its invariance under a full Weyl group
 which is $\mathsf{Weyl}(G_2)\supset\mathsf{Weyl}(A_2)$, $\mathsf{Weyl}(C_2)$,
 and $\mathsf{Weyl}(G_2)$ for respectively $m=3,4,6$. Note that the derivative
 of the gauge coupling $\tau_{ij}$ along the ``good'' axis with respect to the coordinate $\alpha$ on $\mathscr{M}$
 is\footnote{\ For the analogue statement in higher rank see \cite{car2}.}
 \be
 \frac{\partial\tau_{ij}}{\partial \alpha}\Big|_\text{``good''}\propto\left(\begin{smallmatrix}\textbf{invariant scalar product}\\
 \textbf{$C_{ij}$ on the weight lattice}\end{smallmatrix}\right)\!.
 \ee
 This is precisely the condition which we claimed above to be sufficient for
 the validity of the \textbf{Folk-theorem}. However we have still to explain why the  \textbf{\emph{Claim}}
 holds.
 
 \subparagraph{Justification of the \textbf{\emph{Claim}} for $r=2$.}
 The first observation is that -- for each choice of $m\in\{3,4,6\}$ and family $\{A_\tau\}_{\tau\in\cm}$ 
 with $\Z_m\subset\mathfrak{A}$ -- we have a one-parameter family of
 \emph{isotrivial} special geometries
 \be
 (A_\tau\times \C^2)\big/\mathsf{Weyl}(G)\to \C^2\big/\mathsf{Weyl}(G),\qquad \tau\in\cm
 \ee 
  where $G=A_2,C_2,G_2$ for respectively $m=3,4,6$. The Weyl symmetry acts
  on the factor $A_\tau$ via the embedding $\mathsf{Weyl}(G)\subset\mathfrak{A}$ and on the factor $\C^2$ via its
  degree 2 representation as a reflection group. In this way we get \emph{four} rank-2 special geometries
  which are \emph{classical} in the sense of \textbf{Definition \ref{iu77cccv}}.
  For $G=SU(3)$ or $G_2$ we have just \emph{one} classical geometry, but for $Sp(4)$
  we get \emph{two} inequivalent classical geometries with $A_\tau=(E_\tau\times E_\tau)/\Z_2^2$
  and $A_\tau=E_\tau\times E_\tau$ respectively.
  
  By the non-renormalization theorems of extended SUSY, the quantum special geometry 
  of $\cn=4$ SYM coincides with its semiclassical one. Hence for $G=SU(3)$
  or $G_2$ the only semiclassical geometry must be the $\cn=4$ one. In the case of
  $Sp(4)$, on the other hand, in addition to the $\cn=4$ geometry, we have a second ``exotic'' semiclassical geometry.
  We stress than the ``exotic'' semiclassical geometry has the peculiarity that 
  its fibers are not the Jacobians of genus 2 curves, i.e.\! there is no (smooth) Seiberg-Witten
  curve.
  
Roughly speaking the \textbf{Folk-theorem} asserts that in
  the  limit $\tau\to i\infty$ the geometry approaches a classical one;
  however the classical geometry itself is degenerating in the limit to a weakly-coupled one,
  so the actual statement is that as $\tau\to i\infty$ we get a generalized
  geometry as in \textbf{Definition \ref{uuuu453}}.
  Physically this is merely the statement that at extreme weak coupling the semiclassical approximation becomes exact. On the other hand \textbf{Proposition \ref{pro3}} states that in a sufficiently
narrow $\C^\times$-stable neighborhood $U_\tau\subset \mathscr{C}$ of the ``good'' axis the semiclassical description is
a good approximation for all $\tau$. The size of the neighborhood $U_\tau$ where the semiclassical approximation is reliable  depends on the coupling $\tau$. Heuristically the \textbf{\emph{Claim}} says that in the limit
$\tau\to i\infty$ the semiclassical domain
$U_\tau\subset \mathscr{C}$ will enlarge to cover ``almost all'' the Coulomb branch $\mathscr{C}$ so that semiclassical physics in the ``zero coupling'' limit $\tau\to i\infty$ gives the correct prediction for ``almost all'' observables  and almost all vacua $u\in\mathscr{C}$. We formalize this heuristic picture as follow. Fix a huge constant $\Lambda\ggg1$
and consider the $\C^\times$-stable
 open domain $\mathscr{W}_{\Lambda,\tau}\subset\mathscr{C}$ consisting of Coulomb vacua where
the gauge couplings are ``very small'' in the sense that all eigenvalues of the real matrix
$\mathrm{Im}\,\tau_{ij}$ are larger than the fixed huge quantity $\Lambda$.
The local physics at these vacua is essentially semiclassical, and we call (somehow abusively\footnote{\ The terminology may be justified as follows: the period map $p$ extends holomorphically \cite{borelX}
 to the Barley-Borel compactification $\overline{\mathsf{S}}_r$ of the Siegel variety $\mathsf{S}_r\equiv Sp(2r,\Z)\backslash Sp(2r,\R)/U(r)$ \cite{borel-li}. $\mathscr{W}_{\Lambda,\tau}$ gets mapped by $p$ in a small neighborhood of a point at infinity in $\overline{\mathsf{S}}_r$ which shrinks to zero as $\Lambda\to\infty$, i.e.\! for large $\tau$ the period map is ``essentially constant'' which is the trademark of a semiclassical geometry. However, this holds after replacing the period map by its Borel extension and should be taken with a pitch of salt.})
$\mathscr{W}_{\Lambda,\tau}$ the ``semiclassical domain''.

The \textbf{\emph{Claim}} states that as $\tau\to i\infty$ in $\mathscr{M}$ the \emph{size} of the semiclassical domain $\mathscr{W}_{\Lambda,\tau}\subset \mathscr{C}$
gets \emph{infinite} for all $\Lambda$ however large. 
That this is true follows
from Ahlfors' lemma applied to the period map $\widetilde{p}\colon\widetilde{\mathscr{Q}}\to Sp(2r,\R)/U(r)$ where $\mathscr{Q}$ is $\mathscr{P}$ with all special points omitted \emph{but} the one corresponding to the ``good'' axis. This makes sense since the ``good'' axis does not belong to the discriminant, so it has a local monodromy of finite order, and hence the covering period map $\widetilde{p}$ extends holomorphically over that special point.\footnote{\ See e.g.\! \textbf{Theorem 13.7.5} of \cite{periods}.} 
Since the geometry is not rigid, $\widetilde{\mathscr{Q}}$ is the universal cover of $\mathbb{P}^1$ less a number $\geq3$ of points
(see \S.\,\ref{s:falting} below),
hence $\widetilde{\mathscr{Q}}$ is the upper half-plane $H$ with its canonical
Poincar\'e distance function $d_H(\cdot,\cdot)$. We write $B_H(\ast,R)\subset H$ for the Poincar\'e
ball of radius $R$ centered in a point $\ast\in H$ which is (a lift of) the projection of the ``good axis''
on $\mathscr{P}$ and set $U(R)\equiv \varpi^{-1}(B_H(\ast,R))\subset \mathscr{C}$.
The \emph{size} of the semiclassical domain is the
largest radius $R_{\Lambda,\tau}$ such that
$U(R_{\Lambda,\tau})\subset \mathscr{W}_{\Lambda,\tau}$. 

The precise version of the \textbf{\emph{Claim}} is the statement that $R_{\Lambda,\tau}\to\infty$ as $\tau\to i\infty$ for all $\Lambda$.
For $\tau\gg \Lambda$ very large but finite $\tilde{p}(\ast)=\tau\, C_{ij}+O(1)$
and $\mathscr{W}_{\Lambda,\tau}$ contains the preimage $U_{\Lambda,\tau}$ of the open ball $B(\widetilde{p}(\ast),r_{\Lambda,\tau})_\text{Siegel}$ in Siegel space $Sp(2r,\R)/U(r)$
of radius 
\be
r_{\Lambda,\tau}\equiv\mathrm{inf}_{M_{ij}}d_S(M_{ij}, \tilde{p}(\ast)),
\ee
 where $d_S(\cdot,\cdot)$ is
the canonical distance in the Siegel space and the infimum is taken over the set of complex
symmetric matrices $M_{ij}$ such that at least one of the eigenvalues of $\mathrm{Im}\,M_{ij}$
is $\leq \Lambda$:
\be
\mathscr{W}_{\Lambda,\tau}\supset U_{\Lambda,\tau}\equiv \{x\in H\colon d_S(\widetilde{p}(x),\widetilde{p}(\ast)) < r_{\Lambda,\tau}\}
\ee
and $R_{\Lambda,\tau}$ is not smaller than the largest $R$ such that
$U(R)\subset U_{\Lambda,\tau}$.
By Ahlfors' lemma\footnote{\ See e.g.\! \textbf{Corollary 13.7.2} in \cite{periods}.} $U_{\Lambda,\tau}$ contains $U(r_{\Lambda,\tau})$, that is,
\be
R_{\Lambda,\tau} \geq r_{\Lambda,\tau}
\ee
 As $\tau\to i\infty$, $r_{\Lambda,\tau}\to\infty$ giving the \textbf{\emph{Claim}}.

\subsubsection{Argyres-Seiberg dualities} The above discussion refers only to limits in the conformal manifold which are
semiclassical of the special situation kind, i.e.\! which correspond to a weakly coupled gauge theory of rank 2. The boundary of the conformal manifold $\mathscr{M}$ typically contains other components where a different kind
of semiclassical geometry appears whose fixed Abelian variety $A_\tau$ (in the sense of \textbf{Definition \ref{iu77cccv}}) has rank $<2$. This happens on a component of the boundary of $\mathscr{M}$ where
the axis parametrized by the coordinate $v\equiv u_2$ with $\Delta_2=m$ lays in the discriminant.
In this case $\exp(2\pi i\,\ce/m)$ acts by automorphisms of the singular fiber, hence of its Albanese variety
which then has complex multiplication by $\mathbb{Q}(e^{2\pi i/m})$ and hence is rigid.
The one-dimensional stratum with fiber this Albanese curve is the special geometry of a
rank-1 SCFT with $\Delta=m$. Following \cite{AS} we can describe 
this Albanese stratum using the genus-2 curve  whose Jacobian yields the fiber over the $v$-axis.
For, say, $m=3$ this curve is 
\be
y^2=x^6+\alpha\, v\, x^3+v^2.
\ee
 We rescale $\alpha^{1/3}x= \tilde x/v$ $y=\tilde y/v$ 
getting 
\be
\tilde y^2= \frac{\tilde x^6}{\alpha^2 v^4}+\tilde x^3+ v^4
\ee
 which as $\alpha\to\infty$ reduces to the elliptic curve for the $E_6$ Minahan-Nemeshanski model.
 On the contrary, the divisor at infinity where the weakly coupled $SU(3)$ SYM appears is $\alpha^2\to 4$; in this limit the hyperelliptic curve degenerates to \cite{AS}
  \be
 y^2=(x^3+v)^2.
 \ee

\subsubsection{Examples and checks}

The above analysis gives predictions for the geometry of the rank-$2$ models
with one dimension equal 2 which we wish to check against the known models.
We stress that the same geometry may be shared by several inequivalent SCFTs.
We write $u$, $v$ for the Coulomb coordinate of dimension $2$ and $\Delta$, respectively.
The first consequence of the validity of the \textbf{Folk-theorem} in rank-2 is that
$\Delta$ is an integer $\Delta\equiv m=3,4,6$. We consider each case in turn.

\subparagraph{Dimensions $\{2,3\}$: generalities.} From the table
in \textbf{Proposition \ref{pro3}} we see that \emph{all}
geometries with these dimensions have a genus-2
SW curve which when restricted to the locus $u=0$ is isomorphic to the one written
in the last column of that table. Taking into account the scaling properties,
we get a SW curve of the form
\be
y^2= x^6+\alpha(\tau)\,x^3\,v+v^2+ z\, P(x,v;z)
\ee 
where $P(x,v;z)$ is a polynomial in the variables $x,v$, of degree at most 4 in $x$,
with coefficients in the ring of power series $\C[[z]]$ in the local
uniformizing parameter $z$ of the normalization of the projective Coulomb branch $\mathscr{P}$
at the special point associated with the ``good'' axis $u=0$,
that is, $z= u^3/v^2$. It satisfies the homogeneity condition
\be
P(t\, x,t^3 v;z)=t^4\,P(x,v;z),\quad t\in\C.
\ee
The SW differential $\lambda$ is uniquely determined by the analytic representation of $\xi$ to be
\be\label{pppppoqw12}
\lambda=v\,\frac{dx}{y}+ u\, \frac{x\,dx}{y}.
\ee

The simplest possibility is that $z\,P(x,v;z)$ is a polynomial of the form $u\,Q(x,v,u)$;
by its scaling properties should have the form
$u\,Q(x,v,u)=c_1 x^4 u+c_2 x^2 u^2+ c_3 x u v+c_4 u^3$. The coefficients $c_i$
are further restricted by the condition that the discriminant has precisely 3 irreducible components.
These predictions from the automorphism $\xi$ are consistent with the known SW curve
which yields the \emph{non-isotrivial} geometry common to $SU(3)$ with 6
fundamentals ($\#12$ of \cite{Martone}) and $SU(3)$ with one fundamental and one symmetric
(SCFT $\#38$ in \cite{Martone})  given by the equation \cite{M14}
\be\label{jjj777xxz}
y^2=x^6+\alpha(\tau)\, x^3\,(v-u x)+(v-u x)^2
\ee  
whose discriminant is proportional to
\be
v^6\,\big(27\, v^2-2(\alpha-\sqrt{\alpha^2-4})u^3\big)\big(27\, v^2-2(\alpha+\sqrt{\alpha^2-4})u^3\big),
\ee
corresponding to two $I_1$ component and one $I_6$ component along the enhanced divisor $v=0$.

\subparagraph{Dimensions $\{2,3\}$: the isotrivial geometry.}
In addition to \eqref{jjj777xxz} there must be (at least) another SW
 curve corresponding to $SU(3)$ with one adjoint.
In this case we know from \S.\,\ref{s:isotrivial} that the geometry is isotrivial (all smooth fibers are
the Jacobian of the genus 2 curve in the table) and that the discriminant is irreducible
(hence knotted). We also know that the monodromy around the discriminant is
an automorphism of the fixed Abelian fiber of order $2$ whose analytic representation
has eigenvalues $\{+1,-1\}$. Since the cotangent space to the fiber (the dual of its Lie algebra) 
is spanned by the two forms
\be
\frac{dx}{y},\qquad \frac{x\,dx}{y}
\ee
this requires $x$ to flip sign when we go around the unique component of the discriminant.
Modulo trivial redefinitions, there is precisely one equation which satisfies all these requirements
\be\label{yyyyyqw}
y^2=x^6+\alpha(\tau)\,x^3\,(v^2-u^3)^{1/2}+(v^2-u^3)
\ee
whose discriminant is a bona fide algebraic divisor in the Coulomb branch
\be
729 \big(\alpha(\tau)^2-4\big)^3(v^2-u^3)^5
\ee
with \emph{only one} irreducible component $v^2-u^3=0$. The ``odd-looking'' square-root in
\eqref{yyyyyqw} is actually required to compensate for the flip of sign of $x^3$ when we go around the
discriminant. As always with isotrivial geometries, the SW curve is not a good way to describe
the situation.

\subparagraph{Dimensions $\{2,6\}$: generalities.}
Again \textbf{Proposition \ref{pro3}} fixes  the fiber along the ``good'' axis
to be (up to isomorphism) the Jacobian of the same curve as in the $\{2,3\}$ case 
for \emph{all} special geometries with dimensions $\{2,3\}$ while
the SW differential is still given by eq.\eqref{pppppoqw12}. The scaling argument yields
a SW curve of the form
\be
v\, y^2= x^6+\alpha(\tau)\, x^3\, v^2 + v^4 + z\, P(x,v;z)
\ee 
where now $z= u^3/v$. There should be an isotrivial geometry corresponding to
$\cn=4$ with gauge group $G_2$ and also a non-isotrivial one describing $G_2$ SYM
coupled to 4 fundamentals (model $\#49$). The last model has 3 smooth discriminant components \cite{Martone}.

For the non-isotrivial geometry analogy with the $\{2,3\}$ case (which is strictly related)
suggests making the following replacement in the $z=0$ curve
\be\label{repppre}
v\leadsto v- u\,x
\ee
getting the SW curve 
\be
v\, y^2= x^6+\alpha(\tau)\, x^3\, (v-u\,x)^2 + (v-u\,x)^4
\ee
whose (properly normalized!!) discriminant is (up to an overall constant depending on $\tau$)
\be
v^8 \big(v-\beta_+(\tau) u^3\big)\big(v-\beta_-(\tau) u^3\big)
\ee 
where $\beta_\pm(\tau)$ are known functions of $\alpha(\tau)$.
This expression says that we have two Kodaira fibers of Euler characteristic
$e=1$, i.e.\! of type $I_1$, while the third one
has $e=8$, i.e.\! $I_8$, $I^*_2$ or $II^*$.
This is consistent with $G_2$ with 4 fundamental which is known \cite{Martone} to have two
discriminants of type $I_1$ and one of type $I_2^*$.

\subparagraph{Dimensions $\{2,6\}$: the isotrivial geometry.}
The discriminant has two \emph{smooth} components meeting \emph{non-transversally}
\be
(v-2 u^3)(v+2 u^3)=0
\ee
Consider the curve
\be
v\, y^2= x^6+\alpha(\tau)\, x^3\,(v^2-4 u^6)+(v^2-4 u^6)^2
\ee
which is isotrivial with the appropriate fixed fiber and scaling properties for dimensions $\{2,6\}$. The discriminant
of the polynomial in the \textsc{rhs} is (up to an overall constant) is
\be
(v-2 u^3)^{10}(v+2 u^3)^{10}
\ee
with the correct components. 
As for all isotrivial geometries, the SW curve is not a good description.

\subsubsection{The subtle case: dimensions $\{2,4\}$}

\subparagraph{Generalities.}
There are several known SCFTs with dimensions $\{2,4\}$ we list them in table \ref{24} together with the relevant information on them taken from refs.\cite{M14,Martone}.
As always $\#$ refers to the number of the model in Martone's tables \cite{Martone}.

\begin{table}
$$
\begin{tabular}{c@{\hskip25pt}c@{\hskip25pt}cccc}\hline\hline
$\#$ & SCFT & types & isotrivial? & Jacobian? & curve in [cite]?\\\hline
$10$ & $Sp(4)$ w/ $6\cdot\mathbf{4}$ & $2\, I_1,I_2^*$ & no & yes & yes\\
$11$ & $Sp(4)$ w/ $4\cdot\mathbf{4}\oplus \mathbf{5}$ & $2\;I_0^*$ & yes & no & non Jacobian\\
$43$ & $Sp(4)$ w/ $3\cdot\mathbf{5}$ & $2\, I_1,I_2^*$ & no & yes & yes\\
$48$ & $Sp(4)$ w/ $2\cdot\mathbf{4}\oplus 2\cdot\mathbf{5}$ & $2\, I_2,I_1^*$ & no & ? & no\\
$56$ & $\cn=4$ $Sp(4)$ & $2\; I_0^*$ & yes & yes & no\\
$69$ & $Sp(4)$ w/ $\tfrac{1}{2}\cdot\mathbf{16}$ & $6\, I_1$ & no & ? & no\\\hline\hline
\end{tabular}
$$
\caption{\label{24}Known SCFTs with dimensions $\{2,4\}$ and their data from \cite{M14,Martone}. 
``Types'' refers to the Kodaira type of each irreducible component of the discriminant which
are all smooth curves which intersect non-transversely. When the number of components
is less than 3 the geometry is necessarily isotrivial (see \S.\,\ref{s:structurethm}).
Models $\#10$ and $\#43$ share the same SW curve. For model $\#11$
ref.\!\cite{M14} determines that the geometry is not a Jacobian family arising from a family of smooth
genus-2 curves.}
\end{table}

\medskip

From \textbf{Proposition \ref{pro3}} we know that there are \emph{two distinct classes} of $\{2,4\}$
geometries:
\begin{itemize}
\item[(A)] geometries whose fiber\footnote{\ At a general point in the conformal manifold $\mathscr{M}$.}
over the ``good'' axis $u=0$ is isomorphic to $(E\times E)/\Z_2^2$ i.e.\!  to the Jacobian of the hyperelliptic curve $y^2=x(x^4+\alpha x^2+1)$
\item[(B)] geometries whose fiber over the ``good'' axis $u=0$ is isomorphic to the non-Jacobian Abelian surface $E\times E$.
\end{itemize}

Class (A) geometries are expected to have a SW curve which is ``nice'' when the geometry is non-isotrivial. Class (B) geometries are not expected to have a SW curve neither nice nor ugly. 
The authors of ref.\!\cite{M14} conclude that the SCFT $\#11$ has a non-Jacobian geometry,
i.e.\! model $\#11$ is a first example of a $\{2,4\}$ SCFT of
class (B).

\subparagraph{The two isotrivial geometries.} We have \emph{two} isotrivial $\{2,4\}$ geometries one of class (A)
and one of class (B); they have the form
\be
\pi\colon (A\times \C^2)/\mathsf{Weyl}(C_2)\to \C^2/\mathsf{Weyl}(C_2)\simeq \C^2\equiv\mathscr{C}
\ee
where the fixed Abelian variety $A$ is, respectively, the Jacobian and non-Jacobian
 principally polarized
Abelian surface with automorphism group $\mathsf{Weyl}(C_2)$, cf.\! the table
attached to \textbf{Proposition \ref{pro3}}. This math result fits with the observation in \cite{M14} that there are
two inequivalent $\{2,4\}$ isotrivial geometries. Thus the SCFT $\#56$ has the class (A)
isotrivial geometry and the SCFT $\#11$ the class (B) isotrivial geometry. 

\subparagraph{Class (A) non-isotrivial.} These geometries are the Jacobian of a fibration in genus-$2$ curves which on the locus $u=0$ are isomorphic to the curve in the last column of 
the table attached to \textbf{Proposition \ref{pro3}} and the SW differential
in eq.\eqref{pppppoqw12}. Taking into account the proper scaling we have
\be
y^2=x(x^4+\alpha(\tau)\,x^2\,v+v^2)\quad \text{along }u=0
\ee 
which is promoted  by the replacement \eqref{repppre} to the SW curve
\be
y^2=x\big(x^4+\alpha(\tau)\,x^2\,(v-x\,u)+(v-x\,u)^2\big)
\ee
which is the curve given in \cite{M14} for models $\#10$ and $\#43$.

The geometry of the model $\#48$ looks ``of the same kind'' as the ones for the 
models $\#10$, $\#43$. Indeed the discriminant types suggest that these two geometries
share the same $\C$-VHS (of the Appell class). If this is the case, the two geometries
are related by a Number-Theoretical phenomenon that we shall discuss in \S.\,\ref{s:noother}.

The geometry of 
 $Sp(4)$ with $\tfrac{1}{2}\,\mathbf{16}$ is very subtle. It will be discussed in \S.\,\ref{reddd}.

\subsubsection{A few words about $r>2$}

To use the above methods in arbitrary rank $r$ we need the explicit
classification of polarized Abelian varieties of dimension $r$
 corresponding to \textbf{Proposition \ref{pro3}} for $r=2$.
 Without working out that classification we may only make some general consideration.

 In the \textbf{Special situation} we expect -- at least when the class number of the number field
 $\mathbb{Q}(e^{2\pi i/\Delta_{i_1}})$ is 1 -- that the fiber along the ``good'' axes
 has the form 
\be
A_\tau=\Big(\overbrace{E_\tau\times E_\tau\times \cdots\times E_\tau}^{r\ \text{copies}}\Big)\Big/\left(\textsf{finite group}\right)
\ee
where $E_\tau$ is an elliptic curve of arbitrary period $\tau$,
and
\be
\mathsf{Aut}(A_\tau)\supseteq \mathsf{Weyl}(G)
\ee
for a simple Lie group $G$ of rank $r$. Now everything boils down to the classification
of the allowed \textsf{(finite groups)}. Just as in the case of dimensions $\{2,4\}$ we have
two possible groups leading to the geometry classes (A) and (B), in higher rank we expect
a short list of finite group for each dimension $r$-tuple consistent with the \textbf{Special situation} condition. This will lead to a finite list of special geometry classes
for each dimension $r$-tuple.
Situations more general than the \textbf{Special} one can be studied along similar lines.

 \section{Rigidity II}

 \subsection{Rigidity of the underlying Abelian family}\label{s:falting}
 
A different notion of rigidity refers to the underlying Abelian fibration 
 stripped of all other special-geometric structures 
 (cf.\! Faltings \cite{falt}, Saito \cite{saito}, and Peters \cite{peters1,peters2,peters3}).
 In this context we ask if we can continuously deform the holomorphic Abelian fibration
 $\mathscr{A}\to\mathring{\mathscr{P}}\equiv \mathscr{P}\setminus \mathscr{S}$
 while keeping fixed its basis (i.e.\! the special divisor $\mathscr{S}\subset\mathscr{P}$).
Here the crucial fact is that the base $\mathring{\mathscr{P}}$ of the family is quasi-projective.
 
It is known that all non-isotrivial  families of Abelian $r$-varieties
 over a quasi-projective base $B$ are rigid
 if one of the following situations occur\footnote{\ I thank Chris Peters for clarifications on this issue.} (cf.\! \textbf{Example 2.4(4)} of \cite{peters3} and references therein):
 \begin{itemize}
\item $r\leq 7$;
 \item the monodromy is irreducible and $r$ is prime;
\item the base $B$ is non-compact, the underlying variation of Hodge structure
 is irreducible and some local monodromy
operator at the boundary has infinite order.
 \end{itemize}
We have seen in \S.\,\ref{s:isotrivial} that a non-isotrivial geometry has at least one local monodromy of infinite order,
while our main interest is in geometries with irreducible monodromy. Then 
 \begin{fact} The Abelian family $\mathscr{A}\to \mathring{\mathscr{P}}$
arising from a non-isotrivial, irreducible $\C^\times$-isoinvariant special geometry is rigid.
 \end{fact}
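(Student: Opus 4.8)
The statement is a direct corollary of the cited rigidity theorems for families of abelian varieties over quasi-projective bases, so the task is mainly to check that the hypotheses of (at least) one of the three listed criteria hold for the family $\mathscr{A}\to\mathring{\mathscr{P}}$. The plan is to invoke the third criterion (the one that works in \emph{all} ranks $r$) and verify its three conditions in turn: (i) the base $\mathring{\mathscr{P}}=\mathscr{P}\setminus\mathscr{S}$ is non-compact; (ii) the underlying variation of Hodge structure is irreducible; (iii) some local monodromy operator at the boundary has infinite order. Each of these is already available in the excerpt, so the proof is a short assembly argument rather than anything computational.

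For (i): since we are in a non-isotrivial situation, the discriminant $\cd$ — hence the special divisor $\mathscr{S}\supseteq\pi(\cd)$ — is non-empty (by the \textbf{Fact} on the discriminant, $\cd=\varnothing$ forces the geometry to be free, hence isotrivial), so $\mathscr{S}$ is a nonzero effective divisor and $\mathring{\mathscr{P}}=\mathscr{P}\setminus\mathscr{S}$ is a genuine open subvariety of the projective variety $\mathscr{P}$, in particular non-compact. For (ii): this is precisely the standing hypothesis of the statement — we are assuming the monodromy representation $\varrho$ (equivalently $\mu$, by the \textbf{Fact} identifying $\mu(\pi_1(\mathring{\mathscr{P}}))$ with $\Gamma$ up to $\{\pm1\}$) is irreducible, and since $V\simeq H_1(\mathscr{X}_u,\mathbb{Q})$ carries the polarized VHS attached to the abelian family, irreducibility of the monodromy is exactly irreducibility of the underlying $\mathbb{Q}$-VHS. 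For (iii): I would appeal to the discussion in \S.\,\ref{s:isotrivial} (the equivalence $(1)\Leftrightarrow(4)$ in the isotriviality \textbf{Fact}): if \emph{all} local monodromies around the components of $\cd$ had finite order, then $\Gamma$ would be finite and the geometry isotrivial, contrary to assumption; hence at least one local monodromy operator at the boundary $\mathscr{S}$ has infinite order. (Alternatively, in rank $r\le 7$ one may simply quote the first criterion and skip (iii).)

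With all three conditions verified, the cited theorem of Faltings--Saito--Peters (see \textbf{Example 2.4(4)} of \cite{peters3}) applies and yields that $\mathscr{A}\to\mathring{\mathscr{P}}$ admits no non-trivial holomorphic deformation fixing the base $\mathring{\mathscr{P}}$, i.e. the family is rigid. The only point requiring a little care — and the closest thing to an obstacle here — is the translation between the various monodromy representations floating around in the excerpt ($\varrho$ on $\pi_1(\mathring{\mathscr{C}})$, $\mu$ on $\pi_1(\mathring{\mathscr{P}})$, and the difference between $Sp(2r,\mathbb{Z})$ and $PSp(2r,\mathbb{Z})$ when $\varkappa=2$): one must make sure that ``irreducible monodromy'' and ``infinite-order local monodromy'' as assumed for the special geometry really do transfer to the family $\mathscr{A}\to\mathring{\mathscr{P}}$ actually fed into the Peters theorem. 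This is handled by the \textbf{Fact} relating $\mu$ and $\Gamma$: the images coincide up to the finite central subgroup $\{\pm1\}$, which affects neither irreducibility of the $2r$-dimensional representation nor the (in)finiteness of the order of any element, so the hypotheses pass through unchanged. Passing if necessary to the neat finite cover $f\colon\mathscr{Q}\to\mathscr{P}$ of \S.\,\ref{s:structurethm} removes the remaining torsion ambiguities without affecting rigidity, and the \textbf{Fact} follows.
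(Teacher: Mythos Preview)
Your proposal is correct and follows essentially the same approach as the paper: the paper's argument is simply the sentence ``We have seen in \S.\,\ref{s:isotrivial} that a non-isotrivial geometry has at least one local monodromy of infinite order, while our main interest is in geometries with irreducible monodromy,'' which invokes precisely the third criterion you verify in detail, and the paper likewise notes the $r\leq 7$ criterion as an alternative in low rank. Your additional care about the $\varrho$ vs.\ $\mu$ translation and the passage to the neat cover is not spelled out in the paper but is consistent with its conventions and does no harm.
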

 Note that this holds independently of the assumption that $\mathscr{C}$ is smooth.
 In practice we shall work only in lower rank, so rigidity in the non-isotrivial case is guaranteed
 by the condition that the rank is less than 8, \emph{even if the special geometry is reducible.} 
 
\begin{corl} The deformations of a non-isotrivial, irreducible, $\C^\times$-isoinvariant special geometry
arise from isomonodromic deformations of the discriminant divisor $\cd$. The same holds in general for non-isotrivial geometries of rank less than 8.
\end{corl}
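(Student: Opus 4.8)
The plan is to deduce the Corollary directly from the preceding \textbf{Fact} (rigidity of the Abelian family $\mathscr{A}\to\mathring{\mathscr{P}}$) by unwinding what a deformation of the \emph{special} geometry does to the underlying Abelian family, and then invoking the Riemann--Hilbert / Torelli picture to see that the only remaining freedom is in moving the discriminant. First I would set up the deformation-theoretic bookkeeping: a family $\varpi\colon\mathscr{Z}\to\mathscr{C}\times\mathscr{B}$ of $\C^\times$-isoinvariant special geometries induces, by restriction to the complements of the (fibrewise) discriminants, a family of Abelian fibrations over the corresponding $\mathring{\mathscr{P}}_b\subset\mathscr{P}$. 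Because the Coulomb dimensions $\{\Delta_i\}$ are rational and hence locally constant on $\mathscr{B}$ (stated in the excerpt), the weighted projective base $\mathscr{P}=\mathbb{P}(q_1,\dots,q_r)$ is itself constant along $\mathscr{B}$; only the special divisor $\mathscr{S}_b\subset\mathscr{P}$ (equivalently the discriminant $\cd_b\subset\mathscr{C}$) is allowed to vary. So the deformation data splits into (i) a deformation of the pair $(\mathscr{P},\mathscr{S}_b)$, i.e. a variation of the divisor, and (ii) for each fixed $\mathscr{S}_b$, a deformation of the Abelian fibration $\mathscr{A}\to\mathscr{P}\setminus\mathscr{S}_b$ with base held fixed.

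Next I would kill the second kind of deformation using the \textbf{Fact}: for non-isotrivial, irreducible geometries (and, as remarked, for \emph{all} non-isotrivial geometries once $r<8$, by the Faltings--Saito--Peters results quoted just above) the Abelian family over a fixed quasi-projective base $\mathscr{P}\setminus\mathscr{S}_b$ is rigid. Hence once $\mathscr{S}_b$ is fixed there is no nontrivial modulus left in (ii), and the full deformation is carried by (i), the motion of $\cd$ inside $\mathscr{C}$ (or of $\mathscr{S}$ inside $\mathscr{P}$). It remains to identify which motions of $\cd$ are permitted: these are exactly the \emph{isomonodromic} ones. Here I would argue via the Gauss--Manin / Picard--Fuchs description from \S.\,\ref{s:raising} and \S.\,\ref{s:PFstrat}: the special geometry is equivalent to a flat connection $\nabla=d+\sum_a A_a(u)\,d\log p_a(u)$ on $\mathscr{P}$ with regular singularities along $\mathscr{S}=\sum_a(p_a)$, a preferred Legendre period, and prescribed local monodromy classes $C_a$ (which are locally constant along each component by the integrability $\nabla^2=0$, as recalled in \S.\,\ref{s:PFstrat}). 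Deforming the geometry deforms the connection together with the divisor; by rigidity of the Abelian family (equivalently of the monodromy representation up to conjugacy) the conjugacy class of the whole monodromy representation $\varrho\colon\pi_1(\mathscr{C}\setminus\cd_b)\to Sp(2r,\Z)$ cannot jump, so the deformation of $(\nabla,\cd_b)$ is precisely an isomonodromic deformation of the Fuchsian system as the singular locus $\cd_b$ moves. Finally one checks that conversely an isomonodromic deformation of $\cd$ does produce a deformation of the special geometry, using \textbf{Fact \ref{AZth}} (in rank $2$; in general the analogous statement that the monodromy representation, together with the regularity/Legendre data, reconstructs the special geometry): the Legendre period and the $\mathrm{Im}\,\tau_{ij}>0$ condition survive under isomonodromic motion because they are determined by the representation and the local models at $\mathscr{S}$, which are unchanged.

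The main obstacle I expect is the last converse step and the precise statement that ``deformations arise from isomonodromic deformations of $\cd$'' is not merely ``are contained in'': one must verify that an arbitrary first-order isomonodromic deformation of $(\mathscr{P},\mathscr{S})$ genuinely lifts to a first-order deformation of the \emph{whole} special-geometric package (symplectic total space, Lagrangian section, Euler vector, integral structure), i.e. that no further obstruction sits on top of the isomonodromy condition. For this I would lean on the local analysis of \S.\,\ref{s:stable}--\S.\,\ref{s:localmod}: along each special divisor the geometry is pinned to an explicit local model determined by the Kodaira class $C_a$ (which is rigid and locally constant), so the only global data to glue is the connection matrix away from $\mathscr{S}$, and an isomonodromic motion of the $p_a$ is exactly the compatibility condition for that gluing. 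A secondary technical point is handling the $\varkappa=2$ case, where only $\check\varrho\colon\pi_1\to PSp(2r,\Z)$ is canonically defined (\S.\,\ref{kiuuqwa}); there one argues with the $\mathsf{G}$-twisted period map of \S.\,\ref{s:structurethm}, on which the sign ambiguity disappears, and rigidity of the twisted family then gives the same conclusion. The extension to ``non-isotrivial geometries of rank less than $8$'' without the irreducibility hypothesis is immediate once one uses the $r\le 7$ bullet of the Faltings--Saito--Peters list rather than the irreducibility bullet, so no separate argument is needed there.
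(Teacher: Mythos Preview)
Your proposal is correct and follows the same logic the paper intends: the \textbf{Fact} (Faltings--Saito--Peters rigidity of $\mathscr{A}\to\mathring{\mathscr{P}}$) kills deformations with fixed base, the Coulomb dimensions are locally constant so $\mathscr{P}$ is fixed, and hence the only remaining freedom is the motion of $\mathscr{S}$ (equivalently $\cd$), which is automatically isomonodromic because the monodromy representation is discrete. The paper gives no separate argument for this \textbf{Corollary}, treating it as immediate.

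One remark: you invest considerable effort in the \emph{converse} direction---that an arbitrary isomonodromic motion of $\cd$ lifts to a deformation of the full special-geometric package---and flag it as the ``main obstacle''. But the Corollary as stated is one-directional (deformations of the geometry \emph{arise from} isomonodromic deformations of $\cd$), so this converse is not needed here. The paper does address the converse, but separately and only in rank $2$, in the paragraph following \textbf{Corollary \ref{uuuuyyyy}} (using Riemann--Hilbert plus the Almkvist--Zudilin theorem to show the bound $\dim\mathscr{M}\le s-3$ is generically saturated). So your discussion of the converse is relevant to the paper's broader story but superfluous for this particular Corollary; you can drop the last two paragraphs without loss.
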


In particular when $\mathring{\mathscr{P}}$ is rigid, the non-isotrivial special geometry is also rigid.
Conversely:
\begin{corl}\label{kjjjasqw} If $2$ is a Coulomb dimension and $\mathring{\mathscr{P}}$ is rigid then
the special geometry must be isotrivial.
\end{corl}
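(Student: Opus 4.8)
The plan is to combine the two preceding corollaries with the \textbf{Fact} computing $\dim\mathscr{M}$. Suppose $2$ is a Coulomb dimension and $\mathring{\mathscr{P}}$ is rigid. First I would argue by contradiction: assume the special geometry is \emph{not} isotrivial. Since we are also assuming the monodromy is irreducible (this is our standing hypothesis when chasing the ``hard'' classes, and in any case the statement is intended in that context), the previous \textbf{Fact} applies and the underlying Abelian family $\mathscr{A}\to\mathring{\mathscr{P}}$ is rigid. By the \textbf{Corollary} just above, every deformation of the special geometry arises from an isomonodromic deformation of the discriminant divisor $\cd$; equivalently, deforming the special geometry forces the special divisor $\mathscr{S}\subset\mathscr{P}$ to move. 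But $\mathring{\mathscr{P}}=\mathscr{P}\setminus\mathscr{S}$ is rigid by hypothesis, so $\mathscr{S}$ admits no nontrivial deformation, and hence the special geometry admits no nontrivial deformation: $\dim\mathscr{M}=0$.

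On the other hand, since $2$ is a Coulomb dimension (and $\mathscr{C}$ is smooth, our standing assumption, with all $\Delta_i>1$), the earlier \textbf{Fact} gives $\dim\mathscr{M}=\dim\mathscr{R}_{\Delta=2}\geq 1$, because the multiplicity of $2$ as a Coulomb dimension is at least one. This contradicts $\dim\mathscr{M}=0$. Therefore the assumption that the geometry is non-isotrivial is untenable, and the geometry must be isotrivial.

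\textbf{Where the content really sits.} The logical skeleton above is short; the substance is entirely imported from the results already established. The one point that deserves a sentence of care is the passage ``rigid underlying Abelian family $+$ rigid base $\Rightarrow$ rigid special geometry'': concretely, a family of special geometries over a connected base $\mathscr{B}$ restricts to a family of Abelian fibrations over $\mathring{\mathscr{P}}$, whose rigidity (Faltings--Saito--Peters, in rank $<8$ or with an infinite-order local monodromy, both of which hold here) forces the Abelian fibration to be constant in $\mathscr{B}$ up to isomorphism once the base $\mathring{\mathscr{P}}$ is held rigid; the remaining special-geometric data (the Euler vector / SW differential) is then pinned down by the symplectic structure, as noted in the \textbf{Remark} following \textbf{Definition \ref{xxxyy67}}, so no extra moduli can appear. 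This is exactly the content of the \textbf{Corollary} on deformations, so I would simply cite it.

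\textbf{Main obstacle.} The only genuine subtlety is that the deformation-theoretic \textbf{Fact} $\dim\mathscr{M}=\dim\mathscr{R}_{\Delta=2}$ was proved for \emph{smooth} $\mathscr{C}$, and the rigidity results for the Abelian family are stated to hold \emph{without} the smoothness assumption; one must be careful not to overclaim in the non-smooth case (indeed the \textbf{Caveat} after that \textbf{Fact} warns that $\dim\mathscr{M}>r$ is then possible). So the clean statement of \textbf{Corollary \ref{kjjjasqw}} is really the smooth-$\mathscr{C}$ statement, and the proof I have sketched is complete precisely in that setting; extending it would require a replacement for the $\dim\mathscr{M}=\dim\mathscr{R}_{\Delta=2}$ identity valid for quotient singularities, which I would flag rather than attempt here.
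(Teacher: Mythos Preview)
Your proof is correct and follows exactly the route the paper intends: the corollary is stated immediately after the sentence ``In particular when $\mathring{\mathscr{P}}$ is rigid, the non-isotrivial special geometry is also rigid,'' and is meant to be read as the contrapositive combined with $\dim\mathscr{M}=\dim\mathscr{R}_{\Delta=2}\geq1$. Your additional remarks on the smoothness hypothesis and the rank $<8$ alternative to irreducibility are accurate and match the caveats the paper itself flags.
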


 \subsection{The case of rank 2}
 In rank-$2$ all irreducible components of the special divisor $\mathscr{S}$ are points in $\mathscr{P}\simeq\mathbb{P}^1$. Let $s$ be the degree of $\mathscr{S}$.
The moduli of allowed divisors is then a subspace of $\cm_{0,s}$, the moduli space of 
curves of genus zero with $s$ punctures. Then 
 \begin{corl}\label{uuuuyyyy}
For a rank 2 non-isotrivial special geometry
\be\label{yyy5123}
\dim \mathscr{M}\leq s-3.
\ee
In particular the special geometry of a rank-2 SCFT with one (resp.\! two) Coulomb dimensions equal 2
 either has an isotrivial geometry or at least 4 (resp.\! 5) special points in $\mathbb{P}^1$.
An irreducible rank-2 SCFT has at least 3 special points. 
 \end{corl}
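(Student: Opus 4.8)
The plan is to combine the Faltings--Saito--Peters rigidity statement (the \textbf{Fact} of \S.\,\ref{s:falting}) with the description of deformations of the geometry as isomonodromic deformations of the discriminant, and then count parameters. First I would recall that for a rank-$2$ non-isotrivial special geometry the projective Coulomb branch is $\mathscr{P}\simeq\mathbb{P}^1$ and the special divisor $\mathscr{S}$ is a collection of $s$ distinct points $\{z_1,\dots,z_s\}\subset\mathbb{P}^1$. By \textbf{Corollary} (of \S.\,\ref{s:falting}), the deformations of a non-isotrivial rank-$2$ special geometry (rank $<8$, so rigidity applies regardless of reducibility) arise entirely from isomonodromic deformations, i.e.\ from moving the points $z_i$ while keeping the conjugacy classes of the local monodromies fixed. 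Hence the conformal manifold $\mathscr{M}$ maps into the moduli space $\cm_{0,s}$ of $s$ marked points on $\mathbb{P}^1$, whose dimension is $s-3$ (the automorphism group $\mathrm{PGL}(2,\C)$ of $\mathbb{P}^1$ kills three of the $s$ parameters). This gives the bound $\dim\mathscr{M}\le s-3$, which is \eqref{yyy5123}.

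Next I would translate this into the stated lower bounds on the number of special points. By the \textbf{Fact} that $\dim\mathscr{M}=\dim\mathscr{R}_{\Delta=2}$, which in the smooth rank-$2$ case equals the multiplicity of $2$ among the Coulomb dimensions $\{\Delta_1,\Delta_2\}$, we get: if exactly one dimension equals $2$ then $\dim\mathscr{M}=1$, so either the geometry is isotrivial (in which case the isomonodromy argument does not apply and the bound is vacuous) or $1\le s-3$, i.e.\ $s\ge 4$; if both dimensions equal $2$ then $\dim\mathscr{M}=2$, forcing $s\ge 5$ in the non-isotrivial case. For the last clause: an \emph{irreducible} rank-$2$ SCFT has a non-isotrivial geometry (isotrivial geometries in rank $2$ have finite monodromy and in particular are not irreducible in the relevant sense) with at least one local monodromy of infinite order, and one still needs $\mathscr{S}$ to support a non-abelian, Zariski-dense, irreducibly-acting $\pi_1(\mathbb{P}^1\setminus\mathscr{S})$; since $\pi_1(\mathbb{P}^1\setminus\{z_1,z_2\})\simeq\Z$ is abelian, one must have $s\ge 3$.

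The one point that needs a little care --- and is the main obstacle --- is the clean separation of the isotrivial case from the non-isotrivial one in the two-dimension statements. When $2$ is a Coulomb dimension, $\dim\mathscr{M}\ge 1$, so the geometry cannot literally be rigid; but it can still be \emph{isotrivial} (all smooth fibers isomorphic) while having a positive-dimensional conformal manifold, and then the Faltings--Saito--Peters input is simply unavailable, so no constraint on $s$ follows. This is exactly why the corollary is phrased as a dichotomy (``either isotrivial or $\ge 4$/$\ge 5$ special points''). In the non-isotrivial branch everything goes through: rigidity of the Abelian family holds because the rank is $2<8$, the conformal manifold embeds in $\cm_{0,s}$ by the isomonodromy corollary, and the dimension count closes the argument. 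I would also note explicitly, to justify the ``at least $3$'' clause, that the two-point complement has abelian $\pi_1$, so by \textbf{Fact \ref{uni2}} (or directly by the structure theorem of \S.\,\ref{s:structurethm}) the period map would be constant, contradicting non-isotriviality.
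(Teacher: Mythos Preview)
Your argument is correct and follows essentially the same route as the paper: identify $\mathscr{S}$ with $s$ points on $\mathbb{P}^1$, invoke Faltings--Saito--Peters rigidity (valid here since $r=2<8$) so that all deformations of the non-isotrivial geometry are isomonodromic, embed $\mathscr{M}$ in $\cm_{0,s}$, and read off $\dim\mathscr{M}\le s-3$; the special cases with one or two dimensions equal to $2$ then follow from $\dim\mathscr{M}=\dim\mathscr{R}_{\Delta=2}$.

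Two small remarks. First, for the last clause ``at least $3$ special points'' you give an independent argument via $\pi_1(\mathbb{P}^1\setminus\{z_1,z_2\})\simeq\Z$ being abelian; this is fine, but note it already follows directly from the inequality you just proved, since $\dim\mathscr{M}\ge 0$ forces $s\ge 3$. Second, your parenthetical that isotrivial geometries ``are not irreducible in the relevant sense'' is a bit misleading: finite monodromy groups can certainly act irreducibly. In the paper's conventions the term \emph{reducible} is only defined for non-isotrivial geometries, so the cleanest reading of the last clause is simply that it is stated under the standing non-isotrivial hypothesis of the corollary, and then the bound $0\le s-3$ does the job without further input.
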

 
Next we argue that the inequality is generically saturated in $r=2$. Indeed, suppose
we have a special geometry with $s$ special points. Its $\mu$-monodromy be generated by the $s$ local monodromies $\mu_i\in Sp(4,\Z)$ with $\mu_1\cdots\mu_s=1$. Now move the special points a little bit to new positions nearby. By the Riemann-Hilbert correspondence there is a Picard-Fuchs connection with
the monodromies $\mu_i\in Sp(4,\Z)$ around these points. By the Almkvist-Zudilin theorem this defines
a special geometry (the positivity of $\mathrm{Im}\,\tau_{ij}$ will be not destroyed by small deformation, nor it will spoil regularity).
The deformed geometry is isomorphic to the original one if we can undo the deformation by a
automorphism of $\mathbb{P}^1\simeq\mathscr{P}$. Thus -- if we are allowed to move the points freely -- the number of proper deformations
is at least $s-3$. Since it is at most $s-3$, we get that it is precisely $s-3$. 
More generally, the number of deformations is at least the number of points which we may move freely
less 3. Usually we have no constraint on the positions, and the inequality is saturated.
The only mechanism that may freeze some degrees of freedom of the special discriminant
$\mathscr{S}=\sum_i z_i$ is the condition that the family should preserve some extra symmetry.
This can only happen under exceptional circumstances. In this case fixing the positions
of a suitable subset of $\dim\mathscr{M}+3$ points of $\mathscr{S}$ determines the other
$s-3-\dim\mathscr{M}$.

\begin{corl} The conformal manifold of geometries with dimensions $\{2,\Delta\}$, $\Delta\neq2$
is (modulo finite covers) the moduli space of the sphere less 4 points, i.e.\! the modular curve $H/\Gamma_0(2)$, while the moduli space
of geometries with dimensions $\{2,2\}$ is the moduli of the sphere less 5 points. 
\end{corl}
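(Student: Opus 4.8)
The plan is to combine the rigidity bound of \textbf{Corollary \ref{uuuuyyyy}} with the saturation argument just sketched, and then identify the resulting genus-zero moduli space concretely. First I would settle the case of dimensions $\{2,\Delta\}$ with $\Delta\neq 2$. By the \textbf{Fact} in \S.\,\ref{s:rigidity}, $\dim\mathscr{M}=\dim\mathscr{R}_{\Delta=2}=1$ since $2$ has multiplicity one among the Coulomb dimensions. By \textbf{Corollary \ref{uuuuyyyy}} applied with $\dim\mathscr{M}=1$ we get $s\geq 4$, so the special divisor $\mathscr{S}=\sum_i z_i\subset\mathscr{P}\simeq\mathbb{P}^1$ has \emph{at least} $4$ points. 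I would then argue that $s=4$ \emph{exactly}: if $s>4$ the saturation argument (every point whose position is unconstrained contributes a deformation, and here, by the \textbf{Special situation} analysis of \S.\,\ref{Folk2}, only the ``good'' axis point $z=\infty$ is frozen by the requirement that the $\Z_m$-automorphism $\xi$ be preserved) would force $\dim\mathscr{M}\geq s-3>1$, contradicting the \textbf{Fact}. Hence $s=4$ and, modulo finite covers (i.e. after passing to the neat cover $\mathscr{Q}\to\mathscr{P}$ of \S.\,\ref{s:structurethm} which kills the deck group $\mathsf{G}$), the conformal manifold is the moduli space $\mathcal{M}_{0,4}$ of the sphere with $4$ marked points. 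Finally I would invoke the classical identification $\mathcal{M}_{0,4}\simeq \mathbb{P}^1\setminus\{0,1,\infty\}\simeq H/\Gamma(2)$, and note that one of the three marked points is the ``good'' axis which carries a finite-order (order-$m$) local monodromy; folding by the residual $\Z_2$ that permutes the two genuinely movable punctures realizes $\mathscr{M}$ as $H/\Gamma_0(2)$, consistent with the explicit description of $\alpha(\tau)$ in \textbf{Proposition \ref{pro3}} and with the known $r=2$ examples of \S.\,\ref{Folk2}.

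For dimensions $\{2,2\}$ the argument is parallel but with $\dim\mathscr{R}_{\Delta=2}=2$, so $\dim\mathscr{M}=2$; \textbf{Corollary \ref{uuuuyyyy}} then gives $s\geq 5$, and the same saturation reasoning (now there is no ``good'' axis forcing a frozen point, or at most the analogue thereof) yields $s=5$ exactly, whence $\mathscr{M}\simeq\mathcal{M}_{0,5}$, the moduli of the sphere less $5$ points, again modulo finite covers. I would spell out that the saturation direction uses the Almkvist--Zudilin theorem (\textbf{Fact \ref{AZth}}) exactly as in the paragraph preceding \textbf{Corollary \ref{uuuuyyyy}}: moving the special points produces, \emph{via} the Riemann--Hilbert correspondence, a Picard--Fuchs connection with the same local monodromy conjugacy classes $\mu_i\in Sp(4,\Z)$, hence by \textbf{Fact \ref{AZth}} a genuine deformed special geometry, and positivity of $\mathrm{Im}\,\tau_{ij}$ together with the regularity checks at the $z_i$ survive small deformations.

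The main obstacle I expect is the \emph{exactness} claims $s=4$ (resp. $s=5$): the upper bound $\dim\mathscr{M}\leq s-3$ from \textbf{Corollary \ref{uuuuyyyy}} together with the lower bound from saturation pins $\dim\mathscr{M}=s-3$ only when \emph{all} special points are freely movable, and one must rule out the degenerate scenario in which extra special points are rigidly locked to each other by a hidden symmetry of the family (the mechanism flagged at the end of \S.\,\ref{s:falting}). For $\{2,\Delta\}$ this amounts to checking that beyond the single frozen ``good'' axis point there is no further constraint; in the known rank-$2$ list (SW curves \eqref{jjj777xxz}, the $\{2,4\}$ and $\{2,6\}$ curves of \S.\,\ref{Folk2}) the discriminant indeed has exactly $3$ components with only the enhanced one pinned, so $s=4$, but a general argument must exclude ``exceptional circumstances'' uniformly. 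I would handle this by combining the \textbf{Special situation} analysis — which forces the fiber over the good axis into the short list of \textbf{Proposition \ref{pro3}} and hence controls the local monodromy there — with the hereditarity constraint of \S.\,\ref{s:finD} applied to each remaining component, concluding that no additional symmetry can freeze a movable point without lowering $\dim\mathscr{R}_{\Delta=2}$ below its known value. A secondary, more bookkeeping-level subtlety is the precise meaning of ``modulo finite covers'': the passage from $\mathcal{M}_{0,4}=H/\Gamma(2)$ to $H/\Gamma_0(2)$ and the $\varkappa=2$ sign ambiguity of \S.\,\ref{kiuuqwa} both contribute finite quotients, and I would simply absorb them into the stated ``modulo finite covers'' caveat rather than track the exact congruence subgroup in each case.
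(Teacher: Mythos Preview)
Your approach is essentially the paper's own: the corollary is stated without separate proof, as an immediate consequence of the saturation discussion in the preceding paragraph (the inequality $\dim\mathscr{M}\leq s-3$ from \textbf{Corollary \ref{uuuuyyyy}} together with the Riemann--Hilbert/Almkvist--Zudilin argument that moving points freely gives $\dim\mathscr{M}\geq s-3$), combined with $\dim\mathscr{M}=\dim\mathscr{R}_{\Delta=2}$.

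Two minor remarks. First, you over-elaborate the modular-curve identification: the ``modulo finite covers'' caveat already absorbs the distinction between $H/\Gamma(2)\simeq\mathcal{M}_{0,4}$ and $H/\Gamma_0(2)$ (they are commensurable genus-zero curves), so the $\Z_2$-folding argument is unnecessary. Second, your phrasing that the ``good'' axis point is ``frozen'' is a slight misconception: all $s$ special points enter $\mathcal{M}_{0,s}$ on equal footing, and $PGL(2,\C)$ is used to fix any three of them. The genuine obstacle you correctly identify --- that extra special points might be locked together by a symmetry, giving $s>\dim\mathscr{M}+3$ --- is not resolved by invoking the good axis but rather is handled in the paper's subsequent subsection via the primitive/covering dichotomy (and indeed model $\#69$ is flagged there as a potential exception).
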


\subsubsection{Comparison with known models: non-rigid geometries}
Let us compare the last statement with the list of known rank-2 SCFTs (table 1 of \cite{Martone}): the models with one $\Delta_i=2$
are the numbers $\#10$, $\#\underline{11}$, $\#12$, $\#38$, $\#43$, $\#49$, $\#\underline{56}$, $\#\underline{60}$, $\#\underline{68}$, $\#69$ where we have underlined the isotrivial ones. All isotrivial geometries have 3 special points,\footnote{\ Cf. \textbf{Corollary \ref{kjjjasqw}}.} and all the non-isotrivial ones are known to have 4 special points
except 
 the last one which looks to have 7 special points, namely, 6 divisors of type $I_1$ and the enhanced axis of dimension 4. The models with $\Delta_1=\Delta_2=2$ are the numbers $\#13$, $\#31$ and $
 \#32$. The last one is reducible and so has only 2 special points, the other two have 5 special points.
 See e.g. \cite{Cecotti:2022bya} where it is shown explicitly that the conformal manifold of the $\{2,2\}$ geometries is $\cm_{0,5}$ (the moduli space of the 5-punctured sphere)
 in agreement with Gaiotto's class-$\cs$ picture \cite{gaiotto}.  
 Thus, with the single exception of
model $\#69$, all other saturate the bound \eqref{yyy5123}. 
Note that in the non-rigid isotrivial special geometries (which correspond to $\cn=4$ SCFTs)
 the continuous deformations of the geometry
leave fixed the discriminant $\cd$. This is not possible in the non-isotrivial case -- that is, when at least along one component $\cd_i$ the light degrees of freedom are IR-free -- because
of Falting-Saito-Peters rigidity.

\subsubsection{A scenario for the mysterious $\#69$}

$Sp(4)$ SYM coupled to $\tfrac{1}{2}\,\textbf{16}$ has several peculiar properties which single it out from all other SCFT. It has no known construction in string theory so we have no clue on its geometry from physics. It has dimensions $\{2,4\}$,
six discriminant components of the form $u_2-z_a\, u_1^2=0$, $z_a\in\C$ ($a=1,\dots,6$) and one regular enhanced axis $u_1=0$ \cite{Martone}. The conformal manifold is one-dimensional,
so we have to freeze 3 out of the 7 special points. A scenario which comes to mind is the following one. Consider
the $\Z_2$-quotient
\be
s\colon(u_1,u_2)\to (v_1,v_2)=(u_1,u_2^2)
\ee
and suppose that on the base we have an Abelian family $\mathscr{B}\to \mathbb{P}(2,8)\setminus \mathscr{S}$ with the 4 special points
$v_1=0$, $v_2-\alpha_k^2\, v_1^4=0$ ($k=1,2,3$). We may fix $\alpha_1$, $\alpha_2$ using the residual automorphisms of $\mathbb{P}^1$, remaining with one essential parameter $\alpha_3$. The pull-back of this family via $s$ would be an Abelian family over $\mathscr{P}$ with 7 special points $u_1=0$ and $u_2\pm\alpha_k\, u_1^2=0$ ($k=1,2,3$),
which depend on the single parameter $\alpha_3$. 
One should validate (or reject\,!) this scenario by constructing the Abelian family on the quotient.
We shall leave the hard analysis to future work. A possibility to be explored is that the local monodromy $\gamma$ of $\mathscr{B}$ around
the $v_2$ axis is such that $\gamma^2$ is the automorphism of the fiber of $\mathscr{A}$
on the $u_2$ axis (described in \textbf{Proposition \ref{pro3}}) with analytic representation
$\left(\begin{smallmatrix}1 & 0\\ 0 &-1\end{smallmatrix}\right)\in\mathsf{Weyl}(Sp(4))$.   
Note that the family $\mathscr{B}$ cannot describe a special geometry on its own right since $\{2,8\}$ is not an allowed dimension pair (it existence is also ruled out by the \textbf{Folk-theorem}).

%

\subsubsection{Comparison with known models: rigid geometries}
One expects that a rank-$2$ \emph{rigid} special geometry has \emph{exactly} 3 special points.
In the table of known SCFTs in \cite{Martone} there is only one apparent
 counterexample to this statement, namely model $\#22$ with dimensions $\{4,6\}$. However in \S.\,4.2
 of that paper it is said:
 \begin{quote}\begin{footnotesize}
 Unfortunately there is another possible solution which is compatible with everything that we know, i.e.\! a single knotted stratum with $\mathfrak{T}_{u^3+v^2} = [II, \varnothing]$. [...] we pick the former but for not better reason than symmetries with the other unknotted stratum, also of $I_n$ type. It is important to clarify, that these two choices {\it are indistinguishable as far as the analysis we are performing here goes} [...] we expect that a more careful analysis of the global structure of the CB could very likely distinguish the two cases.
\end{footnotesize} \end{quote}
Our guess is that the second choice is the correct one, thus eliminating the counterexample to equality.
However another geometric scenario does exist, a covering scenario similar to the one suggested for
$\#69$. Consider the $\Z_2$ cover
\be
f\colon(u_1,u_2)\to (v_1,v_2)= (u_1^2,u_2)
\ee
 over a plane with coordinates of dimensions $\{8,6\}$. There is a known geometry with these dimensions, the one of the SCFT $\#2$ of \cite{Martone}, which has 
 a discriminant component $v_1=0$ of type $I^*_6$ and a divisor $v_1^3-v_2^4=0$ of type $I_1$.
 The pull-back of the Abelian family $\mathscr{A}\to \mathbb{P}(8,6)$ to an Abelian family
 $f^*\!\mathscr{A}\to \mathbb{P}(4,6)$ will have a discriminant component $u_1=0$
 with monodromy the square of the monodromy at $v_1=0$, i.e.\! Kodaira type $I_{12}$.
 In addition we have the two discriminant components $u_1^3\pm u_2^2=0$ both of type $I_1$. 
 This is precisely the ``stratification'' proposed in \cite{Martone} for the $\#22$ model.
 For these precise monodromies, the would-be $\{4,6\}$ geometry passes the text of local regularity of the periods around the axis of dimension 4.
 
 Thus the present analysis of the global structure of the Coulomb branch cannot decisively support one possibility over the other. It is even possible that both geometries exist: this will explain why both pass all possible checks geometrical as well as physical.

\subsubsection{Covering and primitive geometries}

The discussion following \textbf{Corollary \ref{uuuuyyyy}} shows that whenever
$s-3>\dim\mathscr{M}$ there must be relations between the positions of the special points
and we suggested above that this \emph{very rare} phenomenon may arise when there
is a cyclic cover $f\colon\mathscr{P}\to\mathbb{Q}$ -- branched at most over the points
$(1,0)$ and $(0,1)$ -- and the special divisor is invariant under the deck group. On $\mathring{\mathscr{Q}}\setminus f(\mathscr{S})$ we have
an Abelian family $\mathscr{B}\to\mathring{\mathscr{Q}}$ whose pull-back to $\mathring{\mathscr{P}}$ is the Abelian family defining the original special geometry. 
 
 \begin{defn} A special geometry is \emph{primitive} iff its underlying Abelian family
 $\mathscr{A}\to\mathring{\mathscr{P}}$ cannot be written as the pull-back of a family on a non-trivial quotient of $\mathring{\mathscr{P}}$.
 \end{defn} 
 
 \begin{fact} An indecomposable non-isotrivial rank-2 geometry with $s-3=\dim\mathscr{M}$ is primitive.
 \end{fact}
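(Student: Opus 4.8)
The plan is to prove the contrapositive: if an indecomposable non-isotrivial rank-2 special geometry is \emph{not} primitive, then $s-3>\dim\mathscr{M}$ strictly. So suppose $\mathscr{A}\to\mathring{\mathscr{P}}$ is the pull-back $f^*\mathscr{B}$ of an Abelian family $\mathscr{B}\to\mathring{\mathscr{Q}}$ under a non-trivial finite cover $f\colon\mathscr{P}\to\mathscr{Q}$. Since $\mathscr{P}\simeq\mathbb{P}^1$ (its normalization) in rank $2$, and the cover is $\C^\times$-equivariant, $f$ must be (up to isomorphism) a cyclic cover $z\mapsto z^k$ ($k\geq2$) of the Riemann sphere branched only over $0$ and $\infty$ -- these being the only $\C^\times$-fixed points of $\mathscr{P}$, so that the cover descends to the $\C^\times$-quotient. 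The first step is to make this structural statement precise using the $\C^\times$-action: the deck group is $\Z_k$ acting by a root of unity, $\mathscr{Q}\simeq\mathbb{P}(d_1',d_2')$ for appropriate weights, and $f$ is branched over at most the two coordinate points.

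The second step is the bookkeeping of special points. Write $f^{-1}(\mathscr{S}_\mathscr{Q})=\mathscr{S}$. The deck-group invariance forces $\mathscr{S}$ to be a $\Z_k$-orbit-union: each point of $\mathscr{S}_\mathscr{Q}\setminus\{0,\infty\}$ has exactly $k$ preimages, while $0$ and $\infty$ (if they lie in $\mathscr{S}_\mathscr{Q}$) have a single preimage each. Let $\mathscr{S}_\mathscr{Q}$ have $s'$ points, of which $\epsilon\in\{0,1,2\}$ are the branch points. Then
\[
s \;=\; k\,(s'-\epsilon)\;+\;\epsilon.
\]
The conformal manifold is unchanged under the finite cover -- the period map, the VHS, and hence the deformation theory of the special geometry are the same upstairs and downstairs (both capture the same underlying $\mathbb{Q}$-VHS), so $\dim\mathscr{M}=\dim\mathscr{M}_\mathscr{Q}$. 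By \textbf{Corollary \ref{uuuuyyyy}} applied to the (non-isotrivial) family on $\mathscr{Q}$ we have $\dim\mathscr{M}\leq s'-3$, and the deformations of $\mathscr{A}$ that stay non-primitive are exactly those coming from $\mathscr{B}$, i.e.\ from moving the $s'$ points of $\mathscr{S}_\mathscr{Q}$; the argument after \textbf{Corollary \ref{uuuuyyyy}} shows this contributes at most $s'-3$ (and generically exactly that). So $\dim\mathscr{M}\leq s'-3$. It then remains to check the purely arithmetic inequality $s-3=k(s'-\epsilon)+\epsilon-3>s'-3$, i.e.\ $k(s'-\epsilon)+\epsilon>s'$, which -- using $k\geq2$ and $\epsilon\leq 2$ -- holds as soon as $s'-\epsilon\geq1$, i.e.\ as soon as $\mathscr{S}_\mathscr{Q}$ contains at least one non-branch point. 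One must still dispose of the degenerate possibility $s'=\epsilon$, i.e.\ $\mathscr{S}_\mathscr{Q}\subseteq\{0,\infty\}$: but then $\mathscr{S}$ consists of at most the two coordinate points of $\mathscr{P}$, so $\pi_1(\mathscr{P}\setminus\mathscr{S})$ is abelian, the monodromy group is abelian, and by \textbf{Corollary} of \S\ref{s:structurethm} the geometry is isotrivial -- contradicting our hypothesis.

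The main obstacle I expect is the second step's claim that $\dim\mathscr{M}=\dim\mathscr{M}_\mathscr{Q}$, i.e.\ that the moduli of the pulled-back geometry are \emph{exactly} the moduli of the downstairs family and no genuinely new deformations of $\mathscr{A}$ appear on $\mathscr{P}$ that fail to descend. One needs to argue that an infinitesimal deformation of $\mathscr{A}$ as a $\C^\times$-isoinvariant special geometry is automatically $\Z_k$-equivariant for the deck action (because the deck group is a subgroup of the automorphisms induced by $\C^\times$, hence acts trivially on $\mathscr{R}_{\Delta=2}\cong T\mathscr{M}$ via the identification of \textbf{Fact} in \S\ref{s:rigidity}), and therefore descends to a deformation of $\mathscr{B}$. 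That reduces the inequality to the clean combinatorial count above; once that equivariance point is nailed down the rest is routine.
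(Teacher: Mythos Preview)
Your approach is the paper's: argue the contrapositive by comparing the special-point counts upstairs and downstairs under a cyclic cover $f$ of degree $k\geq 2$, then bound $\dim\mathscr{M}\leq s'-3$ on the quotient. The gap is your formula $s=k(s'-\epsilon)+\epsilon$, which tacitly assumes a branch point lies in $\mathscr{S}$ if and only if its image lies in $\mathscr{S}_\mathscr{Q}$. This need not hold: the local $\mu$-monodromy of $\mathscr{A}=f^*\mathscr{B}$ at a branch point is the $k$-th power of that of $\mathscr{B}$, so a branch point can be special on $\mathscr{Q}$ while its preimage carries trivial monodromy; conversely a coordinate axis on $\mathscr{P}$ may be R-enhanced (hence in $\mathscr{S}$) independently of what $\mathscr{B}$ does. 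The paper therefore keeps two separate tallies, $t$ on the quotient and $\tilde t$ on the cover, writes $s=d\,(\dim\mathscr{M}+3-t)+\tilde t$, and invokes the lower bound $\tilde t\geq 2-\dim\mathscr{M}$ (from R-enhancement of the axes when the corresponding $d_i>1$) to close the argument by a short case check on $\dim\mathscr{M}\in\{1,2\}$. Without distinguishing $t$ from $\tilde t$ your inequality $(k-1)(s'-\epsilon)>0$ is not enough in the regime $\tilde t<t$.

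By contrast, the obstacle you flag --- the equality $\dim\mathscr{M}=\dim\mathscr{M}_\mathscr{Q}$ --- is not actually needed. Only the inequality $\dim\mathscr{M}\leq s'-3$ is used, and that follows directly from Faltings--Peters rigidity applied to the quotient family $\mathscr{B}$: any deformation of the $\C^\times$-isoinvariant geometry on $\mathscr{P}$ commutes with the deck group (which sits inside the $\C^\times$-automorphisms), hence descends to a deformation of $\mathscr{B}$, hence to a motion of its $s'$ special points. So your equivariance discussion is on target but proves more than is required, while the branch-point bookkeeping is where the real care is needed.
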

 Indeed the quotient geometry has $\dim\mathscr{M}+3-t\geq1$ special points $\neq(1:0),(0:1)$
 where $1\leq t\leq 2$ is the number of non-trivial monodromies around the points $(1:0)$,$(0:1)$ in the quotient.
 The number of special points of the covering geometry is $s=d(\dim\mathscr{M}+3-t)+\tilde t$, where $d$ is the degree of the cyclic cover and $\tilde t$ the the number of non-trivial monodromies around the points $(1:0)$,$(0:1)$ in the quotient. For a non-isotrivial geometry $\tilde t\geq 2-\dim \mathscr{M}$.
When $t=\tilde t$ we have
 \be
 0=s-3-\dim\mathscr{M}=(d-1)(\dim\mathscr{M}+3-t)
 \ee
 which can be satisfied only if $d=1$, i.e.\! if the geometry is primitive. Otherwise we have $\dim\mathscr{M}=1,2$ and
 \begin{equation}
 \begin{aligned}
&\dim\mathscr{M}=1 &\quad &4=s=d(4-t)+\tilde t\geq 2d +1\\
&\dim\mathscr{M}=2 &\quad &5=s=d(5-t)+\tilde t\geq 3d 
\end{aligned}
 \end{equation}
 both of which can be satisfied only for $d=1$.
 Thus for a rank-2 indecomposable non-isotrivial geometry $s-3=\dim\mathscr{M}$ if and only if
 the geometry is primitive. Most geometries are primitive.
 We shall focus on primitive geometries. The non-primitive ones (if they exist)
 can be studied by the same methods going to the quotient.

 \subsection{Finiteness theorems}
 These  rigidity properties, together with a bound on the degree,
give Falting-Peters finiteness theorems (cf. \textbf{Theorem 4.2} of \cite{peters1}):
 
 \begin{thm}[Falting, Peters] For fixed $\mathscr{S}\subset\mathscr{P}$ 
 there are only finitely many non-isotrivial Abelian families
 which can arise from an irreducible non-isotrivial special geometry with special divisor $\mathscr{S}$.
 \end{thm}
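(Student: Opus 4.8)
The plan is to combine the two ingredients already assembled in this section—Faltings--Saito--Peters rigidity and a boundedness statement for period maps—into a standard finiteness argument. First I would observe that any Abelian family $\mathscr{A}\to\mathring{\mathscr{P}}$ arising from an \emph{irreducible, non-isotrivial} $\C^\times$-isoinvariant special geometry with fixed special divisor $\mathscr{S}$ is rigid, by the \textbf{Fact} of \S.\,\ref{s:falting}: its underlying variation of Hodge structure is irreducible and it carries a local monodromy of infinite order at $\mathscr{S}$ (as shown in \S.\,\ref{s:isotrivial}, a non-isotrivial geometry always has one). By the Riemann--Hilbert correspondence, such a family is determined up to isomorphism by its monodromy representation $\varrho\colon\pi_1(\mathring{\mathscr{P}})\to Sp(2r,\Z)$, so it suffices to bound the number of admissible representations. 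The group $\pi_1(\mathring{\mathscr{P}})$ is finitely generated (it is the fundamental group of a quasi-projective variety), say by generators $g_1,\dots,g_N$, and rigidity means that the $Sp(2r,\Z)$-conjugacy class of the tuple $(\varrho(g_1),\dots,\varrho(g_N))$ has no nontrivial deformations inside the representation variety.

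The second step is to bound the \emph{size} of the matrices $\varrho(g_k)$, i.e. to show that there is an a priori bound (depending only on $\mathscr{S}$ and on $r$) on the norm of each $\varrho(g_k)$ up to conjugation. This is the arithmetic heart of the Faltings--Peters argument: one uses that the VHS is polarized, hence the monodromy preserves an integral symplectic lattice, together with a bound on the ``height'' or ``degree'' of the underlying family. Concretely, the period map $P\colon\mathring{\mathscr{P}}\to Sp(2r,\Z)\backslash\mathsf{S}_r$ is horizontal and distance-decreasing for the natural metrics (Ahlfors/Schwarz lemma, as used already in \S.\,\ref{Folk2}); the image has bounded volume because $\mathscr{S}$ is fixed and $\mathring{\mathscr{P}}$ has finite hyperbolic volume; and the regular singularities along $\mathscr{S}$ with prescribed asymptotics force the local monodromies into a finite set of conjugacy classes. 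Putting these together bounds the translation lengths of the generators, hence—since $Sp(2r,\Z)$ is discrete and acts properly on $\mathsf{S}_r$—confines each $\varrho(g_k)$ to a finite set modulo conjugation. There are therefore only finitely many conjugacy classes of admissible tuples $(\varrho(g_1),\dots,\varrho(g_N))$ satisfying the finitely many relations of $\pi_1(\mathring{\mathscr{P}})$, and rigidity upgrades ``finitely many components of the representation variety'' to ``finitely many points,'' giving finitely many families. I would cite \textbf{Theorem 4.2} of \cite{peters1} for the precise packaging of this chain of estimates.

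The main obstacle is the degree bound feeding into step two: a priori the representation variety of $\pi_1(\mathring{\mathscr{P}})$ into $Sp(2r,\Z)$ could have infinitely many distinct conjugacy classes of points even though each deformation class is rigid, because rigidity only controls connected components, not their number. Ruling this out requires the global input that the period map extends to a morphism of the Baily--Borel compactifications (Borel's extension theorem), so that the family is captured by an algebraic object over a projective base, and then a Noetherian or Hilbert-scheme-type argument bounds the degree of that object in terms of $\mathscr{S}$. This is exactly where quasi-projectivity of $\mathscr{C}$ (equivalently of $\mathscr{P}$, hence of $\mathring{\mathscr{P}}$) is indispensable—without it the period map need not extend and the finiteness genuinely fails—so I would emphasize that the hypothesis is used here and not merely for the rigidity step. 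Once the degree bound is in hand, the rest is the formal deduction above, and I expect no further difficulty.
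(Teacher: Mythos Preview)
The paper does not actually prove this theorem: it is quoted as a known result from the literature, with the one-line motivation ``These rigidity properties, together with a bound on the degree, give Falting-Peters finiteness theorems (cf.\ \textbf{Theorem 4.2} of \cite{peters1}).'' Your sketch expands exactly these two ingredients---rigidity plus a degree/height bound coming from the distance-decreasing property of the period map and Borel extension---and you even cite the same reference, so your proposal is fully aligned with what the paper intends. There is nothing to correct; if anything, you have supplied more detail than the paper does.
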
 
 
 \begin{rem} Since all simple Abelian surfaces are Jacobians of genus 2 curves,
 in rank 2 the above essentially reduces to Arakelov rigidity \cite{arak}.
 \end{rem}

 This finiteness result parallel another deep result from the VHS side: 
 \begin{thm}[Deligne finiteness theorem \cite{delignefin}] Fix a connected quasi-projective manifold $\mathring{\mathscr{P}}$ and an integer $N$. There are at most finitely many conjugacy classes of rational representations of $\pi_1(\mathring{\mathscr{P}})$ of dimension $N$ giving local systems that occur as a direct factor of a polarized variation of Hodge structure on $\mathring{\mathscr{P}}$ (of any weight).
\end{thm}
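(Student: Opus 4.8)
The plan is to follow the structure of Deligne's original argument. First I would reduce to the case where $\mathring{\mathscr{P}}$ is smooth quasi-projective (normalise and resolve singularities) and fix a smooth projective compactification $\overline{\mathscr{P}}$ with boundary $D=\overline{\mathscr{P}}\setminus\mathring{\mathscr{P}}=\sum_i D_i$ a normal-crossing divisor. Since the category of polarizable $\mathbb{Q}$-variations of Hodge structure on $\mathring{\mathscr{P}}$ is semisimple (the relative theorem of the fixed part), a direct factor of a polarized VHS is again — up to shift of weight and Tate twist — a polarized VHS, so it suffices to bound the number of conjugacy classes of $N$-dimensional rational representations that themselves underlie a polarized VHS; decomposing such a representation into $\pi_1$-isotypic pieces (irreducible constituents having rank $\le N$) further reduces the problem to \emph{irreducible} ones.

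Next I would establish boundedness of the discrete invariants attached to such a VHS. The local monodromy theorem gives that the monodromy $T_i$ around each $D_i$ is quasi-unipotent; since $T_i$ acts rationally on an $N$-dimensional space, the eigenvalues of its semisimple part are roots of unity $\zeta$ with $\phi(\mathrm{ord}\,\zeta)=[\mathbb{Q}(\zeta):\mathbb{Q}]\le N$, so — as $\phi(m)\to\infty$ — their orders are bounded and the nilpotency degree is $\le N$; hence each $T_i$ lies in one of finitely many conjugacy classes of $GL(N,\C)$. Passing to the Deligne canonical extension produces a logarithmic connection on $\overline{\mathscr{P}}$ whose residues along $D_i$ have eigenvalues in $\mathbb{Q}\cap[0,1)$ with bounded denominators, and Schmid's nilpotent- and $SL_2$-orbit estimates then bound the curvature of the Hodge metric on each Hodge subbundle $F^p$; via Chern--Weil this bounds the Hodge numbers and the degrees of the $F^p$ solely in terms of $N$ and the topology of $(\overline{\mathscr{P}},D)$.

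With these bounds in hand I would argue that the polarized $\Z$-VHS of rank $N$ on $\mathring{\mathscr{P}}$ with the given invariants are parametrised by a scheme $\mathcal{S}$ of \emph{finite type} over $\C$ — the canonical extensions being coherent sheaves of bounded Hilbert polynomial equipped with a filtration and a Higgs-type Kodaira--Spencer field of bounded degree, so that Grothendieck boundedness (or Simpson's moduli of Higgs/filtered bundles) applies — whence $\mathcal{S}$ has finitely many connected components. Finally, the monodromy assignment defines a map from $\mathcal{S}$ to the set of conjugacy classes of homomorphisms $\pi_1(\mathring{\mathscr{P}})\to GL(N,\Z)$ (one reduces from $\mathbb{Q}$ to $\Z$ by choosing a lattice, using quasi-unipotence to pin down the finitely many relevant primes); since $GL(N,\Z)$ is discrete this map is locally constant on $\mathcal{S}$, hence takes only finitely many values, which is the assertion.

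I expect the main obstacle to be the boundedness/finite-type step: showing that ``polarized VHS of bounded rank with bounded Hodge data form a bounded family'' is the analytic heart, resting on Schmid's asymptotic analysis together with GIT/Quot-type boundedness, and it is here that the hypothesis that the base is quasi-projective is essential. The concluding rigidity — that a monodromy representation valued in a discrete arithmetic group cannot vary within a connected family — is conceptually easy once boundedness is available. The two points needing extra care are the reduction of an arbitrary direct factor to an honest VHS summand and the passage from $\mathbb{Q}$- to $\Z$-coefficients, but neither changes the overall shape of the proof.
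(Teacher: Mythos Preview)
The paper does not give its own proof of this theorem: it is stated as a cited external result (Deligne, \emph{Un th\'eor\`eme de finitude pour la monodromie}), used alongside the Faltings--Peters finiteness theorem, with no argument supplied. There is therefore nothing in the paper to compare your proposal against.

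For what it's worth, your sketch follows the correct broad architecture of Deligne's original argument --- semisimplicity reduces to irreducible summands, quasi-unipotence bounds the local monodromy classes, Schmid's asymptotics control the Hodge data, and then a finite-type/boundedness step combined with discreteness of the target group gives finiteness. The one place you are vague is exactly where you flag it: the ``bounded family'' step is the real work, and your invocation of Simpson's moduli is anachronistic relative to Deligne's 1987 paper (which predates Simpson's non-abelian Hodge theory), though it is a legitimate route today. But none of this bears on the paper under review, which simply quotes the theorem.
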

 
To get the expected result that only finitely many special geometries exist for each rank $r$
it remains to show that only finitely many divisors $\mathscr{S}$ are possible
for \emph{rigid} irreducible geometries. 
In rank-$2$ we have the stronger statement that there is a unique $\mathscr{S}$ modulo automorphisms of $\mathscr{P}\simeq\mathbb{P}^1$ namely $\mathscr{S}$ should consists of three distinct points. In the general case we already said some words on this issue in \S.\,\ref{s:finD} and
we shall return to the question in sect.\,\ref{kkki9223b}.

 \subsection{Rigidity in the Deligne-Simpson sense}
 
 For the third notion of rigidity we strip the special geometry of all its
 structures but for the $\mu$-monodromy representation\footnote{\ For $\varkappa=2$ we need to choose a suitable lift of the monodromy from $PSp(2r,\Z)$ to $Sp(2r,\Z)$.}
 \be
 \mu\colon\pi_1(\mathscr{P}\setminus\mathscr{S})\to Sp(2r,\Z)\subset GL(2r,\C),
 \ee
 and we forget that the monodromy was defined over
 $\mathbb{Q}$ (and in fact integral), seeing it as a mere
  representation in $GL(2r,\C)$. In particular we consider two
  representations to be the same if they are conjugated in $GL(2r,\C)$.
  
  Then $\mu$ is a complex representation of dimension $2r$,
  which in the interesting situations 
  is irreducible, with the property that
  the local monodromy 
  $\mu_i$ around the $i$-th component of the divisor $\mathscr{S}$ 
  belongs to a prescribed quasi-unipotent
  conjugacy class $C_i\subset GL(2r,\C)$.

 \begin{defn} An irreducible representation $\mu\colon 
 \pi_1(\mathring{\mathscr{P}})\to GL(2r,\C)$
 with $\mu_i\in C_i$ is called \emph{rigid}
 if all nearby irreducible representations with local monodromies $\mu_i^\prime\in C_i$
 are conjugated in $GL(2r,\C)$ to it. It is called \emph{absolutely rigid}
 if all such irreducible representations are conjugate. 
 \end{defn} 
 
 \noindent When $\dim\mathring{\mathscr{P}}=1$ rigidity implies absolute rigidity.
 The Simpson celebrated conjecture \cite{simpson} states\footnote{\ A somehow related statement is the
Bombieri-Dwork conjectures about $G$-functions \cite{andre1,andre2}. The Laplace transform of $G$-functions are the BPS brane amplitudes in the asymmetric limit \cite{Cecotti:2014wea}.}
 
 \begin{conj}[Simpson]
 Every rigid irreducible representation of the fundamental group of a smooth complex algebraic variety
 $\mathring{\mathscr{P}}=\mathscr{P}\setminus\mathscr{S}$, with quasi-unipotent monodromies
 around the components of $\mathscr{S}$ is \emph{motivic}
 i.e.\! arises from a VHS.
 \end{conj}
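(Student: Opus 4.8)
Since the assertion is Simpson's conjecture in full generality, what follows is a strategy rather than a complete argument; for the applications of the present paper only the curve case $\dim\mathring{\mathscr{P}}=1$ (so $\mathscr{P}\simeq\mathbb{P}^1$ and $\mathscr{S}=\{z_1,\dots,z_s\}$ a finite point set) is needed, and there the plan below really does go through.

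\textbf{Reduction to curves.} Given a rigid irreducible local system $L$ on $\mathring{\mathscr{P}}$, restrict it to a sufficiently general smooth complete-intersection curve $C\subset\mathring{\mathscr{P}}$. By a Lefschetz-type theorem for affine varieties the map $\pi_1(C)\to\pi_1(\mathring{\mathscr{P}})$ is surjective, so $L|_C$ stays irreducible; one checks that the prescribed quasi-unipotent local monodromies and the rigidity index are inherited. In the other direction, a polarized $\mathbb{Q}$-VHS on $L|_C$ for a large enough family of such curves determines, by rigidity, a period map on all of $\mathring{\mathscr{P}}$ whose Hodge data extend, so $L$ itself becomes motivic. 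Thus it suffices to treat smooth affine curves $\mathring{\mathscr{P}}=\bar C\setminus\{z_1,\dots,z_s\}$.

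\textbf{Katz's algorithm and the base case.} On a curve, rigid irreducible local systems with prescribed quasi-unipotent local monodromies are controlled by Katz's theory of middle convolution: any such $L$ reduces, through a finite chain of operations --- twists by rank-one Kummer local systems $\chi$ and middle convolutions $\mathrm{MC}_\chi$ --- to a rank-one local system, the rigidity index being preserved at each step so that the process terminates. A rank-one local system with quasi-unipotent (hence finite-order) monodromy is a Kummer sheaf, which is manifestly motivic: it is a direct summand of the $H^1$ of a cyclic cover of $\bar C$ ramified over the $z_i$. The remaining point is that twisting and middle convolution preserve the property of underlying a polarizable $\mathbb{Q}$-VHS; for twists this is immediate, and for $\mathrm{MC}_\chi$ one invokes Katz's geometric incarnation of the middle convolution as a (middle) relative cohomology along the universal family of pairs fibred over $C$, checking that the Hodge filtration transforms in the expected way. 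Running the Katz chain backwards then exhibits $L$ as a summand of the cohomology of an explicit family, i.e.\ as motivic. Combined with the reduction step this simultaneously re-proves Fact \ref{AZth}: a rigid $Sp(4,\Z)$-local system on $\mathbb{P}^1\setminus\mathscr{S}$ with Kodaira-type local monodromies automatically underlies a polarized weight-one $\mathbb{Q}$-VHS, which is precisely the special geometry produced from the Almkvist--Zudilin ODE in \S\ref{k87xxx}.

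\textbf{The main obstacle.} The genuinely hard part is the higher-dimensional input to the reduction step --- equivalently, the implication ``cohomologically rigid $\Rightarrow$ motivic'' when $\dim\mathring{\mathscr{P}}>1$, where there is no Katz-style algorithm. The only known route is $\ell$-adic: one shows (after Esnault--Groechenig) that a cohomologically rigid local system is integral and, for almost all $\ell$, admits a compatible system of $\ell$-adic companions, and then one tries to upgrade this $\ell$-adic ``geometricity'' to an honest polarized $\Z$-VHS using crystalline companions and de Jong/Drinfeld-type results. Bridging the last gap --- from $\ell$-adic/crystalline geometric origin to a genuine Hodge-theoretic motive underlying the given complex local system --- is exactly where the current state of the art stops short of Simpson's conjecture, and I would expect it to be the bottleneck of any serious attempt. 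Fortunately, for the rank-$2$ special geometries studied in this paper $\mathring{\mathscr{P}}$ is a curve, so this obstacle does not arise and Steps 1--2 suffice.
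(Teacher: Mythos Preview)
The paper does not prove this statement: it is explicitly labeled and treated as Simpson's \emph{Conjecture}, with the remark that it is a theorem only when $\dim\mathscr{P}=1$ (via Katz's middle-convolution algorithm, later recast by Crawley-Boevey), and that in general only the weaker $\C$-VHS version is known (Simpson's Lemma~4.5). So there is no ``paper's own proof'' to compare against; your framing as ``a strategy rather than a complete argument'' is appropriate, and your treatment of the curve case via Katz's algorithm is exactly what the paper invokes.

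That said, your reduction step contains a genuine gap, not merely a technicality you can defer to the third paragraph. The claim ``one checks that\ldots the rigidity index [is] inherited'' under restriction to a generic curve $C\subset\mathring{\mathscr{P}}$ is false in the relevant sense. Lefschetz gives a \emph{surjection} $\pi_1(C)\twoheadrightarrow\pi_1(\mathring{\mathscr{P}})$, not an isomorphism; the kernel is typically huge (for a generic affine curve $\pi_1(C)$ is free of large rank), so a representation that is rigid as a $\pi_1(\mathring{\mathscr{P}})$-module acquires a positive-dimensional deformation space when regarded as a $\pi_1(C)$-module with the induced local monodromies. Hence $L|_C$ is essentially never rigid on $C$, Katz's algorithm does not apply to it, and the ``reduction to curves'' collapses. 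Your third paragraph effectively concedes this, but it is inconsistent with presenting Step~1 as an actual reduction: the honest statement is that no such reduction is known, which is precisely why the higher-dimensional case remains conjectural. For the purposes of this paper (rank $2$, $\mathring{\mathscr{P}}$ a punctured $\mathbb{P}^1$) only the curve case is used, and there your Step~2 is correct and sufficient.
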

 
 The \textbf{Conjecture} is a theorem when $\dim\mathscr{P}=1$; this follows
 from the work of many distinguished mathematicians culminating in the explicit
 construction of the rigid monodromies by Katz \cite{katz} later reinterpreted in the representation theoretical language by Crawley-Boevey\footnote{\ For a textbook account see \cite{haraoka}.} \cite{CW1,CW2,CW3,CW4,CW5}. In ref.\cite{Cecotti:2022uep} it was
 shown how these results can be rephrased in terms of Seiberg duality for SUSY gauge theories.
 
 The \textbf{Conjecture} is proven in general in the category of $\C$-VHS,
 that is, it is proven that a rigid irreducible representation (with the appropriate $C_i$'s)
 yields a \emph{weak special geometry} (see \textbf{Lemma 4.5} of \cite{simpson}).
The aspect of the \textbf{Conjecture} which is still open
is the number-theoretic one.  
 When the monodromy is rigid, the representation $\mu$ automatically  takes value
 in $GL(2r,\mathbb{F})$ for some number field $\mathbb{F}$,
 that is, it corresponds to a direct factor of a $\mathbb{Q}$-variation of Hodge structure
 (\textbf{Theorem 5} of \cite{simpson}). The \textbf{Conjecture}
claims that this representation is also \emph{integral},
 i.e.\! $\mu$ takes values in $GL(2r,\co_{\mathbb{F}})$ where $\co_{\mathbb{F}}$ is the ring of algebraic integers in
 $\mathbb{F}$. Equivalently: the traces $\mathrm{tr}\,\gamma$ ($\gamma\in \Gamma$)
 are rational integers. 
 
 \subsubsection{Properties of rigid monodromy representations}
 Let us explain in elementary terms a special (and simpler) instance of
 the `surprising' result of \cite{simpson}, 
 that is, why  an absolutely rigid, irreducible, complex, monodromy representation $\mu$ 
 with local conjugacy classes
 $C_i\subset GL(2r,\C)$ such that:
 \begin{itemize}
 \item[(1)] for all $i$ the elements of $C_i$ are quasi-unipotent with at most one non-trivial Jordan block of size 2;
 \item[(2)] for all $i$: $C_i\cap Sp(2r,\Z)\neq\varnothing$;
 \item[(3)] the elements of at least one $C_i$ have infinite order i.e.\! some power of the elements of
 $C_i$ is unipotent with a single Jordan 2-block 
 \end{itemize}
  is automatically contained in a real form of $Sp(2r,\C)$, so yields a $\R$-VHS
  which, under some ``mild'' regularity condition (cf.\! \S.\,\ref{s:raising}) describes a \emph{weak} special geometry. 
 By assumption the $C_i$'s are invariant under $g\mapsto (g^t)^{-1}$ and
 $g\mapsto g^*$. Hence by rigidity
 \begin{equation}
\mu=M^{-1}(\mu^t)^{-1}M,\qquad
 \mu = R^{-1}\mu^* R
 \end{equation} 
for some invertible matrices $M$, $R$ with
\be\label{schhhrur}
(M^t)^{-1}M\,\mu=\mu\,(M^t)^{-1}M,\qquad RR^*\,\mu=\mu\, RR^*.
\ee 
 These equations imply
 \begin{equation}\label{ju1222z}
 \mu^t M\mu =M\quad\text{and}\quad \mu^\dagger H\mu = H\quad\text{where }\  H\equiv M^*R.
 \end{equation}
 Since the representation is irreducible, the invariant bilinear form
 $M$ must be either symmetric or antisymmetric while, being invertible, $M$ is non-degenerate.
 A non-degenerate symmetric $M$ is not consistent\footnote{\ A matrix $\mu$ with 
 $\mathrm{rank}(\mu-1)=1$ which satisfies the first eq.\eqref{ju1222z} for $M=M^t$ must have the form $\mu=1+v\otimes v^t M$
 with $v^tMv=-2$, while the condition $(\mu-1)^2=0$ requires $v^t M v=0$.} with condition (3) and should be ruled
 out. Changing basis, we may assume that $M$ is the standard (real) symplectic matrix $\Omega$.
 This shows that the rigid monodromy group $\Gamma\subset Sp(2r,\C)$. The second eq.\eqref{ju1222z} then implies that $\Gamma$
 is in a real form of $Sp(2r,\C)$.
More specifically: by Schur's lemma and eq.\eqref{schhhrur} $RR^*$ and $H^\dagger H^{-1}$
 are multiples of the identity, so $H^\dagger H^{-1}=\text{(phase)}$. We can set $RR^*=1$ as a choice of normalization,
 fixing $R$ modulo an overall phase. Now
 \be
 (\Omega R)^\dagger =H^\dagger =\text{(phase)}\,H= \text{(phase)}\,\Omega R
 \ee 
 and we may fix the phase of $R$ so that $H$ is anti-Hermitian i.e.\!
 $R^\dagger\Omega=\Omega R$ or
 \be
 R^t \mspace{1.5mu}\Omega\, R \equiv(R^\dagger)^{-1}\Omega\, R=\Omega,
 \ee
that is, $R\in Sp(2r,\C)$. Since $Sp(2r,\C)$ is simply connected and $R^*=R^{-1}$,
we can find $L\in\mathfrak{sp}(2r,\C)$ with $L^*=-L$ such that $R=\exp(L)$.
Let $S\equiv\exp(L/2)\in Sp(2r,\C)$; one has $S^2=R$ and $SS^*=1$, hence
 \be
(S\mu S^{-1})^*= S^* \mu^* (S^*)^{-1}= S^{-1}\mu^* S= S^{-1}(S^{2}\mu^* S^{-2})S= S\mu S^{-1}
 \ee
 that is, after conjugating with $S$, the monodromy $\mu$ takes
 value in $Sp(2r,\R)$.

 \subsubsection{Obstruction to be rational: the Brauer group}

 By rigidity, 
the traces $\mathrm{tr}(\gamma)$ ($\gamma\in\Gamma$)
are complex numbers invariant under all automorphisms of $\C$,
so are algebraic numbers. Since the Galois group $\mathsf{Gal}(\overline{\mathbb{Q}}/\mathbb{Q})$
fixes the traces, they take value in $\mathbb{Q}$. Then there exists a finite extension $\mathbb{F}$
of $\mathbb{Q}$ such that (after a suitable overall conjugation) $\Gamma\subset Sp(2r,\mathbb{F})$.
The number field $\mathbb{F}$ is either totally real or a purely imaginary quadratic extension of
a totally real field \cite{simpson}. As already mentioned, the Deligne-Simpson conjecture requires $\Gamma$ to be
integral:\footnote{\ Recall that the symbol $\co_\mathbb{F}$ stands for the ring of integers in the number field $\mathbb{F}$.}  $\Gamma\subset Sp(2r,\co_\mathbb{F})$. In particular the traces are ordinary integers.
However to be the monodromy of an actual family of Abelian $r$-folds $\mu$
should be integral and \emph{rational} i.e.\! $\Gamma\subset Sp(2r,\co_\mathbb{Q})\equiv
Sp(2r,\Z)$.

The Brauer group of the number field $\mathbb{F}$ gives a potential obstruction to the
rationality of the monodromy representation.
 Let $\sigma\in \mathsf{Gal}(\mathbb{F}/\mathbb{Q})$; 
by rigidity we have
\be\label{8776ggs}
\gamma_i^\sigma= M_\sigma^{-1} \gamma_i \mspace{1mu}M_\sigma,\quad \forall\; i 
\ee 
for some matrix $M_\sigma\in GL(2r,\mathbb{F})$ which is unique up to multiplication by an element of $\mathbb{F}^\times$ by irreducibility. Now
\be
\begin{split}
M_{\sigma\tau}^{-1}\gamma_iM_{\sigma\tau}&= \gamma_i^{\sigma\tau}=(\gamma_i^\tau)^\sigma=
(M_\tau^{-1}\gamma_i M_\tau)^\sigma=\\
&= (M_\tau^\sigma)^{-1} \gamma_i^\sigma (M_\tau)^\sigma=
(M_\tau^\sigma)^{-1}M_\sigma^{-1} \gamma_i M_\sigma (M_\tau)^\sigma,
\end{split}
\ee
that is,
\be
C_{\sigma\tau}\equiv M_{\sigma\tau}(M_\sigma (M_\tau)^\sigma)^{-1}\in\mathbb{F}^\times,\qquad \sigma,\tau\in \mathsf{Gal}(\mathbb{F}/\mathbb{Q}).
\ee
is a 2-cocycle on the Galois group with values in $\mathbb{F}^\times$.
When the cohomology class $[C_{\sigma\tau}]$ in the Brauer group $H^2(\mathbb{F},\mathbb{F}^\times)$
vanishes, we can chose $M_\sigma$ so that $M_{\sigma\tau}=M_\sigma (M_\tau)^\sigma$,
and $M_\sigma$ is a Galois 1-cocycle with values in $GL(2r,\mathbb{F})$.
The non-Abelian extension of Hilbert's \textbf{Theorem 90} states \cite{serre0}
\be
H^1(\mathsf{Gal}(\mathbb{F}/\mathbb{Q}), GL(2r,\mathbb{F}))=\{1\},
\ee
that is, there exists a matrix $N\in GL(2r,\mathbb{F})$
such that 
\be
M_\sigma= N^{-1}\,N^\sigma\qquad \forall \; \sigma\in \mathsf{Gal}(\mathbb{K}/\mathbb{Q}).
\ee
Substituting this equation in \eqref{8776ggs} we get
\be
(N\gamma_i N^{-1})^\sigma\equiv N^\sigma \gamma_i^\sigma (N^\sigma)^{-1}= N\gamma_i N^{-1}\quad
\forall\; i,\ \ \forall\; \sigma,
\ee
that is, the matrices $\eta_i=N\gamma_i N^{-1}$ yield a conjugate representation
defined over $\mathbb{Q}$ which is integral by the Simpson conjecture.

However the Brauer class $[C_{\sigma\tau}]$ is \emph{not} zero in general.
Therefore to conclude that a particular rigid irreducible representation satisfying (1)(2)(3)
yields a viable monodromy group for special geometry we have to check that its Brauer class is zero.
Otherwise we can only conclude that the monodromy is a direct summand  in some higher degree
VHS \cite{simpson}, and hence not directly relevant for special geometry.

\subsection{Use of rigidity to construct special geometries}

It may happen that a special geometry has a rigid monodromy representation $\mu$. 
\begin{defn}
A special geometry is \emph{$\mu$-rigid} if it is rigid (i.e.\! its conformal manifold $\mathscr{M}$ is a point), and its monodromy representation
$\mu\colon \pi_1(\mathring{\mathscr{P}})\to GL(2r,\C)$ is either
irreducible and rigid (as a monodromy representation) or it decomposes in the direct sum
of rigid representations.
\end{defn}

By the Riemann-Hilbert correspondence being $\mu$-rigid
 is equivalent to saying that the associated Gauss-Manin (GM) connection (Picard-Fuchs ODE for $r=2$)
is rigid (modulo gauge equivalence). In \S.\,\ref{s:raising} we suggested that ``under some mild conditions''  an irreducible Gauss-Manin connection over $\mathring{\mathscr{P}}$ with the
appropriate local monodromies will yield a special geometry; in $r=2$ this follows\footnote{\ Subject to the checks of $\mathrm{Im}\,\tau_{ij}>0$ and local regularity.} from the 
Almkvist-Zudilin theorem \cite{AZthm}  and we expect that a similar story holds for all $r$. 
The basic idea is that one can construct the GM connection of the $\mu$-rigid special geometries by exploiting their rigidity.
This is a method pioneered by Riemann in his 1851 lecture notes on the hypergeometric ODE:
he was able to construct integral representations for the solutions and compute the connection coefficients
using rigidity of the monodromy. Since Riemann's times a quantity of important results have been obtained using this method. 

Our interest in the class of $\mu$-rigid special geometries stems from the fact that
all rigid special geometries tend to be $\mu$-rigid.

\begin{conj} Most, and possibly \emph{all,} rigid $\C^\times$-isoinvariant special geometries
are $\mu$-rigid. 
\end{conj}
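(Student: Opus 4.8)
The statement to establish is that most (conjecturally all) rigid $\C^\times$-isoinvariant special geometries are $\mu$-rigid, i.e.\ that rigidity of the conformal manifold $\mathscr{M}$ forces rigidity of the monodromy representation $\mu$ (or of each of its irreducible constituents). The plan is to argue by contraposition: assume the underlying $\mu$-monodromy representation is \emph{not} rigid, and produce a non-trivial deformation of the special geometry, contradicting $\dim\mathscr{M}=0$. The obvious strategy is to run the Riemann--Hilbert machine developed in \S.\,\ref{s:raising} and \S.\,\ref{k87xxx} in reverse: a non-rigid local system with the prescribed quasi-unipotent local monodromies $C_i$ around the components $\cs_i$ of the special divisor $\mathscr{S}$ admits, by definition, nearby inequivalent irreducible representations in $GL(2r,\C)$ with the same local data. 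The first step is to promote such a nearby representation to a Gauss--Manin connection on $\mathring{\mathscr{P}}$ with regular singularities along $\mathscr{S}$, which is automatic by Deligne's Riemann--Hilbert correspondence \cite{deligne}. The second step is to check that the deformed connection still satisfies the Legendre/integrability condition \eqref{Legendre}: for $r=2$ this is precisely the content of the Almkvist--Zudilin theorem \cite{AZthm}, since the differential Galois group stays inside $Sp(4,\C)$ under the deformation (the conjugacy classes $C_i$ are preserved, and having the symplectic form is a Zariski-closed, hence deformation-stable, condition). For higher $r$ one invokes the \textbf{Fact} of \S.\,\ref{s:raising} together with the expectation, stated there, that the analogous statement holds in all ranks.

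The third step is the delicate one: one must verify that the deformed Gauss--Manin connection still corresponds to an \emph{honest, regular} $\C^\times$-isoinvariant special geometry, not merely a weak one over $\C$. Here one uses that (i) the positivity $\mathrm{Im}\,\tau_{ij}>0$ is an open condition, hence survives a sufficiently small deformation; (ii) the local regularity checks along the special divisors (the comparison with the stable-reduction models of \S.\,\ref{s:stable}) depend only on the conjugacy classes $C_i$, which are held fixed; and (iii) the integrality/rationality of the monodromy — it takes values in $Sp(2r,\Z)$ — is likewise preserved because small deformations of an integral matrix with a fixed quasi-unipotent conjugacy class representative can be taken to stay integral (the argument is the same isomonodromic-deformation argument used right after \textbf{Corollary \ref{uuuuyyyy}}). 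Combining these, the deformed connection defines a special geometry with the same dimensions $\{\Delta_i\}$ and the same stratification data, i.e.\ a point of $\mathscr{M}$. It remains to see that this point is \emph{distinct} from the original one; this follows because the associated monodromy representations are non-conjugate in $GL(2r,\C)$ by construction, and the monodromy representation is a conformal-manifold invariant (stated in \S.\,\ref{s:rigidity}: ``the dimensions $\{\Delta_i\}$\ldots as is the monodromy representation'' are constant along $\mathscr{M}$). Hence $\dim\mathscr{M}\geq1$, contradicting rigidity. When $\mu$ is reducible one runs the same argument on a non-rigid direct summand, deforming it while keeping the other summands (and the Legendre pairing between them, cf.\ \S.\,\ref{s:ugly}) fixed.

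The main obstacle — and the reason the statement is phrased as a conjecture rather than a theorem — is step two in ranks $r>2$: the converse direction of the Almkvist--Zudilin-type result is only known up to finite covers/finite groups even for $r=2$ (see the footnote to eq.\,\eqref{thecondition}), and no analogue has been proved for general $r$. Concretely, a nearby non-rigid representation with the right local monodromies and differential Galois group inside $Sp(2r,\C)$ is guaranteed to carry \emph{a} Legendre period only after the Almkvist--Zudilin phenomenon is established in rank $r$; without it one cannot rule out that the deformation fails to be ``integrable'' and hence does not correspond to any special geometry at all. A secondary subtlety is that a deformation of $\mu$ could in principle leave $\mathscr{M}$ zero-dimensional while still being realized by a \emph{discrete} family of non-isomorphic special geometries with the same $\mathscr{S}$ — but Faltings--Saito--Peters rigidity (\S.\,\ref{s:falting}) bounds such a family to be finite, and the Riemann--Hilbert argument above produces a genuinely \emph{continuous} family unless the positions of the points of $\mathscr{S}$ are themselves frozen, which is the non-generic ``extra symmetry'' situation discussed after \textbf{Corollary \ref{uuuuyyyy}}. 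It is precisely these exceptional frozen-discriminant cases — e.g.\ the covering scenarios proposed for models $\#69$ and $\#22$ — that account for the hedge ``most, and possibly all'' in the conjecture.
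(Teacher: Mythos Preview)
The paper does not prove this statement: it is explicitly labeled a \textbf{Conjecture}, and the paper supplies only an \emph{Evidence} paragraph---trivial in rank $1$, an empirical check against the seventy-odd known rank-$2$ examples, and a heuristic that for higher $r$ the relations in $\pi_1(\mathscr{P}\setminus\mathscr{S})$ grow faster than the generators, so the representation variety is generically zero-dimensional. There is no attempted proof, and your proposal should be compared against that.

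Your contraposition argument has a structural gap that is not the Almkvist--Zudilin issue you flag. You deform the monodromy $\mu$ in $GL(2r,\C)$ and then claim the resulting special geometry is a new point of $\mathscr{M}$. But the paper states (and you quote) that the monodromy representation is \emph{constant} along the conformal manifold $\mathscr{M}$. So a deformed $\mu$ cannot land in the same $\mathscr{M}$: at best it gives a \emph{different} special geometry with its own (zero-dimensional) conformal manifold, not a non-trivial tangent vector to the original $\mathscr{M}$. The two notions are orthogonal: special-geometric non-rigidity is about \emph{isomonodromic} deformations (moving the special divisor $\mathscr{S}$ while keeping $\mu$ fixed, cf.\ the \textbf{Corollary} after the Faltings--Saito--Peters discussion), whereas $\mu$-non-rigidity is about moving $\mu$ with $\mathscr{S}$ fixed. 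Your citation of the passage after \textbf{Corollary~\ref{uuuuyyyy}} is misapplied: there the paper moves the \emph{points} $z_i$ and keeps the $\mu_i\in Sp(4,\Z)$ literally the same---the opposite of what you do.

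A second gap is integrality. A generic nearby complex representation with the same local classes $C_i$ will not be conjugate into $Sp(2r,\Z)$; integral points in the representation variety are discrete, and indeed Deligne/Faltings--Peters finiteness (invoked in \S\,\ref{s:falting}) guarantees only finitely many of them underlie actual VHS over the fixed $\mathring{\mathscr{P}}$. Your sentence ``small deformations of an integral matrix with a fixed quasi-unipotent conjugacy class representative can be taken to stay integral'' is not correct as stated and is not what the paper's isomonodromic argument provides. This is exactly why the paper leaves the statement conjectural and falls back on evidence rather than a proof.
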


\begin{proof}[Evidence] In rank 1 the statement is trivially true. The seventy or so known rank $2$ special geometries \cite{Martone}
have rigid monodromy representation, including the non-rigid (i.e.\! Lagrangian) ones.
For all but (at most) two indecomposable rank-$2$ geometries the monodromy representation is in addition \emph{irreducible}. As emphasized on page 360 of \cite{haraoka}, rigidity  becomes a milder and milder requirement on the monodromy as we increase the rank $r$. The point is that the fundamental group of the complement $\pi_1(\mathscr{P}\setminus\mathscr{S})$ is finitely-presented for all geometries
in any $r$. But while the number of generators grows slowly with $r$ the number of independent relations grows quite rapidly, producing a typically overdetermined
set of algebraic equations for the moduli space of monodromy representations 
which then is typically empty for most choices of the $\{C_a\}$'s. Since already for $r=2$
all (or at least the vast majority) of the special geometries are $\mu$-rigid, this suggests
that this pattern would hold for all $r$.        
\end{proof}

Regardless for the validity of the \textbf{Conjecture}, the $\mu$-rigid geometries form a
big chunk of the finite set of all rigid special geometries of given $r$. Then in our classification scheme (cf.\! \S.\,\ref{s:invpro}) we introduce class V of special geometries which are $\mu$-rigid and do not belong to simpler classes
I-IV which can be addressed with simpler techniques.
Then we proceed to study, and hopefully 
solve, the inverse problem for this important class of geometries by exploiting the wonders of rigid monodromies  (with the hope that the \textbf{Conjecture} is true and class VI is small or empty).
We shall pursue this program
for the rest of the paper.
\medskip  

The crucial question is when a rigid monodromy
leads to a non-isotrivial special geometry. A necessary condition is that the local monodromies
are quasi-unipotent, real symplectic, with at most one non-trivial Jordan block of dimension 2,
and at least one local monodromy non-semisimple. The Deligne-Simpson
conjecture (theorem for $r=2$) then says that the monodromy arises from
 a VHS. In particular the monodromy group is a subgroup of $Sp(2r,\co_\mathbb{F})$
where $\co_\mathbb{F}$ is the ring of algebraic integers in some number field $\mathbb{F}$ (a finite extension of $\mathbb{Q}$). Given the rigid monodromy representation $\mu$
we can write
the corresponding rigid GM connection $\nabla$ on $\mathring{\mathscr{P}}$
 and solve the flat-section equation $\nabla\Phi=0$.
To declare that we have constructed a $\C^\times$-isoinvariant
special geometry in the sense of
\textbf{Definition \ref{xxxyy67}} 
 it remains only to check:
\begin{itemize}
\item[\it (i)] the existence of a particular solution $\Pi$ of the flat-section equation
which satisfies the Legendre condition.
This is automatically true for $r=2$ (cf.\! \ref{s:raising}) and plausibly also for general $r$ under some
(yet to be fully understood) ``mild'' conditions;
\item[\it (ii)] the Brauer class of $\mu$ is trivial so that
$\Gamma\subset Sp(2r,\Z)$;
\item[\it (iii)] $\mathrm{Im}\,\tau_{ij}>0$, i.e.\! the VHS implied by the 
Deligne-Simpson conjecture has weight 1;
\item[\it (iv)] regularity of the geometry, that is, that the SW periods pulled back on 
$\mathscr{C}$ (as in \S.\,\ref{s:reovvering} for $r=2$) have the behavior along
the irreducible components of the discriminant dictated by the local models
of the corresponding singular fiber types, cf.\! \S.\,\ref{s:sing}, and no singularity along
the enhanced divisors, where they should satisfy the conditions in \S.\,\ref{s:iiiiii4uu5}.
\end{itemize}

When the monodromy representation is an appropriate one, we expect that
all these conditions are automatically satisfied, without imposing any new one.
Since it is still not fully obvious what we shall mean by ``appropriate'' monodromy representation,  when in doubt
we shall perform the checks \emph{ex post}.

%
%

\section{The inverse problem vs. rigidity}

We now return to our inverse problem in rank $r$ which we state in more precise terms:

\begin{inverse}
Suppose we are given the following set of data:
\begin{itemize}
\item[\bf (a)] the Coulomb dimensions $\{\Delta_1,\dots,\Delta_r\}$
from which we construct the (normal projective) manifold
$\mathscr{P}\equiv \mathbb{P}(d_1,d_2,\dots,d_r)$,
and identify its cyclic quotient orbifold divisors;\footnote{\ Here we mean ``orbifolds'' in the projective sense, the underlying manifold/variety may be smooth along the divisor (this happens when the quotient is by a reflection group \cite{dolgaWP}). For the distinction between the two notions we refer to the discussion in \cite{sasaki}.}
\item[\bf (b)] a finite list $\{p_1,\cdots, p_\ell\}$ of irreducible
quasi-homogeneous polynomials in $\C[u_1,\cdots,u_r]$
($\deg u_i=d_i$) as candidate irreducible components of the discriminant. The $\ell$-tuple of polynomials is
severely restricted by the conditions briefly discussed in {\rm\S.\,\ref{s:finD}}; we shall return to them in {\rm\S.\,\ref{kkki9223b}};
\item[\bf (c)] a conjugacy class $C_a\subset Sp(2r,\Z)$ of Kodaira type
per each component $(p_a)$ of the discriminant;
\end{itemize}
and ask:
\begin{itemize} 
\item[\bf(1)]there exists an indecomposable special geometry
with these data?
\item[\bf(2)] how many are they? What are the degrees of the allowed
polarizations?
\item[\bf(3)] What can we say about the geometry (when it exists)?
\end{itemize}
\end{inverse}

We consider only data $(\{\Delta_i\},\{p_a\},C_a)$ for which the answer to \textbf{(1)} is not ``no''
for trivial reasons. So the $\{\Delta_i\}$ is an allowed $r$-tuple of dimensions,
 the discriminant satisfies the various necessary conditions, and in particular
 the group $\pi_1(\mathscr{C}\setminus \sum_a(p_a))$ is not solvable, finite, or a non-trivial product,
 and when an axis (or a coordinate plane) is not part of the discriminant
 the corresponding restrictions on the dimensions are satisfied,
 etc. Moreover we do not loose nothing in assuming that the data correspond to a 
 geometry of  a non-isotrivial rigid geometry, that is,
 \begin{itemize}\it
 \item[\bf (d)] neither 1 nor 2 belong to the list $\{\Delta_i\}$
 \item[\bf (e)] at least one local monodromy is not semisimple.
 \end{itemize} 

The first step to give a positive answer to {\bf(1)}  is to show that there exists
 a monodromy representation of $\pi_1(\mathring{\mathscr{P}})$ of dimension 
$2r$ which has the given local monodromies $\mu_a\in C_a$. Again, we may
set a part the reducible indecomposable
case (which carries additional constraints) for a separate discussion, and focus on the typical case of
\emph{irreducible} monodromy representations.
The irreducible monodromies
with the appropriate local classes $\{C_a\}$
 are parametrized by a moduli space $\boldsymbol{\cm}(\{\Delta_i\},\{p_a\},C_a)$. 
The monodromy is rigid (resp. properly rigid) iff  $\boldsymbol{\cm}(\{\Delta_i\},\{p_a\},C_a)$
is non empty of dimension zero (resp. a point).
Then a necessary condition for a positive answer to {\bf(1)} is: 

\begin{nec} The moduli space of irreducible representations of
$\pi_1(\mathring{\mathscr{P}})$ with the fixed local monodromy classes 
should be non-empty $\boldsymbol{\cm}(\{\Delta_i\},\{p_a\},\{C_a\})\neq\varnothing$.
\end{nec} 

For ``most'' data $(\{\Delta_i\},\{p_a\},\{C_a\})$ all representations with the required local monodromies turn out to be reducible, i.e.\! the moduli space $\boldsymbol{\cm}(\{\Delta_i\},\{p_a\},\{C_a\})$ is empty. Thus the necessary condition is already very stringent. Indeed, for large $r$ the set of representations with
given $\{C_a\}$ is empty even before imposing irreducibility.

Our main point is that there is also an ``almost sufficient'' condition:
\begin{almost}
$\dim\boldsymbol{\cm}(\{\Delta_i\},\{p_a\},\{C_a\})=0$.
\end{almost}
This is ``almost'' sufficient because\footnote{\ Assuming
the validity of the Deligne-Simpson conjecture (which is
a theorem in many special cases).} when it holds 
we have still to make three checks before we can declare
that we have a \emph{bona fide} special geometry:
\begin{itemize}
\item[\it(i)] the vanishing of
the Brauer class $[C_{\sigma\rho\tau}]\in H^2(\mathbb{F},\mathbb{F}^\times)$;
\item[\it(ii)]  the regularity
of the solution (which is expected to take care of itself most of the
time);
\item[\it(iii)] the positivity of $\mathrm{Im}\,\tau_{ij}$ (also expected to be typically true).
\end{itemize}
``Experimentally'' the criterion looks pretty
good as a ``sufficient'' condition in the sense that the checks
 have the tendency to go well. By definition the special geometries satisfying the ``almost sufficient'' condition are of class V.

\section{The inverse problem in rank-2}

In this section we run the inverse problem program in rank-2 in more or less full detail. We suppose the geometry to be indecomposable (thus $\mathscr{R}_1=0$).
\medskip

When the geometry is primitive and irreducible the situation is pretty clear.
Let $s=\dim_\C\mathscr{R}_2$ be the dimension of the conformal manifold $\mathscr{M}$.
The complement $\mathring{\mathscr{P}}$ is
the Riemann sphere $\mathbb{P}^1$ minus exactly $s+3$ special points $p_a$.
Then
\be
\pi_1(\mathring{\mathscr{P}})= \big\langle\, l_1,\dots, l_{s+3} \colon \ l_1\,l_2\cdots l_{s+3}=1\big\rangle.
\ee
When $\varkappa=1$ we are interested in monodromy representations of $\pi_1(\mathring{\mathscr{S}})$ taking value in the arithmetic group
$Sp(4,\Z)$, but let start by considering the underlying complex representations valued in $GL(4,\C)$.
An irreducible representation of $\pi_1(\mathring{\mathscr{P}})$ consists of $s+3$ 
complex $4\times 4$ matrices $\mu_a\equiv\mu(\ell_a)$ ($a=1,\dots,s+3)$
such that 
$\mu_1\cdots\mu_{s+3}=1$ which have no non-trivial
 common invariant subspace in $\C^4$. In addition, the conjugacy classes of the
 $s+3$ local monodromies $\mu_a$ are prescribed. The space of all such complex representations is
 \be
 \big\{\mu_1,\dots,\mu_{s+3}\in GL(4,\C)\colon \mu_1\cdots\mu_{s+3}=1,\ \mu_a\in C_a\ a=1,\dots,s+3,\ \text{irreducible}\big\}
 \ee
Two representations are considered to be the same (as {complex representations}) iff they are related by an overall
 conjugation $\m_a^\prime= U \m_a \,U^{-1}$ with $U\in GL(4,\C)$.
 
When $\varkappa=2$ the $\mu$-monodromy representation is well defined in $PSp(4,\Z)$. Suppose we lift the local monodromies $\mu_a$ to elements of $Sp(4,\Z)$.
 Now the product $\mu_1\cdots\mu_{s+3}$, which is the identity in $PSp(4,\Z)$,
 may lift to either $+1$ or $-1$ in $Sp(4,\Z)$. In the second case we may add a dummy special point
$p_{s+4}$ with local monodromy $\mu_{s+4}=-1$ and reduce back to the standard case
\be\label{dummy}
\mu_1\cdots\mu_{s+3}\,\mu_{s+4}=1.
\ee
Alternatively we may flip the sign $\mu_{s+3}\to -\mu_{s+3}$ to get back to the usual story.

The case of main interest is the rigid one, $s=0$. In this case $\mathring{\mathscr{P}}$
is the sphere less three points, which we call $0$, $1$ and $\infty$,
while $\pi_1(\mathring{\mathscr{P}})\simeq F_2$. The three local monodromies
will be written as $\mu_0$, $\mu_1$, $\mu_\infty$, satisfying $\mu_0\mu_1\mu_\infty=\pm1$
(the sign is always $+$ when $\varkappa=1$).

\subsection{Reducible geometries in rank 2}\label{reddd}

Indecomposable but reducible special geometries are rare and \emph{quite subtle.}
\medskip

Suppose our $\mu$-monodromy representation $M$
 is reducible, $M=V\oplus W$, while it contains non-trivial unipotents: then at least one summand,
 say $V$, should be irreducible of dimension 2.\footnote{\ \textit{Proof.} Suppose $L\subset M$ is an invariant subspace of dimension 1. Then $$L^\perp\overset{\rm def}{=}\{v\in M\colon v^t\Omega\, l=0\ \ \forall\, l\in L\}\supset L$$ is also invariant of dimension 3. Since the monodromy is totally reducible (Deligne theorem \cite{delII}) $L^\perp =V\oplus L$ for $V$ an invariant subspace of dimension 2. We claim that $V$ is irreducible. Indeed, otherwise, it would be the direct sum of two invariant spaces of dimension 1, and the monodromy representation would be the direct sum of 4 one-dimensional ones, hence Abelian, which is a contradiction since the monodromy group contains non-trivial unipotents.   }
 Since $\Omega\colon M\xrightarrow{{}_{\boldsymbol{\sim}}} M^\vee$ we have three possibilities:
 \begin{itemize}
 \item[1.] $V^\vee\simeq W$ and $\Omega\colon V\xrightarrow{{}_{\boldsymbol{\sim}}} W^\vee$;
 \item[2.] $V^\vee\not\simeq W$ and $\Omega\colon V\xrightarrow{{}_{\boldsymbol{\sim}}} V^\vee$, $\Omega\colon W\xrightarrow{{}_{\boldsymbol{\sim}}} W^\vee$;
 \item[3.] $V^\vee\simeq V\simeq W$ and $\Omega\colon V\xrightarrow{{}_{\boldsymbol{\sim}}} V^\vee$.
 \end{itemize}
 In the first case the monodromy is contained in $GL(2,\mathbb{Q})\subset Sp(4,\mathbb{Q})$,
 in the second one is a subgroup of $SL(2,\mathbb{Q})\times SL(2,\mathbb{Q})$, and in the third situation
 in the diagonal $SL(2,\mathbb{Q})$.
 Case 3. is a special instance of both 1. and 2.
 
 \subsubsection{Case 1}
 In case 1.\! (so also in 3.) we have a symmetric Hodge 2-tensor,
 hence an element of the chiral ring of dimension 2, i.e.\! the special geometry
 \emph{cannot be rigid.} According to the \textbf{Folk-theorem}, the geometry should describe a Lagrangian SCFT whose list is known.
We shall see later in this section that all geometries describing rank-2 SCFTs with $s+3=\dim\mathscr{M}$
 are irreducible. It remains the \emph{mysterious}
 model $Sp(4)$ with $\tfrac{1}{2}\mathbf{16}$ ($\#69$ of \cite{Martone})
whose geometry requires further analysis.   
 
%
%
%
 
 \subsubsection{Case 2}
 In case 2. the monodromy is contained in $SL(2,\mathbb{Q})\times SL(2,\mathbb{Q})$
 so, going to a finite cover if necessary, we have a fibration with fiber the product of two elliptic
 curves $E_1\times E_2$, that is, the fibered product of two elliptic fibrations
over $\mathbb{P}^1$ less some points which we may replace by the smooth
surfaces given by their respective 
 Kodaira-N\'eron models\footnote{\ See \cite{MW}.} $\pi_a\colon\ce_a\to\mathbb{P}^1$ ($a=1,2$)
 \be\label{kkk88888x}
 \begin{gathered}
 \xymatrix{& \ce_1\times_{\mathbb{P}^1} \ce_2\ar[dl]\ar[dr]\ar[dd]^\varpi\\
 \ce_1\ar[dr]_{\pi_1}&& \ce_2\ar[dl]^{\pi_2}\\
 & \mathbb{P}_1}
 \end{gathered}
 \ee

 If $\ce_a$ are non-trivial fibrations, the monodromy group has Zariski-dense images in each factor of $SL(2,\mathbb{Q})\times SL(2,\mathbb{Q})$, so either we are back to case 3.\! or
 there is no operator of dimension 2, and the geometry is rigid, so we have
exactly 3 special points on $\mathbb{P}^1$. Since $1,2$ are not Coulomb dimensions, $d_1,d_2\neq1$,
 and hence the axes in $\mathscr{C}$ gives two special points $(1:0)$ and $(0:1)$.
We remain with only another special point, around which -- generically -- only
 the monodromy in one of the two $SL(2,\mathbb{Q})$ factors is non-trivial.
 Therefore the other elliptic surface is isotrivial. An isotrivial family is not rigid
 but in the present case it may be rigidified by the extension on a special point
 if the monodromy around that point is constrained to be $\not=\pm1$
 say because the special point is regular enhanced or with a non-trivial semisimple
 monodromy. Then either:
 \begin{itemize}
 \item[\textit{(i)}] we are not in the generic situation, i.e.\! we are in case 3 which is a particular case of 1;
 \item[{\it(ii)}] the geometry is quasi-isotrivial.
 \end{itemize}

\subsubsection{Scenario {\it(ii)}: quasi-isotrivial geometries}\label{SS:quasi-isotrivial}
In this case in the diagram \eqref{kkk88888x} the second rational elliptic surface
$\ce_2$ is isotrivial while the first one $\ce_1$ is non-isotrivial. 
 
The non-rigid geometries are known not to be quasi-isotrivial except for the still \emph{mysterious} one which deserves a separate analysis. Here we assume the quasi-isotrivial geometry to be rigid. Both dimensions $\Delta_1,\Delta_2$ should be old and not 1 or 2  and $\varkappa\in\{1,2\}$.
 Thus the order of $1/\Delta_1$ and $1/\Delta_2$ should be in $\{3,4,6\}$ and not both divisible by 4 or 3. Then one dimension should be either 4 or $4/3$, but the second possibility is ruled out by
 the values of the $d_i$. We remain with $\{\Delta_1,\Delta_2\}=\{4,3\}$ or $\{4,6\}$
 depending on $\varkappa=1$ or $2$.

 We have 3 special points in $\mathbb{P}^1$.
 Two special points are $p_0\equiv(0:1)$ and $p_\infty\equiv(1:0)$ 
 corresponding to the axes and then we have a third `knotted' special point
 with monodromy of the form $\alpha_1\oplus \boldsymbol{1}_2$.
The formulae for local regularity along the axes, yield \emph{one} exponent of the local $\mu$-monodromy at each of the points $p_0$, $p_\infty$ 
regardless of the Kodaira type of the axes as discriminant components.
For the present two allowed dimension pairs the statement is simply that $1/\Delta_i$ is an exponent for
the local monodromy around the special point corresponding to an axis of dimension $\Delta_i=3,4,6$.

 Since a non-isotrivial family should have at least 3 non-trivial local holonomies,
 $\alpha_1$ is a local monodromy for $\ce_1$, while $\ce_2$ has only two special fibers
 over $(1:0)$ and $(0:1)$ and their local $\mu$-monodromy, $\beta$ and $\beta^{-1}$,
 are the inverse of  
 each others so $\ce_2$ is an isotrivial rational elliptic surface
 with a singular fibers' configuration of the forms $\{F,F^*\}$ where $F=II,III,$ or $IV$.
The local monodromies
of $\ce_1$ are $\alpha_0$, $\alpha_1$ and $\alpha_\infty$ with
$e(\alpha_0)+e(\alpha_1)+e(\alpha_\infty)=12$ where $e(\alpha)$ stands for the Euler
characteristics of a Kodaira fiber with local monodromy $[\alpha]\in SL(2,\Z)$.
The $\varrho$-monodromy
\be
(\alpha_0\oplus\beta)^{d_1}=\begin{cases}\epsilon_0(\boldsymbol{1}\oplus K_0)\\
\epsilon_0(K_0\oplus \boldsymbol{1}),\end{cases}
\quad
(\alpha_\infty\oplus\beta^{-1})^{d_2}=\begin{cases}\epsilon_\infty(\boldsymbol{1}\oplus K_\infty)\\
\epsilon_\infty(K_\infty\oplus \boldsymbol{1})
\end{cases}
\ee
where $K_0$, $K_\infty$ are the local Kodaira monodromies along the axes in $\mathscr{C}$
and $\epsilon_0,\epsilon_\infty$ signs from the lift ($\epsilon_0=\epsilon_\infty=+1$ when $\varkappa=1$). A simple very preliminary ``regularity'' argument
leads to the following

\begin{fact}\label{uuuu63} All rigid non-isotrivial quasi-isotrivial rank-2 geometries -- {\rm IF THEY EXIST }--
should be one of the following:\footnote{\ Not to confuse the reader, we recall that we are assuming that the Coulomb branch $\mathscr{C}$ is smooth.}
\begin{itemize}
\item[\rm(1)] Coulomb dimensions $\{3,4\}$, the axis of dimension 3 is a discriminant
of type $I^*_0$ and the axis of dimension 
4 is a discriminant of type $IV^*$. In addition we have a knotted discriminant of type $I_m$ ($m>0$);
\item[\rm(2)] Coulomb dimensions $\{3,4\}$, the axis of dimension 3 is a discriminant of type $III^*$, the axis of dimension 4 is a discriminant of type $I_n$ ($n>0$), and the knotted discriminant has type $I_m$
($m>0$);
\item[\rm(3)]
Coulomb dimensions $\{4,6\}$, the axis of dimension 4 is a discriminant of type
$II^*$, the axis of dimension 6 is a discriminant of Kodaira type $I_n$ with $n>0$, and we
have a knotted discriminant of type $I_m$ ($m>0$);
\item[\rm(4)] Coulomb dimensions $\{4,6\}$, the axis of dimension 4 is a discriminant of  type $II^*$,
 the axis of dimension $6$ is a discriminant of type $I_0^*$, and we have a knotted discriminant of type $I_m$ ($m>0$).
\end{itemize}
\end{fact}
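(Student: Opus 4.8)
The plan is to combine three ingredients already assembled in the excerpt: (i) the list of allowed dimension pairs for a rigid, reducible, non-isotrivial rank-2 geometry in Case 2, (ii) the dimension formulae of \S.\,\ref{jjuutreq} together with their generalization in \S.\,\ref{s:straDim}, which force the local $\mu$-monodromy at each axis-point of $\mathbb{P}^1$ to have $e^{2\pi i/\Delta_i}$ among its eigenvalues, and (iii) the Euler-characteristic constraint $e(\alpha_0)+e(\alpha_1)+e(\alpha_\infty)=12$ for the non-isotrivial rational elliptic surface $\ce_1$ in the fibered-product diagram \eqref{kkk88888x}, plus the fact that the isotrivial factor $\ce_2$ has exactly two singular fibers of type $\{F,F^*\}$ with $F\in\{II,III,IV\}$. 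First I would fix $\varkappa$. By the discussion in \S.\,\ref{SS:quasi-isotrivial} we already know that $1/\Delta_1$ and $1/\Delta_2$ have orders in $\{3,4,6\}$, not both divisible by $4$ or both by $3$, and hence (using the admissible degrees $d_i$, which exclude $4/3$) the pair must be $\{3,4\}$ for $\varkappa=1$ and $\{4,6\}$ for $\varkappa=2$. This is the starting point of the \textbf{Fact}.

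Next I would pin down the isotrivial factor $\ce_2$. Since $\ce_2$ is an isotrivial rational elliptic surface with exactly two singular fibers over $(0:1)$ and $(1:0)$ whose monodromies $\beta,\beta^{-1}$ are mutually inverse, the fiber configuration must be $\{F,F^*\}$ with $F=II,III,IV$; correspondingly $\beta$ has order $6,4,3$. The order of $\beta$ is then further constrained by the requirement that, along each axis $H_i$, the full fiber $\mathscr{X}_u$ (which is the product of the $\ce_1$-fiber and the $\ce_2$-fiber up to isogeny) carries the automorphism $\xi_i=\exp(2\pi i\,\ce/\Delta_i)$ with analytic representation whose eigenvalues are $e^{2\pi i(1-\Delta_j)/\Delta_i}$, $j=1,2$. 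Writing these eigenvalues out for $\{\Delta_1,\Delta_2\}=\{3,4\}$ or $\{4,6\}$ gives one eigenvalue equal to $e^{2\pi i/\Delta_i}$ (the tangent direction to the axis) and a second eigenvalue $e^{2\pi i(1-\Delta_{j})/\Delta_i}$; the latter must be an eigenvalue of $\beta$ (resp.\ $\beta^{-1}$) since $\xi_i$ restricted to the $\ce_2$-factor is $\beta^{\pm 1}$ up to conjugacy. Matching forces $\beta$ to have the specific order, and hence determines $F$: e.g.\ for $\{3,4\}$ one finds $F$ such that the axis of dimension $3$ (where $\ce_2$ contributes $F^{(*)}$ and $\ce_1$ contributes the complementary Kodaira fiber) has the combined Kodaira type $I_0^*$ or $III^*$, and similarly for the axis of dimension $4$. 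Concretely I would tabulate, for each $\varkappa$ and each choice of which axis carries the ``$F$'' versus the ``$F^*$'' fiber of $\ce_2$, the resulting combined Kodaira type along each axis (this is a short finite computation using table \ref{table} and the base-change/composition rules of \S.\,\ref{s:sing} and table 5.2 of \cite{MW}), and then impose that the combined type be one of the semisimple types (since $\xi_i$ acts with finite order on the corresponding stratum) or an $I_n$-type when $\ce_1$ contributes an $I_n$ fiber there. This is exactly how the four cases (1)--(4) in the \textbf{Fact} are singled out.

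Finally I would close the argument by checking the Euler-number bookkeeping on $\ce_1$: its three local fibers $\alpha_0,\alpha_1,\alpha_\infty$ sit over $(0:1),(1:0)$ and the knotted point, with $e(\alpha_0)+e(\alpha_1)+e(\alpha_\infty)=12$; since the knotted point carries a $\varrho$-monodromy of the form $\alpha_1\oplus\mathbf{1}_2$ which must be the monodromy around a discriminant component of Kodaira type $I_m$ with $m>0$ (otherwise the geometry would be isotrivial, contradicting our hypothesis), we get $\alpha_1$ of type $I_m$ with $m\geq 1$ and hence $e(\alpha_1)=m$. The remaining Euler budget $12-m$ must be split between $\alpha_0$ and $\alpha_\infty$, which are the $\ce_1$-contributions along the two axes; combined with the $\ce_2$-contributions $F,F^*$ already fixed, this reproduces precisely the four combinations listed. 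I would also record the $\C^\times$-equivariance constraint $(\alpha_0\oplus\beta)^{d_1}=\epsilon_0(\mathbf{1}\oplus K_0)$ etc., which forces $K_0,K_\infty$ (the Kodaira types of the axes \emph{as discriminant components}) to be compatible with the $d_i$-th powers of the local monodromies — this is the ``very preliminary regularity argument'' alluded to in the statement, and it is what distinguishes, within $\{4,6\}$, the sub-case where the dimension-$6$ axis is $I_n$ (case (3)) from the one where it is $I_0^*$ (case (4)), and likewise within $\{3,4\}$ separates case (1) from case (2).

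The main obstacle I anticipate is step two: correctly carrying out the Kodaira-type arithmetic for the fibered product $\ce_1\times_{\mathbb{P}^1}\ce_2$ along the two axes, i.e.\ determining the combined singular fiber type (and in particular its monodromy conjugacy class in $Sp(4,\Z)$, not merely its characteristic polynomial) from the data of the two elliptic-surface fibers, including the subtle dependence on the order of the cyclic base change implicit in passing from $\Delta_i$ to $1$. One must be careful that the ``combined'' local monodromy $\alpha\oplus\beta$ raised to the appropriate power lands in the correct $Sp(4,\Z)$-conjugacy class $C_a$ of Kodaira type, which is finer than the $Sp(4,\mathbb{Q})$ class; here the detailed analysis of \cite{oguiso2,oguiso3} reviewed in \S.\,\ref{s:sing} and the explicit base-change tables of \cite{MW} are essential, and getting all the signs $\epsilon_0,\epsilon_\infty$ right in the $\varkappa=2$ case requires the lifting discussion of \S.\,\ref{kiuuqwa}. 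Everything else — the enumeration of dimension pairs and the Euler-number count — is routine once this step is done cleanly.
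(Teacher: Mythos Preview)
Your route is different from the paper's, and the difference is instructive. The paper (appendix~\ref{A:quasi}) does \emph{not} work through Kodaira-type combinatorics of the fibered product $\ce_1\times_{\mathbb{P}^1}\ce_2$, nor does it use the Euler-number budget $e(\alpha_0)+e(\alpha_1)+e(\alpha_\infty)=12$. Instead it applies the SW-period regularity condition \eqref{Cregg} directly: for each axis it writes $a^\|\sim h\cdot z^{\alpha_\|}$ and $a^\perp\sim h\cdot z^{\alpha_\perp}$ with the explicit $h$ and $z$ for the given dimension pair, solves the resulting linear equations for the exponents $\{\alpha_\|,\alpha_\perp\}$ as functions of the Kodaira-type integer $m$ of that axis, and then imposes that $e^{2\pi i\alpha_\perp}$ be a quadratic algebraic number (since it is an eigenvalue of one $SL(2,\Z)$ block). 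This yields a short list of admissible exponent pairs at each axis. The final step is the matching constraint you also mention---the $\ce_2$-block monodromies at the two axis-points are mutually inverse, so one exponent must be common to both lists (but not both, else isotrivial)---which cuts the list down to exactly the four cases.

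What this buys: the paper's approach completely bypasses the obstacle you flag. There is no need to determine the ``combined'' $Sp(4,\Z)$-conjugacy class from the two elliptic factors, no base-change gymnastics, and no sign-tracking through \S.\,\ref{kiuuqwa}; one simply reads off the exponents from a two-line algebraic computation per axis. Your approach could in principle succeed, but step two as written conflates the analytic representation $\sigma(\xi_i)$ of the fiber automorphism with the $\mu$-monodromy eigenvalues (they are related through the enhancement index and a root extraction, not equal), and the Euler-characteristic count constrains only the $\ce_1$-fiber types, not the combined axis types directly---so you would still need the fibered-product arithmetic you correctly identify as delicate. The paper's exponent computation is both the ``very preliminary regularity argument'' referred to in the statement and the cleanest way through.
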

All statements are shown in appendix \ref{A:quasi} except for the claim about the knotted
discriminant that follows from considerations about the hypergeometric Picard-Fuchs ODE (see below).
The 4 geometries are very similar. For future reference we list the eigenvalues of the $\mu$-monodromy at the points $(1:0)$ and $(0:1)$ (cf.\! appendix \ref{A:quasi}):
\be
\begin{tabular}{c|cccc}\hline\hline
 & (1) & (2) & (3) & (4)\\\hline
 $(1:0)$ & $\{e^{\pm2\pi i/3}, e^{\pm2\pi i/6}\}$ & $\{e^{\pm2\pi i/3}, e^{\pm2\pi i/4}\}$ &
  $\{e^{\pm2\pi i/4}, e^{\pm2\pi i/6}\}$ & $\{e^{\pm2\pi i/4}, e^{\pm2\pi i/6}\}$\\
 $(0:1)$ &$\{e^{\pm2\pi i/4}, e^{\pm2\pi i/3}\}$ & $\{e^{\pm2\pi i/4}, e^{\pm2\pi i/2}\}$ &
  $\{e^{\pm2\pi i/6}, e^{\pm2\pi i/2}\}$ & $\{e^{\pm2\pi i/6}, e^{\pm2\pi i/3}\}$\\\hline\hline
 \end{tabular}
\ee  

\begin{rem} In this section we shall return again on the intriguing possibility of the existence of quasi-isotrivial geometries from different viewpoints. We shall get further restrictions on them. There is evidence from physics
\cite{Martone} that the geometry (3) indeed exists and represents a known SCFT.
\end{rem}

From now on we shall (mainly) focus on \emph{irreducible} (principal) rank-2 special geometries.

 \subsection{Review of the Deligne-Simpson problem}
 The Deligne-Simpson problem \cite{DS1,DS2,DS3,DS4,DS5} asks for the determination of the
irreducible degree-$n$ representations of the fundamental group of the sphere $\mathbb{P}^1$
less $\ell$ points $\{p_1,\dots,p_\ell\}$ with prescribed conjugacy classes $C_a\subset GL(n,\C)$
of the local monodromies at the marked points. The moduli space of such representations is given by
\be
\boldsymbol{m}=\Big\{\mu_1,\dots,\mu_\ell\in GL(n,\C)\colon \mu_1\cdots\mu_\ell=1,\ 
\mu_a\in C_a\ \forall a,\ \begin{smallmatrix}\text{the $\mu_a$ have no non-trivial}\\
\text{common invariant subspace}\end{smallmatrix}\Big\}/\sim
\ee 
where $\sim$ stands for equivalence up to overall conjugacy in $GL(n,\C)$.
The representation is rigid iff $\boldsymbol{m}$ is non-empty and zero dimensional.

The Deligne-Simpson problem was solved by W. Crawley-Boevey by relating it to the representation theory of quivers,
see \cite{CW1,CW2,CW3,CW4,CW5,CW6}.\footnote{\ See \cite{haraoka} for a textbook account.} The theory is reviewed in the physical language of
Seiberg duality in \cite{Cecotti:2022uep}. To summarize the story, we start by describing
a convenient way to
encode the conjugacy classes $C_a\subset GL(n,\C)$ of a $\ell$-tuple $\{\mu_a\}_{a=1}^\ell$
of $n\times n$ matrices.
For each marked point $p_a$ we choose a polynomial $P_a(z)$ which is solved by the matrix $\mu_a$, i.e.\! such that $P_a(\mu_a)=0$. Note that the polynomial depends only on the conjugacy class of $\mu_a$.
While we can choose any polynomial with this property, for definiteness
we assume $P_a(z)$ to be \emph{minimal.}
Then we factorize $P(z)$ into linear factors $P_a(z)=\prod_{i=1}^{m_a}(z-\xi_{a,i})$
choosing an order between them.
Now we set
\be
E_{a,k}\overset{\rm def}{=} \prod_{i=1}^k (\mu_a-\xi_{a,i})\C^n,
\quad N_{a,k}\overset{\rm def}{=} \dim E_{a,k}\ \ \ a=1,\dots,\ell,\ \ k=0,\dots,m_a.
\ee
When the polynomial $P_a(z)$ is minimal the non-negative integers are strictly-decreasing
\be
N_{a,k}<N_{a,k-1},
\ee
 and we choose the order of the factors in $P_a(z)$
such that the integers $N_{a,k}$ take their smallest possible values (we call this the \emph{minimal form}). This choice is not necessary,
but convenient. Then the conjugacy class $C_a$ of $\mu_a$ in $GL(n,\C)$ is encoded in the two sequences
\be
0\equiv N_{a,m_a} < N_{a,m_a-1}<\cdots < N_{a,1} < N_{a,0}\equiv n, \quad\text{and}\quad \xi_{a,1},\dots,\xi_{a,m_a}
\ee
indeed from these data we recover the Jordan-block structure of $\mu_a$.
Henceforth we write $\{(N_{a,i},\xi_{a,i})\}$ for the datum $\{C_a\}$.
A conjugacy class $C_a$ consisting only the element $\lambda\,\boldsymbol{1}$
($\lambda\in\C^\times$) will be called ``dummy''. $m_a=1$ if and only if $C_a$ is dummy.
In our application the local monodromy belongs to a dummy class only if it corresponds to
the extra special point with local monodromy $-1$ arising from the lift of the
monodromy representation from $PSp(4,\Z)$ to $Sp(4,\Z)$ in rank-2 geometries with
$\varkappa=2$ (cf.\! eq.\eqref{dummy}).  

To the Deligne-Simpson datum $\{(N_{a,i},\xi_{a,i})\}$ one associates a star-shaped graph
with one branch for each class $C_a$ containing $m_a-1$
nodes (not counting the central one $\bigstar$) \cite{CW1,CW2,CW3,CW4,CW5,CW6}:
\be\label{uuutttttw}
\begin{gathered}
\xymatrix{& \bullet_{1,1} \ar@{-}@<-0.9ex>[ldd] & \cdots\ar@{-}[l] & \bullet_{1,m_1-1}\ar@{-}[l]\\
& \bullet_{2,1}\ar@{-}@<-0.6ex>[dl] & \cdots\ar@{-}[l] & \bullet_{2,m_2-3}\ar@{-}[l]&\bullet_{2,m_2-2}\ar@{-}[l] &\bullet_{2,m_2-1}\ar@{-}[l]\\
\bigstar &\bullet_{3,1}\ar@{-}[l] & \cdots\ar@{-}[l] & \bullet_{3,m_3-2}\ar@{-}[l] & \bullet_{3,m_3-1}\\
& \vdots & \vdots & \vdots & \vdots\\
& \bullet_{\ell,1}\ar@{-}@<0.9ex>[uul] & \cdots\ar@{-}[l] & \bullet_{\ell,m_\ell-1}\ar@{-}[l]
}
\end{gathered}
\ee
which we see as the Dynkin graph of a simply-laced Kac-Moody Lie algebra $\mathfrak{G}$.
We write $\alpha_\bigstar$ and $\alpha_{\bullet_{a,i}}$ for (respectively) the simple root associated to the node
$\bigstar$ and $\bullet_{a,i}$ while $\Delta^+(\mathfrak{G})$ stands for the set of positive roots of $\mathfrak{G}$.
Note that a ``dummy'' class yields a zero-lenght branch which has no effect on the definition of $\Delta^+(\mathfrak{G})$ but will enter in the definition
of the sub-sets $R_+$ and $\Sigma$ below.
For $\alpha$ a positive root
\be\label{hhhhhasss}
\alpha=k_\star\,\alpha_\bigstar+\sum_{a=1}^\ell\sum_{i=1}^{m_a-1} k_{a,i}\,\alpha_{\bullet_{a,i}}\in\Delta^+(\mathfrak{G}),\qquad k_\star, k_{a,i}\in\Z_{\geq0}
\ee
 we set
\be
d(\alpha)\overset{\rm def}{=} 1-\frac{1}{2}\langle\alpha,\alpha\rangle\equiv 1-k_\star^2+\sum_a\left(\sum_{i=1}^{m_a-1}k_{a,i}^2+\sum_{i=1}^{m_a-1}k_{a,i}k_{a,i-1}\right),
\ee
where $\langle-,-\rangle$ is the integral quadratic form on the root lattice of $\mathfrak{G}$
given by the Cartan matrix, and we use the convention $k_{a,0}\equiv k_\star$ for all $a$.
The integer $d(\alpha)$ is non-negative for all roots and zero iff the root $\alpha$ is real. We write $R_+\subset\Delta^+(\mathfrak{G})$ for the sub-set of positive roots
\eqref{hhhhhasss} such that the complex number
\be
\xi(\alpha)\overset{\rm def}{=} \prod_{a=1}^\ell\xi_{a,1}^{k_\star}\!\prod_{i=1}^{m_a-1} \big(\xi_{a,i}^{k_{a,i-1}}/\xi_{a,i}^{k_{a,i}}\big)
\ee
is equal 1, that is,
\be
R_+\overset{\rm def}{=}\big\{\alpha\in\Delta^+(\mathfrak{G})\colon \xi(\alpha)=1\big\}.
\ee
Note that adding the ``dummy'' class $\lambda\,\boldsymbol{1}$ has the effect
$\xi(\alpha)\leadsto \lambda^{k_\star}\, \xi(\alpha)$.
For convenience we rewrite $\xi(\alpha)$ in the form
\be
\xi(\alpha)=\lambda_\star^{k_\star}\prod_a\left(\prod_{i=1}^{m_a-1}\lambda_{a,i}^{k_{a,i}}\right)
\ee
and call $\lambda_\star=\prod_a\xi_{a,1}$ and $\lambda_{a,i}=\xi_{a,i+1}/\xi_{a,i}$ the \emph{fugacity} of the quiver node
$\bigstar$ and respectively $\bullet_{a,i}$. We specify the function $\xi(\cdot)$
by drawing the quiver with the values of the fugacities written over the respective nodes.

\begin{defn}
 $\Sigma\subset R_+$ is the subset of positive roots $\alpha$ such that for all nontrivial
decomposition $\alpha=\beta_1+\cdots+\beta_s$ with $\beta_j\in R_+$ we have 
\be\label{iiiiuu77cc}
d(\alpha)>\sum_i d(\beta_i).
\ee
\end{defn}

\begin{thm}[Crawley-Boevey \textbf{Theorems 1.1, 1.9} of \cite{CW3}]
There is an irreducible representation of degree $n$ with local monodromies in the
given conjugacy classes $\{(N_{a,i},\xi_{a,i})\subset GL(n,\C)\}$, iff 
\be\label{hhhhhasss2}
\boldsymbol{N}=n\,\alpha_\bigstar+\sum_a\sum_i N_{a,i}\alpha_{\bullet_{a,i}}\in\Sigma.
\ee
In this case $\dim\boldsymbol{m}=2\, d(\boldsymbol{N})$
and hence an irreducible representation is rigid iff $\boldsymbol{N}\in \Sigma$ is a \emph{real} root of the Kac-Moody algebra $\mathfrak{G}$.
\end{thm}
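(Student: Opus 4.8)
The statement is a citation of the Crawley-Boevey solution of the Deligne--Simpson problem, so the natural plan is to reproduce the architecture of his proof rather than invent a new one. The first step is to translate the problem about $\ell$-tuples of matrices into one about representations of a deformed preprojective algebra. Given the conjugacy-class data $\{(N_{a,i},\xi_{a,i})\}$, one builds the star-shaped quiver $\mathfrak{G}$ of \eqref{uuutttttw} and fixes the dimension vector $\boldsymbol{N}$ of \eqref{hhhhhasss2}. The key lemma (this is the Crawley-Boevey dictionary, \textbf{Theorem 1.1} of \cite{CW3}, going back to \cite{CW1,CW2}) is that solutions $\{\mu_a\in C_a\}$ of $\mu_1\cdots\mu_\ell=1$, up to simultaneous conjugacy, correspond bijectively to representations of the deformed preprojective algebra $\Pi^\lambda(\mathfrak{G})$ of dimension $\boldsymbol{N}$, where the deformation parameter $\lambda$ is read off from the fugacities $\lambda_\star$, $\lambda_{a,i}$ (equivalently from the eigenvalues $\xi_{a,i}$ and the condition $\prod_a\det\mu_a=1$); moreover this bijection matches \emph{irreducible} matrix tuples with \emph{simple} $\Pi^\lambda$-modules. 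The one technical point to check here is that the order chosen for the linear factors of the minimal polynomials (the ``minimal form'' making the $N_{a,k}$ as small as possible) is the one for which the arm lengths of the quiver and the partial dimensions $N_{a,i}$ are correctly normalized; this is exactly why that normalization was imposed in the setup.

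The second step is the existence-and-dimension count for simple $\Pi^\lambda$-modules, which is the heart of \cite{CW3}. One shows: a simple $\Pi^\lambda$-module of dimension $\boldsymbol{\beta}$ exists if and only if $\boldsymbol{\beta}\in\Sigma_\lambda$, the set of positive roots $\boldsymbol{\beta}$ with $\lambda\cdot\boldsymbol{\beta}=0$ (our ``$\xi(\alpha)=1$'' condition, cutting $R_+$ out of $\Delta^+$) that are \emph{not} decomposable as $\boldsymbol{\beta}=\sum_j\boldsymbol{\beta}_j$ with each $\boldsymbol{\beta}_j$ a positive root satisfying $\lambda\cdot\boldsymbol{\beta}_j=0$ and $p(\boldsymbol{\beta})\geq\sum_j p(\boldsymbol{\beta}_j)$, where $p(\boldsymbol{\beta})=1-\tfrac12\langle\boldsymbol{\beta},\boldsymbol{\beta}\rangle=d(\boldsymbol{\beta})$. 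This is precisely the defining property of $\Sigma$ in the excerpt. The proof of this is itself a multi-stage argument: (i) a reduction using the Weyl group action on dimension vectors and reflection functors, which reduces to the case where $\boldsymbol{\beta}$ is a ``minimal'' positive imaginary root or a real root at a loop-free vertex; (ii) a deformation/moduli argument showing that the moduli space of semisimple representations of dimension $\boldsymbol{\beta}$ has the expected dimension $2d(\boldsymbol{\beta})$ and is nonempty exactly in the stated range, using that the relevant quiver variety is an irreducible variety of that dimension; (iii) the openness of the locus of simple representations and a counting/stratification argument ruling out the decomposable strata, which is where the inequality \eqref{iiiiuu77cc} enters as the condition that the generic representation of dimension $\boldsymbol{\beta}$ does \emph{not} degenerate into a direct sum.

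The third step is purely bookkeeping: $\dim\boldsymbol{m}=2d(\boldsymbol{N})$ follows by comparing the dimension of the space of matrix tuples modulo conjugacy with the dimension of the quiver variety; then $\boldsymbol{m}$ is a single point (rigidity in the strong sense) precisely when $d(\boldsymbol{N})=0$, which by the formula $d(\alpha)=1-\tfrac12\langle\alpha,\alpha\rangle$ happens iff $\langle\boldsymbol{N},\boldsymbol{N}\rangle=2$, i.e. iff $\boldsymbol{N}$ is a real root of $\mathfrak{G}$. I expect the main obstacle to be step two, specifically part (ii)--(iii): establishing that the quiver variety has the expected dimension and that the simple (irreducible) locus is nonempty whenever $\boldsymbol{N}\in\Sigma$ requires the full force of Crawley-Boevey's inductive argument over the Weyl group together with the geometry of moment maps for the preprojective algebra, and it is not something one can shortcut. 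In a paper of this kind the honest course is to state the theorem with the citation, sketch the quiver translation of step one explicitly (since that is the part the reader will actually use when running the inverse problem in rank $2$), and refer to \cite{CW3,CW4,CW5} — or the textbook account in \cite{haraoka} — for the proof of the existence-and-dimension statement.
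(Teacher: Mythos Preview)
The paper does not prove this theorem at all: it is stated as a citation of Crawley-Boevey's results (\textbf{Theorems 1.1, 1.9} of \cite{CW3}) and used as a black box, with a pointer to \cite{haraoka} for a textbook account. Your proposal already anticipates this and recommends the same course, so in that sense you are aligned with the paper.

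Your sketch of Crawley-Boevey's argument is broadly faithful in architecture, but one point should be corrected. The algebra relevant to the \emph{multiplicative} Deligne--Simpson problem (local monodromies with $\mu_1\cdots\mu_\ell=1$) is the \emph{multiplicative} preprojective algebra $\Lambda^q(\mathfrak{G})$ introduced in \cite{CW3}, not the additive deformed preprojective algebra $\Pi^\lambda$. Correspondingly, the constraint singling out $R_+$ is the multiplicative one $\xi(\alpha)=1$ (the fugacity condition the paper uses), not the additive pairing $\lambda\cdot\boldsymbol{\beta}=0$; and the inductive machinery in \cite{CW3} is Katz's middle convolution (the multiplicative analogue of reflection functors), not the ordinary reflection functors of the additive theory. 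The overall shape you describe --- translate matrix tuples to quiver modules, then run an inductive existence-and-dimension argument over the root system, with $d(\alpha)=1-\tfrac12\langle\alpha,\alpha\rangle$ controlling the moduli dimension --- is correct, but if you intend to include even a one-paragraph sketch you should keep the multiplicative framework straight: the additive problem (matrices summing to zero in fixed adjoint orbits) is the closely parallel but distinct story treated in \cite{CW1}.
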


\begin{rem} When $\alpha\in\Sigma$ admits a non-trivial decomposition in $R_+$
we have both reducible and irreducible representations with the given local classes $C_a$.
When $\alpha$ is a real root in $R_+$ either all representations are irreducible or all reducible.
\end{rem}

\begin{rem} If the graph $\mathfrak{G}$ is the Dynkin graph of a finite-dimensional, affine, or
hyperbolic Kac-Moody algebra, the condition $d(\alpha)=0$ (resp.\! $d(\alpha)\geq1$)
implies that $\alpha$ is a real root (resp.\! imaginary root). 
\end{rem}

\subsection{Irreducible monodromies in $Sp(4,\R)$}

With the possible exceptions discussed in \S.\,\ref{reddd}, all smooth $\C^\times$-isoinvariant,
indecomposable, non-isotrivial, rank-2 geometries are irreducible
with 3, 4 or 5 special points for, respectively,
rigid geometries, geometries with one dimension 2, and geometries with two
dimensions 2. The associated Dynkin graphs are stars with 3, 4, or 5 branches
of the following kinds
\be\label{poooor}
\begin{tabular}{l@{\hskip25pt}l}\hline\hline
$\delta$ & Dynkin branch\\\hline
1& $\xymatrix{\cdots\ar@{-}[r]& \text{\ovalbox{1}}}$\\
2 & $\xymatrix{\cdots\ar@{-}[r]& \text{\ovalbox{2}}}$\\
3& $\xymatrix{\cdots\ar@{-}[r]& \text{\ovalbox{2}}\ar@{-}[r]&\text{\ovalbox{1}}}$\\
4 &\xymatrix{\cdots\ar@{-}[r]& \text{\ovalbox{3}}\ar@{-}[r]& \text{\ovalbox{2}}\ar@{-}[r]& \text{\ovalbox{1}}}
\\
\end{tabular}
\ee
where the circled number on the $(a,i)$-th node is $N_{a,i}$.
Note that a $\delta=1$ branch can correspond only to a non-enhanced divisor of type $I_n$ ($n>0$).

The ``virtual'' dimension of the moduli space $\boldsymbol{m}$ of irreducible representations
whose graph has $s$ branches of kinds $\{\delta_1,\dots,\delta_s\}$
is $2\,d(\delta_1,\dots,\delta_s)$ where
\be
d(\delta_1,\dots,\delta_s)= 2s+\sum_{\ell=1}^s\delta_s -15
\ee
and this quantity should be non-negative (zero in the rigid case). In the addition we have the other conditions in the \textbf{Theorem} which depend on the eigenvalues of the local monodromies.

\subsubsection{The conjugacy classes which may appear}
We list the conjugacy classes in $GL(4,\C)$ which may appear.
To the $a$-th special point in $\mathscr{P}\simeq\mathbb{P}^1$ there corresponds
 a branch of the quiver of one of the four kinds \eqref{poooor} with
 their fugacities.
 
In the case $\varkappa=2$ only the image of the monodromy
in $PSp(4,\Z)$ is non-ambiguous, so its conjugacy class
is determined up to overall sign.
However the integers $N_{a,i}$ are independent of the sign, as are the fugacities
$\lambda_{a,i}\equiv\xi_{a,i+1}/\xi_{a,i}$ of all nodes except the central one.
Thus the only ambiguity is the sign of $\lambda_\star$.
Therefore when $\varkappa=2$ we have to explore the two possible signs
of the central fugacity  $\lambda_\star$. 

%

\subparagraph{Discriminant component non-enhanced.} In this case $[\mu_i]=[\varrho_i]=[\sigma_i\oplus \boldsymbol{1}_2]$ with $\sigma_i\in SL(2,\Z)$ a Kodaira monodromy, so
\be\label{fffufug}
\begin{tabular}{l|lll}\hline\hline
type & $\delta$ & Dynkin branch & fugacities\\\hline
$I_n$ $n>0$ &1& $\xymatrix{\cdots\ar@{-}[r]& \text{\ovalbox{1}}}$ & 
\xymatrix{\cdots\ar@{-}[r]& 1}\\
$I_0^*$ &2 & $\xymatrix{\cdots\ar@{-}[r]& \text{\ovalbox{2}}}$ & 
\xymatrix{\cdots\ar@{-}[r]& -1}\\
$I_n^*$ $n>0$& 3& $\xymatrix{\cdots\ar@{-}[r]&\text{\ovalbox{2}} \ar@{-}[r]&\text{\ovalbox{1}}}$ &$\xymatrix{\cdots\ar@{-}[r]& -1\ar@{-}[r]&1}$\\
order $m>2$ & 3& $\xymatrix{\cdots\ar@{-}[r]& \text{\ovalbox{2}}\ar@{-}[r]&\text{\ovalbox{1}}}$
& $\xymatrix{\cdots\ar@{-}[r]& e^{2\pi i/m}\ar@{-}[r]& e^{-4\pi i/m}}$
\\
\end{tabular}
\ee

\subparagraph{Enhanced divisor non-discriminant.} Suppose that $u_1=0$ is not part of the discriminant.
The local coordinate around the corresponding special point in $\mathbb{P}^1$ is $z=u_1^{d_2}/u_2^{d_1}$. A path from $(\epsilon, u_2)$ to $(e^{2\pi i/d_2}\epsilon,u_2)$ covers a closed loop
encircling $u_1=0$. Hence the local $\mu$-monodromy $\mu$ satisfies 
\be\label{whyyyy}
\mu^{d_2}=\begin{cases}\boldsymbol{1} &\text{for }\varkappa=1\\
\pm\boldsymbol{1}&\text{for }\varkappa=2.
\end{cases}
\ee
 Since the central fiber in the covering family $\mathscr{X}$ is
a smooth Abelian fiber $A_{u_1=0}$, the monodromomy $\mu$ should be the rational representation of an element of the automorphism group of $A_{u_1=0}$. The element $\eta=\exp(2\pi i\,\ce/\Delta_2)\in\mathsf{Aut}(A_{u_1=0})$
satisfies
\be\label{uuujjjjbbb}
\sigma(\eta)^{d_2}=\sigma(\exp(2\pi i\,\ce/\lambda))= \exp(2\pi i/\varkappa)\cdot \boldsymbol{1}, 
\ee
so that, modulo the sign when $\varkappa=2$, a solution to \eqref{whyyyy} is $\mu=\chi(\eta)$
whose eigenvalues are 
\be\label{4phases}
\big\{\exp[\pm 2\pi i(1-\Delta_1)/\Delta_2],\exp[\pm 2\pi i(1-\Delta_2)/\Delta_2]\big\}.
\ee  
When $\Delta_2$ is a new dimension this is the only solution to \eqref{whyyyy} up to equivalence. In this case 
the 4 eigenvalues are distinct and we get a $\delta=4$ branch of $\mathfrak{G}$. Otherwise
one can find in principle other automorphisms of order $d_2$ which may play the role
of $\mu$. This seems unlikely, and we shall assume the equality $\mu=\chi(\eta)$ in general.
If the axis $u_2=0$ does not belong to the discriminant, the same formulae
apply with $1\leftrightarrow2$.

$\mu$ is semisimple and integral, so a part for the dummies $\pm\boldsymbol{1}$,
it may have
\begin{itemize}
\item 2 distinct eigenvalues both of multiplicity 2 of the form $(+1,-1)$
or $(e^{2\pi i/m}, e^{-2\pi i/m})$ ($m=3,4,6$) which lead to a $\delta=2$ branch,
\item
 3 distinct eigenvalues of multiplicites (2,1,1) of the form $(\pm1,\e^{2\pi i/m},e^{-2\pi i/m})$
($m=3,4,6$) which lead to a $\delta=3$ branch with fugacities
\be
\xymatrix{\cdots \ar@{-}[r]& e^{2\pi i/m}\ar@{-}[r]& e^{-4\pi i/m}}
\ee 
\item in all other cases we have a $\delta=4$ branch.
\end{itemize}

Eigenvalues $\pm1$ are expected only when we have a regular axis with $\Delta=2$
which is rather uncommon. Hence (possibly with a few exceptions) we expect that
only $\delta=2$ and $\delta=4$ branches appear at an enhanced non-discriminant
orbifold point. $\delta=2$ means that the four
phases \eqref{4phases} are equal in pairs. This  requires
\be\label{cffffeq*}
\frac{1-\Delta_1}{\Delta_2}=\pm\frac{1}{\Delta_2}\bmod 1
\ee
i.e.\! either \textit{(i)} $\Delta_1=m\Delta_2$ which is possible only for $\Delta_2=2$ and $m>1$,
or \textit{(ii)} $\Delta_1=2+m\Delta_2$ where $\Delta_2$ is an old dimension.
In all other cases we have a $\delta=4$ branch.
\begin{rem} The argument also shows that $\phi(d_i)\leq 4$ for a rank-2 theory.
\end{rem}

\subparagraph{Enhanced divisor in the discriminant.} 
We suppose that $u_1=0$ is both enhanced and part of the discriminant.
Arguing as around eq.\eqref{uuujjjjbbb}, we conclude that the monodromy $\mu$ around the corresponding special point $z=0$ in
$\mathbb{P}^1$ is related to
 the monodromy $\varrho$ around
the divisor $(u_1)\subset\mathscr{C}$ by
\be
\mu^{d_2}=\begin{cases}\varrho &\varkappa=1\\
\pm\varrho &\varkappa=2.
\end{cases}
\ee
We need to determine $\mu$ up to conjugacy in $GL(4,\C)$.
As argued in \S.\,\ref{s:sing}, $\mu$ induces an action on the homology of the Albanese curve of the singular fiber which must be
 the rational representation of
an automorphism, which (as in the previous paragraph) has
eigenvalues $\exp(\pm 2\pi i/\Delta_2)$ (up to the usual sign ambiguity when $\varkappa=2$).
This is consistent since $\Delta_2$ is necessarily a rank-1 dimension. 
The other two eigenvalues are determined by $d_2$ and the Kodaira type of the singular fiber.
If $\varrho$ is non-semisimple, so should be $\mu$, and hence these two eigenvalues must be
equal, hence $\pm1$ since the determinant of $\mu$ is $1$. In particular

\begin{fact} Assume $\varkappa=1$.
{\bf(1)} if $d_2$ is even the divisor $u_1=0$ cannot be of type $I^*_n$
with $n>0$. {\bf(2)} the only allowed semisimple types are:
\begin{description}
\item[$d_2=2$:] $I_0^*$, $IV$, $IV^*$
\item[$d_2=3$:] $I_0^*$, $III$, $III^*$
\item[$d_2=4$:] $IV$, $IV^*$
\item[$d_2=6$:] $I_0^*$
\end{description}
When $\varkappa=2$ no a priori restriction on the non-semisimple fibers,
while for the semisimple fibers  we have the same table as above
with $IV$, $IV^*$ replaced by $IV$, $IV^*$, $II$ and $II^*$.
\end{fact}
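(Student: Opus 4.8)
Fix $\varkappa=1$ first; the $\varkappa=2$ case follows by the same analysis carried out for $\mu$ modulo the overall sign, which only affects the central fugacity and hence widens the list of allowed semisimple types exactly in the way claimed (types $II$, $II^*$ come in because $\exp(2\pi i/\Delta_2)$ with $1/\Delta_2$ of order $6$ now becomes available for $d_2=3,4$ when the ambient sign is free). So the plan is to prove the two assertions for $\varkappa=1$ and then indicate the sign modification.

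\textbf{Step 1: pin down the eigenvalues of $\mu$.} Let $u_1=0$ be enhanced of index $\nu_1=d_2$ (recall $\nu(u_1)=d_1d_2/d_1=d_2$ in rank 2) and also a discriminant component. As in \S.\,\ref{s:sing} and \S.\,\ref{s:localmod}(3), the automorphism $\eta\equiv\exp(2\pi i\,\ce/(\Delta(u_1)\,d_2))=\exp(2\pi i\,\ce/\Delta_2)$ acts on the Albanese curve of the generic singular fiber over $(u_1)$ by a polarized automorphism, so $\mu$ is (up to conjugacy in $GL(4,\C)$) block-diagonal: a $2\times2$ block coming from $\chi(\eta)$ on the Albanese curve, with eigenvalues $\{\exp(\pm 2\pi i/\Delta_2)\}$ — forced since $\eta$ acts on the cotangent line to $\mathscr{C}$ along $u_1=0$ by $\exp(-2\pi i/\Delta_2)$ and hence on $H_1$ of the elliptic Albanese fiber by eigenvalue $\exp(\pm 2\pi i/\Delta_2)$ — plus a $2\times2$ block which is the $d_2$-th root of the Kodaira monodromy $K$ of the fiber type obtained by the base change $z\mapsto z^{d_2}$. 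Here I use that $\Delta_2$ is necessarily a rank-$1$ dimension (the axis $u_2=0$ carries a rank-$1$ sub-geometry), so $\exp(2\pi i/\Delta_2)$ is one of the eight semisimple Kodaira eigenvalue-pairs of Table~\ref{table}.

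\textbf{Step 2: the non-semisimple case — assertion (1).} If $K$ is of type $I_n^*$ with $n>0$, then $\varrho_1=K\oplus\boldsymbol{1}_2$ is non-semisimple, and (being a power of $\mu$) $\mu$ must also be non-semisimple; the only non-semisimple $2\times2$ block available for $\mu$ is unipotent up to sign, i.e. its two eigenvalues coincide, hence equal $\pm1$ (determinant $1$). Meanwhile the $d_2$-th power of the Kodaira monodromy of type $I_k$ ($k$ the number of fiber components) must reproduce $\varrho_1$; but the monodromy of type $I_n^*$ squares to type $I_{2n}$ (Proposition following eq.\eqref{kkii7612}), so if $d_2$ is even, $\mu^{d_2}$ would be $\bigl(\mu^{d_2/2}\bigr)^2$, a square of an integral symplectic matrix, hence its $I_n$-block would have to be $I_m^2=I_{2m}$ — consistent with an $I_{2n}$ block only, never with an $I^*$ block whose $(2\pi i)^{-1}\log$ has the extra $-\boldsymbol 1$ twist. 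The clean way to phrase this: the $-1$ in the semisimple part of the $I_n^*$ monodromy is a genuine order-$2$ obstruction, and an even power cannot produce it. This gives (1). The main obstacle here is being careful about what ``power of the Kodaira monodromy'' means as an element of $Sp(4,\Z)$ versus $SL(2,\Z)$; I would handle it by working with the explicit matrices of Tables~\ref{table},~\ref{table2} and the base-change table of \S.\,\ref{s:stable}A.

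\textbf{Step 3: the semisimple case — assertion (2).} Now $K$ is semisimple, of order $m_K\in\{2,3,4,6\}$, and we need $\mu^{d_2}$ semisimple with its ``$K$-block'' equal to $K$; equivalently the $2\times2$ block of $\mu$ is a $d_2$-th root of $K$ in $SL(2,\Z)$ (up to conjugacy), which is possible exactly when $m_K\mid(\text{order of some integral root})$, i.e. when $SL(2,\Z)$ contains an element of order $m_K d_2/\gcd(\cdot)$ that powers to $K$; since $SL(2,\Z)$ torsion orders are $\{1,2,3,4,6\}$, this forces $m_K d_2$ (suitably interpreted) to lie in that set, cutting down to finitely many $(d_2,K)$ pairs. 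Running through $d_2\in\{1,2,3,4,6\}$ (the only values with $\phi(d_2)\le 4$, as remarked) and matching against the torsion of $SL(2,\Z)$ and the constraint that $\exp(\pm2\pi i/\Delta_2)$ be simultaneously realizable yields precisely the four-line table: $d_2=2$ gives $I_0^*,IV,IV^*$ (orders $2,3,3$, all powering compatibly); $d_2=3$ gives $I_0^*,III,III^*$ (orders $2,4,4$); $d_2=4$ gives $IV,IV^*$ (order $3$); $d_2=6$ gives only $I_0^*$ (order $2$). I expect the bookkeeping of Step 3 — matching the two $2\times2$ blocks of $\mu$ simultaneously against the torsion classes of $SL(2,\Z)$ — to be the genuine content; everything else is forced. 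Finally, for $\varkappa=2$ one repeats Steps 1–3 with $\mu$ defined only up to sign: the non-semisimple restriction of (1) disappears because $-\varrho_1$ can now be an even power, and in the semisimple list the order-$6$ Kodaira pairs $II,II^*$ become reachable for $d_2=3,4$ exactly as stated.
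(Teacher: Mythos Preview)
Your approach is essentially the paper's own: the paragraph immediately preceding the \textbf{Fact} already records that $\mu$ has two eigenvalues $\exp(\pm 2\pi i/\Delta_2)$ from the Albanese block and two more from the Kodaira block, and that in the non-semisimple case the latter pair must both equal $\epsilon=\pm1$, from which (1) follows at once since $\epsilon^{d_2}=-1$ is impossible for even $d_2$. Your ``square'' phrasing in Step~2 is equivalent but slightly more roundabout. For (2) both you and the paper invoke the base-change table, which is the same data as the torsion of $SL(2,\Z)$.

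Two small slips. First, your formula $\eta=\exp(2\pi i\,\ce/(\Delta(u_1)\,d_2))=\exp(2\pi i\,\ce/\Delta_2)$ in Step~1 presupposes $\Delta_1 d_2=\Delta_2$, i.e.\ $d_1=1$; in general the element of the unbroken $\Z_{d_2}$ along $u_1=0$ that fixes $u_2$ is $\exp(2\pi i k\,\ce/\Delta_2)$ with $kd_1\equiv1\pmod{d_2}$. This does not affect the argument, since the Albanese eigenvalues never actually enter parts (1) or (2). Second, in your $\varkappa=2$ sentence you say $II,II^*$ become reachable ``for $d_2=3,4$''. The sign flip $K\leftrightarrow -K$ exchanges $IV,IV^*\leftrightarrow II^*,II$, so $II,II^*$ enter exactly where $IV,IV^*$ already sit in the $\varkappa=1$ table, namely $d_2=2$ and $d_2=4$ --- not $d_2=3$. (Check: for $d_2=3$, $K=II$, you would need $K_z^3$ conjugate to $II$ or to $-II\sim IV^*$, forcing an $SL(2,\Z)$ element of order $18$ or $9$, which does not exist.)
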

No counterexample to these statements appears in the tables of \cite{Martone}. The statement
eliminates an ambiguity in the analysis of model $\#37$ in that reference. 

When the fiber type is additive, the other two eigenvalues besides $\exp(\pm 2\pi i/\Delta_2)$ are related to the Kodaira type of the $\varrho$-monodromy by the base change formula (modulo the sign ambiguity when $\varkappa=2$) which may be read e.g.\! in table 5.2 of \cite{MW}. Since $d_2>1$ for an enhanced
discriminant, when the fiber is semisimple the two eigenvalues are distinct.

In conclusion: an enhanced discriminant $u_1=0$ produces always a $\delta=4$ branch of the Dynkin graph except when $\Delta_2=2$ which produces a $\delta=3$ graph. 

\subsection{Rank-2 special geometries with rigid $\mu$-monodromy} 

The classification/construction of non-isotrivial irreducible special geometries with rigid monodromy is
relatively straightforward in rank-2. This would produce the full list of non-isotrivial irreducible geometries
if the $\mu$-rigidity conjecture is true, and a large portion of them otherwise. The results below
will strongly support the conjecture.  

\subsubsection{Goursat rigid monodromies}

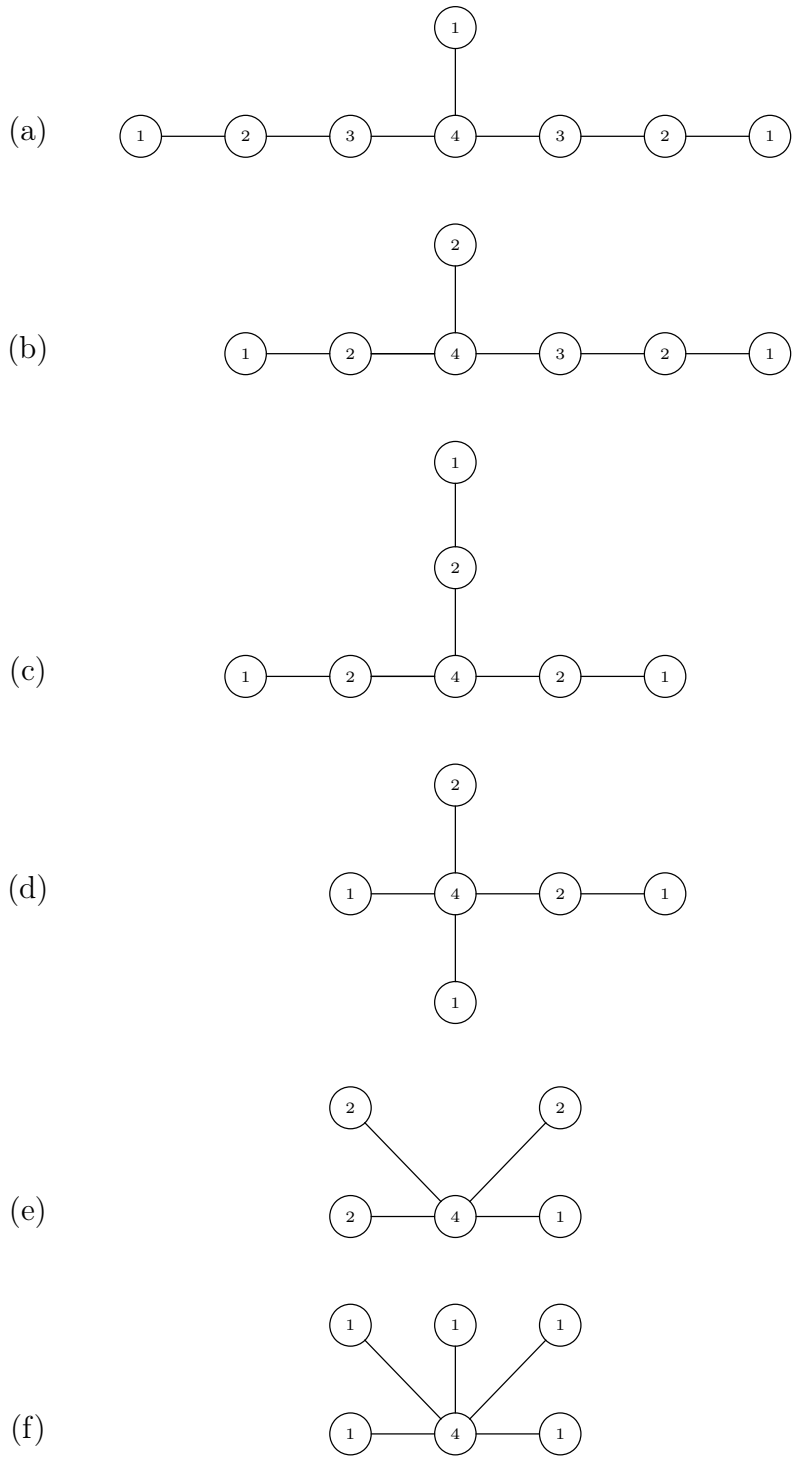
\begin{figure}
\begin{tiny}
$$
\xymatrix{&&&&*++[o][F-]{1}\ar@{-}[d]\\
\text{\normalsize(a)}&*++[o][F-]{1}\ar@{-}[r] & *++[o][F-]{2}\ar@{-}[r] & *++[o][F-]{3}\ar@{-}[r]& *++[o][F-]{4}\ar@{-}[r]
&*++[o][F-]{3}\ar@{-}[r]&*++[o][F-]{2}\ar@{-}[r]& *++[o][F-]{1}
\\
&&&&*++[o][F-]{2}\ar@{-}[d]\\
\text{\normalsize(b)}&&*++[o][F-]{1}\ar@{-}[r] & *++[o][F-]{2}\ar@{-}[r] \ar@{-}[r]& *++[o][F-]{4}\ar@{-}[r]
&*++[o][F-]{3}\ar@{-}[r]&*++[o][F-]{2}\ar@{-}[r]& *++[o][F-]{1}
\\
&&&&*++[o][F-]{1}\ar@{-}[d]\\
&&&&*++[o][F-]{2}\ar@{-}[d]\\
\text{\normalsize(c)}&&*++[o][F-]{1}\ar@{-}[r] & *++[o][F-]{2}\ar@{-}[r] \ar@{-}[r]& *++[o][F-]{4}\ar@{-}[r]
&*++[o][F-]{2}\ar@{-}[r]&*++[o][F-]{1}
\\
&&&&*++[o][F-]{2}\ar@{-}[d]\\
\text{\normalsize(d)}&&&*++[o][F-]{1}\ar@{-}[r] &  *++[o][F-]{4}\ar@{-}[r]
&*++[o][F-]{2}\ar@{-}[r]&*++[o][F-]{1}\\ 
&&&&*++[o][F-]{1}\ar@{-}[u]
\\
&&&*++[o][F-]{2}\ar@{-}[dr] && *++[o][F-]{2}\ar@{-}[dl] \\
\text{\normalsize(e)}&&&*++[o][F-]{2}\ar@{-}[r] &  *++[o][F-]{4}\ar@{-}[r]
&*++[o][F-]{1}
\\
&&&*++[o][F-]{1}\ar@{-}[dr] & *++[o][F-]{1}\ar@{-}[d]&*++[o][F-]{1}\ar@{-}[dl]\\
\text{\normalsize(f)}&&&*++[o][F-]{1}\ar@{-}[r] &  *++[o][F-]{4}\ar@{-}[r]
&*++[o][F-]{1}}
$$
\end{tiny}
\caption{\label{goursat}The six order 4 rigid ODEs: graphs and roots in minimal form. The first graph is the
order-4 hypergeometric, the 4-th one is the Appell equation (an ODE truncation of the 2-variable Appell
PDE), the fifth one is a particular instance of Simpson equation,
and the last one is the order-4 Pochhammer equation.}
\end{figure}

In order 4 there are six Fuchsian ODEs with rigid monodromy. They were already discussed by Goursat in 1886 \cite{Gou,Gou2}.
Writing all conjugacy classes in their minimal form,
the corresponding pairs \textsf{(Dynkin graph, real root)} are as in table \ref{goursat}. Some of these graphs
describe classically well-known Fuchsian OPEs:
\begin{itemize}
\item[(a)] is the order-4 hypergeometric ODE of Thomae;
\item[(d)] is a reduction to one-variable of the 2-variable
Appell hypergeometric PDE, aka as the Okubo equation $II^*$
\item[(f)] is the order-4 (Jordan-)Pochhammer ODE.
\end{itemize} 

\begin{rem} The graph corresponding to 3 unipotent monodromies and one with minimal polynomial of degree 4
\begin{tiny}\begin{equation*}
\begin{gathered}
\xymatrix{&*++[o][F-]{1}\ar@{-}[d]\\
*++[o][F-]{1}\ar@{-}[r]&*++[o][F-]{4}\ar@{-}[r]&*++[o][F-]{3}\ar@{-}[r]&*++[o][F-]{2}\ar@{-}[r]&*++[o][F-]{1}\\
&*++[o][F-]{1}\ar@{-}[u]}
\end{gathered}
\end{equation*}\end{tiny}
\hskip-4pt has $d(\alpha)=0$, so superficially seems to be a rigid solution to the Deligne-Simpson problem,
but it is not so since the Kac-Moody algebra is not hyperbolic and $\alpha$ is not a root.
We conclude that these conjugacy classes cannot appear.
\end{rem}

\subsubsection{Graph selection rules}\label{s:selection}
 We note that
$\xi(\alpha)$ remains invariant if we cancel one or more $\delta=1$ branches.
This fact eliminates some possibilities thus further restricting the allowed geometries.

\begin{fact}\label{selrule} In a non-isotrivial, irreducible, rank-$2$ geometry:
\begin{itemize}
\item[\bf (1)] graph {\rm(c)} is excluded;
\item[\bf (2)] graph {\rm(e)} is excluded;
\item[\bf (3)] graph {\rm(f)} is excluded  when $\varkappa=1$.
\end{itemize}
\end{fact}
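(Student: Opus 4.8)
The plan is to exploit the ``$\delta=1$ branch invariance'' noted just above the statement: adjoining or deleting a $\delta=1$ branch changes neither $d(\alpha)$ nor $\xi(\alpha)$, since a $\delta=1$ branch contributes a single node with $N_{a,1}=1$, fugacity $1$, and hence $k_{a,1}$ enters $d(\alpha)$ only through $k_{a,1}^2+k_{a,1}k_\star$ and enters $\xi(\alpha)$ through $\lambda_{a,1}^{k_{a,1}}=1$. On the positive root $\boldsymbol N$ of \eqref{hhhhhasss2} attached to a genuine representation one has $k_{a,1}=N_{a,1}=1$ on such a branch, so deleting the branch sends $\boldsymbol N$ to a vector $\boldsymbol N'$ with $\langle\boldsymbol N',\boldsymbol N'\rangle=\langle\boldsymbol N,\boldsymbol N\rangle$ (the cross term $k_{a,1}k_\star=k_\star$ is cancelled against the change $k_{a,1}^2=1$ only after also noting $k_\star$ is unchanged — I must be careful here and instead argue that the \emph{smaller} star graph obtained by deletion still has $\boldsymbol N'$ as a real root in its own $\Sigma$, because $R_+$ and the decomposition condition \eqref{iiiiuu77cc} are inherited). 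The upshot I want is: a rigid monodromy datum with some $\delta=1$ branches reduces, upon deleting those branches, to a rigid monodromy datum on a star with fewer branches and the \emph{same} non-$\delta=1$ branches and the same central fugacity $\lambda_\star$ (up to the $\varkappa=2$ sign). Then each of (1), (2), (3) follows by deleting the $\delta=1$ branches and landing on a forbidden configuration.

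Concretely: first I would set up the reduction lemma precisely, checking that if $\mathfrak G$ is the star graph of a rigid datum and $\mathfrak G'$ is obtained by erasing all $\delta=1$ branches, then $\boldsymbol N'\in\Sigma(\mathfrak G')$ iff $\boldsymbol N\in\Sigma(\mathfrak G)$, and the fugacity function $\xi(\cdot)$ restricted to roots supported on $\mathfrak G'$ is unchanged. The key computation is that erasing a leaf node $\bullet_{a,1}$ with $N_{a,1}=1$ (necessarily a leaf since $\delta=1$ means the branch has a single node) and removing the corresponding edge to $\bigstar$ changes $\langle\boldsymbol N,\boldsymbol N\rangle$ by $-(N_{a,1}^2) + N_{a,1}N_{\bigstar\text{-neighbour difference}}$; writing $\boldsymbol N = n\alpha_\bigstar+\cdots$ and using $N_{a,1}=1$, $N_{a,0}=n$, the quadratic form changes by $-1-(1)(n)+(\text{nothing else})$, which does \emph{not} preserve $\langle\cdot,\cdot\rangle$ in general — so the correct statement is not that $\boldsymbol N$ stays a root but that the reduced datum has the reduced $\boldsymbol N'$ as a real root precisely when the original was rigid, which is exactly Crawley-Boevey's theorem applied to both stars together with the observation that adding back a $\delta=1$ branch with $N_{a,1}=1$ to a rigid datum yields a rigid datum (this is the standard ``Katz middle convolution / adding a Jordan block'' move, cf.\ the quiver mutation picture in \cite{Cecotti:2022uep,haraoka}). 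I would phrase the lemma in that direction: \emph{rigidity is preserved under adding or deleting $\delta=1$ branches}, and $\xi(\alpha)$ on the surviving roots is untouched.

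Given the lemma, parts (1) and (2) are immediate: graph (c) in Figure \ref{goursat} has branches of kinds $(1,2,4\text{-stem})$ i.e.\ one $\delta=1$ branch, one $\delta=2$ branch, one $\delta=3$ branch; wait — reading Figure \ref{goursat} I should instead identify (c) and (e) as the graphs whose non-$\delta=1$ part, after deleting $\delta=1$ branches, becomes a star that is \emph{not} hyperbolic / whose relevant vector is not a root, exactly as in the \textbf{Remark} immediately following the Goursat list (the $D_4$-type affine configuration). So for (c): deleting its $\delta=1$ branch(es) leaves a smaller star which by the \textbf{Remark}'s mechanism carries no corresponding root, contradicting that the original was a rigid \emph{root}; hence (c) cannot occur. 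Part (2) for graph (e) is the same argument: (e) has a $\delta=1$ branch whose removal produces the forbidden affine-$D_4$ situation. Part (3) is the one requiring the arithmetic input: for graph (f) (the order-4 Pochhammer, four $\delta=1$ branches off the central node), delete \emph{all four} $\delta=1$ branches; the reduction lemma says the central fugacity $\lambda_\star$ is unchanged, but now $\boldsymbol N'$ is concentrated at $\bigstar$ alone, and the condition $\xi(\boldsymbol N')=1$ forces $\lambda_\star^{n}=1$ with $n$ the degree ($=4$ here restricted appropriately), i.e.\ $\lambda_\star$ is a root of unity of order dividing the relevant integer. When $\varkappa=1$ the local monodromy around each special point is a \emph{genuine} (not projective) element, so the product relation $\mu_0\cdots\mu_{s+3}=1$ holds on the nose and $\lambda_\star=\prod_a\xi_{a,1}$ is constrained; I would then show that for $\varkappa=1$ the only consistent value of $\lambda_\star$ forces the four local monodromies to be such that the representation is reducible (or isotrivial), contradicting our standing hypothesis. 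When $\varkappa=2$ the extra sign freedom $\lambda_\star\mapsto -\lambda_\star$ re-opens the possibility, which is why (3) is stated only for $\varkappa=1$.

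The main obstacle I expect is making the reduction lemma airtight — specifically, tracking exactly how $d(\alpha)$, the set $R_+$, and the minimality/decomposition condition \eqref{iiiiuu77cc} behave under deleting a $\delta=1$ branch, and being careful that the deleted-branch node really does carry $N_{a,1}=1$ and $k_{a,1}=1$ on the root $\boldsymbol N$ of interest (this uses that the branch is $\delta=1$, and also that $\boldsymbol N\in\Sigma$ rather than merely $R_+$). A secondary subtlety is the $\varkappa=2$ bookkeeping of the overall sign of $\lambda_\star$, which is precisely the hinge on which part (3) turns; I would handle it by carrying the sign explicitly through the fugacity of the $\bigstar$ node as flagged in the paragraph on ``the conjugacy classes which may appear''. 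Once the lemma is in place, (1), (2) are a one-line appeal to the \textbf{Remark} after the Goursat table, and (3) is a short arithmetic argument on $\lambda_\star$ together with the irreducibility/non-isotriviality hypotheses.
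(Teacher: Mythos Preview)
Your proposal has a genuine structural gap: the ``reduction lemma'' you want to use --- that deleting $\delta=1$ branches preserves rigidity --- is not correct, and more importantly \emph{does not apply to graph (c) at all}. Graph (c) in Figure~\ref{goursat} has three branches each of kind $\delta=3$ (nodes $\text{\ovalbox{2}}$--$\text{\ovalbox{1}}$) and \emph{no} $\delta=1$ branch; there is nothing to delete. The sentence in the paper preceding the \textbf{Fact} says only that $\xi(\alpha)$ is unchanged on roots supported away from a $\delta=1$ branch; it does not say the root $\boldsymbol N$ remains a root (you yourself compute that $\langle\cdot,\cdot\rangle$ is not preserved), and it certainly does not transport membership in $\Sigma$.

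The paper's actual argument is direct case analysis in each graph, not a reduction. For \textbf{(1)} one writes out the possible fugacities on the (c) graph (using that at least one branch is non-semisimple) and exhibits an explicit decomposition of $\boldsymbol N$ into elements of $R_+$: for the $+$ sign one splits off $\alpha_\star$ and the $\widehat E_6$ imaginary root; for the $-$ sign one splits into an $E_6$ root plus an $A_3$ root. For \textbf{(2)} the paper proves a separate \textbf{Lemma} (appendix~\ref{defpprr}) by again listing all allowed fugacity patterns on graph (e) and finding, for each, an explicit $R_+$-decomposition --- except for one exceptional $\varkappa=2$ pattern, which is then killed by showing no admissible dimension pair $\{\Delta_1,\Delta_2\}$ produces those eigenvalue spectra. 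Your appeal to the \textbf{Remark} after the Goursat list is misplaced: that Remark concerns a \emph{different} graph (four $\delta=1$ branches plus one $\delta=4$ branch) and shows its vector is not a root; it says nothing about the reduced (e) or (c) configurations. For \textbf{(3)} your instinct is closer: when $\varkappa=1$ every local monodromy in (f) is unipotent, so every node fugacity equals $1$ and $\xi$ is the trivial homomorphism; then $R_+$ is the full positive root set, and one invokes Crawley-Boevey's characterization of $\Sigma$ (fundamental-domain imaginary roots and simple roots only, cf.\ \S5 of \cite{CW1}) to see $\boldsymbol N$ is not in $\Sigma$. The $\varkappa=2$ escape is exactly the sign flip $\lambda_\star=-1$, which forces $R_+$ to consist of roots with even $k_\star$, and one checks no such decomposition of $\boldsymbol N$ exists. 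Your framing of ``delete all branches and look at $\bigstar$ alone'' is not how this works; the argument lives in the root combinatorics of the full (f) graph.
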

\begin{proof}[Argument]
\textbf{Part (1).} At least one branch is non-semisimple,
so the fugacities are as follows
\be
\begin{gathered}
\xymatrix{&&(1)\ar@{-}[d]\\
&&(-1)\ar@{-}[d]\\
(\zeta^{-2})\ar@{-}[r]& (\zeta) \ar@{-}[r] &(\pm 1)\ar@{-}[r] & (\xi)\ar@{-}[r]& (\xi^{-2})}
\end{gathered}
\ee
where $\zeta$ and $\xi$ are either $-1$ (if the branch is non-semisimple)
or a root of $1$ of degree 2 (if it is semisimple).
As before $\pm$ corresponds to the two possible lifts in $Sp(4)$ when $\varkappa=2$.
If the sign is $+$ the simple root at the central node $\alpha_\star\in R_+$
and $\alpha-\alpha_\star$ is the minimal imaginary root of $\widehat{E}_6$ which also belongs to
$R_+$.
Since $d(\alpha)=0$ the representation is reducible. Suppose now that the sign is $-$: we have the decomposition
\be
\begin{smallmatrix}&& 0\\
&& 1\\
1&2&3&2&1\end{smallmatrix}\ +\ \begin{smallmatrix}&&1\\
&&1\\ 0&0&1&0&0\end{smallmatrix}
\ee
into roots of the $E_6$ and $A_3$ subgraphs which also belong to $R_+$ and again the representation is reducible.

\textbf{Part (2).} The reader may find in appendix \ref{defpprr} the proof of the following
\begin{lem}\label{ttthelemma}
A monodromy representation associated to the graph {\rm(e)},
whose local monodromies are in the allowed lists, is irreducible only when
$\varkappa=2$ and the three semisimple monodromies have eigenvalues
\be\label{kkki1234}
(1,1,-1,-1),\quad (i,i,-i,-i),\quad (\pm \zeta_3,\pm \zeta_3,\pm\zeta_3^{-1},\pm\zeta_3^{-1}),
\ee
where $\zeta_3$ is a primitive 3-rd root of 1.\footnote{\ This is the first example in \cite{koEx}. I thank V. Kostov for clarifications.} 
\end{lem}

If it exists, an irreducible geometry based on graph (e) should
have a conformal manifold of dimension 1 hence dimensions $\{2,\Delta\}$,
with $\Delta$ an \emph{old} dimension $\neq2$. To have $\varkappa=2$ 
we must have $\Delta\in\{6/5,4/3,4,6\}$. When $\Delta$ is an integer $=4,6$
there is only one enhanced divisor, and the spectra \eqref{kkki1234} are inconsistent.
When $\Delta=6/5$ we have $(d_1,d_2)=(5,3)$ and the eigenvalues \eqref{kkki1234} are inconsistent.
We remain with $\{\Delta_1,\Delta_2\}=\{2,4/3\}$ which has $(d_1,d_2)=(3,2)$.
But then the axis of dimension $\Delta_1$ has a monodromy $\varrho=\pm\mu_1^3=\pm\boldsymbol{1}$
and hence is not part of the discriminant, so 
\be
\exp(\pm 2\pi i/\Delta_1),\qquad \exp(\pm 2\pi i(1-\Delta_2)/\Delta_1),
\ee
should be all square roots of 1. This is not the case. 

\textbf{Part (3).} When $\varkappa=1$ there is no extra sign from the lift from $PSp(4,\Z)$ to
$Sp(4,\Z)$ while all local monodromies in graph (f) are unipotent, so that the group homomorphism 
\be
\xi(\alpha)\colon \bigoplus_{a\in\mathfrak{G}} \Z\,\alpha_a\to \C^\times
\ee is the trivial one,
and the only roots in $\Sigma_\xi$ are the ones in the fundamental domain (which are never real!)
and the simple roots (cf.\! \S.\,5 of \cite{CW1}). When $\varkappa=2$ we have the 
central node $\star$ may have fugacity $\zeta_\star=-1$, so that now $R_+$ is the set of positive roots of the form $2k\alpha_\star+\cdots$. It is easy to check that no decomposition of the root in (f) into elements of $R_+$ exists. Alternatively, since the ODE associated to graph (f) is the classical order-4 Pochhmmer equation, irreducibility for this fugacity follows from \textbf{Proposition 1.3} of \cite{pochh}.
\end{proof}

\begin{rem} The argument for part {\bf(2)} is greatly simplified if one
assumes the \textbf{Folk-theorem} (which we argued is true in rank-2).
A non-isotrivial geometry based on graph (e) should be the geometry of a
Lagrangian SCFT, which were classified \cite{Tach}, and no SCFT in that list
corresponds to graph (e). Indeed a Lagrangian SCFT has integral $\Delta_i$ which is incompatible with (e) on the nose.
\end{rem}

\begin{rem} According to the \textbf{Folk-theorem} graph (f) corresponds to dimensions $\{2,2\}$
which has automatically $\varkappa=2$ as predicted by rigidity and irreducibility.
\end{rem}

\begin{fact}\label{distinctr} In graph {\rm(a)} we assume the relation $\mu_1\mu_2\mu_3=\boldsymbol{1}$ ($\mu_1$ corresponds to the $\delta=1$ branch) by absorbing  the possible extra sign from the lift from $PSp(4,\Z)$ to $Sp(4,\Z)$ when $\varkappa=2$ in the definition of $\mu_3$. With this convention, the matrices $\mu_2$ and $\mu_3$ cannot have any common eigenvalue.
\end{fact}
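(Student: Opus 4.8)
The starting observation is that graph (a) is exactly the quiver of the order-$4$ hypergeometric tuple, so the plan is to run the Crawley--Boevey machinery and derive a contradiction from a putative coincidence of eigenvalues. First I would record the structure of the three local monodromies. By the list \eqref{fffufug} a $\delta=1$ branch can only come from a type $I_n$ ($n>0$) component, so $\mu_1$ is unipotent with a single non-trivial Jordan block; in particular $\mathrm{rank}(\mu_1-\boldsymbol{1})=1$ and, in minimal form, $\xi_{1,1}=\xi_{1,2}=1$. The two $\delta=4$ branches force $\mu_2$ and $\mu_3$ to be regular semisimple, with four distinct eigenvalues $\xi_{a,1},\dots,\xi_{a,4}$ each. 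After the sign convention of the statement we have $\mu_1\mu_2\mu_3=\boldsymbol{1}$ with $\mu_2,\mu_3$ genuine elements of $Sp(4,\C)$, so the eigenvalue multiset of each of them is invariant under $\lambda\mapsto\lambda^{-1}$. Consequently the two assertions ``$\mu_2$ and $\mu_3$ share an eigenvalue'' and ``$\xi_{2,i}\,\xi_{3,j}=1$ for some $i,j$'' are equivalent, and it suffices to rule out the latter.

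So I would assume $\xi_{2,1}\xi_{3,1}=1$ after reordering the minimal polynomials of $\mu_2,\mu_3$ (which only relabels the two $\delta=4$ branches and leaves the Deligne--Simpson datum unchanged). The key bookkeeping step is that the central fugacity then collapses: $\lambda_\star=\xi_{1,1}\xi_{2,1}\xi_{3,1}=1$ because $\xi_{1,1}=1$. Hence $\xi(\alpha_\bigstar)=\lambda_\star=1$, i.e.\! $\alpha_\bigstar\in R_+$; and since $\xi(\boldsymbol{N})=\det(\mu_1\mu_2\mu_3)=1$ (so $\boldsymbol{N}\in R_+$) we also get $\xi(\boldsymbol{N}-\alpha_\bigstar)=1$. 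The underlying graph of (a) is the affine Dynkin diagram $\widehat{E_7}$, and $\boldsymbol{N}=\boldsymbol{\delta}-\alpha_{\bullet_{1,1}}$, where $\boldsymbol{\delta}$ is the null root and $\bullet_{1,1}$ is the leaf adjacent to the trivalent node $\bigstar$; a one-line computation with the affine Cartan matrix ($\boldsymbol{\delta}$ null, $\langle\alpha_{\bullet_{1,1}},\alpha_{\bullet_{1,1}}\rangle=2$, $\langle\alpha_{\bullet_{1,1}},\alpha_\bigstar\rangle=-1$) gives $\langle\boldsymbol{N},\boldsymbol{N}\rangle=2$, $\langle\boldsymbol{N},\alpha_\bigstar\rangle=1$, whence $\langle\boldsymbol{N}-\alpha_\bigstar,\boldsymbol{N}-\alpha_\bigstar\rangle=2$, so $\boldsymbol{N}-\alpha_\bigstar$ is a positive real root. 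Therefore $\boldsymbol{N}=\alpha_\bigstar+(\boldsymbol{N}-\alpha_\bigstar)$ is a non-trivial decomposition into elements of $R_+$ with
\be
d(\boldsymbol{N})=0=d(\alpha_\bigstar)+d(\boldsymbol{N}-\alpha_\bigstar),
\ee
violating the defining inequality \eqref{iiiiuu77cc}; hence $\boldsymbol{N}\notin\Sigma$. By the Crawley--Boevey theorem quoted above there is then no irreducible degree-$4$ representation of $\pi_1(\mathring{\mathscr{P}})$ with these local classes, contradicting the fact that our geometry, being irreducible, realizes one. This establishes the claim.

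I expect the only delicate points to be the fugacity bookkeeping --- it is precisely the unipotency of $\mu_1$ that reduces $\lambda_\star$ to the product of the two ``bad'' eigenvalues --- and the verification that $\boldsymbol{N}-\alpha_\bigstar$ is an honest root of $\widehat{E_7}$, which is immediate from the description of affine real roots as $\beta+n\boldsymbol{\delta}$ with $\beta$ a finite $E_7$ root. The $\varkappa=2$ sign ambiguity is harmless: once a lift is fixed by the stated convention it affects neither the node fugacities $\lambda_{a,i}$ nor the relation $\xi_{2,i}\xi_{3,j}=1$. As a cross-check one can instead invoke the classical irreducibility criterion for hypergeometric tuples (Levelt; Beukers--Heckman; Katz \cite{katz}): a tuple $(\mu_1;\mu_2,\mu_3)$ with $\mu_1$ a pseudo-reflection and $\mu_1\mu_2\mu_3=\boldsymbol{1}$ is irreducible iff $\mu_2$ and $\mu_3^{-1}$ have disjoint spectra, which by the symplectic symmetry of the spectra is exactly the statement of the Fact.
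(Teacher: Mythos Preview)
Your argument is correct and follows the same route as the paper's: arrange the shared eigenvalue so that the central fugacity becomes $\lambda_\star=1$, then decompose $\boldsymbol{N}=\alpha_\bigstar+(\boldsymbol{N}-\alpha_\bigstar)$ inside $R_+$ to contradict $\boldsymbol{N}\in\Sigma$. Two small remarks: your identification of $\boldsymbol{N}-\alpha_\bigstar$ as a \emph{real} root of $\widehat{E}_7$ is more accurate than the paper's wording ``imaginary'', and your assertion that $\mu_2,\mu_3$ must be regular semisimple is a slight over-claim (a $\delta=4$ branch in minimal form also allows a single size-$2$ Jordan block), though this plays no role in the proof.
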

\begin{proof}[Argument] \textbf{(1)} Suppose that they have the eigenvalue $\eta$ in common.
Since the monodromies are symplectic, also $\eta^{-1}$ is an eigenvalue and by a suitable ordering of the factors in their respective minimal polynomial we get the graph fugacities
\be
\begin{gathered}
\xymatrix{&&&1\ar@{-}[d]\\
\xi_1^{-2}\ar@{-}[r]&\xi_1\,\eta^{-1}\ar@{-}[r] &\eta^2 \ar@{-}[r] & 1\ar@{-}[r] & \eta^{-2}\ar@{-}[r] &\xi_2\,\eta\ar@{-}[r] &\xi_2^{-2}}
\end{gathered}
\ee
and the simple root $\alpha_\star$ and the imaginary root $\alpha-\alpha_\star$ belong to $\Sigma_+$.
\end{proof}

\subsection{From rigid monodromy to special geometry}

All non-isotrivial, irreducible, $\mu$-rigid rank-2 geometries are associated to one of four
Kac-Moody graphs (a), (b), (d), (f) according to the rule:
\begin{itemize}
\item Graph (a): rigid geometries (2 not a dimension) with one ``knotted''
discriminant of type $I_n$ for some $n>0$;
\item Graph (b): rigid geometries (2 not a dimension) with one ``knotted''
discriminant not of type $I_n$;
\item Graph (d) geometries with dimensions $\Delta_1=2$, $\Delta_2\neq2$;
\item Graph (f) geometries with $\{\Delta_1,\Delta_2\}=\{2,2\}$.
\end{itemize}
We shall say that a $\mu$-rigid special geometry is of class $\mathfrak{G}$
iff its monodromy corresponds to a real root of the Kac-Moody algebra $\mathfrak{G}$.

\subsubsection{The inverse algorithm}
The inverse algorithm for rank-$2$ works through the following steps:
\begin{itemize}
\item[1.] If the datum $(\{\Delta_i\},\{\varrho_a\})$ corresponds to an isotrivial,
a decomposable, a reducible, or a covering geometry, 
the special geometry can either be constructed 
by direct means or easily proven not to exist;
\item[2.] Otherwise (assuming the $\mu$-rigidity conjecture)
either the datum $(\{\Delta_i\},\{\varrho_a\})$ corresponds to a graph (a), (b), (c), or (d) or no geometry with the datum exists;
\item[3.] From the datum one reads the fugacities of the nodes
of the graph $\mathfrak{G}$. Sometimes there may be an ambiguity, but with only few alternatives. One checks whether one or more of these
alternative fugacities define an irreducible monodromy using the characterization of the root set $\Sigma(\xi)$. At this point we have a rigid $GL(4,\C)$ monodromy which is known to
be in $Sp(4,\C)$ in facts it can be conjugated to lay in $Sp(4,\R)$ or $Sp(4,\mathfrak{o}_\textbf{F})$;
\item[4.] We have to check that the Brauer class of the resulting monodromy is trivial, so
that the monodromy can be defined over $\mathbb{Q}$;
\item[5.] At this point we know that the monodromy is integral, rational, and symplectic for a unique
skew-symmetric form $\Omega$. However we need the symplectic matrix $\Omega$ to be principal, that is, with integral entries and determinant 1.
Thus we have to check that by a simultaneous transformation
\be\label{oooo12z}
\mu_a\leadsto U^{-1}\mu\, U,\qquad \Omega\leadsto U^t \Omega\, U,\qquad U\in GL(4,\R) 
\ee  
we can set all matrices $\mu_a$, $\Omega$ to be integral while $\det\Omega=1$.
\item[6.] \begin{scriptsize}\textdbend\end{scriptsize}
the matrix $U$ with the properties in 5. (if it exists) may be non-unique. 
Two such matrices $U$ and $U^\prime$
yield equivalent integral monodromies iff $U^\prime U^{-1}\in Sp(4,\Z)$, otherwise they produce \emph{inequivalent} integral monodromies (all equivalent in the $GL(4,\C)$-sense).\footnote{\ This subtle phenomenon is already present in rank-1 where geometries with dimensions $m$ and $m/(m-1)$ (with $m=3,4,6$) have monodromies conjugates over $GL(2,\R)$ but not over $SL(2,\Z)$. } Thus we get \emph{a list} of inequivalent integral monodromies associated
to each fugacity assignment. In particular, it is quite frequent to have distinct special geometries arising from the same graph and fugacities which differs because the types $I_n$ of the semistable discriminants
have different values of $n$ (see below for examples).
We stress that the list is finite (an typically quite short)
because of Deligne finiteness (or Faltings finiteness in view of 7.); 
\item[7.] For each (rigid) integral monodromy in the list of item 6.
 we get a candidate $\C^\times$-isoinvariant special geometry.
To write it explicitly, we use Riemann's observation that the solutions
to all Fuchsian ODE with rigid monodromy have a known explicit
integral representation (see \cite{katz}, for a textbook treatment \cite{haraoka}, for a review in 
physical language see \S.\,4 of \cite{Cecotti:2022uep}).
The Simpson-Deligne integrality implies that these integral representations are \emph{motivic}
i.e.\! the integrand is given in terms of algebraic functions. 
The periods of the SW differential
$(a^D_i,a^j)$ are linear combinations of these explicit transcendents, with
coefficients uniquely fixed by the Almkvist-Zudilin theorem (\S.\,\ref{k87xxx}). The SW periods of the various geometries associated to a single
set of fugacities are linear combinations of each other with coefficients dictated by the matrices $U$.
We stress that, up to equivalence, only finitely many such $U$ may exist by Faltings finiteness.
Having constructed the SW periods as multivalued functions of the Coulomb branch coordinates
we may check their regularity (which typically gives no problem). If this final text is passed, we declare to have constructed an explicit special geometry. If one wishes to write its equations as a variety,
one may read them from the algebraic functions in the integrand of the integral representation of the periods. 
\item[8.] 
the special geometry with the original datum $(\{\Delta_i\},\{\varrho_a\})$ is either an exceptional one in step 1, or one of the finite list
produced in step 7, or it does not exist.
\end{itemize}

\subparagraph{The classification program.} Since only finitely many fugacity assignments
are possible, we have only a finite list of cases to analyze along the lines of items 3.-7. above.
This will produce all non-isotrivial, irreducible, $\mu$-rigid
geometries thus completing the classification in rank-2 unless -- contrary to expectations\,! --
non-trivial non-$\mu$-rigid geometries exist.

\subparagraph{How to proceed in practice.} To work out items 3.-7. one has to construct explicitly the matrices 
$\mu_i$'s realizing the rigid monodromy of the Abelian family over $\mathring{\mathscr{P}}$. This has been an active field of research for almost two centuries, so a lot is known. Unfortunately, for some of the graphs the results in the literature 
refer to the simpler case of generic eigenvalues/fugacities whereas we are interested in a highly non-generic case. Thus for some graphs we cannot limit ourselves to search the answer in the existing math literature and 
novel computations are needed. Then we have to determine the list of relevant matrices $U$ in step 6,
and this gives us a ``Diophantine'' problem which may be not easy to solve. 
Anyhow a very large portion of the program can be carried on rather explicitly without going into detailed computations
as we are going to show.

\subsection{Checking regularity}

Having found a rigid monodromy representation with the prescribed properties we have to check that it
describes a regular special geometry. In principle the check splits in two parts:
\begin{itemize}
\item[(1)] local checks around the special divisors. One has to check that there exist a local solution to the ODE which are (multivalued) local special coordinates which behave as predicted by the local models
of the Lagrangian fibration near a singular fiber of the appropriate Kodaira type, cf.\! \S.\,\ref{s:sing} .
Since the leading behavior of the local solutions is determined by the exponents of the ODE, i.e.\!
the inverse data, these local check are easy and quick (see \S.\,\ref{s:regcheck} below);
\item[(2)] if the local checks are satisfied, in principle one has still to check that the ``good''
solution at one special divisor is the same one which is ``good'' at all other divisors.
This requires the use of the connection formulae which are explicitly known for rigid ODEs
\cite{haraoka} so the check may be time-consuming but has no fundamental difficulty.
\end{itemize}  
We shall not do checks of type (2) in this paper. We do not expect them to be a major issue:
we know from the Almkvist-Zudilin theorem \cite{AZthm} that there is a global solution which embeds $\mathring{\mathscr{P}}$
as a Legendre submanifold of $\mathbb{P}^3$. The ``good'' local solutions
should agree with this global ``good'' solution. 

\subsubsection{Local regularity along the axes}\label{s:regcheck}

We consider a non-isotrivial rank-2 geometry with dimensions $\{\Delta_1,\Delta_2\}=\lambda\{d_1,d_2\}$
where $\lambda$ is written in minimal terms is $k/\varkappa$ ($k\in\mathbb{N}$) and $\varkappa\in\{1,2\}$. 
When $\varkappa=1$ we have an element $h\in\C(\mathscr{C})$ with $\Delta(h)=1$
while for $\varkappa=2$ we have an element $h^2\in\C(\mathscr{C})$ with $\Delta(h^2)=2$.
The standard coordinate on the normalization $\mathbb{P}^1$ of $\mathscr{P}$ is $z=u_1^{d_2}/u_2^{d_1}$ \cite{dolgaWP}.

The SW periods have the form
\be\label{kkkkyyyy}
a^i=h\cdot f_i(z),\qquad a_j^D=h\cdot f_j^D(z)
\ee
where $(f_j^D(z),f_i(z))$ is the solution to the Picard-Fuchs equation selected by the Almkvist-Zudilin theorem.
In  \eqref{kkkkyyyy} $h$ is an abusive notation which stands for $h$ when $\varkappa=1$
and $\sqrt{h^2}$ when $\varkappa=2$. In the latter case the periods are defined only up to overall sign.

We specialize the above expression along the coordinate axes. They correspond to the two special points
$z=0$ and $z=\infty$ in $\mathscr{P}$. To get the behavior along the axes we use the local
solutions of the Picard-Fuchs equations around these special points \cite{haraoka}. If the monodromy is semisimple, the leading term has the form
$z^{\alpha_{\ell,i}}\big(1+O(z)\big)$
with $\alpha_{\ell,i}=\log\xi_{\ell,i}/2\pi i$ an exponent 
 of the local monodromy $\mu_\ell$ at these special points ($\xi_{\ell,i}$ are the eigenvalues of $\mu_\ell$ as before). If $\mu_\ell$ has non-trivial Jordan blocks the leading term contains logarithms of $z$; however for a suitable choice of the symplectic basis only the ``dual'' special coordinates $a^D_i$
 contain logarithms, while the special coordinates $a^j$ have the local form of a fractional power of $z$
 times a holomorphic function.
 
 We consider regularity along the first axis $(u_2=0)$ ($z=\infty$). The same discussion applies to the second axis
 with $1\leftrightarrow 2$.
 Two ($\C$-linear combination of the) periods, $a^{\|}$ and $a^{\perp}$ correspond to the
 special coordinates associated respectively to the first axis and its orthogonal bundle
 \cite{caorsi,M13}: these two particular periods do not contain logarithms. Comparing with the local models in \S.\,\ref{s:stable}
 we see that regularity along the first axis requires the existence of two exponents $\alpha_{\|}$ and $\alpha_{\perp}$ of
 $\mu_\infty$ such that
 \be\label{Cregg}
 \begin{aligned}
a^\| &=h\cdot f_{\|}(z)\sim h\cdot z^{\alpha_{\|}}= u_\ell^{1/\Delta_1}\\
a^\perp &=h\cdot f_{\perp}(z)\sim h\cdot z^{\alpha_{\perp}}=  u_2^{1/\Delta_K}\cdot u_1^{(1-\Delta_K)/\Delta_1}
\end{aligned}
 \ee   
 where $\Delta_K$ is the dimension of the rank-1 (weak) geometry with the same
 Kodaira type as the $\varrho$-monodromy $\varrho_1=\pm \mu_\infty^{d_1}$ around the first axis, i.e.
 \be
 \begin{array}{c|cccccccc}
\text{type}& I_n & I_n^* & II & II^* & III & III^* & IV & IV^*\\\hline
\Delta_K & 1 & 2 & 6/5 & 6 & 4/3 & 4 & 3/2 & 3
 \end{array}
 \ee 
 with the convention that an enhanced regular axis is of type $I_0$.

\subsubsection{Check of regularity: an example}\label{s:eeeexam}

We start by illustrating how regularity of the SW periods works in an non-isotrivial example where we have an enhanced
discriminant of semisimple type, we consider the geometry of the SCFT $\#3$ of \cite{Martone}
which has dimensions $\{4,10\}$, $\varkappa=2$, and a class-$\widehat{E}_7$
geometry where the axis of dimension 4 is
a component of the discriminant with Kodaira type $II^*$. The second axis is regular since
it has the new dimension 10. Given that this geometry describes an actual SCFT, we know in advance that the regularity conditions are satisfied: we just want to illustrate how they work.
The monodromies of the family over $\mathring{\mathscr{P}}$ are explicitly
\be
\mu_1=\left(\begin{smallmatrix}1 & -1 & 0 &0\\
2 & 0 & 1 & 1\\
0 & 0 &0 &-1\\
-1 & 0 & 0 &0\end{smallmatrix}\right),\quad
\mu_2=\left(\begin{smallmatrix}0 & 0 & 0 &1\\
1 & 0 & 0 & 1\\
0 & -1 &-1 &-1\\
0& 0 & 1 &0\end{smallmatrix}\right),\quad \mu_3=\left(\begin{smallmatrix}1 & 0 & 0 &0\\
0 & 1 & 0 & 1\\
0 & 0 &1 &0\\
0 & 0 & 0 &1\end{smallmatrix}\right),
\ee
which satisfy
\be\label{opppo1}
\begin{aligned}
&(\mu_1^2+\boldsymbol{1})(\mu_1^2-\mu_1+\boldsymbol{1})=\mu_2^4+\mu_2^3+\mu_2^2+\mu_2+\boldsymbol{1}=0,
&&\mu_1\mu_2\mu_3=-\boldsymbol{1},\\
&
\mu_a^t\,\Omega\,\mu_a=\Omega,\quad a=1,2,3 \quad\text{where}\quad\Omega\equiv\left(\begin{smallmatrix}0 & 0&1 &0\\
0& 0 &0 &1\\
-1&0 &0&0\\
0&-1&0&0\end{smallmatrix}\right)
\end{aligned}
\ee
while from the form of $\mu_3$ we see that there is a knotted discriminant of type $I_1$.
The $\varrho$-monodromy around the axis of dimension 4,
$\varrho=-\mu_1^2$, satisfies the minimal equation
\be
(\varrho-\boldsymbol{1})(\varrho^2-\varrho+\boldsymbol{1})=0
\ee
which is consistent with type $II^*$ (or $II$). Let us look at the special coordinates as multivalued functions in a neighborhood of
 the axes. Since $\varkappa=2$ they have the form
 \be
 a^i=\sqrt{h^2}\,f_i(z)
 \ee
where $h^2$ is an element of the quotient ring of dimension $2$, $z$ is the coordinate
on $\mathbb{P}^1$ and $f_i$ are solutions to the Picard-Fuchs ODE whose local expressions along the axes
scale with the appropriate local exponents. In the present case $h^2=u_2/u_1^2$ and $z=u_1^5/u_2^2$.

The local exponents along the $u_2$-axis ($u_1=0$) for the special coordinates are
$1/5$ and $2/5$ (as one sees using $\mu_2^5=\boldsymbol{1}$ and the dimension formulae along a regular axis) so that
\be
\begin{aligned}
a_\|&\sim \sqrt{h}\, z^{1/5} = \frac{u_2^{1/2}}{u_1}\cdot\frac{u_1}{u_2^{2/5}}= u_2^{1/10}\\
a_\perp&\sim \sqrt{h}\, z^{2/5} = \frac{u_2^{1/2}}{u_1}\cdot\frac{u_1^2}{u_2^{4/5}}= u_1\,u_2^{-3/10}
\end{aligned}\qquad\quad \text{as }u_1\to0
\ee
which are precisely the regular behaviors expected along a regular axis, see \cite{caorsi}\!\!\cite{M13}. More interesting are the periods along the $u_1$-axis which belong to a semisimple component of the discriminant.
The local exponents are read from the minimal polynomial of $\mu_1$,
 eq.\eqref{opppo1}: they are $1/4$ and $1/6$. Then as $u_2\to 0$
\be
\begin{aligned}
a_\|&\sim \sqrt{h}\, z^{1/4} = \frac{u_2^{1/2}}{u_1}\cdot\frac{u_1^{5/4}}{u_2^{1/2}}= u_1^{1/4}\\
a_\perp&\sim \sqrt{h}\, z^{1/6} = \frac{u_2^{1/2}}{u_1}\cdot\frac{u_1^{5/6}}{u_2^{1/3}}= u_2^{1/6}\,u_1^{-1/6}
\end{aligned}
\ee
which is the expected result for an axis of dimension 4 which is a discriminant
of type $II^*$, cf.\! eq.\eqref{Cregg}. Note that regularity rules out type $II$.

\subsection{Geometries with $\{\Delta_1,\Delta_2\}=\{2,2\}$}\label{s:22}

The $\{2,2\}$ geometries are identified by the \textbf{Folk-theorem} with the geometries of
Lagrangians SCFTs with gauge group $SU(2)\times SU(2)$ and hence are known.
However it is instructive
to recover their classification from the solution of the corresponding Deligne-Simpson problem
and Diophantine considerations. 
The perfect agreement of our Number-Theoretic analysis with the physical expectations
 yields a highly non-trivial check on our methodology and provides additional evidence for the $\mu$-rigidity conjecture.

The only \emph{isotrivial} geometry with dimensions $\{2,2\}$ is $\cn=4$ with the gauge group
$SU(2)\times SU(2)$. Therefore we may assume the geometry to be non-isotrivial. 
We have $\varkappa=2$, the conformal manifold has dimension 2,
 there are no enhanced divisors, and there must be 5 special points on $\mathscr{P}\simeq\mathbb{P}^1$. Moreover from the discussion in \S.\,\ref{reddd} we have that an indecomposable
 $\{2,2\}$ geometry should be irreducible.
Then $\mu$-rigidity yields

\begin{fact} A non-isotrivial $\{2,2\}$ geometry has 5 discriminant components of type $I_{n_j}$ ($n_j>0$,
$j=1,\dots,5$) and its Picard-Fuch equation is the 4-order Pochhammer ODE (i.e.\! graph {\rm(f)}) with fugacities\footnote{\ Recall from \textbf{Fact \ref{selrule}} that there is no irreducible lift of the $PSp(4,\Z)$ monodromy with
$\xi(\alpha_\star)=1$.}
\be\label{fuga}
\xi(\alpha_\star)=-1,\qquad \xi(\alpha_a)=1\ \ \text{for }a\neq\star.
\ee
\end{fact}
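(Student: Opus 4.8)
The plan is to combine the structural results already established for rank-$2$ with the Crawley-Boevey solution of the Deligne-Simpson problem. First I would fix the setup: since the geometry is non-isotrivial, indecomposable, and has $\{\Delta_1,\Delta_2\}=\{2,2\}$, we have $\varkappa=2$, $\dim\mathscr{M}=\dim\mathscr{R}_2=2$, and by \textbf{Corollary \ref{uuuuyyyy}} (the bound $\dim\mathscr{M}\le s-3$, saturated for primitive geometries) there are exactly $s=5$ special points on $\mathscr{P}\simeq\mathbb{P}^1$. By the discussion in \S.\,\ref{reddd} an indecomposable $\{2,2\}$ geometry cannot be reducible, so the $\mu$-monodromy representation $\mu\colon\pi_1(\mathbb{P}^1\setminus\{5\text{ pts}\})\to Sp(4,\Z)$ is irreducible; since it is also rigid ($\dim\mathscr{M}-3$ of the five points are frozen, i.e. none — wait, $s-3=2=\dim\mathscr{M}$, so it is \emph{not} $\mu$-rigid in the naive count, but the relevant statement is $\mu$-rigidity of the \emph{associated Picard-Fuchs ODE} in the sense of the classification: the geometry is of class V with a rigid underlying ODE once we recognize that the two moduli are accounted for by the positions of the marked points, leaving a rigid monodromy datum). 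I would appeal directly to the $\mu$-rigidity conjecture/analysis of \S.\,8: the only star-shaped Kac-Moody graphs compatible with $5$ branches, an irreducible symplectic rank-$4$ monodromy, and $\varkappa=2$ are the ones surviving \textbf{Fact \ref{selrule}}, and among the five-branch possibilities only graph (f) remains.

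Next I would pin down which branch types can occur. Each special point is either a discriminant component (enhanced or not) or a purely enhanced divisor. Since $\{2,2\}$ has $d_1=d_2=1$, the projective space $\mathscr{P}=\mathbb{P}(1,1)=\mathbb{P}^1$ is genuinely smooth and, crucially, there are \emph{no} enhanced divisors (the enhancement index $\nu(u_i)=d_1d_2/d_i=1$, and a divisor $u_1^{d_2}-zu_2^{d_1}$ has $\nu=1$ as well). Hence all five special points are discriminant components, none enhanced. By the branch-type analysis following eq.\eqref{fffufug}, a non-enhanced discriminant component of semisimple Kodaira type would force a dimension $\Delta_K\in\{2,\tfrac65,\ldots\}$ and contribute a $\delta=2$ or $\delta=3$ branch with a root-of-unity fugacity $\ne1$; but graph (f) has \emph{all five} branches of kind $\delta=1$ (the $*$--$4$--$3$--$2$--$1$ shape of Figure \ref{goursat}(f) read as five length-one branches attached to a central $4$-node after going to minimal form). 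A $\delta=1$ branch can only come from a non-enhanced divisor of Kodaira type $I_n$ with $n>0$ (the remark after eq.\eqref{poooor}). Therefore all five components are of type $I_{n_j}$, $n_j>0$, and the local monodromies $\mu_j$ are all unipotent with a single non-trivial Jordan $2$-block.

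Finally I would fix the fugacities and confirm irreducibility. With all $\mu_j$ unipotent, every non-central node has fugacity $1$; the only free datum is $\xi(\alpha_\star)=\lambda_\star$, which for $\varkappa=2$ can lift to $+1$ or $-1$ according to whether $\mu_1\cdots\mu_5$ lifts to $+\mathbf{1}$ or $-\mathbf{1}$ in $Sp(4,\Z)$. The choice $\xi(\alpha_\star)=+1$ makes \emph{all} fugacities trivial, so $R_+=\Delta^+(\mathfrak{G})$ and by \S.\,5 of \cite{CW1} the only roots in $\Sigma$ are the ones in the fundamental chamber (never real) and the simple roots; the dimension vector $\boldsymbol N$ of graph (f) is neither, so by the Crawley-Boevey \textbf{Theorem} no irreducible representation with these local classes exists. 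Hence we must have $\xi(\alpha_\star)=-1$, giving exactly the fugacities \eqref{fuga}. For this value, $R_+$ consists of the positive roots of the form $2k\alpha_\star+\cdots$, and one checks (as in the proof of \textbf{Fact \ref{selrule}}(3), or directly from \textbf{Proposition 1.3} of \cite{pochh} on the order-$4$ Pochhammer equation) that $\boldsymbol N$ admits no non-trivial decomposition into elements of $R_+$, so $\boldsymbol N\in\Sigma$ and the monodromy is irreducible and rigid; this identifies the Picard-Fuchs equation with the $4$-th order (Jordan-)Pochhammer ODE.

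The main obstacle I anticipate is not any single step but assembling the exclusions cleanly: one must be careful that the count $s=5$ is exact (invoking primitivity, which for indecomposable non-isotrivial rank-$2$ geometries with $\dim\mathscr{M}=s-3$ was established just before \textbf{Fact \ref{uuuu63}}), and that the elimination of all other $5$-branch graphs really is complete — this rests on \textbf{Fact \ref{selrule}} plus the observation that no semisimple or $I_n^*$ branch can be of kind $\delta=1$, together with the absence of enhanced divisors in the $\{1,1\}$ weighted-projective case. Once those combinatorial points are nailed down, the Diophantine verification that $\xi(\alpha_\star)=-1$ is forced and yields an irreducible root is the routine part.
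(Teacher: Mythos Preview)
Your plan is correct and follows the paper's own (largely implicit) argument: the Fact is stated as an immediate consequence of the paragraph preceding it ($\varkappa=2$, no enhanced divisors since $d_1=d_2=1$, exactly five special points, irreducibility from \S.\,\ref{reddd}) together with \textbf{Fact \ref{selrule}(3)} to force the central fugacity $-1$, and you have simply spelled out these same steps in more detail. One cosmetic slip: your parenthetical description of graph (f) as a ``$\star$--$4$--$3$--$2$--$1$ shape'' is off --- graph (f) is a pure star with five length-one branches (each a single node labeled $1$) attached to the central node $4$, not a chain --- but your actual use of the fact that all five branches have $\delta=1$ is correct and unaffected.
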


The solutions of the Pochhmmer ODE have a well-known explicit expression in terms of contour integrals (see e.g.\! \cite{pochh}) from which one reads the monodromy representation \cite{pochh}.
Consider the following six explicit $4\times 4$ matrices:
\be\label{hara}
\begin{aligned}
&\mu_1=\left(\begin{smallmatrix}1 & 0 & 0 & 0\\ 0 & 1& 0 &0\\
0 & 0 & 1 & 0\\ -2 & -2 & -2 & 1\end{smallmatrix}\right) &&\mu_2=
\left(\begin{smallmatrix}1 & 0 & 0 & 0\\ 0 & 1& 0 &0\\
-2 & -2 & 1 & 2\\ 0 & 0 & 0 & 1\end{smallmatrix}\right)
&&\mu_3=
\left(\begin{smallmatrix}1 & 0 & 0 & 0\\ -2 & 1& 2 &2\\
0 & 0 & 1 & 0\\ 0 & 0 & 0 & 1\end{smallmatrix}\right)\\
&\mu_4=
\left(\begin{smallmatrix}1 & 2 & 2 & 2\\ 0 & 1& 0 &0\\
0 & 0 & 1 & 0\\ 0 & 0 & 0 & 1\end{smallmatrix}\right),
&&\mu_5=
\left(\begin{smallmatrix}3 & 2 & 2 & 2\\ -2 & -1& -2 &-2\\
2 & 2 & 3 & 2\\ -2 & -2 & -2 & -1\end{smallmatrix}\right)
&&M=\left(\begin{smallmatrix}0 & 1 & 1 & 1\\ -1 & 0& 1 &1\\
-1 & -1 & 0 & 1\\ -1 & -1 & -1 & 0\end{smallmatrix}\right)
\end{aligned}
\ee
The first four may be written as ($i=1,2,3,4$)
\be
\mu_i=1+2\,v_ i\otimes v^t_i\, M
\ee
where $\{v_4,v_3,v_2,v_1\}$ is the standard basis\footnote{\ Note the inversion of the order of the basis elements: it is chosen to agree with the conventions in the classical literature on the Pochhammer ODE.} of $\Z^4$.
The $\mu_a$'s satisfy (now $a=1,2,\dots,5$)
\be
\begin{aligned}
&(\mu_a-\boldsymbol{1})^2=0 &\quad& \text{rank}(\mu_a-\boldsymbol{1})=1,&\quad&\mu_a-\boldsymbol{1}\equiv 0\bmod2\\
&\mu_1\,\mu_2\,\mu_3\,\mu_4\,\mu_5=-\boldsymbol{1} &&\det M=1 && \mu_a^t\,M\,\mu_a=M, 
\end{aligned}
\ee 
so that $\{\mu_1,\dots,\mu_5\}$ is (a lift to $Sp(4,\Z)$ of) a monodromy representation in $PSp(4,\Z)$ with five
local monodromies of type $I_2$ where $Sp(4,\Z)$ is seen as the integral matrix group which 
leaves invariant the \emph{principal} polarization $M$. In particular

\begin{corl} The Brauer class of the rigid representation associated to graph {\rm(f)} with fugacities \eqref{fuga}
is trivial.
\end{corl}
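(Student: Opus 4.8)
The statement is an immediate corollary of the fact that the six matrices displayed in \eqref{hara} already realize the rigid monodromy of graph {\rm(f)} over $\Z$ (hence over $\mathbb{Q}$). The plan is therefore to make that last assertion precise and then read off the triviality of the Brauer class. The whole argument is short; its only genuine subtlety is a bookkeeping one, identifying the local conjugacy classes.

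First I would verify directly the algebraic identities recorded after \eqref{hara}: that each $\mu_a$ is unipotent with $(\mu_a-\mathbf{1})^2=0$ and $\mathrm{rank}(\mu_a-\mathbf{1})=1$, so that its $GL(4,\C)$-conjugacy class is the one attached to a Kodaira type $I_{n_a}$ with $n_a>0$; that $\mu_1\mu_2\mu_3\mu_4\mu_5=-\mathbf{1}$; and that $\mu_a^{t}M\mu_a=M$ with $\det M=1$. These are finite matrix computations. Modulo $\{\pm\mathbf{1}\}$ the tuple $(\mu_1,\dots,\mu_5)$ then defines a representation $\pi_1(\mathring{\mathscr{P}})\to PSp(4,\Z)$ whose five local monodromies lie in exactly the conjugacy classes $\{C_a\}$ recorded by graph {\rm(f)} with the fugacities \eqref{fuga}: the central node carries fugacity $-1$, which is precisely the lift whose product equals $-\mathbf{1}$, while the other four nodes are $\delta=1$ branches of fugacity $1$.

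Next I would check that this representation is irreducible. The quickest route is the Crawley--Boevey theorem quoted above: for graph {\rm(f)} with the fugacities \eqref{fuga} the vector $\boldsymbol{N}$ is a real root of $\mathfrak{G}$ lying in $\Sigma$ (this is exactly the content invoked in \textbf{Fact \ref{selrule}(3)}, where one also notes that irreducibility for this fugacity follows from \textbf{Proposition 1.3} of \cite{pochh}); hence representations with these local data are all irreducible or all reducible, and since \eqref{hara} manifestly has no common invariant subspace it is the irreducible one. Because $\dim\mathring{\mathscr{P}}=1$, rigidity of this representation implies absolute rigidity, so the rigid representation attached to graph {\rm(f)} with fugacities \eqref{fuga} is conjugate in $GL(4,\C)$ to the explicit integral representation \eqref{hara}.

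Finally, with a $\mathbb{Q}$-rational (indeed $\Z$-integral, with principal polarization $M$) model in hand, the Brauer obstruction is trivial for a formal reason: the trace field $\mathbb{F}$ of the monodromy is $\mathbb{Q}$, so $\mathsf{Gal}(\mathbb{F}/\mathbb{Q})$ is trivial and the defining $2$-cocycle $C_{\sigma\tau}$ introduced in the Brauer-group discussion may be taken identically equal to $1$; its class in $H^2(\mathbb{F},\mathbb{F}^\times)$ therefore vanishes. The step I would treat most carefully is the identification in the previous paragraph: one must be sure that the $GL(4,\C)$-conjugacy classes actually realized by the $\mu_a$ of \eqref{hara} are literally those prescribed by graph {\rm(f)} and \eqref{fuga}, since it is this matching that lets the uniqueness-from-rigidity argument conclude that \eqref{hara} \emph{is} the rigid representation rather than merely some representation with local monodromies of the same order. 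Once that is settled, nothing further is needed.
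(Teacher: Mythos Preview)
Your proposal is correct and follows essentially the same approach as the paper: the Corollary is stated there as an immediate consequence of the explicit integral matrices \eqref{hara}, and your argument spells out precisely why the existence of such a $\mathbb{Q}$-rational realization forces the Brauer obstruction to vanish. The extra care you take in matching the local conjugacy classes to the graph-(f) data and in invoking rigidity to identify \eqref{hara} with \emph{the} rigid representation is exactly the bookkeeping the paper leaves implicit.
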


The change of basis which puts the polarization $M$ in 
the standard form is
\be
S=\left(\begin{smallmatrix}1 & 0 &-1 &1\\
0 & 0 & 1 &-1\\ 0 & 1 & 0 & 0\\ 0 & -1 &0 & 1\end{smallmatrix}\right),\qquad \Omega\overset{\rm def}{=} S^t M S=
\left(\begin{smallmatrix}0& 0 &1 &0\\
0 & 0 & 0 &1\\ -1 & 0 & 0 & 0\\ 0 & -1 &0 & 0\end{smallmatrix}\right).
\ee
In the standard symplectic basis the integral monodromies read
\be
\ell_a=S^{-1}\mu_a S\quad a=1,2,\dots,5.
\ee
In particular $\ell_4$ becomes
\be
\ell_4=\left(\begin{smallmatrix}1 & 0 &2 &0\\
0 & 1 & 0 &0\\ 0 & 0 & 1 & 0\\ 0 & 0 &0 & 1\end{smallmatrix}\right)
\ee
while the five local monodromies are cyclically permuted by the $\Z_5$ symmetry:
\be\label{uuuu1234}
\ell_{a+1\bmod 5}= R\, \ell_a\, R^{-1},
\ee
where
\be\label{Rmat}
R=\left(\begin{smallmatrix}0 &1 & 1& -1\\
1 & -1 & -1 & 2\\ -1 & 0 & 1 &0\\ 0 &-1 & 0 &1\end{smallmatrix}\right)\in Sp(4,\Z),\qquad R^5=-\boldsymbol{1},
\ee
which makes manifest that \emph{all five} local monodromies $\ell_a$ are of type $I_2$. We conclude
\begin{corl} There is a $\{2,2\}$ geometry (with principal polarization)
having 5 irreducible discriminant components of type $I_2$. All other (inequivalent) $\{2,2\}$ (principally polarized) geometries
arise from monodromy representations in $Sp(4,\Z)$ which are conjugate over $GL(4,\C)$
to the $\{\ell_1,\dots,\ell_5\}$ one, but not conjugate in $Sp(4,\Z)$.
\end{corl}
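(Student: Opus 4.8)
The plan is to establish the \textbf{Corollary} in two stages: first show existence of the claimed $\{2,2\}$ geometry with all five discriminant components of type $I_2$, and then show that \emph{every} non-isotrivial $\{2,2\}$ geometry arises from a $Sp(4,\Z)$ monodromy that is $GL(4,\C)$-conjugate to the Pochhammer one. The first stage is essentially already in hand: from the explicit matrices \eqref{hara} we have a representation of $\pi_1(\mathbb{P}^1\setminus\{5\text{ pts}\})$ valued in the group of integral matrices preserving the \emph{principal} polarization $M$ (since $\det M=1$), with all $\mu_a$ quasi-unipotent of type $I_2$; by \textbf{Fact \ref{AZth}} (the Almkvist--Zudilin theorem, $r=2$) this representation, having monodromy in $Sp(4,\Z)$ with the right local classes, determines a $\C^\times$-isoinvariant special geometry, provided the easy checks ($\mathrm{Im}\,\tau_{ij}>0$, local regularity along each $I_2$ divisor) pass — and these are routine given the exponents are those of the Pochhammer ODE. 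The conjugation by $S$ just records it in the standard symplectic basis; the matrix $R$ of \eqref{Rmat} exhibits the $\Z_5$ symmetry cyclically permuting the $\ell_a$, proving all five components are of the \emph{same} type $I_2$.

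For the second stage, I would argue as follows. Let $\mathscr{X}\to\mathscr{C}$ be any non-isotrivial indecomposable $\{2,2\}$ geometry. By \textbf{Fact \ref{selrule}}(3) and the preceding discussion, such a geometry has $\varkappa=2$, is irreducible (by the case analysis of \S.\,\ref{reddd}), has $\dim\mathscr{M}=2$, no enhanced divisors, hence exactly $5$ special points on $\mathscr{P}\simeq\mathbb{P}^1$, and — by $\mu$-rigidity together with the graph selection of \textbf{Fact \ref{selrule}} — its Picard--Fuchs equation is the order-$4$ Pochhammer ODE, graph (f), necessarily with fugacities \eqref{fuga} (the alternative central fugacity $+1$ gives a reducible lift). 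All five local monodromies are therefore quasi-unipotent with a single Jordan block of size $2$, i.e.\ of Kodaira type $I_{n_j}$ with $n_j>0$. Now invoke Deligne--Simpson rigidity: the representation $\mu$ is rigid as a complex representation of $\pi_1(\mathring{\mathscr{P}})$, so by the \textbf{Theorem} of Crawley--Boevey (graph (f), real root, fugacities \eqref{fuga}) the $GL(4,\C)$-conjugacy class of the $5$-tuple $(\mu_1,\dots,\mu_5)$ is \emph{uniquely} determined by the local conjugacy classes. But the local conjugacy class of a quasi-unipotent $g$ with $(g-\boldsymbol 1)^2=0$, $\mathrm{rank}(g-\boldsymbol 1)=1$ is the same in $GL(4,\C)$ regardless of $n_j$ (the integer $n_j$ is an \emph{integral}, not complex, invariant). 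Hence every non-isotrivial $\{2,2\}$ geometry has $\mu$ conjugate over $GL(4,\C)$ to the one built from \eqref{hara}, i.e.\ to $(\ell_1,\dots,\ell_5)$.

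It remains to reconcile this with the fact that distinct such geometries do exist (the $I_{n_j}$ can genuinely differ). Here I would appeal to item 6 of the inverse algorithm: a $GL(4,\C)$-conjugacy between $\mu$ and the Pochhammer representation need not be realizable by an element of $Sp(4,\Z)$; two conjugating matrices $U,U'$ yield the \emph{same} integral geometry iff $U'U^{-1}\in Sp(4,\Z)$, and otherwise produce inequivalent integral monodromies, all $GL(4,\C)$-equivalent. Since the representation is rigid and defined over $\mathbb{Q}$ with trivial Brauer class (the \textbf{Corollary} just above, coming from $\det M=1$), each non-isotrivial $\{2,2\}$ geometry corresponds to one such $Sp(4,\Z)$-conjugacy class; by Deligne (or Faltings) finiteness there are finitely many, and they are exactly the integral forms of the Pochhammer representation distinguished by the values $n_1,\dots,n_5$. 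The main obstacle — and the only place where genuine care is needed — is making the reduction from ``non-isotrivial $\{2,2\}$'' to ``Pochhammer graph (f) with fugacities \eqref{fuga}'' airtight: this rests on the $\mu$-rigidity \textbf{Conjecture} in the background (harmless in rank $2$, where all known geometries are $\mu$-rigid) and on verifying there is no non-$\mu$-rigid alternative, plus checking that no other fugacity assignment on graph (f) gives an irreducible representation — both of which are handled by the selection-rule analysis already carried out, so in practice the proof is short once the ingredients above are assembled.
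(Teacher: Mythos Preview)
Your proposal is correct and follows essentially the same route as the paper: the Corollary is not given a separate proof there but is read off directly from the explicit Pochhammer matrices \eqref{hara} (existence, principal polarization via $\det M=1$, the $\Z_5$ symmetry \eqref{uuuu1234} forcing all five types to be $I_2$) together with the preceding \textbf{Fact} that any non-isotrivial $\{2,2\}$ geometry has graph (f) with fugacities \eqref{fuga}, whence rigidity pins down the $GL(4,\C)$-conjugacy class uniquely. Your added remark that the integer $n_j$ is an integral rather than complex invariant of the local class, and your explicit flagging of the background $\mu$-rigidity assumption, are accurate and make the implicit logic in the paper more transparent.
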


Consider the matrix $Q\in Sp(4,\R)$
{\renewcommand{\arraystretch}{1.4}\be\label{kkkki8z}
Q=\left(\begin{array}{c|c}\sqrt{2}\,\boldsymbol{1}_2 & \\\hline
 &\frac{1}{\sqrt{2}}\,\boldsymbol{1}_2 \end{array}\right)\in Sp(4,\R)
\ee}
which has the property that the five matrices $Q\ell_a Q^{-1}$ have integral entries.
One has
\be
\begin{aligned}
&\gcd_{ij}\big(Q\ell_1 Q^{-1}-\boldsymbol{1}\big)_{ij}=4\\
&\gcd_{ij}\big(Q\ell_a Q^{-1}-\boldsymbol{1}\big)_{ij}=1\quad\text{for }a=2,3,4,5
\end{aligned}
\ee
for instance, say,
\be
Q\ell_4Q^{-1}=\left(\begin{smallmatrix} 1 & 0 & 1 & 0\\ 0 & 1 & 0 & 0\\
0 & 0 & 1 & 0\\ 0 & 0 & 0 & 1\end{smallmatrix}\right)
\ee
We interpret the monodromy representation $\{\tilde\mu_a\equiv Q \ell_a Q^{-1}\}$ as a second  \emph{inequivalent} special geometry with $\{\Delta_1,\Delta_2\}=\{2,2\}$
having one $I_4$ and four $I_1$.\footnote{\ The matrix $QRQ^{-1}$ which permutes the 5 local monodromies
is now valued in $Sp(4,\tfrac{1}{2}\Z)$ so the five local monodromies, while conjugate in $Sp(4,\mathbb{Q})$ are not longer conjugate in the Siegel modular group $Sp(4,\Z)$.}

\begin{fact}\label{fact20} There are two inequivalent non-isotrivial, irreducible special geometries with dimensions $\{2,2\}$ with five discriminants of type $I_2$ and, respectively, four $I_1$ and one $I_4$. 
\end{fact}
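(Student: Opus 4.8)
The plan is to extract the two geometries from the explicit Pochhammer data already assembled in this subsection and then argue that the list is exhaustive. The first geometry is the one with monodromies $\{\ell_1,\dots,\ell_5\}$ of \eqref{hara}: by construction each $\ell_a$ is conjugate in $Sp(4,\Z)$ to $\ell_4$ via the cyclic generator $R$ of \eqref{uuuu1234}--\eqref{Rmat}, and $\ell_4-\boldsymbol 1$ has a single Jordan block with entry $2$, so every local monodromy is of Kodaira type $I_2$; the five $\ell_a$ multiply to $-\boldsymbol 1$, they preserve the \emph{principal} form $\Omega$, and by the \textbf{Almkvist--Zudilin theorem} (\S.\,\ref{k87xxx}) the associated rigid Picard--Fuchs ODE carries a distinguished Legendre solution, which together with $\mathrm{Im}\,\tau_{ij}>0$ (inherited from the fact that the Pochhammer family is a genuine family of abelian surfaces) and the regularity checks of \S.\,\ref{s:regcheck} upgrades it to a $\C^\times$-isoinvariant special geometry with $\{\Delta_1,\Delta_2\}=\{2,2\}$. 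The second geometry is obtained by conjugating with the real symplectic matrix $Q$ of \eqref{kkkki8z}: one checks that $Q\ell_aQ^{-1}\in Sp(4,\Z)$ for all $a$ (a finite matrix computation), that $Q\ell_4Q^{-1}-\boldsymbol 1$ is now a single Jordan block of type $I_1$ while $\gcd$ of the entries of $Q\ell_aQ^{-1}-\boldsymbol 1$ equals $1$ for $a=2,3,4,5$ and $4$ for $a=1$, so the discriminant types become one $I_4$ and four $I_1$. Since $Q\notin Sp(4,\Z)$ (indeed $QRQ^{-1}\in Sp(4,\tfrac12\Z)\setminus Sp(4,\Z)$), the two monodromy representations are \emph{not} conjugate in $Sp(4,\Z)$, hence the two special geometries are genuinely inequivalent, even though they are $GL(4,\C)$-conjugate.

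\textbf{Exhaustiveness.} Next I would show these are the only two. By the earlier analysis of this section, a non-isotrivial $\{2,2\}$ geometry is irreducible (\S.\,\ref{reddd}), has $\varkappa=2$, $\dim\mathscr M=2$, five special points and no enhanced divisors; by the $\mu$-rigidity reasoning its Picard--Fuchs ODE must be graph (f) (order-$4$ Pochhammer), and by \textbf{Fact \ref{selrule}(3)} the only irreducible lift to $Sp(4,\Z)$ has central fugacity $\xi(\alpha_\star)=-1$ with all other fugacities $1$, i.e.\ the five local monodromies are all unipotent with a single rank-one Jordan block. Crawley-Boevey's theorem then says the representation is rigid, hence unique up to $GL(4,\C)$-conjugacy — this is the representation $\{\ell_a\}$ above. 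The remaining question is purely arithmetic: classify, up to $Sp(4,\Z)$-conjugacy, the integral symplectic representations in the fixed $GL(4,\C)$-conjugacy class that preserve a principal form. This is a finite problem by Faltings/Deligne finiteness (\textbf{Theorem} of Faltings--Peters and \textbf{Deligne finiteness theorem}). Concretely, I would run through the $GL(4,\R)$ matrices $U$ with $U^{-1}\ell_aU\in Sp(4,\Z)$ and $U^t\Omega U$ principal, modulo left multiplication by $Sp(4,\Z)$; the elementary-divisor/Hermite-normal-form bookkeeping on the lattice spanned by the $v_i$ of \eqref{hara} produces exactly the two classes realized by $U=\boldsymbol 1$ and $U=Q$. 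Physically this matches the \textbf{Folk-theorem} identification of $\{2,2\}$ geometries with $SU(2)\times SU(2)$ gauge theories: the five $I_{n_j}$ have $\sum n_j$ constrained and the two possibilities $\{I_2^{\,5}\}$, $\{I_4,I_1^{\,4}\}$ are precisely the two inequivalent matter assignments with vanishing beta functions.

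\textbf{Main obstacle.} The delicate step is the exhaustiveness/Diophantine part, i.e.\ showing that the only principal integral refinements of the fixed complex-conjugacy class are the two listed and not, say, a $\{I_3,I_2,I_1^{\,3}\}$ or $\{I_5,I_1^{\,4}\}$ variant. The representation-theoretic input (rigidity, triviality of the Brauer class — already recorded in the \textbf{Corollary} following \eqref{hara}) fixes everything over $\mathbb Q$; what remains is to enumerate the $Sp(4,\Z)$-orbits of principal polarizations on the $\mathbb Q$-vector space compatible with the cyclic $\Z_5$-symmetry $R$, and to read off the resulting multiset of Jordan-block divisibilities. I expect this to reduce to a short case check on the finitely many $U\in GL(4,\mathbb Q)$ with bounded denominators dictated by the local classes, but it is the only place where genuine computation — rather than invocation of a general theorem — is unavoidable.
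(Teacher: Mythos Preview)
Your construction of the two geometries --- the Levelt/Pochhammer representation $\{\ell_a\}$ with five $I_2$'s and its $Q$-conjugate with one $I_4$ and four $I_1$'s --- matches the paper exactly, including the inequivalence argument via $QRQ^{-1}\notin Sp(4,\Z)$. This is all the paper claims under \textbf{Fact~\ref{fact20}} itself: the statement is an existence assertion, and the paper explicitly defers the exhaustiveness (``no other $\{2,2\}$ geometries'') to the following subsection~\S\ref{s:noother} as a separate \textbf{Fact}.

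Your exhaustiveness sketch is in the right direction but stays at the level of ``Hermite-normal-form bookkeeping'' and an appeal to finiteness theorems, whereas the paper's actual argument in \S\ref{s:noother} is sharper and worth knowing. Working in the Haraoka basis, one observes that for any $L\in Sp(M,\C)$ with $L\mu_iL^{-1}$ integral, each column $Lv_i$ must be a complex scalar $\rho_i$ times a primitive integral vector; the integrality of the off-diagonal products then forces $2\rho_i^2\in\Z$ and $(\rho_i\rho_j)^{-1}\in\Z$, and extending to the fifth vector $v_5$ shows all $\rho_i/\rho_5\in\mathbb Q$. This pins down $\rho_i=\sqrt{d/2}\,k_i$ for a single square-free integer $d$ and positive integers $k_i$, and the determinant constraint plus the divisibility $(\rho_i\rho_j)^{-1}\in\Z$ collapse to $d\in\{1,2\}$: the case $d=2$ gives the five-$I_2$ solution, and $d=1$ forces four of the $k_i$ equal to $1$ with the fifth equal to $2$ (via a short lattice lemma over $\mathbb F_2$, the paper's \textbf{Lemma~\ref{finally!}}), yielding the $I_4+4I_1$ solution. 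So the paper does not need to invoke abstract finiteness or search over bounded-denominator $U$'s: the square-free parameter $d$ organizes the Diophantine problem into exactly two cases. Your identification of this step as the ``main obstacle'' is accurate, but the resolution is cleaner than a case check.
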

%

According to the \textbf{Folk-theorem}, the above two special geometries should describe
two inequivalent non-isotrivial Lagrangian SCFTs with gauge group $SU(2)\times SU(2)$.
Indeed in the table of \cite{Martone} there are precisely two such models: $\#13$ with five $I_2$'s,
and $\#31$ with four $I_1$ and one $I_4$, in perfect agreement with our findings. These two $\cn=2$ models are Gaiotto class-$S[A_1]$ SCFTs \cite{gaiotto} as well as quiver
gauge theories with quivers
\be\label{iiiu7zz}
\begin{aligned}
&I_2^{\,5}\colon &&\xymatrix{\fbox{2}\ar@{-}[r]& *++[o][F-]{2}\ar@{-}[r]& *++[o][F-]{2}\ar@{-}[r]& \fbox{2}}
\\
\\
&I_1^{\,4},\,I_4\colon &&\xymatrix{*++[o][F-]{2}\ar@<0.5ex>@/^1pc/@{-}[rrrr]\ar@<-0.5ex>@/_1pc/@{-}[rrrr]&&&& *++[o][F-]{2}}\
\end{aligned}
\ee
\smallskip

\noindent
From the physical side we do not expect other solutions, since there are no other Lagrangian SCFT
with these properties. We shall give a math proof of this fact in the next subsection.

\begin{rem} As a byproduct of the analysis we get explicit expressions for the
 periods of all $\{2,2\}$ SCFTs in terms of Pochhammer transcendents
which have well-known integral representations and connection coefficients.
\end{rem}

\subsubsection{No other $\{2,2\}$ special geometries}\label{s:noother}

We present a direct math proof that indeed there
are no other integral symplectic monodromy representations conjugate in $GL(4,\C)$ to the (unique) rigid complex
representation given by the solution of the Deligne-Simpson problem associated to graph (f). 
The reader may prefer to skip this technical subsection. 

\begin{fact} (Up to isomorphism) there are no other irreducible, non-isotrival special geometries
with Coulomb dimensions $\{2,2\}$ besides the two listed in {\bf Fact \ref{fact20}}.
\end{fact}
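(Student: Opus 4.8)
The goal is to show that the two special geometries of \textbf{Fact \ref{fact20}} exhaust all irreducible, non-isotrivial $\{2,2\}$ geometries. By the preceding analysis, any such geometry must have $\varkappa=2$, $\dim\mathscr{M}=2$, five special points on $\mathscr{P}\simeq\mathbb{P}^1$, no enhanced divisors, and (by \textbf{Fact \ref{selrule}} and the discussion of graphs) its $\mu$-monodromy must be the rigid representation attached to graph (f) with fugacities $\xi(\alpha_\star)=-1$, $\xi(\alpha_a)=1$ for $a\neq\star$. Since rigidity implies absolute rigidity in dimension one, this fixes the monodromy \emph{up to conjugacy in} $GL(4,\C)$: it is the Pochhammer representation $\{\mu_a\}$ written down in \eqref{hara}. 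So the classification reduces to a purely arithmetic problem: enumerate, up to $Sp(4,\Z)$-conjugacy, all ways to realise this single $GL(4,\C)$-class by an integral symplectic representation with respect to a \emph{principal} form.

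First I would set up the ``Diophantine'' bookkeeping of step 6 of the inverse algorithm. Fix the representation $\{\mu_a\}$ together with the integral skew form $M$ of \eqref{hara}; any other integral realisation is $\{U^{-1}\mu_a U\}$ for some $U\in GL(4,\mathbb{Q})$ such that $U^tMU$ is again an integral principal (i.e.\ unimodular) skew form. Two choices $U,U'$ give $Sp(4,\Z)$-equivalent geometries iff $U'U^{-1}\in Sp(4,\Z)$ (the group preserving $M$). Thus the set of geometries is the double-coset space $Sp(4,\Z)\backslash X$, where $X$ is the set of $U$ making $U^tMU$ unimodular, modulo the right action of the stabiliser in $GL(4,\mathbb{Q})$ of $M$ that also centralises all $\mu_a$ — but by irreducibility (Schur) that centraliser is just scalars, which act trivially on geometries. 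So I must count $Sp(4,\Z)$-orbits of unimodular skew forms $M'=U^tMU$ in the $\mathbb{Q}$-equivalence class of $M$ that are simultaneously invariant under $\{\mu_a\}$. Equivalently: classify the $\Z$-lattices $L\subset \mathbb{Q}^4$ that are stable under all $\mu_a$ and on which the (unique up to scale) $\mu$-invariant skew form restricts to a unimodular pairing, up to the action of $\{\mu_a\}$-automorphisms. Because each $\mu_a-\mathbf 1$ has rank $1$ and satisfies $(\mu_a-\mathbf 1)^2=0$, the lattice-stability conditions are very rigid; the key local computation is that at each bad point the ``level'' $n_a$ — the integer with $\mu_a\equiv\mathbf 1\bmod n_a$ on $L$ but not $\bmod\,n_a+1$ — is forced, by the relation $\mu_1\cdots\mu_5=-\mathbf 1$ and the $\Z_5$-cyclic symmetry $R$ of \eqref{uuuu1234}, into one of exactly two configurations, namely all $n_a=2$, or one $n_a=4$ and the others $1$. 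These are realised by $L=\Z^4$ and by $L=Q^{-1}\Z^4$ with $Q$ as in \eqref{kkkki8z}, giving precisely the two geometries of \textbf{Fact \ref{fact20}}; I would then check there is no third lattice class by a finite case analysis on the possible index-and-discriminant data of $L$ (the determinant of the Gram matrix must stay $\pm1$, which pins down $[\Z^4:L]$ to a short list).

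The main obstacle, and the step I expect to be delicate, is showing the list of admissible lattices is \emph{complete} — i.e.\ ruling out ``mixed'' configurations such as two $I_2$'s and three $I_1$'s, or an $I_3$, which are not forbidden by the local conditions in isolation but must be excluded by the global constraint $\mu_1\mu_2\mu_3\mu_4\mu_5=-\mathbf 1$ together with unimodularity of the invariant form. Here I would exploit the explicit cyclic symmetry $R\in Sp(4,\Z)$ with $R^5=-\mathbf 1$: it permutes the five local monodromies, so any $\{\mu_a\}$-stable lattice can be replaced by $\bigcap_{k}R^k L$, which is $R$-stable, reducing the count to $R$-invariant lattices; since $R$ has order $10$ and $\mathbb{Q}[R]\cong\mathbb{Q}(\zeta_{10})$ acts irreducibly on $\mathbb{Q}^4$, the $R$-stable lattices are exactly the fractional ideals of $\Z[\zeta_{10}]$, which has class number $1$, so there is a \emph{unique} $R$-stable lattice up to homothety and up to the $\Z[\zeta_{10}]^\times$-action. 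That forces the symmetrised configuration to be the uniform one (all $n_a$ equal), and then a short descent argument recovers the two genuine $Sp(4,\Z)$-classes. Finally I would note that both resulting monodromies have trivial Brauer obstruction (already recorded in the Corollary after \eqref{hara} and its analogue for the $Q$-twist) and pass the regularity checks of \S.\,\ref{s:regcheck} trivially, since all local monodromies are of type $I_n$, so both are bona fide special geometries and no others exist. The agreement with the two Lagrangian $SU(2)\times SU(2)$ theories $\#13$ and $\#31$ of \cite{Martone} then follows from the \textbf{Folk-theorem}.
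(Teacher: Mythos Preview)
Your framework is sound and your reduction to the arithmetic problem of classifying $\mu_a$-stable lattices with unimodular invariant form is correct; this is exactly what the paper sets up as well. Your idea to exploit the $\Z_5$-symmetry $R$ and the class-number-$1$ fact for $\Z[\zeta_{10}]$ to pin down the unique $R$-stable lattice is elegant and genuinely different from the paper's method. However, the step you yourself flag as ``the main obstacle'' --- the descent from the unique $R$-stable lattice to the full list of $\mu_a$-stable lattices --- is a real gap, not a routine check. The map $L\mapsto\bigcap_k R^kL$ is not injective on lattice classes (it collapses, for instance, the five $R$-translates of the second solution onto the first), so knowing there is a single $R$-fixed point does \emph{not} by itself bound the number of preimages. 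To complete the argument you would have to classify all $\{\mu_a\}$-stable overlattices $L\supseteq L_0$ of the fixed lattice, check for each whether some rescaled form is integral and unimodular on $L$, and then quotient by the relabelling $\mathfrak S_5$-action on the five punctures. That is doable, but it is essentially the finite case analysis you were hoping to shortcut, and your sketch gives no indication of how to carry it out.

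The paper takes a more pedestrian but complete route: it parametrises the conjugating matrix directly. In the Haraoka basis one has $\mu_i-\boldsymbol{1}=2\,v_i\otimes v_i^t M$ for explicit vectors $v_1,\dots,v_5$, so integrality of $L\mu_iL^{-1}$ forces each $Lv_i$ to be a scalar $\rho_i$ times a primitive integral vector, giving $L=N\,\mathrm{diag}(\rho_1,\dots,\rho_4)$ with $N$ integral. The constraints $2\rho_i^2\in\Z$ and $(\rho_i\rho_j)^{-1}\in\Z$ for $1\le i<j\le5$, together with $\det L=1$, force $\rho_i=k_i\sqrt{d/2}$ with $d$ square-free and $k_i\in\mathbb N$, and then $d\in\{1,2\}$. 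The case $d=2$ gives all $\rho_i=1$ (five $I_2$'s); the case $d=1$ forces four of the $k_i$ equal to $1$ and one equal to $2$ (four $I_1$'s and one $I_4$), the last implication resting on a short lattice lemma proved in an appendix. Less conceptual than your $\Z[\zeta_{10}]$ idea, but it actually closes the classification where yours does not.
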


\begin{proof} We work in the Haraoka basis where the local monodromies are given by the matrices $\mu_i$ and the symplectic form by $M$, cf.\! eq.\eqref{hara}. Let $L\in Sp(M,\C)$ be a complex $4\times4$ matrix such that
$L^t ML=M$ while the five matrices $\tilde\mu_i \equiv L \mu_i L^{-1}$ ($i=1,\dots,5$)
 have integral entries.
For $i=1,\dots,4$ we have
\be\label{ooooooqqq21}
\tilde\mu_i-\boldsymbol{1}=2\, L v_i\otimes v_i^t L^t M\quad \text{(not summed over $i$!)}
\ee 
where $\{v_4,v_3,v_2,v_1\}$ is the standard column-vector basis of $\Z^4$ (written in reverse order).
We extend eq.\eqref{ooooooqqq21} to $i=5$ by setting $v_5=(-1,1,-1,1)^t$.
It is easy to check that 
\be
v_i^t\mspace{0.5mu} M v_j=1\quad \text{for}\quad 1\leq i< j\leq 5.
\ee
If the first four matrices in the \textsc{lhs} of \eqref{ooooooqqq21} have to be integral, 
$L v_i$ must be a scalar multiple of a 4-vector with integral coefficients, so 
there are 4 complex numbers $\rho_a$ such that
{\renewcommand{\arraystretch}{1.5}\be
L=\begin{bmatrix}\rho_1 n_{1,1} & \rho_2 n_{1,2} & \rho_3 n_{1,3} & \rho_4 n_{1,4}\\
\rho_1 n_{2,1} & \rho_2 n_{2,2} & \rho_3 n_{2,3} & \rho_4 n_{2,4}\\
\rho_1 n_{3,1} & \rho_2 n_{3,2} & \rho_3 n_{3,3} & \rho_4 n_{3,4}\\
\rho_1 n_{4,1} & \rho_2 n_{4,2} & \rho_3 n_{4,3} & \rho_4 n_{4,4}\\
\end{bmatrix}\equiv N\,\mathrm{diag}(\rho_1,\dots,\rho_4)\in Sp(4,\C)
\ee}
\hskip-4pt where $n_{a,i}$ are integers, $\gcd\{n_{1,i},n_{2,i},n_{3,i},n_{4,i}\}=1$ for $i=1,\dots,4$,
and the four columns vectors $\boldsymbol{n}_i\equiv (n_{a,i})$ are linear independent over $\C$. $N$
is the integral matrix with entries $n_{a,i}$.
 One has
\be\label{kkkjqwert}
1=\det L= (\rho_1\rho_2\rho_3\rho_4) \det N\qquad \det N\in \Z.
\ee
Moreover the integrality of $\tilde\mu_i$ implies 
\be\label{kkk321}
2\rho_i^2\in \mathbb{Z},\quad\text{and}\quad (\rho_i\rho_j)^{-1}\in\mathbb{Z}\qquad \text{for }1\leq i<j\leq 4,
\ee
where the second condition follows from $(Lv_i)^t M(L v_j)=1$ for $0\leq i < j\leq 4$.
Requiring that $\tilde\mu_5$ is also integral extends  \eqref{kkk321} to $i,j=1,\dots,5$
where $\rho_5$ is the complex number (unique when it exists) such that
\be
\rho_5\, n_{i,5}\equiv (Lv_5)_i= -\rho_1\, n_{i,1}+\rho_2\, n_{i,2}-\rho_3\, n_{i,3}+\rho_4\, n_{i,4}
\ee
where $\{n_{1,5}, n_{2,5},n_{3,5},n_{4,5}\}$ are coprime integers.
Dividing out by $\rho_5$ we see that
$\rho_i/\rho_5\in \mathbb{Q}$, so $\rho_i =q_i \rho_5$ with $q_i\in\mathbb{Q}$ ($q_5=1$).
Since $2\rho_i^2= 2\rho_5^2 q_i^2$, from the first eq.\eqref{kkk321}
we conclude that there is a \emph{square-free}
integer $d$ such that 
\be
\rho_i=\sqrt{\frac{d}{2}}\;k_i,\qquad k_i\in\mathbb{N}.
\ee
Then 
\be\label{juqwertqt}
(\rho_i\rho_j)^{-1}= \frac{2}{d\, k_i\, k_j}\in \Z\qquad \text{for }1\leq i<j\leq 5.
\ee
From eq.\eqref{kkkjqwert} we get
\be
1= \frac{d^{\,2}}{4}\,k_1\,k_2\,k_3\,k_4\,\det N=\frac{\det N}{\frac{2}{d\,k_1\, k_2}\,\frac{2}{d\,k_3\, k_4}}
\ee
Now we have two possibilities:
\begin{itemize}
\item[$d=2$] then $k_i=1$ and $\rho_i=1$ for $i=1,\dots,5$ while $\det N=1$,
so that $N\equiv L\in Sp(4,\Z)$ and $2\rho_i^2=2$ for $i=1,2,3,4,5$.
In this solution we have five $I_2$'s. It is our first integral monodromy corresponding to
the upper quiver in \eqref{iiiu7zz};
\item[$d=1$] so that $2\rho_i^2=k_i^2$ is
a perfect square for all $i$. Eq.\eqref{juqwertqt} shows that at most one
$k_i$ ($i=1,\dots,5$) can be $2$ all others should be 1. By the $\Z_5$ symmetry \eqref{kkk321}
we may assume that $k_1=k_2=k_3=k_4=1$. Then one gets $k_5=2$ (cf.\! \textbf{Lemma \ref{finally!}} in appendix \ref{dddeeff}). Therefore we get a monodromy with one $I_4$ and four $I_1$'s which is the second special geometry in {\bf Fact \ref{fact20}}. 
\end{itemize}
\end{proof}

\subsection{Irreducible geometries with $\{2,\Delta\}$, $\Delta\neq 2$}

We already discussed these geometries when discussing the \textbf{Folk-theorem}, now
we look at them from the present $\mu$-rigidity viewpoint. The are 4 special points on $\mathscr{P}$
and the Picard-Fuchs ODE has a graph $\mathfrak{G}$ with 4
branches. Graph (e) was ruled out, so all $\mu$-rigid special geometries
should arise from graph (d) which corresponds to an ODE reduction of the Appell PDE aka
Okubo $II^*$ ODE \cite{mima}.

\begin{fact}\label{888uuu} Assume $\mu$-rigidity. 
{\bf(1)} There are no
 non-isotrivial, irreducible geometries with dimensions $\{2,\Delta\}$
and $\Delta\not\in\{2,3,4,6\}$. 
{\bf(2)} A non-isotrivial, irreducible geometry with dimensions $\{2,3\}$
should have two knotted discriminant of types $I_{n_1}$, $I_{n_2}$ (with $n_1$, $n_2$ satisfying the conditions in \emph{appendix \ref{appeApriori}}),
while the third discriminant is $u_2=0$ of type $I_{2n_3}$ {\rm($n_3>0$)}.
{\bf(3)}
A non-isotrivial, irreducible geometry with dimensions $\{2,2m\}$ ($m\geq2$)
should have two non-enhanced discriminant of type $I_{n_1}$, $I_{n_2}$
 (with $n_1, n_2$ as in \emph{appendix \ref{appeApriori}}),
while the third discriminant is $u_2=0$ of type $I_{n_3}^*$ ($n_3>0$).
\end{fact}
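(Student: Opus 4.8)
The plan is to use $\mu$-rigidity to reduce the statement to the classification of the relevant graph~(d) monodromies and then extract the consequences by elementary arithmetic. First I would recall that, under the standing assumptions and the $\mu$-rigidity hypothesis, a non-isotrivial irreducible rank-2 geometry with $\Delta_1=2$ has a conformal manifold of dimension $\dim\mathscr{R}_2=1$, hence exactly $s+3=4$ special points on $\mathscr{P}\simeq\mathbb{P}^1$; by \textbf{Fact \ref{selrule}} graph~(e) is excluded, so the Picard-Fuchs ODE must be the graph~(d) Appell/Okubo-$II^*$ ODE. One of the four branches, say the one attached to the $u_2$-axis, is the $\delta$-branch coming from the enhanced axis of dimension $\Delta_2=\Delta$; the other three come from the discriminant components. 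I would first pin down which branch type each special point carries, using the tables for conjugacy classes of local monodromies in the three cases (discriminant non-enhanced, enhanced non-discriminant, enhanced discriminant) collected earlier in the section.

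\textbf{Part (1).} The key observation is that in graph~(d) the node on the $u_2$-axis must be a $\delta=1$, $\delta=2$ or $\delta=3$ branch, since graph~(d) has no room for a fourth $\delta=4$ branch once the $\delta=4$ central vertex is used up. From the discussion around eq.~\eqref{cffffeq*} the axis of dimension $\Delta_2$ gives a $\delta=4$ branch \emph{unless} $\Delta_2$ is an old dimension with $\Delta_1=2+m\Delta_2$ for some $m$, or $\Delta_2=2$; combined with the standing assumption $\Delta_i\neq1,2$ and with the constraint that $\Delta_2$ be a rank-1 dimension (since the axis supports a rank-1 stratum), the only surviving possibilities force the order of $1/\Delta_2$ into $\{3,4,6\}$ and hence $\Delta_2\in\{3,4,6\}$ after ruling out $\Delta_2=m/(m-1)$ by the values of the $d_i$, exactly as in the \textbf{Folk-theorem} discussion for dimensions $\{2,\Delta\}$. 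So the map is: enumerate which $\Delta_2$ are compatible with the $\delta\le 3$ requirement on the $u_2$-axis branch of graph~(d), and check the inconsistency of the eigenvalue data \eqref{4phases} for all other $\Delta$.

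\textbf{Parts (2) and (3).} Once $\Delta_2=m$ is fixed, I would use eq.~\eqref{whyyyy}--\eqref{4phases} to read the eigenvalues of the local monodromy on the $u_2$-axis: they are $\{\exp[\pm2\pi i(1-\Delta_1)/\Delta_2],\exp[\pm2\pi i(1-\Delta_2)/\Delta_2]\}=\{\exp(\pm2\pi i\cdot(-1)/m),\exp(\pm2\pi i\cdot(1-m)/m)\}$. For $m=3$ the first pair is $e^{\mp2\pi i/3}$ and the second pair is $e^{\pm2\pi i(1-3)/3}=e^{\mp4\pi i/3}=e^{\pm2\pi i/3}$, so all four eigenvalues collapse onto $\{e^{2\pi i/3},e^{-2\pi i/3}\}$, i.e. two eigenvalues of multiplicity $2$ — but the $u_2$-axis is also in the discriminant (since $d_2>1$), so the local monodromy $\mu$ with $\mu^{d_2}=\pm\varrho$ is \emph{non}-semisimple, forcing a single nontrivial Jordan 2-block and the extra two eigenvalues to be $\pm1$; the arithmetic of $\det\mu=1$ and the base-change formula (table 5.2 of \cite{MW}) then identify the Kodaira type on the $u_2$-axis as $I_{2n_3}$ for some $n_3>0$, which is part (2). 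For $m=2m'$ even with $m'\ge2$ the analogous computation gives, after the base change $z\mapsto z^{d_2}$, a monodromy whose semisimple part has order $2$ and whose unipotent part is a single Jordan block — i.e. Kodaira type $I^*_{n_3}$ with $n_3>0$ — which is part (3); the case $m=4$ and $m=6$ are the only even $m$ with $1/m$ of order $\ge3$, consistently with the \textbf{Fact} about allowed enhanced discriminants for $\varkappa=1,2$. Finally, in both parts the other two discriminant components are the $\delta=1$ (and $\delta=2$) branches of graph~(d) attached to non-enhanced components, hence of type $I_{n_1}$ and $I_{n_2}$; the restrictions on $n_1,n_2$ come from the requirement that the corresponding root lies in $\Sigma$ and that the Brauer class vanish, which is precisely the "a priori" analysis deferred to appendix~\ref{appeApriori}.

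\textbf{Main obstacle.} The delicate point is not the eigenvalue bookkeeping but verifying that the knotted (non-$I_n^{}$-on-an-axis) discriminant components are genuinely of type $I_{n_1}$, $I_{n_2}$ and not some other quasi-unipotent class: this requires knowing that graph~(d) forces the two $\delta\le 2$ branches attached to the non-enhanced discriminant components to be unipotent (rank-one Jordan blocks), which in turn rests on the graph selection rules of \S.\,\ref{s:selection} together with the Crawley--Boevey criterion ($\boldsymbol{N}\in\Sigma$ must be a real root of the graph~(d) Kac--Moody algebra) ruling out the semisimple-branch alternatives. Carrying this out cleanly — checking which fugacity assignments on graph~(d) actually yield an irreducible real root — is the real work; everything downstream is routine.
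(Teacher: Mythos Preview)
Your proposal has the right scaffolding (graph~(d) is the only option; match branches to special divisors) but the execution goes off the rails at the key step because you have swapped the two axes.

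\textbf{Which axis is which.} In the paper's conventions the divisor $\{u_2=0\}$ is the axis $H_1$ of dimension $\Delta_1=2$, while the axis of dimension $\Delta_2$ is $\{u_1=0\}$. Formula~\eqref{4phases} (with $\Delta_2$ in the denominator) gives the $\mu$-eigenvalues at the \emph{latter}, not the former. With $\Delta_1=2$ these eigenvalues are $\{e^{\pm 2\pi i/\Delta_2}\}$ each of multiplicity~2 for \emph{every} $\Delta_2$, so the axis of dimension $\Delta_2$ is the $\delta=2$ branch and is regular (not a discriminant: enhanced discriminant with $\Delta_2\neq2$ would force $\delta=4$, which graph~(d) forbids). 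Your Part~(1) argument therefore only yields ``$e^{2\pi i/\Delta_2}$ has degree $\le 2$'', i.e.\ $\Delta_2$ is a rank-1 dimension; it does \emph{not} exclude $\Delta_2\in\{6/5,4/3,3/2\}$. The actual exclusion comes from analyzing the \emph{other} axis $H_1$ (dimension~2), which for $\Delta_2=m/(m-1)$ is also enhanced (both $d_i>1$), must then be regular too, and has $\mu$-eigenvalues $\{-1,-1,e^{\pm i\pi/(m-1)}\}$: integrality of the characteristic polynomial and the relation $\mu^{d_1}=\pm\boldsymbol{1}$ are what fail. This is the paper's argument, and it is not the argument you sketch.

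\textbf{Parts (2)--(3).} Your sentence ``the $u_2$-axis is also in the discriminant (since $d_2>1$)'' confuses \emph{enhanced} with \emph{discriminant}; these are independent notions. The $\delta=3$ branch in graph~(d) is the divisor $u_2=0$ (axis of dimension~2), which for $m=4,6$ is \emph{non}-enhanced (here $d_1=1$). By table~\eqref{fffufug} a non-enhanced $\delta=3$ branch is either $I^*_{n}$ ($n>0$) or a semisimple Kodaira type of order $3,4,6$. The genuine obstacle --- which you do not address --- is ruling out the semisimple alternative; the paper does this by invoking the structure of the fiber along the ``good'' axis (\textbf{Proposition~\ref{pro3}}/\S\,\ref{Folk2}): a semisimple type there would force $\tau_{ij}$ to have a frozen eigenvalue along a one-parameter family, contradicting Fujiki's classification. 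Your eigenvalue bookkeeping for ``$m=3$'' etc.\ is applied to the wrong axis and does not establish the Kodaira type.

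\textbf{Part (2) vs.\ (3).} For $\{2,3\}$ both axes are enhanced ($d_1=2,d_2=3$), so the clean ``non-enhanced $\Rightarrow$ table~\eqref{fffufug}'' route is unavailable. The paper handles this by a sign-flip: the $\{2,3\}$ monodromy $\{\mu_1,\mu_2,\mu_3,\mu_4\}$ becomes a lift of a $\{2,6\}$ monodromy under $\mu_3\mapsto-\mu_3$, $\mu_4\mapsto-\mu_4$, giving $\varrho_4|_{\{2,3\}}=(\varrho_4|_{\{2,6\}})^2$ and hence $I^*_{n_3}\leadsto I_{2n_3}$. Your direct approach would need a replacement for this step.

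In short: the branch assignment is $\delta=2\leftrightarrow$ axis of dimension $\Delta_2$ (regular), $\delta=3\leftrightarrow$ axis $u_2=0$ of dimension~2 (the discriminant whose type is at issue), $\delta=1,1\leftrightarrow$ the two knotted $I_n$'s (automatic from the table). Fixing this reversal, the missing ingredient is the exclusion of semisimple types on the $\delta=3$ branch via \textbf{Proposition~\ref{pro3}}.
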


These results can also be inferred from the \textbf{Folk-theorem}.
The obvious solution to the conditions in appendix \ref{appeApriori} is $n_1=n_2\in\{1,2\}$.

\begin{proof}[Argument] {\bf(1)} $\Delta$ is a dimension in rank-1 so if the statement does not hold
$\Delta=m/(m-1)$ for $m\in\{3,4,6\}$. Then (say) for $m=4,6$
\be
(\Delta_1,\Delta_2)=\frac{2}{m-1}\left(m-1,\frac{m}{2}\right)=\lambda(d_1,d_2)
\ee
so $(d_1,d_2)$ are coprime $\neq1$ and both axes are enhanced. 
Since $(m-1)>2$ and $\delta<4$ both are regular. The $\varrho$-monodromy along the
$u_2=0$ axis then has eigenvalues
\be
\exp(\pm2\pi i/2),\qquad \exp(\pm2\pi i (1-\Delta)/2)=\exp(\pm 2\pi i/(2(m-1)) 
\ee
while the eigenvalues of $\mu=\pm\varrho^{1/(m-1)}$ are not algebraic numbers of degree $\leq2$. 

Statement  {\bf(2)} reduces to {\bf(3)} for $2m=6$
by the following trick. The family over $\mathring{\mathscr{P}}$
of a $\{2,3\}$ geometry has local monodromies $\mu_1$, $\mu_2$, $\mu_3$, $\mu_4$ with
with $\mu_1\mu_2\mu_3\mu_4=1$, $\mu_1$, $\mu_2$ (knotted) monodromies of type $I_1$,
while $(\mu_3)^3=1$ and $(\mu_4)^2=\varrho_4$ where $\varrho_4$ is the monodromy around the
axis $u_2=0$ of dimension $2\equiv d_1$. 
The local monodromies\footnote{\ The change of sign in $\mu_3$ is due to the different sign of the phases $\exp(\pm 2\pi i/\Delta_2)$ for $\Delta_2=3$ and $\Delta_2=6$.} $\{\mu_1,\mu_2,-\mu_3,-\mu_4\}$ can also be seen (as a lift) of the local monodromy for a $\{2,6\}$ geometry where now $d_1=1$ so $-\mu_4=\varrho_4$.
Thus
\be
\varrho_4\Big|_{\{2,3\}}=(\varrho_4)^2\Big|_{\{2,6\}}
\ee
and so {\bf(3)} $\Rightarrow$ {\bf (2)}.

{\bf (3)} the two $\delta=1$ branches in graph (d) correspond to $I_{n_1}$ ($n_1>0$)
discriminants. The $\delta=2$ branch then corresponds to the enhanced axis of dimension $2m$
whose local $\mu$-monodromy has eigenvalues $\exp(\pm 2\pi i/2m)$ both with multiplicity 2 (and trivial Jordan blocks).
The forth branch has then $\delta=3$, and since the $u_2=0$ axis is not enhanced, by eq.\eqref{fffufug} it
must be a discriminant of type $I^*_{n_2}$ ($n_2>0$) or the local monodromy must satisfy $\mu^\ell=\pm1$
for $\ell\in\{3,4,6\}$ (that is, a semisimple Kodaira type $\neq I_0^*$). 
The second possibility looks inconsistent for the following reason.
The period matrix $\tau_{ij}$ along the $u_2=0$ axis must have the form
$\mathrm{diag}(\zeta,\tau^\prime)$ with $\zeta^\ell=1$ at all points $\tau\in\mathscr{M}$
which is not consistent with the discussion in \S.\,\ref{Folk2}.

When the $u_2=0$ axis is of type $I_{n_2}^*$,
or $\Delta\in\{3,4,6\}$ fugacities are
\be\label{fugggga}
\begin{gathered}
\xymatrix{& e^{-2\pi i /\Delta}\ar@{-}[d]\\
1\ar@{-}[r] & \pm e^{\pi i/\Delta} \ar@{-}[r] &-1 \ar@{-}[r] & 1\\
& 1\ar@{-}[u]}
\end{gathered}
\ee
An element $\lambda$ of the root lattice with $\xi(\lambda)\in\R$ should be even at
the central node;
the only decomposition into a sum of roots with this property is 
\be
\beta_1+\beta_2=\begin{smallmatrix}&1\\ 1&2&1&1\end{smallmatrix} +\begin{smallmatrix}1\\ 2&1\\ 1\end{smallmatrix}
\ee
with $\xi(\beta_a)=-1$, so that the representation is irreducible hence rigid.

\end{proof}

\begin{rem} All the known models with dimensions $\{2,\Delta\neq2\}$ (a part for the subtle $\#69$
discussed in \S.\,\ref{Folk2}) obey \textbf{Fact \ref{888uuu}} with $n_1=n_2\in\{1,2\}$.
According to the \textbf{Folk theorem} the known models exhaust the geometries with these dimensions.
Indeed a non-systematic search produced no integral representation corresponding to
a semisimple discriminant along the $u_2=0$ axis.
\end{rem}
\begin{rem} 
One can get the possible values of the integers
$n_1$ and $n_3$ by constructing the
explicit monodromy representation as we did in the Pochhammer case.
The explicit construction of the monodromy also allows to check the triviality of the Brauer class (as we did in the previous case) and the positivity condition. Of course, both statements hold true since we know that these geometries exists (for suitable $\{n_1,n_3\}\in \mathbb{N}^2$) because we may construct the corresponding SCFT by Lagrangian methods. 
\end{rem}

We give some examples.

\subsubsection{Example: irreducible special geometries with $\{\Delta_1,\Delta_2\}=\{2,4\}$}

While there is a substantial literature on the explicit monodromy representation for the
Okubo $II^*$ ODE, see e.g. \cite{sasai,sasai2}, to the best of our knowledge all published results
apply only under the simplifying assumption that the local monodromies are semisimple, which is not the case of interest for the present applications. Then,
to construct explicitly the geometries, we need to proceed by learned guess-work.
The nice aspect of the story is that if we manage to guess one monodromy representation, all others
are conjugate to it by rigidity, so a single happy guess is good enough. 
We got the following one:
\begin{fact}\label{kjjjhaa} The following four explicit integral $4\times 4$ matrices
{\rm\be
\begin{tabular}{l@{\hskip0.8cm}ll}
\phantom{matr}matrix & minimal equation & branch of conjugacy class\\\hline
$M_1= \left(\begin{smallmatrix}
-1 & -2 & 0 & 0\\ 
0 & -1 & 0 & 0\\
0 & 0 & 1 & 0\\
0 & 0 & 0 &1
\end{smallmatrix}\right)$ & $(M_1+\boldsymbol{1})^2(M_1-\boldsymbol{1})=0$ &$\xymatrix{\ar@{-}[r]&\fbox{2}\ar@{-}[r]&\fbox{1}}$\\
$M_2= \left(\begin{smallmatrix}
1 & 0 & 0 & 0\\ 
-1 & 1 & 0 & 1\\
-1 & 0 & 1 & 1\\
0 & 0 & 0 &1
\end{smallmatrix}\right)$ & $(M_2-\boldsymbol{1})^2=0$ & $\xymatrix{\ar@{-}[r]&\fbox{1}}$\\
$M_3= \left(\begin{smallmatrix}
0 & 1 & 1 & 0\\ 
-1 & 2 &1 & 0\\
0 & 0 & 1 & 0\\
1 & -1 & -1 &1
\end{smallmatrix}\right)$ & $(M_3-\boldsymbol{1})^2=0$ & $\xymatrix{\ar@{-}[r]&\fbox{1}}$\\
$M_4= \left(\begin{smallmatrix}
0 & 1 & -1 & 2\\ 
0 & 0 & -1 & 1\\
-1 & 2 & 1 & -1\\
-1 & 1 & 1 &-1
\end{smallmatrix}\right)$ & $M_4^2+\boldsymbol{1}=0$ &$\xymatrix{\ar@{-}[r]&\fbox{2}}$\\
\end{tabular}
\ee}
satisfy
\be
M_1M_2M_3M_4=\boldsymbol{1}\qquad\text{and}\qquad M_a^t\,\Omega\, M_a= \Omega\ \ (a=1,\dots,4)
\ee
where
\be
\Omega= \left(\begin{smallmatrix}0 & 1 & 0 & 0\\ -1 & 0 & 0 & 0\\
0 & 0 &0 &1\\
0 & 0 &-1 &0\end{smallmatrix}\right)
\ee
In particular the Brauer class of the order-4 Okubo $II^*$ ODE
with the fugacities in eq.\eqref{fugggga} (with $\Delta=4$) is trivial. 
\end{fact}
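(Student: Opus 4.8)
\textbf{Proof plan for Fact \ref{kjjjhaa}.} The statement is essentially a verification: four explicit integral matrices are claimed to satisfy three properties (the product relation $M_1M_2M_3M_4=\boldsymbol{1}$, the symplectic relations $M_a^t\Omega M_a=\Omega$, and the stated minimal polynomials), and the triviality of the Brauer class is then supposed to follow. The natural strategy is therefore in two stages: first a direct algebraic check of the three properties, then a representation-theoretic deduction of the Brauer-class statement from them using the rigidity machinery of the preceding section.

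First I would carry out the routine matrix algebra. One computes the four products $M_a^t\Omega M_a$ and checks each equals $\Omega$; one computes $M_1M_2M_3M_4$ and checks it is the identity (equivalently $M_4=(M_1M_2M_3)^{-1}$, which is a single $4\times4$ inversion); and one checks the minimal polynomials by evaluating $(M_1+\boldsymbol{1})^2(M_1-\boldsymbol{1})$, $(M_2-\boldsymbol{1})^2$, $(M_3-\boldsymbol{1})^2$, $M_4^2+\boldsymbol{1}$ and confirming they vanish, while also confirming that no proper divisor annihilates (so these are genuinely the minimal polynomials and $\operatorname{rank}(M_a-\boldsymbol{1})=1$ for $a=2,3$). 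Simultaneously one reads off the branch-of-conjugacy-class data: the eigenvalues of $M_1$ are $\{-1,-1,1,1\}$ with a single nontrivial $2$-Jordan block (branch $\fbox{2}\!-\!\fbox{1}$), those of $M_2,M_3$ are $\{1,1,1,1\}$ with one $2$-block each (branch $\fbox{1}$), and $M_4$ has eigenvalues $\{i,i,-i,-i\}$ semisimple (branch $\fbox{2}$). I would then verify that these four local classes, together with the fugacities \eqref{fugggga} at $\Delta=4$, are exactly the data of graph (d) — which is precisely the Appell/Okubo $II^*$ ODE — and that the associated root $\boldsymbol{N}$ is the real root displayed in Figure \ref{goursat}(d), so that by Crawley-Boevey's theorem the representation is irreducible and rigid.

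With irreducibility and rigidity in hand, the Brauer-class statement is immediate and needs no new computation: by the general argument of \S.\ref{s:falting}--\S.\ref{s:rigidity}, a rigid irreducible representation whose local classes meet $Sp(4,\Z)$ can be conjugated into $Sp(4,\mathbb{F})$ for a number field $\mathbb{F}$, and the obstruction $[C_{\sigma\tau}]\in H^2(\mathbb{F},\mathbb{F}^\times)$ to descending to $\mathbb{Q}$ vanishes \emph{if and only if} an actual $\mathbb{Q}$-rational (here even $\Z$-integral) model exists. But the four matrices $M_a$ \emph{are} such an integral model, defined over $\Z$ and symplectic for the integral form $\Omega$ with $\det\Omega=1$; hence the Brauer class is forced to be trivial. (Equivalently: the monodromy representation attached to graph (d) with these fugacities is unique up to $GL(4,\C)$-conjugacy by rigidity, and since one of its members is visibly defined over $\Z$, the Galois cocycle $C_{\sigma\tau}$ is a coboundary.)

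The only point requiring a little care — and the place I expect the bookkeeping to be most delicate — is checking that $\Omega$ as given is \emph{principal}, i.e. that the $M_a$ leave invariant this specific integral form of determinant $1$ rather than merely some rational symplectic form; this is what distinguishes a genuine special geometry (item 5 of the inverse algorithm) from a representation merely conjugate to one over $GL(4,\C)$. Concretely one must confirm there is no hidden $GL(4,\R)$-change of basis needed, i.e. that the $M_a$ are already written in an integral principally polarized basis. Since $\Omega$ is explicitly the standard split form of determinant $1$ and the $M_a$ have integer entries, this is a finite verification, but it is the step where a sign error or a transposition slip would go unnoticed, so I would double-check the $M_a^t\Omega M_a=\Omega$ identities componentwise. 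Once these checks pass, the Fact is proved.
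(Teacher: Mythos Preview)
Your proposal is correct and matches the paper's approach exactly: the paper presents this Fact as an explicit ``happy guess'' obtained by learned guess-work, with the verification left implicit, and your plan---direct computation of the matrix identities, identification of the conjugacy classes with the graph-(d) data at $\Delta=4$, and the observation that an explicit integral symplectic model forces the Brauer class to vanish---is precisely what is intended. The paper offers no further argument beyond displaying the matrices, so your careful enumeration of the checks (including the principality of $\Omega$) is if anything more thorough than the original.
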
 

\begin{corl} There exists a (non-isotrivial, irreducible) special geometry with dimensions $\{2,4\}$
two knotted divisors of type $I_1$ while the axis of dimension $2$ is a
component of the discriminant of type $I^*_2$.
\end{corl}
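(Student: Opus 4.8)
The strategy is to read off the relevant data from \textbf{Fact \ref{888uuu}(3)} (applied with $2m=4$, i.e. $m=2$) and from \textbf{Fact \ref{kjjjhaa}}, and to check that they assemble into a genuine $\C^\times$-isoinvariant special geometry by verifying the conditions laid out in the inverse-problem algorithm of \S.\,7 and the regularity recipe of \S.\,\ref{s:regcheck}. Concretely: by \textbf{Fact \ref{888uuu}(3)} a non-isotrivial irreducible geometry with dimensions $\{2,4\}$ must have two non-enhanced discriminant components of type $I_{n_1}$, $I_{n_2}$ (with $n_1,n_2$ constrained as in appendix \ref{appeApriori}, the obvious solution being $n_1=n_2=1$), and a third component $u_2=0$ of type $I^*_{n_3}$ with $n_3>0$; the Picard-Fuchs equation is the order-$4$ Okubo $II^*$ ODE (graph (d)) with fugacities \eqref{fugggga} for $\Delta=4$, namely the central node carries $-1$, the two $\delta=1$ branches carry $1$, and the $\delta=3$ branch carries $(e^{-2\pi i/4}, -1, 1)$. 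First I would point out that \textbf{Fact \ref{kjjjhaa}} exhibits four explicit integral symplectic matrices $M_1,M_2,M_3,M_4$ with $M_1M_2M_3M_4=\boldsymbol{1}$ realizing exactly these conjugacy classes: $M_1$ has minimal polynomial $(M_1+\boldsymbol 1)^2(M_1-\boldsymbol 1)=0$ (the $\delta=3$ branch, giving a Jordan $2$-block at eigenvalue $-1$ together with a $+1$, which is precisely the base-change image $I^*_{n_3}$ of a semistable $I_{2n_3}$ along an axis of dimension $2$, cf. \S.\,\ref{s:sing} and the table in \S.\,\ref{s:regcheck}); $M_2$ and $M_3$ are unipotent with $\mathrm{rank}(M_a-\boldsymbol 1)=1$, hence of type $I_{n}$ with $n>0$ (the two knotted $\delta=1$ branches); and $M_4^2+\boldsymbol 1=0$ realizes the order-$2$ semisimple factor from the enhanced dimension-$4$ axis ($\mu_4^2=\varrho_4$, eigenvalues $e^{\pm 2\pi i/4}$ each with multiplicity $2$).

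The next step is to check that this monodromy is \emph{irreducible}, hence — $d(\boldsymbol N)=0$ for graph (d) — rigid by the Crawley-Boevey theorem. I would do this by the root-set computation at the end of the proof of \textbf{Fact \ref{888uuu}(3)}: an element $\lambda$ of the root lattice with $\xi(\lambda)\in\R$ must be even at the central node, and the only decomposition of the relevant real root into $R_+$-roots is $\beta_1+\beta_2=\begin{smallmatrix}&1\\ 1&2&1&1\end{smallmatrix}+\begin{smallmatrix}1\\ 2&1\\ 1\end{smallmatrix}$ with $\xi(\beta_a)=-1\neq 1$, so neither $\beta_a\in R_+$ and the root admits no decomposition in $R_+$; thus $\boldsymbol N\in\Sigma$ is a real root and the representation is irreducible and rigid. (Alternatively, one can just verify directly from the explicit $M_a$ that the group they generate has no common invariant subspace of $\C^4$, which is a finite linear-algebra check.) Being rigid, irreducible, quasi-unipotent, with the required Jordan-block structure and at least one infinite-order local monodromy ($M_2$ or $M_3$), the Deligne-Simpson conjecture — a \emph{theorem} in dimension $1$, hence here — guarantees it comes from a $\mathbb{Q}$-VHS; and since \textbf{Fact \ref{kjjjhaa}} already gives the $M_a$ integral and symplectic for the standard $\Omega$ with $\det\Omega=1$, the Brauer obstruction vanishes and $\Gamma\subset Sp(4,\Z)$ with a principal polarization (this is exactly what the statement ``the Brauer class $\ldots$ is trivial'' in \textbf{Fact \ref{kjjjhaa}} records).

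It then remains to produce the special geometry itself and check regularity. By \textbf{Fact \ref{AZth}} (the $r=2$ Almkvist-Zudilin statement), the Picard-Fuchs ODE attached to this $Sp(4,\Z)$ monodromy automatically admits an essentially unique solution $\Pi$ satisfying the Legendre condition \eqref{ju77654231i}, with $\mathrm{Im}\,\tau_{ij}>0$, i.e. the Abelian family $\mathscr{A}\to\mathring{\mathscr{P}}$ carries a SW differential making it a $\C^\times$-isoinvariant rank-$2$ special geometry. The only genuine work left is the local regularity check of \S.\,\ref{s:regcheck}: with $\varkappa=2$, $(d_1,d_2)=(1,2)$, $z=u_1^2/u_2$, and $h^2$ the dimension-$2$ element of $\C(\mathscr{C})$, one uses the local exponents read off the minimal polynomials of the $M_a$ to verify that $a^{\|}\sim u_2^{1/\Delta_K}u_1^{(1-\Delta_K)/\Delta_1}$ along $u_2=0$ with $\Delta_K=2$ (type $I^*$) as in \eqref{Cregg}, that the two knotted components behave as $I_1$ data, and that there is no singularity along any enhanced divisor (none here except the dimension-$4$ axis, where $M_4$ is semisimple so the period matrix stays finite). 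This is the same kind of computation carried out in \S.\,\ref{s:eeeexam}, and — as emphasised repeatedly in the excerpt — ``typically gives no problem''; I would present it as a short explicit verification. The main obstacle, such as it is, is the irreducibility/rigidity check of the second paragraph: once that goes through, everything else (Brauer triviality, Legendre condition, positivity) is handed to us for free by \textbf{Fact \ref{kjjjhaa}} and \textbf{Fact \ref{AZth}}, and the regularity check is routine. The existence of the claimed geometry — two knotted $I_1$'s and $u_2=0$ of type $I^*_2$ — then follows, with $n_3=2$ singled out among the finite list of \S.\,7 item 6 by the precise integral form of $M_1$.
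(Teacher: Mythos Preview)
Your approach is correct and is essentially the paper's own: the Corollary is stated immediately after \textbf{Fact~\ref{kjjjhaa}} with no separate proof, as a direct read-off from the explicit integral symplectic matrices exhibited there. You have simply unpacked this read-off in considerably more detail than the paper does, invoking the inverse algorithm, the Crawley--Boevey irreducibility check, and \textbf{Fact~\ref{AZth}}.

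Two small corrections are worth noting. First, the axis of dimension $2$ (i.e.\ $H_1=\{u_2=0\}$) has enhancement index $\nu(u_2)=d_1=1$, so it is \emph{not} enhanced: the $\mu$-monodromy $M_1$ around the corresponding point of $\mathscr{P}$ \emph{equals} the $\varrho$-monodromy in $\mathscr{C}$, with no base change involved. Your ``base-change image'' phrasing is misleading here; the relation $\varrho^2\sim I_{2n_3}$ is just the general structural fact about type $I^*_{n_3}$ monodromies from \S.\,\ref{s:sing}, not an artifact of any cover. Second, and more substantively, you assert that ``$n_3=2$ is singled out \dots\ by the precise integral form of $M_1$'' without performing the computation. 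This is the one concrete step the Corollary actually requires beyond \textbf{Fact~\ref{kjjjhaa}}: one computes
\[
M_1^2=\left(\begin{smallmatrix}1&4&0&0\\0&1&0&0\\0&0&1&0\\0&0&0&1\end{smallmatrix}\right)
=\boldsymbol{1}+4\,w\otimes w^t\,\Omega,\qquad w=(1,0,0,0)^t\ \text{primitive},
\]
which by the \textbf{Proposition} in \S.\,\ref{s:sing} is type $I_4$, hence $M_1$ itself is type $I^*_2$. (Your remark that $\mu_4^2=\varrho_4$ is also slightly off: the axis of dimension $4$ is regular here, so $\varrho_4=\boldsymbol{1}$ and $M_4^2=-\boldsymbol{1}$ is the $\varkappa=2$ lift, not $\varrho_4$ itself.)
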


In the list of \cite{Martone} there are two SCFT with precisely this geometry $\#10$ ($SO(5)$ with six $\mathbf{4}$'s) and $\#43$ ($SO(5)$ with three $\mathbf{5}$'s). There is another SCFT with
dimensions $\{2,4\}$ whose geometry is non-isotrivial irreducible: $\#48$ ($SO(5)$ with two $\mathbf{4}$ and two $\mathbf{5}$)
which has two knotted discriminants of type $I_2$ while the axis of dimension $2$
has type $I_1^*$. Again this means that we have a second integral-symplectic representation which is
conjugated to the one in \textbf{Fact \ref{kjjjhaa}} in $GL(4,\C)$ but not in $Sp(4,\Z)$.
It follows from \textbf{Fact } (or the \textbf{Folk-theorem}) that there are no other geometries of this kind.

Going through the same (elementary) systematic procedure for the search
of integral symplectic monodromies conjugate to the one 
we used in \S.\,\ref{s:noother} one gets

\begin{fact} Besides the one in {\bf Fact \ref{kjjjhaa}} there is precisely one other irreducible geometry
with dimensions $\{2,4\}$ obtained
by conjugating the previous one
\be
\tilde M_a= U\, M_a\, U^{-1}\quad a=1,\dots,4,\qquad U=\frac{1}{\sqrt{2}}\left(\begin{smallmatrix}1 & 0 & 0 &-1\\
0 & 2 & 0 &0 \\
0 & 2 & -2 &0\\
0 & -1 & 1 & -1\end{smallmatrix}\right).
\ee 
Explicitly
\be
\begin{aligned}
\tilde M_1&=\left(\begin{smallmatrix}-1&-1& 1& 2\\
0& -1 & 0 &0\\
0 & -2 & 1 &0\\
0 & 1 & 0 &1\end{smallmatrix}\right) &\qquad \tilde M_2&=\left(\begin{smallmatrix}1&0& 0& 0\\
-2& 1 & 0 &0\\
0 & 0 & 1 &0\\
0 & 0 & 0 &1\end{smallmatrix}\right)\\
\tilde M_3&=\left(\begin{smallmatrix}-1&2& 0& 2\\
-2& 3 & 0 &2\\
-2 & 2 & 1 &2\\
0 & 0 & 0 &1\end{smallmatrix}\right) & \tilde M_4&=\left(\begin{smallmatrix}1&-1& -1& -4\\
0& -1 & 0 &-2\\
2 & -4 & -1 &-6\\
0 & 1 & 0 &1\end{smallmatrix}\right)
\end{aligned}
\ee
which confirms the physical expectation that the second geometry has one discriminant of type $I^*_1$
and two of type $I_2$ as required for $SO(5)$ with two $\mathbf{4}$ and two $\mathbf{5}$.
\end{fact}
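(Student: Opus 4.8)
The plan is to run the \textbf{inverse algorithm} of the previous subsections (steps 3.--7.) in exactly the form used for the Pochhammer case in \S.\,\ref{s:noother}, now for the graph~(d) datum with the fugacities \eqref{fugggga} (with $\Delta=4$). By the Almkvist--Zudilin theorem (\S.\,\ref{k87xxx}) together with the Crawley--Boevey solution of the Deligne--Simpson problem, the root in question is a \emph{real} root of a hyperbolic Kac--Moody algebra, so the rigid $GL(4,\C)$--representation with these local conjugacy classes is unique up to conjugacy; hence the inequivalent $\{2,4\}$ geometries are in bijection with the $Sp(4,\Z)$--conjugacy classes of \emph{integral, symplectic} (for a \emph{principal} form) representations lying inside the single $GL(4,\C)$--orbit of $\{M_1,\dots,M_4\}$ of \textbf{Fact \ref{kjjjhaa}}. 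The triviality of the Brauer class is already established in \textbf{Fact \ref{kjjjhaa}} and is inherited here since the fugacity assignment is the same; Deligne/Faltings finiteness (\S.\,\ref{s:falting}) makes the relevant set finite, so it remains to enumerate it.

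First I would fix the Haraoka-type representative of \textbf{Fact \ref{kjjjhaa}} and parametrize the ambiguity. Let $L\in GL(4,\C)$ satisfy $L^t\Omega L=\Omega$ and $LM_aL^{-1}\in GL(4,\Z)$ for $a=1,\dots,4$; we classify such $L$ modulo left multiplication by $Sp(4,\Z)$ and modulo overall scale (the centralizer $\C^\times$ of the irreducible representation). The structural inputs are: $M_2-\boldsymbol{1}$ and $M_3-\boldsymbol{1}$ have rank one; $M_1$ carries a single size-$2$ Jordan block, so $M_1^2-\boldsymbol{1}$ has rank one; and $M_4^2=-\boldsymbol{1}$. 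For $a=1,2,3$ this yields a distinguished rational direction (a vector $w_a$, obtained from the rank-one piece of $M_a-\boldsymbol1$, respectively of $M_1^2-\boldsymbol1$); here a little more care than in \S.\,\ref{s:noother} is required because the branch handled by $M_4$ supplies a rational $2$-plane rather than a single rational line, so a fourth distinguished vector must be produced from the interplay of the monodromies (e.g. from $M_4w_a$ or from the intersection of $\mathrm{im}(M_4+\mathbf i)$ with $\langle w_1,w_2,w_3\rangle$). Once $w_1,\dots,w_4$ span $\C^4$, integrality of $LM_aL^{-1}$ forces each $Lw_a$ to be a rational multiple of a primitive integral vector, whence $L=N\,\mathrm{diag}(\rho_1,\rho_2,\rho_3,\rho_4)$ with $N\in GL(4,\Z)$ and $\rho_a\in\C^\times$.

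The remaining conditions then reduce to the same kind of Diophantine system as in \S.\,\ref{s:noother}: from $L^t\Omega L=\Omega$ and the precomputed pairings $w_i^t\Omega w_j$ one obtains $2\rho_i^2\in\Z$, and $(\rho_i\rho_j)^{-1}\in\Z$ for those pairs pairing to $\pm1$, while $\det L=1$ gives $(\rho_1\rho_2\rho_3\rho_4)\det N=1$; hence $\rho_i=\sqrt{d/2}\;k_i$ for a square-free $d\in\{1,2\}$ and $k_i\in\mathbb{N}$. For $d=2$ all $k_i=1$, so $L\in Sp(4,\Z)$ and we recover the geometry of \textbf{Fact \ref{kjjjhaa}} (types $I_1,I_1,I^*_2$). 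For $d=1$ the numbers $2\rho_i^2=k_i^2$ are perfect squares, the pairing constraints force at most one $k_i$ to equal $2$ and the others $1$, and using the interchange symmetry of the two $\delta=1$ branches of graph~(d) one reduces to a single solution, represented up to $Sp(4,\Z)$ by $L=U$ with the displayed matrix; a routine case analysis in the style of \textbf{Lemma \ref{finally!}} excludes the putative ``mixed'' possibilities and one checks $U^t\Omega U=\Omega$ and that $\det\Omega=1$ is preserved. Finally, computing $\tilde M_a=UM_aU^{-1}$ and reading off $\gcd_{ij}(\tilde M_a-\boldsymbol{1})_{ij}$ shows that the two type-$I_1$ components become $I_2$, the $I^*_2$ axis becomes $I^*_1$, and the enhanced axis $u_1=0$ (governed by $M_4$) is unchanged; since the multiset of Kodaira types is an $Sp(4,\Z)$-invariant, the second representation is genuinely inequivalent to the first. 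I expect the Diophantine step to be the main obstacle: it hinges on the exact values $w_i^t\Omega w_j$ (some equal $\pm1$, forcing $(\rho_i\rho_j)^{-1}\in\Z$, others vanish and impose nothing) and on the correct choice of the fourth distinguished vector, so the bookkeeping of which pairs are constrained must be carried out carefully. Positivity of $\mathrm{Im}\,\boldsymbol{\tau}_{ij}$ and local regularity of the SW periods along the axes (\S.\,\ref{s:regcheck}) are automatic, since the two geometries are realized by the Lagrangian SCFTs $SO(5)$ with $6\cdot\mathbf{4}$/$3\cdot\mathbf{5}$ and $SO(5)$ with $2\cdot\mathbf{4}\oplus2\cdot\mathbf{5}$, in agreement with the \textbf{Folk-theorem}.
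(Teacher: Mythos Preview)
Your proposal is essentially correct and follows precisely the approach the paper indicates: the paper itself provides no detailed proof of this \textbf{Fact}, stating only that ``going through the same (elementary) systematic procedure for the search of integral symplectic monodromies conjugate to the one we used in \S.\,\ref{s:noother} one gets'' the result. Your outline is a faithful (and more explicit) implementation of that procedure, and you correctly flag the one structural difference from the Pochhammer case --- that the $\delta=2$ branch governed by $M_4$ yields a rational $2$-plane rather than a rank-$1$ direction, so a fourth distinguished vector must be extracted from the interaction of the monodromies before the Diophantine reduction of \S.\,\ref{s:noother} can proceed verbatim. Your caveat that the bookkeeping of which pairings $w_i^t\Omega w_j$ actually constrain the $\rho_i$ must be worked out case by case is well placed: the schematic constraints ``$2\rho_i^2\in\Z$, $(\rho_i\rho_j)^{-1}\in\Z$'' are the Pochhammer ones and will take a slightly different form here, but the dichotomy $d\in\{1,2\}$ and the final enumeration go through as you describe.
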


Again our findings agree with the \textbf{Folk-theorem}. 

\subsubsection{Example: irreducible special geometry with $\{\Delta_1,\Delta_2\}=\{2,3\}$}
The Diophantine problem looks harder. However we may restrict the search by the following considerations: the monodromy $\mu_1$ associated to the regular axis is known by
\textbf{Proposition \ref{pro3}} (see also the more detailed analysis in \cite{car1,car2}). So we can
fix two of the required matrices
\be\label{iiiiii7zz}
\Omega=\left(\begin{smallmatrix}0 & 0 &1 & 0\\
0 &0 &0 &1\\
-1 & 0 &0 &0\\
0 & -1 & 0 & 0\end{smallmatrix}\right)\qquad\mu_1=\left(\begin{smallmatrix}0 & 1 &0 & 0\\
-1 &-1 &0 &0\\
0 & 0 &-1 &1\\
0 & 0 & -1 & 0\end{smallmatrix}\right)
\ee
which satisfy
\be
\mu_1^t\,\Omega\,\mu_1=\Omega,\qquad \mu_1^3=\boldsymbol{1}.
\ee
Then $\mu_2$, $\mu_3$ must be symplectic with respect to the principal polarization $\Omega$
and $\mu_a-\boldsymbol{1}$ should have rank 1 for $a=2,3$ i.e.
\be
\mu_a=\boldsymbol{1}+v_a\otimes v_a^t\,\Omega.
\ee
for some integral 4-vectors $v_a$.
By some try and error we find $v_2=(1,0,-1,-1)^t$ and $v_3=(0,1,1,1)^t$
so that
\be
\mu_2=\left(\begin{smallmatrix}2 & 1 &1 & 0\\
0 &1 &0 &0\\
-1 & -1 &0 &0\\
-1& -1 & -1 & 1\end{smallmatrix}\right),\qquad
\mu_3=\left(\begin{smallmatrix}1 &0 &0 & 0\\
-1 &0 &0 &1\\
-1 & -1 &1 &1\\
-1& -1 & 0 & 2\end{smallmatrix}\right)
\ee
since $v_1$ and $v_2$ are integral and primitive, these are monodromies of type $I_1$.
Then the relation $\mu_1\mu_2\mu_3\mu_4=\boldsymbol{1}$ yields
\be
\mu_4= \left(\begin{smallmatrix}-1 & 0 &0 & 1\\
1 &1 &-1 &3\\
0 & 0 &-1 &1\\
0& 0 & -1 & 1\end{smallmatrix}\right)
\ee
and the monodromy around the discriminant component $u_2=0$ in the Coulomb branch is
\be
\varrho_4\equiv \mu_4^2=\left(\begin{smallmatrix}1 & 0 &0 & 0\\
0 &1 &0 &6\\
0 & 0 &1 &0\\
0& 0 & 0 & 1\end{smallmatrix}\right)
\ee 
We conclude
\begin{fact} There is a special geometry with dimensions $\{2,3\}$, two knotted discriminant components of type $I_1$, and a third component $u_2=0$ of type $I_6$. 
\end{fact}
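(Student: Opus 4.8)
The proof we have just seen is essentially complete: it exhibits four explicit integral symplectic matrices $\mu_1,\dots,\mu_4$ realizing the required monodromy, checks the relations, and reads off the Kodaira types. The plan for writing it up cleanly is as follows. First I would recall the structural input: by \textbf{Fact \ref{888uuu}(2)} a non-isotrivial, irreducible $\{2,3\}$ geometry — assuming $\mu$-rigidity, hence the monodromy being the rigid Appell/Okubo $II^*$ representation of graph (d) — must have two knotted discriminant components of type $I_{n_1},I_{n_2}$ and a third component $\{u_2=0\}$ of type $I_{2n_3}$, and by \textbf{Fact \ref{AZth}} any such rigid monodromy representation in $Sp(4,\Z)$ with the right local classes and positivity automatically yields a $\C^\times$-isoinvariant rank-2 special geometry. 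So the task reduces to producing \emph{one} rigid integral symplectic representation with $n_1=n_2=n_3=1$, since by rigidity any two such (with the same fugacities and conjugate over $GL(4,\C)$) are the same geometry, and I only need existence.

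The key steps, in order: (i) Fix the polarization $\Omega$ and the monodromy $\mu_1$ around the regular axis $\{u_1=0\}$ of dimension $3$; this matrix is forced (up to conjugacy) by \textbf{Proposition \ref{pro3}} and the analytic-representation eigenvalue computation of \S.\,\ref{s:iiiiii4uu5}, so I would simply write down the explicit $\mu_1$ in \eqref{iiiiii7zz} with $\mu_1^3=\boldsymbol 1$ and $\mu_1^t\Omega\mu_1=\Omega$. (ii) Parametrize the two knotted monodromies as symplectic transvections $\mu_a=\boldsymbol 1+v_a\otimes v_a^t\Omega$ with $v_a\in\Z^4$ primitive — this guarantees $(\mu_a-\boldsymbol 1)^2=0$, $\mathrm{rank}(\mu_a-\boldsymbol 1)=1$, and type $I_1$. (iii) Search (by hand / small computer check) for integral primitive $v_2,v_3$ such that, setting $\mu_4=(\mu_1\mu_2\mu_3)^{-1}$, the matrix $\mu_4$ is integral (automatic), symplectic (automatic from the others), and satisfies $\mu_4^2=\varrho_4$ with $\varrho_4$ conjugate to $\left(\begin{smallmatrix}1&0\\0&1\end{smallmatrix}\right)\!\oplus\!\left(\begin{smallmatrix}1&6\\0&1\end{smallmatrix}\right)$, i.e. type $I_6$; the pair $v_2=(1,0,-1,-1)^t$, $v_3=(0,1,1,1)^t$ works and I would just verify the resulting $\mu_2,\mu_3,\mu_4,\varrho_4$ by direct matrix multiplication. (iv) Check irreducibility of $\{\mu_1,\mu_2,\mu_3,\mu_4\}$ — either directly (no common invariant subspace) or by invoking the Crawley-Boevey criterion for graph (d) with the fugacities \eqref{fugggga}, $\Delta=3$, which shows the root $\boldsymbol N$ lies in $\Sigma$ and is real, hence the representation is rigid and all representations with these local classes are irreducible. (v) Invoke \textbf{Fact \ref{AZth}}/the Almkvist–Zudilin construction to conclude the existence of the Legendre-submanifold period $\Pi$, and note the positivity of $\mathrm{Im}\,\tau_{ij}$ and the local regularity along the three special divisors follow from the general machinery of \S.\,\ref{s:stable}–\S.\,\ref{s:regcheck} since the exponents of $\mu_1,\mu_4$ were chosen precisely to match the $I_0$ (regular enhanced) and $I_6$ local models; a short regularity check along the axes, exactly in the style of \S.\,\ref{s:eeeexam}, finishes it.

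The main obstacle is step (iii): it is a genuine (small) Diophantine search, and one must be a little careful that the ``obvious'' guess $n_1=n_2=1$ is actually realized together with the prescribed $I_6$ type on the third axis rather than some $I_{2n_3}$ with $n_3>1$ — a priori the relation $\mu_1\mu_2\mu_3\mu_4=\boldsymbol 1$ plus the type constraints might force a larger $n_3$, or might have no solution at all, in which case the statement would have to be rephrased. In practice the computation is short and the displayed matrices settle it; there is no deep difficulty, only the bookkeeping of verifying $\mu_a^t\Omega\mu_a=\Omega$ and computing $\mu_4^2$. The one conceptual point worth stating explicitly in the write-up is \emph{why} this geometry is then the unique one with this inverse datum: that is exactly the content of the $\mu$-rigidity conjecture together with Faltings–Peters finiteness (\textbf{Corollary \ref{uuuuyyyy}} and the finiteness theorems), so I would close with a one-sentence remark that any other $\{2,3\}$ geometry with two $I_1$'s and one $I_6$ is conjugate over $GL(4,\C)$ to this one, and is therefore either $Sp(4,\Z)$-equivalent to it or differs by the $Sp(4,\mathbb Q)$-conjugation phenomenon of step 6 of the inverse algorithm (which here, with all $n_i=1$, does not produce anything new).
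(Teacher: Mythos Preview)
Your proposal is correct and follows essentially the same approach as the paper: fix $\Omega$ and $\mu_1$ from \textbf{Proposition \ref{pro3}}, write the two knotted monodromies as symplectic transvections $\mu_a=\boldsymbol{1}+v_a\otimes v_a^t\Omega$, find by trial the same vectors $v_2=(1,0,-1,-1)^t$, $v_3=(0,1,1,1)^t$, compute $\mu_4=(\mu_1\mu_2\mu_3)^{-1}$, and verify $\varrho_4\equiv\mu_4^2$ is of type $I_6$. Your write-up is somewhat more thorough than the paper's in spelling out the irreducibility check and the regularity verification, but the core construction is identical.
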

Of course this is just the well-known geometry of $SU(3)$ with six fundamentals. 
Here we reconstructed it from the $\mu$-rigidity perspective as a further check of the general philosophy. A (non-systematic) search has produced no other solution as physically expected.

\subsection{Rigid, non-isotrivial, rank-2 special geometries}

Rank-2 non-isotrivial, primitive, \emph{rigid} geometries have three special points on $\mathscr{P}$
and hence Dynkin graphs with 3 branches. Moreover $\Delta_2$ cannot be
an integral multiple of $\Delta_1$ (since in this case one has $\varkappa=\Delta_1$
while $\Delta_1\neq1,2$, so the geometry is automatically isotrivial \cite{Cecotti:2021ouq}). Thus
two special points are enhanced (they may or may not be part of the discriminant)
while the third one arises from a non-enhanced discriminant component (which now should be knotted). 
Assuming $\mu$-rigidity, the graph selection rules of \S.\,\ref{s:selection} allow
 two classes of geometries corresponding, respectively,
to the Dynkin graphs
of the Kac-Moody algebras $\widehat{E}_7$ and $E_7$. 

\subsubsection{Rigid geometries with a knotted discriminant of type $I_n$}\label{s:knott}

The vast majority of known special geometries of rank-2 SCFT with $\mathscr{R}_2=0$
have a unique knotted discriminant which is of Kodaira type $I_1$. To shed light on some puzzles
we shall encounter later, we begin by considering the implications of this fact \emph{per se},
without making \emph{any} other assumption (however plausible) nor using any restriction arising from our previous considerations.
In this subsection we assume only that the $\mu$-monodromy is given by three quasi-unipotent matrices $\mu_0,\mu_1,\mu_\infty\in Sp(4,\Z)$,
satisfying $\mu_0\mu_1\mu_\infty=\pm\boldsymbol{1}$, while $(\mu_1-\boldsymbol{1})$ is nilpotent of rank $1$.
The double sign $\pm$ applies only when $\varkappa=2$. The presence of a knotted discriminant guarantees that the geometry is indecomposable, but we do not rule out the possibility that $\mu$ is reducible.

If the minimal polynomials of both $\mu_0$ and $\mu_\infty$ are of degree 4 we get 
an $\widehat{E}_7$ graph and the representation generated by $\mu_0$
and $\mu_\infty$ is irreducible precisely when their minimal polynomials $P_0(z),P_\infty(z)$ have no common zero
i.e.\! $\gcd(P_0(z),P_\infty(z))=1$ or equivalently iff their resultant $\mathsf{Res}(P_0(z),P_1(z))\neq0$.
When this is the case the representation is automatically irreducible and rigid.
This ``nice'' case will be discussed in great detail in \S.\,\ref{s:ehat} below.
We ask what happens in the two ``less nice'' cases:
\begin{itemize}
\item[(I)] the minimal polynomial of, say, $\mu_0$ has degree $\leq3$;
\item[(II)] $P_0(z),P_\infty(z)$ have degree 4 while the polynomial $\gcd(P_0(z),P_\infty(z))$ has positive degree.
\end{itemize}
In either case the monodromy representation cannot be irreducible and decomposes
(over $\C$) into smaller ones. Since there is a non-trivial unipotent monodromy associated to the knotted discriminant, the geometry cannot be isotrivial, and since 2 is not a dimension, it is rigid.
Therefore we end up in the scenario discussed in \S.\,\ref{SS:quasi-isotrivial}.

\begin{fact}\label{777xw} A  rigid rank-2 geometry with a single knotted discriminant of type $I_n$ ($n>0$)
is either irreducible of class $\widehat{E}_7$
 or it is reducible quasi-isotrivial with Coulomb dimensions $\{3,4\}$ resp.\! $\{4,6\}$ and
with the discriminant types listed in {\bf Fact \ref{uuuu63}}. The two non-trivial periods
can be expressed in terms of Gauss' hypergeometric functions.
\end{fact}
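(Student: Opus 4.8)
The plan is to organize the argument around the dichotomy introduced just before the statement: either both local monodromies $\mu_0,\mu_\infty$ have minimal polynomial of degree $4$ with $\gcd(P_0,P_\infty)=1$, or we are in one of the ``less nice'' cases (I) or (II). First I would dispose of the ``nice'' case: if $\deg P_0=\deg P_\infty=4$ and $\mathsf{Res}(P_0,P_\infty)\neq0$, then the subrepresentation generated by $\mu_0$ and $\mu_\infty$ already has no common eigenvector, hence (since $\mu_1=(\mu_0\mu_\infty)^{-1}$ up to sign) the whole $\mu$-monodromy is irreducible; the associated Deligne--Simpson graph is the star with branches of kinds coming from two $\delta=4$ nodes and one $\delta=1$ node, which is precisely the $\widehat E_7$ Dynkin graph of Figure \ref{goursat}(b) read with the appropriate multiplicities, and $d(\alpha)=0$ identifies $\alpha$ as a real root, so the representation is rigid. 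This places the geometry in class $\widehat E_7$, and the two non-trivial periods are solutions of the order-$4$ hypergeometric (Thomae) ODE, whence expressible via Gauss hypergeometric functions by the classical reduction; this is the content I would cite from \S.\,\ref{k87xxx} and \cite{haraoka}.

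Next I would handle cases (I) and (II). The key observation, already noted in the text, is that in either case the $\mu$-monodromy is \emph{reducible} over $\C$, while the unipotent generator $\mu_1$ of rank-one-nilpotent type forbids the geometry from being isotrivial, and the standing hypothesis that $2$ is not a Coulomb dimension forbids a nonzero $\mathscr{R}_2$, hence forces rigidity (via \textbf{Fact} on $\dim\mathscr{M}=\dim\mathscr{R}_{\Delta=2}$). So we land exactly in the reducible-but-indecomposable situation analyzed in \S.\,\ref{reddd}. Using Deligne semisimplicity (\cite{delII}) the representation splits, and the trichotomy of \S.\,\ref{reddd} (cases 1, 2, 3 for how $\Omega$ pairs the summands) applies. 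Case 1 (and case 3) would produce a symmetric Hodge $2$-tensor, i.e. an element of $\mathscr{R}_2$, contradicting rigidity; so we must be in case 2 with the monodromy inside $SL(2,\mathbb Q)\times SL(2,\mathbb Q)$. Since a knotted $I_n$ discriminant is present the geometry is indecomposable, so the fibered-product elliptic surface structure \eqref{kkk88888x} is genuinely nontrivial in at least one factor, and the argument in \S.\,\ref{SS:quasi-isotrivial} shows that if both factors were nontrivial we would be back in case 3 (hence a dimension-$2$ operator, contradiction); therefore exactly one factor is isotrivial and we are in the quasi-isotrivial scenario {\it(ii)} of \S.\,\ref{SS:quasi-isotrivial}.

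Then I would invoke \textbf{Fact \ref{uuuu63}}: a rigid non-isotrivial quasi-isotrivial rank-$2$ geometry (with smooth Coulomb branch) must have Coulomb dimensions $\{3,4\}$ or $\{4,6\}$ with one of the four listed discriminant configurations, each of which indeed has a knotted $I_m$ component with $m>0$. This matches the hypothesis of a knotted $I_n$ discriminant and exhausts the reducible alternative. Finally, for the period statement in the reducible case: the nontrivial elliptic factor $\mathcal E_1\to\mathbb P^1$ is a rational elliptic surface with exactly three singular fibers (the two enhanced axes plus the knotted point), so its Picard--Fuchs equation is a rank-$2$ Fuchsian ODE on $\mathbb P^1$ with three regular singular points --- a Gauss hypergeometric equation --- and the two nontrivial SW periods are accordingly Gauss hypergeometric functions (the other two periods being the pulled-back constant ones from the isotrivial factor $\mathcal E_2$). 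I expect the main obstacle to be the bookkeeping in case (II): one must verify carefully that a positive-degree common factor of $P_0$ and $P_\infty$ of degree $1$ versus degree $2$ both force the splitting into $SL(2)\times SL(2)$ rather than into some pathological indecomposable-but-not-semisimple shape, and that the resulting local exponents along the axes are consistent with the Kodaira types in \textbf{Fact \ref{uuuu63}} --- this is where the interaction between the resultant condition, the non-resonance of the residue matrices, and Deligne semisimplicity has to be made airtight; everything else is an assembly of results already established in \S\S.\,\ref{reddd}--\ref{SS:quasi-isotrivial} and \S.\,\ref{k87xxx}.
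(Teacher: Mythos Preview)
Your argument follows the same route as the paper: the text preceding the \textbf{Fact} already sets up the dichotomy (``nice'' case $\Rightarrow$ irreducible $\widehat{E}_7$; cases (I)/(II) $\Rightarrow$ reducible, non-isotrivial, rigid $\Rightarrow$ the quasi-isotrivial scenario of \S.\,\ref{SS:quasi-isotrivial} and hence \textbf{Fact \ref{uuuu63}}), and the paper simply assembles these pieces without a formal proof block. Your expansion of the reducible branch through the trichotomy of \S.\,\ref{reddd} is faithful to that logic.

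Two corrections. First, the $\widehat{E}_7$ graph is Figure~\ref{goursat}(a), not (b); graph (b) is $E_7$. Second, and more substantively, you have misread the last sentence of the \textbf{Fact}. The clause ``the two non-trivial periods can be expressed in terms of Gauss' hypergeometric functions'' refers \emph{only} to the reducible quasi-isotrivial alternative, where two of the four periods come from the constant elliptic factor and the remaining two from the non-isotrivial rational elliptic surface $\ce_1\to\mathbb{P}^1$ with three singular fibers --- exactly as you correctly explain in your final paragraph. In the irreducible $\widehat{E}_7$ case all four periods are non-trivial and satisfy the order-4 Thomae (${}_4F_3$) equation; there is no ``classical reduction'' of generic ${}_4F_3$ solutions to Gauss ${}_2F_1$, so the sentence in your first paragraph claiming this should be deleted. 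Once you drop that misattribution, your proposal matches the paper's argument.
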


\subsubsection{Rigid non-isotrivial geometries of class $E_7$}

Class-$E_7$ contains the non-isotrivial, primitive, rigid and $\mu$-rigid geometries
 where the knotted discriminant is
not of type $I_n$.
The $E_7$ class
is quite small. Indeed these geometries are severely constrained. 
The point is that class-$E_7$ rigid monodromies behave quite differently from their
$\widehat{E}_7$ cousins which are hypergeometric monodromies.
 In the $\widehat{E}_7$ case the Brauer class
is always trivial by the Levelt theorem \cite{levelt,BHeck}, while ``most'' $E_7$-class rigid monodromies are known to have non-trivial
Brauer classes \cite{Gou2}, so they do not describe special geometries: this is one of the reasons why
class-$E_7$ special geometries are relatively rare. 
The computation of the Brauer class is difficult, so this subclass of special geometries
is the hardest one to classify and construct in explicit terms. In principle
 we can settle the question
with a finite amount of work since we do have
explicit integral representation of the solutions \cite{haraoka},
but we shall not attempt to do it here.    

The following statement is proven 
in appendix \ref{8888bbbv}:

\begin{lem}\label{ttttt432} In a (non-isotrivial) class-$E_7$ special
 geometry the short $\delta=2$ branch cannot correspond to  a
non-enhanced divisor.
\end{lem}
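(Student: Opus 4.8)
\textbf{Proof plan for Lemma~\ref{ttttt432}.} The plan is to argue by contradiction: suppose the $\delta=2$ branch of the $E_7$ Dynkin graph corresponds to a non-enhanced discriminant component. By the conjugacy-class dictionary in eq.\eqref{fffufug}, a non-enhanced discriminant whose local monodromy produces a $\delta=2$ branch must be of Kodaira type $I_0^*$, so the local monodromy $\mu_0$ around that special point is conjugate (over $Sp(4,\Z)$) to $-\boldsymbol{1}_2\oplus\boldsymbol{1}_2$, with minimal polynomial $(z+1)(z-1)$ and fugacity $-1$ at the branch node. First I would write down the three local monodromies $\mu_0,\mu_1,\mu_\infty$ with $\mu_0\mu_1\mu_\infty=\pm\boldsymbol{1}$, where the knotted component attached to the $\delta=4$-or-$\delta=3$ long branch is \emph{not} of type $I_n$ (this is the defining property of class $E_7$), and the remaining branch is enhanced. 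The key structural fact is that the $E_7$ graph has a very rigid shape (graph (b) of Figure~\ref{goursat}): one $\delta=2$ branch, one $\delta=3$ branch, one $\delta=4$ branch, with root vector as drawn. I would compare the \emph{forced} fugacity assignment coming from the hypothesis with the constraint that the associated root $\boldsymbol N$ of eq.\eqref{hhhhhasss2} lies in the set $\Sigma$ of Crawley-Boevey's theorem.

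The heart of the argument is then a decomposition-into-$R_+$ computation, exactly in the style of the proof of \textbf{Fact \ref{selrule}}. With the $\delta=2$ branch carrying fugacity $-1$ (the only possibility once that branch is a non-enhanced $I_0^*$ discriminant), I would exhibit an explicit splitting of the $E_7$ real root $\alpha=\boldsymbol N$ into two (or more) positive roots $\beta_1+\beta_2$ of proper subgraphs, each lying in $R_+$ for the forced fugacity function $\xi(\cdot)$, with $d(\beta_1)+d(\beta_2)\geq d(\alpha)=0$. Concretely the $\delta=2$ node contributing an even coefficient to $\alpha_\star$ is what allows the split; one natural candidate is a decomposition into a root supported on the $E_6$-type subgraph (the long $\delta=4$ branch plus part of the $\delta=3$ branch plus the center) and a root supported on an $A_k$-type subgraph including the $\delta=2$ branch. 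I would check $\xi(\beta_i)=1$ for each summand using the fugacity product formula, and conclude via \eqref{iiiiuu77cc} that $\alpha\notin\Sigma$, so no irreducible representation with these local classes exists — contradicting that we started from an (irreducible, since class $E_7$) special geometry.

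There is a subtlety for $\varkappa=2$: the relation $\mu_0\mu_1\mu_\infty=\pm\boldsymbol 1$ has a sign, equivalently the central fugacity $\lambda_\star$ can be $\pm1$. I would handle the two sign choices separately, just as in \textbf{Fact \ref{selrule}}(1): for one sign the simple root $\alpha_\star$ itself lies in $R_+$ and $\alpha-\alpha_\star$ is an imaginary root in $R_+$, giving the split immediately; for the other sign I would produce the $E_6\!+\!A_3$-type decomposition explicitly. I should also verify that no \emph{enhanced-but-discriminant} reinterpretation of that branch sneaks back in — but a $\delta=2$ branch at an enhanced point requires eq.\eqref{cffffeq*}, i.e.\ $\Delta_1=m\Delta_2$ (forcing $\Delta_2=2$) or $\Delta_1=2+m\Delta_2$ with $\Delta_2$ old, which are precisely the cases excluded by the standing assumption that the geometry is rigid ($2\notin\{\Delta_i\}$) and primitive; so the only way a $\delta=2$ branch can attach to a non-enhanced component is the $I_0^*$ scenario just ruled out.

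The main obstacle I anticipate is bookkeeping: getting the exact root vector $\boldsymbol N$, the node ordering that realizes the \emph{minimal form} of the $I_0^*$ class, and the fugacities at every node right, so that the claimed decomposition genuinely lands in $R_+$ with the inequality \eqref{iiiiuu77cc} violated. This is the same type of finite, explicit Kac-Moody root computation as in the proofs of \textbf{Fact \ref{selrule}} and \textbf{Fact \ref{distinctr}}, so the appendix-level argument should go through, but it does require care with the $E_7$ Cartan matrix and with tracking which subgraph each summand root is supported on.
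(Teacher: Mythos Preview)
Your overall strategy --- assume the $\delta=2$ branch is a non-enhanced $I_0^*$ divisor and then use a Crawley--Boevey root decomposition to force reducibility --- is exactly the paper's. But your case structure is misplaced, and one of your proposed decompositions cannot work.

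The split is \emph{not} on the $\varkappa=2$ sign. Because $\mu_2$ (the $\delta=2$ monodromy) is semisimple with eigenvalues $(+1,+1,-1,-1)$, the freedom in ordering its two eigenvalues contributes a factor $\pm1$ to the central fugacity, so you can absorb the sign in $\mu_0\mu_1\mu_\infty=\pm\boldsymbol 1$ at will. The genuine case split is on \emph{which} of the remaining monodromies $\mu_3$ (the $\delta=3$ branch) or $\mu_4$ (the $\delta=4$ branch) is non-semisimple --- at least one must be, since the geometry is non-isotrivial and $\mu_2$ is already semisimple. The fugacities along the $\delta=3$ and $\delta=4$ branches depend on this choice, so the two sub-cases require \emph{different} explicit decompositions. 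The paper does each in turn: if $\mu_3$ is non-semisimple, the $E_7$ root splits as
\[
\begin{smallmatrix}&& 1\\ 1 & 1 & 2 & 2 & 1 & 1\end{smallmatrix}\ +\ \begin{smallmatrix}&& 1\\ 0 & 1 & 2 & 1 & 1 & 0\end{smallmatrix},
\]
both positive $E_7$ roots in $R_+$, forcing reducibility; hence $\mu_3$ is semisimple. If $\mu_4$ is non-semisimple, after a suitable eigenvalue ordering one gets
\[
\begin{smallmatrix}&& 1\\ 1 & 2 & 3 & 3 & 2 & 1\end{smallmatrix}\ +\ \begin{smallmatrix}&& 1\\ 0 & 0 & 1 & 0 & 0 & 0\end{smallmatrix},
\]
again in $R_+$; hence $\mu_4$ is semisimple. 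With all three local monodromies semisimple the geometry is isotrivial, contradicting the hypothesis. So the punchline is a two-step ``force each to be semisimple, then invoke isotriviality'', not a single direct reducibility hit.

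Your proposed move ``$\alpha_\star\in R_+$ and $\alpha-\alpha_\star$ an imaginary root'' cannot work here: $E_7$ is finite type and has no imaginary roots --- that trick was specific to the affine $\widehat E_6$ graph in \textbf{Fact~\ref{selrule}(1)}. The vague $E_6+A_3$ split is also not what is needed; the correct summands are the genuine $E_7$ positive roots above (the paper checks them against Bourbaki's \textsc{Planche~VI}).
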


Thus the $\delta=2$ branch represents a  coordinate axis in the Coulomb branch $\mathscr{C}$ which should be enhanced but cannot be part of the discriminant because the $\delta=2$ branch is too short to describe an enhanced discriminant. Working mod 2 we get:
\begin{fact}\label{pppp9cxb} In a class-$E_7$ special geometry the
 Kodaira types $I_n$ and $I^*_{2k}$ ($k\in\mathbb{N}$) are \textbf{excluded} for the non-enhanced (i.e.\! knotted) discriminant associated to the $\delta=3$ branch.
\end{fact}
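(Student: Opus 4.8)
The plan is to leverage \textbf{Lemma \ref{ttttt432}} together with the structure of the $E_7$ Dynkin graph and a mod-$2$ reduction argument. Recall that a class-$E_7$ special geometry has three special points on $\mathscr{P}\simeq\mathbb{P}^1$: two enhanced points corresponding to the coordinate axes (which may or may not lie in the discriminant), and one non-enhanced point coming from a knotted discriminant component. By \textbf{Lemma \ref{ttttt432}} the short $\delta=2$ branch of the $E_7$ graph must correspond to one of the coordinate axes, and that axis, being enhanced, \emph{cannot} be a discriminant component since a discriminant component along an enhanced axis produces (by the analysis in \S.\,\ref{s:localmod} and the preceding tables on enhanced discriminants) a branch of length $\delta=3$ or $\delta=4$, never $\delta=2$. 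Hence the $\delta=2$ branch is a regular enhanced axis, and the $\delta=3$ branch must be the knotted, non-enhanced discriminant component. First I would pin down this identification of branches to branch-types precisely, citing the tables in \eqref{fffufug}, \eqref{poooor}, and \textbf{Fact \ref{selrule}}.

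Next I would examine the fugacities and eigenvalue data of the $\delta=3$ branch. The local monodromy $\mu$ around the knotted non-enhanced discriminant is of the form $[\mu]=[\varrho]=[\sigma\oplus\boldsymbol{1}_2]$ with $\sigma\in SL(2,\Z)$ a Kodaira monodromy, per \eqref{fffufug}. A $\delta=3$ branch, which reads $\xymatrix{\cdots\ar@{-}[r]&\text{\ovalbox{2}}\ar@{-}[r]&\text{\ovalbox{1}}}$ in minimal form, corresponds to exactly the three possibilities listed in \eqref{fffufug} for a non-enhanced discriminant with $\delta=3$: type $I_n^*$ with $n>0$ (fugacities $-1,1$), or a semisimple type of order $m>2$ (fugacities $e^{2\pi i/m}, e^{-4\pi i/m}$). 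The claim to exclude is precisely type $I_n$ and type $I^*_{2k}$ ($k\in\mathbb{N}$, so including $I_0^*$ when $k=0$, and all $I_{2k}^*$). Type $I_n$ is immediately ruled out because it requires a $\delta=1$ branch, not $\delta=3$; this part is a direct consequence of \eqref{fffufug}. For $I_0^*$: that requires a $\delta=2$ branch (fugacity $-1$), again incompatible with $\delta=3$. So these two exclusions are essentially definitional once the branch is fixed as $\delta=3$.

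The substantive content is the exclusion of $I_{2k}^*$ for $k\geq1$. Here I would argue mod $2$: the Kodaira monodromy $\sigma$ for type $I_{2k}^*$ is $\left(\begin{smallmatrix}-1 & -2k\\ 0 & -1\end{smallmatrix}\right)$ (see table \ref{table2}), so $\mu \equiv \sigma\oplus\boldsymbol{1}_2 = -\boldsymbol{1}\oplus\cdots$ modulo $2$ acts as $\boldsymbol{1}+2N$ for a nilpotent integral $N$, i.e. $\mu\equiv\boldsymbol{1}\pmod 2$. I would then show that if $\mu_1\equiv\boldsymbol{1}\pmod 2$ for the $\delta=3$ branch monodromy, the resulting $E_7$ monodromy representation $\{\mu_0,\mu_1,\mu_\infty\}$ with $\mu_0\mu_1\mu_\infty=\pm\boldsymbol{1}$ forces the reduction $\bmod\,2$ of the whole representation to be supported on $\{\mu_0,\mu_\infty\}$, whose product is then $\pm\boldsymbol{1}\pmod 2 = \boldsymbol{1}\pmod 2$; combined with the $E_7$-graph fugacity constraints this makes the root $\boldsymbol{N}$ of \eqref{hhhhhasss2} decompose nontrivially in $R_+$ (exactly as in the arguments proving parts (1)--(2) of \textbf{Fact \ref{selrule}}), contradicting irreducibility and hence $E_7$-rigidity. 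The parallel fact for $I_n$ being excluded from $\delta=3$ is already handled; the $I^*_{2k}$ case with $k\ge1$ is where the mod-$2$ bookkeeping and the explicit decomposition of $\boldsymbol{N}$ into $R_+$-roots must be carried out with care.

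I expect the main obstacle to be the last step: verifying that the mod-$2$ reduction really does produce a nontrivial decomposition of $\boldsymbol{N}$ into positive roots lying in $R_+$ for every allowed fugacity assignment of the $E_7$ graph consistent with $\mu_1\equiv\boldsymbol{1}\pmod 2$. This is a finite but slightly delicate case analysis over the possible eigenvalue patterns $(e^{\pm2\pi i/m_0},\dots)$ and $(e^{\pm2\pi i/m_\infty},\dots)$ at the two enhanced axes, constrained by the dimension formulae of \S.\,\ref{jjuutreq}. One clean route that avoids some of this: since we have already shown (in \textbf{Fact \ref{777xw}}) that a rigid rank-$2$ geometry with a knotted discriminant of type $I_n$ is class-$\widehat{E}_7$ rather than $E_7$, the $I_n$ exclusion is immediate; and for $I^*_{2k}$ one can instead pass to the semistable reduction $z\mapsto z^2$ of \S.\,\ref{s:stable}, under which type $I^*_{2k}$ becomes type $I_{4k}$, reducing the $I^*_{2k}$ case to the $I_n$ case on a double cover — then apply \textbf{Fact \ref{777xw}} again after checking that the cover of a class-$E_7$ configuration cannot become class-$\widehat{E}_7$. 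Whichever route, the final write-up should be short once the branch identification and one of these two reduction mechanisms is in place.
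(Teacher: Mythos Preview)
Your setup is right: identifying the $\delta=2$ branch as a regular enhanced axis via \textbf{Lemma~\ref{ttttt432}}, and hence the $\delta=3$ branch as the knotted discriminant, is correct; and the exclusions of $I_n$ (which forces $\delta=1$) and $I_0^*$ (which forces $\delta=2$) are indeed definitional once the branch is fixed. The gap is in your treatment of $I^*_{2k}$ for $k\geq 1$. Route (C1) claims that $\mu_3\equiv\boldsymbol{1}\pmod 2$ forces the $E_7$ root $\boldsymbol{N}$ to decompose nontrivially in $R_+$, but $R_+$ and the Crawley--Boevey irreducibility criterion are determined purely by the \emph{fugacities}, and for a non-enhanced $I^*_n$ monodromy these are $(-1,1)$ independently of $n$ (cf.\ the table in~\eqref{fffufug}). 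The paper in fact exhibits an \emph{irreducible} class-$E_7$ representation with $I^*_1$ on the $\delta=3$ branch (eqs.~\eqref{Exo1}--\eqref{Exo2}), so $\boldsymbol{N}\in\Sigma$ for these fugacities --- and the same $\boldsymbol{N}$ therefore cannot decompose when $I^*_1$ is replaced by $I^*_{2k}$. Reducibility of the mod-$2$ reduction over $\bar{\mathbb{F}}_2$ does not transfer back to $\mathbb{C}$; the Minkowski step in Appendix~\ref{appeApriori} is explicitly restricted to odd primes. Route (C2) has the same defect: the semistable cover sends $I^*_n\to I_{2n}$ for \emph{every} $n$, so it cannot isolate the even case either.

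What separates $I^*_{2k}$ from $I^*_{2k+1}$ is genuinely arithmetic, not complex-representation-theoretic. The paper's terse ``Working mod 2'' points to a congruence on characteristic polynomials rather than an $R_+$ decomposition: if $\mu_3\equiv\boldsymbol{1}\pmod 2$ then the relation $\mu_2\mu_3\mu_4=\pm\boldsymbol{1}$ gives $\mu_4\equiv\mu_2^{-1}\pmod 2$, whence $\det[z-\mu_4]\equiv\Phi_m(z)^2\pmod 2$ with $\Phi_m$ ($m\in\{3,4,6\}$) the minimal polynomial of the regular enhanced $\delta=2$ monodromy. One then tests this congruence against the short list of degree-$4$ cyclotomic products admissible at the $\delta=4$ axis (e.g.\ $\Phi_5,\Phi_8,\Phi_{10}$ for the new-dimension cases $\{3,5\},\{6,8\},\{4,10\}$, none of which reduce to a square of $\Phi_3,\Phi_4,\Phi_6$ mod $2$). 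That is the finite case analysis you correctly flagged as the main obstacle --- but it lives at the level of integral matrices and their characteristic polynomials, not in the root lattice.
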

Now we are ready to show:

\begin{fact}\label{pppp9cx} A rigid non-isotrivial (irreducible) geometry of class $E_7$
whose two dimensions $\{\Delta_1,\Delta_2\}$ are both \textbf{non new},
must have Coulomb dimensions $\{4,6\}$.
\end{fact}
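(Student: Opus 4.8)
The plan is to combine three inputs: (i) the general $\mu$-monodromy analysis of rank-$2$ rigid geometries, which forces the three local branches of the Dynkin graph to be of types $\delta\in\{2,3\}$ with exactly one $\delta=3$ knotted branch and two enhanced branches (one of which may be $\delta=2$); (ii) the classification of allowed dimension pairs coming from the characteristic-dimension constraints and the dimension formulae of \S.\,\ref{jjuutreq}--\S.\,\ref{s:straDim}; and (iii) \textbf{Lemma \ref{ttttt432}} / \textbf{Fact \ref{pppp9cxb}}, which say the $\delta=2$ branch must be a \emph{non-discriminant} enhanced axis. Since by hypothesis both $\Delta_1,\Delta_2$ are non-new, each $\Delta_i$ is an old (rank-$1$) dimension, hence $1/\Delta_i$ has order $m_i\in\{3,4,6\}$ (values $1,2$ are excluded by our standing assumptions $\Delta_i\neq1,2$), and $\varkappa\in\{1,2\}$ forbids both $m_i$ divisible by $4$ and both divisible by $3$.

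First I would enumerate the pairs $\{\Delta_1,\Delta_2\}$ of old dimensions with $\varkappa\in\{1,2\}$: writing $\lambda=n/\varkappa$ with $n\in\mathbb{N}$ and $(\Delta_1,\Delta_2)=\lambda(d_1,d_2)$, one gets a very short list (essentially $\{3,4\},\{4,6\}$ for $\varkappa=1,2$ respectively, together with the values $m/(m-1)$ which give coprime $d_i\neq1$). For each candidate I would write down the eigenvalues of the local $\mu$-monodromy around the two coordinate axes using the analytic-representation formula $\sigma(\xi_i)=\mathrm{diag}(e^{2\pi i(\Delta_j-1)/\Delta_i})$, and check which axes are enhanced (index $\nu=d_i$) and whether they can carry a discriminant component of the appropriate Kodaira type given the $2\times 2$ block structure forced by ``stratification'' (\S.\,\ref{s:sing}). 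The crucial step is then to impose \textbf{Fact \ref{pppp9cxb}}: the knotted $\delta=3$ discriminant cannot have type $I_n$ nor $I^*_{2k}$, so its Kodaira type is one of $II,II^*,III,III^*,IV,IV^*,I^*_{2k+1}$; and \textbf{Lemma \ref{ttttt432}} forces the $\delta=2$ branch (when present) to be a regular enhanced axis, i.e.\! with finite-order monodromy $\mu^{d}=\pm\boldsymbol1$. Matching these two constraints against the eigenvalue lists eliminates every pair except $\{4,6\}$: for $\{3,4\}$ (and for the $m/(m-1)$ pairs) one finds the would-be $\delta=2$ branch forces an algebraic-degree contradiction on $\exp(\pm2\pi i/\Delta_i)$ exactly as in the proof of \textbf{Fact \ref{888uuu}(1)} and \textbf{Fact \ref{selrule}(2)}, while the remaining graph-(b) configuration with dimensions $\{4,6\}$ survives.

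I expect the main obstacle to be the bookkeeping of \emph{which} of the two enhanced branches is the $\delta=2$ one and which is the $\delta=3$/discriminant one, since $\{4,6\}$ has $d_1=4$, $d_2=6$ not coprime only up to the $\varkappa=2$ rescaling, and one must be careful that the axis of dimension $4$ (the ``new in rank $1$'' totient check $\phi(d)\leq 4$) is the enhanced one carrying a semisimple type such as $II^*$, while the axis of dimension $6$ carries the $I^*_{2k+1}$ or additive type consistent with \textbf{Fact \ref{pppp9cxb}}; this is essentially the same casework already done for the quasi-isotrivial list in \textbf{Fact \ref{uuuu63}}, so I would cross-reference that table rather than redo it. Once the pair is pinned to $\{4,6\}$ the statement follows. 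The secondary subtlety is making sure the $\varkappa=2$ sign ambiguity $\mu_0\mu_1\mu_\infty=\pm\boldsymbol1$ does not open up an extra case, but this is handled exactly as in \eqref{dummy} by absorbing the sign, so no genuinely new configuration appears.
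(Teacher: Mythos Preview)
Your proposal has a real gap in the elimination step. You claim $\{3,4\}$ and the fractional pairs are ruled out by an ``algebraic-degree contradiction on $\exp(\pm 2\pi i/\Delta_i)$'' analogous to \textbf{Fact \ref{888uuu}(1)}. But for every old dimension $\Delta_i$ the root $\exp(2\pi i/\Delta_i)$ already has degree $\leq 2$, so no such contradiction can occur; the argument in \textbf{Fact \ref{888uuu}(1)} is specific to dimensions $\{2,m/(m-1)\}$ and does not transfer here.

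The lever the paper actually uses --- which you do not isolate --- is the structural condition for the $\delta=2$ branch itself. By \textbf{Lemma \ref{ttttt432}} this branch is a \emph{regular} enhanced axis, say of dimension $\Delta_1$; for its $\mu$-monodromy to have minimal polynomial of degree $2$ the four eigenvalues $\exp[\pm 2\pi i(\Delta_j-1)/\Delta_1]$ ($j=1,2$) must collapse in pairs, which by eq.\eqref{cffffeq*} forces $\Delta_2\equiv 0$ or $2\pmod{\Delta_1}$. The first alternative gives $\varkappa\notin\{1,2\}$, so one obtains the single arithmetic relation $\Delta_2=2+k\Delta_1$ with $k\in\Z\setminus\{0\}$. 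Running through old $\Delta_1\in\{6/5,4/3,3/2,3,4,6\}$ and asking when $2+k\Delta_1$ is again old leaves only $\{4/3,6\}$ and $\{4,6\}$; the former dies because $(d_1,d_2)=(2,9)$ and $\phi(9)=6>4$. Your case-by-case eigenvalue check would eventually reveal that for $\{3,4\}$, $\{4,3/2\}$, $\{4,6/5\}$ neither regular axis produces a $\delta=2$ branch, but the \emph{reason} you give is wrong, and your enumeration (``essentially $\{3,4\},\{4,6\}$ \ldots\ together with the values $m/(m-1)$'') is too vague to be sure you have exhausted the list.

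There are also smaller slips: the class-$E_7$ graph (b) has branches $\delta=2,3,4$, not ``$\delta\in\{2,3\}$''; for $\{4,6\}$ the reduced degrees are $(d_1,d_2)=(2,3)$, which \emph{are} coprime; and the $\delta=3$ branch is the \emph{knotted} non-enhanced discriminant, while the axis of dimension $6$ is the $\delta=4$ branch.
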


\begin{cave} There are plenty of rigid non-isotrivial irreducible special geometries with dimensions $\{4,6\}$ which \emph{are not} of class $E_7$ but rather of class $\widehat{E}_7$.
Note that $\{4,6\}$ is the same ``troublesome'' dimension pair which allows for quasi-isotrivial geometries.
\end{cave}

 \begin{rem} There is one \emph{known} class-$E_7$ SCFTs with dimensions $\{4,6\}$
 and a non-enhanced divisor of type $I^*_1$ (model $\#66$);
 in addition one of the two possible geometries for the model $\#22$ (the non-cover one)
  would be
of class-$E_7$ with dimensions $\{4,6\}$ and a non-enhanced divisor of type $II$. 
Both are consistent with the above \textbf{Facts}. 
 \end{rem}

\begin{proof}[Argument] We call $\Delta_1$ the dimension of the regular axis corresponding to the $\delta=2$ branch, and $\Delta_2$ the other one (beware: $\Delta_1$ may be actually larger than $\Delta_2$). $\Delta_1$ is an old dimension $\neq2$. Applying the dimension formulae of \S.\,\ref{jjuutreq} to the regular axis of dimension $\Delta_1$ we get
\be
\Delta_2= 2+k\,\Delta_1\quad k\in\Z\setminus \{0\}.
\ee
When $\Delta_2$ is a \emph{new} dimension the other axis is also regular, so we must have a 
non-enhanced discriminant which must be non-semisimple for the geometry not to be isotrivial.
The non-enhanced discriminant should be associated with the $\delta=3$ branch and hence be of type $I^*_n$ ($n>0$). In this case we can apply the dimension formulae of \S.\,\ref{jjuutreq} also to the axis of dimension $\Delta_2$ getting a list of allowed values for $\Delta_1$ which should agree with the original one.
One gets the table
{\renewcommand{\arraystretch}{1.2}\be
\begin{tabular}{ccc@{\hskip20pt}l}
sequence & $\begin{smallmatrix}\text{rank-2 dimensions}\\ \text{in the sequence}\end{smallmatrix}$ &
$\begin{smallmatrix}\text{old dimensions $\Delta_1$}\\ \text{consistent with $\Delta_2$}\end{smallmatrix}$ & notes\\\hline
$2+\tfrac{6}{5}\,\Z$ & 8 & 4, 6 & $\Delta_1=\tfrac{6}{5}$ ruled out\\\hline
$2+\tfrac{4}{3}\,\Z$ & 6 &  & $\{\tfrac{4}{3},6\}$ allowed\\\
& 10 & 4 & $\{\tfrac{4}{3},10\}$ ruled out\\\hline
$2+\tfrac{3}{2}\,\Z$ & 8 & 4, 6 & $\Delta_1=\tfrac{3}{2}$ ruled out\\\hline
$2+3\,\Z$ & 5 & 3, 4 & $\{3,5\}$ \emph{not} ruled out\\
& 8 & 4, 6 & $\{3,8\}$ ruled out\\\hline
$2+4\,\Z$ & 6 &  & $\{4,6\}$ allowed \\
& 10 & 4 & $\{4,10\}$ \emph{not} ruled out\\\hline
$2+6\,\Z$ & 8 & 6, 4 & $\{6,8\}$ \emph{not} ruled out\\
\end{tabular} 
\ee }
\hskip-4pt Under the assumptions in  \textbf{Fact \ref{pppp9cx}}, $\Delta_2$ is an old dimension and
there are only two allowed dimension pairs with these properties
\be
\left\{\frac{4}{3},6\right\}\quad\text{and}\quad \{4,6\}.
\ee
The first one has $d_2=9$ which is not acceptable in rank-2.
Thus we remain with $\{4,6\}$.
\end{proof}

\subsubsection{Class-$E_7$ geometries with new dimensions?}\label{s:nnnewddd}

The proof of \textbf{Fact \ref{pppp9cx}}
leaves open the possibility of non-isotrivial class $E_7$-geometries of 
with an \emph{irreducible} non-enhanced discriminant of type $I^*_{2k+1}$ and dimensions in the list\footnote{\ Recall that when a rank-2 geometry with one dimension equal 12 is automatically isotrivial.}
$\{3,5\}$, $\{4,10\}$, $\{6,8\}$. 
All known SCFTs with dimensions $\{3,5\}$, $\{4,10\}$ and $\{6,8\}$ have geometries of class $\widehat{E}_7$ and one wonders if any class-$E_7$ geometry with a \emph{new} dimension exists. 

However we can write an explicit integral, rigid, irreducible monodromy representation 
of class-$E_7$ which is appropriate to dimensions $\{3,5\}$ without making any new computation.
Just take as $\mu_1=-R$  where $R$ is the matrix defined in eq.\eqref{Rmat}
and as $\mu_2$ the second matrix in eq.\eqref{iiiiii7zz}. Now
\be\label{Exo1}
\mu_1^t\,\Omega\,\mu_1=\mu_2^t \,\Omega\,\mu_2=\Omega,\quad \mu_1^5=\mu_2^3=\boldsymbol{1}
\ee   
while
\be
\mu_3\overset{\rm def}{=} (\mu_1\mu_2)^{-1}=\left(\begin{smallmatrix}1 & 1 & 0 &-1\\
-1 & 0 & 1 &-1\\
0 &1 & 1 &-1\\
-1 & 1 & 1 &-2
\end{smallmatrix}\right)
\ee
which has a Jordan block of size 2 associated to the eigenvalue $-1$ and 2 trivial Jordan blocks
associated to the eigenvalue $+1$. Moreover
\be\label{Exo2}
\mu_3^2=\boldsymbol{1}+ 2\,w\otimes w^t\,\Omega\quad\text{where}\quad w^t=\big(0\ 1\ 0\ 1\big),
\ee
so that $\mu_3$ is an integral symplectic monodromy of Kodaira type $I_1^*$.
Thus a special geometry of dimensions $\{3,5\}$ with a single discriminant component of type $I_1^*$
potentially may exist. 

\subparagraph{Regularity?} Local regularity along the axes poses no problem for the putative
$\{3,5\}$ geometry of class-$E_7$.
We have $h=u_1^2/u_2$ and $z=u_2^5/u_2^3$. Along the first axis
\be
a^\|\sim h\cdot z^{1/3}=u_1^{1/3},\qquad a^\perp\sim h\cdot z^{2/3}=u_2\cdot u_1^{-4/3}
\ee
along the second axis
\be
a^\|\sim h\cdot z^{2/5}=u_2^{1/5},\qquad 
a^\perp\sim h\cdot z^{1/5}=u_1\cdot u_2^{-2/5}.
\ee
The only mechanism which may prevent this geometry to exist is a failure of the
connection formula i.e.\! the locally good solution around one axis is not the
analytic continuation of the one which is good at the other axis. It looks a very unlikely possibility, so here we have a promising candidate for a novel special geometry which was not previously known.

\subsection{Rigid special geometries of class $\widehat{E}_7$}\label{s:ehat}

The Fuchsian ODE with rigid monodromy and graph $\widehat{E}_7$
is Thomae's order-4 hypergeometric equation whose monodromy is
well understood thanks to Levelt \cite{levelt} and Beukers-Heckman \cite{BHeck}.
In particular we know that
the Brauer class is always trivial. We write $\mu_0$, and $\mu_\infty$
for the local $\mu$-monodromies associated to the two $\delta=4$ branches
and $\mu_1$ for the one associated to the non-enhanced divisor i.e.\! to the $\delta=1$ branch of $\mathfrak{G}$. 
We absorb the possible sign in the lift to $Sp(4,\Z)$
 in
$\mu_\infty$, so that the relation $\mu_0\mu_1\mu_\infty=\boldsymbol{1}$ holds.
%
%

We write $X_a$ ($a=0,\infty$) for, respectively, the matrices $\mu_0$ and $\mu_\infty^{-1}$.
The fact that the conjugacy classes of the matrices $X_0$ and $X_\infty$ are described by $\delta=4$  
branches is equivalent to the statement that their minimal polynomials have degree 4.
In this case, up to conjugacy, they may be set in the form
\be\label{llelele}
X_a=\left(\begin{smallmatrix}0 & 0 & 0 & c_{0,a}\\
1 & 0 & 0 & c_{1,a}\\
0 & 1 & 0 & c_{2,a}\\
0 & 0 & 1 & c_{3,a}\end{smallmatrix}\right)\quad a=0,\infty
\ee
where the $c_{j,a}$'s are the coefficients in the characteristic polynomial of $X_a$:
\be\label{charrr}
P_a(z)\equiv\det[z-X_a]=z^4-c_{3,a}\,z^3-c_{2,a}\,z^2-c_{1,a}\,z-c_{0,a}.
\ee
Since $X_a\in Sp(4,\Z)$ the coefficients are integral and
$c_{0,a}=-1$ and $c_{3,a}=c_{1,a}$. Further restrictions on the $c_{i,a}$
are obtained by the
condition that the \textsc{rhs} of \eqref{charrr} is a product of cyclotomic polynomials.
From now on we assume that the characteristic polynomials of $X_0$, $X_\infty$ satisfy these conditions.

Suppose both $X_0$, $X_\infty$ are in the form \eqref{llelele} up to a common conjugation. In this case
\be
\begin{gathered}
X_1\equiv X_0^{-1}X_\infty=\boldsymbol{1}_4+ V\otimes W^t,\qquad\text{with }\ 
V=\left(\begin{smallmatrix}A\\ B\\ A \\ 0
\end{smallmatrix}\right),\quad W=\left(\begin{smallmatrix}0\\ 0\\ 0\\ 1\end{smallmatrix}\right)\\
\text{where}\qquad A=c_{1,\infty}-c_{1,0},\qquad B=c_{2,\infty}-c_{2,0}
\end{gathered}
\ee
is automatically in the conjugacy class described by a $\delta=1$ with all its eigenvalues are equal $1$.  A $\mu$-representation with these local monodromy classes is automatically rigid, and the three matrices
$\mu_0\equiv X_0$, $\mu_1\equiv X_1$ and $\mu_\infty=X_\infty^{-1}$
should generate the \emph{unique} monodromy representation up to conjugacy in $GL(4,\C)$.
This statement is the Levelt theorem \cite{levelt}. Since in our case the coefficients of the characteristic polynomials are rational integers, the Levelt representation is automatically integral and the Brauer class is trivial. The Levelt representation is $Sp(4,\Z)$-valued and irreducible iff the two characteristic polynomials, $P_0(z)$ and $P_\infty(z)$,
satisfy the following conditions:
\begin{itemize}
\item[(a)] $(P_0(z),P_\infty(z))$ is a pair of degree 4 products of cyclotomic polynomials
$\prod_{n}\Phi_n(z)^{m_n}$ with the property $P_a(z)=z^4\, P_a(1/z)$ (for $a=0,\infty$);
\item[(b)] each polynomial has at most one double root (corresponding to at most one non-trivial Jordan block of size 2) which then must be $\pm1$;
\item[(c)] $P_0(z)$ and $P_\infty(z)$ have no common root by \textbf{Fact \ref{distinctr}}.
In other words, their resultant must be non-zero, $\mathsf{Res}(P_0(z),P_\infty(z))\neq0$;
\item[(d)] there should exist a pair of \emph{coprime} integers $(d_1,d_2)$ such that a pair $(X_0,X_\infty)$ of $4\times 4$ matrices
with minimal polynomials $(P_0(z),P_\infty(z))$ have the property that $X_0^{d_1}$, $X_\infty^{d_2}$ have two trivial Jordan blocks of eigenvalue $\pm1$. This condition eliminates, say, the troublesome polynomial
pair $(\Phi_1^2\Phi_6,\Phi_2^2\Phi_3)$;
\item[(e)] we can disregard $\Phi_{12}(z)$ since it is consistent only with isotrivial geometries
and two $\Z$-irreducible polynomials (i.e.\! two new dimensions)
are possible only if they are $\Phi_8$ and $\Phi_{10}$ or $\Phi_5$ (cf.\! \S.\,\ref{s:sample}). 
\end{itemize}

\begin{defn} The three integral matrices $X_0$, $X_1$ and $X_\infty$
will be called the \emph{Levelt monodromy} (of class-$\widehat{E}_7$)
associated to the pair of polynomials $(P_0(z),P_\infty(z))$ which satisfies the above restrictions.
\end{defn}

The Kodaira type of $X_1$ is $I_f$ where the integer $f$ is
\be
f\equiv\gcd(A, B).
\ee
In the list of pairs $(P_0(z),P_\infty(z))$ satisfying conditions (a),(b),(c) only $(\Phi_1^2\Phi_6,\Phi_2^2\Phi_3)$ 
has $f\not\in\{1,2\}$ in facts $f=6$. But this ``bad'' pair is ruled out by condition (d).

\subsubsection{Symplectic polarization}
With our restrictions on the characteristic polynomials $(P_0(z),P_\infty(z))$,
the Levelt matrices $X_a$ preserve the symplectic form
{\renewcommand{\arraystretch}{1.1}\be
\check{\Omega}=\begin{pmatrix}0 & A & -B & C\\
-A & 0 & A & -B\\
B &- A & 0 & A\\
-C  & B & - A  &0\end{pmatrix}\quad\text{where}\quad
C= A(1+c_{2,0})-B\, c_{1,0}
\ee}
with
\be
\mathsf{Paf}(\check{\Omega})=A^2-B^2+AC,\qquad
\det(\check{\Omega})\equiv \mathsf{Paf}(\check{\Omega})^2
=\mathsf{Res}\big(P_0(z),P_\infty(z)\big).
\ee
Note that $\check{\Omega}$ is $f$ times an integral matrix with the GCD
of its entries 1. In particular
\be
f^2\mid \mathsf{Paf}(\check{\Omega}),\quad\text{and}\quad \check{\Omega}\,V=-\mathsf{Paf}(\check\Omega)\,W.
\ee
If $\check{\Omega}$ is principal (more generally $\mathsf{Paf}(\check{\Omega})$ is square-free), $f$ is automatically $1$ and the knotted
discriminant has Kodaira type $I_1$. In general $\check{\Omega}$ is $\Z$-equivalent to a symplectic matrix in the normal form \cite{matmat}
\be
f\!\left(\begin{smallmatrix}0 & 1 & 0 &0\\
-1 & 0 & 0 &0\\
0& 0& 0 &g\\
0 & 0 &- g&0\end{smallmatrix}\right),\qquad g\in\mathbb{N},\quad \mathsf{Paf}(\check\Omega)=f^2 g.
\ee 
The values of $A$, $B$, $C$, $f$ and $g$ for each pair of polynomials
satisfying the three conditions (a),(b),(c) are listed in table \ref{pairpoly}.

\begin{table}
\begin{scriptsize}
$$
\begin{array}{ccccccc||ccccccc}\hline\hline
(P_0(z),P_\infty(z)) & A & B & C & f & g & \text{allowed?} & (P_0(z),P_\infty(z)) & A & B & C & f & g & \text{allowed?}\\\hline
(\Phi_1^2\Phi_3,\Phi_2^2\Phi_4) &-3 & -2& -1& 1& 8& \#9,\#25&
 (\Phi_1^2\Phi_3,\Phi_2^2\Phi_6) & -2 & 0 &-2 & 2 & 2 & \text{no: dim}\\ 
 (\Phi_1^2\Phi_3,\Phi_4\Phi_6) &0& -2&2&2& 1& \text{no: reg}
 & (\Phi_1^2\Phi_3,\Phi_5) & -2 & -1 &-1 & 1& 5 &\#15,\#42\\
  (\Phi_1^2\Phi_3,\Phi_8) &-1 & 0&-1&1&2 &\#44&
(\Phi_1^2\Phi_3,\Phi_{10}) & 0 & -1 &1 & 1 & 1 & \text{no: dim}\\
(\Phi_1^2\Phi_3,\Phi_{12}) & -1 & 1 &-2& 1 & 2 &\text{no: dim}&
(\Phi_1^2\Phi_4,\Phi_2^2\Phi_3)& -5 & -2 &9& 1& 24 &\#25,\#37?,\#40?\\
 (\Phi_1^2\Phi_4,\Phi_2^2\Phi_6) &-3 & 2 &-1&1 &8 &\#5&
(\Phi_1^2\Phi_4,\Phi_3\Phi_6) &-2&1&0&1&3&\#26\\
 (\Phi_1^2\Phi_4,\Phi_5) & -3 & 1 &1& 1 & 5 &\#6,\#17&
(\Phi_1^2\Phi_4,\Phi_8) &-2 &2&-2&2&1&\text{no: dim}\\
(\Phi_1^2\Phi_4,\Phi_{10}) & -1&1 &-1&1&1&\#50?&
(\Phi_1^2\Phi_4,\Phi_{12}) & -2&3 &-4&1&3& \text{no: dim}\\
(\Phi_1^2\Phi_6,\Phi_2^2\Phi_3)&-6&0&18&6&2&\text{no: dim}&
 (\Phi_1^2\Phi_6,\Phi_2^2\Phi_4) &-5 &9&2 &1&24&\#5,\#23,\#40?\\
(\Phi_1^2\Phi_6,\Phi_3\Phi_4) &-4 &2&6&2&3&\text{no: reg}&
 (\Phi_1^2\Phi_6,\Phi_5) & -4 & 3 &3&1 & 5 &\#29\\ 
 (\Phi_1^2\Phi_6,\Phi_8) &-3 & 4&-3&1&2& \#2,\#19&
(\Phi_1^2\Phi_6,\Phi_{10}) &-2&3&-3& 1 &1 & \text{no: dim}\\
(\Phi_1^2\Phi_6,\Phi_{12}) &-3&5&-6& 1 &2 &\text{no: dim}&
 (\Phi_2^2\Phi_3,\Phi_4\Phi_6) &4 & 2&-6&2&3&\text{no: reg}\\
  (\Phi_2^2\Phi_3,\Phi_5) & 2 & 3 &3& 1 &1&\#8,\#46&
   (\Phi_2^2\Phi_3,\Phi_8) &3 &4&3&1&2&\#2,\#19\\
(\Phi_2^2\Phi_3,\Phi_{10})& 4 &3 &-3&1 & 5 &\text{no: dim}&
(\Phi_2^2\Phi_3,\Phi_{12})& 3 &5 &6&1 & 2 &\text{no: dim}\\
 (\Phi_2^2\Phi_4,\Phi_3\Phi_6) &2 &1&0&1&3&\#41?&
 (\Phi_2^2\Phi_4,\Phi_5) & 1& 1&1&1 &1&\#35?\#50\\ 
 (\Phi_2^2\Phi_4,\Phi_8) &2&2&2&2&1&\text{no: dim}&
(\Phi_2^2\Phi_4,\Phi_{10}) &3 &1 &-1&1 &5&\#50\\
(\Phi_2^2\Phi_4,\Phi_{12}) &2 &3 &4&1 &3&\text{no: dim}&
 (\Phi_2^2\Phi_6,\Phi_3\Phi_4) &0&-2&-2&2&1&\text{no: reg}\\
 (\Phi_2^2\Phi_6,\Phi_5) &0&-1&-1& 1&1&\text{no: dim}&
 (\Phi_2^2\Phi_6,\Phi_8)&1&0&1&1&2& \#44\\
(\Phi_2^2\Phi_6,\Phi_{10}) &2 &-1&1&1&5&\text{no: dim}&
(\Phi_2^2\Phi_6,\Phi_{12}) &1 &1&2&1&2&\text{no: dim}\\
(\Phi_3\Phi_6,\Phi_5) &-1&0&0&1&1&\#20&
 (\Phi_3\Phi_6,\Phi_8) &0 &1 &0& 1&1 &\text{cf. \S.\ref{yuyuy}}\\ 
 (\Phi_3\Phi_6,\Phi_{10})&1&0&0&1&1&\text{no: dim}&
  (\Phi_3\Phi_6,\Phi_{12})&0&2&0&2&1&\text{no: dim}\\
(\Phi_3\Phi_4,\Phi_5) &0 &1 &1&1 &1&\#28,&
(\Phi_3\Phi_4,\Phi_8) &1&2&1&1&2&\text{no: dim}\\
 (\Phi_3\Phi_4,\Phi_{10}) & 2 &1 &-1& 1 &1&\#3,\#59?&
  (\Phi_3\Phi_4,\Phi_{12}) & 1 &3 &2& 1 &6&\text{no: dim}\\
 (\Phi_4\Phi_6,\Phi_5) &-2&1&1&1&1&\#55?&
  (\Phi_4\Phi_6,\Phi_8) &-1 &2 &-1& 1 &2&\text{no: reg}\\
  (\Phi_4\Phi_6,\Phi_{10}) &0&1&-1&1&1&\#3,&
   (\Phi_4\Phi_6,\Phi_{12}) &-1&3&-2&1&6&\text{no: dim}\\
(\Phi_5,\Phi_8) &1&1&1&1&1&\text{no: dim}&
(\Phi_5,\Phi_{10}) &2&0&0&2&1& \text{no: dim}\\
(\Phi_5,\Phi_{12}) &1&2&2&1&1&\text{no: dim}&
(\Phi_8,\Phi_{10}) &1&-1&1&1&1& \#21\\
(\Phi_8,\Phi_{12}) &0&1&0&1&1&\text{no: dim}&
(\Phi_{10},\Phi_{12}) &-1&2&-2&1&1&\text{no: dim}\\\hline\hline
\end{array}
$$
\end{scriptsize}
\caption{\label{pairpoly}Pairs of degree-4 symmetric products of cyclotomic polynomials which have
at most one non-simple root of multiplicity 2 and no common zero.}
\end{table}

When the invariant $g=1$, the Levelt monodromy takes values in $Sp(4,\Z)$ and
is a candidate for the $\mu$-monodromy associated to a special geometry. 
Otherwise one should look for other monodromies, conjugate to the Levelt one
in $GL(4,\C)$, which do take value in $Sp(4,\Z)$. Moreover, as in \S.\,\ref{s:22}, there may be several $GL(4,\C)$-conjugate monodromies all taking value in $Sp(4,\Z)$, but not conjugate in $Sp(4,\Z)$, which  may correspond to inequivalent special geometries. Before going to the general case, we present
a few simple examples with $f=g=1$.

\subsubsection{Example 1: the $\{8/7,10/7\}$ geometry}

We know from \S.\,\ref{s:sample} that a non-isotrivial geometry in rank-2
with two \emph{new} dimensions must have dimensions $\{8/7,10/7\}$
(this is the geometry of the SCFT $\# 21$ in the list \cite{Martone}). The axes are regular enhanced with 4 distinct
eigenvalues so the  geometry must be of class $\widehat{E}_7$.
The polynomial pair is\footnote{\  $(\Phi_8,\Phi_{5})$ yields a different lift of the same $PSp(4,\Z)$ monodromy.}
$(\Phi_8,\Phi_{10})$. We have
\be
c_{1,0}=c_{2,0}=0,\quad c_{1,\infty}=-c_{2\infty}=1.
\ee
and we get
\be
\Omega=\left(\begin{smallmatrix}0 & -1 & -1 & -1\\
1 & 0 & -1 & -1\\
1 & 1 & 0 &- 1\\
1 & 1 & 1 & 0\end{smallmatrix}\right)
\ee
which has $f=g=1$ so that the Levelt monodromy is valued in $Sp(4,\Z)$.
By a change of basis we put the polarization in the standard form
\be
\mu_0=\left(\begin{smallmatrix}0 & 1 &1&0\\ 1 & 1&1 &1\\
-2& 0 & -1 & -1\\ 1 &-1& 0 &1\end{smallmatrix}\right)\quad \mu_1=
\left(\begin{smallmatrix}1 & 0 &0&0\\ 0 & 1&0 &0\\
-1& 0 & 1 & 0\\ 0 &0& 0 &1\end{smallmatrix}\right)\quad
\mu_\infty=\left(\begin{smallmatrix}-1 & 0 &-1&-1\\ -1 & 1&0 &-1\\
1& -1 & -1 & 0\\ 0 &1& 1 &1\end{smallmatrix}\right)\quad
\Omega=\left(\begin{smallmatrix}0 & 0 &1&0\\ 0 & 0&0 &1\\
-1& 0 & 0 & 0\\ 0 &-1& 0 &0\end{smallmatrix}\right)
\ee
which satisfy
\be
\begin{aligned}
&\mu_0^4-\mu_0^3+\mu_0^2-\mu_0+\boldsymbol{1}=0, &\quad&(\mu_1-\boldsymbol{1})^2=0, &\quad&\mu_\infty^4=-\boldsymbol{1},\\
&\mu_0\mu_1\mu_\infty=\boldsymbol{1}, && \mu_a^t\,\Omega\, \mu_a=\Omega &&\text{for }a=0,1,\infty.
\end{aligned}
\ee
It is clear that the (knotted) discriminant is of type $I_1$.

\subparagraph{Local regularity along the axes.} As a further illustration we
 check local regularity (which must hold since we know that this geometry exists).
We have $\varkappa=2$ with $h^2=(u_2/u_1)^7$ and $z=u_1^5/u_2^4$.
Along the first axis ($u_1=0$) the exponents are $7/8,5/8$ so:
\begin{align}
&a^\|\sim\sqrt{h^2}\, z^{7/8}=u_1^{7/8}\equiv u_1^{1/\Delta_1}, &\quad&
a^\perp\sim \sqrt{h^2}\, z^{5/8}= u_2\cdot u_1^{-3/8}
\intertext{along the second axis ($u_1=0$) the exponents are $7/10,9/10$ and}
&a^\|\sim\sqrt{h^2}\, z^{7/10}=u_2^{7/10}\equiv u_2^{1/\Delta_2}, &&
a^\perp\sim \sqrt{h^2}\, z^{9/10}= u_1\cdot u_2^{-1/10}.
\end{align}

\subsubsection{Example 2: $\{3,5\}$ geometries with one $I^*_n$ discriminant}
This geometry corresponds to SCFTs $\#8$, $\#36$, $\#42$ and $\#46$.
The polynomials are $(\Phi_2^2\Phi_3,\Phi_5)$ which give $f=g=1$, so the Levelt
monodromy is in $Sp(4,\Z)$ and the knotted discriminant is of type $I_1$.
The other component of the discriminant is along the axis of dimension 3
and is associated with the $\mu$-monodromy
\be
X_0=\left(\begin{smallmatrix}0 & 0 &0 &-1\\
1 &0&0&-3\\
0& 1 & 0 &-4\\
0 &0 &1 &-3\end{smallmatrix}\right),\qquad (X_0+\boldsymbol{1})^2(X_0^2+X_0+\boldsymbol{1})=0.
\ee
The corresponding $\varrho$-monodromy in $\mathscr{C}$, $\varrho_0\equiv X_0^3$
has one Jordan block of size 2 associated to the eigenvalue $-1$ and two trivial Jordan blocks with eigenvalues $+1$. It satisfies
\be
\varrho_0^2\equiv(X_0^3)^2=\boldsymbol{1}+6\,v\otimes v^t\,\check{\Omega},\qquad v\equiv(1\ 2\ 2\ 1)^t,
\ee 
  which shows that this enhanced discriminant has Kodaira type $I_3^*$ in agreement with physical expectations.
  
  \subparagraph{Local regularity.} This geometry has $\varkappa=1$, $h=u_1^2/u_2$
  and $z=u_2^3/u_1^5$. Along the second axis the exponents are $2/5$ and $1/5$
  \be
  a^\|\sim h\cdot z^{2/5}=u_2^{1/5},\qquad a^\perp \sim h\cdot z^{1/5}=u_1\cdot u_2^{-2/5}
  \ee
  along the first axis ($u_2=0$) the exponents are $1/3$ and $1/2$ so that
  \be
  a^\|\sim h\cdot z^{1/3}=u_1^{1/5},\qquad a^\perp\sim h\cdot z^{1/2}\sim u_2^{1/2}\cdot u_1^{-1/2}
  \ee
  which is the expected behaviour for a discriminant of type $I^*_n$ ($n\geq0$).

  \subsubsection{Example 3: $\{4,10\}$ geometry with a semisimple enhanced discriminant}
  This is the geometry of the SCFT $\#3$. This geometry has $\varkappa=2$
  so we have a little ambiguity in the $Sp(4,\Z)$ lift. The pair of polynomials may be $(\Phi_4\Phi_6,\Phi_{10})$
  or $(\Phi_4\Phi_6,\Phi_5)$. Both pairs have
$f=g=1$ so that the Levelt monodromy is already in $Sp(4,\Z)$. For, say, the first pair
we put the polarization in the standard form
  \be
  \Omega\equiv\left(\begin{smallmatrix}0 & 0 & 1 &0\\
  0 & 0 &0 &1\\ -1 & 0 & 0 &0 \\ 0 & -1 & 0 & 0\end{smallmatrix}\right)=- S^t\,\check{\Omega}\,S,\quad \text{where}\quad S=\left(\begin{smallmatrix}1 & 0 & 0 &0\\
  0 & 1 &0 &1\\ 0 & 0 & 1 &-1 \\ 0 & 0 & 0 & 1\end{smallmatrix}\right)
  \ee
  In this canonical basis the monodromy around the
  axis of dimension 4 reads
  \be
  \varrho_0= - S^{-1}X_0^2\,S =\left(\begin{smallmatrix}0 & 0 & 1 &0\\
  0 & 1 &0 &0\\ -1 & -1 & 1 &0 \\ 0 & -1 & -1 & -1\end{smallmatrix}\right) 
  \ee
  which is semisimple with eigenvalues $(\zeta_6,\zeta_6^{-1},1,1)$
  i.e.\! a discriminant of type $II^*$ or $II$. The regularity of this example 
  was already discussed in \S.\ref{s:eeeexam}; as shown there 
regularity fixes the signs in the lift to $Sp(4,\Z)$
while the axis of dimension 4 must be a discriminant of type $II^*$, as physically expected. 
%
%

  \subsubsection{Example 4: the $\{5/4,3/2\}$ geometry}
 This is the only other non-isotrivial geometry in the list of examples \cite{Martone} with an irreducible
 discriminant. It corresponds to the polynomials $(\Phi_3\Phi_6,\Phi_5)$ which have
 $f=g=1$ so that the Levelt monodromy is in $Sp(4,\Z)$. This geometry has $\varkappa=1$,
 $h=u_2^4/u_1^4$, $z=u_1^6/u_2^5$. The exponents for the first axis are $4/5$ and $3/5$
 so that
 \be
 a^\|\sim h\cdot z^{4/5}=u_1^{4/5},\qquad a^\perp\sim h\cdot z^{3/5}=u_2\cdot u_1^{-2/5}
 \ee
 while around the second axis the exponents are $2/3$ and $5/6$ which gives
 \be
 a^\|\sim h\cdot z^{2/3}=u_2^{2/3},\qquad a^\perp\sim h\cdot z^{h/6}=u_1\cdot u_2^{-1/6}.
 \ee

\subsection{$\widehat{E}_7$ class: non-Levelt principal polarizations}

The Levelt integral monodromy is a solution to our inverse problem only when
$\check{\Omega}$ is an integral multiple of a principal integral matrix, that is, when
$g=1$. 
Otherwise we have to search for representations
conjugate to the Levelt one in $GL(4,\C)$ which take value in $Sp(4,\Z)$
 much as we did in \S.\,\ref{s:22} for the Pochhammer ODE monodromy.
As in that discussion, there may be more than one inequivalent 
$Sp(4,\Z)$-valued representations for a given rigid $GL(4,\C)$-monodromy.
Even when the Levelt monodromy is principal it may be not the only one
valued in $Sp(4,\Z)$ for the given polynomial pair.

\subsubsection{Example: a $\{6,8\}$ geometry}

Consider the following two pairs of polynomials 
$(\Phi_2^2\Phi_6,\Phi_8)$ and $(\Phi_1^2\Phi_3,\Phi_8)$ which have $f=1$ and $g=2$. The first one is naturally interpreted as describing
$\varkappa=2$ geometry with dimensions $\{6,8\}$ where the axis of
dimension 6 has type $I^*_{n>0}$. The monodromy of the second pair
is obtained from the first one 
by a sign flip $\mu_0\to -\mu_0$, $\mu_\infty\to -\mu_\infty$ so it gives a different lift of the same
$PSp(4,\Z)$ monodromy. We have
\be
-2\, \Omega\equiv \left(\begin{smallmatrix} 0 & -2 & 0 & 0\\
2 & 0 &0 &0\\
0& 0 & 0 &-2\\
0 & 0 & 2 & 0\end{smallmatrix}\right) =S^t \,\check{\Omega}\,S,\qquad\text{where}\quad
S= \left(\begin{smallmatrix} 2 & 1 & 1 & 1\\
0 & 1 &0 &-1\\
0& 0 & 1 &1\\
0 & 0 & 0 & 1\end{smallmatrix}\right)
\ee
so that conjugating the Levelt representation with $S^{-1}$ we get the three matrices
\be
\mu_0=\left(\begin{smallmatrix} -1 & -1 & -1 & 0\\
2 & 1 &2 &0\\
0& 1 & -1 &-1\\
0 & 0 & 1 & 0\end{smallmatrix}\right)\qquad
\mu_1=\left(\begin{smallmatrix} 1 & 0 & 0 & 0\\
0 & 1 &0 &0\\
0& 0 & 1 &1\\
0 & 0 & 0 & 1\end{smallmatrix}\right)\qquad
\mu_\infty=\left(\begin{smallmatrix} 1 & 1 & 0 & -1\\
-2 & -1 &0 &0\\
2& 1 & 1 &2\\
-2 & -1 & -1 & -1\end{smallmatrix}\right)
\ee
which satisfy
\be
\begin{gathered}
\mu_0\,\mu_1\,\mu_\infty=\boldsymbol{1},\quad \mu_a^t\,\Omega\,\mu_a=\Omega\ \ (a=0,1,\infty)\\
\mu_\infty^4+\boldsymbol{1}=(\mu_0^2+\boldsymbol{1})^2(\mu_0^2-\mu_0+\boldsymbol{1})=0
\end{gathered}
\ee
One has
\be
\mu_0^6=\boldsymbol{1}+4 v\otimes v^t\Omega\qquad v=\left(\begin{smallmatrix} 0\\ 1\\ -1\\ 1
\end{smallmatrix}\right)
\ee
which means that the axis of dimension 6 is a discriminant of type $I_2^*$
as expected for model $\#44$ of \cite{Martone}. There should be another 
$GL(4,\C)$ equivalent but not $Sp(4,\Z)$-equivalent solution which describes
the geometries of the models $\#2,\#33,\#39$ where the type is $I^*_6$. Presumably they correspond
to another $Sp(4,\Z)$ inequivalent integral realization of the same $GL(4,\C)$ monodromy.

\subsection{$\widehat{E}_7$ class: Kodaira type of knotted discriminants}

\begin{fact} All knotted discriminants of class $\widehat{E}_7$ geometries  are of Kodaira type $I_1$.
\end{fact}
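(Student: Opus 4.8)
The statement is that for a rigid non-isotrivial special geometry of class $\widehat{E}_7$ (equivalently, one whose Picard-Fuchs equation is Thomae's order-4 hypergeometric ODE), the unique knotted discriminant component is forced to have Kodaira type $I_1$. The plan is to exploit the explicit Levelt normal form of the monodromy already set up in the excerpt. Recall that for a class-$\widehat{E}_7$ geometry the two enhanced special points carry local monodromies $X_0$, $X_\infty$ with degree-4 minimal polynomials $P_0(z)$, $P_\infty(z)$, while the knotted discriminant carries $X_1=X_0^{-1}X_\infty=\boldsymbol{1}_4+V\otimes W^t$ with $V=(A,B,A,0)^t$, $W=(0,0,0,1)^t$, and $A=c_{1,\infty}-c_{1,0}$, $B=c_{2,\infty}-c_{2,0}$. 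The Kodaira type of $X_1$ is $I_f$ with $f=\gcd(A,B)$, as noted in the excerpt. So the whole statement reduces to the arithmetic claim: for every admissible pair $(P_0(z),P_\infty(z))$ — i.e.\ every pair of degree-4 symmetric products of cyclotomic polynomials satisfying conditions (a)--(e) — one has $\gcd(A,B)=1$.

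First I would assemble the complete finite list of admissible polynomial pairs. This is exactly the list already tabulated in \textbf{Table \ref{pairpoly}}, where the values of $A$, $B$, $C$, $f$, $g$ are recorded for each pair passing conditions (a),(b),(c). Scanning the $f$ column of that table, one sees $f\in\{1,2,6\}$: the value $f=6$ occurs only for $(\Phi_1^2\Phi_6,\Phi_2^2\Phi_3)$, and the value $f=2$ occurs for a handful of pairs such as $(\Phi_1^2\Phi_3,\Phi_2^2\Phi_6)$, $(\Phi_1^2\Phi_3,\Phi_4\Phi_6)$, $(\Phi_1^2\Phi_4,\Phi_8)$, $(\Phi_1^2\Phi_6,\Phi_2^2\Phi_3)$, $(\Phi_1^2\Phi_6,\Phi_3\Phi_4)$, $(\Phi_2^2\Phi_4,\Phi_8)$, $(\Phi_2^2\Phi_6,\Phi_3\Phi_4)$, $(\Phi_3\Phi_6,\Phi_{12})$, $(\Phi_5,\Phi_{10})$. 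The second step is to check, case by case, that every pair with $f\neq1$ is in fact already \emph{excluded} by one of the remaining admissibility conditions (d) or (e), or by the dimension/regularity constraints marked ``no: dim'', ``no: reg'' in the last column of the table. For instance, $(\Phi_1^2\Phi_6,\Phi_2^2\Phi_3)$ with $f=6$ is killed by condition (d) (the $X_0^{d_1}$, $X_\infty^{d_2}$ cannot both be made to have trivial Jordan blocks for coprime $(d_1,d_2)$), as already flagged in the excerpt; the $f=2$ pairs are each marked ``no: dim'' or ``no: reg'' in Table \ref{pairpoly} for independent reasons — the associated $(d_1,d_2)$ are not coprime, or the implied Coulomb dimensions are not in the allowed list $\cup_s\Xi(s)$, or a would-be regular axis forces an inconsistent exponent pattern. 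Third, having shown every surviving (genuinely allowed) pair has $f=1$, one concludes $X_1$ is of type $I_1$, which is the assertion; this also requires checking that even when $g>1$ and one must replace the Levelt monodromy by a $GL(4,\C)$-conjugate integral representation (as in the $\{6,8\}$ example), the conjugating matrix $S$ does not change $f$ — indeed $f$ equals the GCD of the entries of the symplectic form $\check\Omega$ up to the standard normalization, hence is a $Sp(4,\mathbb{Q})$-conjugation invariant of the pair, so the Kodaira type of the knotted component is intrinsic.

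The main obstacle is the second step: one must be confident that the residual conditions (d), (e), and the dimension/regularity restrictions genuinely eliminate \emph{all} pairs with $f>1$, and not merely the ones that happen to be labelled so in the table. Concretely, for each of the dozen or so pairs with $f=2$ or $f=6$ one needs to verify that there is no admissible coprime pair $(d_1,d_2)$ realizing both local monodromies as Kodaira types along coordinate axes with Coulomb dimensions drawn from the allowed finite set — this is a small but not entirely mechanical diophantine check, since $\phi(d_i)\le 4$ and the exponents along a regular axis are pinned by eq.\eqref{rtupleorroots}. I expect this to go through cleanly case by case, and the cleanest packaging is simply: $f=\gcd(A,B)>1$ forces $A$ and $B$ both even (the $f=6$ case additionally both divisible by $3$), which by the formulas $A=c_{1,\infty}-c_{1,0}$, $B=c_{2,\infty}-c_{2,0}$ constrains the characteristic polynomials mod $2$ (resp.\ mod $6$) so tightly that the only solutions are the listed pairs, each of which then fails (d) or (e). Thus the proof is an entirely finite verification, and its only subtlety is bookkeeping completeness of the exclusion.
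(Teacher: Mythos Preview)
Your approach is essentially the paper's: identify $f=\gcd(A,B)$ as controlling the knotted type, scan Table~\ref{pairpoly} for pairs with $f>1$, and eliminate each via conditions (d), (e), or the dimension/regularity checks. The paper does exactly this, singling out the unique $f=6$ pair $(\Phi_1^2\Phi_6,\Phi_2^2\Phi_3)$ (killed by (d)), three $f=2$ pairs killed by dimension considerations, and the four remaining $f=2$ pairs in eq.~\eqref{uuyuuuu7}, which it rules out by the explicit local-regularity computations along the axes you allude to.

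One point where your justification differs and is slightly off: your claim that ``$f$ equals the GCD of the entries of $\check\Omega$\ldots hence is a $Sp(4,\mathbb{Q})$-conjugation invariant'' and that ``the Kodaira type of the knotted component is intrinsic'' is not quite right --- the Kodaira type $I_n$ \emph{can} change under $GL(4,\C)$-conjugation to a different integral model (that is precisely the phenomenon exploited elsewhere in the paper when $g>1$). The paper's argument for why $f=1$ suffices is cleaner and purely about characteristic polynomials: in \emph{any} integral symplectic realization one has $\mu_1\equiv\boldsymbol{1}\bmod n$, whence $\mu_\infty\equiv\mu_0^{-1}\bmod n$ and so $P_\infty\equiv P_0\bmod n$; since $f$ is by definition the largest integer with $P_\infty\equiv P_0\bmod f$, one gets $n\mid f$. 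This bounds $n$ by $f$ without any claim of invariance, and is the step you should substitute for your $\check\Omega$-GCD argument.
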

This is confirmed from the known examples \cite{Martone}. 

\begin{proof}[Argument]
Any integral symplectic monodromy
which is conjugate in $GL(4,\C)$ to the Levelt one has the local monodromomy $\mu_1$ associated to the
$\delta=1$ branch of type $I_n$ where $n\mid f$. Indeed $\mu_1=\boldsymbol{1}\bmod n$
and hence
\be
\mu_\infty =(\mu_0)^{-1}\bmod n
\ee   
and $P_\infty(z)=P_0(z)\bmod n$, while $f$ is the largest integer such that
$P_\infty(z)=P_0(z)\bmod f$. Looking to the table \ref{pairpoly} we see that $f=1$
except for the pair $(\Phi_1^2\Phi_6,\Phi_2^2\Phi_3)$ which has $f=6$
and seven pairs which have $f=2$. 
The pair $(\Phi_1^2\Phi_6,\Phi_2^2\Phi_3)$ cannot correspond to
any regular special geometry since both axes would be non-semisimple, and all
consistent assignments of dimensions would lead to $3\mid\varkappa$ which implies
an isotrivial geometry which is inconsistent with the presence of 3 non-semisimple discriminants.
Equivalently this pair is ruled out by requirement (d).
Three of the pairs with $f=2$, namely
$(\Phi_1^2\Phi_3,\Phi_2^2\Phi_6)$, $(\Phi_1^2\Phi_4,\Phi_8)$,
$(\Phi_2^2\Phi_4,\Phi_8)$ should also be excluded because all consistent dimension
assignments lead to $\varkappa>2$ (requirement (d)). The four remaining pairs with $f=2$ are
\be\label{uuyuuuu7}
\begin{aligned}
&(\Phi_1^2\Phi_6,\Phi_3\Phi_4) &&(\Phi_2^2\Phi_6,\Phi_3\Phi_4)
&& (\Phi_1^2\Phi_3,\Phi_6\Phi_4)
 && (\Phi_2^2\Phi_3,\Phi_6\Phi_4).
\end{aligned}
\ee
In a would-be special geometry associated with one of these pairs
the first axis is a non-semisimple discriminant of dimension in the list $\{3/2,6/5,3,6\}$.
The second axis is a semisimple discriminant: \emph{a priori} the Albanese variety
may have an automorphim of degree 3 or 4, but the first possibility would lead to
$3\mid\varkappa$, hence to a contradiction. Let us focus, say, on the first pair.
The dimensions pairs with $1/\Delta_i$ of the correct order in $\mathbb{Q}/\Z$
are $\{\Delta_1,\Delta_2\}=\{6,4\}$, $\{6/5,4\}$, $\{6,4/3\}$, and $\{6/5,4/3\}$
but the fractional dimensions may be ruled out by the values of the degree $(d_1,d_2)$.
We remain with  $\{6,4\}$ where the second axis of type
$IV^*$, $II^*$ (or, less likely, $IV$, $II$). Then $\varkappa=2$, $h^2= u_1/u_2$, and $z=u_2^3/u_1^{2}\bmod1$.
Along the second axis the exponents are $\pm1/4$ and $\pm1/3$,
then as $u_1\to 0$
\be
\begin{aligned}
&a_\|\sim \sqrt{h^2}\, z^{1/4} = \frac{u_1^{1/2}}{u_2^{1/2}}\cdot\frac{u_2^{3/4}}{u_1^{1/2}}= u_2^{1/4}\\
\text{either}\quad &a_\perp\sim \sqrt{h^2}\, z^{1/3} = \frac{u_1^{1/2}}{u_2^{1/2}}\cdot\frac{u_2}{u_1^{2/3}}= u_2^{1/2}\,u_1^{-1/6}\\
\text{or}\quad& a_\perp\sim \sqrt{h^2}\, z^{-1/3} = \frac{u_1^{1/2}}{u_2^{1/2}}\cdot\frac{u_1^{2/3}}{u_2}= u_2^{-3/2}\,u_1^{7/6}
\end{aligned}
\ee
and neither expression of $a_\perp$ is consistent with a regular total space with a singular fiber of type $IV^*$ or $IV$ (nor $II^*$, $II$).
 The same argument applies to the
second pair which has the same putative dimensions. For the other two pair
one has dimensions $\{3,4\}$, $\varkappa=1$, $h=u_2/u_1$,
$z=u_1^4/u_2^3$ and as $u_1\to 0$
\be
a_\|\sim h\,z^{1/4}=u_2^{1/4},\qquad a_\perp\sim\begin{cases} h\,z^{1/6}=u_1^{-1/3} u_2^{1/2}\\
h\,z^{-1/6} = u_1^{-5/2} u_2^{3/2}
\end{cases}
\ee  
Neither expression of $a_\perp$ is consistent with a bona fide special geometry.
Therefore all consistent pairs have $f=1$ and a fortiori $n=1$.
\end{proof}

\subsection{The puzzling SCFT $\#45$}\label{puzzle:s}

The SCFT $\#45$ in the list \cite{Martone} is puzzling on the physical side for reasons explained in. that reference. Its most plausible
(or less implausible) properties are listed in \cite{Martone}: Coulomb dimensions $\{4,6\}$,
a knotted discriminant of type $I_1$, the axis of dimension 6 is a discriminant of type $I_2$,
and the one of dimension 4 is a semisimple discriminant of (possibly) type $II^*$.
This list of properties from physical considerations  look a bit puzzling from the geometrical side too.
Let us summarize the situation.

If we assume irreducibility, the presence of a knotted $I_1$ discriminant suggests
a class-$\widehat{E}_7$ geometry. This requires the characteristic polynomials of $\mu_0$
and $\mu_\infty$ to be coprime, hence in the list of table \ref{pairpoly}. From the dimensions,
the fact that $\mu_\infty$ has a non-trivial Jordan block while $\mu_0$ is semisimple
will suggest as candidate polynomials pairs (taking into account the various $Sp(4,\Z)$ lifts)
precisely the ones in eq.\eqref{uuyuuuu7} which are not consistent with regularity. Moreover the most natural pair $(\Phi_2^2\Phi_6,\Phi_3\Phi_4)$ has a Levelt monodromy in $Sp(4,\Z)$ which, however,
has a knotted discriminant of the wrong type: $I_2$ instead of $I_1$.

To solve the conundrum we use regularity \emph{backwards:} we assume the minimal polynomial
of $\mu_\infty$ to be $\Phi_2^2\Phi_6$, since this identification is not problematic, and ask for which
characteristic polynomial of $\mu_0$ we would have both regularity \emph{and}
the physically expected properties. In particular we already know that the polynomial $\Phi_0(z)$
should have the form $\Phi_4(z)\,Q(z)$ for some degree-2 \emph{cyclotomic} polynomial
$Q(z)$. The purported geometry has
dimensions $\{4,6\}$, so $\varkappa=2$, $h^2=u_2/u_1$
and $z= u_1^3/u_2^2$. Assuming the first axis ($u_2=0$) is a discriminant of
type $II^*$ we must have
\be
a^\|\sim \sqrt{h^2}\cdot z^{\alpha_\|}=u_1^{1/4},\qquad
a^\perp\sim \sqrt{h^2}\cdot z^{\alpha_\perp}= u_2^{1/6} 
\ee
for some exponents $\alpha_\|$ and $\alpha_\perp$ of $\mu_0$.
The first relation yields $\alpha_\|=1/4$ consistently with the factor $\Phi_4(z)$
of $P_0(z)$. The second relation yields $\alpha_\perp=1/6$ which implies
$Q(z)=\Phi_6$. We conclude
\be
P_0=\Phi_4\Phi_6,\qquad P_\infty=\Phi_2^2\Phi_6,\qquad \gcd(P_0,P_\infty)=\Phi_6,
\ee
which leads to a reducible geometry. We have studied them
 in \S.\,\ref{reddd}. Since this geometry is rigid, it should be one of the
 \emph{quasi-isotrivial} possibilities listed in \S.\,\ref{SS:quasi-isotrivial} (see also appendix \ref{A:quasi}). Indeed the geometric data of model $\#45$ exactly matches with \textbf{Fact \ref{uuuu63}(3)}:
 same dimensions, same Kodaira types of the three special divisors.
 The same conclusion will follows from \textbf{Fact \ref{777xw}}.

  However unlikely this situation
appears, it is not ruled out by our analysis since $\#45$ does have all 
the characteristics expected for a quasi-isotrivial geometry. While this is certainly not the last word about the geometry of model $\#45$, it gives a plausible geometric
scenario consistent with the physical intuition.
 
 \begin{scen} The special geometry of the SCFT $\#45$ is quasi-isotrivial.
 \end{scen}

\subsection{$\widehat{E}_7$ class: allowed polynomial pairs}\label{yuyuy}

A part for the case of the troublesome dimension pair $\{4,6\}$
(and perhaps its close cousin $\{3,4\}$), assuming $\mu$-rigidity
the classification of rigid (non-cover) non-isotrivial geometries
is reduced to: \textit{(i)} listing the allowed pairs of coprime polynomials $(P_0,P_\infty)$
\textit{(ii)} checking local regularity,
then \textit{(iii)} for each such pair solve the Diophantine problem of finding
all $S\in \C^\times\backslash GL(4,\C)/Sp(4,\Z)$ such that $S^t\check{\Omega}S$ is a multiple of 
an integral skew-symmetric matrix of determinant 1 while $\mu_a=S^{-1}X_a S$ ($a=0,1,\infty$)
are integral matrices (here $X_a$ are the Levelt matrices).

The results of part \textit{(i)} of the program are listed in the last column of table \ref{pairpoly}.
There are various obvious reasons to rule out a polynomial pair. The first one is the non-existence
of a consistent pair of dimensions $\{\Delta_1,\Delta_2\}$ potentially associated to the given pair.
If, say, $\Delta_2$ is a new dimension the list of allowed $\Delta_1$ is short and this
allows to eliminate all pairs which do not correspond to an allowed set of dimensions.
Then one eliminates all pairs which would correspond to dimensions with $\varkappa\not\in\{1,2\}$
since the corresponding geometries would be isotrivial which is in contradiction with the presence of a $I_1$ discriminant. For instance, $\Phi_{12}$ cannot appear since it would automatically imply that the geometry is isotrivial. The pairs which are eliminated for dimensional reasons have 
\textsf{no:dim} in the last column of the table. Then we can eliminate some pairs based on local regularity. We have already eliminated the four pairs in eq.\eqref{uuyuuuu7}.
On the other extremum there is the list of pairs which are obviously allowed, namely the ones associated to known SCFTs.
The corresponding SCFT are written in the table (up to a small ambiguity).

After this exercise, we remain with two pairs which are neither obviously ruled out nor obviously
allowed:
\be\label{juyqw1234}
(\Phi_3\Phi_6,\Phi_8)\qquad (\Phi_4\Phi_6,\Phi_8)
\ee
 In both cases the second axis is regular of dimension $\Delta_2=8/k$ where $\gcd(k,8)=1$.
 The first axis has dimension $\Delta_1$ (written in minimal form) $3/\ell$ or $6/\ell$ for the first pair and
 $4/\ell$ or $6/\ell$ for the second one. The only allowed dimension pairs with these properties
 are
 \be\label{juqw1235c}
 \{4/3,8/3\},\quad \{8/3,4\},\quad \{4,8\},\quad \{6/5,8/5\},\quad \{6,8\}.
 \ee
 All but the first pair are realized in the geometries of known SCFT\footnote{\ The dimension pairs are realized in the following SCFTs: $\{8/3,4\}$ in $\#58$, $\{4,8\}$ in $\#4,52,62,65$, $\{6/5,8/5\}$ in $\#19$,
 and $\{6,8\}$ in $\#2,33,39,44$.}. The first three pairs have $\varkappa=4$ or $4/3$
 and so are isotrivial. We claim that the last two sets of dimensions cannot be realized
 by monodromies with the polynomials in eq.\eqref{juyqw1234}. Note that in the known geometries with these dimensions the axis of dimension $\neq8$ is a discriminant of \emph{non-semisimple} type,
 while the polynomials \eqref{juyqw1234} are satisfied by semisimple matrices. 
 
 Let us justify the claim.
 The last two dimension pairs in \eqref{juqw1235c} have both $\varkappa=2$
 and $\{d_1,d_2\}=\{3,4\}$. Hence
 \be
 \varrho_0=\pm \mu_0^3\ \ \text{with eigenvalues }
 \begin{cases}\pm(1,1,-1,-1) &\text{for }(\Phi_3\Phi_6,\Phi_8)\\
 \pm(-1,-1,i,-i) &\text{for }(\Phi_4\Phi_6,\Phi_8)
 \end{cases}
 \ee
 so the discriminant along the first axis has Kodaira type $I_0^*$ and, respectively,
 $III$ or $III^*$.
 
 For both dimension pairs $z=u_1^4/u_2^3$, while for the first (resp.\! second)
dimension pair $h^2=(u_2/u_1)^5$ (resp.\! $h^2=u_2/u_1$). 
We look for the exponents $\alpha_\perp$
required to get a local solution along the first axis ($u_2=2$)
which matches with the Kodaira type above in the several cases:
\be
\begin{tabular}{lllc}
\text{dimensions} & \text{polynomials} & \text{regular period\quad} & \text{required }$\alpha_\perp$\\\hline
$\{6,8\}$ &$(\Phi_3\Phi_6,\Phi_8)$ & $\sqrt{h^2}\, z^{\alpha_\perp}= u_2^{1/2}\cdot u_1^{-1/2}$ & $0$\\
$\{6,8\}$ &
$(\Phi_4\Phi_6,\Phi_8)$ & $\sqrt{h^2}\, z^{\alpha_\perp}= 
\left\{\begin{smallmatrix}
u_2^{2/3}\cdot u_1^{a_1}\\
u_2^{1/3}\cdot u_1^{a_2}\end{smallmatrix}\right.$ & $\left\{\begin{smallmatrix}-1/18\\ 1/18
\end{smallmatrix}\right.$\\
$\{6/5,8/5\}$ &$(\Phi_3\Phi_6,\Phi_8)$ & $\sqrt{h^2}\, z^{\alpha_\perp}= u_2^{1/2}\cdot u_1^{b_1}$ & $2/3$\\
$\{6/5,8/5\}$ &
$(\Phi_4\Phi_6,\Phi_8)$ & $\sqrt{h^2}\, z^{\alpha_\perp}= 
\left\{\begin{smallmatrix}
u_2^{2/3}\cdot u_1^{c_1}\\
u_2^{1/3}\cdot u_1^{c_2}\end{smallmatrix}\right.$ & $\left\{\begin{smallmatrix}11/18\\ 13/18
\end{smallmatrix}\right.$\\
\end{tabular}
\ee
The only meaningful exponent is $2/3$ in the third row. Hence $(\Phi_4\Phi_6,\Phi_8)$
is ruled out, while the pair $(\Phi_3\Phi_6,\Phi_8)$ may correspond to a
special geometry with an enhanced discriminant of type $I_0^*$ \emph{provided}
it has the exotic-looking
dimension pair $\{6/5,8/5\}$.

\subsection{Summary of rank-2}

The main result of this section is that the list of \emph{known}
rank-2 special geometries ($\equiv$ the SW geometries of known rank-2 SCFTs)
is ``essentially'' complete in the sense that the only \emph{unknown} geometries whose existence we have not yet ruled out are:
\begin{itemize} 
\item geometries which are weakly equivalent to a known one,
i.e.\! they have a monodromy which is $GL(4,\C)$-equivalent but not $Sp(4,\Z)$-equivalent to
a known one;
\item class-$E_7$ geometries with new dimensions not ruled out in \S.\,\eqref{s:nnnewddd}.  A natural candidate follows
from the explicit monodromy in eqs.\eqref{Exo1}-\eqref{Exo2}.
\item other reducible geometries with dimensions $\{3,4\}$ or $\{4,6\}$
besides $\#45$. (This class deserves a more detailed study);
\item additional cover geometries (looks unlikely);
\item rigid geometries with a non rigid monodromy (which conjecturally do not exist). 
\end{itemize}

\section{The inverse problem in rank $r\geq3$}\label{s:r3}

At first sight the inverse problem may seen impossibly hard when $r>2$. 
Even the specification of the inverse data $(\{\Delta_i\},\{\cd_a\},\{\varrho_a\})$
may look bewildering because now the discriminant components $\cd_a$
are quasi-cones over weighted projective
hypersurfaces $\cs_a\subset\mathscr{P}$ which \emph{a priori} may be very complicated
whereas for $r=2$ they were mere points in $\mathbb{P}^1$.

Luckily the story is much better than it may appear. The two basic ingredients that
we used in $r=2$, the rigidity of the Abelian family over $\mathring{\mathscr{S}}\equiv\mathscr{P}\setminus\mathscr{S}$
and the special properties of rigid representations of the fundamental group $\pi_1(\mathring{\mathscr{S}})$ apply (the second one a bit conjecturally) to all quasi-projective bases
$\mathring{\mathscr{S}}$ of any dimension. In the rank-2 case we used the first property to
conclude that when the geometry is not
decomposable, or a cover, and 2 is not a Coulomb dimension (the only case which remains open assuming the 
\textbf{Folk-theorem})  the special divisor $\mathscr{S}$ must consist of precisely 3 points in $\mathbb{P}^1$. 
One has to extend that argument to higher rank; this leads to:

\begin{que} Which hypersurfaces complements 
$\mathbb{P}(q_1,\dots,q_r)\setminus \mathscr{S}$ in a weighted projective space
are the higher dimensional analogues of the sphere with three punctures?  
\end{que} 
If we can give an \emph{a priori} bound on how bad the singularities of the special divisor $\mathscr{S}$
may be, the set of such special divisors should be finite for each $r$-tuple of well-posed weights
$(q_1,\dots,q_r)$ which form themselves a finite set since only finitely many
dimensions $r$-tuple $\{\Delta_1,\dots,\Delta_r\}$ are allowed in a given rank $r$.

%
 
We start by making some simple considerations about the \textbf{Question}.

\subsection{The special divisor $\mathscr{S}$}\label{kkki9223b}

\subsubsection{Conditions on $\mathscr{S}$ from rigidity and ``stratification''}

We assume the Coulomb branch to be smooth and the geometry to be primary. Just as in rank-2 we have the
\begin{prin}[Rigidity]
The space $\mathscr{M}$ of deformations of the (reduced) divisor $\mathscr{S}\subset\mathscr{P}$ which preserve the group $\pi_1(\mathscr{P}\setminus  \mathscr{S})$ should have dimension equal to the multiplicity of $2$ as a
Coulomb dimension. In facts, $\mathscr{M}$ should coincide with the conformal manifold of the special geometry.
\end{prin} 

\begin{exe}\label{kkkkk1234} We are mainly interested in rigid special geometries, but
below we shall consider also examples coming from 
the opposite extremum, namely a dimension $r$-tuple of the form $\{2,2,\dots,2\}$.
In this instance we expect finitely-many $r$-parameter families $\{\mathscr{S}_m\}_{m\in\mathscr{M}}$
of special divisors\footnote{\ In this situation all components $\cs_a$ arise from the discriminant.}. 
In view of the Gaiotto construction \cite{gaiotto}, the \textbf{Folk-theorem} yields
an explicit description of them: there is one family per pair $(g,n)$ of non-negative integers
such that $3g-3+n=r$. The conformal manifold of the $(g,n)$-family is the moduli space
$\cm_{g,n}$ of Riemann surfaces of genus $g$ with $n$-punctures $p_i$,
while $\mathscr{P}$ is the linear system of the divisor $2K+\sum_i p_i$ which is a copy of
$\mathbb{P}^{r-1}$. Fix $m\in\cm_{g,n}$ and consider the corresponding marked curve $\Sigma_m$.
A point $z\in\mathscr{P}$ 
corresponds to a quadratic differential $\phi_{m,z}$ on $\Sigma_m$ with
at most simple poles at the punctures (modulo overall normalization).
Consider the curve $C_{m,z}$ in the (holomorphic) cotangent space $T^*\Sigma_m$
(with fiber coordinate $y$) of equation  $y^2=\phi_{m,z}$. $\mathscr{S}_m\subset\mathbb{P}^{r-1}$
is the locus of $z$'s with singular curve $C_{m,z}$. 
\end{exe}

Rigidity is not the only restriction on $\mathscr{S}$. We have already discussed in \S.\,\ref{s:finD} other conditions following from the physical intuition of ``stratification''
and in particular that each irreducible component of the discriminant
admits a polynomial parametrization. 
%
%
Some examples of divisor satisfying these constraints are in order.

\begin{exe}\label{999ex} Consider the rank-3 special geometries of dimensions $\{2,2,2\}$.
As described in \textbf{Example \ref{kkkkk1234}} we have three families of them
with $(g,n)=(0,6),(1,3),(2,0)$. We focus on the simpler $(0,6)$ family. We have a quadratic differential
with simple poles at (say) $z_1\equiv 0$, $z_2=1$, $z_3=-1$, $z_4$, $z_5$, $z_6$
($z_4,z_5,z_6$ being local coordinates in $\cm_{0,6}$)
\be\label{uuuuuyyyww}
\phi_{m,u}=\frac{u_1\, z^2+ u_2\, z+u_3}{z(z^2-1)(z-z_4)(z-z_5)(z-z_6)}\,dz^2
\ee
where $(u_1:u_2:u_3)$ are homogeneous coordinates on $\mathscr{P}$. The discriminant
is the union of the loci where a zero of the quadratic polynomial in the numerator
collides with a zero of the denominator and the locus where the two zeros of the
numerators coalesce
\be
\cd\colon\quad (u_2^2-4 u_1u_3)\prod_{a=1}^6 (u_1\, z_a^2+u_2\, z_a+u_3)=0.
\ee 
Let us check that this divisor $\cd$ has the properties expected on physical grounds. We start with ``rigidity''.
$\cd$ is the union of a smooth planar conic $C$ and six lines 
\be
L_a\equiv\{u_1\, z_a^2+u_2\, z_a+u_3=0\}\subset\C^3,\quad a=1,\dots,6
\ee 
each one of which is isomorphic to
a copy of $\mathbb{P}^1$. Modulo projective equivalence there is a unique smooth conic.
The six lines are all tangent to the conic $C$, and each line $L_a$ is uniquely determined by its
point of tangency $p_a\in C$. Thus $\mathscr{S}_m$ is determined by the six point
$\{p_a\}$. However the automorphism group $O(3,\C)$ of $C$ acts on the points $\{p_a\}$; since the action of $O(3,\C)$ on $C$ is 3-transitive\footnote{\ This is the complexified version of the well-known \emph{holography} statement that the Euclidean conformal group acts on the $d$-sphere $S^d$ in a 3-transitive fashion.}, we can fix 3 points, and the actual deformation space of $\mathscr{S}_m$ is 3-dimensional
as required by physics.
%

\begin{rem}\label{kkii1234b} A cheap curve $C\subset\mathbb{P}^2$ with the property that
the deformation space $\mathscr{M}$ preserving the topology of its complement has dimension $s$ is a smooth conic together with $s+3$ tangent lines.
\end{rem}

Let us check ``stratification'' on this example. Restricting on the discriminant line $L_a$, the numerator
has a zero at $z=z_a$ which cancels a zero in the denominator, so that $\phi_2$
 becomes the quadratic differential of the model
with $(g,n)=(0,5)$; hence the restriction on $L_a$ yields back the first $\{2,2\}$ special geometry in eq.\eqref{iiiu7zz}.
The restriction on the conic $C$ produces the non-smooth special geometry discussed in \S.\,\ref{uuiyyyw}, i.e.\! a $\Z_2$ quotient of the free geometry. Indeed
\be\label{uuuuuyyyww2}
\phi_{m,u}\big|_C=\frac{u_1(z-u_2/2u_1)^2}{\prod_{a=1}^6(z-z_a)}\,dz^2
\ee
and setting $y=\sqrt{u_1}(z-u_2/2u_1) \tilde y/\prod_a(z-z_a)$, we get the equation
\be
\tilde y^2= \prod_{a=1}^6 (z-z_a),\qquad \lambda=\sqrt{\phi_2}= (\sqrt{u_1}\,z-\sqrt{u_3})\frac{dz}{\tilde y}.
\ee 
 \end{exe}

\begin{exe} Consider the rank-4 SCFT with dimensions $\{2,2,2,2\}$ associated to the 
sphere with 7 punctures. Eq.\eqref{uuuuuyyyww} gets replaced by
\be\label{pppo999992}
\phi_{m,u}=\frac{u_1\ z^3+u_2\,z^2+u_3\, z+u_4}{\prod_{a=1}^7(z-z_a)}\,dz^2,\qquad z_0,z_1,z_2=0,+1,-1
\ee
with discriminant $\mathscr{S}\subset\mathbb{P}^3$ of equation
\be
\Big(-27\, u_1^2 u_4^2+18\, u_1u_2 u_3 u_4-4\, u_1
   u_3^3-4\, u_2^3 u_4+u_2^2 u_3^2\Big)\prod_{a=1}^7\big(u_1 z_a^3+u_2z_a^2+u_3z_a+u_4\big)
   \ee
 the union of 7 planes $L_a$ and a \emph{singular} quartic $Q_4$, with the polynomial parametrization
   \be
   (u_1:u_2:u_3:u_4)= (t_1^3:(2t_2+t_3)t_1^2:(2t_2t_3+t_2^2)t_1:t_2^2t_3).
   \ee
$Q_4$ and the $a$-th plane are tangent along the the image of the line
   $t_2+z_at_1=0$. 
   If we restrict to a plane $L_a$, a zero in the numerator of \eqref{pppo999992}
   cancels a zero in the denominator and  
we get back the quadratic differential in one rank less, eq.\eqref{uuuuuyyyww}, as expected from
stratification. 
We may look at the discriminant restricted on a line $L_a$, say to $u_4=0$ we get a reducible curve: one component is the correct curve $u_2^2-4 u_1u_3=0$ in one rank less and
 the other is a ``spurious'' extra double line $u_3^2=0$ tangent to the conic.
This reflects the fact that we are (wrongly!) enforcing that the curve depends essentially on 4 parameters, and so it reduction on $u_4=0$ must contain 7 tangent lines, whereas the actual discriminant depends only on 3 conformal moduli and hence should contain only 6 tangent line. When both the original geometry and the reduced geometry along a
discriminant component are rigid we do not expect this mismatch to show up.  We expect
that for a rigid geometry the restriction of the special locus to a discriminant component should be a legitimate special locus in one less rank. We call this the \emph{hereditary} principle.
\end{exe}

\begin{rem}\label{ppaazzq} \emph{A priori} the hereditary principle applies only to the components of $\mathscr{S}$
which arise from the discriminant, not necessarily to the ones associated with \emph{regular} R-enhanced divisors. However the only conditions that may not hold are the algebraic properties
of the fundamental group of the complement which now may be Abelian etc. corresponding to (say)
an isotrivial geometry along the R-enhanced divisor. For instance, in $r=3$ the normalization of a
R-enhanced divisor may be $\mathbb{P}^1$ with just \emph{two} punctures not three as
in the ``legitimate'' rank-2 case.
\end{rem}

\begin{exe} More generally, for the rank-$r$ theory with all dimensions equal 2 whose Gaiotto surface is
a sphere with $r+3$ punctures $\mathscr{S}$ is the union of $r+3$ hyperplanes 
\be
H_a=\left\{\sum_{s=1}^r u_s\, z_a^{r-s}=0\right\}\subset\mathbb{P}^{\,r-1},\qquad a=1,\dots,r+3
\ee
plus a singular
hypersurface $Y_{2r-4}$ of degree $2r-4$ given by the discriminant of the degree $r-1$
polynomial $P(z)=\sum_{s=1}^r u_s\, z^{r-s}=0$ which admits the rational parametrization
\be
(u_1:u_2:\dots:u_r)=(t_0^{r-1},t_0^{r-2}e_1,t_0^{r-2} e_2,\dots, t_0 e_{r-2},e_{r-1}),\quad (t_0,t_1,\dots,t_{r-2})\in\mathbb{P}^1
\ee
where $e_k$ is the $k$-th elementary polynomials in the $(r-1)$ variables $(t_1,t_1,t_2,\dots,t_{r-2})$
with $t_1$ repeated twice. The validity of the hereditary principle is obvious.
\end{exe}

\subsubsection{Conditions on $\mathscr{S}$ from $\pi_1(\mathscr{P}\setminus\mathscr{S})$}\label{uuuuqwrt}

The above constraints from rigidity and ``stratification''
are not the only conditions on the divisor $\mathscr{S}\subset\mathscr{P}$.  The requirement that the geometry is indecomposable and non-isotrivial
(nor quasi-isotrivial) sets additional conditions on $\mathscr{S}$.
Indeed the complement $\mathscr{P}\setminus\mathscr{S}$
cannot be a product of lower dimensional quasi-projective space
and its fundamental group
 $\pi_1(\mathscr{P}\setminus\mathscr{S})$ cannot be finite, Abelian,
 solvable, or a non-trivial product, etc.
To see what this implies we review the fundamental  
group of the complement of the hypersurface $\mathscr{S}$ \cite{dimca}.

\subsection{The fundamental group of a hypersurface complement}\label{poi8871}

For special geometry in rank-$r$ we are interested in the fundamental group $\pi_1(\mathscr{P}\setminus\mathscr{S})$
of a weighted projective hypersurface $\mathscr{S}$ in a weighted projective
space $\mathscr{P}=\mathbb{P}(q_1,\dots,q_r)$ which we assume well-formed.

The first observation is that we can always reduce ourselves to hypersurface
complements in the \emph{ordinary} projective space\footnote{\ Below we shall use the symbols $u_1,\dots,u_r$ and $x_1,\dots,x_r$ for, respectively, the homogeneous 
coordinates in the weighted and ordinary projective space.} $\mathbb{P}^{r-1}$.
In our situation the hypersurface $\mathscr{S}$ can be assumed to have form
\be
\mathscr{S}=\{u_1=0\}\cup \{u_2=0\}\cup \cdots\cup \{u_r=0\}\cup \{h(u_1,\dots,u_r)=0\}
\ee 
where $\{u_i=0\}$ are (typically) the R-enhanced 
divisors and $\mathscr{K}\equiv \{h=0\}$ is a divisor (typically reducible) which does not lie
in any linear sub-space spanned by the coordinate axes. 
We use the notation $\ddot{\mathscr{P}}\equiv\mathscr{P}\setminus (u_1u_2\cdots u_r)$,
$\ddot{\mathbb{P}}^{\mspace{2mu}r-1}\equiv \mathbb{P}^{\mspace{2mu}r-1}\setminus(x_1x_2\cdots x_r)$,
so that $\mathscr{P}\setminus\mathscr{S}\equiv \ddot{\mathscr{P}}\setminus \mathscr{K}$.
By construction we have an \emph{unbranched} finite cover
\be
\varpi\colon\ddot{\mathbb{P}}^{\mspace{2mu}r-1}\to \ddot{\mathscr{P}},\qquad\quad
\varpi\colon x_i\mapsto u_i=x_i^{q_i},
\ee
with deck group $\Z_{q_1}\times\Z_{q_2}\times\cdots\times\Z_{q_r}$
so  we have an exact sequence of groups
\be
1\to \prod_a \Z_{q_a}\to \pi_1(\ddot{\mathscr{P}}\setminus \mathscr{K})\to 
\pi_1(\ddot{\mathbb{P}}^{\mspace{2mu}r-1}\setminus \varpi^{-1}(\mathscr{K})))\to 1
\ee
so that, \emph{modulo a finite group}, we can replace $\pi_1(\mathscr{P}\setminus\mathscr{S})$
by the fundamental group of the complement in $\mathbb{P}^{\mspace{2mu}r-1}$
of the hypersurface $\mathscr{W}$ of equation
\be
\mathscr{W}\colon\quad x_1\,x_2\,\cdots x_r\,h(x_1^{d_1},x_2^{d_2},\dots,x_r^{d_r})=0. 
\ee
Note that some arguments, e.g.\! the structure theorem of VHS, are insensitive to finite groups.
\medskip

\subsubsection{Zariski and Kempen-Zariski theorems}\label{s:zariskithm}
The fundamental group $\pi_1(\mathbb{P}^{\mspace{2mu}r-1}\setminus\mathscr{W})$ for $r\geq3$ is described
by two theorems due to Zariski and Kempen-Zariski respectively (see e.g.\! \cite{dimca} or \cite{haraoka}).
The first theorem states that we can effectively reduce the problem to two-dimensions (i.e.\! $r=3$).

\begin{thm}[Zariski]
$\mathscr{W}$ is a hypersurface in $\mathbb{P}^{\mspace{2mu}r-1}$
with complement $Y\equiv\mathbb{P}^{\mspace{2mu}r-1}\setminus \mathscr{W}$ and $E_k\subset\mathbb{P}^{\mspace{2mu}r-1}$
a generic $k$-dimensional linear subspace. The morphism induced by inclusion
\be
\iota_k\colon\pi_1(E_k\cap Y)\to \pi_1(Y)
\ee
is surjective for $k = 1$ and an isomorphism for $k \geq 2$.
In particular:
\begin{itemize}
\item[\rm (a)] The computation of $\pi_1(Y)$ is reduced to the case $r=3$
i.e.\! to the complement of  a planar curve $E_2\setminus E_2\cap \mathscr{W}$;
\item[\rm (b)] The group $\pi_1(Y)$
has a natural set of $d = \deg(\mathscr{W})\equiv \#\{E_1\cap \mathscr{W}\}$ generators $\gamma_j\equiv \iota_1(\ell_j)$ satisfying the relation $\gamma_1\cdots\gamma_d = 1$.
\end{itemize}
\end{thm}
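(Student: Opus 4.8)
The statement to be proved is the Zariski hyperplane section theorem for the fundamental group of a hypersurface complement in projective space, together with its two stated corollaries. This is a classical result; the plan is to indicate the standard route rather than reinvent it.

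\medskip

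\textbf{Strategy.} The proof proceeds by a Lefschetz-type fibration argument. First I would reduce to the case of a generic \emph{pencil}. Given $\mathscr{W}\subset\mathbb{P}^{r-1}$ of degree $d$ and a generic linear subspace $E_k$, one sets up the standard Lefschetz pencil of hyperplane sections: choose a generic linear subspace $L\simeq\mathbb{P}^{k-1}$ (the axis) contained in $E_k$, with $L\cap\mathscr{W}$ transverse, and consider the pencil of $k$-dimensional linear sections $\{E_t\}_{t\in\mathbb{P}^1}$ containing $L$. Blowing up along $L$ produces a fibration $\widetilde{\mathbb{P}^{r-1}}\to\mathbb{P}^1$ whose generic fibre is a copy of $\mathbb{P}^{k}$ meeting $\mathscr{W}$ in a generic hyperplane section, and with finitely many special fibres where the section acquires a node (or, more precisely, where $E_t\cap\mathscr{W}$ becomes non-transverse). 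Restricting to the complement of $\mathscr{W}$ gives a locally trivial fibration over $\mathbb{P}^1$ minus finitely many points, with fibre $E_t\cap Y$. The long exact homotopy sequence of this fibration, combined with the fact that the base $\mathbb{P}^1$ minus points is a $K(\pi,1)$ whose $\pi_1$ is free, then shows that $\pi_1(E_t\cap Y)\to\pi_1(\mathbb{P}^{r-1}\setminus\text{(special fibres)})$ is surjective with kernel normally generated by the vanishing cycles; a dimension count on the monodromy (using transversality along a generic axis, so that vanishing cycles are trivial in $\pi_1$ when $k\geq 2$ because the relevant local picture is a complement of a smooth divisor crossed with a disc) upgrades surjectivity to bijectivity in the range $k\geq 2$, and leaves only surjectivity at $k=1$.

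\medskip

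\textbf{Order of steps.} (1) Reduce from $E_k$ to $E_{k+1}$ inductively, so that it suffices to compare consecutive linear sections; this is where the Lefschetz pencil enters, with $E_k$ as a generic member of a pencil inside $E_{k+1}$. (2) Establish the local triviality of $Y\cap(\text{blow-up})\to\mathbb{P}^1\setminus\{\text{critical values}\}$ by checking that the only failure of transversality occurs at finitely many points and is of Morse type (this uses that a \emph{generic} axis and generic pencil can be chosen — Bertini-type genericity). (3) Run the homotopy exact sequence: $\pi_2(\text{base})\to\pi_1(\text{fibre})\to\pi_1(\text{total space})\to\pi_1(\text{base})\to 1$, noting $\pi_1$ of the total space of the restricted fibration equals $\pi_1(Y)$ once the critical fibres are filled back in and the blow-up is undone (the exceptional divisor sits over $L$ which is disjoint from $\mathscr{W}$ in the generic case, so it does not affect $\pi_1$ of the complement). (4) Identify the kernel of $\pi_1(\text{fibre})\to\pi_1(Y)$ with the subgroup normally generated by the classes of small loops around the vanishing cycles; for $k\geq 2$ a transversality/dimension argument shows these classes are already trivial, giving the isomorphism, while for $k=1$ one only retains surjectivity. (5) Deduce corollary (a) by taking $k=2$, so $\pi_1(Y)\simeq\pi_1(E_2\cap Y)$ with $E_2\cap\mathscr{W}$ a plane curve. (6) Deduce corollary (b) by taking $k=1$: $E_1\cap Y$ is $\mathbb{P}^1$ minus $d$ points, whose fundamental group is generated by $d$ meridian loops $\ell_j$ with the single relation $\ell_1\cdots\ell_d=1$, and surjectivity of $\iota_1$ transports this generating set to $\pi_1(Y)$.

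\medskip

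\textbf{Main obstacle.} The genuinely delicate point is step (4): controlling the kernel of the map on fundamental groups induced by a generic hyperplane section, i.e.\ showing that for $k\geq 2$ the vanishing cycles contribute \emph{nothing} to $\pi_1$. For homology this is the classical weak Lefschetz theorem, but for $\pi_1$ one needs the finer Lefschetz theorem for fundamental groups of \emph{open} (non-compact) varieties, which requires arguing that the local monodromy around a Morse critical value acts trivially on $\pi_1$ of the nearby fibre complement once $k\geq 2$ — equivalently, that the degeneration is sufficiently ``codimension-$\geq 2$ bad'' in the complement. I would handle this by the standard local normal-form analysis of a Lefschetz degeneration of a smooth affine hypersurface section, invoking the fact that the complement of the critical locus in a small disc bundle has the homotopy type forcing the meridians of vanishing cycles to be null-homotopic in the total space when the fibre dimension is $\geq 1$. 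I would treat this as a known input (it is, after all, the content of the theorem being cited) and devote the actual writing to setting up the pencil and the exact sequence carefully. In a paper like the present one it would be entirely appropriate to state the theorem with a reference to Dimca's book \cite{dimca} rather than reproduce the argument, since it is used here only as a black box to reduce higher-rank discriminant-complement computations to the planar ($r=3$) case.
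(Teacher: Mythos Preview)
The paper does not prove this theorem at all: it is stated as a classical result with a reference to Dimca's book \cite{dimca} and Haraoka \cite{haraoka}, exactly as you anticipated in your final paragraph. Your Lefschetz-pencil sketch is the standard route and is correct in outline; there is nothing to compare against, since the paper treats the statement purely as a black box imported from the literature.
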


In view of (b), to get a full description of the fundamental group $\pi_1(Y)$ we have
to complete its presentation by adding the relations between the generators
$\gamma_j$ which generate the kernel of the homomorphism $\iota_1$.
This is precisely what the Kempen-Zariski theorem does, see e.g.\! \cite{dimca,haraoka} for details. 
E.g.\! if $\mathscr{W}$ is a smooth planar curve of degree $d$, $\pi_1(Y)=\Z/d\Z$.
 The Kempen-Zariski relations take a particular simple form when the planar curve $\mathscr{W}\subset\mathbb{P}^2$ is the complexification of
a real arrangement of lines. 
By (b) we have one generator $\gamma_i$ per line (satisfying $\gamma_1\cdots\gamma_d=1$);
The Randell theorem \cite{randell} states that we have
one additional relation per intersection point between these lines. At a point where the $\ell$
lines $\gamma_{j_s}$ ($s=1,\dots,\ell$) meet (ordered in cyclic order) we have the `rotation' relations
\be
\gamma_{i_1}\gamma_{i_2}\cdots\gamma_{i_{s-1}}\gamma_{i_s}=\gamma_{i_2}\gamma_{i_3}\cdots\gamma_{i_s}\gamma_{i_1}=
\gamma_{i_3}\gamma_{i_4}\cdots\gamma_{i_1}\gamma_{i_2}=\cdots=
\gamma_{i_s}\gamma_{i_1}\cdots\gamma_{i_{s-2}}\gamma_{i_{s-1}}
\ee 
In particular if we have only pairwise intersections the fundamental group is commutative.
This is a particular case of the more general result

\begin{pro} If $X\subset\mathbb{P}^2$ is a curve with only nodal singularities
$\pi_1(\mathbb{P}^2\setminus X)$ is Abelian.\footnote{\ For line arrangements the converse is also true.}
\end{pro}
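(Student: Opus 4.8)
```latex
\subsection*{Proof proposal for the Proposition}

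The statement is a classical result, often attributed to Fulton and Deligne (with the more refined local-to-global mechanism going back to Zariski--van Kampen). The plan is to reduce the assertion to the Kempen--Zariski presentation already recalled above and then to observe that, in the nodal case, \emph{all} the added relations are commutators. Concretely, I would proceed as follows. First, by the Zariski hyperplane theorem (the version stated just above, with $r=3$) we may work directly in $\mathbb{P}^2$ with $X\subset\mathbb{P}^2$ a reduced curve whose only singular points are ordinary double points; passing to a generic line $E_1$ gives generators $\gamma_1,\dots,\gamma_d$ (one per intersection point of $E_1$ with $X$, i.e.\! $d=\deg X$) subject to $\gamma_1\cdots\gamma_d=1$, and $\pi_1(\mathbb{P}^2\setminus X)$ is the quotient of the free group on the $\gamma_j$ by this relation together with the Kempen--Zariski relations coming from the singular fibers of a generic pencil of lines.

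Second, I would run the braid-monodromy / van Kampen description of these relations. Choosing a generic pencil of lines through a point $p\notin X$, the generic line meets $X$ in $d$ points, and the special lines of the pencil are those through a singular point of $X$ or tangent to $X$ at a smooth point. Each special line contributes relations obtained from the local monodromy of the corresponding braid in the configuration space of $d$ points in $\mathbb{C}$. For a \emph{node}, the local picture is two smooth branches crossing transversally, so the associated braid is the full twist $\sigma^2$ on the two strands meeting there, and the van Kampen relation it produces is exactly the commutation $\gamma_i\gamma_j=\gamma_j\gamma_i$ of the two meridians (after conjugating by the path connecting the local base point to the global one). For a \emph{simple tangency} at a smooth point of $X$, the local braid is $\sigma$ on two strands, giving a relation of the form $\gamma_i\gamma_j\gamma_i=\gamma_j\gamma_i\gamma_j$ — but here one uses that a curve with only nodal singularities in a \emph{generic} pencil has all its tangent lines simple, and that these ``conjugation by meridians'' relations, when combined with the node commutations and the global relation, still force the group to be abelian; the cleanest way to see this is to note that a tangency relation $w\gamma_i w^{-1}=\gamma_j$ says two meridians are conjugate, and since all meridians are conjugate to a single class once the curve is irreducible (and to one class per component in general), the abelianization map is already an isomorphism as soon as the node relations have made distinct meridian classes commute. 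Thus I would organize the argument so that the only relations I \emph{need} are: (i) $\gamma_1\cdots\gamma_d=1$, (ii) $[\gamma_i,\gamma_j]=1$ for meridians of branches meeting at a node, and (iii) conjugacy of meridians lying on the same irreducible component (a general fact, independent of the singularity type).

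Third, having reduced to these three families of relations, I would finish by a direct group-theoretic argument. Since $X$ has only nodal singularities, any two local branches at any singular point commute by (ii); combined with (iii), any two of the $\gamma_j$ belonging to the same component commute, and so do $\gamma_i,\gamma_j$ from different components that meet somewhere on $X$ — and if two components do not meet at all that is impossible in $\mathbb{P}^2$ (two curves always intersect by Bézout), so every pair of generators commutes. Hence $\pi_1(\mathbb{P}^2\setminus X)$ is abelian, and by the Abelianization computation quoted in the excerpt ($H_1$ of a hypersurface complement is $\mathbb{Z}^k$ with $k$ the number of components, modulo the single relation forcing the degrees to sum to zero) it is finitely generated abelian of the expected rank. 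The footnote's converse (for line arrangements an abelian $\pi_1$ forces only nodal crossings) would be a separate short remark, not part of this proof.

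The main obstacle is the tangency contribution: one must be careful that the relations produced by the non-transverse (but smooth, simple-tangent) fibers of the generic pencil do not obstruct commutativity. The honest way to handle it is the observation in step two — that a tangency relation is a conjugacy of meridians, hence subsumed once (ii) has abelianized the distinct meridian classes — rather than trying to manipulate the tangency relation $\gamma_i\gamma_j\gamma_i=\gamma_j\gamma_i\gamma_j$ directly. An alternative, perhaps slicker, route that avoids braid monodromy altogether is to invoke a Lefschetz-type argument: take a generic pencil, and use that the map from $\pi_1$ of a generic fiber complement to $\pi_1$ of the total complement is surjective with kernel normally generated by the vanishing cycles, each vanishing cycle of a nodal degeneration being a commutator of meridians; I would present the braid-monodromy version as the main argument and mention this as a remark.
```
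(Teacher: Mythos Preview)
The paper does not prove this \textbf{Proposition}; it quotes it as a classical (and deep) fact. Earlier in the text it is called ``the Zariski conjecture (proved in \cite{severi})'', i.e.\ the reference is to Harris' irreducibility of the Severi variety, which allows one to degenerate any nodal plane curve to a union of lines in general position within a connected family and thereby read off $\pi_1$ from the line-arrangement case. The independent proofs of Fulton and Deligne go through the Fulton--Hansen connectedness theorem. None of the known proofs is the direct Zariski--van Kampen computation you outline.

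Your third step contains a genuine gap, and it is exactly the gap that kept the Zariski conjecture open for half a century. You assert that ``combined with (iii), any two of the $\gamma_j$ belonging to the same component commute''. But (iii) only gives you $\gamma_j = w\,\gamma_i\,w^{-1}$ for some specific word $w$ in the meridians (determined by the braid monodromy along the chosen path). To transport the node relation $[\gamma_i,\gamma_k]=1$ to $[\gamma_j,\gamma_k]=1$ you would need $[w,\gamma_k]=1$, which is precisely what you are trying to prove. The same circularity afflicts the ``alternative Lefschetz-type'' route you mention at the end: knowing that the kernel of $\pi_1(\text{generic fiber complement})\twoheadrightarrow\pi_1(\mathbb{P}^2\setminus X)$ is normally generated by certain commutators does not force the quotient to be abelian --- the free group modulo the normal closure of a few commutators is almost never abelian. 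For line arrangements the Randell presentation happens to give \emph{all} pairwise commutators directly (each node of the arrangement involves exactly two of the generating meridians on the generic line), which is why that special case is elementary and why the paper singles it out just before stating the \textbf{Proposition}. For a general nodal curve the braid-monodromy words $w$ can be complicated, and there is no elementary bookkeeping that forces them to lie in the center. If you want to present an honest proof here, you should either invoke Fulton--Deligne or sketch Harris' degeneration argument; the van Kampen route by itself does not close.
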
 
%
%

\subsubsection{Rational covers}\label{kkkiqwert9} 
We have already noticed that for many purposes we are interested in the monodromy group only up to commensurability, i.e.\! up to finite groups. For $r\geq3$
there is a cheap way of getting new good special divisors $\mathscr{S}$ out of old ones such that
their complements have commensurable fundamental groups.
For easy of exposition, we assume that $\mathscr{P}$ is the ordinary projective space $\mathbb{P}^{\mspace{2mu}r-1}$. Suppose we have a rational map
$\psi\colon \mathbb{P}^{\mspace{2mu}r-1}\dashrightarrow\mathbb{P}^{\mspace{2mu}r-1}$
of degree $d$ and a good special divisor $\mathscr{S}$ with the property that
the hypersurface $\mathscr{S}^\prime\equiv\overline{\psi^{-1}(\mathscr{S})}$ contains all the points of indeterminacy of
$\psi$ as well as the branching locus of $\psi$. Then the restriction of the map between the complements
\be
\psi|\colon \mathbb{P}^{\mspace{2mu}r-1}\setminus \mathscr{S}^\prime\to
 \mathbb{P}^{\mspace{2mu}r-1}\setminus \mathscr{S}
\ee
is an unbranched cover with a deck group $F\subset\mathfrak{S}_d$ (hence finite). 
One has the exact sequence of groups
\be
1\to F\to \pi_1(\mathbb{P}^{\mspace{2mu}r-1}\setminus \mathscr{S})\to
\pi_1(\mathbb{P}^{\mspace{2mu}r-1}\setminus \mathscr{S}^\prime)\to 1
\ee
so that the fundamental groups of the two complements differ by a finite group.
When $d=1$, i.e.\! $\psi$ is a Cremona transformation, the complements are
isomorphic and in particular have the same fundamental groups (evident from the above exact sequence since $F=1$ in this case). The special divisor
$\mathscr{S}$, $\mathscr{S}^\prime$ are different but essentially equivalent
from the viewpoint of the monodromy analysis. For examples, see below.

More generally we may exploit the theory of Cremona transformations for weighted projective
space. See \cite{wcremona} for examples when $r=3$ with applications 
to the computation of fundamental groups of complements of curves in weighted
projective planes.

\subsection{First ``economical'' examples}

We focus on the basic case $r=3$, i.e.\! $\mathscr{P}$
of complex dimension 2, and assume the special geometry to be irreducible,
rigid (2 is not a Coulomb dimension), and non-isotrivial.  Under these conditions
in rank-2 the complement $\mathscr{P}\setminus\mathscr{S}$ was the sphere less 3 points;
 we look for complements
 $\mathscr{P}\setminus\mathscr{S}$
which are
\textit{``the higher dimensional analogues of the thrice-punctured sphere''} in the sense
that satisfy the ``same'' rigidity and non-triviality conditions.
The curve $\mathscr{S}$ should posses
 rather restrictive properties, in particular it should have no equisingular deformation\footnote{\
A deformation of a curve is \emph{equisingular} if it preserves the Milnor number of each of
its singularity, see \cite{dimca}.} since the complements of equisingular curves are diffeomorphic \cite{dimca}
so have the same fundamental group. 
To give the flavor of a planar curve satisfying all the relevant constraints, we present a simple
example which may actually arise from a fully-fledged rank-3 special geometry. Later we shall discuss a deeper interpretation of the same example.

 \medskip
 
To keep the example as simple as possible, we consider the special case where $\{\Delta_1,\Delta_2,\Delta_3\}$ are such:
  \textit{(i)} 1 and 2 are not dimensions, \textit{(ii)} the geometry is not automatically isotrivial ($\varkappa\in\{1,2\}$),
 and \textit{(iii)} 
  $\mathscr{P}\simeq \mathbb{P}^2$. 
The typical example is $\{\Delta_1,\Delta_2,\Delta_3\}=\{4,6,12\}$
where the isomorphism $\mathscr{P}\simeq \mathbb{P}^2$ is given in
 homogeneous coordinates $(x_1:x_2:x_3)=(u_1^6:u_2^4:u_3^2)$.
 The reduced $\mathscr{S}$ has the form 
\be
\mathscr{S}=\big(x_1\,x_2\,x_3\, h(x_1,x_2,x_3)\big)
\ee
 for some square-free homogeneous polynomial
$h\equiv h(x_1,x_2,x_3)$ of degree $d$ which is not divisible by any $x_i$.
$\mathscr{S}$ consists of three lines in general position, $\{x_a=0\}$,
 plus the planar curve $h=0$ which is
reduced but typically reducible.
$h$ should satisfy a number of conditions. First, $\mathscr{S}$ must have no
 equisingular deformations, that is, all modifications of the polynomial $h$ which preserve the Milnor numbers of the singularities of $\mathscr{S}$ may be undone by rescalings of the variables $x_i$ (`rigidity').
 If the curve $h$ is smooth in general position the rigidity condition is satisfied only when $d=1$.
 But then the curve $\mathscr{S}$ has only simple nodes, $\pi_1(\mathscr{P}\setminus\mathscr{S})$
 is Abelian, so the period map is constant, and the geometry (if it exists) is isotrivial.
Next the line $h=0$ may be in a special position, i.e.\! it may pass through the intersection point
 of two coordinate lines, say through $(x_2=0)\cap (x_3=0)$, so $h=\alpha\, x_2+\beta\, x_3$.
 But in this case 
\be
\pi_1(\mathscr{C}\setminus (x_1x_2 x_3 (\alpha x_2+ \beta x_3)))=\pi_1\!\big(\C\setminus (x_1)\big)\times \pi_1\!\big(\C^2\setminus (x_2 x_3 (\alpha x_2+ \beta x_3))\big)
\ee
  and the special geometry decomposes in a rank-1 special geometry with $\Delta=4$
  and a rank-2 geometry with $\{\Delta_1,\Delta_2\}=\{6,12\}$ (again isotrivial).
  We conclude that the minimal degree of $h$ is 2, in which case the reduced curve $h=0$ is either two distinct lines or
  a smooth conic. We consider the two possibilities in turn.
  
  \paragraph{$h=0$ are two distinct lines.} Neither line can be in general position because of rigidity;
  then either both lines pass through a crossing point $p_{ij}\equiv(x_i=0)\cap (x_j=0)$ of two coordinate lines, or
  they meet on a coordinate line (say on $x_3=0$). They cannot pass through the same crossing point $p_{ij}$ because
  this will lead to a decomposable geometry. Modulo permutation of the coordinates,
  we remain with the two possibilities
  \be
  \begin{cases}\alpha x_1+\beta x_3=0\\
  \gamma x_2+\delta x_3=0
  \end{cases}\quad\text{or}\quad 
    \begin{cases}\alpha x_1+\beta x_2+\gamma x_3=0\\
\alpha x_1+\beta x_2+\gamma^\prime x_3=0
  \end{cases}
  \ee
 In the second case the parameter $\gamma/\gamma^\prime$ cannot be absorbed by a rescaling,
 so this configuration is excluded.
We remain with the first possibility. 
We absorb the parameters by a suitable rescaling of the coordinates,
and focus on the affine patch $x_3\neq0$ of $\mathbb{P}^2$;
$x_3=0$ is then the line at infinity $L_\infty$, and we write $\mathscr{S}=X\cup L_\infty$.
 Now $\mathbb{P}^2\setminus \mathscr{S}=\C^2\setminus X$ where  $X\subset \C^2$ is the \emph{complexification} of the arrangement in $\R^2$ of the 4 real lines drawn solid in the figure 
\be\label{ju76z}
\begin{gathered}
\includegraphics[width=0.30\textwidth]{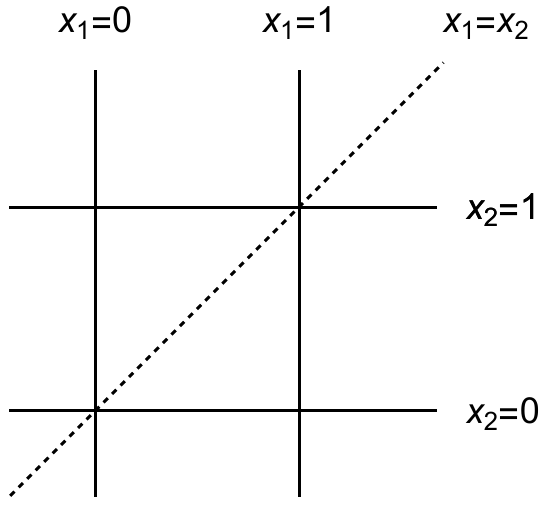}
\end{gathered}
\ee
Adding the dashed line $L_5\equiv\{x_1=x_2\}$ to $X$ will not spoil rigidity nor any other required property. We claim that we should add $L_5$ in order to satisfy
the ``natural'' conditions of $\pi_1(\mathscr{P}\setminus\mathscr{S})$ listed in
\S.\,\ref{uuuuqwrt}. Indeed in eq.\eqref{mosteco} we will show that the fundamental group
of the complement in $\C^2$ of the 4 solid lines is a non-trivial product. 
Thus 

\begin{exe}
Our first \textit{``economic'' higher dimensional analogue of the sphere less 3 points}
is $\mathbb{P}^2$ less the six lines
\be
\{x_1=0\},\ \{x_2=0\},\ \{x_3=0\},\ \{x_1=x_3\}, \ \{x_2=x_3\},\ \{x_1=x_2\}.
\ee   
i.e.\! the complement in
$\C^2$ of the five lines in the figure \eqref{ju76z}. We check the validity of the ``hereditary principle''.
Each special line contains precisely 3 special points, namely the two intersections with other lines
visible in figure \eqref{ju76z} plus the intersection with the line at infinity which also has 3 special
points associated to each set of parallel lines in \eqref{ju76z}. We shall consider the reduction of the natural Picard-Fuchs equation with singularity locus \eqref{ju76z} to one discriminant line (in the sense of \S.\,\ref{s:PFstrat}), in \S.\,\,\eqref{s:AAAA} below. 
\end{exe}

 \paragraph{$h$ is a smooth conic.} Let us consider the case where $h=0$ is a conic.
 \textbf{Remark \ref{kkii1234b}} gives us a first good configuration:
 $h$ is the (unique) smooth conic in $\mathbb{P}^2$ which is tangent to all 3 coordinate lines $\{x_a=0\}$ i.e.\!
 \be\label{juiqw12uu}
 h\colon\quad x_1^2+x_2^2+x_3^2-2(x_1x_2+x_2x_3+x_3x_1)=0.
 \ee
 Each one of the 4 components of $\mathscr{S}$ is a copy of $\mathbb{P}^1$ and the (reduced)
 restriction of $\mathscr{S}$ to it consists of 3 distinct points in perfect accordance
 with the hereditary principle.

 Another rigid configuration is a conic which goes through all 3 intersection points $p_{ij}$ i.e.\!
  \be
 h\colon\quad x_1x_2+x_2x_3+x_3x_1=0
 \ee
 The three lines $x_i+x_j=0$ ($i<j$) are tangent to $h=0$ at the crossing point $p_{ij}$;
 adding some of them would not spoil rigidity. If we add exactly \emph{two} of them, say the tangents to $h=0$ in $p_{12}$
 and $p_{13}$,
 all components have 3 special points, but for the line $x_1=0$ which has only 2 such points and then must be a regular R-enhanced discriminant (cf.\! \textbf{Remark \ref{ppaazzq}}).
If   we draw all 3 tangents we get a special divisor that, while rigid, is not
hereditary: some components have 4 special points. This configuration should be disregarded.
Thus to construct a new ``economic'' example we add just two tangents and draw
them \emph{dashed}
 meaning that the monodromy along them is allowed to be trivial (cf.\! the dashed line in  \eqref{ju76z}).
 If we set the local monodromy along 0,1, or dashed lines to 1 we get
 a configuration with (respectively) 1, 2 or 3 lines $\{x_a=0\}$ which are regular R-enhanced.
 Note that this is consistent with the fact that only coordinate planes can be regular special divisors. 
 
 Finally we have the less symmetric situations
where the conic is tangent (say) to the line $x_3=0$ in the crossing point $(0:1:0)$
  and also passes through another crossing point, say $(0:0:1)$. While this curve
   is rigid,
  the conic component has only two special points: since it is not a coordinate line
  it cannot be a regular component and so, having only 2 special points, violates
  hereditary. This configuration should also be disregarded.

\begin{exe}
Our two next ``economic'' examples of \emph{higher dimensional analogues}
of  the 3-punctured sphere $\mathbb{P}^1\setminus\{0,1,\infty\}$ are:
{\bf(1)} the complement in
$\C^2$ of a conic with three tangent lines and {\bf(2)} the complement of three lines
plus the conic
passing through their crossing points and two out of the three lines tangent to the
conic which pass through a crossing point of the original lines.
\end{exe}
 
  \medskip
  
  \paragraph{Higher degree $d\geq3$.}
  We can increase the degree $d$
  of $h$ above 2
  and rigidity will imply stronger and stronger non-genericity conditions and wilder singularities
  making the configuration of curves more and more
  ``unlikely'' and ``ugly''. Since our purpose is merely to present nice economical examples,
  we shall not pursue them here. Many other examples (possibly all) can be produced using the cover construction outlined in \S.\,\ref{kkkiqwert9} and so are essentially equivalent to the one described before. It is conceivable
  that all consistent special divisors arise in this way from the previous ``economic'' divisors (see below for some preliminary evidence).

 \paragraph{Summary of $r=3$ examples.} In conclusion, for our illustrative purposes we focus on the three ``cheap''
  examples of \emph{higher dimensional analogues
of  the thrice-punctured sphere $\mathbb{P}^1\setminus\{0,1,\infty\}$}
given by the complement $\mathbb{P}^2\setminus \mathscr{S}_{(a)}$ ($a=1,2,3$)
where
\begin{itemize}
\item[(1)] 
$\mathscr{S}_{(1)}\colon x_1\,x_2\,x_3\,(x_3-x_1)\,(x_2-x_3)\,(x_1-x_2)=0$
\item[(2)] $\mathscr{S}_{(2)}\colon x_1\,x_2\,x_3\,(x_1+x_2)\,(x_2+x_3)\,(x_1 x_2+x_2x_3+x_1x_3)=0$
\item[(3)] $\mathscr{S}_{(3)}\colon x_1\,x_2\,x_3\,(x_1^2+x_2^2+x_3^2-2x_1x_2-2x_2x_3-2 x_3x_1)=0$.
\end{itemize}

\begin{rem} The coordinate lines $x_i=0$ may correspond to R-enhanced loci which are or not part of the discriminant. On the contrary, all components of $h=0$ should be part of the discriminant.
Hence a \emph{necessary condition} for the coordinate axis $\{u_i=0\}\cap\{u_j=0\}$
not to lay in the discriminant is that $h$ does not cross the point $p_{ij}\in \mathbb{P}^{2}$. 
Therefore,
when the three axes are all regular (and the degree of $h\leq2$), the only possibilities are (3) or
(2) with the two ``dashed'' lines $x_1+x_2=0$, $x_2+x_3=0$ deleted.
In rank-2 there are only 2 non-isotrivial geometries with all axes regular ($\#20$ and $\#21$)
and they both correspond to Argyres-Douglas SCFTs. 
Non-isotrivial geometries with all axes regular and an irreducible discriminant are quite rare.  
\end{rem}

\paragraph{Rational covers.} In the case of the thrice-punctured sphere we had to study
the representations of a single group $\pi_1(\mathbb{P}^1\setminus\{0,1,\infty\})\simeq F_2$
(the free group in two generators). In the present situation it may seem that we have
to study three different groups already at the level of our ``economic'' examples.
However it is not so \cite{app1}

\begin{fact} {\bf (1)} The complements $\mathbb{P}^2\setminus \mathscr{S}_{(1)}$ and
 $\mathbb{P}^2\setminus \mathscr{S}_{(2)}$
are biholomorphic so, in particular, have isomorphic fundamental groups
$\pi_1(\mathbb{P}^2\setminus \mathscr{S}_{(1)})\simeq\pi_1(\mathbb{P}^2\setminus \mathscr{S}_{(2)})$.\\ 
{\bf (2)} We have an unbranched double cover $\mathbb{P}^2\setminus \mathscr{S}_{(1)}\to
\mathbb{P}^2\setminus \mathscr{S}_{(3)}$ with deck group $\Z_2$ generated by $x_1\leftrightarrow x_2$, so that
\be
1\to\Z_2\to \pi_1(\mathbb{P}^2\setminus \mathscr{S}_{(3)})\to \pi_1(\mathbb{P}^2\setminus \mathscr{S}_{(1)})\to 1
\ee
so the fundamental groups agree modulo a finite group.
\end{fact}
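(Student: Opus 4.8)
The plan is to exhibit explicit birational/covering maps among the three affine plane configurations and then transfer the statements on fundamental groups. The starting point is the classical fact (going back to Appell's work on hypergeometric functions in two variables, cf.\! the reference \cite{app1}) that the arrangement $\mathscr{S}_{(1)}$ of the six lines $x_1x_2x_3(x_3-x_1)(x_2-x_3)(x_1-x_2)=0$ is, after passing to the affine patch $x_3\neq 0$, the braid arrangement $A_2$ in $\C^2$: the three lines $x_1,x_2,x_1-x_2=0$ together with the three ``translates'' coming from $x_3=0$. For \textbf{(1)} I would write down the explicit Cremona transformation $\psi\colon\mathbb{P}^2\dashrightarrow\mathbb{P}^2$ sending $(x_1:x_2:x_3)$ to a rational quadratic map whose base points are precisely the three coordinate vertices $p_{ij}$, so that $\psi$ blows up the three vertices and blows down the three coordinate lines; one checks directly that under such a standard quadratic Cremona map the strict transform of the line configuration $\{x_1,x_2,x_3,x_3-x_1,x_2-x_3,x_1-x_2\}$ is the configuration $\{x_1,x_2,x_3,x_1+x_2,x_2+x_3,x_1x_2+x_2x_3+x_3x_1\}$ — the conic $x_1x_2+x_2x_3+x_3x_1=0$ appearing as the image of the line $x_1+x_2+x_3=0$ (or of one of the $x_i-x_j=0$), and the three ``old'' coordinate lines reappearing because the Cremona map is its own inverse up to permutation. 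Since $\psi$ restricts to a biregular isomorphism on the complements of the indeterminacy loci, and by construction both $\mathscr{S}_{(1)}$ and $\mathscr{S}_{(2)}$ contain those loci, the map $\psi|\colon \mathbb{P}^2\setminus\mathscr{S}_{(1)}\xrightarrow{\ \sim\ }\mathbb{P}^2\setminus\mathscr{S}_{(2)}$ is a biholomorphism, whence the isomorphism of fundamental groups is immediate. This uses the $d=1$ case of the exact sequence in \S.\,\ref{kkkiqwert9}, where $F=1$.

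For \textbf{(2)} the relevant map is the degree-two morphism $\phi\colon\mathbb{P}^2\dashrightarrow\mathbb{P}^2$, $\phi\colon(x_1:x_2:x_3)\mapsto(x_1x_2:x_1x_2:\dots)$; more precisely I would use the double cover induced by the $\Z_2$-action $x_1\leftrightarrow x_2$, realized concretely by passing to the $\Z_2$-invariant coordinates $\sigma_1=x_1+x_2$, $\sigma_2=x_1x_2$, $x_3$. A direct computation shows that the quotient of the configuration $x_1x_2x_3(x_3-x_1)(x_2-x_3)(x_1-x_2)$ by this involution is (up to coordinate changes) the configuration $x_1x_2x_3\,\bigl(x_1^2+x_2^2+x_3^2-2x_1x_2-2x_2x_3-2x_3x_1\bigr)$: the fixed locus $x_1=x_2$ of the involution maps onto the discriminant conic $\sigma_1^2-4\sigma_2=0$, which in the original normalization (\ref{juiqw12uu}) is exactly the conic tangent to the three coordinate lines, while the pair of lines $\{x_3-x_1=0,x_2-x_3=0\}$ is swapped and descends to the single line $x_3=\sigma_1-x_3$ hidden inside the coordinate-line data. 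Hence $\mathbb{P}^2\setminus\mathscr{S}_{(1)}\to\mathbb{P}^2\setminus\mathscr{S}_{(3)}$ is an unbranched double cover: the branch locus $\{x_1=x_2\}$ has been removed because it lies in $\mathscr{S}_{(1)}$, and the image of the removed locus lies in $\mathscr{S}_{(3)}$ (it is the conic). The covering-space long exact sequence of homotopy groups then gives
\[
1\to\Z_2\to\pi_1\bigl(\mathbb{P}^2\setminus\mathscr{S}_{(3)}\bigr)\to\pi_1\bigl(\mathbb{P}^2\setminus\mathscr{S}_{(1)}\bigr)\to 1,
\]
which is the claimed commensurability; this is the $d=2$, $F=\Z_2$ instance of the exact sequence of \S.\,\ref{kkkiqwert9}.

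I expect the only genuinely delicate point to be the \emph{bookkeeping} of which curves go where under the explicit Cremona and quotient maps: one must be careful that the strict transforms of the six lines of $\mathscr{S}_{(1)}$ are exactly the seven-component divisor $\mathscr{S}_{(2)}$ (including the two ``dashed'' tangent lines $x_1+x_2=0$, $x_2+x_3=0$ and the conic) with no extra or missing components, and likewise that the $\Z_2$-quotient of $\mathscr{S}_{(1)}$ is precisely $\mathscr{S}_{(3)}$ rather than a larger arrangement. Both verifications are elementary but require choosing the normalizations consistently with (\ref{ju76z}), (\ref{juiqw12uu}); once the maps are pinned down the rest is formal. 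An alternative, more conceptual route for \textbf{(1)} — which I would mention but not carry out in detail — is simply to observe that all three complements are, up to the finite covers already discussed, realizations of the complement of a fiber-type (supersolvable) line arrangement, so that their fundamental groups are iterated semidirect products of free groups determined by the intersection lattice; since $\mathscr{S}_{(1)}$ and $\mathscr{S}_{(2)}$ have isomorphic (weighted) intersection lattices, a theorem of Falk–Randell–type identifies the groups. The Cremona-map argument is preferable here because it yields the stronger statement of an actual biholomorphism of the complements, which is what the \textbf{Fact} asserts.
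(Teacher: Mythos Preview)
Your approach is essentially the same as the paper's: an explicit Cremona transformation for part \textbf{(1)}, and the $\Z_2$-quotient by $x_1\leftrightarrow x_2$ for part \textbf{(2)}. The paper carries out exactly what you outline. For \textbf{(1)} it writes down the map $c\colon(x_1,x_2,x_3)\mapsto(x_1x_2+x_2x_3,\,-x_1x_3,\,x_1x_2)$ and verifies by direct computation that $(c^*\mathscr{S}_{(1)})_{\text{red}}=\mathscr{S}_{(2)}$, the conic arising as the pullback of the line $y_1=y_2$ (so your hedged guess ``or of one of the $x_i-x_j=0$'' is the correct one). For \textbf{(2)} it uses the degree-two map $r\colon(x_1,x_2,x_3)\mapsto\big(x_1x_2,\,(x_3-x_1)(x_3-x_2),\,x_3^2\big)$ --- note that all three components are homogeneous of degree $2$, so $r$ lands directly in $\mathbb{P}^2$ rather than in $\mathbb{P}(1,2,1)$ as your symmetric-function map $(\sigma_1,\sigma_2,x_3)$ would --- and checks that the Jacobian $2(x_1-x_2)x_3^2$ vanishes only on the removed locus. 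One small slip: $\mathscr{S}_{(2)}$ has six irreducible components (five lines and the conic), not seven. Otherwise you have correctly identified both the strategy and its only delicate point; what remains is to pin down the explicit maps and do the one-line pullback computations, which is precisely the content of the paper's proof.
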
 

\begin{proof}{\bf(1)} The Cremona transformation
\be
c\colon\mathbb{P}^2\dashrightarrow\mathbb{P}^2,\qquad (x_1,x_2,x_3)\mapsto (y_1,y_2,y_3)\equiv\big(x_1x_2+x_2x_3,-x_1x_3, x_1x_2\big)
\ee
(projectively equivalent to a standard quadratic transform \cite{classicalAG}) has the property that
the \emph{reduced} divisor $(c^*\mathscr{S}_{(1)})_\text{red}$ is equal to $\mathscr{S}_{(2)}$.
Indeed
\begin{multline}
c^*\big(y_1y_2 y_3 (y_3-y_1)(y_2-y_3)(y_1-y_2)\big)=\\
= - x_1^3 x_2^3 x_3^2(x_1+x_3)(x_2+x_3)(x_1x_2+x_1x_3+x_2x_3),
\end{multline}
while the points of indeterminacy of $c$ ($(1:0:0)$, $(0:1:0)$, $(0:0:1)$)
and of $c^{-1}$ ($(1:0:0)$, $(0:1:0)$, $(1:0:1)$) belong to the excised curves so that $c$
is an isomorphism when restricted to the complements.
\textbf{(2)} Same story, except that the rational map 
\be
\begin{aligned}
&r\colon\big(x_1,x_2,x_3\big)\mapsto
\big(y_1,y_2,y_3)\equiv\big(x_1x_2,(x_3-x_1)(x_3-x_2),x_3^2\big)
\\
&r^*\!\left(y_1y_2y_3\Big(\sum_a y_a^2-2\sum_{a<b}y_ay_b\Big)\right)=x_1x_2x_3^4(x_1-x_2)^2(x_1-x_3)(x_2-x_3)
\end{aligned}
\ee
has now degree 2. The Jacobian of the map is $2(x_1-x_2)x_3^2$,
so the map is an unbranched cover when restricted to the complement. 
\end{proof}

\paragraph{Fundamental groups.} It is convenient to think of the ``good'' locus $\mathring{\mathscr{P}}$
as the complement of the arrangement of lines $\mathscr{S}_{(1)}$
which allows us to write a simple presentation of its fundamental group using the
Randell theorem \cite{randell}.
We number the lines in \eqref{ju76z} (i.e.\! the component of $\mathscr{S}_{(1)}$ distinct of the line at infinity $L_\infty=\{x_3=0\}$) as
\be
L_1=\{x_1=0\},\ L_2=\{x_2=x_3\},\ L_3=\{x_1=x_2\},\ L_4=\{x_1=x_3\},\ L_5=\{x_2=0\}
\ee
and write $\mu_a$ for the monodromy of a loop going once around $L_a$ in the positive direction.
The Randell theorem then gives \cite{randell,haraoka}\footnote{\ Compare the line arrangement in figure \eqref{ju76z} with the one in the figure \textbf{13.8} of \cite{haraoka}.}
{\renewcommand{\arraystretch}{1.05}
\be\label{hhhhhy4}
\pi_1(\mathbb{P}^2\setminus\mathscr{S}_{(1)})=
\pi_1(\C\setminus \cup_{a=1}^5 L_a)=\left\langle\mu_1,\mu_2,\mu_3,\mu_4,\mu_5\;\left|\;
\begin{matrix}\mu_1\mu_2=\mu_2\mu_1,\ \ \mu_4\mu_5=\mu_5\mu_4,\\
\mu_5\mu_3\mu_1=\mu_3\mu_1\mu_5=\mu_1\mu_5\mu_3,\\
\mu_4\mu_3\mu_2=\mu_3\mu_2\mu_4=\mu_2\mu_4\mu_3\end{matrix}\right.\right\rangle
\ee
}
and we have
\be
\mu_\infty=(\mu_1\mu_2\mu_3\mu_4\mu_5)^{-1}.
\ee
When the dashed line $L_3$ has trivial local monodromy, $\mu_3=1$, (the ``\emph{most} economical'' example)
all relations become commutations and we get\footnote{\ As before, $F_2$ is the free group in two generators (isomorphic to $\Gamma(2)$ as abstract groups).}
\be\label{mosteco}
\pi_1(\mathbb{P}^2\setminus\mathscr{S}_\text{m.e.})\simeq F_2\times F_2
\ee
which is a special instance of \textbf{Theorem 4.11} of \cite{dimca2} (in turn a special case of
\cite{prodthm}). 

\begin{rem} As an abstract group $\pi_1(\mathbb{P}^2\setminus \mathscr{S}_{(3)})$ is the Artin braid group of the triangle graph \cite{artin1,artin2}
\be
\begin{gathered}
\xymatrix{&\bullet\ar@{-}[dr]^4\ar@{-}[dl]_4\\
\bullet\ar@{-}[rr]_2 &&\bullet}
\end{gathered}
\ee
For a van Kampen-like presentation of $\pi_1(\mathbb{P}^2\setminus \mathscr{S}_{(3)})$, see instead \cite{wcremona}. 
\end{rem} 

\subsubsection{Comparison with Appell's hypergeometric PDEs}\label{s:AAAA}

Above we have followed our physical intuition to get ``economic'' examples of
 complements of planar curves which may be seen, at least informally, as
higher dimensional counterparts of the sphere with three punctures. It turns out
that the 3 varieties we
got realize the desired analogy in a much deeper sense than expected \emph{a priori}.
Indeed, as stressed by Riemann in 1851, the most important property of the thrice-punctured sphere is that 
the degree-2 representations of its fundamental group are the monodromies
of  the order-2 hypergeometric ODE, and these representations are rigid. 
Appell \cite{appp} extended Riemann ideas to
higher dimensions introducing four hypergeometric functions $F_i(x,y)$ ($i=1,\dots,4$) depending on two variables
$x$, $y$ which satisfy certain systems of linear PDEs with regular singularities
(a Picard-Fuchs system) which are, in a sense, the 2-variable generalization of Gauss' hypegeometric ODE. One studies the monodromy representation of this Picard-Fuchs systems
and finds that they are \emph{rigid} (under some natural condition) \cite{haraoka,app1,app2,app3,app4}.
The monodromy of the Appell function is \emph{precisely} a (rigid) representation of the
fundamental group $\pi_1(\mathscr{P}\setminus\mathscr{S}_{(1)})$ of the complement of
the arrangement of lines in figure \eqref{ju76z} which then plays the same role in the theory of rigid monodromy
of Picard-Fuchs PDEs on $\mathbb{P}^2$ than the thrice-punctured sphere  plays classically
in the theory of rigid monodromies of Picard-Fuchs ODEs on $\mathbb{P}^1$!

Important examples of rigid monodromy representations of the group \eqref{hhhhhy4}
are explicitly know \cite{app1,app2,app3,app4}: they are the monodromies of the four Appell's
functions $F_1(x,y)$, $F_2(x,y)$, $F_3(x,y)$, $F_4(x,y)$ which satisfy \emph{pairs}
of second order PDEs \cite{appp,erde}. 
In facts, the divisors $\mathscr{S}_{(1)}$, $\mathscr{S}_{(2)}$ and $\mathscr{S}_{(3)}$ are the loci of the regular singularities
of the Picard-Fuchs PDEs satisfied, respectively, by $F_1$, $F_3$, and $F_4$ (the divisor for $F_2$ is obtained from the one for $F_1$ by a linear redefinition of the coordinates \cite{app1}).
Their monodromy representations have the form
\be
\varrho\colon \pi_1(\mathbb{P}^2\setminus\mathscr{S}_{(1)})\to GL(n,\C)\quad
\text{where}\ \ n=\begin{cases}3 & \text{for }F_1\\
4 & \text{for }F_2,F_3,F_4
\end{cases}
\ee

The theory of the Appell PDEs may be used to illustrate the ``stratification''
at the level of the Picard-Fuchs PDEs (cf.\! \S.\,\ref{s:PFstrat}). The restrictions
of the Appell PDE system for $F_1$ to each one of the irreducible components
of the special divisor $\mathscr{S}_{(1)}$ is studied in \textbf{Example 12.2}
of \cite{haraoka}. We quote the result: \emph{the restriction of the PDE
 to each line in the special locus
yields back the Gauss hypergeometric ODE.}  

\subsection{Appell's monodromies?} 
Appell's monodromy representation cannot be directly identified with the monodromy of
a family of Abelian varieties of dimension 3 since the latter has dimension (degree) 6,
while in the Appell case we have 3-dimensional representations
$V_3$ and 4-dimensional ones $V_4$. Indeed, the Appell functions are the higher
dimensional analogue of the Gauss hypergeometric functions ${}_2F_1(z)$
whereas already for $r=2$ the SW periods were expressed in terms of the generalized
hypergeometric functions ${}_4F_3(z)$ which solve Thomae's order-4
linear ODE. These equations have the same singular set, but different order.
For our applications we need the higher dimensional analogue of the Thomae
hypergeometric -- with a dimension-$6$ monodromy representation --
 not Appell's analog of the Gauss equation (which is good enough as long as
 understanding the special divisor $\mathscr{S}$ goes).

However we may still think of a ``functorial'' construction where we get a dimension 6 monodromy representation $V_6$ of $\pi_1(\mathbb{P}^2\setminus \mathscr{S}_{(1)})$
starting from Apell's monodromies $V_3$ and $V_4$ of dimension (respectively) 3 and 4. 
Applying 
 the standard operations of linear algebra to $V_3$, $V_4$
 one constructs ``functorially'' various 6-dimensional representations such as $V_3\oplus V_3$ or $V_3\odot V_3$ or $V_4\wedge V_4$.
More generally, we may consider non-trivial factorizations of the form
\be
\mu\colon \pi_1(\mathbb{P}^2\setminus\mathscr{S}_{(1)})\to G(\Z)\xrightarrow{\ \;\iota\ } Sp(6,\Z)
\ee
where we assume $G(\Z)$ to be the smallest group with this property.
The Mumford-Tate group $G$ of the Abelian family $\mathscr{A}\to\mathring{\mathscr{P}}$ \cite{griffithsMT,griffithsMT2,periods} is the connected component of the
$\mathbb{Q}$-algebraic group which is   
 the Zariski $\mathbb{Q}$-closure $\overline{G(\Z)}^{\,\mathbb{Q}}$ of the discrete subgroup $G(\Z)\subset Sp(6,\mathbb{Q})$. The real Lie group $G(\R)$ of real-valued
points of $G$ is a Hermitian group\footnote{\ A Hermitian group is a real Lie group $G$ such that the symmetric space $G/K$ ($K\subset G$ a maximal compact subgroup) is K\"ahler. All Hermitian groups are of Hodge type (the converse is false).}. $\iota$ extends to
an inclusion of real Lie groups $G(\R)\xrightarrow{\;\iota\;} Sp(6,\R)$ given by a representation
which is symplectic in the ``motivic'' sense \cite{Deligne?,milneShi}.
The allowed pairs of algebraic groups and symplectic representations (for any rank $r$) may be read for instance in the table on page 10 of
\cite{milneShi}. If $G(\R)\subset Sp(6,\R)$ is a non-trivial subgroup, it
can be only one of the  semi-simple real Hermitian maximal subgroups
 \be
 SL(2,\R),\quad SL(2,\R)\times SL(2,\R),\quad SL(2,\R)\times Sp(4,\R),\quad
SU(2,1)
\ee
or a subgroup thereof. The first two have no real dimension \textbf{6} representation in the table \cite{milneShi}.
For the third one the only allowed representation is $\mathbf{2}\oplus\mathbf{4}$ which leads
to a reducible representation. As we learned from the
reducible examples in rank-2, this may still produce a non-trivial geometry
(for instance a quasi-isotrivial geometry with a generic fiber which is isogenic to the
product of a fixed elliptic curve with an Abelian surfaces whose 4 periods
satisfy an Appell's PDE) but here we are mainly interested in $\mu$-irreducible geometries and leave apart this possibility for future work. 

We remain with $G(\R)=SU(2,1)$ and a real representation $V_6$ such that
$V_6\otimes\C\simeq \boldsymbol{3}\oplus \boldsymbol{\overline{3}}$.
The embedding of $SU(2,1)$ into $Sp(6,\R)$ is as follows: identify $Sp(6,\R)$ with the group of real 
$6\times 6$ matrices $S$ satisfying
{\renewcommand{\arraystretch}{1.2}\be
S^t\,\tilde\Omega\, S=\tilde\Omega\quad\text{where }\ \tilde\Omega=\begin{pmatrix}0 & \eta\\
-\eta &0\end{pmatrix}\ \ \text{and } \eta=\mathrm{diag}(+1,+1,-1)
\ee
and $SU(2,1)$ with the group of $3\times 3$ complex matrices $U$ such that $U^\dagger \eta\, U=\eta$.
Then 
\be\label{uuu666u}
SU(2,1)\hookrightarrow Sp(6,\R),\quad U\mapsto \begin{pmatrix}\mathrm{Re}\,U & \mathrm{Im}\,U\\ -\mathrm{Im}\,U & \mathrm{Re}\,U\end{pmatrix}.
\ee}
\hskip-4pt We have the constraint that the monodromy group should contain a non-trivial unipotent element $S_0$
 i.e.\! such that $S_0-\boldsymbol{1}$ is nilpotent of rank 1, in facts 
 \be
 S_0=\boldsymbol{1}+v\otimes v^t\,\tilde\Omega\qquad v\neq0.
 \ee
 But $S_0$ is not of the form in the right side of \eqref{uuu666u},
 so $S_0\not\in SU(2,1)$. We conclude
 \begin{fact} A non-isotrivial, $\mu$-irreducible rank-$3$ special geometry
 has the standard Mumford-Tate group $Sp_6$.
 \end{fact}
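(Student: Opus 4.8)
## Proof plan for the Fact

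The plan is to argue by elimination, exactly along the lines sketched just before the statement, but organized so that each reduction step is a clean consequence of a result already established in the paper. The starting point is the discussion in \S.\,\ref{s:structurethm}: for a non-isotrivial special geometry the $\R$-algebraic monodromy group $M=\overline{\Upsilon}^{\,\mathbb{Q}}(\R)$ is isogeneous to a product $M_1\times\cdots\times M_r$ of simple, non-compact real Lie groups of odd Mumford--Tate type, with $\boldsymbol{2r}=\oplus_i V_i$ a decomposition into symplectic subrepresentations. For $r=3$ this means $M$ sits inside $Sp(6,\R)$ and, together with the constraint that the representation is of Hodge type, the Mumford--Tate group $G$ (the connected $\mathbb{Q}$-Zariski closure of the monodromy group, whose $\R$-points form a Hermitian group) must be one of the semisimple real Hermitian maximal subgroups of $Sp(6,\R)$ listed in \cite{milneShi}, namely $SL(2,\R)$, $SL(2,\R)\times SL(2,\R)$, $SL(2,\R)\times Sp(4,\R)$, $SU(2,1)$, or a subgroup of one of these (or the full $Sp(6,\R)$ itself). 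So the first step is simply to quote this classification of candidate Mumford--Tate groups.

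The second step is to rule out all proper candidates except $SU(2,1)$ by a dimension count on the representation. A rank-$3$ special geometry gives a real symplectic representation of dimension $\boldsymbol{6}$, which moreover must be of motivic (weight-one Hodge) type; one then consults the table on p.\,10 of \cite{milneShi} listing the allowed pairs (algebraic group, symplectic representation). The groups $SL(2,\R)$ and $SL(2,\R)\times SL(2,\R)$ admit no dimension-$6$ symplectic representation of the right type on that list, so they are excluded outright. For $SL(2,\R)\times Sp(4,\R)$ the only allowed $\boldsymbol{6}$ is $\boldsymbol{2}\oplus\boldsymbol{4}$, which is reducible; invoking the standing hypothesis that the geometry is $\mu$-irreducible eliminates this case as well (the reducible sub-case is explicitly deferred to future work in the text, so no further argument is needed here). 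That leaves $G(\R)=SU(2,1)$ with $V_6\otimes\C\simeq \boldsymbol{3}\oplus\overline{\boldsymbol{3}}$, and the full $Sp(6,\R)$ — the latter being the ``standard Mumford--Tate group'' we want to conclude.

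The third and decisive step is to eliminate $SU(2,1)$ using the presence of a nontrivial unipotent local monodromy. Since the geometry is non-isotrivial, by \S.\,\ref{s:isotrivial} at least one local monodromy at the boundary has infinite order; because only simple singular fibers occur (we have a section), such a local monodromy $S_0$ is quasi-unipotent with a single Jordan block of size $2$, so after passing to a power $S_0-\boldsymbol 1$ is nilpotent of rank $1$, i.e.\ $S_0=\boldsymbol{1}+v\otimes v^t\,\tilde\Omega$ for some nonzero $v$. One then writes down the explicit embedding $SU(2,1)\hookrightarrow Sp(6,\R)$, $U\mapsto\left(\begin{smallmatrix}\mathrm{Re}\,U & \mathrm{Im}\,U\\ -\mathrm{Im}\,U & \mathrm{Re}\,U\end{smallmatrix}\right)$, and observes that every element in the image has a block structure forcing its ``$(1+1)$-block'' to be a rotation; in particular a rank-one unipotent of the form above cannot lie in this image, since $S_0-\boldsymbol1$ has rank $1$ while any $U-\boldsymbol 1$ with $U$ unitary of finite or infinite order, when mapped into $Sp(6,\R)$, either vanishes or has even rank $\ge 2$. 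Hence $S_0\notin SU(2,1)$, so the monodromy group is not contained in $SU(2,1)$, and the only remaining possibility is the full $Sp_6$.

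The main obstacle I expect is the third step: making the rank/parity argument against $SU(2,1)$ airtight. One needs to verify carefully that \emph{no} element of the image of $SU(2,1)$ in $Sp(6,\R)$ — not merely no unipotent one — can equal $\boldsymbol 1 + v\otimes v^t\tilde\Omega$ with $v\ne 0$; this amounts to checking that the nilpotent orbit of rank $1$ in $\mathfrak{sp}(6,\R)$ (or the corresponding unipotent class in $Sp(6,\R)$) does not meet the subalgebra $\mathfrak{su}(2,1)$, which is a small but genuine Lie-theoretic computation rather than a one-line observation. A clean way to run it is to note that the centralizer dimensions differ: the rank-$1$ unipotent class in $Sp(6,\R)$ has a specific dimension, and its intersection with the $SU(2,1)$-orbit structure can be excluded by comparing with the nilpotent orbits of $\mathfrak{su}(2,1)$, all of which either are trivial or have rank $\ge 2$ as operators on $\R^6$ because they come in conjugate pairs under the decomposition $V_6\otimes\C\simeq\boldsymbol 3\oplus\overline{\boldsymbol 3}$. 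Everything else in the argument is bookkeeping against the Milne--Shih table and the structural results already proved in the paper.
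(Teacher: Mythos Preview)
Your proposal is correct and follows essentially the same route as the paper: classify the proper Hermitian subgroups of $Sp(6,\R)$ via the Milne table, discard $SL(2,\R)$ and $SL(2,\R)^2$ by representation dimension, discard $SL(2,\R)\times Sp(4,\R)$ by $\mu$-irreducibility, and eliminate $SU(2,1)$ by showing that a rank-$1$ unipotent $S_0=\boldsymbol{1}+v\otimes v^t\tilde\Omega$ cannot lie in the block-diagonal image of $SU(2,1)\hookrightarrow Sp(6,\R)$. Your treatment of the $SU(2,1)$ step is in fact more careful than the paper's, which simply asserts that $S_0$ is not of the required block form; your observation that any nonzero nilpotent in $\mathfrak{su}(2,1)$ acts on $\R^6\simeq\boldsymbol{3}\oplus\overline{\boldsymbol{3}}$ with even rank $\geq 2$ is the clean way to make that assertion rigorous.
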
 
 Thus we cannot construct a rank-3 $\mu$-irreducible special geometry by a simply-minded
 use of Appell's transcendents. 
 To get a candidate monodromy for a putative $\mu$-irreducible (and $\mu$-rigid)
 special geometry based on our 
 ``economic'' analogue of the 3-punctured sphere we should look for ``new'' transcendents
 which generalize to multiple variables the Thomae ones. 
 In the next subsection we present evidence that nice transcendents of the appropriate kind exist.
 
 \subsection{An ``economic'' $r=4$ example}\label{r4eco}
 
%
%
 
 We look for would-be for rank-4 analogues of the 3-punctured sphere.
 For simplicity we take $\mathscr{P}$ to be $\mathbb{P}^3$: the physically relevant case of
 a weighted projective space $\mathbb{P}(d_1,\dots,d_4)$ can be reduced to this one
 by a finite cover as before.

  To construct a natural analogue in $\mathbb{P}^3$ of the planar configuration
 of a conic with 3 tangent lines we start from the
 Cayley cubic $X_3\subset\mathbb{P}^3$ which is the unique (modulo $PGL(4,\C)$
 projective equivalence) singular normal cubic hypersurface with 4 singularities
 which are of type $A_1$ (simple nodes) \cite{classicalAG,klein}. Choosing the coordinates
 so that the 4 nodes are at
 \be
 (1:0:0:0),\quad (0:1:0:0),\quad (0:0:1:0),\quad (0:0:0:1),
 \ee
 the equation is
 \be
 x_1x_2x_3+x_2x_3x_4+x_3x_4x_1+x_4x_1x_2=0.
 \ee
 Since the Cayley cubic is singular, it has only 9 lines and 11 tritangent planes
 instead of 27 lines and 45 tritangents as a generic cubic. The lines
 are of two types \cite{klein}:
 {\renewcommand{\arraystretch}{1.4}
 \be
 \begin{array}{lll}\hline
\text{type} &\text{description} & \text{equation}\\\hline
 \textbf{A} &\text{6 lines through 2 nodes\quad} &  x_i=x_j=0,\quad 1\leq i < j\leq 4\\
 \textbf{B} & \text{3 other lines} &  x_i+x_j=x_k+x_l=0,\quad {i,j,k,l}=\{1,2,3,4\}\\\hline
 \end{array}
 \ee
 The 11 tritangent planes (i.e.\! planes tangent in three lines, counted with multiplicity) are:
 \be
 \begin{array}{lll}\hline
$\#$ & \text{equation} & \text{it contains}\\\hline
4 &x_i=0,\quad i=1,\dots,4 & \text{3 distinct \textbf{A} lines}\\
6 & x_i+x_j=0,\quad 1\leq i<j\leq 4 & \text{1 \textbf{A} line (multiplicity 2) \& 1 \textbf{B} line} \\
1&x_1+x_2+x_3+x_4=0 & \text{3 lines of type \textbf{B}} \\\hline
 \end{array}
 \ee}
 
 \begin{propo} As an example of rank 4 analogue to the 3-puncture sphere we
 propose the complement in $\mathbb{P}^3$ of the divisor $\mathscr{S}_8$ whose 8 irreducible components are
 \begin{itemize}
 \item the Cayley cubic $X_3$;
 \item the 6 tritangent planes containing double lines through two nodes;
 \item the tritangent plane containing the 3 lines of type \textbf{B}, 
 \end{itemize}
 that is
 \begin{multline}
 (x_1+x_2) (x_1+x_3) (x_1+x_4) (x_2+x_3) (x_2+x_4) (x_3+x_4)\times\\
 \times (x_1+x_2+x_3+x_4)(x_1x_2x_3+x_2x_3x_4+x_3x_4x_1+x_4x_1x_2)=0
 \end{multline}
 \end{propo}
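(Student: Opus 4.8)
The plan is to verify that the complement $\mathbb{P}^3\setminus\mathscr{S}_8$ satisfies the three requirements identified in \S.\,\ref{kkki9223b} for a ``higher dimensional analogue of the thrice-punctured sphere'': \textbf{(i)} rigidity (the equisingular deformation space of $\mathscr{S}_8$ is trivial, matching the fact that $2$ is not a Coulomb dimension for the dimension $4$-tuples we target); \textbf{(ii)} the hereditary principle --- the restriction of $\mathscr{S}_8$ to each of its irreducible components should be a legitimate rank-$3$ special divisor, in fact (for the plane components) one of the three ``economic'' $r=3$ divisors $\mathscr{S}_{(1)},\mathscr{S}_{(2)},\mathscr{S}_{(3)}$; and \textbf{(iii)} the topological non-triviality conditions on $\pi_1(\mathbb{P}^3\setminus\mathscr{S}_8)$ (not finite, not Abelian, not solvable, not a nontrivial product, the complement not a product of lower dimensional spaces). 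First I would set up the incidence combinatorics of the Cayley cubic $X_3$ carefully, using the classical description (4 nodes of type $A_1$, 9 lines in the two families $\mathbf{A}$, $\mathbf{B}$, 11 tritangent planes) recalled in the excerpt, and fix the explicit symmetric coordinates so that the full $\mathfrak{S}_4$ symmetry permuting the nodes is manifest; this symmetry will do most of the bookkeeping.

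For \textbf{(ii)}, I would restrict the degree-$8$ polynomial defining $\mathscr{S}_8$ to each component in turn. On a coordinate plane $\{x_i=0\}$ the Cayley cubic degenerates: $x_1x_2x_3+x_2x_3x_4+x_3x_4x_1+x_4x_1x_2|_{x_4=0}=x_1x_2x_3$, so $X_3$ contributes the three lines $\{x_ix_jx_k=0\}$, and the six tritangent planes $\{x_a+x_b=0\}$ restrict to the lines $x_i+x_j$ ($i,j\ne 4$), the line $x_i$ (with multiplicity, coming from $x_i+x_4$), while the plane $x_1+x_2+x_3+x_4=0$ gives $x_1+x_2+x_3=0$; collecting the reduced components one should recover (after discarding the spurious multiple lines, exactly as in the rank-$3$/rank-$4$ examples of the excerpt where a rigid-vs-non-rigid mismatch produces extra double lines) a configuration of three coordinate lines through their common crossing points plus the conic/line pattern of $\mathscr{S}_{(2)}$ or $\mathscr{S}_{(3)}$. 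On the plane $\{x_i+x_j=0\}$ --- a tritangent plane meeting $X_3$ in a double $\mathbf{A}$-line and a $\mathbf{B}$-line --- the restriction of $X_3$ is a conic (the residual intersection) together with that $\mathbf{B}$-line, and the other planes cut out tangent lines; I expect this to reproduce $\mathscr{S}_{(3)}$ (conic with tangent lines). On the special plane $\{x_1+x_2+x_3+x_4=0\}$ the cubic restricts to the three $\mathbf{B}$-lines and the six plane-components cut out the $\mathbf{A}$-lines, again a triangle-of-lines-plus-extra configuration of the required hereditary type. Carrying this out component by component (using $\mathfrak{S}_4$ to reduce to one plane of each of the three orbits, plus $X_3$ itself where one restricts to the smooth locus and checks one gets a legitimate rank-$3$ discriminant in the sense of \S.\,\ref{s:straDim}) is the bulk of the verification; it is routine but lengthy.

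For \textbf{(i)} I would argue that any deformation of the coefficients of the eight factors preserving the singularity types of $\mathscr{S}_8$ (in particular preserving that the cubic component is a Cayley cubic --- four $A_1$ nodes, which by the classical rigidity of the Cayley cubic forces it back to $PGL(4,\C)$-equivalence, cf.\ \cite{classicalAG,klein} --- and preserving all the tangency/incidence relations among planes and cubic) can be undone by an element of $PGL(4,\C)$; since $\dim PGL(4,\C)=15$ and the eight hypersurfaces carry $6\cdot 3+4+10=32$ essential coefficients before normalization, one checks the incidence constraints (each tritangent plane contains prescribed lines of $X_3$; the $\mathbf{B}$-plane passes through the three $\mathbf{B}$-lines; etc.) cut the moduli down to a single point modulo $PGL(4,\C)$ --- i.e.\ $\dim\mathscr{M}=0$, consistent with rigidity. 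For \textbf{(iii)} I would invoke the Zariski hyperplane section theorem (\S.\,\ref{poi8871}): a generic $\mathbb{P}^2\subset\mathbb{P}^3$ meets $\mathscr{S}_8$ in a planar curve whose complement has isomorphic $\pi_1$, and that curve (three generic lines from the planes, plus a smooth plane cubic) is visibly not nodal, so by the results quoted in \S.\,\ref{poi8871} the group is non-Abelian; non-triviality of the derived group then follows, and the absence of a product decomposition follows because $X_3$ is irreducible and passes through all four ``would-be factor'' crossing points (the necessary condition in the \textbf{Remark} at the end of \S.\,\ref{kkkiqwert9}). \textbf{The main obstacle} I anticipate is step \textbf{(ii)} for the Cayley cubic component itself: unlike the plane components, the restriction of $\mathscr{S}_8$ to (the smooth locus of) $X_3$ is not a configuration already on our short rank-$3$ list, and one must independently check that the resulting curve on $X_3\cong\mathbb{P}^2$-blown-up is a bona fide rank-$3$ special divisor --- i.e.\ that it is hereditary, rigid, and has the right $\pi_1$ --- which requires understanding the birational geometry of the Cayley cubic (it is the blow-up of $\mathbb{P}^2$ at suitable points, and the six plane-sections plus the $\mathbf{B}$-plane-section pull back to an explicit line-and-conic arrangement there); getting this pullback configuration right, and matching it against the rank-$3$ analysis, is where the real work lies.
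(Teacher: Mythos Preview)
Your overall plan --- check rigidity, heredity, and non-triviality of $\pi_1$ --- is exactly what the paper does, and the rigidity argument via uniqueness of the Cayley cubic modulo $PGL(4,\C)$ is fine. However, your execution of the heredity check contains two concrete errors that would derail the verification.

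First, the coordinate planes $\{x_i=0\}$ are \emph{not} components of $\mathscr{S}_8$. The eight irreducible components are the Cayley cubic, the six planes $\{x_i+x_j=0\}$, and the plane $\{x_1+x_2+x_3+x_4=0\}$; the $\{x_i=0\}$ do not appear in the defining polynomial. So your entire computation of the restriction to a coordinate plane is irrelevant to heredity (though it could in principle be relevant for other purposes). Second, your prediction for the restriction to $\{x_i+x_j=0\}$ is wrong: since this tritangent plane meets $X_3$ in a \emph{double} $\mathbf{A}$-line and one $\mathbf{B}$-line, the cubic restricts to three lines (one with multiplicity two), not to a conic plus a line. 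The paper carries this out explicitly: setting $x_2=-x_1$ and making the linear change $(x_1,x_3,x_4)=(y_3,\,2y_1-y_3,\,-2y_2+y_3)$, the product of the remaining seven factors becomes (up to a scalar) $y_1y_2y_3^2(y_1-y_3)(y_2-y_3)(y_1-y_2)^2$, whose reduction is precisely the line arrangement $\mathscr{S}_{(1)}$, \emph{not} the conic configuration $\mathscr{S}_{(3)}$ you anticipated. The restriction to $\{x_1+x_2+x_3+x_4=0\}$ gives, after reduction, just three lines in general position --- consistent with an isotrivial (product of rank-$1$) stratum rather than a full rank-$3$ special divisor.

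Finally, your ``main obstacle'' (heredity on the cubic component $X_3$ itself) is dealt with much more lightly in the paper than you propose: rather than pulling back the curve arrangement to a blow-up model and matching against the rank-$3$ list, the paper simply exhibits the classical polynomial parametrization of the Cayley cubic (as the image of the linear system of cubics through the pairwise intersections of four general lines in $\mathbb{P}^2$), which shows that $X_3$ is an admissible Coulomb branch in the required sense. No further hereditary check on $X_3$ is performed; the paper's treatment here is a proposal supported by the above consistency checks, not a complete verification of every condition.
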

 Clearly this divisor is rigid. Let us check hereditarity. The restriction to 
 a tritangent containing two lines, say
 $x_1+x_2=0$, followed by the change of coordinates $(x_1,x_3,x_4)=(y_3,2y_1-y_3,-2y_2+y_3)$
 gives 
 \be
 -128\,y_1y_2y_3^2(y_1-y_3)(y_2-y_3)(y_1-y_2)^2=0
 \ee
 which, after reduction, is exactly the rank-3 line configuration in figure \eqref{ju76z}
 Clearly adding further planes would spoil this property.
 The restriction to the last tritangent plane yields, after reduction, a plane with 3 lines in general position,
 which is consistent provided that the restriction of the rank-4 geometry to this plane
 is a product of rank-1 geometries (actually just isotrivial after taking into account the finite cover). 
 
 On the other hand the birational interpretation (as a singular del Pezzo surface) of the Cayley cubic
sees it as the image of the linear system of cubics passing through the six points which are
 the pairwise intersections of 4 lines in general position in $\mathbb{P}^2$ \cite{klein}. This yields the rational parametrization of $X_3\subset\mathbb{P}^3$
 \be
 \begin{split}
 &(x_1:x_2:x_3:x_4)=\\
 &=(y_2^2y_3+y^2_3y_2+y_1y_2y_3:
 y_3^2y_1+y^2_1y_3+y_1y_2y_3:y_1^2y_2+y^2_2y_1+y_1y_2y_3:
 -y_1y_2y_3)
 \end{split}
 \ee 
 so that $X_4$ is an admissible Coulomb branch in the sense of \textbf{Definition \ref{uuu51234}}.
 
 \subsubsection{Relation to recent work on Fuchsian systems}
 
 The divisor $\mathscr{S}_8$ has recently appeared in the math literature
 as the singular locus of a Fuchsian system
 of rank 8 in $\mathbb{P}^3$, see ref. \cite{Fuch8}. In other words
 that paper gives a monodromy representation
 \be
 \mu\colon \pi_1(\mathbb{P}^3\setminus \mathscr{S}_8)\to GL(8,\C)
 \ee
 which when restricted to a suitable irreducible component of $\mathscr{S}_8$ yields
 a \emph{rank 6} representation
 \be
 \phi\colon\pi_1(\mathbb{P}^2\setminus\mathscr{S}_{(1)})\to GL(6,\C).
 \ee
 This seems exactly the gadged we need: a dimension 6 representation of the rank-3
 group $\pi_1(\mathbb{P}^2\setminus\mathscr{S}_{(1)})$ which arises from the
 restriction of a dimension 8 representation for the rank-4 group $\pi_1(\mathbb{P}^3\setminus \mathscr{S}_8)$ as expected from ``stratification'' of a rank-4 special geometry. Indeed this is almost right but, just as in rank-2,
 the math literature mostly focuses on representations with semisimple
 local monodromies, and so we cannot simply borrow
 their results and get a solution to the inverse problem of
 special geometry. This remark applies \emph{a fortiori}
 in higher rank.
However the existence of natural Fuchsian equations and monodromies
of roughly the correct type, with the proper ``stratification'' properties
yields us confidence that we are on the right track.

\subsection{Lauricella singular divisors and all that}

From the above examples we infer that a plausible class of special
divisors $\mathscr{S}\subset \mathbb{P}^{\,r-1}$ with all the required
properties (the ``higher dimensional analogues of the 3-punctured sphere'')
are given by the loci of (regular) singularities of flat Gauss-Manin
connections which generalizes the hypergeometric equation.
Lauricella \cite{lauricella} gave a list of fours hypergeometric functions
$F_A(x_i)$, $F_B(x_i)$, $F_C(x_i)$, $F_D(x_i)$ depending on $n$ variables (and suitable parameters)
for all $n\in\mathbb{N}$. The singular loci of the Fuchsian PDEs satisfied by
the Lauricella functions of $n$ variables yield examples of rigid special divisors $\mathscr{S}$
with the required properties. The singular loci of the PDEs satisfied by the Lauricella functions
are studied e.g.\! in \cite{sing1,sing2}. 

The singular locus of the PDE for $F_C$ with $m$ variables, when restricted to the affine
chart $\C^m\equiv\mathbb{P}^m\setminus H_\infty$  is the zero set
of the reducible polynomial \cite{sing1}
\be
\prod_{i=1}^m x_i \prod_{\varepsilon_i=\pm1}(1+\varepsilon_1\sqrt{x_1}+\cdots+\varepsilon_m\sqrt{x_m})
\ee
which for $m=2$ yields back the three lines $x_a=0$ plus the conic \eqref{juiqw12uu}
i.e.\! the special set $\mathscr{S}_{(3)}$.

For the Lauricella function $F_A$ of $m$ variables one gets the singular locus \cite{sing2}
is the zero set of the polynomial \cite{sing2}
\be
\prod_{i=1}^m x_i \prod_{\{i_1,\dots,i_r\}\subset \{1,\dots,m\}}\left(1-\sum_{p=1}^rx_{i_p}\right)
\ee
so for $m=2$ we get $x_1x_2(1-x_1)(1-x_1-x_2)=0$ which with the change of variables
$x_2\leftrightarrow 1-x_2$ gives $x_1x_2(1-x_1)(1-x_2)(x_2-x_1)=0$ i.e.\!
the special set $\mathscr{S}_{(1)}$ (reinserting the line at infinity). Lauricella's function $F_B$
gives an equivalent singular locus.

Finally Lauricella function $F_D$ has a singular locus
\be
\prod_{i=1}^m x_k(1-x_k) \prod_{1\leq i<j\leq m}(x_i-x_j)=0
\ee
which for $m=2$ gives back $\mathscr{S}_{(1)}$.

These Lauricella singular loci (together with the hyperplane at infinity) give
the basic examples of admissible special divisors for \emph{smooth}
special geometry in rank $r=m+1$ when $\mathscr{P}\simeq \mathbb{P}^{r-1}$.
In addition we have the generalizations of the special divisor in \S.\,\ref{r4eco} based on the
Klein cubic for $m=3$, and finite covers thereof.
The situation is similar when $\mathscr{P}$ is a general weighted projective space,
and indeed the difference is a finite group cf.\! \S.\,\ref{poi8871}.

\subsection{Conclusion}

The examples and the discussion above strongly suggest that the classification
(possibly up to finite covers) of all complements $\mathscr{P}\setminus\mathscr{S}$
with the required properties is feasible for all $r$, and certainly for $r=3$.
When we have a list of putative special divisors, we can state the inverse program
in quite explicit terms. The monodromy representation is expected to be rigid,
and then in principle one can proceed with the recursion in the rank of the inverse problem which however leads to
an enormous ``vast program''.

\section*{Acknowledgments}
I have benefited of discussions with Michele Del Zotto, Mario Martone, Robert Moscrop, and Artan Sheshmani.
I thank Chris Peters, Carlos Simpson, Alexandru Dimca, Yoshishige Haraoka, and V.P. Kostov
for clarifications on their respective works.

\appendix

\section{Sketch of Lemma \ref{le:zero}}\label{app:zero}

\begin{lem*} The connected component $(\mathscr{X}_0^\text{\rm reg})^0$
 of the smooth locus $\mathscr{X}_0^\text{\rm reg}\subset\mathscr{X}$ of the fiber over the origin $0$
 has the structure
 \be\label{zerostruc}
 (\mathscr{X}_0^\text{\rm reg})^0= A_0\times \C^{r-\ell},\quad\ell\equiv\dim A_0=\dim\mathscr{R}_{\Delta=1},
 \ee
 where $A_0$ is a polarized Abelian variety,
 while the (general) smooth fibers have the form $\mathscr{X}_u=A_0\times\mathscr{Y}_u$ with
 $\mathscr{Y}_u$ a polarized Abelian variety of dimension $r-\ell$. 
 \end{lem*}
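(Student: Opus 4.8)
The plan is to first peel off the dimension-$1$ part of the geometry using \textbf{Fact \ref{uni2}}, reducing to the case $\mathscr{R}_{\Delta=1}=0$, and then to show that in that case the smooth identity component of the fibre over the origin is a \emph{pure vector group} by combining Chevalley's structure theorem with a weight estimate forced by unitarity. Throughout I use that in a $\C^\times$-isoinvariant special geometry the map $\exp(z\,\ce)$ sends the section $s$ to itself over $\exp(z\,\boldsymbol{\Delta})$, and since the latter fixes $0$, $\exp(z\,\ce)$ fixes the point $s(0)$.

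\emph{Step 1: reduction.} By \textbf{Fact \ref{uni2}}, with $V=V_0\oplus V_1$ and $V_0$ the maximal $\Gamma$-trivial subspace, the chiral ring contains exactly $\dim V_0/2=\ell\equiv\dim\mathscr{R}_{\Delta=1}$ elements of dimension $1$, and the special geometry splits $\C^\times$-equivariantly as a product of a free geometry $(A_0\times\C^\ell)\to\C^\ell$ (with $A_0=V_0/(\text{lattice})$ a fixed polarized Abelian variety) and a $\C^\times$-isoinvariant special geometry $\mathscr{Y}\to\C^{\,r-\ell}$ with $\mathscr{R}(\mathscr{Y})_{\Delta=1}=0$. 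Hence $\mathscr{X}_u=A_0\times\mathscr{Y}_u$ for all $u$, with $\mathscr{Y}_u$ a polarized Abelian variety of dimension $r-\ell$ at generic $u$, and $(\mathscr{X}_0^{\mathrm{reg}})^0=A_0\times(\mathscr{Y}_0^{\mathrm{reg}})^0$. It remains to show $(\mathscr{Y}_0^{\mathrm{reg}})^0\simeq\C^{\,r-\ell}$; note that all Coulomb dimensions of $\mathscr{Y}$ satisfy $\Delta_i>1$, since $\Delta_i\geq1$ by unitarity (\S.\,\ref{s:uni}) and $\mathscr{R}(\mathscr{Y})_{\Delta=1}=0$.

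\emph{Step 2: Chevalley plus equivariance, then weights.} As in the rank-$1$ N\'eron-model picture, $\mathscr{Y}$ is a smooth birational model of a commutative group-scheme over $\C^{\,r-\ell}$, so $(\mathscr{Y}_0^{\mathrm{reg}})^0$ is a smooth connected commutative algebraic group over $\C$ with identity $s(0)$; by Chevalley's theorem it fits in an extension $1\to(\C^\times)^a\times\C^b\to(\mathscr{Y}_0^{\mathrm{reg}})^0\to A_0'\to1$ with $A_0'$ Abelian and $a+b+\dim A_0'=r-\ell$. The automorphism $\exp(z\,\ce)|_{\mathscr{Y}_0}$ fixes $s(0)$ and preserves the fibrewise group law (being an automorphism of all special-geometric structures, cf.\! \textbf{Definition \ref{xxxyy67}}), hence preserves the maximal affine subgroup and the maximal torus, both characteristic; therefore $\mathsf{Lie}(A_0')$, $\mathsf{Lie}((\C^\times)^a)$, $\mathsf{Lie}(\C^b)$ are $\ce$-stable subspaces of $\mathsf{Lie}((\mathscr{Y}_0^{\mathrm{reg}})^0)$. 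Since the fibres are Lagrangian, $\Omega$ induces an isomorphism $\mathsf{Lie}((\mathscr{Y}_0^{\mathrm{reg}})^0)=T_{s(0)}\mathscr{Y}_0\xrightarrow{\ \sim\ }T^*_0\C^{\,r-\ell}$; combining $\mathscr{L}_\ce\Omega=\Omega$ with $\exp(z\,\boldsymbol{\Delta})^*du_i=e^{z\Delta_i}du_i$ shows that $\ce$ acts on $\mathsf{Lie}((\mathscr{Y}_0^{\mathrm{reg}})^0)$ with weights $\Delta_i-1$, $i=1,\dots,r-\ell$, all \emph{strictly positive}. But $\mathsf{Aut}(A_0')$ and $\mathsf{Aut}((\C^\times)^a)\cong GL_a(\Z)$ are discrete, so the holomorphic maps $\C^\times=\C/2\pi i m\Z\to\mathsf{Aut}(A_0')$ and $\to GL_a(\Z)$ induced by $z\mapsto\exp(z\,\ce)$ are constant, equal to the identity (their value at $z=0$); hence $\ce$ acts trivially on $\mathsf{Lie}(A_0')$ and $\mathsf{Lie}((\C^\times)^a)$. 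Since the weight-$0$ subspace is $\{0\}$ we conclude $a=0$, $\dim A_0'=0$, so $(\mathscr{Y}_0^{\mathrm{reg}})^0=\C^{\,r-\ell}$, and with Step 1, $(\mathscr{X}_0^{\mathrm{reg}})^0=A_0\times\C^{\,r-\ell}$ as claimed.

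\emph{The hard part.} The genuinely delicate input is the group-scheme statement used at the start of Step 2: one must make precise that the smooth identity component of the (possibly non-reduced, reducible) singular fibre $\mathscr{Y}_0$ inherits a commutative algebraic group structure fibring compatibly with $\pi$ and $s$ — i.e.\! identify $\mathscr{Y}$ with an open subscheme of the N\'eron model over $\C^{\,r-\ell}$ and handle the characteristic-cycle subtleties recalled in \S.\,\ref{s:sing}. The other ingredients (the $\Omega$-duality on tangent spaces, discreteness of the relevant automorphism groups, and the weight bookkeeping) are routine. If one prefers not to invoke that $\C^\times$-isoinvariance preserves the fibrewise group law, one replaces it by the remark that, for the commutative algebraic groups occurring here, a variety automorphism fixing the identity is automatically a group automorphism, leaving the argument intact.
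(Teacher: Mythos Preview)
Your Step~2 is sound and uses the same underlying mechanism as the paper, just packaged via Chevalley plus discreteness of $\mathsf{Aut}(A_0')$ and $GL_a(\Z)$ rather than the paper's direct lattice formulation. Both say: a holomorphic $\C^\times$-family of group automorphisms cannot move the intrinsic lattice, so the compact and toric parts of the group must sit in the weight-zero subspace of the Lie algebra.

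The gap is in Step~1. You invoke \textbf{Fact~\ref{uni2}(2)} to split off the free factor and reduce to $\mathscr{R}_{\Delta=1}=0$. But in the paper's logical order that product statement appears \emph{after} the Lemma and is effectively deduced from it --- the sentence immediately following \textbf{Lemma~\ref{le:zero}} in \S\ref{s:uni} reads ``Clearly the geometry is the product\dots'' --- while the paper's own argument for \textbf{Fact~\ref{uni2}} establishes only part~(1). So within this paper your Step~1 is circular: the splitting $\mathscr{X}_u=A_0\times\mathscr{Y}_u$ is half of what the Lemma asserts, and you are assuming it.

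The paper's proof in Appendix~\ref{app:zero} is self-contained. It runs your Step~2 reasoning on $\mathscr{X}$ itself (no prior reduction), writing $(\mathscr{X}_0^{\mathrm{reg}})^0=\C^r/D$ and observing that $\exp(z\ce)$-invariance forces $D\subset L_0$; what remains --- and what your approach does not supply --- is to show that $\C^{\ell_0}/D$ is an \emph{Abelian variety} rather than a product involving $\C^\times$ or $\C$. The paper does this by a limit argument: at generic $u$ the kernel of $\exp\colon\mathsf{Lie}(\mathscr{X}_u)\to\mathscr{X}_u$ is a rank-$2r$ lattice whose generators scale as $e^{z(1-\Delta_i)}$; pushing $u\to 0$ along a $\C^\times$-orbit kills the $\Delta_i>1$ rows but leaves a full-rank lattice in $L_0$, since otherwise the generic fibre would already fail to be compact. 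The same limit simultaneously yields the product form of the generic fibre, which is why the paper can then deduce the global splitting rather than having to assume it.
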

 
 We know from the discussion in \S.\,\ref{s:sing} that $(\mathscr{X}_0^\text{\rm reg})^0$ is a connected, commutative, algebraic group over $\C$ of dimension $r$. Hence by the Chevalley-Barsotti theorem it is a fibration over an Abelian variety $A_0$ with fiber a product of copies of $\C$ and copies of $\C^\times$. $\exp(z\,\ce)$ acts by automorphisms of the commutative group $(\mathscr{X}_0^\text{\rm reg})^0$ for all $z\in\C$.
 
 Let $L$ be the (commutative) Lie algebra of the group $(\mathscr{X}_0^\text{\rm reg})^0$
 isomorphic to $T_0(\mathscr{X}_0^\text{\rm reg})^0$ which in turn is isomorphic to 
 \be
 L\simeq (\mathscr{R}_+/\mathscr{R}_+^2)[1]=L_0\oplus (\bigoplus_{\Delta_i>1} L_{\Delta_i-1})
 \ee
 The action of $\exp(z\, \ce)$ on the group induces an action on the Lie algebra which
 acts on each summand as 
 \be\label{howaact}
 \exp(z\,\ce)\Big|_{L_d}= e^{z\,d}.
 \ee
 We can reconstruct the group from its Lie algebra modulo a discrete subgroup $D$
 \be\label{pooomi4}
 (\mathscr{X}_0^\text{\rm reg})^0=\Big(\C^{\ell_0}\times \prod_{d>0}\C^{\ell_d}\Big)\Big/D,
 \qquad \ell_d=\dim L_d. 
 \ee
 The subgroup $D$ should be preserved by the automorphism $\exp(z\,\ce)$ for all $z$. In view of 
 \eqref{howaact}, a point in $D$ has the form $p\times \{0\}\times\cdots\times \{0\}$
 with $p\in \C^{\ell_0}$ so that 
  \be\label{FACTrr}
 (\mathscr{X}_0^\text{\rm reg})^0=(\C^{\ell_0}/D)\times \C^{r-\ell_0}.
 \ee
 We have to show that the group $\C^{\ell_0}/D$ is an Abelian variety (i.e.\! that $D$ is a maximal lattice). For simplicity of notation, we
 assume that there is just one Coulomb coordinate $u_1$ of dimension $1$,
 while $\Delta_i>1$ for $i>1$. The general case is an analogous.
 
Let $u\in\mathscr{C}$ be a generic point. 
 In a neighborhood of $u$ preserved by $\exp(z\ce)$   we write $\Omega= dz^i\wedge du_i$ .
 The Hamiltonian vector field generated by $u_i$ is $\partial/\partial z^i$ which acts on
 the fiber coordinates by translation $z^j\to z^j+\alpha\,{\delta^j}_i$. The kernel of the exponential map
 $\mathsf{exp}\colon\mathsf{Lie}(\mathscr{X}_u)\to\mathscr{X}_u$ is then a lattice in
 $\C^r$ which we write as equivalence relations for the coordinates of the fiber $\mathscr{X}_u$
 \begin{equation}\label{hhhhyq}
 z^i\sim z^i+A^{ij}(u)\,m_j+B^{ij}(u)\,n_j,\qquad m_j,n_j\in\Z,
 \end{equation}
 where the matrix $A(u)$ is invertible while the matrix $A(u)^{-1}B(u)$ is symmetric
 with positive-definite imaginary part. Now
 \begin{equation}
 A^{ij}(e^{z\ce}u)= e^{z(1-\Delta_i)}\,A^{ij}(u),\qquad  B^{ij}(e^{z\ce}u)= e^{z(1-\Delta_i)}\,B^{ij}(u)
 \end{equation}
 which shows that as $z\to -\infty$ (that is, as $e^{z\ce}u\to 0$), the fiber coordinates
 $w_i$ with $i>1$ become valued in $\C$, while
 \begin{equation}
 z^1\sim z^1+ A^{1j}(u)\,m_j+B^{1j}(u)\,n_j.
 \end{equation}
Comparing with eq.\eqref{pooomi4}, we get $D\equiv \{A^{1j}(u)m_j+B^{1,j}n_j\}\subset\C$.
$D\subset \C$ is discrete subgroup, hence a lattice of rank $\leq2$.
If the rank is $<2$, we get from \eqref{hhhhyq} that $\mathscr{X}_u$ was not
compact, contrary to the assumption.  Returning to the general case,
we see that the factor $A_0\equiv\C^{\ell_0}/D$ in {FACTrr}
 $(\mathscr{X}_0^\text{\rm reg})^0$ is an Abelian variety of dimension equal to the
 multiplicity of 1 as a Coulomb dimension, while the general fiber
 has the form $\mathscr{X}_u=A_0\times \mathscr{Y}_u$ for some Abelian variety 
 $\mathscr{Y}_u$. The fiber is then a product of the fixed Abelian variety $A_0$
 with the $u$-dependent variety  $\mathscr{Y}_u$. Since the Lie algebra of $A_0$
 is given by the Hamiltonian fields of the $u_i$'s with $\Delta_i=1$,
 the geometry is a global product of a trivial geometry with base $\C^{\ell_0}$
 and fixed fiber $A_0$ and a rank $(r-\ell_0)$ interacting geometry.

\section{Proof of Lemma \ref{ttthelemma}}\label{defpprr}

\begin{lem*} A monodromy representation associated to the graph {\rm(e)}
with local monodromies in the allowed lists is irreducible only when
$\varkappa=2$ and the three semisimple monodromies have eigenvalues
\be
(1,1,-1,-1),\quad (i,i,-i,-i),\quad (\pm \zeta_3,\pm \zeta_3,\pm\zeta_3^{-1},\pm\zeta_3^{-1}),
\ee
where $\zeta_3$ is a primitive 3-rd root of 1.
\end{lem*}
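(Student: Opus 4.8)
The plan is to recast the statement as a Deligne--Simpson problem for the quiver of graph (e) and run the Crawley--Boevey criterion recalled above. Graph (e) is the Dynkin diagram of the affine Kac--Moody algebra $\widehat D_4$: a central node $\bigstar$ joined to four leaves, three carrying the label $2$ (the semisimple $\delta=2$ branches, simple roots $\alpha_1,\alpha_2,\alpha_3$) and one carrying the label $1$ (the $\delta=1$, type $I_n$, branch, simple root $\alpha_0$). Writing $\delta=2\alpha_\bigstar+\alpha_0+\alpha_1+\alpha_2+\alpha_3$ for the minimal imaginary root and $\theta=2\alpha_\bigstar+\alpha_1+\alpha_2+\alpha_3$ for the highest root of the finite $D_4$, the dimension vector is
$$
\boldsymbol N=4\alpha_\bigstar+2\alpha_1+2\alpha_2+2\alpha_3+\alpha_0=2\delta-\alpha_0=\delta+\theta .
$$
A Cartan-matrix computation gives $\tfrac12\langle\boldsymbol N,\boldsymbol N\rangle=1$, so $d(\boldsymbol N)=0$; since $\widehat D_4$ is affine, $\boldsymbol N$ is a real root and the problem is rigid. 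By Crawley--Boevey's theorem an irreducible representation with the prescribed local classes exists iff $\boldsymbol N\in\Sigma$, and since $d(\boldsymbol N)=0$ while $d(\beta)\ge0$ for every positive root, this holds iff $\boldsymbol N$ admits \emph{no} decomposition $\boldsymbol N=\beta_1+\dots+\beta_s$ ($s\ge2$) into positive roots all lying in $R_+=\{\alpha\in\Delta^+(\widehat D_4):\xi(\alpha)=1\}$.

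The key reduction is that, since $\boldsymbol N=\delta+\theta$ sits at $\delta$-level $1$, in any such decomposition exactly one summand has $\delta$-level $1$ and all the others are positive roots of the finite $D_4$. Moreover $\xi(\boldsymbol N)=1$ automatically: one checks $\xi(\alpha_0)=1$ (so $\alpha_0\in R_+$ always) and $\xi(\theta)=\epsilon_1\epsilon_2\epsilon_3$, where $\epsilon_j\in\{\pm1\}$ is the product of the two eigenvalues of the $j$-th $\delta=2$ branch (hence $\epsilon_j=-1$ for a type-$I^*_0$ branch and $\epsilon_j=+1$ otherwise, independently of the ordering used in the quiver datum), so $\xi(\boldsymbol N)=\xi(\theta)^2=1$. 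In particular if $\epsilon_1\epsilon_2\epsilon_3=+1$ then $\theta\in R_+$ and $\boldsymbol N=\alpha_0+\theta+\theta$ is an $R_+$-decomposition of total $d$-value $0=d(\boldsymbol N)$, so $\boldsymbol N\notin\Sigma$. Thus $\boldsymbol N\in\Sigma$ is equivalent to the assertion that \emph{no positive root of $D_4$ lies in $R_+$}, up to checking that whenever some $D_4$-root does lie in $R_+$ one can complete it to a full $R_+$-decomposition of $\boldsymbol N$ --- a short explicit verification (e.g. $\boldsymbol N=\alpha_0+(\alpha_\bigstar+\alpha_1+\alpha_2)+(\alpha_\bigstar+\alpha_2+\alpha_3)+(\alpha_\bigstar+\alpha_1+\alpha_3)+\alpha_\bigstar$ when all three branches are $I^*_0$-type, and $\boldsymbol N=(\alpha_1+\delta)+(\alpha_\bigstar+\alpha_2)+(\alpha_\bigstar+\alpha_3)$ or $\boldsymbol N=(\alpha_\bigstar+\delta)+(\alpha_\bigstar+\alpha_1+\alpha_2+\alpha_3)$ when two branches share a cyclotomic order or have orders $3$ and $6$).

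It remains to determine when no positive root of $D_4$ lies in $R_+$. The restriction of $\xi$ to the $D_4$ root lattice is determined by the node fugacities $z_\bigstar,z_1,z_2,z_3$; from the admissible lists, a $\delta=2$ branch is either type $I^*_0$, with node fugacity $-1$, or semisimple of order $d\in\{3,4,6\}$, with node fugacity $\zeta_d^{-2}$, which equals $-1$ when $d=4$ and is a primitive cube root of unity when $d\in\{3,6\}$. Running through the twelve positive roots of $D_4$ and their $\xi$-images $z_\bigstar$, $z_i$, $z_\bigstar z_i$, $z_\bigstar z_i z_j$, $z_\bigstar z_1z_2z_3$, $z_\bigstar^2 z_1z_2z_3$ (the outcome being ordering-independent), one finds that these all avoid $1$ precisely when the three branches are, in some order, of types $I^*_0$, order $4$, and order $3$ or $6$ --- i.e. with spectra $(1,1,-1,-1)$, $(i,i,-i,-i)$, $(\pm\zeta_3,\pm\zeta_3,\pm\zeta_3^{-1},\pm\zeta_3^{-1})$. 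Finally $\varkappa=2$ follows because a rank-$2$ geometry realising graph (e) has $\dim\mathscr M=1$, hence Coulomb dimensions $\{2,\Delta\}$ with $\Delta$ old $\neq2$; the unique such $\Delta$ with $\varkappa=1$ is $\Delta=3$, and for $(\Delta_1,\Delta_2)=(2,3)$ both coordinate axes are R-enhanced with spectra $\{1,1,-1,-1\}$ and $\{\zeta_3,\zeta_3,\zeta_3^{-1},\zeta_3^{-1}\}$, forcing two $I^*_0$-flavoured branches, so $\epsilon_1\epsilon_2\epsilon_3=+1$, $\theta\in R_+$, $\boldsymbol N\notin\Sigma$: no irreducible representation. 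The residual $\Phi_3\leftrightarrow\Phi_6$ freedom in the last spectrum is exactly the overall-sign ambiguity of the $Sp(4,\Z)$-lift of the $PSp(4,\Z)$-monodromy present when $\varkappa=2$.

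I expect the main obstacle to be the slightly delicate bookkeeping behind the claimed equivalence ``$\boldsymbol N\in\Sigma\iff$ no $D_4$-root in $R_+$'': one must both produce the explicit $R_+$-decompositions in the finitely many non-generic cases and confirm that nothing depends on the ordering of eigenvalues within a branch. A shorter route, which I would also record, is to quote Kostov's explicit solution of the Deligne--Simpson problem for this $\widehat D_4$ configuration --- the first example of \cite{koEx} --- and merely reconcile conventions.
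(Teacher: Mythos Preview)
Your proof is correct and proceeds along the same lines as the paper's argument in Appendix~B: both apply the Crawley--Boevey criterion to the graph~(e) dimension vector and rule out all cases by exhibiting $R_+$-decompositions. Your framing is somewhat more structural --- you identify the graph as $\widehat{D}_4$, write $\boldsymbol{N}=\delta+\theta$, and reduce the question to whether any positive root of the finite $D_4$ lies in $R_+$ --- whereas the paper organizes the analysis by a direct four-case split according to the number of $I^*_0$-type branches and writes down an explicit $R_+$-decomposition in each reducible case. Your reduction makes the condition $\epsilon_1\epsilon_2\epsilon_3=-1$ (odd number of $I^*_0$ branches) emerge cleanly from $\xi(\theta)$, and the observation that the leaf simple roots $\alpha_i$ are never in $R_+$ (since their fugacities are always $-1$ or a primitive cube root) is what makes the ``only if'' direction of your equivalence work without further case analysis. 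The paper's version is more explicit about the decompositions themselves; your version is more conceptual but relies on a few computations you leave implicit. Both arrive at the same surviving case and your treatment of why $\varkappa=1$ (i.e.\ $\Delta=3$) is excluded matches the paper's Part~(2) argument in \S\,8.3.1.
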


\begin{proof}
We consider the graph (e)
\begin{tiny}$$
\begin{gathered}
\xymatrix{& \text{\ovalbox{2}}\ar@{-}[d]\\
\text{\ovalbox{2}}\ar@{-}[r] & \text{\ovalbox{4}}\ar@{-}[r]& \text{\ovalbox{2}}\\
& \text{\ovalbox{1}}\ar@{-}[u]}
\end{gathered}
$$\end{tiny}
The possible fugacities are
\be
\begin{aligned}
&\begin{smallmatrix}&& \zeta_2^{-2}\\
\zeta_1^{-2} && \pm \zeta_1\zeta_2\zeta_3 && \zeta_3^{-2}\\
&& 1\end{smallmatrix}&\qquad &
\begin{smallmatrix}&& -1\\
\zeta_1^{-2} && \pm \zeta_1 && -1\\
&& 1\end{smallmatrix}\\
\\
&\hskip10pt\begin{smallmatrix}&& -1\\
-1 & & \pm 1 & & -1\\
&& 1\end{smallmatrix}
&\qquad 
&
\begin{smallmatrix}&& \zeta_2^{-2}\\
\zeta_1^{-2} & & \pm \zeta_1\zeta_2 & & -1\\
&& 1\end{smallmatrix}
\end{aligned}
\ee 
where $\zeta_a$ stands for a root of unity of degree 2 (i.e.\! a primitive 3-rd, 4-th, or 6-th root of unity).
In the first and second cases, for either sign, we have  the $R_+$ decomposition
\be
\begin{smallmatrix}&1\\
1&2&1\\ & 0\end{smallmatrix}\ +\ \begin{smallmatrix}&1\\
1&2&1\\ & 0\end{smallmatrix}\ +\ \begin{smallmatrix}&0\\
0&0&0\\ & 1\end{smallmatrix} 
\ee
We consider the third case. If the central fugacity is $+1$, the central simple root
$\alpha_\star$ and $(\alpha-\alpha_\star)$ are in $R_+$.
We take the opposite sign $-1$. Then we have the decomposition in $R_+$
\be
\begin{smallmatrix}&1\\
1&1&1\\ & 0\end{smallmatrix}\ +\ \begin{smallmatrix}&1\\
0&1&0\\ & 0\end{smallmatrix}\ +\ \begin{smallmatrix}&0\\
1&1&0\\ & 0\end{smallmatrix}\ +\ \begin{smallmatrix}&0\\
0&1&1\\ & 0\end{smallmatrix}\ +\ \begin{smallmatrix}&0\\
0&0&0\\ & 1\end{smallmatrix}
\ee
It remains the forth case. Suppose $\pm\zeta_1\zeta_2=1$. Both $\alpha_\star$
and $(\alpha-\alpha_\star)$ are in $R_+$. 
Suppose $\pm\zeta_1\zeta_2=-1$. Then we have the $R_+$ decomposition
\be
\begin{smallmatrix}&2\\
2&3&1\\ & 1\end{smallmatrix}\ +\ \begin{smallmatrix}&0\\
0&1&1\\ & 0\end{smallmatrix}
\ee
Rearranging  the order of the eigenvalues, we get the same conclusion whenever $\pm\zeta_1\zeta_2^{-1}=\pm1$.
Therefore the situation with graph (e) when the representation may be irreducible is
the forth case with the minus sign (possible only if $\varkappa=2$)
and $\zeta_1=e^{\pi i/2}$ and $\zeta_2=\pm e^{2\pi i/3}$. 
\end{proof}

\section{Deferred proof from \S.\,\ref{s:noother}}\label{dddeeff}

\begin{lem}\label{finally!} Let $\Lambda_n$ be $\Z^{2n}$ equipped with the standard principal integral 
symplectic form $\langle-,-\rangle$. We are given four elements $n_i\in \Lambda_2$ ($i=1,\dots,4$)
such that:
\begin{itemize}
\item[\rm(1)] the $n_i$ are linear independent over $\mathbb{Q}$;
\item[\rm(2)] each $n_i=(n_{i,a})$ is primitive i.e.\! $\gcd\nolimits_a(n_{i,a})=1$ for all $i$;
\item[\rm(3)] for $1\leq i<j\leq4$ we have $\langle n_i,n_j\rangle=2$.
\end{itemize} 
Define $n_5\equiv -n_1+n_2-n_3+n_4$. We have  $n_5\equiv0\bmod2$.
\end{lem}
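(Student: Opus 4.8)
The plan is to work entirely modulo $2$. Let $\bar v$ denote the image of $v\in\Lambda_2$ in $\Lambda_2/2\Lambda_2\cong\mathbb{F}_2^4$; since the signs in $n_5=-n_1+n_2-n_3+n_4$ disappear over $\mathbb{F}_2$, it suffices to prove $\bar n_5=\bar n_1+\bar n_2+\bar n_3+\bar n_4=0$. Because the standard symplectic form on $\Lambda_2$ is unimodular and alternating, its reduction $\overline{\langle\,,\rangle}$ is a nondegenerate alternating form on $\mathbb{F}_2^4$. First I would record the elementary facts: each $\bar n_i\neq0$ (hypothesis (2) says the gcd of the coordinates of $n_i$ is $1$, so they are not all even), and $\overline{\langle n_i,n_j\rangle}=0$ for $i\neq j$ since $\langle n_i,n_j\rangle=2$ by hypothesis (3). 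Combined with $\overline{\langle v,v\rangle}=0$, this makes $W:=\langle\bar n_1,\dots,\bar n_4\rangle\subseteq\mathbb{F}_2^4$ a totally isotropic subspace, hence $\dim W\le2$ (a nondegenerate alternating space of dimension $4$ over $\mathbb{F}_2$ has maximal isotropics of dimension $2$).

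If $\dim W\le1$, all four $\bar n_i$ equal the unique nonzero vector of $W$, so $\bar n_5=4\bar n_1=0$ and we are done. The real work is the case $\dim W=2$, where $W$ has exactly three nonzero vectors, any two of which sum to the third, and the four $\bar n_i$ are distributed among these three values. The partitions of $4$ consistent with this are, a priori, those with a value of multiplicity $\ge3$, $(2,2,0)$, and $(2,1,1)$. For $(2,2,0)$ one gets $\bar n_5=2v+2v'=0$ immediately. So the entire content of the lemma is the exclusion of multiplicity $\ge3$ and of $(2,1,1)$ — and this is precisely where the integrality of "$\langle n_i,n_j\rangle=2$" (not merely "$\equiv0\bmod2$") is indispensable; the pure mod-$2$ picture genuinely allows a $(2,1,1)$ configuration, which would have $\bar n_5\neq0$. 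I expect this to be the main obstacle.

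To exclude multiplicity $\ge3$: suppose $n_i\equiv n_j\equiv n_k\pmod2$ with $i,j,k$ distinct, and write $n_j=n_i-2a$, $n_k=n_i-2b$ with $a,b\in\Lambda_2$. From $\langle n_i,n_j\rangle=-2\langle n_i,a\rangle=\pm2$ we get $\langle n_i,a\rangle$ odd, and then $\langle a,n_k\rangle=-\langle n_i,a\rangle-2\langle a,b\rangle$ is also odd, so $\langle n_j,n_k\rangle=\langle n_i,n_k\rangle-2\langle a,n_k\rangle\equiv\langle n_i,n_k\rangle-2\pmod4$; but both $\langle n_j,n_k\rangle$ and $\langle n_i,n_k\rangle$ lie in $\{2,-2\}$, hence are $\equiv2\pmod4$, forcing $2\equiv0\pmod4$, a contradiction. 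To exclude $(2,1,1)$: after relabelling (permissible, since $\bar n_5=\sum\bar n_i$ is symmetric in the $n_i$) assume $\bar n_1=\bar n_2$; then $\bar n_1,\bar n_3,\bar n_4$ are the three distinct nonzero vectors of $W$, so $\bar n_1=\bar n_3+\bar n_4$. Writing $n_1-n_2=2m$, hypothesis (3) gives $\langle n_1-n_2,n_3\rangle=\langle n_1-n_2,n_4\rangle=0$, hence $\langle m,n_3\rangle=\langle m,n_4\rangle=0$, whereas $\langle n_1-n_2,n_1\rangle=-\langle n_2,n_1\rangle=2$ gives $\langle m,n_1\rangle=1$. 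Reducing mod $2$, $\bar m$ is orthogonal to both $\bar n_3$ and $\bar n_4$ but not to $\bar n_1=\bar n_3+\bar n_4$ — a contradiction.

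Hence in the case $\dim W=2$ only $(2,2,0)$ survives, and there $\bar n_5=0$; together with the case $\dim W\le1$ this yields $\bar n_5=0$ in $\mathbb{F}_2^4$, i.e. $n_5\in2\Lambda_2$, which is the claim. The only thing to get exactly right is the bookkeeping in the two ``beyond mod $2$'' steps (the $\bmod4$ congruence, and the $\bar m$-orthogonality), everything else being formal $\mathbb{F}_2$-linear algebra. I would also remark in passing that hypothesis (1) is automatic from (3), since the Gram matrix of the $n_i$ is $2$ times the matrix $M$ of eq.~\eqref{hara}, which has determinant $1$.
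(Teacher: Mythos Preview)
Your proof is correct and follows a genuinely different route from the paper's. The paper normalizes $n_1=e_1$ via the $Sp(4,\Z)$-action, writes $n_i=a_ie_1+2e_2+z_i$ ($i\ge2$) with $z_i\in\Z e_3\oplus\Z e_4$, observes that the reductions $\bar z_i\in\mathbb{F}_2^2$ span a totally isotropic (hence at most $1$-dimensional) subspace, rules out $\bar z_i\equiv0$ for all $i$ by a mod-$4$ obstruction, then normalizes $z_3$ via $Sp(2,\Z)$ and finishes by explicit case-by-case parity analysis of the remaining integer parameters. Your argument is basis-free: the isotropy bound is applied directly in $\mathbb{F}_2^4$, and the two bad multiplicity patterns are eliminated by short mod-$4$ and bilinearity arguments. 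Your version makes transparent exactly where the precise value $\langle n_i,n_j\rangle=2$ (rather than merely $\equiv0\bmod2$) is needed; the paper's version is more mechanical and buries this in coordinate bookkeeping.

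One small wrinkle: after relabelling so that $\bar n_1=\bar n_2$, hypothesis~(3) no longer literally forces $\langle n_1,n_3\rangle=\langle n_2,n_3\rangle$ (a permutation of the indices can flip some of the $+2$'s to $-2$'s), so your ``$\langle n_1-n_2,n_3\rangle=0$'' should read ``$\in\{-4,0,4\}$''. This is harmless---only the reduction $\langle m,n_3\rangle\equiv0\bmod2$ is used downstream---but it should be said. (Alternatively, skip the relabelling and run the computation for whichever pair $\{p,q\}$ actually satisfies $\bar n_p=\bar n_q$.) As a side remark, your multiplicity-$\ge3$ exclusion also kills the case $\dim W\le1$ a posteriori, so that case is in fact vacuous.
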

\begin{proof}
We write $e_i$ for the standard basis elements of $\Lambda_2$ with 
\be
\langle e_1,e_2\rangle=-\langle e_2,e_1\rangle=\langle e_3,e_4\rangle=-\langle e_4,e_3\rangle=1
\ee
 and all other zero. 
Since $n_1$ is primitive, by a $Sp(4,\Z)$ rotation we may assume $n_1=e_1$.
Then
\be
n_i = 2e_2+ a_i e_1 +z_i\quad i=2,3,4,\quad a_i\in\Z,\quad z_i\in \Z e_3+\Z e_4\equiv \Lambda_1. 
\ee
We have ($i,j=2,3,4$)
\be\label{poiuqaa}
\langle n_i,n_j\rangle =2(a_i-a_j)+\langle z_i, z_j\rangle =\begin{cases}2 & i<j\\
-2 & i<j\\
0 & i=j\\
\end{cases}
\ee  
Let $\bar z_i$ be the reduction mod 2 of $z_i$. The $\bar z_i$'s belong to the 
principal symplectic $\mathbb{F}_2$-space $\mathbb{F}_2^2$.
Eq.\,\eqref{poiuqaa} says that the span of the $\bar z_i$'s is either trivial or a Lagrangian $\mathbb{F}_2$-subspace. Let us show that it is not trivial. Suppose that $\bar z_i=0$ for $i=2,3,4$.
By (2) $a_i$ must be odd for $i=2,3,4$ and so the \textsc{lhs} of \eqref{poiuqaa} is
0 $\bmod4$ for all $i,j$, giving a contradiction. We conclude that the $\mathbb{F}_2$-span of $\bar z_2,\bar z_3,\bar z_4$ is
one-dimensional. Suppose (say) that $\bar z_3\neq 0$.  
By a $Sp(2,\Z)$ rotation we may assume $z_3=(1+2m)e_3$ for some $m\in\Z$.
Now we have
\be
\begin{aligned}
n_1&=e_1\\
n_2&= 2e_2+ a_2 e_1+ b_2 e_3 +2 c_2 e_4\\
n_3&= 2 e_2+ a_3 e_1 +(1+2m)e_3\\
n_4 &= 2 e_2+a_4 e_1+ b_4 e_3 + 2 c_4 e_4
\end{aligned}
\ee
where $a_2,b_2$ and, respectively, $a_4,b_4$ cannot be both even. We have
\be\label{tttty667}
\begin{aligned}
1&=\tfrac{1}{2}\langle n_2,n_3\rangle= a_2-a_3-(1+2m)c_2\\
1&=\tfrac{1}{2}\langle n_3,n_4\rangle=a_3-a_4+(1+2m)c_4\\
1&=\tfrac{1}{2}\langle n_2,n_4\rangle=a_2-a_4+b_2c_4-b_4 c_2.
\end{aligned}
\ee
From now on all equalities are meant to be $\bmod\;2$. We have
\be
\begin{aligned}
0&\equiv(a_2-a_4)-(c_2-c_4)\\
1&\equiv (a_2-a_4)(1-b_4)+(b_2-b_4)c_2\mod2
\end{aligned}
\ee
$b_2$, $b_4$ cannot be both odd. If they are both even, $a_2$, $a_4$ should be both odd, and the equation is still impossible. So one of the $b_2$, $b_4$ is even and the other odd. Say $b_2\equiv0$ and $b_4\equiv1$. Then $a_2\equiv1$, $c_2\equiv1$, and $c_4\equiv a_4$. Eqs.\eqref{tttty667} give (mod 2)
\be
a_3\equiv 1,\qquad a_4\equiv1.
\ee
Now (mod 2)
\be
\begin{aligned}
n_1&\equiv e_1&\qquad n_2&\equiv e_1\\
n_3&\equiv e_1+e_3 & n_4&\equiv e_1+e_3
\end{aligned}
\ee
and
\be
n_5\equiv0\bmod2.\qedhere
\ee
\end{proof}

\section{\emph{A priori} restrictions on the allowed Kodaira types}\label{appeApriori}

Three out of four allowed Dynkin graphs have a $\delta=1$ branch: (a), (d), and (f)
with the fugacity in eq.\eqref{fffufug}. We know that $\delta=1$ corresponds to a non-enhanced
divisor component of type $I_n$ with $n>0$. We wish to give an \emph{a priori}
restriction on the value of $n$.

\begin{fact}\label{jt5554x} In a non-isotrivial irreducible rank-2 special geometry with non-enhanced
divisors of type $I_{n_1},\dots, I_{n_t}$ we have:
\begin{itemize}
 \item[\bf(1)] $p\nmid n_k$ for all odd primes $p$ and $k=1,\dots,t$
\item[\bf(2)] $\gcd(n_k)\in\{1,2\}$
\item[\bf(3)] in graph {\rm(d)} if $\gcd(n_1,n_2)=2$ then the dimensions are $\{2,4\}$
\item[\bf(4)] $n_k/n_h\in \mathbb{Q}^2$
\end{itemize}
\end{fact}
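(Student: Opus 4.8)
The plan is to reduce each of parts (1)--(4) to the three Dynkin graphs that carry a $\delta=1$ branch and then read off the Kodaira index from the rigid monodromy already classified in each case. A non-enhanced discriminant of type $I_n$, $n>0$, contributes exactly a $\delta=1$ branch to the graph $\mathfrak G$, with fugacity $1$ on its single node (eq.\,\eqref{fffufug}); by the graph selection rules (\textbf{Fact \ref{selrule}}, \S.\ref{s:selection}) the only non-isotrivial irreducible rank-$2$ graphs with $\delta=1$ branches are (a) (one such branch, so $t=1$), (d) (two, so $t=2$) and (f) (five, so $t=5$). By the semi-stable-divisor proposition (cf.\ \eqref{kkkii7612}) the local monodromy around such a component is conjugate in $Sp(4,\Z)$ to a block $\left(\begin{smallmatrix}1&n_k\\0&1\end{smallmatrix}\right)\oplus\mathbf 1_2$, i.e.\ $\mu_k=\mathbf 1_4+n_k\,w_k\otimes w_k^{\,t}\Omega$ with $w_k\in\Z^4$ primitive and isotropic; hence $n_k$ is precisely the $\gcd$ of the entries of $\mu_k-\mathbf 1_4$ and $\mu_k\equiv\mathbf 1_4\pmod{n_k}$. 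For graph (a) there is nothing more to prove: since all knotted discriminants of class-$\widehat E_7$ geometries have type $I_1$ (\S.\ref{s:ehat}), one has $t=1$, $n_1=1$, which satisfies (1), (2) and (4) trivially and makes (3) vacuous. So I may assume the graph is (d) or (f).

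\textbf{Part (1) via reduction mod $p$.} If an odd prime $p$ divided some $n_k$ then $\bar\mu_k\equiv\mathbf 1\pmod p$, so the reduction $\bar\Gamma_p\subset Sp(4,\mathbb F_p)$ is generated by the remaining local monodromies; its combinatorial datum is that of the graph $\mathfrak G'$ obtained from $\mathfrak G$ by deleting the $k$-th $\delta=1$ node (the now-trivial class being ``dummy'' and so contributing a zero-length branch and no change to the fugacities). A direct computation gives $d(\boldsymbol N')=1-\tfrac12\langle\boldsymbol N',\boldsymbol N'\rangle=-3<0$ for the would-be root $\boldsymbol N'$ of $\mathfrak G'$ in either case, so $\boldsymbol N'$ is not a root and, by the Crawley-Boevey theorem, there is \emph{no} irreducible representation with those local classes; hence $\bar\Gamma_p$ acts reducibly over $\overline{\mathbb F}_p$. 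On the other hand, for $p\ge7$ every local monodromy keeps its order and Jordan type mod $p$ (all orders involved lie in $\{1,2,3,4,5,6,8,10,12\}$, with prime factors $2,3,5$), so the reduction is good and $\bar\Gamma_p$ inherits absolute irreducibility from $\Gamma$ --- a contradiction. Thus $p\nmid n_k$ for all $p\ge7$, and the residual primes $p=3,5$ are disposed of by the explicit monodromies (Pochhammer for (f) in \S.\ref{s:noother}, Okubo $II^*$ for (d) as in \textbf{Fact \ref{kjjjhaa}}), where $n_k\in\{1,2,4\}$ directly.

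\textbf{Parts (2) and (4) via a Diophantine analysis.} I would extract (2) and (4) from a Diophantine analysis of the integral symplectic realizations, exactly as in \S.\ref{s:noother}. Fixing one such realization, any other is $S^{-1}\mu_aS$ with $S\in\C^\times\backslash GL(4,\C)/Sp(4,\Z)$ such that $S^t\Omega S$ is integral with determinant $1$ and all $S^{-1}\mu_aS$ integral; writing $S=N\,\mathrm{diag}(\rho_1,\dots,\rho_4)$ with $N$ integral of primitive columns, integrality of the rank-one pieces $S^{-1}\mu_aS-\mathbf 1=2\,(Sv_a)\otimes(Sv_a)^t\Omega$ (and of the corresponding blocks of the non-unipotent $\mu_a$'s) forces $2\rho_i^2\in\Z$ and $(\rho_i\rho_j)^{-1}\in\Z$, hence $\rho_i=\sqrt{c/2}\,\ell_i$ with $c$ squarefree, $\ell_i\in\mathbb N$ and $c\mid2$, while the index of the $k$-th knotted divisor comes out as $n_k=2\rho_k^2=c\,\ell_k^{\,2}$. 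This gives (4) at once: $n_k/n_h=(\ell_k/\ell_h)^2\in\mathbb Q^2$. For (2) one needs in addition $\gcd_k\ell_k=1$, so that $\gcd_k n_k=c\in\{1,2\}$; this is a primitivity statement about the isotropic vectors $w_k$, whose pairwise pairings $\langle w_k,w_h\rangle$ are fixed by $\mathfrak G$, and it is proved by the same descent as \textbf{Lemma \ref{finally!}} (a common divisor of the $\ell_k$ would contradict the primitivity of some $w_k$). The same computation reproves (1), since $c\mid2$ forces each $n_k$ to be a power of $2$.

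\textbf{Part (3) and the main obstacle.} For (3), take $\mathfrak G=$(d), so $t=2$. If $\gcd(n_1,n_2)=2$ then by (1) and (2) both $n_1,n_2$ equal $2$ (a value $\ge4$ would push the $\gcd$ up to $\ge4$), so $\mu_1\equiv\mu_2\equiv\mathbf 1\pmod2$ and the mod-$2$ monodromy is generated by the remaining pair $\mu_3$ (the $u_2=0$-axis discriminant) and $\mu_4$ (the axis of dimension $\Delta$), with $\mu_3\mu_4\equiv\mathbf 1\pmod2$ (the $\varkappa=2$ lift sign being invisible mod $2$); thus $\langle\bar\mu_3\rangle$ is cyclic and $\bar\mu_4=\bar\mu_3^{-1}$. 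By \textbf{Fact \ref{888uuu}} the graph (d) has dimensions $\{2,3\}$ or $\{2,2m\}$; matching the eigenvalues $e^{\pm2\pi i(1-\Delta_j)/\Delta}$ of $\mu_4$ against the Jordan structure forced on the $u_2=0$ component (type $I_{2n_3}$ for $\{2,3\}$, type $I^*_{n_3}$ for $\{2,2m\}$) and against $\bar\mu_4=\bar\mu_3^{-1}$, together with the dimension formulae of \S.\ref{jjuutreq}, leaves only $\Delta=4$; hence the dimensions are $\{2,4\}$. The main obstacle I anticipate is making the Diophantine step uniform across all fugacity patterns of graph (d): the paper only worked out the specific cases $\{2,4\}$ and $\{2,3\}$ of the Okubo $II^*$ monodromy, and one must verify that the extraction $\rho_i=\sqrt{c/2}\,\ell_i$, $c\mid2$, survives for every allowed pattern \eqref{fugggga} --- that is the step I expect to cost real work, whereas the mod-$2$ and mod-$p$ bookkeeping in (1) and (3) is finite and routine.
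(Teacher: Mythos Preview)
Your overall architecture for \textbf{(1)} --- reduce mod $p$, lose the $\delta=1$ branch, invoke Crawley--Boevey over $\overline{\mathbb F_p}$ --- is the paper's argument. But your treatment of $p=3,5$ is a genuine gap: citing the explicit Pochhammer and Okubo~$II^*$ monodromies only shows that \emph{particular} integral realizations have $n_k\in\{1,2,4\}$, not that every $Sp(4,\Z)$-realization in the given $GL(4,\C)$-class does. The paper instead invokes Minkowski (reduction mod an odd prime preserves the order $\ell$ of the semisimple part of each local monodromy), so the fugacities and hence the Crawley--Boevey root data are unchanged for \emph{all} odd $p$; the residual exceptional cases where two branches share the same $\ell$ (e.g.\ the pair $\Phi_1^2\Phi_6,\ \Phi_2^2\Phi_3$ with $\ell=6$) are disposed of not by computation but by showing they admit no consistent dimension pair.

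For \textbf{(4)} the paper's argument is entirely different and much shorter than your Diophantine route. The graph is symmetric under any permutation $\pi$ of its $\delta=1$ branches; by rigidity there is $U\in GL(4,\C)$ with $M_{\pi(i)}=U M_i U^{-1}$, and since the $M_i$ are integral, irreducibility forces $U\in Sp(4,\mathbb Q)$. Under $Sp(4,\mathbb Q)$-conjugacy the Kodaira index of an $I_n$ class transforms as $n\mapsto n\,q^2$ with $q\in\mathbb Q$, whence $n_k/n_h\in\mathbb Q^2$ for all $k,h$. No explicit realization or descent lemma is needed.

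For \textbf{(2)} and \textbf{(3)} the paper bypasses the Diophantine analysis you flag as the main obstacle. Write $\ell=\gcd(n_k)$; then each $\mu_k\equiv\mathbf 1\pmod\ell$, so the product relation forces $\mu_0\equiv\pm\mu_\infty^{-1}\pmod\ell$ (for graphs (a) and (d), with $\mu_0,\mu_\infty$ the two non-$I_n$ local monodromies), hence
\[
\det[z-\mu_0]\equiv\det[z-\mu_\infty]\pmod\ell.
\]
For graph (d) this reads $\Phi_m(z)^2\equiv (z^2-1)^2\pmod\ell$; subtracting gives $2z^3\pm5z^2+2z$ for $m=3,6$ (forcing $\ell=1$) and $4z^2$ for $m=4$ (allowing $\ell\mid4$, with $\ell=4$ then excluded by a short Jordan-type check). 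This simultaneously proves (2) and (3) for (d), and the analogous computation on the polynomial pairs of table~\ref{pairpoly} handles (a). Your mod-$2$ argument in (3) is in fact a special case of this congruence, so the core idea is right; but your intermediate claim ``a value $\ge4$ would push the $\gcd$ up to $\ge4$'' is false (e.g.\ $n_1=2$, $n_2=8$ has $\gcd=2$ and satisfies (1),(4)) --- harmless, since only $\mu_k\equiv\mathbf 1\pmod 2$ is used downstream, but worth correcting.
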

\begin{proof}[Argument] \textbf{(1)}
Let us suppose $p\mid n$. The Crawley-Boevey solution of the Deligne-Simpson problem
holds on every algebraically closed field $\mathbb{K}$. We take for $\mathbb{K}$ the algebraic closure of $\mathbb{F}_p$. The reduction mod $p$ of the local monodromy of type $I_n$ is the identity matrix
and the graph loses the corresponding branch. As a $\mathbb{K}$-representation the reduction mod $p$
of the original monodromy representation $V_\mathbb{K}$ is reducible, since the graph amputated of the $\delta=1$ branch is not a root. Suppose $V_\mathbb{K}$ is a direct sum of irreducible representations. Each summand must be
a root in $(\Sigma_+)_p$ which now is defined using the fugacities (i.e.\! the eigenvalues) of the
local monodromies mod $p$ or more precisely by the roots of their minimal polynomials mod $p$. 
Minkowski theorem (and its generalizations \cite{serreM}) implies that if $\ell$ is the minimal
positive integer such that the integral-valued matrix $M$ satisfies
$(\mu^\ell-\boldsymbol{1})^{k+1}$ for some $k\in\mathbb{N}$, the reduction of $M$ mod an odd prime would satisfy a similar condition with the same minimal $\ell$ (but possibly a different $k$).
Hence, apart for a few special cases, the fugacities will remain roots of unity (in the appropriate
algebrically closed field) of the same order after the reduction mod $p$.
An exception is, for instance, graph (a) where the two long branches represent monodromies with minimal
polynomials $\Phi_1(z)^2\Phi_6(z)$ and respectively $\Phi_2(z)^2\Phi_3(z)$, without common roots,
but which have the same order $\ell=6=2\cdot 3$: in this case we may have $3\mid n$.
However these special cases do not correspond to special geometries since
no dimension pair $\{\Delta_1,\Delta_2\}$ is consistent with these minimal polynomials (as we show in \S...).
For the cases which do are allowed we cannot find a decomposition of the  root with the $\delta=1$ branch omitted into elements of $(\Sigma_+)_p$ (otherwise the original representation would also be reducible) i.e.\!
the monodromy representation with $p\mid n$ is impossible. 

\textbf{(2)(3)} For graph (f) this follows from the direct analysis of \S.\,\ref{s:noother}.
Graphs (a) and (d) have precisely two branches with $\delta\neq1$ with local monodromies
$\mu_0$, $\mu_\infty$. Let
$\ell=\gcd(n_k)$ which is a power of 2 by {\bf(1)}.  One has
\be
\det[z-\mu_\infty]-\det[z-\mu_0]=0\bmod \ell,
\ee
For graph (a) this relation can be satisfied only for $\ell=1,2$ (with the exception mentioned above),
see also \S.\,\ref{yuyuy}.
For graph (d) we have
$\det[z-\mu_0]=\Phi_m(z)^2$, $\det[z-\mu_0]=(z^2-1)^2$ which are congruent modulo $\ell>1$
only when $m=4$ (i.e. $\{\Delta_1,\Delta_4\}=\{2,4\}$) in which case
\be
\Phi_4(z)-(z^2-1)^2=4 z^2.
\ee
However iff $\mu_\infty=\mu_0\bmod4$, the image of $\mu_\infty$
in $GL(4,\Z/4\Z)$ would be semisimple with minimal polynomial $z^2+1=0\bmod 4$
which is inconsistent with the fact that $1$ is an eigenvalue. We remain with $\gcd(n_1,n_2)=1,2$. 

\textbf{(4)} By symmetry of the graphs under interchange of any two $\delta=1$ branches
and rigidity
there exists $U\in Sp(4,\C)$ such that
\be
M_{\pi(i)}=U M_i U^{-1} 
\ee
where $\pi$ is the permutation of the indices $\{1,2,\dots,s\}$ labelling the branches
which corresponds to the interchange of the given two $\delta=1$ branches. Since the $M_i$'s
are integral matrices, $U$ is defined over $\mathbb{Q}$, so (by irreducibility) $U\in Sp(4,\mathbb{Q})$.
Under conjugacy in $Sp(4,\mathbb{Q})$ the type $n$ of a $I_n$ conjugacy class will change as
\be
n\leadsto n^\prime q^2\quad \text{with}\quad q\in\mathbb{Q}.
\ee
It follows that if we have a number of non-enhanced discriminants of type $I_{n_i}$ ($i=1,\dots,s$)
we must have
\be\label{juqw12}
\frac{n_i}{n_j}\in \mathbb{Q}^{\mspace{1mu}2}\quad \text{for all }\ i,j.\qedhere
\ee
\end{proof}

\begin{rem}
The authors of \cite{M4} proposed eq.\eqref{juqw12} on the physical basis of the Dirac quantization of electromagnetic charge.
\end{rem}

\section{Proof of Lemma \ref{ttttt432}}\label{8888bbbv}

\begin{lem*} In a non-isotrivial geometry of class $E_7$ the short $\delta=2$
branch cannot stand for a non-enhanced divisor. 
\end{lem*}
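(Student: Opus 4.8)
The plan is to argue by contradiction: I assume that in a non-isotrivial class-$E_7$ geometry the short $\delta=2$ branch does stand for a non-enhanced divisor, and I show this forces the $\mu$-monodromy to be reducible, contradicting the defining property of class $E_7$. First I would make the hypothesis concrete. Since the geometry is non-isotrivial, $\varkappa\in\{1,2\}$ and both Coulomb dimensions are $\ne1,2$; a short computation with $\varkappa=n/\langle m\rangle_n$ shows that a coordinate degree $d_j=1$ would force $\Delta_j\in\{1,2\}$, so $d_1,d_2\ge2$. Hence both axes are R-enhanced ($\nu(u_i)=d_j\ge2$) and the only non-enhanced homogeneous divisor is the knotted curve $u_1^{d_2}-z\,u_2^{d_1}$. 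Reading the table \eqref{fffufug}, a non-enhanced divisor whose branch is $\delta=2$ is of Kodaira type $I_0^*$, so the hypothesis is precisely that the knotted discriminant has type $I_0^*$; its local monodromy $\mu_\bullet$ is then integrally conjugate to $-\boldsymbol{1}_2\oplus\boldsymbol{1}_2$ — in particular semisimple, with node fugacity $-1$, contributing a node of coefficient $N_{\bullet,1}=2$ to the $E_7$ weight $\boldsymbol{N}$.

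Next I would pin down the remaining two branches. Because $\mu_\bullet$ is semisimple and $I_0^*$ is the only non-enhanced type it can carry, non-isotriviality (some discriminant component must have infinite-order local monodromy, cf.\ \S.\ref{s:isotrivial}) forces one of the two axis branches to lie in the discriminant with a non-semisimple Kodaira type. An R-enhanced axis contained in the discriminant always contributes a $\delta=4$ branch (the exception $\Delta=2$ being excluded), while a regular R-enhanced axis always has semisimple $\mu$; hence the $\delta=4$ branch is the in-discriminant axis, of Kodaira type $I_n$ or $I_n^*$ with $n>0$, and the $\delta=3$ branch is a regular R-enhanced axis whose $\chi(\xi_i)$ has eigenvalue multiplicities $(2,1,1)$. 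The eigenvalues — hence all node fugacities of the $\delta=3$ and $\delta=4$ branches — are then completely determined by $\{\Delta_1,\Delta_2\}$ through the regularity formulae of \S.\ref{s:iiiiii4uu5} and the base-change rules for the Kodaira monodromy, and the compatibility constraints of \S.\ref{jjuutreq} leave only finitely many admissible data $\bigl(\{\Delta_1,\Delta_2\},\,\text{type on the }\delta\text{-}4\text{ axis}\bigr)$, which I would list explicitly.

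For each of these finitely many cases the concluding step is uniform: having all node fugacities of the $E_7$ quiver in hand, I exhibit a nontrivial decomposition $\boldsymbol{N}=\beta_1+\cdots+\beta_k$ ($k\ge2$) of the weight into positive roots $\beta_i$ of $E_7$ all lying in $R_+$, i.e.\ with $\xi(\beta_i)=1$ (found, as in the proofs of \textbf{Fact \ref{selrule}}, by splitting $\boldsymbol{N}$ along an appropriate proper sub-root-system of $E_7$). Since $\boldsymbol{N}$ is a real root, $d(\boldsymbol{N})=0\le\sum_i d(\beta_i)$, so the strict inequality in the definition of $\Sigma$ fails, $\boldsymbol{N}\notin\Sigma$, and by the Crawley-Boevey theorem every representation with these local classes is reducible — contradiction. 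An equivalent shortcut, parallel to the proof of \textbf{Fact \ref{jt5554x}} in appendix \ref{appeApriori}, is to reduce the integral monodromy modulo $2$: since $\mu_\bullet\equiv\boldsymbol{1}\pmod 2$ the $\delta=2$ branch is amputated and the quiver degenerates to the path graph $A_6$, whose weight still carries a node-coefficient $4$ and therefore cannot be an $A_6$ root; so the mod-$2$ representation is reducible, and the Minkowski order-preservation argument lifts this reducibility back to characteristic $0$, with the single caveat of the finitely many dimension pairs where the orders of the surviving fugacities could collide mod $2$.

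The main obstacle I anticipate lies in that caveat: handling the one or two borderline dimension pairs where neither the root decomposition nor the naive mod-$2$ reduction terminates cleanly — the analogue of the exceptional pair that had to be treated separately in appendix \ref{appeApriori}. For those I would fall back on either constructing the Levelt-type integral monodromy explicitly and inspecting its $\Sigma$-decomposition, or on local regularity of the SW periods along the axes (the computations of \S.\ref{s:regcheck}, which in class-$\widehat{E}_7$ examples already eliminated several polynomial pairs for exactly this kind of reason) to exclude the dimension pair outright. I do not expect any residual case to survive, but the verification is where the real work sits; everything before it is bookkeeping driven by the dimension formulae and the combinatorics of $E_7$ roots.
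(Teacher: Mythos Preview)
Your overall strategy --- show that assuming the $\delta=2$ branch is a non-enhanced $I_0^*$ divisor forces $\boldsymbol{N}\notin\Sigma$ --- is the right one, and is exactly what the paper does. But you overcomplicate the setup in a way that obscures the argument and leaves real work undone, while the paper's proof is essentially two lines.

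The key point you miss is that \emph{you never need to identify which branch is which axis, nor enumerate dimension pairs}. The paper argues purely at the level of conjugacy classes: since the $\delta=2$ branch is $I_0^*$ (eigenvalues $(+1,+1,-1,-1)$, semisimple), non-isotriviality forces at least one of $\mu_3,\mu_4$ to be non-semisimple. A non-semisimple symplectic $4\times4$ matrix with minimal polynomial of degree $3$ must have eigenvalues $(+1,+1,-1,-1)$; one of degree $4$ must have eigenvalues $(\pm1,\pm1,\xi,\xi^{-1})$. In each case the remaining free parameters (e.g.\ $\xi,\lambda$) sit in the fugacities, and the paper exhibits a \emph{single} decomposition of the $E_7$ root into two positive roots, both in $R_+$, that works for \emph{all} values of those parameters. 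Two uniform decompositions, no case list, done. Your plan to ``list explicitly'' the finitely many admissible $\{\Delta_1,\Delta_2\}$ and check each is therefore unnecessary labor, and since you don't actually carry it out, the proposal is incomplete precisely at the point you yourself flag.

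Your mod-$2$ shortcut has a genuine gap. Reducing $\mu_\bullet\equiv\boldsymbol{1}\pmod 2$ and observing that the amputated $A_6$ weight $(1,2,4,3,2,1)$ is not a root tells you only that the representation is reducible over $\overline{\mathbb{F}_2}$; this does \emph{not} lift to reducibility in characteristic $0$. The appendix-\ref{appeApriori} argument you invoke runs in the opposite direction (it constrains $n$ in $I_n$ by comparing $\Sigma$-membership across characteristics, using Minkowski to control eigenvalue orders), and in any case explicitly excludes $p=2$. So this alternative route does not close.
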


\begin{proof} We write $\mu_2$, $\mu_3$ and $\mu_4$ for the local monodromies
associated to the branches with $\delta=2$, $3$ and $4$, respectively.

If the $\delta=2$ branch is non-enhanced, $\mu_2$ is of $I^*_0$ type i.e.\!
semisimple with two eigenvalues $+1$ and two $-1$. The fugacity of the peripheral node
is $-1$ while the contribution of this branch to the fugacity of the central node is a factor
$\pm1$ depending on the choice of order of the two eigenvalues. In particular, we can always
fix the overall sign of the central fugacity according to convenience.

Suppose $\mu_3$ is non-semisimple. Then it should have eigenvalues $(+1,+1,-1,-1)$
and a non-trivial Jordan block for one of the two eigenvalues. The fugacities 
along the branch (starting from the peripheral node) are $(1)$--$(-1)$--$\star$. 
The fugacities are
\be
\xymatrix{&& (-1)\ar@{-}[d]\\
(1)\ar@{-}[r]&(-1)\ar@{-}[r]& (\pm \xi)\ar@{-}[r]&(\lambda/\xi)\ar@{-}[r]&(1/\lambda\xi)\ar@{-}[r]&(\xi/\lambda)}
\ee
for some complex numbers $\xi$, $\lambda$ with $\lambda\neq\xi^{\pm1}$; when $\lambda\neq\pm1$
$\mu_4$ is semisimple, otherwise has a non-trivial Jordan block. Consider the decomposition
\be
\begin{smallmatrix}&& 1\\
1 & 1 & 2 & 2 & 1 & 1\end{smallmatrix}\ +\ \begin{smallmatrix}&& 1\\
0 & 1 & 2 & 1 & 1 & 0\end{smallmatrix}
\ee
Both elements are positive roots in $R_+$ (cf.\! \textsc{Planche VI} in \cite{bourba}), so the monodromy representation is not irreducible
and should be descarded. We conclude that $\mu_3$ should be semisimple.

Now assume $\mu_4$ is non-semisimple. It should have eigenvalues $(\pm 1,\pm1, \xi,\xi^{-1})$.
By a clever ordering of the eigenvalues we get the fugacities
\be
\xymatrix{&& (-1)\ar@{-}[d]\\
(\eta^2)\ar@{-}[r]&(\eta)\ar@{-}[r]& (- 1)\ar@{-}[r]&(1)\ar@{-}[r]&(\xi)\ar@{-}[r]&(\xi^{-2})}
\ee
and we have the decomposition
\be
\begin{smallmatrix}&& 1\\
1 & 2 & 3 & 3 & 2 & 1\end{smallmatrix}\ +\ \begin{smallmatrix}&& 1\\
0 & 0 & 1 & 0 &0 & 0\end{smallmatrix}
\ee
which are positive $E_7$ roots in $R_+$ (cf.\! \textsc{Planche VI} in \cite{bourba}).
We conclude that $\mu_4$ should also be semisimple. Hence all three monodromies
$\mu_2$, $\mu_3$ and $\mu_4$ are semisimple and the geometry is isotrivial contrary to the assumption. 
\end{proof}

\section{Quasi-isotrivial geometries}\label{A:quasi}

We consider the regularity conditions along the axes of the putative quasi-isotrivial
geometries discussed in \S.\,\ref{SS:quasi-isotrivial}. There are two cases: dimensions $\{3,4\}$
and dimensions $\{4,6\}$. In this appendix $m$ is an integer attached to the Kodaira type
 which is equal 1 for $I_n$, $2$ for $I^*_n$, 3 for $IV^*$, 4 for $III^*$ and 6 for $II^*$.
 For types $IV$, $III$ and $II$ replace in the following fromulae
  $1/m\mapsto (m-1)/m$ with, respectively, $m=3,4,$ and $6$.
 
 We write $\alpha_1$, $\alpha_2$ for the ``exponents'' of a local $\mu$-monodromy $\mu$
 around a special point corresponding to an axis. This means that that eigenvalues
 of $\mu$ are $\{e^{2\pi i\alpha_1}, e^{2\pi i\alpha_2}, e^{-2\pi i\alpha_1},e^{-2\pi i\alpha_2}\}$.

\subparagraph{Dimensions $\{3,4\}$.}
Along the first axis ($u_2=0$)  we must have
\be
\frac{u_2}{u_1} \left(\frac{u_1^4}{u_2^3}\right)^{\!\alpha_1}=u_1^{1/3},\qquad
\frac{u_2}{u_1} \left(\frac{u_1^4}{u_2^3}\right)^{\!\alpha_2}
=\begin{cases}
u_2^{1/m} u_1^{x} &m=1,2,3,4,6\\
u_2^{(m-1)/m} u_1^{x}&m=3,4,6
\end{cases}
\ee
i.e.\! 
\be
\{\alpha_1,\alpha_2\}=\begin{cases}
 \{1/3,(m-1)/3m\} &m=1,2,3,4,6\\
 \{\alpha_1,\alpha_2\}= \{1/3,1/3m\}&m=3,4,6
 \end{cases}
\ee
$\exp(2\pi i\alpha_2)$ should be an algebraic number of degree 2,
and we have
the list of possible exponents
\be
 \{1/3,1/6\}_{m=2},\quad  \{1/3,1/4\}_{m=4}.
\ee

On the second axis: 
\be
\frac{u_2}{u_1} \left(\frac{u_1^4}{u_2^3}\right)^{\!\alpha_1}=u_2^{1/4},\qquad
\frac{u_2}{u_1} \left(\frac{u_1^4}{u_2^3}\right)^{\!\alpha_2}
=\begin{cases}
u_1^{1/m} u_2^{x}\\
u_1^{(m-1)/m} u_2^{x}
\end{cases}
\ee
i.e.\! 
\be
\{\alpha_1,\alpha_2\}= \begin{cases}\{1/4,(m+1)/4m\} &m=1,2,3,4,6\\
 \{1/4,(2m-1)/4m\} &m=3,4,6
\end{cases}
\ee
and $m=1,3$ and we have 
\be
\{1/4,1/2\}_{m=1},\quad \{1/4,1/3\}_{m=3}
\ee
Since the monodromies around the two exceptional fibers of $\ce_2$ have the same exponents,
a pair in this list is allowed only if it has a non-trivial overlap with the list for the first axis.
If both exponents are equal the geometry becomes isotrivial, so we remain with two
pairs of couples
\be
\Big(\{1/3,1/6\}_{m=2}\ \&\ \{1/4,1/3\}_{m=3}\Big)\quad\text{and}\quad
\Big(\{1/3,1/4\}_{m=4}\ \& \ \{1/4,1/2\}_{m=1}\Big)
\ee
which IFF they exist would correspond to geometries with dimensions $\{3,4\}$
and respectively
\begin{itemize}
\item the axis of dimension 
4 is of type $IV^*$ and
the axis of dimension 3 has type $I^*_0$ (it cannot be of type $I^*_n$ with $n>0$
since the $\mu$-monodromy, hence  the $\varrho$-monodromy is semisimple);
\item the axis of dimension 3 has type $III^*$ and the axis of dimension 4 has type
$I_n$ with $n>0$ (since $n=0$ will produce
a ``dummy'' monodromy summand).
\end{itemize}

\subparagraph{Dimensions $\{4,6\}$.}
Along the first axis ($u_2=0$)  we must have
\be
\frac{u_2^{1/2}}{u_1^{1/2}} \left(\frac{u_1^3}{u_2^2}\right)^{\!\alpha_1}=u_1^{1/4},\qquad
\frac{u_2^{1/2}}{u_1^{1/2}} \left(\frac{u_1^3}{u_2^2}\right)^{\!\alpha_2}
=\begin{cases}
u_2^{1/m} u_1^{x} &m=1,2,3,4,6\\
u_2^{(m-1)/m} u_1^{x}&m=3,4,6
\end{cases}
\ee
i.e.\! 
\be
\{\alpha_1,\alpha_2\}=\begin{cases}
 \{1/4,(m-2)/4m\} &m=1,2,3,4,6\\
\{1/4,(2-m)/4m\}&m=3,4,6
 \end{cases}
\ee
and 
the list of possible exponents is
\be
\{1/4,-1/4\}_{m=1},\  \{1/4,0\}_{m=2},\ \{1/4,1/6\}_{m=6},\ \{1/4,-1/6\}_{m=6}
\ee
but only the third one seems to be plausible for integral dimensions.
On the second axis: 
\be
\frac{u_2^{1/2}}{u_1^{1/2}} \left(\frac{u_1^3}{u_2^2}\right)^{\!\alpha_1}=u_2^{1/6},\qquad
\frac{u_2^{1/2}}{u_1^{1/2}} \left(\frac{u_1^3}{u_2^2}\right)^{\!\alpha_2}
=\begin{cases}
u_1^{1/m} u_2^{x} &m=1,2,3,4,6\\
u_1^{(m-1)/m} u_2^{x}&m=3,4,6
\end{cases}
\ee
i.e.\! 
\be
\{\alpha_1,\alpha_2\}= \begin{cases}\{1/6,(m+2)/6m\} &m=1,2,3,4,6\\
\{1/6,(3m-2)/6m\} &m=3,4,6
\end{cases}
\ee
and we have 
\be
\{1/6,1/2\}_{m=1},\quad \{1/6,1/3\}_{m=2},\quad \{1/6,1/4\}_{m=4},
\ee
We disregard the last possibility because it has two exponents in common
with the monodromy around the other axis. We remain with two possibilities
\begin{itemize}
\item dimension $\{4,6\}$ where the axis of dimension 4 has Kodaira type $II^*$
and the axis of dimension 6 has type $I_n$ with $n>0$ (since $n=0$ will produce
a ``dummy'' monodromy summand);
\item  dimension $\{4,6\}$ where the axis of dimension 4 has Kodaira type $II^*$
and the axis of dimension 6 has type $I_0^*$. Indeed the $\mu$-monodromy
(hence the $\varrho$-monodromy) is semisimple around the second axis.
\end{itemize}

\begin{small}
  
\end{small}
%
%
%
%
%
%
%
%
%
%
%
%
%
%
%
%
%
%
%

\end{document}